\documentclass[runningheads]{llncs}

\usepackage{url}

\usepackage[english]{babel}
\hyphenation{
tab-leau
tab-leaux
}

\usepackage{booktabs} 

\usepackage{amsthm}
\makeatletter
\def\th@plain{%
  \thm@notefont{}
  \itshape 
}
\def\th@definition{%
  \thm@notefont{}
  \normalfont 
}
\makeatother
\usepackage{amsmath}
\usepackage{amssymb}

\usepackage{centernot}
\usepackage[inline]{enumitem}
\usepackage[a]{esvect} 
\usepackage{hyperref}  
\usepackage{mathtools}
\usepackage{prooftree}
\usepackage{stmaryrd}
\usepackage{xspace}

\usepackage{apptools} 
\AtAppendix{\counterwithin{lemma}{section}}
\AtAppendix{\counterwithin{definition}{section}}
\AtAppendix{\counterwithin{corollary}{section}}

\newcommand{\remove}[1]{}

\newlist{invariants}{enumerate}{1}
\setlist[invariants]{left=\parindent,label=I\arabic*.,ref=I\arabic*}

\usepackage[textwidth=3.2cm]{todonotes}

\newtheorem{principle}{Principle}
\theoremstyle{remark}
\newtheorem{notation}{Notation}


\newcommand{\Bstates}[1]{\mathcal{#1}}	
\newcommand{\companions}[1]{\node{C}_{#1}} 
\newcommand{\dia}[1]{\langle #1 \rangle}		
\newcommand{\epathto}{\xrightarrow[\dot{}]{.}}
\newcommand{\node}[1]{\mathbf{#1}} 
\newcommand{\pathto}{\xrightarrow[\dot{}]{}}

\newcommand{\act}{\textit{act}}               
\newcommand{\bottom}{{\perp}}                   
\newcommand{\cleaves}[1]{\mathbf{CL}_{#1}}      
\newcommand{\cleavesT}{\cleaves{\mathbb{T}}}    
\newcommand{\cnodes}[1]{\mathbf{C}_{#1}}        
\newcommand{\cnodesT}{\cnodes{\mathbb{T}}}      
\newcommand{\dom}{\operatorname{dom}}
\renewcommand{\div}{{\perp}}                    
\newcommand{\emptyL}{\varepsilon}               
\newcommand{\false}{\texttt{f\!f}}
\newcommand{\img}[2]{#1(#2)}                    
\newcommand{\inr}[1]{\mathrel{#1}}              
\newcommand{\iupd}[2]{\langle #1 := #2 \rangle}		
\newcommand{\lts}[1]{(\states{#1}, \xrightarrow{})}
\newcommand{\muforms}{\mathbb{F}}				
\newcommand{\muformsSV}{\muforms^{\Sigma}_{\Var}}	
\newcommand{\nats}{\mathbb{N}}
\newcommand{\pre}{\mathit{pred}}
\newcommand{\preimg}[2]{#1^{-1}(#2)}            
\newcommand{\pto}{\to_\div}                     
\newcommand{\restrict}[2]{#1 \lfloor #2}        
\newcommand{\rcomp}{\mathbin{;}}                  
\newcommand{\rn}{\textit{rn}}                   
\newcommand{\RuleAppl}{\textnormal{RAppl}}      
\newcommand{\RuleApplT}{\RuleAppl_T}		
\newcommand{\semop}[1]{||\, #1 \,||}            
\newcommand{\sem}[3]{\semop{#1}^{#2}_{#3}}		

\newcommand{\semfT}[3]{\semT{\synf{#1}{#2}}{#3}}
\newcommand{\semfTV}[2]{\semTV{\synf{#1}{#2}}}  
\newcommand{\semfZTV}[1]{\semTV{\synfZ{#1}}}    
\newcommand{\semT}[2]{\sem{#1}{\T}{#2}}			
\newcommand{\semTV}[1]{\semT{#1}{\V}}			
\newcommand{\set}{\textit{set}}                 
\newcommand{\Seq}[2]{\mathbf{S}^{#1}_{#2}}	

\newcommand{\SeqTV}{\Seq{\T}{\V}}
\newcommand{\seq}[1]{\mathbf{#1}}	
\newcommand{\seqdl}{\textit{dl}}    
\newcommand{\seqfm}{\textit{fm}}    
\newcommand{\seqst}{\textit{st}}    
\newcommand{\setdiff}{\setminus}

\newcommand{\states}[1]{\mathcal{#1}}
\newcommand{\synf}[2]{#1.#2}        
\newcommand{\synfZ}[1]{\synf{Z}{#1}} 
\newcommand{\T}{\mathcal{T}}
\newcommand{\tableau}[5]{(\tree{#1}, #2, #3, #4, #5)}		
\newcommand{\tableauTsV}[3]{\tableau{#1}{#2}{\T}{\V}{#3}}	
\newcommand{\tableauT}[2]{\tableauTsV{T}{#1}{#2}}		
\newcommand{\tableauTrl}{\tableauT{\rho}{\lambda}}	
\newcommand{\tdelay}{\delta} 
\newcommand{\tdelays}{\mathit{del}} 
\newcommand{\tpre}[2]{#1 \lceil #2}    
\newcommand{\tnx}[3]{\vdash^{{#1},{#2}}_{#3}} 
\newcommand{\tnxTVD}{\tnx{\T}{\V}{\Delta}} 
\newcommand{\tnxTV}[1]{\tnx{\T}{\V}{#1}} 
\newcommand{\tnxT}[2]{\tnx{\T}{#1}{#2}}

\newcommand{\tn}[1]{\vdash_{#1}}	
\newcommand{\tnD}{\tn{\Delta}}		
\newcommand{\tR}[2]{\forall_{#1}(#2)}   
\newcommand{\tree}[1]{\mathbf{#1}}	
\newcommand{\true}{\texttt{t\!t}}
\newcommand{\tsucc}{\mathit{succ}} 
\newcommand{\tU}[2]{\exists_{#1}(#2)}   
\newcommand{\V}{\val{V}}
\newcommand{\val}[1]{\mathcal{#1}}
\newcommand{\Var}{\textnormal{Var}\xspace}
\renewcommand{\vec}[1]{\vv{#1}} 


\newcommand{\proofrule}[3][]{#1 \frac{\raisebox{0.7ex}{\normalsize{$#2$}}}
  {\raisebox{-1.6ex}{\normalsize{$#3$}}}}

\newcommand{\Rnneg}{\mathbb{R}_{\geq 0}}

\newcommand{\seeappendix}{The detailed proof is included in the appendix.}

\begin{document}

\title{Extensible Proof Systems for Infinite-State Systems%
    \thanks{Research supported by US National Science Foundation grant CNS-1446365 and US Office of Naval Research grant N00014-17-1-2622.}
}
%
%
\author{Rance Cleaveland\inst{1}\orcidID{0000-0002-4952-5380} \and
    Jeroen J.A. Keiren\inst{2}\orcidID{0000-0002-5772-9527}}
\authorrunning{R. Cleaveland \and J.J.A. Keiren}
%
\institute{Department of Computer Science, University of Maryland, College Park, Maryland, USA\\
    \email{rance@cs.umd.edu}
    \and
    Department of Mathematics and Computer Science, Eindhoven University of Technology, PO Box 513, 5600 MB Eindhoven, The Netherlands\\
    \email{j.j.a.keiren@tue.nl}
}
\maketitle              
\begin{abstract}
    This paper revisits soundness and completeness of proof systems for proving that sets of states in infinite-state labeled transition systems satisfy formulas in the modal mu-calculus.
    Our results rely on novel results in lattice theory, which give constructive characterizations of both greatest and least fixpoints of monotonic functions over complete lattices. 
    We show how these results may be used to reconstruct the sound and complete tableau method for this problem due to Bradfield and Stirling.
    We also show how the flexibility of our lattice-theoretic basis simplifies reasoning about tableau-based proof strategies for alternative classes of systems.
    In particular, we extend the modal mu-calculus with timed modalities, and prove that the resulting tableaux method is sound and complete for timed transition systems.
    \keywords{mu-calculus \and model checking \and infinite-state systems.}
\end{abstract}

\section{Introduction}

Proof systems provide a means for proving sequents in formal logics, and are intended to reduce reasoning about objects in a given theory to syntactically checkable proofs consisting of applications of proof rules to the sequents in question.  When a proof system is sound, every provable sequent is indeed semantically valid; when it is in addition complete, it follows that every semantically valid sequent can be proved within the proof system.  Because they manipulate syntax, the construction of proofs within a proof system can be automated; proof assistants such as Coq~\cite{bertot2013interactive} and Nuprl~\cite{constable1986implementing} are built around this observation.  Within the model-checking community, the fully automatic construction of proofs based on sound and complete proof systems for decidable theories provides a basis for establishing the correctness, or incorrectness, of systems \emph{vis \`a vis} properties they are expected to satisfy.  In addition, proof systems can be used as a basis for different approaches to model checking.  In classical \emph{global} model-checking techniques~\cite{clarke2018model}, one uses the proof rules to prove properties of states with respect to larger and larger subformulas of the given formula, until one shows that the start state(s) of the system either do, or do not, satisfy the original formula.  In \emph{local}~\cite{SW1991}, or \emph{on-the-fly}~\cite{bhat1995efficient} methods, in contrast, one uses the proof rules to conduct backward reasoning  from the start states and original formula, applying proof rules ``in reverse'' to generate subgoals that require proving in order for the original sequents to be true.  The virtue of on-the-fly model checking is that proofs can often be completed, or be shown not to exist, without having to examine all states and all subformulas.

Driven by applications in model checking, proof systems have been developed for establishing that finite-state systems satisfy formulas captured in a very expressive temporal logic, the \emph{(propositional) modal mu-calculus}~\cite{Koz1983}.  These in turn have been used as a basis for efficient model-checking procedures for fragments of this logic~\cite{CS1993}.  Work has also shown that for interesting fragments of the mu-calculus, global and on-the-fly techniques exhibit the same worst-case complexity~\cite{CS1993,mateescu2003efficient}, meaning that the early-termination feature of on-the-fly approaches does not incur additional overhead in the worst case.

Researchers have also developed sound and complete proof systems for infinite-state systems and the modal mu-calculus~\cite{BS1992,Bra1991}, a prototype implementation of which was described in~\cite{Bra1993}.  In this case, the sequents, instead of involving single states and formulas in the logic as in the finite-state case, refer to potentially infinite sets of states. A variation of these proof systems that localizes validity by annotating fixed points, and that explicitly provides well-founded order on states satisfying least fixed points, was described in~\cite{And1993}. 
Ultimately, the success conditions in proof systems for infinite state systems are based on the fundamental theorem of the modal mu-calculus due to Streett and Emerson~\cite{SE1989}. For a detailed yet accessible account of the fundamental theorem, the reader is referred to~\cite{BS2001}. 

While in general infinite-state model checking in the modal mu-calculus is undecidable, specialized proof systems for modifications of the so-called \emph{alternation-free} fragment of the mu-calculus~\cite{FC2014,DC2005} can lead to efficient on-the-fly model checkers for \emph{timed automata}~\cite{AD1994}, a class of infinite-state systems whose model-checking problem is decidable.

Ideally, one should be able to prove soundness and completeness of these timed-automata proof systems by referring to the general results for infinite-state systems and the modal mu-calculus.  One should also be able to develop checkers for larger fragments than the alternation-free fragment so that more types of properties can be processed.  However, there are several obstacles to this desirable state of affairs.

\begin{enumerate}
    \item The intricacy of the proof system in~\cite{BS1992} means that modifications to it in essence require re-proofs of soundness and completeness from scratch.
    \item The proof systems used for on-the-fly model checking of timed automata require several modifications to the modal mu-calculus:  modalities for reasoning about time, and computable proof-termination criteria to enable detection of when a proof attempt is complete.
    \item For efficiency reasons, construction of proofs in the on-the-fly model checkers must also use different proof-construction strategies, and this prevents applying reasoning from the infinite-state mu-calculus proof system to establishing the correctness of these procedures.
\end{enumerate}
These issues have limited the practical application of the proof system in~\cite{BS1992}, although its theoretical contribution is rightfully very highly regarded.

In this paper the goal is to revisit the proof system for general infinite-state systems and the modal mu-calculus with a view toward developing new, extensible proofs of soundness and completeness. 
Concretely, our contributions are the following.

We first introduce \emph{support orderings}, along with general lattice-theoretical results, that formalize the dependencies between states that satisfy given fixpoint formulas. This, in essence, gives a purely semantic, constructive account of the least and greatest fixpoints of monotonic functions over subset lattices. These support orderings are closely related to Streett and Emerson's \emph{regeneration relations}~\cite{SE1989}, as well as Bradfield and Stirling's \emph{(extended) paths}~\cite{Bra1991,BS1992}, although unlike those works our results do not rely on infinitary syntactic manipulations of mu-calculus formulas such as ordinal unfoldings.

We next recall the proof system of~\cite{BS1992}, and show that the soundness of the proof system follows from the lattice-theoretical results. In particular, a syntactic ordering, the \emph{extended dependency ordering}, derived from extended paths is a support ordering, and from this observation soundness of the proof system follows straightforwardly.
In a similar way, given a support ordering induced by our lattice-theoretical results, we construct a tableau whose extended dependencies respect the given support ordering. This establishes completeness. To facilitate the completeness proof, we first establish a novel notion of well-founded induction in the context of mutually recursive fixpoints.

Finally, we show that these results also permit extensions to the mu-calculus to be easily incorporated in a soundness- and completeness-preserving manner.  They also simplify reasoning about new proof-termination criteria.
To show this, we first modify the termination criterion of the proof system, such that sequents for least-fixpoint formulas with a non-empty set of states are always unfolded. This results in a proof systems that is (evidently) not complete, but whose soundness follows trivially from our earlier soundness result.
Second, we consider a proof system for timed transition systems, and a mu-calculus with two additional, timed modalities~\cite{FC2014}. The proofs of soundness and completeness are, indeed, straightforward extensions of our earlier results. To the best of our knowledge, ours is also the first sound and complete proof system for a timed mu-calculus.

The rest of the paper develops along the following lines.  In the next section we review mathematical preliminaries used in the rest of the paper. We then state and prove our lattice-theoretical results in Section~\ref{sec:support-orderings}, while Section~\ref{sec:mu-calculus} introduces the syntax and semantics of the modal mu-calculus and establishes some properties that will prove useful.
We and present the proof system of~\cite{BS1992} in Section~\ref{sec:base-proof-system}. In Section~\ref{sec:Soundness-via-support-orderings}, we show that the soundness of the proof system follows from the lattice-theoretical results, which we consider completeness in Section~\ref{sec:Completeness}.
In Sections~\ref{sec:proof-search} and~\ref{sec:timed-mu-calculus} we illustrate how our new approach accommodates changes to the proof system that are needed for efficient on-the-fly model checking for timed automata and other decidable formalisms. Conclusions and future work are discussed in Section~\ref{sec:Conclusions}.

\section{Mathematical preliminaries}\label{sec:preliminaries}

This section defines basic terminology used in the sequel for finite sequences, (partial) functions, binary relations, lattices and fixpoints, and finite trees.

\subsection{Sequences}\label{subsec:sequences}

As usual, sequences are ordered collections of elements $x_1 \cdots x_n$, where each $x_i$ is taken from a given set $X$.
\begin{notation}[Sequences]
    Let $X$ be a set.
    \begin{itemize}
    \item $X^*$ is the set of finite, possibly empty, sequences of elements from $X$.
    \item The \emph{empty sequence} in $X^*$ is denoted $\emptyL$.
    \item We take $X \subseteq X^*$, with each $x \in X$ being a single-element sequence in $X^*$.
    \item 
        Suppose $\vec{w} = x_1 \cdots x_n \in X^*$, where each $x_i \in X$.  Then $|\vec{w}| = n$ denotes the \emph{length} of $\vec{w}$.  Note that $| \emptyL | = 0$, and $|x| = 1$ if $x \in X$.
    \item 
        If $\vec{w}_1, \vec{w}_2 \in X^*$ then $\vec{w}_1 \cdot \vec{w}_2 \in X^*$ is the \emph{concatenation} of $\vec{w}_1$ and $\vec{w}_2$.  We often omit $\cdot$ and write e.g.\/ $\vec{w}_1\vec{w}_2$ for $\vec{w}_1 \cdot \vec{w}_2$.
    \item 
        Suppose $\vec{w}_1, \vec{w}_2 \in X^*$.  Then we write $\vec{w}_1 \preceq \vec{w}_2$ if $\vec{w}_1$ is a (not necessarily strict) \emph{prefix} of $\vec{w}_2$, and $\vec{w}_1 \npreceq \vec{w}_2$ if $\vec{w}_1$ is not a prefix of $\vec{w}_2$.
    \item
        Let $\vec{w} = x_1 \cdots x_n \in X^*$.  Then $\set(\vec{w}) \subseteq X$, the \emph{set associated with} $\vec{w}$, is defined to be $\{x_1, \ldots, x_n\}$.  Note that $\set(\emptyL) = \emptyset$, and $|\set(\vec{w})| \leq |\vec{w}|$.
    \item
        Sequence $\vec{w} \in X^*$ is \emph{duplicate-free} iff $|\vec{w}| = |\set(\vec{w})|$.
    \item
        Sequence $\vec{w} \in X^*$ is a \emph{permutation}, or \emph{ordering}, of $X$ iff $\vec{w}$ is duplicate-free and $\set(\vec{w}) = X$.  
    \end{itemize}
\end{notation}

\noindent
Note that only finite sets can have permutations / orderings in this definition.

\subsection{Partial functions}

In this paper we make significant use of partial as well as total functions.  This section introduces notation we use for such functions.

\begin{notation}[Partial functions]
Let $X$ and $Y$ be sets.
\begin{itemize}
    \item
    Relation $f \subseteq X \times Y$ is \emph{functional} iff for all $x \in X$ and $y_1, y_2 \in Y$, if $(x,y_1) \in R$ and $(x,y_2) \in R$ then $y_1 = y_2$.  We call $f$ a \emph{partial function} from $X$ to $Y$ and use $X \pto Y$ to denote the set of all partial functions from $X$ to $Y$.
    \item
    Suppose $f \in X \pto Y$ and $x \in X$.  If there is $y \in Y$ such that $(x,y) \in Y$ then we write $f(x)$ as usual to denote this $y$ and say $f$ is \emph{defined} for $x$ in this case.  We will also write $f(x) \in Y$ to denote that $f$ is defined for $x$.  If there exists no $y \in Y$ such that $(x,y) \in f$ then we say that $f$ is \emph{undefined} for $x$ and write $f(x)\div$.
    \item
    If $f \in X \pto Y$ then we call $\dom(f) = \{ x \in X \mid f(x) \in Y\}$ the \emph{domain of definition} of $f$.
    \item
    $f \in X \pto Y$ is \emph{total} iff $\dom(f) = X$.  We write $X \to Y$ as usual for the set of total functions from $X$ to $Y$.  Note that $X \to Y \subseteq X \pto Y$.
    \item
    If $f, g \in X \pto Y$ then $f = g$ iff $\dom(f) = \dom(g)$ and for all $x \in \dom(f)$, $f(x) = g(x)$.
\end{itemize}
\end{notation}

\noindent
Note that partial functions are equal exactly when they are defined on the same elements and return the same values when they are defined.  
We also use the following standard operations on partial functions.
\begin{notation}[Function operations]
Let $X$ and $Y$ be sets.
\begin{itemize}
    \item
    The \emph{everywhere undefined} function $f_\emptyset \in X \pto Y$ is defined as $f_\emptyset = \emptyset \subseteq X \times Y$.  Note that $f(x) \div$ for all $x \in X$ and thus $\dom(f_\emptyset) = \emptyset$.
    \item
    Let $f \in X \pto X$ and $i \in \nats$.  Then $f^i \in X \pto X$ is defined as follows.
    \[
    f^i(x) =
    \begin{cases}
        x               & \text{if $i = 0$} \\
        f(f^{i-1}(x))   & \text{otherwise}
    \end{cases}
    \]
    Note that $f^0$ is total and that $f^i(x)\div$ iff $f(f^j(x)) \div$ some $j < i$.  Also note that if $f$ is total then so is $f^i$ for all $i \geq 0$.
    \item 
    Suppose $f \in X \pto Y$ , and let $x \in X$ and $y \in Y$.  Then $f[x:=y] \in X \rightarrow Y$ is defined as follows.
    \[
    f[x:=y](x') =
    \begin{cases}
        y       & \text{if $x'=x$} \\
        f(x')   & \text{if $x' \neq x$ and $f(x') \in Y$}
    \end{cases}
    \]
    Note that even if $f(x)\bottom$, $f[x:=y]$ is nevertheless defined for $x$, and that if $f$ is total then so is $f[x:=y]$.
    This notion can generalized to $f[\vec{x} := \vec{y}]$, where $\vec{x} \in X^*$ is duplicate-free and $\vec{y} \in Y^*$ is such that $|\vec{x}| = |\vec{y}|$, in the obvious fashion.
    \item 
    Let $f \in X \pto Y$ and $\vec{w} = x_1 \cdots x_n \in X^*$.  Then $f \in X^* \pto Y^*$ is defined to by $f(\vec{w}) = f(x_1) \cdots f(x_n)$.  Note that if $f(x) \div$ for any $x \in \set(\vec{w})$ then $f(\vec{w}) \div$.
\end{itemize}
\end{notation}

\subsection{Binary relations}

Later in this paper we refer extensively to the theory of binary relations over a given set $X$.  This section summarizes some of the concepts used in what follows.

\begin{definition}[Binary relations]
Let $X$ be a set.  Then a \emph{binary relation} over $X$ is a subset $R \subseteq X \times X$.
\end{definition}

\noindent
When $R$ is a binary relation over $X$ we usually write $x_1 \inr{R} x_2$ in lieu of $(x_1, x_2) \in R$ and $x_1 \inr{\centernot R} x_2$ instead of $(x_1, x_2) \not\in R$.
We now recall the following terminology.

\begin{definition}[Preorders, partial orders and equivalence relations]~\label{def:relations}
Let $R \subseteq X \times X$ be a binary relation over $X$.
\begin{enumerate}
    \item
    $R$ is \emph{reflexive} iff $x \inr{R} x$ for all $x \in X$.
    \item
    $R$ is \emph{symmetric} iff whenever $x_1 \inr{R} x_2$ then $x_2 \inr{R} x_1$.
    \item
    $R$ is \emph{anti-symmetric} iff whenever $x_1 \inr{R} x_2$ and $x_2 \inr{R} x_1$ then $x_1 = x_2$.
    \item
    $R$ is \emph{transitive} iff whenever $x_1 \inr{R} x_2$ and $x_2 \inr{R} x_3$ then $x_1 \inr{R} x_3$.
    \item
    $R$ is a \emph{preorder} iff $R$ is reflexive and transitive.
    \item
    $R$ is a \emph{partial order} iff $R$ is reflexive, anti-symmetric and transitive.
    \item
    $R$ is an \emph{equivalence relation} iff $R$ is reflexive, symmetric and transitive.
\end{enumerate}
\end{definition}

\noindent
We also use the following standard, if less well-known, definitions.

\begin{definition}[Irreflexive and total relations]\label{def:irreflexive-total}
Let $R \subseteq X \times X$.
\begin{enumerate}
    \item
    $R$ is \emph{irreflexive} iff for every $x \in X$, $x \inr{\centernot R} x$.
    \item
    $R$ is a \emph{(strict) total order} iff it is irreflexive and transitive and satisfies:  for all $x_1 \neq x_2 \in X$, either $x_1 \inr{R} x_2$ or $x_2 \inr{R} x_1$.
\end{enumerate}
\end{definition}

\noindent
A relation $R$ over $X$ is irreflexive iff no element in $X$ is related to itself.  It is total exactly when any distinct $x_1, x_2 \in X$ are \emph{comparable}, one way or another, via $R$.  This version of totality is often called \emph{strict totality}, although we drop the qualifier ``strict" in this paper.
The following relations are used later.

\begin{definition}[Identity, universal relations]\label{def:identity-universal-relations}
Let $X$ be a set.
\begin{enumerate}
    \item 
    The \emph{identity relation over $X$} is defined as $\mathit{Id}_X = \{ (x,x) \mid x \in X\}$.
    \item
    The \emph{universal relation over $X$} is defined as $\mathit{U}_X = X \times X$.
\end{enumerate}
\end{definition}

\noindent
We also use the following operations on binary relations.
\begin{definition}[Relational operations]\label{def:relation-operations}
Let $R, R'$ be binary relations over $X$.
\begin{enumerate}
    \item
    $R'$ \emph{extends} $R$ iff $R \subseteq R'$.
    \item
    Let $X' \subseteq X$.  Then the \emph{restriction} of $R$ with respect to $X'$ is the binary relation $\restrict{R}{X'}$ over $X'$ defined as follows:
    \[
    \restrict{R}{X'} = R \cap (X' \times X') = \{(x_1, x_2) \in X' \times X' \mid x_1 \inr{R} x_2\}.
    \]
    \item
    The \emph{relational composition} of $R$ and $R'$ is the binary relation $R \rcomp R'$ over $X$ defined as follows.
    \[
    R \rcomp R' =
    \{(x_1, x_3) \in X \times X \mid \exists x_2 \colon x_2 \in X \colon x_1 \inr{R} x_2 \land x_2 \inr{R'} x_3\}
    \]
    \item
    The \emph{inverse} of $R$ is the binary relation $R^{-1}$ over $X$ defined by:
    \[
    R^{-1} = \{(x_2, x_1) \in X \times X \mid x_1 \inr{R} x_2\}
    \]
    \item\label{it:relation-image}
    Let $X' \subseteq X$.  Then the \emph{image}, $\img{R}{X'}$, of $R$ with respect to $X'$ is defined by
    \[
    \img{R}{X'} = \{x \in X \mid \exists x' \colon x' \in X' \colon x' \inr{R} x\}.
    \]
    If $x \in X$ then we write $\img{R}{x}$ in lieu of $\img{R}{\{x\}}$.
    \item\label{it:relation-preimage}
    Let $X' \subseteq X$.  Then the \emph{pre-image} of $R$ with respect to $X'$ is the image $\preimg{R}{X'}$ of $R^{-1}$ with respect to $X'$.
    \item\label{item:irreflexive-core}
    The \emph{irreflexive core} of $R$ is the binary relation $R^-$ over $X$ defined by
    \[
    R^- = R \setminus \mathit{Id}_X.
    \]
    \item\label{item:reflexive-closure}
    The \emph{reflexive closure} of $R$ is the binary relation $R^=$ given by
    \[
    R^= = R \cup \mathit{Id}_X.
    \]
    \item\label{subdef:transitive-closure}
    The transitive closure, $R^+$, of $R$ is the least transitive relation extending $R$.
    \item\label{subdef:reflexive-transitive-closure}
    The reflexive and transitive closure, $R^*$ of $R$, is defined by
    \[
    R^* = (R^+)^=.
    \]
\end{enumerate}
\end{definition}

\noindent
Note that based on the definition of $R^{-1}$, the following holds.
    \[
    \preimg{R}{X'} = \{x \in X \mid \exists x' \colon x' \in X' \colon x' \inr{R^{-1}} x \}
                  = \{x \in X \mid \exists x' \colon x' \in X' \colon x \inr{R} x'\}
    \]

Relation $R^+$ is guaranteed to exist for arbitrary set $X$ and relation $R$ over $X$, and from the definition it is immediate that $R$ itself is transitive iff $R^+ = R$.  $R^+$ also has the following alternative characterization.  Define
\begin{align*}
    R^1     &= R\\
    R^{i+1} &= R \rcomp R^i.
\end{align*}
Then $R^+ = \bigcup_{i=1}^\infty R^i$.
It may also be shown that for any relation $R \subseteq X \times X$, $R^*$ is the unique smallest preorder that extends $R$.

The transitive and reflexive closure of a relation also induces an associated equivalence relation and partial order over the resulting equivalence classes.

\begin{definition}[Quotient of a relation]\label{def:relation-quotient}
Let $R \subseteq X \times X$ be a relation.
\begin{enumerate}
    \item
    Relation $\sim_R$ is defined as $x_1 \sim_R x_2$ iff $x_1 \inr{R^*} x_2$ and $x_2 \inr{R^*} x_1$.
    \item
    Let $x \in X$.  Then $[x]_R \subseteq X$ is defined as
    $$
    [x]_R = \{x' \in X \mid x \sim_R x'\}.
    $$
    We use $Q_R = \{ \,[x]_R \mid x \in X \}$.
    \item
    $P(R) \subseteq Q_R \times Q_R$, the \emph{ordering induced by $R$ on $Q_R$}, is defined as
    $$
    P(R) = \left\{ \left( [x]_R, [x']_R \right) \mid x \inr{R^*} x' \right\}.
    $$
    \item
    $(Q_R, P(R))$ is called the \emph{quotient} of $R$.
\end{enumerate}
\end{definition}
It is easy to verify that $\sim_R$ is an equivalence relation for any $R$, and thus that $[x]_R$ is the equivalence class of $R$ and that $P(R)$ is a partial order over $Q_R$.  Note that $Q_R$ defines a partition of set $X$:  $X = \bigcup_{Q \in Q_R} Q$, and either $Q = Q'$ or $Q \cap Q' = \emptyset$ for any $Q, Q' \in Q_R$.

In this paper we also make extensive use of \emph{well-founded relations} and \emph{well-orderings}.  To define these, we first introduce the following.

\begin{definition}[Extremal elements]\label{def:minimal-minimum-elements}
Let $R \subseteq X \times X$, and let $X' \subseteq X$.
\begin{enumerate}
    \item 
    $x' \in X'$ is \emph{$R$-minimal in $X'$} iff for all $x \neq x' \in X'$, $x \inr{\centernot R} x'$.
    \item
    $x' \in X'$ is \emph{$R$-minimum in $X'$} iff $x'$ is the only $R$-minimal element in $X'$.
    \item 
    $x' \in X'$ is \emph{$R$-maximal in $X'$} iff for all $x \neq x' \in X'$, $x' \inr{\centernot R} x$.
    \item
    $x' \in X'$ is \emph{$R$-maximum in $X'$} iff $x'$ is the only $R$-maximal element in $X'$.
    \item
    $x \in X$ is an \emph{$R$-lower bound} of $X'$ iff for all $x' \neq x \in X'$, $x \inr{R} x'$.
    \item
    $x \in X$ is the \emph{$R$-greatest lower bound} of $X'$ iff it is the $R$-maximum of the set of $R$-lower bounds of $X'$.
    \item
    $x \in X$ is an \emph{$R$-upper bound} of $X'$ iff for all $x' \neq x \in X'$, $x' \inr{R} x$.
    \item
    $x \in X$ the \emph{$R$-least upper bound} of $X'$ iff it is the $R$-minimum of the set of $R$-upper bounds of $X'$.
\end{enumerate}
\end{definition}

\noindent
In what follows we often omit $R$ when it is clear from context and instead write minimal rather than $R$-minimal, etc.  Note that minimal / minimum / maximal / maximum elements for $X'$ must themselves belong to $X'$; this is not the case for upper and lower bounds.  We can now defined well-foundness and well-orderings.

\begin{definition}[Well-founded relations, well-orderings]\label{def: well-founded}
Let $R \subseteq X \times X$.
\begin{enumerate}
    \item
    $R$ is \emph{well-founded} iff every non-empty $X' \subseteq X$ has an $R$-minimal element.
    \item
    $R$ is a \emph{well-ordering} iff it is total and well-founded.
\end{enumerate}
\end{definition}

We close the section by remarking on some noteworthy properties of well-founded relations and well-orderings.\footnote{These results generally rely on the inclusion of additional axioms beyond the standard ones of Zermelo-Fraenkel (ZF) set theory.  The Axiom of Choice~\cite{jech2008axiom} is one such axiom, and in the rest of the paper we assume its inclusion in ZF.}  
The first result is a well-known alternative characterization of well-foundedness.  If $X$ is a set and $R \subseteq X \times X$, call $\ldots, x_2, x_1$ an \emph{infinite descending chain in} $R$ iff for all $i \geq 1$, $x_{i+1} \inr{R} x_i$.

\begin{lemma}[Descending chains and well-foundedness]\label{lem:well-foundedness-chains}
Let $R \subseteq X \times X$.  Then $R$ is well-founded iff $R$ contains no infinite descending chains.
\end{lemma}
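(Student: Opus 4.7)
The plan is to prove each direction by contraposition. For the forward implication (well-founded implies no infinite descending chain), I would suppose an infinite descending chain $\ldots, x_2, x_1$ exists and exhibit a witness set violating well-foundedness, namely $X' = \{x_i \mid i \geq 1\}$. This $X'$ is non-empty by construction, so were $R$ well-founded it would contain an $R$-minimal element $x_j$; but then $x_{j+1} \in X'$ satisfies $x_{j+1} \inr{R} x_j$, directly contradicting minimality once we verify $x_{j+1} \neq x_j$. This last caveat reflects that Definition~\ref{def:minimal-minimum-elements} ignores self-loops, so I would handle the degenerate case in which the chain is eventually constant at some $x$ by noting that such a chain forces $x \inr{R} x$; passing instead to the first index beyond which no new elements are introduced yields the required strictly smaller witness, or, if the chain is truly constant, the set $\{x\}$ exhibits the self-loop that already prevents the standard reading of well-foundedness.

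For the reverse direction (no infinite descending chain implies well-founded), which I expect to be the main obstacle, I would argue contrapositively. Assume $R$ is not well-founded, so some non-empty $X' \subseteq X$ has no $R$-minimal element. Unfolding Definition~\ref{def:minimal-minimum-elements}, for every $x \in X'$ the set $P(x) = \{y \in X' \mid y \neq x \text{ and } y \inr{R} x\}$ is non-empty. Invoking the Axiom of Choice (available by the paper's footnote) on the family $\{P(x)\}_{x \in X'}$ produces a function $\phi \colon X' \to X'$ with $\phi(x) \neq x$ and $\phi(x) \inr{R} x$ for every $x \in X'$. Fixing any $x_1 \in X'$ and defining $x_{i+1} = \phi(x_i)$ by induction on $i \in \nats$ yields a sequence $\ldots, x_2, x_1$ with $x_{i+1} \inr{R} x_i$ for all $i \geq 1$, i.e., an infinite descending chain in $R$, contradicting the hypothesis.

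The main obstacle is the non-constructive step in the reverse direction: without the Axiom of Choice one cannot, in general, simultaneously pick a strictly smaller element for every $x \in X'$, and single-witness induction on $\nats$ is not enough because at stage $i$ one must already have fixed the choices that will be used at all later stages. Once $\phi$ is in hand the iteration is routine, and the forward direction is essentially bookkeeping on the set associated with the chain.
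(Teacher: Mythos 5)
The paper states this lemma without proof (it is flagged as a well-known fact), so there is no in-paper argument to compare against; I am judging your proposal on its own terms. Your reverse direction is correct and standard: unfolding the failure of well-foundedness gives, for each $x$ in the bad set $X'$, a non-empty set $P(x)$ of elements of $X'$ strictly $R$-below $x$; a choice function on the family $\{P(x)\}_{x \in X'}$ yields $\phi$, and iterating $\phi$ from any seed produces the infinite descending chain. No complaints there.

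The forward direction is where the real problem lies, and you have half-spotted it but not resolved it. Under the literal reading of Definition~\ref{def:minimal-minimum-elements}, $x'$ is $R$-minimal in $X'$ provided no \emph{other} element of $X'$ is $R$-below it; self-loops are exempt. Consequently the lemma as stated is false: take $X = \{a\}$ and $R = \{(a,a)\}$. Every non-empty subset of $X$ is $\{a\}$, whose sole element is vacuously $R$-minimal, so $R$ is well-founded per the definition, yet $\ldots, a, a, a$ is an infinite descending chain per the paper's definition of chain, which imposes no distinctness. Your two proposed patches do not close this gap. ``Passing to the first index beyond which no new elements are introduced'' fails to produce a witness $y \neq x_j$ with $y \inr{R} x_j$ precisely when every occurrence of the minimal element $x_j$ in the chain is followed by $x_j$ again---which is exactly the situation in the counterexample. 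And appealing to ``the standard reading of well-foundedness'' for the truly constant chain is an admission that the paper's definition does not support the claim, not a proof under that definition. The honest fix is to say so explicitly: read ``$R$-minimal'' in the standard way (there is no $y \in X'$ whatsoever with $y \inr{R} x'$, including $y = x'$), which is evidently what the paper intends, since it later performs well-founded induction on such relations. Under that reading your forward direction collapses to the clean argument with no case analysis at all: the minimal element $x_j$ of $\{x_i \mid i \geq 1\}$ satisfies $x_{j+1} \inr{R} x_j$, a contradiction.
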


\noindent
Transitive closures also preserve well-foundedness.

\begin{lemma}[Transitive closures of well-founded relations]\label{lem:transitive-closure-well-founded}
Let $R \subseteq X \times X$ be well-founded.  Then $R^+$ is also well-founded.
\end{lemma}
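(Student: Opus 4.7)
My plan is to prove this via the descending-chain characterization established in the immediately preceding lemma: it suffices to show that $R^+$ contains no infinite descending chain. The strategy is to take a hypothetical infinite $R^+$-descending chain and unfold each $R^+$-step into a finite $R$-chain, then concatenate everything into a single infinite $R$-chain, contradicting well-foundedness of $R$.

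Concretely, suppose $\ldots, x_2, x_1$ is an infinite descending chain in $R^+$, so that $x_{i+1} \inr{R^+} x_i$ for every $i \geq 1$. Using the identity $R^+ = \bigcup_{n \geq 1} R^n$ noted just after the definition of $R^+$, each such step witnesses some $n_i \geq 1$ together with intermediate elements $z^i_0, \ldots, z^i_{n_i}$ satisfying $z^i_0 = x_i$, $z^i_{n_i} = x_{i+1}$, and $z^i_{j+1} \inr{R} z^i_j$ for $0 \leq j < n_i$. Since $z^{i+1}_0 = x_{i+1} = z^i_{n_i}$, the individual $R$-chains glue together end-to-end; reading them in order produces a single infinite descending chain in $R$, and the preceding lemma then contradicts well-foundedness of $R$.

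I expect the argument to be essentially routine, with the only real bookkeeping concern being that the aggregated chain is genuinely infinite rather than accidentally finite. This is immediate because every $n_i \geq 1$ by definition of $R^+$, so each segment contributes at least one new $R$-step to the composed chain. A small but worth-mentioning subtlety is the need for a simultaneous choice of the witnesses $n_i$ and sequences $z^i_j$ across all infinitely many $i$; this is covered by the choice principles already assumed in the footnote on the Axiom of Choice. A more direct attempt, which would take an $R$-minimal element of a non-empty $X' \subseteq X$ and try to show it is $R^+$-minimal, fails because intermediate elements of an $R^+$-witness chain need not themselves lie in $X'$, which is precisely why the chain-based route is the natural one.
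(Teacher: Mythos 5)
Your proof is correct. The paper states this lemma without proof (it is a standard fact cited in the preliminaries), and your argument — unfolding each $R^+$-step of a hypothetical infinite descending chain into a finite nonempty $R$-chain via $R^+ = \bigcup_{i\geq 1} R^i$, concatenating, and invoking Lemma~\ref{lem:well-foundedness-chains} — is the standard one, with the relevant subtleties (each $n_i \geq 1$ guarantees infinitude; a choice principle is needed to select the witnesses) correctly noted.
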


\noindent
Any well-founded relation can be extended to a well-ordering.

\begin{lemma}[Total extensions of well-founded relations]\label{lem:well-ordering-extension}
Let $R \subseteq X \times X$ be well-founded.  Then there exists a well-ordering $R' \subseteq X \times X$ extending $R$.
\end{lemma}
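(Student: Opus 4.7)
The plan is to combine the \emph{rank function} induced by $R$ with an arbitrary background well-ordering of $X$, producing a lexicographic extension $R'$ that is simultaneously total and well-founded. The construction has four ingredients: well-founded recursion on $R$ to define a rank, Zermelo's Well-Ordering Theorem to supply a background order, a lexicographic definition of $R'$, and a direct verification of the required properties.

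First I would define, by well-founded recursion on $R$, a rank function $\rho$ sending each $x \in X$ to an ordinal by
\[
\rho(x) = \sup\{\rho(y) + 1 \mid y \inr{R} x\},
\]
with the convention $\sup\emptyset = 0$. An immediate consequence of this definition is that $y \inr{R} x$ implies $\rho(y) < \rho(x)$. Second, I would invoke Zermelo's theorem---available since the paper already assumes the Axiom of Choice---to obtain a well-ordering $<_W$ on $X$. Third, I would set
\[
R' = \{(x,y) \in X \times X \mid \rho(x) < \rho(y), \text{ or } \rho(x) = \rho(y) \text{ and } x <_W y\}.
\]
That $R' \supseteq R$ is immediate from the key property of $\rho$, and irreflexivity, transitivity, and totality of $R'$ follow from the corresponding properties of the strict ordinal order and of $<_W$ via a standard lexicographic argument.

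The remaining step, well-foundedness of $R'$, is the heart of the lemma. Given a non-empty $X' \subseteq X$, the set $\{\rho(x) \mid x \in X'\}$ of ordinals has a least element $\alpha$; the non-empty ``level set'' $X'' = \{x \in X' \mid \rho(x) = \alpha\}$ then contains a $<_W$-least element $x^*$ because $<_W$ is a well-ordering, and by construction $x^*$ is $R'$-minimum in $X'$. The main obstacle is really just ensuring that $\rho$ is legitimately defined: well-founded recursion relies on the no-infinite-descending-chain characterization of Lemma~\ref{lem:well-foundedness-chains} (which also forces $R$ to be irreflexive, so the pathological case $\rho(x) < \rho(x)$ never arises) together with transfinite induction on the ordinals. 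Once $\rho$ is in hand, the rest of the argument is straightforward bookkeeping.
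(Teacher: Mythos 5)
Your proof is correct. Note that the paper itself does not prove this lemma: it is stated in the preliminaries as a known set-theoretic fact (with a footnote flagging the reliance on the Axiom of Choice), so there is no "paper proof" to compare against. Your argument is the canonical one: the rank function $\rho$ defined by well-founded recursion, a background well-ordering $<_W$ from Zermelo's theorem, and the lexicographic combination. All the verifications go through as you sketch them: $R \subseteq R'$ follows from $y \inr{R} x \Rightarrow \rho(y) < \rho(x)$; totality, irreflexivity and transitivity are the standard lexicographic facts; and well-foundedness follows by first minimizing the rank over a non-empty $X'$ and then taking the $<_W$-least element of the resulting level set. Your remark that the chain characterization (Lemma~\ref{lem:well-foundedness-chains}) forces irreflexivity of $R$, so that the recursion defining $\rho$ is legitimate, is the right point to make explicit --- it is the only place where something could conceivably go wrong, and it doesn't.
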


\noindent
Note that if $R = \emptyset$ then the above lemma reduces to the Well-Ordering Theorem~\cite{Haz01}, which states that every set can be well-ordered.

The next result is immediate from the definition of well-ordering.

\begin{lemma}[Minimum elements and well-orderings]\label{lem:well-ordering-minimum}
Let $R \subseteq X \times X$ be a well-ordering.  Then every non-empty $X' \subseteq X$ contains an $R$-minimum element.
\end{lemma}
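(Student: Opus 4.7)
The plan is to reduce the claim to the existence of an $R$-minimal element (supplied by well-foundedness) and then to use totality to upgrade this minimal element to a minimum. Concretely, fix a non-empty $X' \subseteq X$. Since $R$ is a well-ordering it is, by Definition~\ref{def: well-founded}, well-founded, so there exists some $R$-minimal $x' \in X'$. By Definition~\ref{def:minimal-minimum-elements}, to conclude that $x'$ is $R$-minimum it then suffices to show that $x'$ is the \emph{only} $R$-minimal element of $X'$.

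For uniqueness I would argue by contradiction. Suppose $x'' \in X'$ were also $R$-minimal with $x'' \neq x'$. Totality of $R$ (Definition~\ref{def:irreflexive-total}) applied to the distinct elements $x', x''$ yields $x' \inr{R} x''$ or $x'' \inr{R} x'$. In the first case, $x' \in X' \setminus \{x''\}$ witnesses a violation of the $R$-minimality condition for $x''$, namely $x \inr{\centernot R} x''$ for every $x \in X' \setminus \{x''\}$; in the second case, $x''$ symmetrically violates the $R$-minimality of $x'$. Either way we obtain a contradiction, so $x'$ is the unique $R$-minimal element of $X'$ and hence its $R$-minimum.

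I do not expect any genuine obstacle: the lemma is essentially a bookkeeping check closing the gap between the two notions in Definition~\ref{def:minimal-minimum-elements} (``some minimal element'' versus ``the only minimal element'') using the totality clause of Definition~\ref{def:irreflexive-total}. The only subtlety is to apply totality only to the pair of \emph{distinct} elements $x', x''$, so that the irreflexivity built into Definition~\ref{def:irreflexive-total} is never invoked in a way that would spuriously rule out the case $x' = x''$ before the contradiction is derived.
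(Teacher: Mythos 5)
Your proof is correct; the paper itself gives no proof, remarking only that the result ``is immediate from the definition of well-ordering,'' and your argument (well-foundedness supplies a minimal element, totality forces it to be unique, hence the minimum in the sense of Definition~\ref{def:minimal-minimum-elements}) is exactly the routine verification being elided.
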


\subsection{Complete lattices, monotonic functions and fixpoints}\label{subsec:lattices}

The results in this paper rely heavily on the basic theory of fixpoints of monotonic functions over complete lattices, as developed by Tarski and Knaster~\cite{Tar1955}.  We review the relevant parts of the theory here.

\begin{definition}[Complete lattice]\label{def:complete-lattice}
A \emph{complete lattice} is a tuple $(X, \sqsubseteq, \bigsqcup, \bigsqcap)$ satisfying the following.
\begin{enumerate}
    \item $X$ is a set (the \emph{carrier set}).
    \item
    Relation ${\sqsubseteq}$ is a partial order over $X$.
    \item
    Function ${\bigsqcup} \in 2^X \to X$, the \emph{join} operation, is total and satisfies:  for all $X' \subseteq X$, $\bigsqcup (X')$ is the least upper bound of $X'$.
    \item
    Function ${\bigsqcap} \in 2^X \to X$, the \emph{meet} operation, is total and satisfies:  for all $X' \subseteq X$, $\bigsqcap (X')$ is the greatest lower bound of $X'$.
\end{enumerate}
\end{definition}

\noindent
In what follows we write $\bigsqcup X'$ and $\bigsqcap X'$ instead of $\bigsqcup(X')$ and $\bigsqcap(X')$.

\begin{definition}[Fixpoint]\label{def:fixpoint}
Let $X$ be a set and $f \in X \to X$ be a function.  Then $x \in X$ is a \emph{fixpoint} of $f$ iff $f(x) = x$.
\end{definition}

As usual, $f \in X \rightarrow X$ is monotonic over complete lattice $(X, \sqsubseteq, \bigsqcup,\bigsqcap)$ iff $f$ is total and whenever $x_1 \sqsubseteq x_2$, $f(x_1) \sqsubseteq f(x_2)$.
The next result follows from~\cite{Tar1955}.

\begin{lemma}[Extremal fixpoint characterizations]\label{lem:tarski-knaster}
Let $(X, \sqsubseteq, \bigsqcup, \bigsqcap)$ be a complete lattice, and let $f \in X \to X$ be monotonic over it. Then $f$ has least and greatest fixpoints $\mu f, \nu f \in X$, respectively, characterized as follows.
\begin{align*}
    \mu f &= \bigsqcap \{x \in X \mid f(x) \sqsubseteq x \}\\
    \nu f &= \bigsqcup \{x \in X \mid x \sqsubseteq f(x) \}
\end{align*}
\end{lemma}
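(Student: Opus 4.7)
The plan is to follow the classical Knaster--Tarski argument. I will carry out the proof for $\mu f$ in detail; the claim for $\nu f$ then follows by a completely dual argument, with pre-fixpoints replaced by post-fixpoints and meets by joins throughout. Let $P = \{x \in X \mid f(x) \sqsubseteq x\}$ be the set of pre-fixpoints of $f$, and set $p = \bigsqcap P$, which exists because $\bigsqcap$ is total on the complete lattice (Definition~\ref{def:complete-lattice}).

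The first step is to show that $p$ is itself a pre-fixpoint, i.e.\ $f(p) \sqsubseteq p$. For any $x \in P$, $p \sqsubseteq x$ by definition of greatest lower bound, and monotonicity of $f$ then gives $f(p) \sqsubseteq f(x) \sqsubseteq x$; hence $f(p)$ is a lower bound of $P$ and therefore $f(p) \sqsubseteq p$. The second step is to promote this to equality. Applying monotonicity to $f(p) \sqsubseteq p$ yields $f(f(p)) \sqsubseteq f(p)$, which shows $f(p) \in P$, and so $p \sqsubseteq f(p)$ because $p$ is a lower bound of $P$. Combining both inclusions gives $f(p) = p$. The third step is minimality: any fixpoint $q$ of $f$ satisfies $f(q) = q \sqsubseteq q$ and therefore belongs to $P$, from which $p \sqsubseteq q$. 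Hence $p$ is the least fixpoint, matching the claimed expression for $\mu f$.

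For $\nu f$ I would repeat the argument dually with $Q = \{x \in X \mid x \sqsubseteq f(x)\}$ and $q = \bigsqcup Q$: for each $x \in Q$, the fact that $x \sqsubseteq q$ together with monotonicity gives $x \sqsubseteq f(x) \sqsubseteq f(q)$, so $f(q)$ is an upper bound of $Q$ and therefore $q \sqsubseteq f(q)$; a further application of $f$ shows $f(q) \in Q$, whence $f(q) \sqsubseteq q$; and any fixpoint of $f$ trivially lies in $Q$, so $q$ is the greatest.

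There is no substantive obstacle here; the proof is essentially definitional bookkeeping around the direction of the inequalities. The only care needed is to ensure at each step that the relevant meet or join exists, which is immediate from the totality of $\bigsqcup$ and $\bigsqcap$ in the definition of complete lattice. In particular, no well-ordering or ordinal-unfolding machinery is required, which is exactly what makes this lemma a convenient black-box starting point for the lattice-theoretic development in Section~\ref{sec:support-orderings}.
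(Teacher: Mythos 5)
Your proof is correct: it is the classical Knaster--Tarski argument, and the paper itself gives no proof of this lemma, simply citing Tarski (1955), whose argument is exactly the one you reproduce. Nothing is missing --- each step (the meet of the pre-fixpoints is a pre-fixpoint, hence a fixpoint, hence the least one, and dually for the join of post-fixpoints) is carried out correctly and uses only the totality of $\bigsqcap$ and $\bigsqcup$ and monotonicity of $f$.
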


\noindent
Elements $x \in X$ such that $f(x) \sqsubseteq x$ are sometimes called \emph{pre-fixpoints} of $f$, while those satisfying $x \sqsubseteq f(x)$ are referred to as \emph{post-fixpoints} of $f$.

In this paper we focus on specialized complete lattices called \emph{subset lattices}.

\begin{definition}[Subset lattice]\label{def:subset-lattice}
Let $S$ be a set.   Then the \emph{subset lattice generated by $S$} is the tuple $(2^S, \subseteq, \bigcup, \bigcap)$.
\end{definition}

\noindent
It is straightforward to establish for any $S$, the subset lattice generated by $S$ is indeed a complete lattice.

\subsection{Finite non-empty trees}\label{subsec:trees}

The proof objects considered later in this paper are finite trees whose nodes are labeled by logical sequents.  As we wish to reason about mathematical constructions on these proof objects we need formal accounts of such trees.

\begin{definition}[Finite non-empty unordered tree]~\label{def:unordered-tree}
    A \emph{finite non-empty unordered tree} is a triple $\tree{T} = (\node{N}, \node{r}, p)$, where:
    \begin{enumerate}
        \item $\node{N}$ is a finite, non-empty set of \emph{nodes};
        \item $\node{r} \in \node{N}$ is the \emph{root} node; and
        \item $p \in \node{N} \pto \node{N}$, the (partial) \emph{parent} function, satisfies: $p(\node{n}) \div$ iff $\node{n} = \node{r}$, and for all $\node{n} \in \node{N}$ there exists $i \geq 0$ such that $p^i(\node{n}) = \node{r}$.
    \end{enumerate}
\end{definition}

\noindent
Note that each non-root node $\node{n} \neq \node{r}$ has a parent node $p(\node{n}) \in \node{N}$.  If $p(\node{n}') = \node{n}$ then we call $\node{n}'$ a \emph{child} of $\node{n}$; we use $c(\node{n}) = \{ \node{n}' \in \node{N} \mid p(\node{n}') = \node{n} \}$ to denote all the children of $\node{n}$.  If $c(\node{n}) = \emptyset$ then $\node{n}$ is a \emph{leaf}; otherwise it is \emph{internal}.  We call node $\node{n}$ an \emph{ancestor} of node $\node{n}'$, or equivalently, $\node{n}'$ a \emph{descendant} of $\node{n}$, iff there exists an $i \geq 0$ such that $p^i(\node{n}') = \node{n}$; we also say in this case that there is a \emph{path from $\node{n}$ to $\node{n}'$}.  
We write $A(\node{n})$ and $D(\node{n})$ for the ancestors and descendants of $\node{n}$, respectively, and 
note that $\node{r} \in A(\node{n})$, $\node{n} \in D(\node{r})$, $\node{n} \in A(\node{n})$ and $\node{n} \in D(\node{n})$ for all $\node{n} \in \node{N}$.
We use $A_s(\node{n}) = A(\node{n}) \setminus \{\node{n}\}$ and $D_s(\node{n}) = D(\node{n}) \setminus \{\node{n}\}$ for the \emph{strict} ancestors and descendants of $\node{n}$.  
We also define the notions of \emph{depth}, $d(\node{n})$, and \emph{height}, $h(\node{n})$ of $\node{n}$ as the length of the unique path from the root $\node{r}$ to $\node{n}$, and the length of the longest path starting at $\node{n}$, respectively.  Specifically, $d(\node{n}) = i$ if $i$ is (the unique $i \in \nats$) such that $p^i(\node{n}) = \node{r}$, while
\[
h(\node{n}) = \max \{ i \mid \exists \node{n}' \colon \node{n}' \in \node{N} \colon p^i(\node{n}') = \node{n} \}.
\]
Note that for any leaf $\node{n}$, $h(\node{n}) = 0$.  We define $h(\tree{T})$ to be $h(\node{r})$.

Finite unordered trees also admit the following induction and co-induction principles.


\begin{principle}[Tree induction]
    Let $\tree{T} = (\node{N}, \node{r}, p)$ be a finite unordered tree, and let $Q$ be a predicate over $\node{N}$.  To prove that $Q(\node{n})$ holds for every $\node{n} \in \node{N}$, it suffices to prove $Q(\node{n})$ under the assumption that $Q(\node{n}')$ holds for every $\node{n}' \in D_s(\node{n})$.  The assumption is referred to as the \emph{induction hypothesis}.
\end{principle}

\begin{principle}[Tree co-induction\footnote{Treatments of co-induction in theoretical computer science tend to focus on its use in reasoning about co-algrebras.  The setting of finite trees in this paper is not explicitly co-algebraic, but the principle of co-induction as articulated in e.g.~\cite{jacobs2011introduction} is easily seen to correspond what is given here.}]
    Let $\tree{T} = (\node{N}, \node{r}, p)$ be a finite unordered tree, and let $Q$ be a predicate over $\node{N}$.  To prove that $Q(\node{n})$ holds for every $\node{n} \in \node{N}$, it suffices to prove that $Q(\node{n})$ holds under the assumption that $Q(\node{n}')$ holds for every $\node{n}' \in A_s(\node{n})$.  The assumption is referred to as the \emph{co-induction hypothesis}.
\end{principle}

\noindent
Tree induction is an instance of standard strong induction on the height of nodes, while tree co-induction is an instance of standard strong induction on the depth of nodes.  Note that the co-induction principle also applies to discrete rooted infinite trees~\cite{jech:1978,kunen:1980} as well as finite ones, although we do not use this capability in this paper.  We also note that in the case of co-induction, reasoning about the root node $\node{r}$ is handled differently than the other nodes, owing to the fact that $\node{r}$ is the only node with no strict ancestors.  Consequently, in the co-inductive arguments given in the paper, we will often single out a special \emph{root case} for dealing with this node, with reasoning about other nodes covered in a so-called \emph{co-induction step}.

Finite ordered trees can now be defined as follows.

\begin{definition}[Finite non-empty ordered tree]\label{def:ordered-tree}
    A \emph{finite non-empty ordered tree} is a tuple $\tree{T} = (\node{N}, \node{r}, p, cs)$, where:
    \begin{enumerate}
        \item
              $(\node{N}, \node{r}, p)$ is a finite non-empty unordered tree; and
        \item
              $cs \in \node{N} \rightarrow \node{N}^*$, the \emph{child ordering}, satisfies: $cs(\node{n})$ is an ordering of $c(\node{n})$ for all $\node{n} \in \node{N}$.
    \end{enumerate}
\end{definition}

The definition of ordered tree extends that of unordered tree by incorporating a function, $cs(\node{n})$, that returns the children of $\node{n}$ in left-to-right order.  Ordered trees inherit the definitions given for unordered trees (height, children, etc.), as well as the tree-induction and co-induction principles.  The notion of \emph{subtree} rooted at a node in a given ordered tree can now be defined.

\begin{definition}[Subtree]\label{def:subtree}
    Let $\tree{T} = (\node{N}, \node{r}, p, cs)$ be a finite ordered tree, and let $\node{n} \in \node{N}$.  Then $\tree{T}_\node{n}$, the \emph{subtree of $\tree{T}$ rooted at $\node{n}$}, is defined to be $\tree{T}_\node{n} = (D(\node{n}), \node{n}, p_\node{n}, cs)$, where $p_\node{n}$ satisfies:  $p_\node{n}(\node{n})\div$, and $p_\node{n}(\node{n}') = p(\node{n}')$ if $\node{n}' \neq \node{n}$.
\end{definition}

\noindent
It is straightforward to verify that $\tree{T}_\node{n}$ is itself a finite ordered tree.  We also use the notion of \emph{tree prefix} later.

\begin{definition}[Tree prefix]\label{def:tree-prefix}
Let $\tree{T}_1 = (\node{N}_1, \node{r}_1, p_1, cs_1)$, $\tree{T}_2 = (\node{N}_2, \node{r}_2, p_2, cs_2)$ be finite ordered trees.  Then $\tree{T}_1$ is a \emph{tree prefix} of $\tree{T}_2$, notation $\tree{T}_1 \preceq \tree{T}_2$, iff:
\begin{enumerate}
\item
    $\node{N}_1 \subseteq \node{N}_2$;
\item
    $\node{r}_1 = \node{r}_2$;
\item
    For all $\node{n} \in \node{N}_1$, $p_1(\node{n}) = p_2(\node{n})$; and
\item
    For all $\node{n} \in \node{N}_1$, either $cs_1(\node{n}) = cs_2(\node{n})$, or $cs_1(\node{n}) = \emptyL$.
\end{enumerate}
\end{definition}

\noindent
Intuitively, if $\tree{T}_1 \preceq \tree{T}_2$ then the two trees share the same root and tree structure, except that some internal nodes in $\tree{T}_2$ are leaves in $\tree{T}_1$.
Note that if $\tree{T}_1 \preceq \tree{T}_2$ and $\node{n} \in \node{N}_1$ is such that $p_2^i(\node{n}) \in \node{N}_2$, then $p_2^i(\node{n}) = p_1^i(\node{n}) \in \node{N}_1$; that is, if $\node{n}$ is a node in $\tree{T}_1$ then it has the same ancestors in $\tree{T}_1$ as in $\tree{T}_2$.  We can also specify a prefix of a given tree by giving the set of nodes in the tree that should be turned into leaves.

\begin{definition}[Tree-prefix generation]\label{def:tree-prefix-generation}
Let $\tree{T} = (\node{N}, \node{r}, p, cs)$ be a finite ordered tree, and let $\node{L} \subseteq \node{N}$.  Then $\tpre{\tree{T}}{\node{L}}$, the \emph{tree prefix of $\tree{T}$ generated by $\node{L}$}, is the tree $(\node{N}', \node{r}, p', cs')$ given as follows.
\begin{itemize}
    \item 
    $\node{N}' = \node{N} \setminus \left( \bigcup_{\node{l} \in \node{L}} D_s(\node{l}) \right)$.
    \item
    Let $\node{n}' \in \node{N}'$. Then $p'(\node{n}') = p(\node{n}')$.
    \item
    Let $\node{n}' \in \node{N}'$.  Then $cs'(\node{n}') = \emptyL$ if $D_s(\node{n}) \cap \node{N}' = \emptyset$, and $cs(\node{n}')$ otherwise.
\end{itemize}
\end{definition}

\noindent
The nodes of $\tpre{\tree{T}}{\node{L}}$ are the nodes of $\tree{T}$, with however the strict descendants of nodes in $\node{L}$ removed.
It is straightforward to verify that $\tpre{\tree{T}}{\node{L}}$ is a finite ordered tree if $\tree{T}$ is, and that $\tpre{\tree{T}}{\node{L}} \preceq \tree{T}$.
While $\node{L}$ can be thought of as specifying nodes in $\tree{T}$ that should be converted into leaves in $\tpre{\tree{T}}{\node{L}}$, this intuition is only partially accurate, since
it is not the case that every $\node{l} \in \node{L}$ is a node in $\tpre{\tree{T}}{\node{L}}$.  In particular, if $\node{l}$ has a strict ancestor in $\node{L}$ this would cause the removal of $\node{l}$ from $\tpre{\tree{T}}{\node{L}}$.
However, if $\node{l} \in \node{L}$ has no strict ancestors in $\node{L}$ then it is indeed a leaf in $\tpre{\tree{T}}{\node{L}}$.

Functions may be defined inductively on ordered trees.

\begin{definition}[Inductive tree functions]~\label{def:node-function}
    Let $\tree{T} = (\node{N},\node{r},p,cs)$ be an ordered tree and $V$ a set; call $\node{N} \to V$ the set of \emph{tree functions from $\tree{T}$ to $V$.}
    \begin{enumerate}
        \item \label{def:node-function-inductively-generated}
              Tree function $f$ is \emph{inductively generated from} $g \in \node{N} \times V^* \rightarrow V$ iff for all $\node{n} \in \node{N}$, $f(\node{n}) = g(\node{n}, f(cs(\node{n})))$.

        \item \label{def:node-function-inductive-update}
              Let $f$ be inductively generated from $g$.  
              Then the \emph{inductive update}, $$f \iupd{\node{n}'}{v},$$ of $f$ at $\node{n}'$ by $v$ is the tree function inductively generated by $g[(\node{n}', \cdot):= v]$, where $g[(\node{n}', \cdot) := v](\node{n}, \vec{w}) = v$ if $\node{n} = \node{n}'$ and $g(\node{n}, \vec{w})$ otherwise.
    \end{enumerate}
\end{definition}

\noindent
Intuitively, a function $f$ is inductively generated from $g$ if $g$ computes the ``single steps" in the recursive definition of $f$.  That is, $f$ uses $g$ to compute the value associated with any $\node{n}$ based on the results $f$ returns for the children of $\node{n}$.  
Operation $f \iupd{\node{n}'}{v}$ then specifies a means for altering the values inductively generated for $f$:  not only does $f \iupd{\node{n}'}{v}$ change the value returned for $\node{n}$, but it also potentially changes the values for ancestors of $\node{n}$ as well.  Inductive updating can be generalized to $f \iupd{\node{n}_1 \cdots \node{n}_j}{v_1 \cdots v_j}$ in the obvious manner, where it is assumed that $\node{n}_1 \cdots \node{n}_j$ is duplicate-free.

We close this section with a lemma about inductively updated functions.  It in essence asserts that updates involving only ancestors of a node $\node{n}$ do not affect the value associated with $\node{n}$.

\begin{lemma}[Inductive update correspondence]\label{lem:inductive-update-correspondence}
    Let $\tree{T} = (\node{N}, \node{r}, p, cs)$ be a finite ordered tree, let $f \in \node{N} \rightarrow V$ be inductively generated from $g$, let $\vec{\node{n}} \in \node{N}^*$ be duplicate-free, and let $\vec{v} \in V^*$ be such that $|\vec{\node{n}}| = |\vec{v}|$.  Then for every $\node{n}' \in \node{N}$ such that $D(\node{n}') \cap \set(\vec{\node{n}}) = \emptyset$, $f\iupd{\vec{\node{n}}}{\vec{v}}(\node{n}') = f(\node{n}')$.
\end{lemma}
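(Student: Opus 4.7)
The plan is to prove the lemma by tree induction on $\node{n}'$ using the tree-induction principle stated earlier. Fix a duplicate-free $\vec{\node{n}}$ and $\vec{v}$ with $|\vec{\node{n}}| = |\vec{v}|$, and let $g' = g[(\vec{\node{n}},\cdot) := \vec{v}]$ denote the modified generator, so that $f\iupd{\vec{\node{n}}}{\vec{v}}$ is by definition the tree function inductively generated from $g'$. The induction hypothesis for a node $\node{n}'$ reads: for every strict descendant $\node{m} \in D_s(\node{n}')$ with $D(\node{m}) \cap \set(\vec{\node{n}}) = \emptyset$, we have $f\iupd{\vec{\node{n}}}{\vec{v}}(\node{m}) = f(\node{m})$.

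Now consider $\node{n}'$ with $D(\node{n}') \cap \set(\vec{\node{n}}) = \emptyset$. First, since $\node{n}' \in D(\node{n}')$, we get $\node{n}' \notin \set(\vec{\node{n}})$, so by the definition of $g'$ we have $g'(\node{n}', \vec{w}) = g(\node{n}', \vec{w})$ for every $\vec{w} \in V^*$. Second, for every child $\node{c}$ appearing in $cs(\node{n}')$, transitivity of ancestry yields $D(\node{c}) \subseteq D_s(\node{n}') \subseteq D(\node{n}')$, whence $D(\node{c}) \cap \set(\vec{\node{n}}) = \emptyset$; therefore the induction hypothesis applies to $\node{c}$, giving $f\iupd{\vec{\node{n}}}{\vec{v}}(\node{c}) = f(\node{c})$. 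Lifting this componentwise along the sequence $cs(\node{n}')$ yields $f\iupd{\vec{\node{n}}}{\vec{v}}(cs(\node{n}')) = f(cs(\node{n}'))$.

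Combining these two observations, we compute
\[
f\iupd{\vec{\node{n}}}{\vec{v}}(\node{n}')
 = g'\bigl(\node{n}',\, f\iupd{\vec{\node{n}}}{\vec{v}}(cs(\node{n}'))\bigr)
 = g\bigl(\node{n}',\, f(cs(\node{n}'))\bigr)
 = f(\node{n}'),
\]
where the first equality uses that $f\iupd{\vec{\node{n}}}{\vec{v}}$ is inductively generated from $g'$, the second uses $\node{n}' \notin \set(\vec{\node{n}})$ together with the induction hypothesis on children, and the third is the inductive generation of $f$ from $g$. This completes the induction step.

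The argument is essentially bookkeeping; the only subtle point is the generalization from a single update $\iupd{\node{n}'}{v}$ to the duplicate-free multi-update $\iupd{\vec{\node{n}}}{\vec{v}}$, which requires observing that the modified generator $g[(\vec{\node{n}},\cdot) := \vec{v}]$ behaves identically to $g$ on any first argument not appearing in $\set(\vec{\node{n}})$. The hypothesis $D(\node{n}') \cap \set(\vec{\node{n}}) = \emptyset$ is precisely what guarantees both that $\node{n}'$ itself escapes the update and that the hypothesis propagates unchanged to every descendant invoked by the induction hypothesis, so no further complication arises.
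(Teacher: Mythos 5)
Your proof is correct and follows essentially the same route as the paper's: tree induction, noting that $\node{n}' \in D(\node{n}')$ forces $\node{n}' \notin \set(\vec{\node{n}})$ so the modified generator agrees with $g$ at $\node{n}'$, that the disjointness hypothesis is inherited by children, and then the same three-step equational chain. No gaps.
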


\begin{proof}
Fix $T, f, g, \vec{\node{n}} = \node{n}_1 \cdots \node{n}_j$ and $\vec{v} = v_1 \cdots v_j$ as specified in the statement of the lemma.  
The proof proceeds by tree induction.  So let $\node{n}' \in \node{N}$ be such that $D(\node{n}') \cap \{\node{n}_1, \ldots \node{n}_j\} = \emptyset$.
From the definition of $D(\cdot)$ we know that $\node{n}' \in D(\node{n}')$ and that each $\node{n}'' \in c(\node{n}')$ satisfies $D(\node{n}'') \subseteq D(\node{n}')$.  Therefore $\node{n}' \not\in \{\node{n}_1, \ldots, \node{n}_j\}$ and $D(\node{n}'') \cap \{\node{n}_1, \ldots, \node{n}_j\} = \emptyset$ for every child $\node{n}''$ of $\node{n}'$.   
The induction hypothesis guarantees that for every $\node{n}'' \in c(\node{n}')$, $f \iupd{\vec{\node{n}}}{\vec{v}}(\node{n}'') = f(\node{n}'')$, and thus that $f \iupd{\vec{\node{n}}}{\vec{v}}(cs(\node{n}')) = f(cs(\node{n}'))$.  We now reason as follows.
    \begin{align*}
    & f \iupd{\vec{\node{n}}}{\vec{v}}(\node{n}')
    &&
    \\
    &= g[\vec{\node{n}}:=\vec{v}](\node{n}', f\iupd{\vec{\node{n}}}{\vec{v}}(cs(\node{n}')))
    && \text{Definition of $f\iupd{\vec{\node{n}}}{\vec{v}}(\node{n}')$}
    \\
    &= g[\vec{\node{n}}:=\vec{v}](\node{n}', f(cs(\node{n}')))
    && \text{Induction hypothesis}
    \\
    &= g(\node{n}', f(cs(\node{n}')))
    && \text{Definition of $g[\vec{\node{n}} := \vec{v}]$}
    \\
    &= f(\node{n}')
    && \text{$f$ inductively generated}
    \end{align*}
\qedhere   
\end{proof}

\section{Support orderings and fixpoints}\label{sec:support-orderings}
A key contribution of this paper is the formulation of a novel characterization of least fixpoints for monotonic functions over subset lattices.  This characterization may be seen as constructive in a precise sense, and relies on the notion of \emph{support ordering}.

\begin{definition}[Support ordering]\label{def:support-ordering}
    Let $S$ be a set, let $(2^S, \subseteq, \bigcup, \bigcap)$ be the subset lattice generated by $S$, and let $f \in 2^S \rightarrow 2^S$ be a monotonic function over this lattice.  Then $(X, \prec)$ is a \emph{support ordering} for $f$ iff the following hold.
    \begin{enumerate}
        \item 
        $X \subseteq S$ and ${\prec} \subseteq X \times X $ is a binary relation on $X$.
        \item 
        For all $x \in X$, $x \in f(\preimg{{\prec}}{x})$.\footnote{Recall that $\preimg{{\prec}}{x} = \{x' \in X \mid x' \prec x\}$.}
    \end{enumerate}
\end{definition}
We call a support ordering $(X, \prec)$ for monotonic $f$ \emph{well-founded} if $\prec$ is well-founded and
$X$ is \emph{\mbox{(well-)}supported} for $f$ iff there is a (well-founded) binary relation ${\prec} \subseteq X \times X$ such that $(X, \prec)$ is a support ordering for $f$.

Using support orderings, we can give \emph{constructive} accounts of both the least and greatest fixpoints of monotonic functions over subset lattices. We call the characterization constructive in the case of $\mu f$ because, to show that $s \in \mu f$, it suffices to construct a well-founded support ordering $(X, \prec)$ for $f$ such that $s \in X$.  We also establish that $s \in \nu f$ exactly when there is a (not necessarily well-founded) support ordering $(X, \prec)$ such that $s \in X$.

The support-ordering characterization of the least fixpoint $\mu f$ of monotonic function $f$ over a given subset lattice relies on the following lemma, which asserts that the union of a collection of well-supported subsets of $S$ is also well-supported.

\begin{lemma}[Unions of well-supported sets]\label{lem:well-supported-sets}
    Let $S$ be a set, and let $f \in 2^S \rightarrow 2^S$ be monotonic over subset lattice $(2^S, \subseteq, \bigcup, \bigcap)$.  Also let $\mathcal{W} \subseteq 2^S$ be a set of well-supported sets for $f$.  Then $\bigcup\mathcal{W}$ is well-supported for $f$.
\end{lemma}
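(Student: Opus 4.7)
The plan is to construct, for the union $\bigcup\mathcal{W}$, a single well-founded support ordering for $f$, built by stitching together the witnessing orderings for each $W \in \mathcal{W}$ using a well-ordering of $\mathcal{W}$ itself. Concretely, first I would fix, for each $W \in \mathcal{W}$, a well-founded relation ${\prec_W} \subseteq W \times W$ such that $(W, \prec_W)$ is a support ordering for $f$; such a $\prec_W$ exists by the assumption that each $W$ is well-supported. Next, invoking the Well-Ordering Theorem (available since the paper assumes the Axiom of Choice), I would fix a well-ordering $<_\mathcal{W}$ on $\mathcal{W}$. For each $x \in \bigcup\mathcal{W}$, let $W_x$ denote the $<_\mathcal{W}$-minimum element of $\{W \in \mathcal{W} \mid x \in W\}$, which exists by Lemma~\ref{lem:well-ordering-minimum}.

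With this canonical choice in hand, I would define ${\prec} \subseteq \bigcup\mathcal{W} \times \bigcup\mathcal{W}$ by: $x \prec y$ iff $x \prec_{W_y} y$. Since $\prec_{W_y}$ is a relation on $W_y \times W_y$, any $x \prec y$ automatically lies in $W_y$, and $\preimg{{\prec}}{y} = \preimg{{\prec_{W_y}}}{y}$. Because $(W_y, \prec_{W_y})$ is a support ordering for $f$, we get $y \in f(\preimg{{\prec_{W_y}}}{y}) = f(\preimg{{\prec}}{y})$, which is precisely the second clause of Definition~\ref{def:support-ordering}.

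The key remaining step, and the one I expect to be the main obstacle, is verifying that $\prec$ is well-founded. By Lemma~\ref{lem:well-foundedness-chains}, it suffices to rule out an infinite descending chain $x_1 \succ x_2 \succ \cdots$. From $x_{i+1} \prec_{W_{x_i}} x_i$ one obtains $x_{i+1} \in W_{x_i}$, and hence by minimality of $W_{x_{i+1}}$ that $W_{x_{i+1}} \leq_\mathcal{W} W_{x_i}$. The sequence $W_{x_1}, W_{x_2}, \ldots$ is thus non-increasing in the well-ordering $<_\mathcal{W}$; were it not eventually constant, it would yield an infinite strictly descending chain in $<_\mathcal{W}$, contradicting well-foundedness (via Lemma~\ref{lem:well-foundedness-chains} again). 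Fixing $N$ such that $W_{x_i} = W_{x_N}$ for all $i \geq N$, the tail $x_N, x_{N+1}, \ldots$ is then an infinite descending chain in $\prec_{W_{x_N}}$, contradicting well-foundedness of $\prec_{W_{x_N}}$. The subtle point is that picking $W_y$ canonically as the $<_\mathcal{W}$-least witness is what forces the sequence $\{W_{x_i}\}$ to be monotone; a non-canonical choice would leave room for oscillation among the witnesses and destroy the stabilization argument.
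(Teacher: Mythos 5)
Your proof is correct, and the relation you construct is in fact the same one the paper builds: the paper enumerates $\mathcal{W}$ as $h(\alpha)$ for $\alpha$ below an ordinal $\beta$, places each $x$ into the set $X'_\alpha$ of elements first appearing at stage $\alpha$, and gives $x$ the predecessor set $\{x' \mid (x',x) \in o(\alpha)\}$ — which is exactly your $\preimg{{\prec_{W_x}}}{x}$ once one identifies the ordinal enumeration with your well-ordering $<_\mathcal{W}$ and the first stage containing $x$ with your canonical witness $W_x$. Where you genuinely diverge is in the verification. The paper defines the relation by transfinite recursion and proves both the support-ordering property and well-foundedness by transfinite induction, the latter via a three-way case split on whether a descending chain lives in the earlier stages, in $o(\alpha)$, or is a mixture of the two. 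Your argument defines the relation in one shot and replaces the transfinite induction for well-foundedness with a stabilization argument: the witnesses $W_{x_i}$ along a descending chain form a non-increasing sequence in the well-ordering $<_\mathcal{W}$, hence are eventually constant, after which the tail of the chain sits inside a single well-founded $\prec_{W_{x_N}}$. This is cleaner and more elementary, and your closing remark correctly identifies the one place where care is needed — the canonical (least-witness) choice of $W_y$ is what makes the witness sequence monotone; the paper enforces the same discipline through the sets $X'_\alpha$.
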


\remove{
\begin{proofsketch}
To prove that $\bigcup \mathcal{W}$ is well-supported for $f$ we note that there is an ordinal $\beta$ that is in bijective correspondence with $\mathcal{W}$.  We then define ordinal-indexed sequences, $X_\alpha$ and $\prec_\alpha$ ($\alpha < \beta$), with the property that $(X_\alpha, \prec_\alpha)$ is a support ordering for $f$, $\prec_\alpha$ is well-founded, and $\bigcup \mathcal{W} = \bigcup_{\alpha < \beta} X_\alpha$.  We subsequently show that ${\prec} = \bigcup_{\alpha < \beta} {\prec_\alpha}$ is well-founded and that $(\bigcup \mathcal{W}, \prec)$ is a support ordering for $f$, thereby establishing that $\bigcup \mathcal{W}$ is well-supported for $f$.  Details may be found in the appendix. \qedhere
\end{proofsketch}
}

\begin{proof}
This proof uses constructions over the ordinals, for which we use the von Neumann definition~\cite{ciesielski1997set}:  a well-ordered set $\alpha$ is a von Neumann ordinal iff it contains all ordinals preceding $\alpha$.  The well-ordering on von Neumann ordinals is often written $<$ and has the property that $\alpha < \beta$ iff $\alpha \in \beta$, which in turn is true iff $\alpha \subsetneq \beta$.  We also recall that in this paper, we assume the Axiom of Choice~\cite{jech2008axiom}.

To prove that $\bigcup \mathcal{W}$ is well-supported for $f$ we note that there is an ordinal $\beta$ that is in bijective correspondence with $\mathcal{W}$.  We then define ordinal-indexed sequences, $X_\alpha$ and $\prec_\alpha$ ($\alpha < \beta$), with the property that $(X_\alpha, \prec_\alpha)$ is a support ordering for $f$, $\prec_\alpha$ is well-founded, and $\bigcup \mathcal{W} = \bigcup_{\alpha < \beta} X_\alpha$.  We subsequently show that ${\prec} = \bigcup_{\alpha < \beta} {\prec_\alpha}$ is well-founded and that $(\bigcup \mathcal{W}, \prec)$ is a support ordering for $f$, thereby establishing that $\bigcup \mathcal{W}$ is well-supported for $f$.

To this end, let $|\mathcal{W}| = \beta$ be the cardinality of $\mathcal{W}$ (i.e.\ $\beta$ is the least ordinal in bijective correspondence to $\mathcal{W}$), and let $h \in \beta \rightarrow \mathcal{W}$ be a bijection.  It follows that $\mathcal{W} = \{h(\alpha) \mid \alpha < \beta\}$.  Also let $o \in \beta \rightarrow 2^{S \times S}$ be such that for any $\alpha < \beta$, $o(\alpha) \subseteq h(\alpha) \times h(\alpha)$ is a well-founded binary relation with the property that $(h(\alpha), o(\alpha))$ is a support ordering for $f$.

We now define the following ordinal-indexed sequences $X_\alpha$, $X'_\alpha$ and $\prec_\alpha$, $\alpha < \beta$, of subsets of $S$ and binary relations on $X_\alpha$, respectively, as follows, using transfinite recursion.
\begin{align*}
    X_\alpha
    &= \bigl(\bigcup_{\alpha' < \alpha} X_{\alpha'}\bigr) \cup h(\alpha)
    \\
    X'_{\alpha}
    &= h(\alpha) \setminus \bigl( \bigcup_{\alpha' < \alpha} X_{\alpha'} \bigr)
    \\
    \prec_\alpha
    &=
    \bigl(\bigcup_{\alpha' < \alpha} \prec_{\alpha'}\bigr) \cup
    \bigl\{(x',x) \in o(\alpha) \mid x \in X'_\alpha \bigr\}
\end{align*}
Note that $X_\alpha = (\bigcup_{\alpha' < \alpha} X_{\alpha'}) \cup X'_\alpha$ and that $(\bigcup_{\alpha' < \alpha} X_{\alpha'}) \cap X'_\alpha = \emptyset$.
Based on these definitions, it is easy to see that if $\alpha' < \alpha$ then $X_{\alpha'} \subseteq X_\alpha$ and ${\prec_{\alpha'}} \subseteq {\prec_{\alpha}}$.
We now prove two properties of $X_\alpha$ and $\prec_\alpha$ that will be used in what follows.

\begin{enumerate}[left = \parindent, label = P\arabic*., ref = P\arabic*]
    \item\label{item:support-ordering}
          For all $\alpha < \beta$, $(X_\alpha, \prec_\alpha)$ is a support ordering over $f$.

    \item\label{item:well-founded}
          For all $\alpha < \beta$, $\prec_\alpha$ is well-founded.
\end{enumerate}
To prove \ref{item:support-ordering} we use transfinite induction.  So fix $\alpha < \beta$; the induction hypothesis states that for any $\alpha' < \alpha$, $(X_{\alpha'}, \prec_{\alpha'})$ is a support ordering over $f$.  Now consider $x \in X_\alpha$; we must show that $x \in f(\preimg{\prec_\alpha}{x})$.  There are two cases.  In the first, $x \in \bigcup_{\alpha' < \alpha} X_{\alpha'}$, which means $x \in X_{\alpha'}$ for some $\alpha' < \alpha$.  In this case the induction hypothesis guarantees that $(X_{\alpha'}, \prec_{\alpha'})$ is a support ordering, meaning $x \in f(\preimg{\prec_{\alpha'}}{x})$.  Since ${\prec_{\alpha'}} \subseteq {\prec_{\alpha}}$ it follows that
  \[
      {\preimg{{\prec_{\alpha'}}}{x}} \subseteq {\preimg{{\prec_{\alpha}}}{x}},
  \]
  and since $f$ is monotonic and $x \in f(\preimg{\prec_{\alpha'}}{x})$, $x \in f(\preimg{\prec_{\alpha}}{x})$.  In the second case, $x \in X'_\alpha$.  Here it is easy to see that
  \[
      \preimg{{\prec_\alpha}}{x} = \{x' \mid (x',x) \in o(\alpha)\},
  \]
  and since $(h(\alpha), o(\alpha))$ is a support ordering, it immediately follows that $x \in f(\{x' \mid (x',x) \in o(\alpha)\}) = f(\preimg{\prec_\alpha}{x})$.  \ref{item:support-ordering} is thus proved.

To prove \ref{item:well-founded} we again use transfinite induction.  So fix $\alpha < \beta$.  The induction hypothesis states that for all $\alpha' < \alpha$, $\prec_{\alpha'}$ is well-founded; we must show that $\prec_\alpha$ is as well.  So consider a descending chain $C = \cdots \prec_\alpha x_2 \prec_\alpha x_1$; it suffices to show that $C$ must be finite.  There are three cases to consider.

    \begin{description}
        \item[$C$ is a chain in $\bigcup_{\alpha' < \alpha} \prec_{\alpha'}$.]
              In this case $x_1 \in \bigcup_{\alpha' < \alpha} X_{\alpha'}$, meaning there is an $\alpha' < \alpha$ such that $x_1 \in X_{\alpha'}$.  Since $\alpha_1 < \alpha_2$ implies ${\prec_{\alpha_1}} \subseteq {\prec_{\alpha_2}}$, it follows that each $x_i \in X_{\alpha'}$, each $x_{i+1} \prec_{\alpha'} x_i$, and that $C$ is thus a descending chain in $\prec_{\alpha'}$.  Since the induction hypothesis guarantees that $\prec_{\alpha'}$ is well-founded, $C$ must be finite.

        \item[$C$ is a chain in $o(\alpha)$.]
              In this case, since $o(\alpha)$ is well-founded, $C$ must be finite.

        \item[$C$ is a mixture of $\bigcup_{\alpha' < \alpha} \prec_{\alpha'}$ and $o(\alpha)$.]
              In this case, from the definition of $C$ and $\prec_\alpha$ it follows that $C$ can be split into two pieces:
              \begin{enumerate}
                  \item
                        an initial segment $x_{i} \prec_\alpha \cdots \prec_{\alpha} x_1$, where $i \geq 1$, $x_i \in \bigcup_{\alpha' < \alpha} X_{\alpha'}$, and for all $i > j \geq 1$, $x_{j} \in X'_\alpha$ and $(x_{j+1}, x_j) \in o(\alpha)$; and
                  \item
                        a segment $\cdots \prec_\alpha x_{i+1} \prec_\alpha x_i$, where for all $j \geq i$, $(x_{j+1}, x_j) \in \bigcup_{\alpha' < \alpha} \prec_{\alpha'}$.
              \end{enumerate}
              The previous arguments establish that each of these sub-chains must be finite, and thus $C$ is finite as well.
    \end{description}
    To finish the proof of the lemma, we note that the following hold, using arguments given above.

    \begin{itemize}
        \item $(\bigcup_{\alpha < \beta} X_\alpha, \bigcup_{\alpha < \beta} \prec_\alpha)$ is a support ordering.

        \item $\bigcup_{\alpha < \beta} \prec_\alpha$ is well-founded.

        \item $\bigcup\mathcal{W} = \bigcup_{\alpha < \beta} X_\alpha$.
    \end{itemize}
    From the definitions it therefore follows that $\bigcup \mathcal{W}$ is well-supported.\qedhere
\end{proof}

We now have the following.

\begin{theorem}\label{thm:well-supported}
Let $S$ be a set, and let $f \in 2^S \rightarrow 2^S$ be a monotonic function over the subset lattice $(2^S, \subseteq, \bigcup, \bigcap)$.
\begin{enumerate}
    \item\label{subthm:well-supported-1}
    For all $X \subseteq S$, if $X$ is well-supported for $f$ then $X \subseteq \mu f$.
    \item\label{subthm:well-supported-2}
    Let $\mathcal{X} = \{X \subseteq S \mid X \textnormal{ is well-supported for } f\}$.  Then $f(\bigcup\mathcal{X}) = \bigcup\mathcal{X}$.
\end{enumerate}
\end{theorem}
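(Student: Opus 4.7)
The plan is to handle the two parts separately. For part~(\ref{subthm:well-supported-1}), I would argue by well-founded induction. Fix a well-founded ${\prec} \subseteq X \times X$ witnessing that $X$ is well-supported for $f$. To show $X \subseteq \mu f$, take $x \in X$ arbitrary and assume as induction hypothesis that $\preimg{\prec}{x} \subseteq \mu f$. Monotonicity of $f$ yields $f(\preimg{\prec}{x}) \subseteq f(\mu f) = \mu f$, where the final equality is Lemma~\ref{lem:tarski-knaster}; since the support-ordering condition gives $x \in f(\preimg{\prec}{x})$, we conclude $x \in \mu f$.

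For part~(\ref{subthm:well-supported-2}), let $U = \bigcup \mathcal{X}$. By Lemma~\ref{lem:well-supported-sets}, $U$ is itself well-supported, so I fix a well-founded support ordering $(U, \prec_U)$ for it. The inclusion $U \subseteq f(U)$ is then immediate: each $x \in U$ satisfies $x \in f(\preimg{\prec_U}{x}) \subseteq f(U)$ by monotonicity of $f$.

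The harder half is $f(U) \subseteq U$. My plan is to show that $f(U)$ is itself well-supported, so that $f(U) \in \mathcal{X}$ and therefore $f(U) \subseteq \bigcup\mathcal{X} = U$. I define a relation on $f(U)$ by
\[
{\prec'} \;=\; {\prec_U} \,\cup\, \bigl\{ (x, y) \mid x \in U,\ y \in f(U) \setminus U \bigr\},
\]
which is well-typed since $U \subseteq f(U)$. Every ``new'' element $y \in f(U) \setminus U$ is thereby placed strictly above all of $U$, so $\preimg{\prec'}{y} = U$ and $y \in f(U) = f(\preimg{\prec'}{y})$. For $x \in U$, no new edges target $x$, so $\preimg{\prec'}{x} = \preimg{\prec_U}{x}$ and the support condition is inherited from $\prec_U$. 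Well-foundedness of $\prec'$ survives because any $\prec'$-descending chain starting at some $y \in f(U) \setminus U$ must immediately step into $U$, after which it continues entirely within the well-founded $\prec_U$; by Lemma~\ref{lem:well-foundedness-chains} this suffices.

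The main obstacle, I expect, is arranging this construction for the hard inclusion: one needs a genuine well-founded support ordering on $f(U)$ itself, not merely an element-by-element witness. The key idea is that stacking $f(U) \setminus U$ on top of $\prec_U$ both preserves well-foundedness and equips each new element with $U$ as its support set, which is enough because $y \in f(U)$ holds by construction.
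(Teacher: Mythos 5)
Your proposal is correct and follows essentially the same route as the paper: well-founded induction on $\prec$ for part~1, and for part~2 the same appeal to Lemma~\ref{lem:well-supported-sets} plus the same idea of stacking new elements of $f(\bigcup\mathcal{X})$ above the existing support ordering with all of $\bigcup\mathcal{X}$ as their support set. The only cosmetic difference is that the paper adds a single putative counterexample element and derives a contradiction, whereas you add all of $f(U)\setminus U$ at once and conclude directly that $f(U)$ is well-supported.
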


\begin{proof}
To prove the statement~\ref{subthm:well-supported-1}, suppose $X \subseteq S$ is well-supported for $f$, and let ${\prec} \subseteq X \times X$ be a well-founded relation such that $(X, {\prec})$ is a support ordering for $f$.
Also recall that $\mu f = \bigcap \{Y \subseteq S \mid f(Y) \subseteq Y\}$.  To show that $X \subseteq \mu f$ it suffices to show that $X \subseteq Y$ for all $Y$ such that $f(Y) \subseteq Y$.  So fix such a $Y$; we prove that for all $x \in X$, $x \in Y$ using well-founded induction on $\prec$.  So fix $x \in X$.  The induction hypothesis states that for all $x' \prec x$, $x' \in Y$.  By definition of support ordering we know that $x \in f(\preimg{{\prec}}{x})$; the induction hypothesis also guarantees that $\preimg{{\prec}}{x} \subseteq Y$.  Since $f$ is monotonic, $f(\preimg{{\prec}}{x}) \subseteq f(Y)$, and thus we have
$
    x \in f(\preimg{{\prec}}{x}) \subseteq f(Y) \subseteq Y.
$
Hence $x \in Y$.

As for statement~\ref{subthm:well-supported-2},
Lemma~\ref{lem:well-supported-sets} guarantees that $\bigcup\mathcal{X}$ is well-supported; let $\prec$ be the well-founded relation over $\bigcup \mathcal{X}$ such that $(\bigcup \mathcal{X}, \prec)$ is a support ordering for $f$.  It suffices to show that $f(\bigcup\mathcal{X}) \subseteq \bigcup\mathcal{X}$ and $\bigcup\mathcal{X} \subseteq f(\bigcup\mathcal{X})$.  For the former, by way of contradiction assume $x$ is such that $x \in f(\bigcup\mathcal{X})$ and $x \not\in \bigcup\mathcal{X}$.  Now consider the relation $\prec'$ on $(\bigcup\mathcal{X}) \cup \{x\}$ given by:  ${\prec'} = {\prec} \cup \{(x', x) \mid x' \in \bigcup\mathcal{X}\}$.  It is easy to see that $\prec'$ is well-founded, and that $((\bigcup\mathcal{X}) \cup \{x\}, \prec')$ is a support ordering for $f$.  This implies $(\bigcup\mathcal{X}) \cup \{x\} \subseteq \bigcup \mathcal{X}$, which contradicts the assumption that $x \not\in \bigcup\mathcal{X}$.
To see that $\bigcup\mathcal{X} \subseteq f(\bigcup\mathcal{X})$, note that, since $(\bigcup \mathcal{X}, \prec)$ is a support ordering, $x \in f(\preimg{{\prec}}{x})$ and $\preimg{{\prec}}{x} \subseteq \bigcup \mathcal{X}$ for all $x \in \bigcup\mathcal{X}$.  Since $f$ is monotonic we have
$
x \in f(\preimg{{\prec}}{x}) \subseteq f(\bigcup \mathcal{X}) 
$
for all $x \in \bigcup \mathcal{X}$.\qedhere
\end{proof}

The following corollary is immediate.
\begin{corollary}\label{cor:least-fixpoint}
    Let $f$ be a monotonic function over the subset lattice generated by $S$.  Then
    $
        \mu f = \bigcup \{X \in 2^S \mid X \textnormal{ is well-supported for $f$} \}.
    $
\end{corollary}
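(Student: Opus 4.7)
The plan is to combine the two parts of Theorem~\ref{thm:well-supported} directly, with Lemma~\ref{lem:tarski-knaster} providing the final link. As the excerpt already flags, the corollary is immediate, so the proof sketch is short.

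First, I would introduce the abbreviation $\mathcal{X} = \{X \in 2^S \mid X \text{ is well-supported for } f\}$, matching the notation of Theorem~\ref{thm:well-supported}, and split the desired equality into the two inclusions $\bigcup \mathcal{X} \subseteq \mu f$ and $\mu f \subseteq \bigcup \mathcal{X}$.

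For the first inclusion, I would invoke Theorem~\ref{thm:well-supported}(\ref{subthm:well-supported-1}): every $X \in \mathcal{X}$ satisfies $X \subseteq \mu f$, so taking the union over all such $X$ yields $\bigcup \mathcal{X} \subseteq \mu f$.

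For the reverse inclusion, I would invoke Theorem~\ref{thm:well-supported}(\ref{subthm:well-supported-2}): $f(\bigcup \mathcal{X}) = \bigcup \mathcal{X}$, so in particular $f(\bigcup \mathcal{X}) \subseteq \bigcup \mathcal{X}$, i.e.\ $\bigcup \mathcal{X}$ is a pre-fixpoint of $f$. Lemma~\ref{lem:tarski-knaster}, which characterizes $\mu f$ as the meet of all pre-fixpoints, then gives $\mu f \subseteq \bigcup \mathcal{X}$. Combining the two inclusions closes the argument. There is no real obstacle here beyond a careful citation of the two preceding results; all the substantive work has already been done in the proofs of Lemma~\ref{lem:well-supported-sets} and Theorem~\ref{thm:well-supported}.
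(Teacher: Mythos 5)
Your proposal is correct and follows essentially the same route as the paper: both inclusions come from the two parts of Theorem~\ref{thm:well-supported}, with the reverse inclusion obtained because $\bigcup\mathcal{X}$ is a (pre-)fixpoint of $f$ and $\mu f$ is the least such by Lemma~\ref{lem:tarski-knaster}. You merely spell out the last step slightly more explicitly than the paper's ``it is immediate.''
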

\begin{proof}
    Let $\mathcal{X} = \bigcup \{X \mid X \textnormal{ is well-supported for } f\}$.  Theorem~\ref{thm:well-supported} guarantees that $\bigcup\mathcal{X} \subseteq \mu f$ and that $\bigcup\mathcal{X}$ is a fixpoint of $f$.  It is immediate that $\bigcup\mathcal{X} = \mu f$.\qedhere
\end{proof}

This corollary may be seen as providing a \emph{constructive} characterization of $\mu f$ in the following sense:  to establish that $x \in \mu f$ it suffices to construct a well-supported set $X$ for $f$ such that $x \in X$.

Support orderings can also be used to characterize $\nu f$, the greatest fixpoint of monotonic function $f$.  This characterization relies on the following observations.
\begin{theorem}\label{thm:supported}
Let $S$ be a set, let $f \in 2^S \to 2^S$ be a monotonic function over the subset lattice $(2^S, \subseteq, \bigcup, \bigcap)$, and let $X \subseteq S$. Then $X$ is supported for $f$ if, and only if, $X \subseteq f(X)$.
\end{theorem}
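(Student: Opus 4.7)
The plan is to prove the biconditional by separate arguments for the two directions, both of which I expect to be short.

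For the forward direction, I assume $X$ is supported for $f$ and unpack the definition to obtain a relation ${\prec} \subseteq X \times X$ such that $(X, \prec)$ is a support ordering. Fix any $x \in X$. By the second clause of the support-ordering definition, $x \in f(\preimg{\prec}{x})$. Since $\preimg{\prec}{x} \subseteq X$, monotonicity of $f$ gives $f(\preimg{\prec}{x}) \subseteq f(X)$, hence $x \in f(X)$. As $x$ was arbitrary, this yields $X \subseteq f(X)$.

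For the backward direction, I assume $X \subseteq f(X)$ and exhibit an explicit support ordering. The natural candidate is the universal relation on $X$, i.e.\ ${\prec} \defeq U_X = X \times X$ (cf.\ Definition~\ref{def:identity-universal-relations}). For every $x \in X$ we have $\preimg{\prec}{x} = \{x' \in X \mid x' \prec x\} = X$, so by the hypothesis $x \in f(X) = f(\preimg{\prec}{x})$. Thus $(X, U_X)$ is a support ordering for $f$, so $X$ is supported.

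There is really no obstacle here: the lemma is a direct consequence of unpacking definitions, and the key observation is that the supportedness notion (without the well-foundedness requirement) is so weak that the universal relation on $X$ always suffices as a witness. This is precisely what distinguishes the characterization of post-fixpoints (and hence $\nu f$, via Lemma~\ref{lem:tarski-knaster}) in Theorem~\ref{thm:supported} from the constructive characterization of $\mu f$ in Corollary~\ref{cor:least-fixpoint}, where well-foundedness of the support ordering is essential and requires the substantially deeper transfinite construction of Lemma~\ref{lem:well-supported-sets}.
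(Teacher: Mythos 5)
Your proof is correct and follows essentially the same argument as the paper: the forward direction uses monotonicity applied to $\preimg{\prec}{x} \subseteq X$, and the backward direction witnesses supportedness with the universal relation $U_X$. No differences worth noting.
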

\begin{proof}
    Let $X \subseteq S$ and $f$ be given. We prove both implications separately.
    \begin{itemize}
        \item[$\Rightarrow$] Assume $X$ is supported for $f$, so there is a support ordering $(X, \prec)$ for $f$. To show that $X \subseteq \nu f$, fix $x \in X$; we must show that $x \in f(X)$. Since $(X, \prec)$ is a support ordering for $f$, it follows that $x \in f(\preimg{{\prec}}{x})$.  Since $f$ is monotonic and $\preimg{{\prec}}{x} \subseteq X$, we have $x \in f(X)$.
        \item[$\Leftarrow$] Assume $X \subseteq f(X)$. To prove this result we must come up with ${\prec} \subseteq X \times X$ such that $(X, \prec)$ is a support ordering for $f$.  Consider ${\prec} = U_X = X \times X$.  Since for every $x \in X, x \in f(X)$, the desired result is immediate.\qedhere
    \end{itemize}
\end{proof}

This theorem establishes that any $X \subseteq S$ is a post-fixpoint for monotonic $f$ (i.e.\/ $X \subseteq f(X)$) if and only if $X$ is supported for $f$.  We also know from Lemma~\ref{lem:tarski-knaster} that
$
    \nu f = \bigcup \{X \subseteq S \mid X \subseteq f(X) \}.
$
Thus we have the following.

\begin{corollary}\label{cor:greatest-fixpoint}
    Let $f$ be a monotonic function over the subset lattice generated by $S$.  Then
    $
        \nu f = \bigcup \{ X \in 2^S \mid X \textnormal{ is supported for $f$}\}.
    $
\end{corollary}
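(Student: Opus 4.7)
The plan is to derive this corollary directly by combining two previously established results, with essentially no additional work needed. Specifically, I would show that the set indexing the union on the right-hand side is exactly the set of post-fixpoints of $f$, and then appeal to the Tarski--Knaster characterization to recognize this union as $\nu f$.

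First I would invoke Theorem~\ref{thm:supported}, which establishes that for any $X \subseteq S$, $X$ is supported for $f$ if and only if $X \subseteq f(X)$. Consequently the indexing family
\[
\{X \in 2^S \mid X \text{ is supported for } f\}
\]
coincides exactly with the family
\[
\{X \in 2^S \mid X \subseteq f(X)\}
\]
of post-fixpoints of $f$. Hence the two corresponding unions are identical.

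Next I would apply Lemma~\ref{lem:tarski-knaster} to the subset lattice $(2^S, \subseteq, \bigcup, \bigcap)$. The lemma characterizes the greatest fixpoint of a monotonic function over a complete lattice as the join of its post-fixpoints; specialized to the subset lattice, where join is union, this yields $\nu f = \bigcup \{X \in 2^S \mid X \subseteq f(X)\}$. Combining this with the equality of indexing families from the previous step gives the desired identity.

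There is no real obstacle here: Theorem~\ref{thm:supported} has already done all the semantic work by equating ``supported'' with ``post-fixpoint,'' and Lemma~\ref{lem:tarski-knaster} provides the fixpoint formula for free. The proof is therefore a two-line chain of equalities and can be written as a brief remark rather than a substantive argument.
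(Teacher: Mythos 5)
Your proposal is correct and matches the paper's own argument exactly: the paper also derives this corollary by combining Theorem~\ref{thm:supported} (supported iff post-fixpoint) with the Tarski--Knaster characterization of $\nu f$ from Lemma~\ref{lem:tarski-knaster} as the union of post-fixpoints. Nothing further is needed.
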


This section closes with definitions and results on support orderings that we use later in this paper.  In what follows we fix a set $S$ and the associated complete lattice $(2^S, \subseteq, \bigcup, \bigcap)$.
The first lemma establishes that any extension of a support ordering is also a support ordering.

\begin{lemma}[Extensions of support orderings]\label{lem:support-ordering-extension}
Let $f \in 2^S \rightarrow 2^S$ be monotonic, let $(X, \prec)$ be a support ordering for $f$, and let ${\prec'} \subseteq S \times S$ be an extension of $\prec$.  Then $(X, \prec')$ is also a support ordering for $f$.
\end{lemma}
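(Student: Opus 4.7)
The plan is to verify the two clauses of Definition~\ref{def:support-ordering} directly, with monotonicity of $f$ doing all of the real work. So fix $f$, $(X,\prec)$, and $\prec'$ as in the statement, and observe first that clause~(1) carries over for free: $X \subseteq S$ holds because $(X,\prec)$ is already a support ordering, and $\prec'$ (implicitly restricted to $X \times X$, as required by the definition) is a binary relation on $X$.

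The substantive step is clause~(2): for every $x \in X$ we must show $x \in f(\preimg{\prec'}{x})$. The key observation is that extending a relation only enlarges pre-images, i.e.\ since ${\prec} \subseteq {\prec'}$ we have
\[
    \preimg{\prec}{x} \;\subseteq\; \preimg{\prec'}{x}
\]
for every $x \in X$, directly from the footnoted characterization $\preimg{\prec}{x} = \{x' \in X \mid x' \prec x\}$. By hypothesis $(X,\prec)$ is a support ordering, so $x \in f(\preimg{\prec}{x})$. Applying monotonicity of $f$ to the above inclusion yields $f(\preimg{\prec}{x}) \subseteq f(\preimg{\prec'}{x})$, and combining the two gives $x \in f(\preimg{\prec'}{x})$, as required.

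There is no genuine obstacle here: the lemma is essentially a one-line consequence of monotonicity together with the elementary fact that pre-image is monotonic in the underlying relation. The only thing that needs any care at all is the slight type mismatch between ${\prec'} \subseteq S \times S$ and the defining requirement ${\prec'} \subseteq X \times X$, which is handled simply by taking the intended $\prec'$ on $X$ to be $\prec' \cap (X \times X)$; this restriction still extends $\prec$ since $\prec$ already lives in $X \times X$, and the same inclusion-plus-monotonicity argument applies verbatim.
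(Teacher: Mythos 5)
Your proof is correct and is exactly the paper's argument: the paper's one-line proof likewise invokes monotonicity of $f$ together with the inclusion $\preimg{{\prec}}{x} \subseteq \preimg{{\prec'}}{x}$. Your additional remark about restricting ${\prec'}$ to $X \times X$ is a reasonable tidying of a type mismatch the paper glosses over, but it does not change the substance.
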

\begin{proof}
Follows from monotonicity of $f$ and the fact that $\preimg{{\prec}}{s} \subseteq \preimg{{\prec'}}{s}$.\qedhere
\end{proof}

\noindent
The next lemma establishes that unions of support orderings are also support orderings.
\begin{lemma}[Unions of support orderings]\label{lem:unions-of-support-orderings}
Let $f \in 2^S \rightarrow 2^S$ be monotonic, and let $\mathcal{X}$ be a set of support orderings for $f$.  Then $(S_\mathcal{X},{\prec}_\mathcal{X})$ is a support ordering for $f$, where
\begin{align*}
S_\mathcal{X}   &= \bigcup_{(S,{\prec}) \in \mathcal{X}} S \\
{\prec}_\mathcal{X} &= \bigcup_{(S,{\prec}) \in \mathcal{X}} {\prec}.
\end{align*}
\end{lemma}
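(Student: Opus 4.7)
The plan is to verify the two conditions in Definition~\ref{def:support-ordering} directly for the pair $(S_\mathcal{X}, {\prec}_\mathcal{X})$. Condition~1 is essentially bookkeeping about set unions: since each $(X, {\prec}) \in \mathcal{X}$ satisfies $X \subseteq S$ and ${\prec} \subseteq X \times X$, taking the union over $\mathcal{X}$ preserves both inclusions, giving $S_\mathcal{X} \subseteq S$ and ${\prec}_\mathcal{X} \subseteq S_\mathcal{X} \times S_\mathcal{X}$.

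The interesting condition is~2: for every $x \in S_\mathcal{X}$, we need $x \in f(\preimg{{\prec}_\mathcal{X}}{x})$. I would start by picking any $x \in S_\mathcal{X}$ and using the definition of $S_\mathcal{X}$ to obtain some $(X, {\prec}) \in \mathcal{X}$ with $x \in X$. Because $(X, {\prec})$ is itself a support ordering for $f$, we already get $x \in f(\preimg{{\prec}}{x})$ for free. The remaining step is to lift this witnessing pre-image inside the larger relation ${\prec}_\mathcal{X}$: the inclusion ${\prec} \subseteq {\prec}_\mathcal{X}$ gives $\preimg{{\prec}}{x} \subseteq \preimg{{\prec}_\mathcal{X}}{x}$, and then monotonicity of $f$ yields $f(\preimg{{\prec}}{x}) \subseteq f(\preimg{{\prec}_\mathcal{X}}{x})$, so $x \in f(\preimg{{\prec}_\mathcal{X}}{x})$ as required.

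I expect no substantive obstacle here; the result is a mild generalisation of Lemma~\ref{lem:support-ordering-extension}, and in fact once the witness $(X,{\prec})$ is chosen, the argument is essentially an instance of that lemma applied to ${\prec'} := {\prec}_\mathcal{X} \cap (X \times X)$, which extends ${\prec}$ on $X$. The only point that needs any care is the quantifier discipline: each $x$ may be supplied its own witness $(X,{\prec})$, but since the conclusion is formulated pointwise in $x$, this causes no difficulty. Consequently, no induction, transfinite construction, or well-foundedness argument is needed, which contrasts sharply with the proof of Lemma~\ref{lem:well-supported-sets}; the analogous ``union'' result for \emph{well-founded} support orderings genuinely required transfinite machinery, but here preservation of well-foundedness is not claimed, and the bare support-ordering property is closed under arbitrary unions by monotonicity alone.
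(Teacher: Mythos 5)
Your proof is correct and follows essentially the same route as the paper's: fix $x \in S_\mathcal{X}$, choose a witnessing support ordering $(X,\prec) \in \mathcal{X}$ containing $x$, and lift $x \in f(\preimg{{\prec}}{x})$ to $x \in f(\preimg{{\prec}_\mathcal{X}}{x})$ via $\preimg{{\prec}}{x} \subseteq \preimg{{\prec}_\mathcal{X}}{x}$ and monotonicity of $f$. Your closing remark contrasting this with the transfinite machinery needed for Lemma~\ref{lem:well-supported-sets} also matches the paper's own discussion following the lemma.
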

\begin{proof}
It suffices to show that for every $s \in S_\mathcal{X}$, $s \in f(\preimg{{\prec_{\mathcal{X}}}}{s})$.  So fix such and $s$.  Since $s \in S_\mathcal{X}$ there is $(S,{\prec}) \in \mathcal{X}$ such that $s \in S$, and as $(S,{\prec})$ is a support ordering we know that $s \in f(\preimg{{\prec}}{s})$.  That $s \in f(\preimg{{\prec_\mathcal{X}}}{s})$ follows immediately from the fact that $f$ is monotonic and $\preimg{{\prec}}{x} \subseteq \preimg{{\prec_\mathcal{X}}}{x}$.
\qedhere
\end{proof}

\noindent
This result should be contrasted with Lemma~\ref{lem:well-supported-sets}.  On the one hand, this lemma asserts a property about all sets of support orderings, whereas that lemma only refers to sets of well-supported sets, i.e.\/ sets of sets having well-founded support orderings.  On the other hand, this lemma makes no guarantees about the properties of support ordering $(S_\mathcal{X}, {\prec}_\mathcal{X})$ \emph{vis \`a vis} the orderings $(S,{\prec})$.  In particular, if all the $(S,{\prec}) \in \mathcal{X}$ are well-founded, it does not follow that ${\prec}_\mathcal{X}$ is well-founded.
Lemma~\ref{lem:well-supported-sets} on the other does guarantee that a well-founded ordering over $S_{\mathcal{X}}$ does exist if each $(S,\prec) \in \mathcal{X}$ is well-founded.

It can also be shown that well-founded support orderings can be extended to well-orderings.

\begin{lemma}[Well-orderings for well-supported sets]
Let $f \in 2^S \rightarrow 2^S$ be monotonic, and let $(X,\prec)$ be a well-founded support ordering for $f$.  Then there is a well-ordering ${\prec'} \subseteq X \times X$ extending $\prec$ such that $(X, \prec')$ is a support ordering for $f$.
\end{lemma}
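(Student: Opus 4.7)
The plan is to combine two earlier results stated in the excerpt: Lemma~\ref{lem:well-ordering-extension} on total extensions of well-founded relations, and Lemma~\ref{lem:support-ordering-extension} on extensions of support orderings. Neither obstacle nor novelty is anticipated; the argument is essentially a one-line composition of these two facts.

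First, I would apply Lemma~\ref{lem:well-ordering-extension} to the well-founded relation ${\prec} \subseteq X \times X$. This yields a well-ordering ${\prec'} \subseteq X \times X$ with ${\prec} \subseteq {\prec'}$. Note that although Lemma~\ref{lem:well-ordering-extension} produces a well-ordering on whatever carrier the original relation lives on, here the carrier is already $X$, so $\prec'$ is exactly the binary relation on $X$ we want.

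Next, since $\prec'$ extends $\prec$ and $(X, \prec)$ is by hypothesis a support ordering for $f$, Lemma~\ref{lem:support-ordering-extension} applies directly and guarantees that $(X, \prec')$ is also a support ordering for $f$. Combined with the fact that $\prec'$ is a well-ordering on $X$ obtained in the first step, this delivers the desired conclusion.

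The only subtlety worth remarking on is that the well-ordering produced in the first step is defined on all of $X$ (not a smaller or larger set), so no restriction step is needed before invoking the extension lemma for support orderings. Because both auxiliary lemmas have already been established, there is no substantive obstacle in the proof.
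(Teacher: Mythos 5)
Your proof is correct and is exactly the paper's argument: the paper's proof reads ``Follows from Lemmas~\ref{lem:well-ordering-extension} and~\ref{lem:support-ordering-extension},'' which is precisely the two-step composition you describe. Nothing further is needed.
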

\begin{proof}
Follows from Lemmas~\ref{lem:well-ordering-extension} and~\ref{lem:support-ordering-extension}. \qedhere
\end{proof}

The next result is a corollary of earlier lemmas.

\begin{corollary}[Support orderings for fixpoints]\label{cor:support-fixpoints}
Let $f \in 2^S \rightarrow 2^S$ be monotonic.
\begin{enumerate}
    \item 
    $(\nu f, U_{\nu f})$ is a support ordering for $f$.\footnote{Recall that $U_{\nu f} = \nu f \times \nu f$.}
    \item
    There is a well-ordering ${\prec} \subseteq \mu f \times \mu f$ such that $(\mu f, \prec)$ is a support ordering for $f$.
\end{enumerate}
\end{corollary}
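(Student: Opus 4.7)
The plan is to dispatch each part separately, relying on results already in hand.

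For part 1, I would argue directly from the definition of support ordering. Fix $x \in \nu f$; it suffices to show $x \in f(\preimg{U_{\nu f}}{x})$. Since $U_{\nu f} = \nu f \times \nu f$, the preimage $\preimg{U_{\nu f}}{x}$ is simply $\nu f$. Hence the obligation collapses to $x \in f(\nu f)$, which holds because $\nu f$ is a fixpoint of $f$ (in particular $\nu f = f(\nu f)$). No monotonicity argument is needed here; the claim is essentially unwinding notation.

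For part 2, the idea is to chain three earlier results. First, Corollary~\ref{cor:least-fixpoint} tells us that $\mu f = \bigcup\{X \in 2^S \mid X \text{ is well-supported for } f\}$. Second, Lemma~\ref{lem:well-supported-sets} asserts that a union of well-supported sets is itself well-supported; applying this to the family above yields a well-founded ${\prec_0} \subseteq \mu f \times \mu f$ such that $(\mu f, \prec_0)$ is a support ordering for $f$. Third, the immediately preceding ``well-orderings for well-supported sets'' lemma extends any well-founded support ordering to a well-ordering support ordering on the same carrier set. Applying that extension to $\prec_0$ delivers the required well-ordering $\prec \subseteq \mu f \times \mu f$ with $(\mu f, \prec)$ a support ordering for $f$.

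Since each step is a direct invocation of a previously established lemma or corollary, I do not anticipate any real obstacle; the only thing worth double-checking is that the carrier set produced by the ``Well-orderings for well-supported sets'' lemma is preserved (it is—only the relation is extended), so the witness remains a relation on $\mu f$ rather than on some superset. A brief two-step proof, one sentence per part, should therefore suffice.
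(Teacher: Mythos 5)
Your proposal is correct and matches the route the paper intends: the paper gives no explicit proof, stating only that the corollary follows from the earlier lemmas, and your part~1 is exactly the witness $U_X$ from the $\Leftarrow$ direction of Theorem~\ref{thm:supported} applied to $\nu f \subseteq f(\nu f)$, while your part~2 chains Corollary~\ref{cor:least-fixpoint}, Lemma~\ref{lem:well-supported-sets}, and the ``Well-orderings for well-supported sets'' lemma precisely as they are set up to be used. Your check that the latter lemma keeps the carrier set fixed (only the relation is extended) is the right detail to verify, and it holds.
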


For technical convenience in what follows we introduce the notions of \emph{$\sigma$-compatible} and \emph{$\sigma$-maximal} support orderings for monotonic $f$ and $\sigma \in \{\mu,\nu\}$.

\begin{definition}[Compatible, maximal support orderings]
Let $f \in 2^S \rightarrow 2^S$ be monotonic, let $\sigma \in \{\mu,\nu\}$, and let $(X, \prec)$ be a support ordering for $f$.  
\begin{enumerate}
    \item $(X, \prec)$ is \emph{$\sigma$-compatible for $f$} iff either $\sigma = \nu$, or $\sigma = \mu$ and $\prec$ is well-founded.
    
    \item $(X, \prec)$ is \emph{$\sigma$-maximal for $f$} iff $X = \sigma f$ and one of the following holds.
    \begin{enumerate}
        \item $\sigma = \nu$ and ${\prec} = U_X$.
        \item $\sigma = \mu$ and $\prec$ is a well-ordering over $X \times X$.
    \end{enumerate}
\end{enumerate}
\end{definition}

\noindent
Corollary~\ref{cor:support-fixpoints} ensures that for any monotonic $f \in 2^S \rightarrow 2^S$ and $\sigma \in \{\mu,\nu\}$ there is a $\sigma$-maximal support ordering for $f$.  When $\sigma = \nu$ this $\sigma$-maximal support ordering is unique, although this uniqueness property in general fails to hold for $\sigma = \mu$; while the fixpoint is unique, the associated well-ordering may not be.

\section{The Propositional Modal Mu-Calculus}\label{sec:mu-calculus}

This section defines the syntax and semantics of the modal mu-calculus and also establishes properties of the logic that will be used later in the paper.

\subsection{Labeled transition systems}

Labeled transition systems are intended to model the behavior of discrete systems.  Define a \emph{sort} $\Sigma$ to be the set of atomic actions that a system can perform.

\begin{definition}[Labeled transition system]\label{def:lts}
    A \emph{labeled transition system} (LTS) of sort $\Sigma$ is a pair $\lts{S}$, where $\states{S}$ is a set of \emph{states} and ${\xrightarrow{}} \subseteq \states{S} \times \Sigma \times \states{S}$ is the \emph{transition relation}.  We write $s \xrightarrow{a} s'$ when $(s, a, s') \in {\xrightarrow{}}$ and $s \xrightarrow{a}$ when $s \xrightarrow{a} s'$ for some $s' \in \states{S}$.  If $K \subseteq \Sigma$ then we write $s \xrightarrow{K} s'$ iff $s \xrightarrow{a} s'$ for some $a \in K$ and $s \xrightarrow{K}$ if $s \xrightarrow{K} s'$ for some $s'$.  If there is no $s'$ such that $s \xrightarrow{a} s'$ / $s \xrightarrow{K} s'$ then we denote this as $s \centernot{\xrightarrow{a}}$ / $s \centernot{\xrightarrow{K}}$.
\end{definition}

\noindent
An LTS $(\states{S}, \xrightarrow{})$ of sort $\Sigma$ represents a system whose state space is $\states{S}$; the presence of transition $s \xrightarrow{a} s'$ indicates that when the system is in state $s$, it can engage in an atomic execution step labeled by $a$ and evolve to state $s'$.
We now introduce two notions of \emph{predecessors} of sets of states in an LTS.

\begin{definition}[Predecessor Sets]\label{def:pre}
    Let $\lts{S}$ be an LTS of sort $\Sigma$, with $S \subseteq \Bstates{S}$ and $K \subseteq \Sigma$.  Then we define the following.
    \begin{enumerate}
        \item $\pre_{\dia{K}}(S) = \{s \in \Bstates{S} \mid \exists s' \in S \colon s \xrightarrow{K} s' \}$
        \item $\pre_{[K]}(S) = \{s \in \Bstates{S} \mid \forall s' \in \Bstates{S} \colon s \xrightarrow{K} s' \implies  s' \in S \}$
    \end{enumerate}
\end{definition}

\noindent
If state $s \in \pre_{\dia{K}}(S)$ then one of its outgoing transitions is labeled by $K$ and leads to a state in $S$, while $s \in \pre_{[K]}(S)$ holds iff every outgoing transition from $s$ labeled by $K$ leads to $S$.  Note that if $s \centernot{\xrightarrow{K}}$ then $s \in \pre_{[K]}(S)$ but $s \not\in \pre_{\dia{K}}(S)$. It immediately follows from the definitions that the operators satisfy the following properties.
\begin{lemma}\label{lem:pre}
    Let $\lts{S}$ be an LTS of sort $\Sigma$, with $K \subseteq \Sigma$ and $S_1, S_2 \subseteq \states{S}$.
    If $S_1 \subseteq S_2$ then $\pre_{[K]}(S_1) \subseteq \pre_{[K]}(S_2)$ and $\pre_{\langle K \rangle}(S_1) \subseteq \pre_{\langle K \rangle}(S_2)$.
\end{lemma}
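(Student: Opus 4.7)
The plan is to prove both inclusions by straightforward unfolding of the definitions of $\pre_{\langle K \rangle}$ and $\pre_{[K]}$ from Definition~\ref{def:pre}, exploiting the hypothesis $S_1 \subseteq S_2$ at the single point in each argument where membership in $S_1$ is converted into membership in $S_2$. There is no real obstacle here; the lemma is essentially a restatement of the monotonicity of existential and universal quantification in their body predicate, which is why the authors state it without fanfare.

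For the diamond operator, I would fix an arbitrary $s \in \pre_{\langle K \rangle}(S_1)$ and pick a witness $s' \in S_1$ with $s \xrightarrow{K} s'$, whose existence is guaranteed by the definition. Since $S_1 \subseteq S_2$ we have $s' \in S_2$, so $s'$ witnesses $s \in \pre_{\langle K \rangle}(S_2)$ as well, giving $\pre_{\langle K \rangle}(S_1) \subseteq \pre_{\langle K \rangle}(S_2)$.

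For the box operator, I would fix $s \in \pre_{[K]}(S_1)$ and an arbitrary $s' \in \states{S}$ with $s \xrightarrow{K} s'$. By definition of $\pre_{[K]}(S_1)$ this yields $s' \in S_1$, and then $S_1 \subseteq S_2$ gives $s' \in S_2$. Since $s'$ was arbitrary among $K$-successors of $s$, we conclude $s \in \pre_{[K]}(S_2)$, hence $\pre_{[K]}(S_1) \subseteq \pre_{[K]}(S_2)$.

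The only subtlety worth flagging is the boundary case $s \centernot{\xrightarrow{K}}$ already noted in the text: such states trivially lie in $\pre_{[K]}(S_i)$ for every $i$ under the universal reading, which is consistent with the universal quantification over $s'$ being vacuously true, so no separate case analysis is needed. Consequently both sub-claims reduce to one line each, and the lemma follows.
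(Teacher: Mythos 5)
Your proof is correct and matches the paper's treatment: the paper gives no explicit proof, stating only that the lemma "immediately follows from the definitions," and your argument is exactly that definition-unfolding, using $S_1 \subseteq S_2$ at the one point where it is needed in each case. Nothing further is required.
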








\subsection{Propositional modal mu-calculus}\label{subsec:propositional-modal-mu-calculus}

The propositional modal mu-calculus, which we often just call the mu-calculus, is a logic for describing properties of states in labeled transition systems.  The version of the logic considered here matches the one in~\cite{BS1992}, which slightly extends~\cite{Koz1983} by allowing sets of labels in the modalities.  We first define the set of formulas of the mu-calculus, then the \emph{well-formed} formulas.  The latter will be the object of study in this paper.

\begin{definition}[Mu-calculus formulas]
    Let $\Sigma$ be a sort and $\Var$ a countably infinite set of \emph{propositional variables}.
    Then
    formulas of the propositional modal mu-calculus over $\Sigma$ and $\Var$ are given by the following grammar
    $$
        \Phi ::= Z
        \mid \lnot\Phi'
        \mid \Phi_1 \land \Phi_2
        \mid [K] \Phi'
        \mid \nu Z.\Phi'
    $$
    where $K \subseteq \Sigma$ and $Z \in \Var$.
\end{definition}

We assume the usual definitions of subformula, etc.
To define the well-formed mu-calculus formulas, we first review the notions of free, bound and positive variables.  Occurrences of $Z$ in $\nu Z.\Phi'$ are said to be \emph{bound}; an occurrence of a variable in a formula that is not bound within the formula is called \emph{free}. A variable $Z$ is free within a formula if it has at least one free occurrence in the formula, and is called \emph{non-free} otherwise.  (So a variable may be non-free in a formula if either all its occurrences are bound, or if the variable has no occurrences at all in the formula.)  A variable $Z$ is \emph{positive} in $\Phi$ if every free occurrence of $Z$ in $\Phi$ occurs inside the scope of an even number of negations.  We can now define the well-formed mu-calculus formulas as follows.

\begin{definition}[Well-formed mu-calculus formulas]~\label{def:mu-calculus-syntax}
A mu-calculus formula over $\Sigma$ and $\Var$ is \emph{well-formed} if each of its subformulas of form $\nu Z.\Phi$ satisfies:  $Z$ is positive in $\Phi$.

We use $\muformsSV$ for the set of well-formed mu-calculus formulas over $\Sigma$ and $\Var$, or just $\muforms$ when $\Sigma$ and $\Var$ are obvious from context.
\end{definition}

We denote substitution for free variables in the usual fashion:  if $Z_1 \cdots Z_n \in \Var^*$ is duplicate-free and $\Phi_1 \cdots \Phi_n \in \muforms^*$ then
we write $\Phi[Z_1 \cdots Z_n := \Phi_1 \cdots \Phi_n]$ for the simultaneous capture-free substitution of each $Z_i$ by $\Phi_i$ in $\Phi$.
We also use the following the standard derived operators.
\begin{align*}
\Phi_1 \lor \Phi_2 
&= \lnot (\lnot \Phi_1 \land \lnot \Phi_2)   \\
\dia{K} \Phi 
&= \lnot [K] \lnot\Phi\\
\mu Z. \Phi 
&= \lnot \nu Z . \lnot \Phi[Z := \lnot Z]\\
\true 
&= \nu Z . Z\\
\false 
&= \lnot \true
\end{align*}
In the definiton of $\mu Z.\Phi$, note that if $Z$ is positive in $\Phi$ then it is also positive in $\lnot \Phi[Z := \lnot Z]$.
Following standard convention, we refer to $\land$ and $\lor$, $[K]$ and $\dia{K}$, and $\nu$ and $\mu$ as \emph{duals}. Formulas extended with these dual operators are in \emph{positive normal form} iff all negation symbols directly apply to variable symbols. It is well-known that every formula can be rewritten to positive normal form when the duals are included in the logic.
We refer to formulas of form $\nu Z.\Phi$ or $\mu Z.\Phi$ as \emph{fixpoint formulas} and write $\sigma Z.\Phi$ for a generic such formula (so $\sigma$ may be either $\nu$ or $\mu$).

To define the semantics of mu-calculus formulas we use \emph{valuations} to handle free propositional variables.
\begin{definition}[Valuations]~\label{def:valuation}
Let $\T = \lts{S}$ be an LTS and $\Var$ a countably infinite set of variables.
Then a \emph{valuation for $\Var$ over $\T$} is a function $\V \in \Var \rightarrow 2^{\states{S}}$.
\end{definition}

\noindent
Since a valuation $\V$ is a function, standard operations on functions such as $\V[Z_1 \cdots Z_n := S_1 \cdots S_n]$, where $Z_1 \cdots Z_n \in \Var^*$ is duplicate-free and $S_1 \cdots S_n \in (2^\states{S})^*$, are applicable.

The semantics of the mu-calculus is now defined as follows.
\begin{definition}[Mu-calculus semantics]~\label{def:mu-calculus-semantics}
    Let $\T = \lts{S}$ be an LTS of sort $\Sigma$
    and $\V \in \Var \rightarrow 2^\states{S}$ a valuation.
    Then
    the semantic function $ \semTV{\Phi} \subseteq \mathcal{S}$, where $\Phi \in \muformsSV$, is defined as follows.
    \begin{align*}
        \semTV{Z}                   & = \V(Z)                                                                         \\
        \semTV{\lnot \Phi}          & = \mathcal{S} \setdiff \semTV{\Phi}                                             \\
        \semTV{\Phi_1 \land \Phi_2} & = \semTV{\Phi_1} \cap \semTV{\Phi_2}                                            \\
        \semTV{[K]\Phi}             & = \pre_{[K]}\left(\semTV{\Phi}\right)                                           \\
        \semTV{\nu Z . \Phi}        & = \bigcup \{ S \subseteq \states{S} \mid S \subseteq \semT{\Phi}{\V[Z := S]} \}
    \end{align*}
    If $s \in \semTV{\Phi}$ then we say that $s$ \emph{satisfies} $\Phi$ in the context of $\T$ and $\V$.
\end{definition}


\noindent
For the dual operators one may derive the following semantic equivalences.
\begin{align*}
    \semTV{\Phi_1 \lor \Phi_2} & = \semTV{\Phi_1} \cup \semTV{\Phi_2}                                            \\
    \semTV{\dia{K} \Phi}       & = \pre_{\dia{K}}\left( \semTV{\Phi} \right)                                     \\
    \semTV{\mu Z . \Phi}       & = \bigcap \{ S \subseteq \states{S} \mid \semT{\Phi}{\V[Z := S]} \subseteq S \}
\end{align*}

We frequently wish to view formulas as functions of their free variables.  The next definition introduces this concept at both the syntactic and semantic level.

\begin{definition}[Formula functions]\label{def:formula-functions}
Let $Z \in \Var$ be a variable and $\Phi \in \muformsSV$ be a formula.
\begin{enumerate}
    \item \label{subdef:synactic-formula-function}
        The \emph{syntactic function}, $\synf{Z}{\Phi} \in \muformsSV \to \muformsSV$, for $Z$ and $\Phi$ is defined as:
        $$
        (\synf{Z}{\Phi})(\Phi') = \Phi[Z:=\Phi'].
        $$
    \item \label{subdef:semantic-formula-function}
        Let $\T = \lts{S}$ be an LTS of sort $\Sigma$, and $\V \in \Var \to \states{S}$ a valuation over $\T$.
        Then the \emph{semantic function}, $\semfTV{Z}{\Phi} \in 2^{\states{S}} \to 2^{\states{S}}$, for $Z$ and $\Phi$ is defined as:
        $$
        \semfTV{Z}{\Phi}(S) = \semT{\Phi}{\V[Z:=S]}.
        $$
\end{enumerate}
\end{definition}

We now state a well-known monotonicity result for formulas in which $Z$ is positive.

\begin{lemma}[Mu-calculus monotonicity]~\label{lem:mu-calculus-semantic-monotonicity}
    Fix $\T = \lts{S}$ and $\V$, and let $\Phi \in \muformsSV$ be such that $Z \in \Var$ is positive in $\Phi$.  Then $\semfZTV{\Phi} \in 2^\states{S} \to 2^\states{S}$ is monotonic over the subset lattice for $\states{S}$.
\end{lemma}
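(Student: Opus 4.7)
The plan is to prove the result by structural induction on $\Phi$, but strengthening the induction hypothesis so that it can survive passage through negations. Call $Z$ \emph{negative in} $\Phi$ if every free occurrence of $Z$ in $\Phi$ lies in the scope of an odd number of negations (so that a variable not free in $\Phi$ is trivially both positive and negative). I would then prove simultaneously, by induction on the structure of $\Phi$, the following strengthened claim: for every $\Phi \in \muformsSV$, every $Z \in \Var$, every LTS $\T$ and valuation $\V$, (a) if $Z$ is positive in $\Phi$, then $\semfZTV{\Phi}$ is monotonic; and (b) if $Z$ is negative in $\Phi$, then $\semfZTV{\Phi}$ is antitonic (i.e.\ $S_1 \subseteq S_2$ implies $\semfZTV{\Phi}(S_2) \subseteq \semfZTV{\Phi}(S_1)$). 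Strengthening with (b) is essential because the negation case flips direction, and without an antitonic clause the induction cannot be closed.

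The base case is a variable $Y$: if $Y = Z$ then $\semfZTV{Z}$ is the identity, which is monotonic, and $Z$ is positive but not negative, so only (a) needs to be checked; if $Y \neq Z$, then $\semfZTV{Y}$ is the constant function $\V(Y)$, which is both monotonic and antitonic, matching the fact that $Z$ is both positive and negative in $Y$. For $\Phi = \lnot \Phi'$, positivity of $Z$ in $\Phi$ is equivalent to negativity of $Z$ in $\Phi'$, and the induction hypothesis for $\Phi'$ combined with the identity $\semfZTV{\lnot\Phi'}(S) = \states{S}\setminus \semfZTV{\Phi'}(S)$ gives both (a) and (b). The conjunction case follows from the fact that $\cap$ is monotonic in each argument together with the induction hypothesis. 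For $\Phi = [K]\Phi'$, I would apply Lemma~\ref{lem:pre}: positivity/negativity of $Z$ in $\Phi'$ is the same as in $[K]\Phi'$, and $\pre_{[K]}$ preserves inclusions, so monotonicity (resp.\ antitonicity) of $\semfZTV{\Phi'}$ lifts to $\semfZTV{[K]\Phi'}$.

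The fixpoint case $\Phi = \nu Y.\Phi'$ is where care is needed, and I expect this to be the main obstacle. If $Y = Z$, then $Z$ is not free in $\nu Z.\Phi'$, so $\semfZTV{\nu Z.\Phi'}$ is a constant function, which is both monotonic and antitonic; consistent with $Z$ being both positive and negative in $\nu Z.\Phi'$. If $Y \neq Z$ and $Z$ is positive in $\nu Y.\Phi'$, then $Z$ is positive in $\Phi'$, and I would argue as follows. Fix $S_1 \subseteq S_2$. Using Definition~\ref{def:mu-calculus-semantics}, it suffices to show that every post-fixpoint $T$ of $S \mapsto \semT{\Phi'}{\V[Z:=S_1][Y:=T]}$ is also a post-fixpoint of $S \mapsto \semT{\Phi'}{\V[Z:=S_2][Y:=T]}$. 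So suppose $T \subseteq \semT{\Phi'}{\V[Z:=S_1][Y:=T]}$. Applying the induction hypothesis to $\Phi'$ with the valuation $\V[Y:=T]$ in place of $\V$, positivity of $Z$ in $\Phi'$ gives $\semT{\Phi'}{\V[Z:=S_1][Y:=T]} \subseteq \semT{\Phi'}{\V[Z:=S_2][Y:=T]}$, whence $T$ is a post-fixpoint on the right, and taking unions yields $\semfZTV{\nu Y.\Phi'}(S_1) \subseteq \semfZTV{\nu Y.\Phi'}(S_2)$. The antitonic subcase for $\nu Y.\Phi'$ follows by an analogous argument, reversing inclusions. The subtlety I want to flag here is that the induction hypothesis must be applied not at the fixed valuation $\V$ but at the varying valuation $\V[Y:=T]$; this is legitimate because the statement quantifies universally over all valuations, which is why I would state the claim uniformly in $\V$ from the outset.
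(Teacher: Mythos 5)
Your proposal is correct and follows the same route the paper takes: the paper simply states ``the proof proceeds by induction on $\Phi$'' and omits all details, and your argument is the standard structural induction it has in mind, correctly supplying the simultaneous monotone/antitone strengthening needed to pass through negation and correctly quantifying the induction hypothesis over all valuations so it can be instantiated at $\V[Y:=T]$ in the fixpoint case. No gaps.
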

\begin{proof}
We must prove that for all $S_1 \subseteq S_2 \subseteq \states{S}$, 
\[
\semfZTV{\Phi}(S_1) 
= \semT{\Phi}{\V[Z := S_1]} 
\subseteq \semT{\Phi}{\V[Z := S_2]}
= \semfZTV{\Phi}(S_2).
\]
The proof proceeds by induction on $\Phi$. \qedhere
\end{proof}

It turns out that the semantics of well-formed formulas $\nu Z.\Phi$ and $\mu Z.\Phi$ with respect to $\T = \lts{S}$ can be characterized as the greatest and least fixpoints, respectively, of $\semfT{Z}{\Phi}{\V}$ over the subset lattice induced $\states{S}$.  In particular, Lemma~\ref{lem:mu-calculus-semantic-monotonicity} guarantees the monotonicity of $\semfT{Z}{\Phi}{\V}$ over this lattice; Lemma~\ref{lem:tarski-knaster} then implies the characterization.  The next lemma formalizes this insight.

\begin{lemma}[Fixpoint characterizations of formula functions]\label{lem:fixpoint-characterizations}
    Fix $\T$ and $\V$, let $Z \in \Var$, and let $\Phi \in \muformsSV$ be such that $Z \in \Var$ is positive in $\Phi$.  Then $\nu \left( \semfZTV{\Phi} \right) = \semTV{\nu Z.\Phi}$ and $\mu \left(\semfZTV{\Phi} \right) = \semTV{\mu Z.\Phi}$.
\end{lemma}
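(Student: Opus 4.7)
The plan is to verify both equalities by directly matching the semantic definitions in Definition~\ref{def:mu-calculus-semantics} with the Tarski--Knaster characterizations in Lemma~\ref{lem:tarski-knaster}, using the monotonicity guaranteed by Lemma~\ref{lem:mu-calculus-semantic-monotonicity}. The monotonicity step is the essential prerequisite: because $Z$ is positive in $\Phi$, the semantic function $\semfZTV{\Phi}$ is monotonic over the subset lattice $(2^\states{S}, \subseteq, \bigcup, \bigcap)$, so Lemma~\ref{lem:tarski-knaster} applies and yields both $\nu(\semfZTV{\Phi})$ and $\mu(\semfZTV{\Phi})$ as extremal fixpoints.

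For the $\nu$-case, I would start by unfolding the defining clause
\[
\semTV{\nu Z.\Phi} = \bigcup \{ S \subseteq \states{S} \mid S \subseteq \semT{\Phi}{\V[Z := S]}\}
\]
from Definition~\ref{def:mu-calculus-semantics}, then rewrite the condition inside the set-builder using Definition~\ref{def:formula-functions}\ref{subdef:semantic-formula-function}, which gives $\semT{\Phi}{\V[Z := S]} = \semfZTV{\Phi}(S)$. The resulting expression
\[
\bigcup \{ S \subseteq \states{S} \mid S \subseteq \semfZTV{\Phi}(S)\}
\]
is exactly the characterization of $\nu(\semfZTV{\Phi})$ provided by Lemma~\ref{lem:tarski-knaster} specialized to the subset lattice (where $\sqsubseteq$ is $\subseteq$ and $\bigsqcup$ is $\bigcup$). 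This establishes the first equality.

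For the $\mu$-case, I would rely on the derived semantic equivalence
\[
\semTV{\mu Z.\Phi} = \bigcap \{ S \subseteq \states{S} \mid \semT{\Phi}{\V[Z := S]} \subseteq S\}
\]
stated just after Definition~\ref{def:mu-calculus-semantics}. Rewriting the condition using Definition~\ref{def:formula-functions}\ref{subdef:semantic-formula-function} again turns this into $\bigcap \{ S \mid \semfZTV{\Phi}(S) \subseteq S\}$, which matches the Tarski--Knaster characterization of $\mu(\semfZTV{\Phi})$. To be fully rigorous, I would briefly justify the derived semantic equivalence for $\mu$: expand $\mu Z.\Phi$ as $\lnot \nu Z.\lnot \Phi[Z := \lnot Z]$, push the semantic brackets through negation, and observe that the substitution of complemented post-fixpoints into $\lnot \Phi[Z := \lnot Z]$ yields precisely the pre-fixpoints of $\semfZTV{\Phi}$; positivity of $Z$ in $\Phi$ ensures that $Z$ remains positive in $\lnot\Phi[Z:=\lnot Z]$, so the $\nu$-case already established may be invoked for the dual formula.

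There is no real obstacle here: the content is essentially bookkeeping, and the only genuine mathematical input is monotonicity (Lemma~\ref{lem:mu-calculus-semantic-monotonicity}) plus Tarski--Knaster (Lemma~\ref{lem:tarski-knaster}). The most delicate point, if any, is the derivation of the $\mu$ semantic clause from the $\nu$ clause via De Morgan duality, since it requires carefully tracking complementation under the substitution $Z := \lnot Z$ and confirming that the preservation of positivity makes the induced function monotonic. Once that is in hand, the two set-theoretic equalities are immediate.
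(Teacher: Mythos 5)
Your proposal is correct and follows essentially the same route as the paper: monotonicity from Lemma~\ref{lem:mu-calculus-semantic-monotonicity}, the Tarski--Knaster characterization from Lemma~\ref{lem:tarski-knaster}, and unfolding Definition~\ref{def:formula-functions} to match the semantic clause for $\nu Z.\Phi$ (the paper proves only the $\nu$ case and leaves $\mu$ to the reader, which you handle via the stated dual semantic equivalence, exactly as intended).
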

\begin{proof}
Fix $\T, \V, Z$ and $\Phi$ so that $Z$ is positive in $\Phi$.  We prove the $\nu$ case; the $\mu$ case is left to the reader.  Since $Z$ is positive in $\Phi$ Lemma~\ref{lem:mu-calculus-semantic-monotonicity} guarantees that $\semfTV{Z}{\Phi} \in 2^{\states{S}} \to 2^{\states{S}}$ is monotonic over the subset lattice generated by $\states{S}$, and thus $\nu\left(\semfTV{Z}{\Phi}\right) \subseteq \states{S}$ exists.  We reason as follows.
\begin{align*}
\nu\left(\semfTV{Z}{\Phi}\right)
    &= \bigcup \{S \subseteq \states{S} \mid S \subseteq \left(\semfTV{Z}{\Phi}\right)(S)\}
    && \text{Lemma~\ref{lem:tarski-knaster}}
\\
    &= \bigcup \{S \subseteq \states{S} \mid S \subseteq \semT{\Phi}{\V[Z := S]}\}
    && \text{Definition of $\semfTV{Z}{\Phi}$}
\\
    &= \semTV{\nu Z.\Phi}
    && \text{Definition of $\semTV{\nu Z.\Phi}$}
\end{align*}
\qedhere
\end{proof}

In the remainder of this section, we establish some identities on mu-calculus formulas that we will use later in this paper.
The first result establishes a correspondence between substitution and valuation updates.
\begin{lemma}[Substitution and valuations]\label{lem:substitution}
    Fix $\T$ and $\V$,
    let $\Phi, \Phi_1, \ldots, \Phi_n \in \muformsSV$ and let $Z_1 \cdots Z_n \in \Var^*$ be duplicate-free.  Then
    \[
        \semTV{\Phi[Z_1 \cdots Z_n := \Phi_1 \cdots \Phi_n]}
        =
        \semT{\Phi}{\V[Z_1 \cdots Z_n := \semTV{\Phi_1} \cdots \semTV{\Phi_n}]}
    \]
\end{lemma}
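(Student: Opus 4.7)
The plan is to prove the identity by structural induction on $\Phi$. The inductive hypothesis will be that the lemma holds for all strict subformulas of $\Phi$, \emph{for every valuation $\V$}, so that when we recurse under binders we may re-apply the hypothesis with an updated valuation.

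The base case $\Phi = Z \in \Var$ splits on whether $Z = Z_i$ for some $i$ or $Z \notin \{Z_1, \ldots, Z_n\}$: in the first case both sides unfold to $\semTV{\Phi_i}$ using the definition of substitution and of the update operation on valuations, and in the second case both sides reduce to $\V(Z)$. The Boolean case $\Phi = \lnot \Phi'$ and $\Phi = \Phi_1' \land \Phi_2'$ follow by applying the induction hypothesis to the immediate subformulas and using that $\semTV{\cdot}$ is defined pointwise in terms of $\setminus$ and $\cap$. The modal case $\Phi = [K]\Phi'$ is similar, using Definition~\ref{def:mu-calculus-semantics} for $[K]\Phi'$ together with the induction hypothesis on $\Phi'$, noting that $\pre_{[K]}$ depends only on $\semTV{\Phi'}$ and not on the valuation directly.

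The main obstacle, as expected, is the fixpoint case $\Phi = \nu W.\Phi'$. As is standard for capture-avoiding substitution, we may assume without loss of generality (via $\alpha$-renaming) that the bound variable $W$ is distinct from every $Z_i$ and does not occur free in any $\Phi_j$, so that
\[
(\nu W.\Phi')[Z_1 \cdots Z_n := \Phi_1 \cdots \Phi_n] = \nu W.(\Phi'[Z_1 \cdots Z_n := \Phi_1 \cdots \Phi_n]).
\]
Unfolding the semantics of $\nu$, the left-hand side becomes
\[
\bigcup\{S \subseteq \states{S} \mid S \subseteq \semT{\Phi'[\vec{Z}:=\vec{\Phi}]}{\V[W:=S]}\}.
\]
For each such $S$ we apply the induction hypothesis to $\Phi'$ with the updated valuation $\V[W:=S]$, rewriting the inner semantics as
\[
\semT{\Phi'}{\V[W:=S][\vec{Z} := \semT{\vec{\Phi}}{\V[W:=S]}]}.
\]
Because $W$ is not free in any $\Phi_j$, a standard coincidence argument (provable by a side induction on $\Phi_j$) yields $\semT{\Phi_j}{\V[W:=S]} = \semTV{\Phi_j}$; because $W \neq Z_i$ for every $i$, the two single-point updates commute, giving $\V[W:=S][\vec{Z} := \semTV{\vec{\Phi}}] = \V[\vec{Z}:=\semTV{\vec{\Phi}}][W:=S]$.

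Substituting these two simplifications back and then re-folding by Definition~\ref{def:mu-calculus-semantics} yields
\[
\bigcup\{S \subseteq \states{S} \mid S \subseteq \semT{\Phi'}{\V[\vec{Z}:=\semTV{\vec{\Phi}}][W:=S]}\}
= \semT{\nu W.\Phi'}{\V[\vec{Z}:=\semTV{\vec{\Phi}}]},
\]
which is the desired right-hand side. The coincidence lemma for free variables, and the commutativity of valuation updates for distinct variables, are the two auxiliary facts that make the fixpoint case go through; both are routine but need to be isolated before the main induction to keep the argument clean.
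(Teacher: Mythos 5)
Your proof is correct and follows essentially the same route as the paper's: structural induction on $\Phi$ with the induction hypothesis quantified over all valuations, the variable case split on whether $Z$ is among the $Z_i$, and the fixpoint case resolved by pushing the substitution under the binder, invoking a coincidence property for the non-free bound variable, and commuting the two valuation updates. The only difference is that you dispatch the subcase where the bound variable coincides with some $Z_i$ by $\alpha$-renaming, whereas the paper keeps the binder fixed and handles that subcase separately by dropping $Z_i$ from the substitution list via an inner induction on $n$; both are sound.
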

\begin{proof}
For notational conciseness we write $\vec{Z}$ for $Z_1 \cdots Z_n$, $\vec{\Phi}$ for $\Phi_1 \cdots \Phi_n$, and $\semTV{\vec{\Phi}}$ for $\semTV{\Phi_1} \cdots \semTV{\Phi_n}$.
The proof proceeds by induction on the structure of $\Phi$. Most cases follow straightforwardly from the induction hypothesis; we consider the cases for formulas that involve variables at the top level.
\begin{itemize}
    \item $\Phi = Z$. 
    If $Z = Z_i$ for some $i$, the following reasoning applies.
    \begin{flalign*}
    & \semTV{Z[\vec{Z} := \vec{\Phi}]} 
    && \\
    &= \semTV{Z_i[\vec{Z} := \vec{\Phi}]}
    && \text{$Z = Z_i$}
    \\
    & = \semTV{\Phi_i}
    && \text{Definition of substitution}
    \\
    & = \left(\V[\vec{Z} := \semTV{{\vec{\Phi}}}\,]\right)(Z_i)
    && \text{Definition of of $\V[\vec{Z} := \semTV{{\vec{\Phi}}}\,]$}
    \\
    & = \semT{Z}{\V[\vec{Z} := \semTV{{\vec{\Phi}}}]}
    && \text{Definition of $\semT{Z_i}{\V[\vec{Z} := \semTV{{\vec{\Phi}}}\,]}$, $Z = Z_i$}
    \end{flalign*}
    If $Z \neq Z_i$ for all $i$, we reason as follows.
    \begin{flalign*}
    & \semTV{Z[\vec{Z} := \vec{\Phi}]}
    && 
    \\
    & = \semTV{Z}
    && \text{$Z \neq Z_i$ for all $i$}
    \\
    & = \V(Z) 
    && \text{Definition of $\semTV{Z}$}
    \\
    & = \left(\V[\vec{Z} := \semTV{{\vec{\Phi}}}\,]\right)(Z)
    && \text{Definition of $\V[\vec{Z} := \semTV{{\vec{\Phi}}}\,]$, $Z \neq Z_i$ for all $i$}
    \\
    & = \semT{Z}{\V[\vec{Z} := \semTV{{\vec{\Phi}}}]}
    && \text{Definition of $\semT{Z}{\V[\vec{Z} := \semTV{{\vec{\Phi}}}]}$}
    \end{flalign*}
    \item $\Phi = \nu Z . \Phi'$. 
    In this case, and without loss of generality due to the fact that substitution is capture-free, we may assume that $Z$ is not free in any of the $\Phi_i$.
    If $Z \neq Z_i$ for all $i$, the following reasoning applies.
    \begin{flalign*}
        & \semTV{(\nu Z . \Phi')[\vec{Z} := \vec{\Phi}]} 
        && 
        \\
        & = \semTV{\nu Z . \left(\Phi'[\vec{Z} := \vec{\Phi}]\right)}
        && \text{$Z \neq Z_i$, $Z$ not free in $\vec{\Phi}$}
        \\
        & = \bigcup\{ S \subseteq \states{S} \mid S \subseteq \semT{\Phi'[\vec{Z} := \vec{\Phi}]}{\V[Z := S]} \} 
        && \text{Definition of $\semTV{\nu Z . \cdots}$}
        \\
        & = \bigcup\{ S \subseteq \states{S} \mid S \subseteq \semT{\Phi'}{(\V[Z := S])[\vec{Z} := \semTV{{\vec{\Phi}}}]} \} 
        && \text{Induction hypothesis}
        \\
        & = \bigcup\{ S \subseteq \states{S} \mid S \subseteq \semT{\Phi'}{(\V[\vec{Z} := \semTV{{\vec{\Phi}}}\,])[Z := S]} \}
        && \text{$Z \neq Z_i$, for all $i$}
        \\
        & = \semT{\nu Z . \Phi'}{\V[\vec{Z} := \semTV{{\vec{\Phi}}}\,]}
        &&
    \end{flalign*}

    If $Z = Z_i$ for some $i$ then from definition of substitution it follows that
    \[
    (\nu Z.\Phi')[\vec{Z} := \vec{\Phi}]
    =
    (\nu Z.\Phi')[\vec{Z}_{\neq i} := \vec{\Phi}_{\neq i}]
    \]
    where $\vec{Z}_{\neq i} = Z_1 \cdots Z_{i-1} Z_{i+1} \cdots Z_n$, and similarly for $\vec{\Phi}_{\neq i}$.
    Using an inductive argument on $|\vec{Z}| = n$ we can assume that 
    \[
    \semTV{(\nu Z.\Phi')[\vec{Z}_{\neq i} := \vec{\Phi}_{\neq i}]}
    =
    \semT{\nu Z.\Phi'}{\V[\vec{Z}_{\neq i} := \semTV{\vec{\Phi}_{\neq i}}]}.
    \]
    From this fact, and the observation that 
    \[
    \left( \V[\vec{Z} := \semTV{\vec{\Phi}}] \right)[Z_i = S]
    =
    \left( \V[\vec{Z}_{\neq i} := \semTV{\vec{\Phi}_{\neq i}}] \right)[Z_i = S]
    \]
    it is easy to establish the desired result.  The details are left to the reader.
    \qedhere
\end{itemize}
\end{proof}


\noindent
The next result is immediate from this lemma and Lemma~\ref{lem:mu-calculus-semantic-monotonicity}

\begin{corollary}[Monotonicity of substitution]\label{cor:monotonicity-of-substitution}
    Fix $\T, \V$ and $Z \in \Var$, and let $\Phi$, $\Phi_1$ and $\Phi_2$ be such that $Z$ is positive in $\Phi$ and $\semTV{\Phi_1} \subseteq \semTV{\Phi_2}$.  Then
    \[
        \semTV{\Phi[ Z:= \Phi_1 ]}
        \subseteq
        \semTV{\Phi[ Z:= \Phi_2 ]}.
    \]
\end{corollary}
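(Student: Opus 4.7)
The plan is to rewrite both sides of the desired inclusion as applications of the semantic formula function $\semfZTV{\Phi}$, and then invoke its monotonicity. First, for each $i \in \{1,2\}$, I would apply Lemma~\ref{lem:substitution} (in its single-variable instance, $n=1$) to obtain
\[
    \semTV{\Phi[Z := \Phi_i]} = \semT{\Phi}{\V[Z := \semTV{\Phi_i}]},
\]
which by Definition~\ref{def:formula-functions}\eqref{subdef:semantic-formula-function} is exactly $\semfZTV{\Phi}(\semTV{\Phi_i})$.

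Next, since $Z$ is positive in $\Phi$, Lemma~\ref{lem:mu-calculus-semantic-monotonicity} guarantees that $\semfZTV{\Phi} \in 2^{\states{S}} \to 2^{\states{S}}$ is monotonic over the subset lattice generated by $\states{S}$. Applying this monotonicity to the hypothesis $\semTV{\Phi_1} \subseteq \semTV{\Phi_2}$ yields
\[
    \semfZTV{\Phi}(\semTV{\Phi_1}) \subseteq \semfZTV{\Phi}(\semTV{\Phi_2}).
\]

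Chaining the two equalities from the first step with this inclusion delivers the desired result. Since both cited lemmas are already in hand, there is no substantive obstacle; the only matter of care is recognizing that the single-variable case of Lemma~\ref{lem:substitution} produces precisely the form $\semfZTV{\Phi}(\cdot)$ needed to match the monotonic function whose monotonicity is asserted by Lemma~\ref{lem:mu-calculus-semantic-monotonicity}.
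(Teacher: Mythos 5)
Your proposal is correct and follows exactly the route the paper intends: the corollary is stated as immediate from Lemma~\ref{lem:substitution} and Lemma~\ref{lem:mu-calculus-semantic-monotonicity}, and your argument simply spells out that chain, rewriting each side as $\semfZTV{\Phi}(\semTV{\Phi_i})$ via the single-variable case of the substitution lemma and then applying monotonicity. No gaps.
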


The following lemma states that the semantics of a formula is not affected by the values that a valuation assigns to variables that are not free in the formula.

\begin{lemma}[Semantics and non-free variables]\label{lem:substitution-of-bound-variables}
    Let $\Phi$ be a formula in which $Z \in \Var$ is not free. Then for all $\T = \lts{S}, \V$ and $S \subseteq \states{S}$,
    $
        \semTV{\Phi} = \semfZTV{\Phi}(S).
    $
\end{lemma}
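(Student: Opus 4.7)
The plan is to proceed by structural induction on $\Phi$, using the fact that $Z$ not being free in $\Phi$ is preserved (in an appropriate sense) when descending to subformulas. By definition of $\semfZTV{\Phi}$, the goal reduces to showing that $\semTV{\Phi} = \semT{\Phi}{\V[Z := S]}$ for all $S \subseteq \states{S}$.

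For the base case $\Phi = Y$ with $Y \in \Var$, the hypothesis that $Z$ is not free in $\Phi$ forces $Y \neq Z$, so $\V(Y) = (\V[Z := S])(Y)$ by the definition of valuation update, which gives the claim. The cases $\lnot \Phi'$, $\Phi_1 \land \Phi_2$, and $[K]\Phi'$ are routine: in each, the subformulas contain no free occurrence of $Z$, so the induction hypothesis applies directly to them and the result follows by unfolding the semantic clauses.

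The main case is $\Phi = \nu Y.\Phi'$, and I expect the mild obstacle to lie in the bookkeeping between valuation updates at $Z$ versus $Y$. I would split on whether $Y = Z$ or not. If $Y = Z$, then for every $S' \subseteq \states{S}$ we have $(\V[Z := S])[Y := S'] = \V[Y := S']$ (the outer update at $Y = Z$ overrides the inner update at $Z$), so the families of sets indexing the big unions in $\semTV{\nu Z.\Phi'}$ and $\semT{\nu Z.\Phi'}{\V[Z := S]}$ coincide, as therefore do the unions themselves. If $Y \neq Z$, then $Z$ is not free in $\Phi'$ either (since $Z$ was not free in $\nu Y.\Phi'$ and the binder $\nu Y$ does not capture $Z$), so for each $S' \subseteq \states{S}$ the induction hypothesis gives
\[
\semT{\Phi'}{\V[Y := S']} = \semT{\Phi'}{(\V[Y := S'])[Z := S]}.
\]
Since $Y \neq Z$ the two successive updates commute, so $(\V[Y := S'])[Z := S] = (\V[Z := S])[Y := S']$, and again the two big unions agree index-by-index.

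This completes the induction and hence the proof. No additional lemmas beyond the definitions of substitution, valuation update, and the semantic clauses of the mu-calculus are required.
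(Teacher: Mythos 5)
Your proof is correct and takes the same route as the paper, which simply observes that $\semfZTV{\Phi}(S) = \semT{\Phi}{\V[Z := S]}$ and appeals to a routine structural induction on $\Phi$. You have merely spelled out the details that the paper leaves implicit, including the key fixpoint case where the $Y = Z$ / $Y \neq Z$ split and the commutation of valuation updates at distinct variables do the work.
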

\begin{proof}
Observe that $\semfZTV{\Phi}(S) = \semT{\Phi}{\V[Z := S]}$. It follows from a routine induction on the structure of $\Phi$ that $\semTV{\Phi} = \semT{\Phi}{\V[Z := S]}$ if $Z$ does not appear free in $\Phi$. \qedhere
\end{proof}

\noindent
Lemmas~\ref{lem:fixpoint-characterizations} and~\ref{lem:substitution} guarantee that formulas can be \emph{unfolded}.

\begin{lemma}[Fixpoint unfolding]\label{lem:unfolding}
    Fix $\T$ and $\V$, and let $\sigma Z.\Phi$ be a fixpoint formula.  Then $\semTV{\sigma Z.\Phi} = \semTV{\Phi[Z := \sigma Z.\Phi]}$.
\end{lemma}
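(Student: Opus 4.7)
The plan is to chain together the substitution identity (Lemma~\ref{lem:substitution}) with the fixpoint characterization of the mu-calculus semantics (Lemma~\ref{lem:fixpoint-characterizations}). Intuitively, unfolding amounts to evaluating $\Phi$ in a valuation that binds $Z$ to the denotation of $\sigma Z.\Phi$, and since that denotation \emph{is} the relevant fixpoint, applying $\semfZTV{\Phi}$ to it must give it back.

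I would first apply Lemma~\ref{lem:substitution} with the singleton list $Z$ substituted by $\sigma Z.\Phi$ to obtain
\[
\semTV{\Phi[Z := \sigma Z.\Phi]}
=
\semT{\Phi}{\V[Z := \semTV{\sigma Z.\Phi}]}
=
\semfZTV{\Phi}\bigl(\semTV{\sigma Z.\Phi}\bigr),
\]
where the second equality is just the definition of the semantic formula function $\semfZTV{\Phi}$ from Definition~\ref{def:formula-functions}. Next, I would observe that the well-formedness of $\sigma Z.\Phi$ ensures $Z$ is positive in $\Phi$ (in the $\mu$ case one uses the derived-operator definition $\mu Z.\Phi = \lnot\nu Z.\lnot\Phi[Z := \lnot Z]$, together with the remark after Definition~\ref{def:mu-calculus-syntax} that positivity is preserved by this rewriting), so Lemma~\ref{lem:mu-calculus-semantic-monotonicity} applies and $\semfZTV{\Phi}$ is monotonic over the subset lattice on $\states{S}$. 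Consequently Lemma~\ref{lem:fixpoint-characterizations} gives
\[
\semTV{\sigma Z.\Phi} = \sigma\bigl(\semfZTV{\Phi}\bigr),
\]
which by Definition~\ref{def:fixpoint} is a fixpoint of $\semfZTV{\Phi}$. Substituting this back yields
\[
\semfZTV{\Phi}\bigl(\semTV{\sigma Z.\Phi}\bigr) = \semTV{\sigma Z.\Phi},
\]
and combining with the earlier chain of equalities completes the proof.

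No part of this is really a major obstacle: the only subtlety is making sure the argument works uniformly for both $\sigma = \nu$ and $\sigma = \mu$. Since Lemma~\ref{lem:fixpoint-characterizations} is stated for both $\nu$ and $\mu$ under the positivity assumption, and since positivity of $Z$ in $\Phi$ is guaranteed by well-formedness in either case, the reasoning above proceeds identically in both cases, so the proof can be written as a single uniform calculation rather than splitting into cases.
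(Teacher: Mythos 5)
Your proposal is correct and follows essentially the same route as the paper's proof: both chain Lemma~\ref{lem:substitution}, the definition of $\semfZTV{\Phi}$, and Lemma~\ref{lem:fixpoint-characterizations} together with the fact that $\sigma(\semfZTV{\Phi})$ is a fixpoint; you merely run the calculation from the right-hand side instead of the left. The extra remarks on positivity and the uniform treatment of $\mu$ and $\nu$ are fine but add nothing beyond what the paper's argument already implicitly uses.
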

\begin{proof}
The result is established using the following reasoning.
\begin{flalign*}
    \semTV{\sigma Z.\Phi} 
    & = \sigma \semfTV{Z}{\Phi} 
    && \text{Lemma~\ref{lem:fixpoint-characterizations}}
    \\
    & = \semfTV{Z}{\Phi}(\sigma \semfTV{Z}{\Phi}) 
    && \text{Definition of fixed point}
    \\
    & = \semT{\Phi}{[Z := \sigma \semfTV{Z}{\Phi}]} 
    && \text{Definition~\ref{def:formula-functions}}
    \\
    & = \semT{\Phi}{[Z := \semTV{\sigma Z.\Phi}]}
    && \text{Lemma~\ref{lem:fixpoint-characterizations}}
    \\
    & = \semTV{\Phi[Z := \sigma Z . \Phi} 
    && \text{Lemma~\ref{lem:substitution}}
\end{flalign*}
\qedhere
\end{proof}

\section{Base proof system}\label{sec:base-proof-system}

This section defines the base proof system for the mu-calculus considered in this paper.  It mirrors the ones given in~\cite{BS1992,Bra1991} and is intended to prove that sets of states in a transition system satisfy mu-calculus formulas.  Later in the paper we will extend this proof system in various ways.
In what follows, fix sort $\Sigma$ and countably infinite propositional variable set $\Var$.

\subsection{Definition lists and sequents}
The proof system reasons about \emph{sequents}, which make statements about sets of states satisfying mu-formulas.  Our sequents also involve \emph{definition lists}, which are used in the construction of proofs to control the unfolding of fixpoint formulas.  We define definition lists and sequents below.

\paragraph{Definition lists.}
Definition lists bind fresh variables in $\Var$ to fixpoint formulas.  In a proof setting, such a list records the fixpoint formulas that have been unfolded previously, so that decisions about whether or to unfold again later in the proof can be made.  Here we define definition lists formally and establish basic results about them.

\begin{definition}[Definition lists]\label{def:definition-list}
    A \emph{definition list} $\Delta$ is a finite sequence $(U_1 = \Phi_1) \cdots (U_n = \Phi_n)$, with each $U_i \in \Var$ and $\Phi_i \in \muformsSV$, satisfying the following.
    \begin{enumerate}
        \item
              If $i \neq j$ then $U_i \neq U_j$.
        \item
              For all $1 \leq i,j \leq n$, $U_i$ cannot appear bound anywhere in $\Phi_j$.
        \item
              If $i \leq j$ then $U_j$ cannot appear free in $\Phi_i$.
    \end{enumerate}
    The individual $(U_i = \Phi_i)$ in a definition list are sometimes called \emph{definitions}, with each $U_i$ referred to as a \emph{definitional constant}. We also define $\Delta(U_i) = \Phi_i$ to be the formula associated with $U_i$ in $\Delta$ and $\dom(\Delta) = \{U_1, \ldots, U_n\}$ to be the set of definitional constants in $\Delta$.
\end{definition}

\noindent
A definition list consists of a sequence bindings, or definitions, of form $(U_i = \Phi_i)$.  The constraints ensure that every $U_i$ is unique and not part of any $\sigma$ operator inside $\Phi_i$.  $U_i$ may also appear free in definitions to the right of $(U_i = \Phi_i)$, but not in $\Phi_i$ or in definitions to the left.

Since definition lists are sequences the sequence notations defined in Section~\ref{subsec:sequences}, including $\emptyL$, $\cdot$ and $\preceq$, are applicable.  We now introduce additional definition-list terms and notation.

\begin{definition}[Prefixes / suffixes of definition lists]\label{def:definition-list-prefix}
Let $\Delta = \Delta_1 \cdot (U = \Phi) \cdot \Delta_2$ be a definition list.
\begin{enumerate}
    \item 
    $\Delta_{\prec U} = \Delta_1$ is the longest prefix of $\Delta$ that omits $U$.
    \item 
    $\Delta_{\preceq U} = \Delta_1 \cdot (U = \Phi)$ is the shortest prefix of $\Delta$ that includes $U$.
    \item 
    $\Delta_{\succ U} = \Delta_2$ is the longest suffix of $\Delta$ that omits $U$.
    \item 
    $\Delta_{\succeq U} = (U = \Phi) \cdot \Delta_2$ is the shortest suffix of $\Delta$ that includes $U$.
\end{enumerate}
\end{definition}

\begin{definition}[Compatibility of definition lists]\label{def:definition-list-compatibility}
Let $\Delta_1$ and $\Delta_2$ be definition lists.  Then $\Delta_2$ is \emph{compatible with} $\Delta_1$ iff $\dom(\Delta_1) \cap \dom(\Delta_2) = \emptyset$ and no $U_2 \in \dom(\Delta_2)$ appears in $\Delta_1(U_1)$ for any $U_1 \in \dom(\Delta_1)$.
\end{definition}

\noindent 
We have the following.

\begin{lemma}[Definition-list concatenation]\label{lem:definition-list-concatenation}
    \begin{enumerate}
        \item
        Let $\Delta$ be a definition list such that $\Delta = \Delta_1 \cdot \Delta_2$.  Then $\Delta_1$ and $\Delta_2$ are definition lists, and $\Delta_2$ is compatible with $\Delta_1$.
        \item
        Suppose that $\Delta_1$ and $\Delta_2$ are definition lists such that $\Delta_2$ is compatible with $\Delta_1$.  Then $\Delta_1 \cdot \Delta_2$ is a definition list.
    \end{enumerate}
\end{lemma}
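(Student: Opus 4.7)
The plan is to unroll each claim against Definition~\ref{def:definition-list} and Definition~\ref{def:definition-list-compatibility}; no machinery beyond these definitions is needed.

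For the forward direction, I would fix $\Delta = \Delta_1 \cdot \Delta_2$ and read each of the three clauses of Definition~\ref{def:definition-list} as a universal statement over the indices of $\Delta$. Restricting those indices to the prefix yields exactly the three clauses for $\Delta_1$; restricting to the suffix gives $\Delta_2$. For compatibility, disjointness of the two domains is immediate from clause~(1) of $\Delta$. To show that no $U \in \dom(\Delta_2)$ appears in any $\Delta_1(U_i)$, I would combine clause~(2) of $\Delta$ (which rules out bound occurrences of $U$ in any $\Phi_i$) with clause~(3) of $\Delta$ (which rules out free occurrences of $U$ in $\Delta_1(U_i)$, since $U$ is defined strictly later in $\Delta$ than $U_i$).

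For the reverse direction, I would assume $\Delta_1$ and $\Delta_2$ are definition lists with $\Delta_2$ compatible with $\Delta_1$ and verify clauses~(1)--(3) for $\Delta_1 \cdot \Delta_2$ by splitting on which sublist each index lies in. Clause~(1) follows from uniqueness within each sublist together with the disjointness half of compatibility. For clause~(3), the same-sublist cases are inherited, and the only cross case ($U_i \in \dom(\Delta_1)$, $U_j \in \dom(\Delta_2)$) follows from the second compatibility clause: $U_j$ does not occur in $\Delta_1(U_i)$ at all, so in particular it does not occur free there. Clause~(2) decomposes into the same four cases; the two same-sublist cases are inherited, and the cross case where the bound constant belongs to $\dom(\Delta_2)$ and the containing formula belongs to $\Delta_1$ is likewise covered by compatibility.

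The one step that is not entirely mechanical is the remaining cross case of clause~(2), namely that a constant $U_i \in \dom(\Delta_1)$ is not bound inside any formula of $\Delta_2$. Definition~\ref{def:definition-list-compatibility} as written does not explicitly mention occurrences of $\Delta_1$-constants inside $\Delta_2$, so I would rely on the intended reading that the constants introduced by $\Delta_1$ function as free atoms throughout $\Delta_2$: either by recording this as an implicit, symmetric strengthening of compatibility, or by observing that every concatenation produced by the proof rules later in the paper is built with $\Delta_2$ freshly chosen so as not to rebind any element of $\dom(\Delta_1)$. With that point settled the remaining verification is bookkeeping against the definitions.
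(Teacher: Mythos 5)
Your proof takes the same route as the paper, which disposes of this lemma with the single line ``Immediate from Definitions~\ref{def:definition-list} and~\ref{def:definition-list-compatibility}''; your case-by-case unrolling is simply the expansion of that remark, and the forward direction and all but one case of the reverse direction check out exactly as you describe.

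The one case you flag is a genuine issue, and you have diagnosed it correctly: Definition~\ref{def:definition-list-compatibility} constrains occurrences of $\dom(\Delta_2)$-constants in the formulas of $\Delta_1$, but says nothing about occurrences of $\dom(\Delta_1)$-constants in the formulas of $\Delta_2$. Consequently clause~2 of Definition~\ref{def:definition-list} for the concatenation --- that no $U_i \in \dom(\Delta_1)$ appears \emph{bound} in any $\Phi_j$ coming from $\Delta_2$ --- does not follow from compatibility as literally stated, and one can cook up abstract counterexamples (e.g.\ $\Delta_2$ containing a formula $\nu U_1.U_1$ with $U_1 \in \dom(\Delta_1)$). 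The paper glosses over this; in its intended use the problem never arises because definitional constants are globally fresh and are never rebound by any rule, so no formula in any definition list ever binds a definitional constant. Either of your proposed repairs --- strengthening compatibility symmetrically, or invoking the freshness convention under which concatenations are actually formed --- closes the gap, so your proof is correct modulo a defect that is really in the paper's definition rather than in your argument.
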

\begin{proof}
    Immediate from Definitions~\ref{def:definition-list} and~\ref{def:definition-list-compatibility}.\qedhere
\end{proof}

\begin{corollary}\label{cor:prefix-suffix-inheritance}
    Let $\Delta$ be a definition list and $U \in \dom(\Delta)$ be a definitional constant.  Then $\Delta_{\prec U}, \Delta_{\preceq U}$, $\Delta_{\succ U}$ and $\Delta_{\succeq U}$ are also definition lists.
\end{corollary}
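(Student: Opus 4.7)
The plan is to derive all four claims from Lemma~\ref{lem:definition-list-concatenation}(1) applied to two different decompositions of $\Delta$. From Definition~\ref{def:definition-list-prefix} we immediately have two ways to write $\Delta$ as a concatenation involving the definition $(U = \Phi)$ for $\Phi = \Delta(U)$:
\[
\Delta \;=\; \Delta_{\preceq U} \cdot \Delta_{\succ U}
\;=\; \Delta_{\prec U} \cdot \Delta_{\succeq U}.
\]

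First I would apply Lemma~\ref{lem:definition-list-concatenation}(1) to the decomposition $\Delta = \Delta_{\preceq U} \cdot \Delta_{\succ U}$, which directly yields that both $\Delta_{\preceq U}$ and $\Delta_{\succ U}$ are definition lists. Then applying the same lemma to $\Delta = \Delta_{\prec U} \cdot \Delta_{\succeq U}$ yields that $\Delta_{\prec U}$ and $\Delta_{\succeq U}$ are definition lists as well. This covers all four cases of the corollary.

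The only very minor issue to verify before invoking the lemma is that the two decompositions are well-defined concatenations of finite sequences, which is immediate from $U \in \dom(\Delta)$ and Definition~\ref{def:definition-list-prefix}: since $U$ occurs exactly once in $\Delta$, the positions used to split $\Delta$ are unambiguous, so each of $\Delta_{\prec U}$, $\Delta_{\preceq U}$, $\Delta_{\succ U}$, and $\Delta_{\succeq U}$ is a uniquely determined (possibly empty) sub-sequence of $\Delta$. There is no genuine obstacle here; the corollary is essentially a bookkeeping consequence of the fact that Lemma~\ref{lem:definition-list-concatenation} already guarantees that any concatenative decomposition of a definition list splits it into two definition lists, and I expect the full proof to be a single sentence.
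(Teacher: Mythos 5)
Your proposal is correct and matches the paper's argument, which likewise derives the corollary immediately from Lemma~\ref{lem:definition-list-concatenation} and Definition~\ref{def:definition-list-prefix}; the two decompositions you use are exactly the ones implicit in that one-line proof. Nothing is missing.
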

\begin{proof}
    Immediate from Lemma~\ref{lem:definition-list-concatenation} and Definition~\ref{def:definition-list-prefix}.\qedhere
\end{proof}

Syntactically, definition lists may be seen as \emph{iterated substitutions}.  The intuition is formalized below.

\begin{definition}[Formula expansion by a definition list]\label{def:formula-expansion}
    Let $\Delta$ be a definition list and $\Phi \in \muformsSV$.  Then $\Phi[\Delta]$, the \emph{expansion} of $\Phi$ with respect to $\Delta$, is defined inductively on $\Delta$ as follows.
    \begin{itemize}
        \item
        $\Phi[\emptyL] = \Phi$
        \item
        $\Phi[\Delta \cdot (U = \Psi)] = \left( \Phi [U := \Psi] \right)[\Delta]$
    \end{itemize}
\end{definition}

\noindent
Note that $\Phi[\Delta]$ contains no occurrences of any elements of $\dom(\Delta)$.

The definition of $\Phi[\Delta]$ recurses from the back of list $\Delta$ to the front.
The following property, which characterizes $\Phi[\Delta]$ for non-empty $\Delta$ in terms of the first definition in $\Delta$, is useful in proofs by induction on $\Delta$.
\begin{lemma}\label{lem:nonempty-formula-expansion}
        Let $\Delta = (U_1 = \Phi_1) \cdot \Delta'$ be a non-empty definition list.  Then for any $\Phi \in \muformsSV$,
        $
            \Phi[\Delta] = \left( \Phi[\Delta'] \right) \,[U_1 := \Phi_1].
        $
\end{lemma}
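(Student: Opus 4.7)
The plan is to prove the identity by induction on the length of $\Delta'$. The subtlety is that Definition~\ref{def:formula-expansion} recurses from the \emph{right} end of the list, while the statement we want to prove says we can peel off the \emph{leftmost} definition in $\Delta$; so the induction lets us walk through the right-to-left recursion until only the single leading definition $(U_1 = \Phi_1)$ remains.

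For the base case take $\Delta' = \emptyL$, so $\Delta = (U_1 = \Phi_1)$. Then by two applications of Definition~\ref{def:formula-expansion},
\[
\Phi[\Delta] \;=\; \Phi[\emptyL \cdot (U_1 = \Phi_1)] \;=\; \bigl(\Phi[U_1 := \Phi_1]\bigr)[\emptyL] \;=\; \Phi[U_1 := \Phi_1],
\]
while $\bigl(\Phi[\Delta']\bigr)[U_1 := \Phi_1] = \bigl(\Phi[\emptyL]\bigr)[U_1 := \Phi_1] = \Phi[U_1 := \Phi_1]$, so the two sides agree.

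For the inductive step, assume the identity holds for all definition lists $\Delta''$ strictly shorter than $\Delta'$ and all formulas, and suppose $\Delta' = \Delta'' \cdot (U = \Psi)$ with $\Delta''$ shorter. Then $\Delta = (U_1 = \Phi_1) \cdot \Delta'' \cdot (U = \Psi)$, and applying Definition~\ref{def:formula-expansion} to the rightmost definition of $\Delta$ gives
\[
\Phi[\Delta] \;=\; \bigl(\Phi[U := \Psi]\bigr)\bigl[(U_1 = \Phi_1) \cdot \Delta''\bigr].
\]
Now invoke the induction hypothesis on the formula $\Phi[U := \Psi]$ and the list $(U_1 = \Phi_1) \cdot \Delta''$ (whose tail $\Delta''$ is shorter than $\Delta'$) to rewrite the right-hand side as $\bigl(\Phi[U := \Psi][\Delta'']\bigr)[U_1 := \Phi_1]$. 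Finally, one more application of Definition~\ref{def:formula-expansion} from right to left identifies $\Phi[U := \Psi][\Delta'']$ with $\Phi[\Delta'' \cdot (U = \Psi)] = \Phi[\Delta']$, completing the step.

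There is no real obstacle: the proof is a direct unfolding of the recursive definition, with the only point of care being to keep straight which end of the list is being peeled. In particular, the constraints from Definition~\ref{def:definition-list} (distinctness of the $U_i$, no capture of definitional constants inside bound occurrences) are not needed for this purely syntactic identity, since each substitution step is a capture-free simultaneous substitution and the operations commute in the way asserted regardless of the internal shape of $\Phi$.
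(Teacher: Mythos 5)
Your proof is correct and follows essentially the same route as the paper's: induction on the length of $\Delta'$, with the induction hypothesis quantified over all formulas so it can be applied to $\Phi[U := \Psi]$, and the same unfolding of the right-to-left recursion in Definition~\ref{def:formula-expansion} in both the base case and the step.
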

\remove{
\begin{proofsketch}
    By induction on $\Delta'$. Details may be found in the appendix.\qedhere
\end{proofsketch}
}
\begin{proof}
    Proceeds by induction on $\Delta'$.  There are two cases to consider.
    \begin{itemize}
        \item $\Delta' = \emptyL$.
            Fix $\Phi$.  We have the following.
            \begin{align*}
            \Phi[\Delta]
            &= \Phi[(U_1 = \Phi_1)]
            && \text{$\Delta = (U_1 = \Phi_1) \cdot \Delta', \Delta' = \emptyL$}
            \\
            &= \Phi[\emptyL \cdot (U_1 = \Phi_1)]
            && \text{$\vec{w} = \emptyL \cdot \vec{w}$ for any sequence $\vec{w}$}
            \\
            &= \left( \Phi[\emptyL] \right) [U_1 := \Phi_1]
            && \text{Definition of $\Phi[\emptyL \cdot (U_1 = \Phi_1)]$}
            \\
            &= \left( \Phi[\Delta'] \right) [U_1 := \Phi_1]
            && \text{$\Delta' = \emptyL$}
            \end{align*}
        \item $\Delta' = \Delta'' \cdot (U' = \Phi')$.
            The induction hypothesis guarantees that for any $\Phi$, $\Phi[(U_1 = \Phi_1) \cdot \Delta''] = \left( \Phi[\Delta''] \right)[U_1 := \Phi_1]$.  Now fix $\Phi$.  We reason as follows.
            \begin{align*}
            \Phi[\Delta]
            &= \Phi[(U_1 = \Phi_1) \cdot \Delta']
            && \text{$\Delta = (U_1 = \Phi_1) \cdot \Delta'$}
            \\
            &= \Phi[(U_1 = \Phi_1) \cdot (\Delta'' \cdot (U' = \Phi'))]
            && \text{$\Delta' = \Delta'' \cdot (U' = \Phi')$}
            \\
            &= \Phi[((U_1 = \Phi_1) \cdot \Delta'') \cdot (U' = \Phi'))]
            && \text{Associativity of $\cdot$}
            \\
            &= \left( \Phi[U' := \Phi'] \right) [(U_1 = \Phi_1) \cdot \Delta'']
            && \text{Definition of $\Phi[\cdots (U' = \Phi')]$}
            \\
            &= \left( \left( \Phi[U' := \Phi'] \right)[\Delta''] \right) \,[U_1 := \Phi_1]
            && \text{Induction hypothesis}
            \\
            &= \left( \Phi[\Delta'' \cdot (U' = \Phi'] \right) \,[U_1 := \Phi_1]
            && \text{Definition of $\Phi[\Delta'' \cdot (U' = \Phi')]$}
            \\
            &= \left( \Phi[\Delta'] \right) \,[U_1 := \Phi_1]
            && \text{$\Delta' = \Delta'' \cdot (U' = \Phi')$}
            \end{align*}
            \qedhere
    \end{itemize}
\end{proof}

In a similar vein we define the semantic extension, $\V[\Delta]$, of valuation $\V$ by definition list $\Delta$ as follows.
Essentially, $\V[\Delta]$ updates $\V$ with the semantic interpretation of the equations appearing in $\Delta$. We use this notion later to assign semantics to the sequents labeling the nodes in the proof tree.

\begin{definition}[Valuation extension by a definition list]\label{def:valuation-extension}
    Let $\T = \lts{S}$ be an LTS over $\Sigma$, $\V \in \Var \to 2^\states{S}$ a valuation,
    and $\Delta$ a definition list.  Then $\V[\Delta]$, the \emph{extension} of $\V$ by $\Delta$, is defined inductively on $\Delta$ as follows.
    \begin{enumerate}
        \item
        $\V[\emptyL] = \V$
        \item
        $\V[(U = \Phi) \cdot \Delta] = \bigl(\V \,[\, U := \semTV{\Phi} \;]\,\bigr) \,[\Delta]$
    \end{enumerate}
\end{definition}

In contrast with $\Phi[\Delta]$, the definition of $\V[\Delta]$ recurses from the from front of $\Delta$ to the back.
The next lemma gives a characterization of $\V[\Delta]$ for non-empty $\Delta$ in terms of the last binding contained in $\Delta$.

\begin{lemma}\label{lem:nonempty-valuation-extension}
    Let $\T$ be an LTS over $\Sigma$ and $\Delta = \Delta' \cdot (U = \Phi)$ a non-empty definition list.
    Then for any valuation $\V \in \Var \to 2^\states{S}$, $\V[\Delta] = \left( \V[\Delta'] \right) [U := \semT{\Phi}{\V[\Delta']}]$.
\end{lemma}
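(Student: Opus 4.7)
The plan is to prove this by induction on $\Delta'$, mirroring the structure of the proof of Lemma~\ref{lem:nonempty-formula-expansion}, since the two lemmas are dual: there the recursion unwinds from the tail of $\Delta$ while the statement peels off the head, and here the recursion unwinds from the head while the statement peels off the tail.

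For the base case, take $\Delta' = \emptyL$, so $\Delta = (U = \Phi)$. On the left side, $\V[\Delta] = \V[(U = \Phi) \cdot \emptyL] = (\V[U := \semTV{\Phi}])[\emptyL] = \V[U := \semTV{\Phi}]$ by the defining clauses of $\V[\cdot]$. On the right side, $(\V[\emptyL])[U := \semT{\Phi}{\V[\emptyL]}] = \V[U := \semTV{\Phi}]$ likewise. These agree.

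For the inductive step, write $\Delta' = (U_1 = \Phi_1) \cdot \Delta''$, so that $\Delta = (U_1 = \Phi_1) \cdot \Delta'' \cdot (U = \Phi)$. Fix an arbitrary $\V$. Applying the recursive defining clause of $\V[\cdot]$ to the outermost binding gives
\[
\V[\Delta] = \bigl(\V[U_1 := \semTV{\Phi_1}]\bigr)\bigl[\Delta'' \cdot (U = \Phi)\bigr].
\]
Now apply the induction hypothesis to the shorter tail $\Delta''$, instantiated at the valuation $\V' = \V[U_1 := \semTV{\Phi_1}]$, to conclude
\[
\V'[\Delta'' \cdot (U = \Phi)] = (\V'[\Delta''])\bigl[U := \semT{\Phi}{\V'[\Delta'']}\bigr].
\]
The key observation to finish is that $\V'[\Delta''] = (\V[U_1 := \semTV{\Phi_1}])[\Delta''] = \V[(U_1 = \Phi_1) \cdot \Delta''] = \V[\Delta']$, again by the defining clause of $\V[\cdot]$. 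Substituting this identity into both the valuation update and the semantics gives exactly $(\V[\Delta'])[U := \semT{\Phi}{\V[\Delta']}]$, as required.

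I do not anticipate a real obstacle; the only thing to be careful about is that in the inductive step the induction hypothesis must be applied to an \emph{arbitrary} valuation (here $\V'$), not just to the valuation $\V$ fixed at the start, so the statement of the inductive hypothesis needs to be quantified over $\V$. This is analogous to how, in Lemma~\ref{lem:nonempty-formula-expansion}, the induction on $\Delta'$ is carried out for an arbitrary formula $\Phi$ at each step rather than a fixed one.
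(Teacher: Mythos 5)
Your proof is correct and follows essentially the same route as the paper's: induction on $\Delta'$, peeling off the leading binding $(U_1 = \Phi_1)$, applying the induction hypothesis at the updated valuation $\V[U_1 := \semTV{\Phi_1}]$, and folding that valuation back into $\V[\Delta']$ via the defining clause. Your closing remark about quantifying the induction hypothesis over arbitrary valuations is exactly the point the paper's proof also relies on.
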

\remove{
\begin{proofsketch}
    Fix an arbitrary LTS $\T$ and $\Delta = \Delta' \cdot (U = \Phi)$.
    The proof is by induction on $\Delta'$. Details can be found in the appendix.
\end{proofsketch}
}
\begin{proof}
    Fix an arbitrary LTS $\T$ and $\Delta = \Delta' \cdot (U = \Phi)$.
    We proceed by induction on $\Delta'$.  There are two cases to consider.
    \begin{itemize}
        \item
              $\Delta' = \emptyL$.
              Fix $\V$.  The definition of $\V[\Delta]$ then guarantees the desired result.
        \item
              $\Delta' = (U_1 = \Phi_1) \cdot \Delta''$.
              The induction hypothesis guarantees that for all $\V$,
              $\V[\Delta'' \cdot (U = \Phi)] = \left( \V[\Delta''] \right) [U := \semT{\Phi}{\V[\Delta'']}].$
              We reason as follows.
              \begin{align*}
                  \V[\Delta]
                   & = \V[\Delta' \cdot (U = \Phi)]
                   &                                                                         & \text{$\Delta = \Delta' \cdot (U = \Phi)$}
                  \\
                   & = \V[((U_1 = \Phi_1) \cdot \Delta'') \cdot (U = \Phi)]
                   &                                                                         & \text{$\Delta' = (U_1 = \Phi_1) \cdot \Delta''$}
                  \\
                   & = \V[(U_1 = \Phi_1) \cdot (\Delta'' \cdot (U = \Phi))]
                   &                                                                         & \text{Commutativity of $\cdot$}
                  \\
                   & = \left( \V[ U_1 := \semTV{\Phi_1}] \right) [\Delta'' \cdot (U = \Phi)]
                   &                                                                         & \text{Definition of $\V[(U_1 = \Phi_1) \cdots]$}
                  \\
                   & = \V'[U := \semT{\Phi}{\V'}]
                   &                                                                         & \text{Induction hypothesis;}
                  \\
                   &                                                                         &                                                           & \text{$\V' = \left( \V[\, U_1 := \semTV{\Phi_1}\,] \right) [\Delta'']$}
                  \\
                   & = \V'' [U := \semT{\Phi}{\V''}]
                   &                                                                         & \text{Definition of $\V[(U_1 = \Phi_1) \cdot \Delta'']$;}
                  \\
                   &                                                                         &                                                           & \text{$\V'' = \V[\,(U_1 = \Phi_1) \cdot \Delta''\,]$}
                  \\
                   & = \left( \V[\Delta'] \right)[U := \semT{\Phi}{\V[\Delta']}]
                   &                                                                         & \text{$\Delta' = (U_1 = \Phi_1) \cdot \Delta''$}
              \end{align*}
              \qedhere
    \end{itemize}
\end{proof}

\noindent
In Lemma~\ref{lem:definition-list-correspondence} we establish a correspondence between $\semTV{\Phi[\Delta]}$ and $\semT{\Phi}{\V[\Delta]}$.

\begin{lemma}[Definition-list correspondence]\label{lem:definition-list-correspondence}
    Let $\Phi \in \muformsSV$.  Then for every LTS $\T$ over $\Sigma$, definition list $\Delta$, and valuation $\V$,
    $
    \semTV{\Phi[\Delta]}
    =
    \semT{\Phi}{\V[\Delta]}.
    $
\end{lemma}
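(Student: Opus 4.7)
The plan is to prove the equality by induction on the length of $\Delta$, quantifying over $\Phi$ and $\V$ in the induction hypothesis so that we can instantiate them freely in the inductive step (the LTS $\T$ stays fixed throughout).

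For the base case $\Delta = \emptyL$, Definition~\ref{def:formula-expansion} gives $\Phi[\emptyL] = \Phi$ and Definition~\ref{def:valuation-extension} gives $\V[\emptyL] = \V$, so both sides are $\semTV{\Phi}$. For the inductive step I plan to peel a definition off the \emph{back} of $\Delta$, writing $\Delta = \Delta' \cdot (U = \Phi')$, because this matches the direction of recursion in Definition~\ref{def:formula-expansion}. By that definition,
\[
\Phi[\Delta] \;=\; \Phi[\Delta' \cdot (U = \Phi')] \;=\; \bigl(\Phi[U := \Phi']\bigr)[\Delta'],
\]
and then applying the induction hypothesis to the shorter list $\Delta'$ (with $\Phi[U := \Phi']$ in the role of $\Phi$ and $\V$ unchanged) yields
\[
\semTV{\Phi[\Delta]} \;=\; \semT{\Phi[U := \Phi']}{\V[\Delta']}.
\]

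Next I will invoke Lemma~\ref{lem:substitution} (the single-substitution specialisation of the substitution / valuation correspondence) to rewrite this as
\[
\semT{\Phi}{(\V[\Delta'])[\,U := \semT{\Phi'}{\V[\Delta']}\,]}.
\]
On the other side, Lemma~\ref{lem:nonempty-valuation-extension} gives precisely
\[
\V[\Delta] \;=\; \V[\Delta' \cdot (U = \Phi')] \;=\; (\V[\Delta'])\bigl[\,U := \semT{\Phi'}{\V[\Delta']}\,\bigr],
\]
so $\semT{\Phi}{\V[\Delta]}$ matches the rewritten left-hand side and the inductive step is complete.

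The only delicate point, and the step that made me choose to peel from the back rather than the front, is the mismatch between the recursion directions used in Definitions~\ref{def:formula-expansion} and~\ref{def:valuation-extension}: $\Phi[\Delta]$ is defined by recursion on the last element while $\V[\Delta]$ recurses on the first. Peeling from the back lets us use Definition~\ref{def:formula-expansion} directly for the syntactic side and then reach for Lemma~\ref{lem:nonempty-valuation-extension} on the semantic side, so the two sides meet exactly after one application of Lemma~\ref{lem:substitution}. A symmetric proof peeling from the front is possible using Lemma~\ref{lem:nonempty-formula-expansion} together with the base clause of Definition~\ref{def:valuation-extension}, but it requires the induction hypothesis to be quantified over arbitrary $\V$, which is why I emphasise that quantification above.
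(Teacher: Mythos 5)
Your proof is correct, but it takes the dual route to the one in the paper. The paper peels the definition list from the \emph{front}, writing $\Delta = (U_1 = \Phi_1)\cdot\Delta'$: there the base clause of Definition~\ref{def:valuation-extension} handles the semantic side directly, while Lemma~\ref{lem:nonempty-formula-expansion} is needed to re-express $\Phi[\Delta]$ in terms of $\Phi[\Delta']$, and the induction hypothesis is applied to the \emph{same} formula under the \emph{updated} valuation $\V[U_1 := \semTV{\Phi_1}]$ (hence the paper quantifies its induction hypothesis over $\V$). You peel from the \emph{back}, so Definition~\ref{def:formula-expansion} applies directly on the syntactic side, Lemma~\ref{lem:nonempty-valuation-extension} does the re-expression work on the semantic side, and your induction hypothesis is applied to the \emph{modified} formula $\Phi[U := \Phi']$ under the \emph{same} valuation --- so what your argument actually needs is quantification over $\Phi$, not over $\V$; your closing remark gets this attribution slightly backwards, but since you quantify over both it costs you nothing. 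Both proofs pivot on exactly one application of Lemma~\ref{lem:substitution}, and neither has an advantage beyond which auxiliary lemma (\ref{lem:nonempty-formula-expansion} versus \ref{lem:nonempty-valuation-extension}) it leans on; the argument is sound as written.
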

\begin{proof}
    Fix $\Phi$ and LTS $\T$ over $\Sigma$.
    The proof proceeds by induction on $\Delta$.  In the base case $\Delta = \emptyL$, and the result is immediate, as $\Phi[\emptyL] = \Phi$ and $\V[\emptyL] = \V$.  Now assume that $\Delta = (U_1 = \Phi_1) \cdot \Delta'$.  The induction hypothesis guarantees that for any valuation $\V$,
    $\semTV{\Phi[\Delta']}
        =
        \semT{\Phi}{\V[\Delta']}.$
    We must prove that for any valuation $\V$,
    $\semTV{\Phi[\Delta]}
        =
        \semT{\Phi}{\V[\Delta]}.$
    So fix $\V$.  We reason as follows.
    \[
        \begin{array}{rcl@{\;\;\;}p{5cm}}
            \semTV{\Phi[\Delta]}
             & =
             & \semTV{\Phi[(U_1 = \Phi_1) \cdot \Delta']}
             & $\Delta = (U_1 = \Phi_1) \cdot \Delta'$
            \\[6pt]
             & =
             & \semTV{(\Phi[\Delta'])[U_1 := \Phi_1]}
             & Lemma~\ref{lem:nonempty-formula-expansion}
            \\[6pt]
             & =
             & \semT{\Phi[\Delta']}{\V[U_1 := \semTV{\Phi_1}]}
             & Lemma~\ref{lem:substitution}
            \\[6pt]
             & =
             & \semT{\Phi}{(\V[U_1 := \semTV{\Phi_1}])\,[\Delta']}
             & Induction hypothesis
            \\[6pt]
             & =
             & \semT{\Phi}{\V[(U_1 = \Phi_1) \cdot \Delta']}
             & Definition of $\V[(U_1 = \Phi_1) \cdot \Delta']$
            \\[6pt]
             & =
             & \semT{\Phi}{\V[\Delta]}
             & $\Delta = (U_1 = \Phi_1) \cdot \Delta'$
        \end{array}
    \]\qedhere
\end{proof}

The next lemma asserts that the semantics of definitional constant $U$ in the definition list $\Delta$ only depends on prefix of $\Delta$ up to and including the definition of $U$; the subsequent definitions in $\Delta$ have no effect.

\begin{lemma}[Definitional-constant semantics]\label{lem:invariance-definitional-constant}
    Let $\T$ be an LTS over $\Sigma$ and $\Delta$ a definition list with $U \in \dom(\Delta)$.
    Then for any valuation $\V \in \Var \to 2^\states{S}$, $\V[\Delta] \left( U \right) = \V[\Delta_{\preceq U}] \left( U \right)$.
\end{lemma}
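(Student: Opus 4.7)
The plan is to first establish an auxiliary compositional property: for compatible definition lists $\Delta_1, \Delta_2$ and any valuation $\V$,
\[
\V[\Delta_1 \cdot \Delta_2] = (\V[\Delta_1])[\Delta_2].
\]
This is proved by straightforward induction on $\Delta_2$, using Lemma~\ref{lem:nonempty-valuation-extension} to peel off the last definition and the direct recursive clauses of Definition~\ref{def:valuation-extension} to reassemble the right-hand side. Compatibility (in the sense of Definition~\ref{def:definition-list-compatibility}) is precisely what is needed to ensure that $\Delta_1 \cdot \Delta_2$ is itself a definition list so that both sides are well-defined, as guaranteed by Lemma~\ref{lem:definition-list-concatenation}.

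Next, I would prove that valuation extension leaves the values of variables outside the extending list unchanged: if $U \notin \dom(\Delta')$ then $\V'[\Delta'](U) = \V'(U)$ for every valuation $\V'$. This is a routine induction on $\Delta'$; in the inductive step $\Delta' = (U' = \Phi') \cdot \Delta''$ one uses $U' \neq U$ to conclude that the pointwise update $\V'[U' := \semT{\Phi'}{\V'}]$ agrees with $\V'$ on $U$, and then invokes the induction hypothesis on $\Delta''$ after appealing to Definition~\ref{def:valuation-extension}.

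With these two observations, the lemma follows readily. Write $\Delta = \Delta_{\preceq U} \cdot \Delta_{\succ U}$; Corollary~\ref{cor:prefix-suffix-inheritance} together with Lemma~\ref{lem:definition-list-concatenation} guarantees that $\Delta_{\succ U}$ is compatible with $\Delta_{\preceq U}$, so the compositional property yields $\V[\Delta](U) = (\V[\Delta_{\preceq U}])[\Delta_{\succ U}](U)$. The uniqueness clause of Definition~\ref{def:definition-list} ensures that $U$, which already lies in $\dom(\Delta_{\preceq U})$, does not occur in $\dom(\Delta_{\succ U})$, so the second observation gives $(\V[\Delta_{\preceq U}])[\Delta_{\succ U}](U) = \V[\Delta_{\preceq U}](U)$ as required.

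The only mild obstacle is bookkeeping around the direction of iteration: $\V[\cdot]$ recurses from the front of its definition list whereas formula substitution $\Phi[\cdot]$ recurses from the back, so one must consistently use Lemma~\ref{lem:nonempty-valuation-extension} together with the definitional clauses of Definition~\ref{def:valuation-extension} to bridge between the two directions. Beyond that, the argument is purely structural.
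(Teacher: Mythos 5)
Your proposal is correct and follows essentially the same route as the paper: both decompose $\Delta$ as $\Delta_{\preceq U} \cdot \Delta_{\succ U}$, observe that $U \notin \dom(\Delta_{\succ U})$ by the uniqueness of definitional constants, and show by induction (via Lemma~\ref{lem:nonempty-valuation-extension}) that extending by the suffix leaves the value at $U$ unchanged. The only difference is organizational: you factor the argument through an explicit composition law $\V[\Delta_1 \cdot \Delta_2] = (\V[\Delta_1])[\Delta_2]$ and a separate invariance lemma, whereas the paper carries out a single direct induction on $\Delta_{\succ U}$.
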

\remove{
\begin{proofsketch}
    From the definitions of $\Delta_{\preceq U}$ and $\Delta_{\succ U}$ it follows that $\Delta = (\Delta_{\preceq U}) \cdot (\Delta_{\succ U})$.  The proof then proceeds by induction on $\Delta_{\succ U}$, using Lemma~\ref{lem:nonempty-valuation-extension}. The detailed proof can be found in the appendix.
\end{proofsketch}
}
\begin{proof}
    From the definitions of $\Delta_{\preceq U}$ and $\Delta_{\succ U}$ it follows that $\Delta = (\Delta_{\preceq U}) \cdot (\Delta_{\succ U})$.  The proof proceeds by induction on $\Delta_{\succ U}$.  There are two cases to consider.
    \begin{itemize}
        \item
              $\Delta_{\succ U} = \emptyL$.
              In this case $\Delta = \Delta_{\preceq U} = \Delta_{\prec U} \cdot (U = \Phi)$, and the result follows from Lemma~\ref{lem:nonempty-valuation-extension}.
        \item
              $\Delta_{\succ U} = \Delta' \cdot (U' = \Phi')$ for some $U' \not\in \dom(\Delta)$.
              Note that since $\Delta_{\succ U}$ is compatible with $\Delta_{\preceq U}$, so is $\Delta'$, and thus $\Delta_{\preceq U} \cdot \Delta'$ is a definition list.
              The induction hypothesis thus guarantees that
              $\left( \V[\Delta_{\preceq U}] \right) (U) = \left( \V[\Delta_{\preceq U} \cdot \Delta'] \right) (U).$
              We reason as follows.
              \begin{flalign*}
                  & \left( \V[\Delta] \right) (U)
                  &&
                  \\
                  &= \V[ \,\left( \Delta_{\preceq U} \right) \cdot \left( \Delta_{\succ U} \right)\, ] \,(U)
                  && \text{$\Delta = \left( \Delta_{\preceq U} \right) \cdot \left( \Delta_{\succ U} \right)$}
                  \\
                  &= \left( \V[ \,\left( \Delta_{\preceq U} \right) \cdot \Delta' \cdot (U' = \Phi')\, ] \right) \,(U)
                  && \text{$\Delta_{\succ U} = \Delta' \cdot (U' = \Phi')$}
                  \\
                  &= \left( \left(
                      \V [ \,\left( \Delta_{\preceq U} \right) \cdot \Delta'\, ] \right)
                      [U' := \semT{\Phi'}{\V[\Delta']}]\right) \,(U)
                  && \text{Lemma~\ref{lem:nonempty-valuation-extension}}
                  \\
                  &= \left( \V [ \,\left( \Delta_{\preceq U} \right) \cdot \Delta'\, ] \right) \,(U)
                  && \text{$U \neq U'$}
                  \\
                  &= \V[\Delta_{\preceq U}] \,(U)
                  && \text{Induction hypothesis}
              \end{flalign*}
              \qedhere
    \end{itemize}
\end{proof}

\noindent
The next corollaries follow from this lemma.

\begin{corollary}\label{cor:shared-prefix}
    Let $\T$ be an LTS over $\Sigma$, $\Delta$, $\Delta'$ be definition lists, $U \in \dom(\Delta)$ such that $\Delta_{\preceq U} = \Delta'_{\preceq U}$, and $\V$ be a valuation.  Then:
    \begin{enumerate}
        \item\label{cor:shared-prefix-constant-semantics}
              $\left( \V[\Delta] \right) (U) = \left( \V[\Delta'] \right) (U)$.
        \item\label{cor:shared-prefix-constant-formula-semantics}
              $\semT{U}{\V[\Delta]} = \semT{U}{\V[\Delta']}$.
    \end{enumerate}
\end{corollary}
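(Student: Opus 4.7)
The plan is to derive both parts directly from Lemma~\ref{lem:invariance-definitional-constant}. First I would observe that $U \in \dom(\Delta')$ as well: since $\Delta_{\preceq U} = \Delta'_{\preceq U}$ and $U \in \dom(\Delta_{\preceq U})$ by construction, we also have $U \in \dom(\Delta'_{\preceq U}) \subseteq \dom(\Delta')$. This makes Lemma~\ref{lem:invariance-definitional-constant} applicable to both $\Delta$ and $\Delta'$.

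For part~\ref{cor:shared-prefix-constant-semantics}, I would chain three equalities:
\[
\left(\V[\Delta]\right)(U) = \left(\V[\Delta_{\preceq U}]\right)(U) = \left(\V[\Delta'_{\preceq U}]\right)(U) = \left(\V[\Delta']\right)(U),
\]
where the first and third equalities are instances of Lemma~\ref{lem:invariance-definitional-constant}, and the middle equality is immediate from the hypothesis $\Delta_{\preceq U} = \Delta'_{\preceq U}$.

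For part~\ref{cor:shared-prefix-constant-formula-semantics}, I would simply unfold the semantics of a variable: by Definition~\ref{def:mu-calculus-semantics}, $\semT{U}{\V[\Delta]} = \left(\V[\Delta]\right)(U)$ and similarly $\semT{U}{\V[\Delta']} = \left(\V[\Delta']\right)(U)$, so the equality follows immediately from part~\ref{cor:shared-prefix-constant-semantics}.

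There is no real obstacle here: the corollary is essentially a two-line consequence of Lemma~\ref{lem:invariance-definitional-constant}, once one notes that $U$ belongs to the domain of both definition lists. The only thing to be mildly careful about is justifying $U \in \dom(\Delta')$, which is needed in order to invoke the lemma on $\Delta'$.
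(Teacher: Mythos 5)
Your proof is correct and matches the paper's intent exactly: the paper gives no explicit proof, stating only that the corollary follows from Lemma~\ref{lem:invariance-definitional-constant}, and your chain of equalities together with the semantics of a variable is precisely that derivation. The remark about $U \in \dom(\Delta')$ is fine, though the notation $\Delta'_{\preceq U}$ in the hypothesis already presupposes it.
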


\begin{corollary}\label{cor:prefix-formula-semantics}
    Let $\T$ be an LTS over $\Sigma$ and $\Delta, \Delta_1$ and $\Delta_2$ be definition lists such that $\Delta = \Delta_1 \cdot \Delta_2$, and let $\Phi$ be such that no $U' \in \dom(\Delta_2)$ is free in $\Phi$.  Then for any $\V$, $\semT{\Phi}{\V[\Delta]} = \semT{\Phi}{\V[\Delta_1]}$.
\end{corollary}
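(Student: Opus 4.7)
The plan is to proceed by induction on the length of $\Delta_2$, peeling off the rightmost definition at each step.

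In the base case $\Delta_2 = \emptyL$, the result is immediate because $\V[\Delta] = \V[\Delta_1 \cdot \emptyL] = \V[\Delta_1]$ by the definition of valuation extension.

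For the inductive step, write $\Delta_2 = \Delta_2' \cdot (U = \Psi)$. Since $\Delta = \Delta_1 \cdot \Delta_2' \cdot (U = \Psi)$ is a definition list, Lemma~\ref{lem:definition-list-concatenation} guarantees that $\Delta_1 \cdot \Delta_2'$ is also a definition list. Applying Lemma~\ref{lem:nonempty-valuation-extension} to $\Delta = (\Delta_1 \cdot \Delta_2') \cdot (U = \Psi)$ yields
\[
\V[\Delta] = \bigl(\V[\Delta_1 \cdot \Delta_2']\bigr)\bigl[U := \semT{\Psi}{\V[\Delta_1 \cdot \Delta_2']}\bigr].
\]
Because $U \in \dom(\Delta_2)$, the hypothesis that no element of $\dom(\Delta_2)$ is free in $\Phi$ implies that $U$ is not free in $\Phi$. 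Lemma~\ref{lem:substitution-of-bound-variables} then gives
\[
\semT{\Phi}{\V[\Delta]} = \semT{\Phi}{\V[\Delta_1 \cdot \Delta_2']}.
\]
Finally, since $\dom(\Delta_2') \subseteq \dom(\Delta_2)$, no constant from $\dom(\Delta_2')$ is free in $\Phi$, so the induction hypothesis applied to the decomposition $\Delta_1 \cdot \Delta_2'$ (with $\Delta_2'$ shorter than $\Delta_2$) yields $\semT{\Phi}{\V[\Delta_1 \cdot \Delta_2']} = \semT{\Phi}{\V[\Delta_1]}$. Chaining the two equalities gives the desired identity.

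I do not anticipate any significant obstacle: the argument is a straightforward combination of Lemma~\ref{lem:nonempty-valuation-extension} (to split off the last binding of $\Delta_2$) and Lemma~\ref{lem:substitution-of-bound-variables} (to discard the effect of that binding on $\Phi$'s semantics). The only subtlety is to be sure the decomposition $\Delta_1 \cdot \Delta_2'$ is itself a valid definition list before invoking the induction hypothesis, which is immediate from Lemma~\ref{lem:definition-list-concatenation}.
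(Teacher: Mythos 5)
Your proof is correct, but it takes a different route from the paper's. The paper disposes of this corollary with a structural induction on $\Phi$ (showing directly that $\V[\Delta]$ and $\V[\Delta_1]$ agree on every variable free in $\Phi$), whereas you induct on the length of $\Delta_2$, peeling off its last binding with Lemma~\ref{lem:nonempty-valuation-extension} and discarding that binding's effect with Lemma~\ref{lem:substitution-of-bound-variables}. Your version is arguably the more economical one in context: the only structural induction on formulas it needs is the one already buried inside Lemma~\ref{lem:substitution-of-bound-variables}, so the corollary itself reduces to a short list induction that reuses established machinery rather than repeating a case analysis over the syntax of $\Phi$. The paper's direct induction on $\Phi$ avoids the (minor) bookkeeping about $\Delta_1 \cdot \Delta_2'$ being a definition list, but otherwise the two arguments carry the same content. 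Your handling of that bookkeeping via Lemma~\ref{lem:definition-list-concatenation} is exactly right, and the base case $\Delta_2 = \emptyL$ is trivially $\V[\Delta] = \V[\Delta_1]$ as you say.
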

\begin{proof}
    By induction on $\Phi$.
\end{proof}

We now show a correspondence between the semantics of a constant $U$ and $\Delta(U)$, where $\Delta$ is a definition list with $U \in \dom(\Delta)$.

\begin{lemma}[$U$ semantic correspondence]\label{lem:U-semantic-correspondence}
    Let $\T$ be and LTS over $\Sigma$, $\Delta$ be a definition list with $U \in \dom(\Delta)$, and $\V$ be a valuation.  Then $\semT{U}{\V[\Delta]} = \semT{\Delta(U)}{\V[\Delta_{\prec U}]} = \semT{\Delta(U)}{\V[\Delta]}$.
\end{lemma}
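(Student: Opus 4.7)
The plan is to establish the two equalities separately, starting with $\semT{U}{\V[\Delta]} = \semT{\Delta(U)}{\V[\Delta_{\prec U}]}$ and then $\semT{\Delta(U)}{\V[\Delta_{\prec U}]} = \semT{\Delta(U)}{\V[\Delta]}$.

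For the first equality, I would begin by unfolding the left-hand side via the definition of the mu-calculus semantics for variables, giving $\semT{U}{\V[\Delta]} = (\V[\Delta])(U)$. Then I would apply Lemma~\ref{lem:invariance-definitional-constant} to reduce $(\V[\Delta])(U)$ to $(\V[\Delta_{\preceq U}])(U)$. Since $U \in \dom(\Delta)$, we have $\Delta_{\preceq U} = \Delta_{\prec U} \cdot (U = \Delta(U))$, which is non-empty. Hence Lemma~\ref{lem:nonempty-valuation-extension} applies and gives $\V[\Delta_{\preceq U}] = (\V[\Delta_{\prec U}])[U := \semT{\Delta(U)}{\V[\Delta_{\prec U}]}]$. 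Evaluating this extended valuation at $U$ directly yields $\semT{\Delta(U)}{\V[\Delta_{\prec U}]}$, completing the chain.

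For the second equality, the key observation is that no definitional constant in $\dom(\Delta_{\succeq U})$ can appear free in $\Delta(U)$. This is an immediate consequence of Definition~\ref{def:definition-list}, condition 3: for any $U_j$ with $j \geq i$, $U_j$ does not appear free in $\Phi_i = \Delta(U_i)$ (taking $U_i = U$). Setting $\Delta_1 = \Delta_{\prec U}$ and $\Delta_2 = \Delta_{\succeq U}$, we have $\Delta = \Delta_1 \cdot \Delta_2$, and no element of $\dom(\Delta_2)$ is free in $\Delta(U)$. Corollary~\ref{cor:prefix-formula-semantics} then directly yields $\semT{\Delta(U)}{\V[\Delta]} = \semT{\Delta(U)}{\V[\Delta_{\prec U}]}$.

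I do not expect any significant obstacles here, since all the machinery has already been assembled in the preceding lemmas and corollaries. The main subtlety is simply to verify that the syntactic constraints in Definition~\ref{def:definition-list} rule out free occurrences of $U$ itself and of later definitional constants in $\Delta(U)$, so that Corollary~\ref{cor:prefix-formula-semantics} applies to the whole suffix $\Delta_{\succeq U}$ rather than only to $\Delta_{\succ U}$. Once that is observed, the proof reduces to a short chain of rewritings citing Lemmas~\ref{lem:invariance-definitional-constant} and~\ref{lem:nonempty-valuation-extension} together with Corollary~\ref{cor:prefix-formula-semantics}.
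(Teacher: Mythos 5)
Your proposal is correct, and for the second equality it coincides exactly with the paper's argument (Corollary~\ref{cor:prefix-formula-semantics} applied to the split $\Delta = \Delta_{\prec U} \cdot \Delta_{\succeq U}$, justified by condition 3 of Definition~\ref{def:definition-list} with $i = j$ allowed, so that $U$ itself and all later constants are excluded from $\Delta(U)$). For the first equality, however, you take a genuinely different and somewhat more economical route. The paper first reduces $\V[\Delta]$ to $\V[\Delta_{\preceq U}]$ (via Corollary~\ref{cor:shared-prefix}, itself a consequence of Lemma~\ref{lem:invariance-definitional-constant}), then makes a round trip through the syntactic expansion: it converts $\semT{U}{\V[\Delta_{\preceq U}]}$ to $\semTV{U[\Delta_{\preceq U}]}$ by Lemma~\ref{lem:definition-list-correspondence}, unfolds $U[\Delta_{\prec U} \cdot (U = \Phi)]$ to $\Phi[\Delta_{\prec U}]$ purely syntactically, and then applies Lemma~\ref{lem:definition-list-correspondence} a second time to return to the valuation side. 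You instead stay entirely on the semantic side: after the same reduction to $\Delta_{\preceq U}$ via Lemma~\ref{lem:invariance-definitional-constant}, you peel off the last binding with Lemma~\ref{lem:nonempty-valuation-extension} and read off the value of the updated valuation at $U$. Both arguments rest on the same preparatory machinery; the paper's version makes the syntax--semantics correspondence explicit (which is thematically consistent with the surrounding lemmas), while yours avoids invoking Lemma~\ref{lem:definition-list-correspondence} altogether and is shorter. There is no gap in your argument.
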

\begin{proof}
    Fix LTS $\T$, definition list $\Delta$ and $U \in \dom(\Delta)$, and assume $\Delta(U) = \Phi$.
    Let $\V$ be an arbitrary valuation.
    We reason as follows.
    \begin{align*}
        \semT{U}{\V[\Delta]}
         & = \semT{U}{\V[\Delta_{\preceq U}]}
         &                                                                    & \text{Corollary~\ref{cor:shared-prefix}(\ref{cor:shared-prefix-constant-formula-semantics})}
        \\
         & = \semTV{U[\Delta_{\preceq U}]}
         &                                                                    & \text{Lemma~\ref{lem:definition-list-correspondence}}
        \\
         & = \semTV{U[ \,\Delta_{\prec U} \cdot (U = \Phi) \, ]}
         &                                                                    & \Delta_{\preceq U} = \Delta_{\prec U} \cdot (U = \Phi)
        \\
         & = \semTV{\left(U [U := \Phi] \right) \, [ \,\Delta_{\prec U} \, ]}
         &                                                                    & \text{Definition of $U[ \,\Delta_{\prec U} \cdot (U = \Phi) \, ]$}
        \\
         & = \semTV{\Phi [ \,\Delta_{\prec U} \, ]}
         &                                                                    & \text{Definition of substitution}
        \\
         & = \semT{\Phi}{\V[\Delta_{\prec U}]}
         &                                                                    & \text{Lemma~\ref{lem:definition-list-correspondence}}
        \\
         & = \semT{\Delta(U)}{\V[\Delta_{\prec U}]}
         &                                                                    & \Delta(U) = \Phi
        \\
         & = \semT{\Delta(U)}{\V[\Delta]}
         &                                                                    & \text{Corollary~\ref{cor:prefix-formula-semantics}, $\Delta = (\Delta_{\prec U}) \cdot (\Delta_{\succeq U})$,}
        \\
         &
         &                                                                    & \text{no $U' \in \dom(\Delta_{\succeq U})$ free in $\Phi = \Delta(U)$}
    \end{align*}
    \qedhere
\end{proof}

We close our treatment of definition lists by showing the following useful fact about unfolding when $\Delta(U) = \sigma Z.\Phi$ for some $\Phi$.

\begin{lemma}[Definitional-constant unfolding]\label{lem:constant-unfolding}
    Let $\T$ be and LTS over $\Sigma$, $\Delta$ be a definition list with $\Delta(U) = \sigma Z. \Phi$, and $\V$ be a valuation.  Then $\semT{U}{\V[\Delta]} = \semT{\Phi[Z := U]}{\V[\Delta]}$.
\end{lemma}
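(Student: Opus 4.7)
The plan is to chain together Lemmas~\ref{lem:unfolding}, \ref{lem:substitution}, and~\ref{lem:U-semantic-correspondence}; the latter in particular lets us pass back and forth between $\semT{U}{\V[\Delta]}$ and $\semT{\sigma Z.\Phi}{\V[\Delta]}$, and this round trip is what makes the argument work without any induction on $\Phi$ or $\Delta$.

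First I would rewrite the left-hand side using Lemma~\ref{lem:U-semantic-correspondence}: since $\Delta(U) = \sigma Z.\Phi$, we have $\semT{U}{\V[\Delta]} = \semT{\sigma Z.\Phi}{\V[\Delta]}$. Next I would apply the fixpoint unfolding principle (Lemma~\ref{lem:unfolding}) to obtain $\semT{\sigma Z.\Phi}{\V[\Delta]} = \semT{\Phi[Z := \sigma Z.\Phi]}{\V[\Delta]}$, and then invoke the substitution lemma (Lemma~\ref{lem:substitution}) to pull the substitution into the valuation, giving
\[
\semT{\Phi[Z := \sigma Z.\Phi]}{\V[\Delta]}
= \semT{\Phi}{\V[\Delta]\,[Z := \semT{\sigma Z.\Phi}{\V[\Delta]}]}.
\]
At this point I would apply Lemma~\ref{lem:U-semantic-correspondence} a second time, now in the opposite direction, to rewrite $\semT{\sigma Z.\Phi}{\V[\Delta]}$ inside the valuation update as $\semT{U}{\V[\Delta]}$. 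A final use of Lemma~\ref{lem:substitution}, read right-to-left, repackages the valuation update as a syntactic substitution and yields $\semT{\Phi[Z := U]}{\V[\Delta]}$, completing the chain of equalities.

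The only obstacle I anticipate is a bookkeeping one: verifying that the substitution $\Phi[Z := U]$ is well-defined (i.e.\ capture-free) and that the use of Lemma~\ref{lem:substitution} for the singleton sequence $Z$ is legitimate. Since $\sigma Z.\Phi$ is a subformula appearing in $\Delta$, clause~(2) of Definition~\ref{def:definition-list} forbids $U$ from appearing bound in $\Phi$, and clause~(3) forbids $U$ from appearing free in $\sigma Z.\Phi$ (hence not free in $\Phi$); likewise $Z$ cannot coincide with any definitional constant in $\Delta$ because $Z$ is bound in $\Delta(U)$. These structural constraints ensure that every substitution and valuation update in the chain behaves as expected, so the five equalities fit together without side conditions beyond those already guaranteed by Definition~\ref{def:definition-list}.
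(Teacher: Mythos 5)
Your proof is correct; it reaches the conclusion by a chain of equalities built from the same three lemmas the paper leans on (Lemmas~\ref{lem:U-semantic-correspondence}, \ref{lem:unfolding} and~\ref{lem:substitution}), but the middle of the chain is organized differently. The paper's proof passes through the prefix valuation $\V[\Delta_{\prec U}]$, rewrites $\Phi[Z := \sigma Z.\Phi]$ \emph{syntactically} as $(\Phi[Z:=U])[U := \sigma Z.\Phi]$ (using that $U$ is not free in $\Phi$), pushes the outer substitution into the valuation via Lemma~\ref{lem:substitution}, recognizes the resulting valuation as $\V[\Delta_{\preceq U}]$ via Lemma~\ref{lem:nonempty-valuation-extension}, and finally climbs back up to $\V[\Delta]$ with Corollary~\ref{cor:prefix-formula-semantics}. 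You instead stay with the full valuation $\V[\Delta]$ throughout --- which is legitimate precisely because Lemma~\ref{lem:U-semantic-correspondence} gives the equality for $\V[\Delta]$ as well as for $\V[\Delta_{\prec U}]$ --- and replace $\sigma Z.\Phi$ by $U$ \emph{semantically}, inside the valuation update, by a second appeal to that lemma, before reading Lemma~\ref{lem:substitution} right-to-left. This buys you a shorter argument that avoids the definition-list prefix/suffix machinery and the nested-substitution identity altogether; the paper's version, in exchange, makes explicit exactly which portion of $\Delta$ the semantics of $U$ depends on. Your closing remarks on capture-freeness (clauses (2) and (3) of Definition~\ref{def:definition-list} guarantee $U$ is neither bound nor free in $\Phi$, and $Z \notin \dom(\Delta)$ since $Z$ occurs bound in $\Delta(U)$) are exactly the side conditions needed for both uses of Lemma~\ref{lem:substitution}, so the argument is complete.
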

\begin{proof}
    Fix $\T$, $\Delta$ and $\V$, and let $U$ be such that $\Delta(U) = \sigma Z.\Phi$.  We reason as follows.
    \begin{align*}
    \semT{U}{\V[\Delta]}
    &= \semT{\sigma Z. \Phi}{\V[\Delta_{\prec U}]}
    && \text{Lemma~\ref{lem:U-semantic-correspondence}, $\Delta(U) = \sigma Z. \Phi$}
    \\
    &= \semT{\Phi [Z := \sigma Z.\Phi]}{\V[\Delta_{\prec U}]}
    && \text{Lemma~\ref{lem:unfolding}}
    \\
    &= \semT{\left( \Phi [Z := U] \right) [U := \sigma Z. \Phi]}{\V[\Delta_{\prec U}]}\
    \\
    \multispan4{\hfil Property of substitution; $U$ not free in $\Phi$}
    \\
    &= \semT{\Phi [Z := U]}{\left( \V[\Delta_{\prec U}] \right)[U := \semT{\sigma Z. \Phi}{\Delta_{\prec U}}]}
    && \text{Lemma~\ref{lem:substitution}}
    \\
    &= \semT{\Phi [Z := U]}{\V[\Delta_{\prec U} \cdot (U = \sigma Z. \Phi)]}
    && \text{Lemma~\ref{lem:nonempty-valuation-extension}}
    \\
    &= \semT{\Phi [Z := U]}{\V[\Delta_{\preceq U}]}
    && \text{$\Delta_{\preceq U} = \Delta_{\prec U} \cdot (U = \sigma Z.\Phi)$}
    \\
    &= \semT{\Phi [Z:= U]}{\V[\Delta]}
    && \text{Corollary~\ref{cor:prefix-formula-semantics},}
    \\
    \multispan4{\hfil
       $\Delta = \left( \Delta_{\preceq U} \right) \cdot \left( \Delta_{\succ U} \right)$,
       no $U' \in \dom(\Delta_{\succ U})$ free in $\Phi [Z:= U]$}
    \end{align*}
    \qedhere
\end{proof}

\paragraph{Sequents.}

We can now define the sequents used in the base proof system.

\begin{definition}[Sequents]\label{def:sequent}
Let $\T = \lts{S}$ be an LTS over $\Sigma$ and $\V \in \Var \to 2^\states{S}$ a valuation over $\Var$.
Then a \emph{sequent over $\T$ and $\V$} has form $$S \tnxTVD \Phi,$$ where $S \subseteq \states{S}$, $\Delta$ is a definition list, and $\Phi \in \muformsSV$ has the following properties.
\begin{enumerate}
    \item $\Phi$ is in positive normal form.
    \item Every $U \in \dom(\Delta)$ is positive in $\Phi$.
    \item No $U \in \dom(\Delta)$ is bound in $\Phi$.  
\end{enumerate}
We use $\SeqTV$ for the set of sequents over $\T$ and $\V$.
If $\seq{s} = S \tnxTV{\Delta} \Phi$ is in $\SeqTV$, then we access the components of $\seq{s}$ as follows:
        $\seqst(\seq{s}) = S$,
        $\seqdl(\seq{s}) = \Delta$, and
        $\seqfm(\seq{s}) = \Phi$.
\end{definition}

The intended interpretation of sequent $S \tnxTVD \Phi$ is that it is valid iff every $s \in S$ satisfies $\Phi$, where $\V$ is used to interpret free propositional variables $Z \not\in \dom(\Delta)$ that appear in $\Phi$ and $\Delta$ is used for definitional constants $U \in \dom(\Delta)$ that occur in $\Phi$.  
This notion is formalized as follows.

\begin{definition}[Sequent semantics and validity]\label{def:sequent-semantics}
    Let $\seq{s} = S \tnxTVD \Phi$ be a sequent in $\SeqTV$.
    \begin{enumerate}
    \item
        The \emph{semantics} of $\seq{s}$ is defined as $\semop{\seq{s}} = \semT{\Phi}{\V[\Delta]}$.
    \item
        We define $\seq{s}$ to be \emph{valid} iff $S \subseteq \semop{\seq{s}}$.
    \end{enumerate}
\end{definition}

Sequents of form $S \tnxTVD U$, where $U \in \dom(\Delta)$, play a prominent role in the rest of the paper. The following corollary about such sequents follows directly from our earlier results on definition lists and the semantics of sequents.

\begin{corollary}\label{cor:shared-prefix-sequent-semantics}
    Let $\T$ be an LTS over $\Sigma$, $\Delta$, $\Delta'$ be definition lists, $U \in \dom(\Delta)$ such that $\Delta_{\preceq U} = \Delta'_{\preceq U}$, $\V$ be a valuation, and $\seq{s} = S \tnxTVD U$ and $\seq{s}' = S' \tnxTV{\Delta'} U$ both be sequents.  Then $\semop{\seq{s}} = \semop{\seq{s}'}$.
\end{corollary}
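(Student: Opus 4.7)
The plan is to unfold the two semantics $\semop{\seq{s}}$ and $\semop{\seq{s}'}$ using Definition~\ref{def:sequent-semantics}, and then appeal directly to Corollary~\ref{cor:shared-prefix}(\ref{cor:shared-prefix-constant-formula-semantics}), which already establishes exactly the required semantic equality for a definitional constant $U$ whenever two definition lists agree on the prefix $\Delta_{\preceq U}$. In effect, the corollary is almost an immediate restatement, at the level of sequents, of the earlier definition-list fact.

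More concretely, I would first observe that $\seqfm(\seq{s}) = \seqfm(\seq{s}') = U$, and that by Definition~\ref{def:sequent-semantics} we have $\semop{\seq{s}} = \semT{U}{\V[\Delta]}$ and $\semop{\seq{s}'} = \semT{U}{\V[\Delta']}$. Since $U \in \dom(\Delta)$ and $\Delta_{\preceq U} = \Delta'_{\preceq U}$, it follows that $U \in \dom(\Delta')$ as well (the latter contains $\Delta'_{\preceq U} = \Delta_{\preceq U}$, in whose domain $U$ lies), so both sides are well-defined sequent semantics. Applying Corollary~\ref{cor:shared-prefix}(\ref{cor:shared-prefix-constant-formula-semantics}) to $\V$, $\Delta$, $\Delta'$ and $U$ then gives $\semT{U}{\V[\Delta]} = \semT{U}{\V[\Delta']}$, which is the required equality $\semop{\seq{s}} = \semop{\seq{s}'}$.

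There is essentially no obstacle here; the only thing to check carefully is that the hypothesis $\Delta_{\preceq U} = \Delta'_{\preceq U}$ implicitly guarantees $U \in \dom(\Delta')$, so that $\seq{s}'$ is indeed a well-formed sequent with $U \in \dom(\seqdl(\seq{s}'))$, and hence that Corollary~\ref{cor:shared-prefix} applies symmetrically to both lists. Neither $S$ nor $S'$ plays any role, since the semantics $\semop{\cdot}$ of a sequent depends only on its definition list, valuation, and formula, not on its state set.
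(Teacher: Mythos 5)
Your proof is correct and follows exactly the route the paper intends: the paper gives no explicit proof, noting only that the corollary "follows directly from our earlier results," namely Definition~\ref{def:sequent-semantics} together with Corollary~\ref{cor:shared-prefix}(\ref{cor:shared-prefix-constant-formula-semantics}), which is precisely the chain you use. Your side remark that the notation $\Delta'_{\preceq U}$ already presupposes $U \in \dom(\Delta')$ is a sensible (if strictly unnecessary) sanity check.
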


\subsection{Proof rules and tableaux}

We now present the proof rules used in this paper for the mu-calculus formula sequents described in the previous section.  These proof rules come from~\cite{BS1992,Bra1991} and resemble traditional natural-deduction-style proof rules.  Following~\cite{BS1992,Bra1991}, however, we write the conclusion of the proof rule above the premises to emphasize that the conclusion is a ``goal'' and the premises are ``subgoals'' in a goal-directed proof-search strategy.  In what follows we fix sort $\Sigma$, transition system $\T = \lts{S}$ over $\Sigma$, variable set $\Var$ and valuation $\V \in \Var \rightarrow 2^{\states{S}}$.

\begin{definition}[Proof rule]\label{def:proof-rule}
    A \emph{proof rule} has form
    \[
        \proofrule[ \mathit{name}\;\; ]
        {\seq{s}}
        {\seq{s}_1 \;\;\cdots\;\; \seq{s}_n}
        \;\mathit{side\; condition}
    \]
    where $\mathit{name}$ is the name of the rule, $\seq{s}_1 \cdots \seq{s}_n \in (\SeqTV)^*$ 
    is a sequence of sequents called the \emph{premises} of the rule, 
    $\seq{s} \in \SeqTV$ is the \emph{conclusion} of the rule, 
    and the optional $\mathit{side\; condition}$ is a property determining when the rule may be applied.
\end{definition}

Figure~\ref{fig:proof-rules} gives the proof rules from~\cite{Bra1991}, lightly adapted to conform to our notational conventions.  The rules are named for the top-level operators appearing in the formulas of the sequents to which they apply.
Rule $\sigma Z$ is actually short-hand for two rules -- one for each possible value, $\mu$ and $\nu$, of $\sigma$ -- and asserts that a definition involving fresh constant $U$ is added to the end of the definition list of the subgoal.   The Thin rule is needed to ensure completeness of the proof system, and is not required for soundness.  

\begin{figure}
    \[
        \begin{array}{c}
            \proofrule[\land\;\;]
            {S \tnxTVD \Phi_1 \land \Phi_2}
            {S \tnxTVD \Phi_1 \quad S \tnxTVD \Phi_2}

            \qquad
            \proofrule[\lor\;\;]
            {S \tnxTVD \Phi_1 \lor \Phi_2}
            {S_1 \tnxTVD \Phi_1 \quad S_2 \tnxTVD \Phi_2}
            \;S = S_1 \cup S_2
            \\[18pt]

            \proofrule[{[K]}\;\;]
            {S \tnxTVD [K]\Phi}
            {S' \tnxTVD \Phi}
            \;S' = \{s' \in \states{S} \mid \exists s \in S. s \xrightarrow{K} s' \}
            \\[18pt]

            \proofrule[\dia{K}\;\;]
            {S \tnxTVD \langle K \rangle \Phi}
            {f(S) \tnxTVD \Phi}
            \;f \in S \rightarrow \Bstates{S}\ \text{is such that}\ \forall s \in S \colon s \xrightarrow{K} f(s)
            \\[18pt]

            \proofrule[\sigma Z\;\;]
            {S \tnxTVD \sigma Z.\Phi}
            {S \tnxTV{\Delta\cdot(U = \sigma Z.\Phi)} U}
            \;U \ \text{fresh\footnotemark}

            \qquad
            \proofrule[\text{Un}\;\;]
            {S \tnxTVD U}
            {S \tnxTVD \Phi[Z := U]}
            \;\Delta(U) = \sigma Z.\Phi
            \\[18pt]

            \proofrule[\text{Thin}\;\;]
            {S \tnxTVD \Phi}
            {S' \tnxTVD \Phi'}
            \;S \subseteq S'
        \end{array}
    \]
    \caption{Proof rules for mu-formula sequents.}\label{fig:proof-rules}
\end{figure}

The side condition of Rule $\dia{K}$ refers to a function $f$ that is responsible for selecting, for each state $s$ in the goal sequent, a \emph{witness} state $s'$ such that $s \xrightarrow{K} s'$ and $s'$ is in the subgoal sequent.  To apply this rule, the side condition requires the identification of a specific function $f$ that computes these witness states.  We call the function associated in this manner with an application of $\dia{K}$ the \emph{witness function} of the application of the rule.  We further define the set of \emph{rule applications} for a given LTS $\lts{S}$ over $\Sigma$ as follows.
\[
\RuleAppl = 
\{\land, \lor, [K], \mu Z, \nu Z, \textnormal{Un}, \textnormal{Thin}\}
\cup
\{(\dia{K},f) \mid \text{$f$ is a witness function}\}
\]
Note that rule applications are either rule names, if the rule being applied is not $\dia{K}$, or pairs of form $(\dia{K}, f)$ where $f$ is the witness function used in applying the rule.  If $a$ is a rule application then we write $\rn(a)$ for the rule name used in $a$.  Formally, $\rn(a)$ is defined as follows.
\[
\rn(a) =
\begin{cases}
\dia{K} & \text{if $a = (\dia{K},f)$}\\
a       & \text{otherwise}
\end{cases}
\]

Intuitively, proofs are constructed as follows.
\footnotetext{Here $U$ is fresh if has not been previously used anywhere in a proof currently being constructed using these proof rules.} 
Suppose $S \tnxTV{\emptyL} \Phi$ is a sequent we wish to prove.  Based on the form of $\Phi$ we select a rule whose conclusion matches this sequent and apply it, generating the corresponding premises and witness function as required by the rule.  We then recursively build proofs for these premises.  The proof construction process terminates when the validity (or lack thereof) of the current sequent can be immediately established.  The resulting proofs can be viewed as trees, which are called \emph{tableaux} in~\cite{BS1992,Bra1991}, and which we formalize as follows.

\begin{definition}[(Partial) tableaux]\label{def:tableau}
    \begin{enumerate}
    \item\label{subdef:partial-tableau}
        A \emph{partial tableau} has form $\mathbb{T} = \tableauTrl$, where:
        \begin{itemize}
        \item
            $\tree{T} = (\node{N}, \node{r}, p, cs) $ is a finite non-empty ordered tree (cf.\ Definition~\ref{def:ordered-tree}).
        \item
            Partial function $\rho \in \node{N} \pto \RuleAppl$ satisfies:  $\rho(\node{n}) \div$ iff $\node{n}$ is a leaf of $\tree{T}$.
        \item
            $\T = \lts{S}$ is a labeled transition system.
        \item
            $\V \in \Var \to 2^{\states{S}}$ is a valuation.
        \item
            Function $\lambda \in \node{N} \rightarrow \SeqTV$, the \emph{sequent labeling}, satisfies:  if $\rho(\node{n}) \in \RuleAppl$ then
            $\lambda(\node{n})$ and $\lambda(cs(\node{n}))$\footnote{Recall that $cs(\node{n})$ is the sequence of the children of node $\node{n}$ in left-to-right order.} satisfy the form and side condition associated with $\rho(\node{n})$.
        \end{itemize}
        Elements of $\node{N}$ are sometimes called the \emph{proof nodes} of $\mathbb{T}$.

    \item\label{subdef:complete-tableau}
        Partial tableau $\tableauTrl$ is a \emph{complete tableau}, or simply a \emph{tableau},
        iff 
        $\seqdl(\lambda(\node{r})) = \emptyL$ (i.e.\/ the definition list in the root is empty), and
        all leaves $\node{n}$ in $\tree{T}$ are \emph{terminal}, i.e.\/ satisfy one of the following.
        \begin{enumerate}[ref=\theenumi(\alph*)]
        \item \label{subdef:free-leaf}
            $\seqfm(\lambda(\node{n})) = Z$ or $\seqfm(\lambda(\node{n})) = \lnot Z$ for some $Z \in \Var \setminus \dom(\seqdl(\lambda(\node{n})))$
            (in this case $\node{n}$ is called a \emph{free leaf}); or
        \item \label{subdef:diamond-leaf}
            $\seqfm(\lambda(\node{n})) = \dia{K}\ldots$ and there is $s \in \seqst(\lambda(\node{n}))$ such that $s \centernot{\xrightarrow{K}}$
            (in this case $\node{n}$ is called a \emph{diamond leaf}); or
        \item \label{subdef:companion-leaf}
            $\seqfm(\lambda(\node{n})) = U$ for some $U \in \dom(\seqdl(\lambda(\node{n})))$, and there is $\node{m} \in A_s(\node{n})$\footnote{Recall that $A_s(\node{n})$ is the set of strict ancestors of $\node{n}$.} such that
            $\seqfm(\lambda(\node{m})) = U$ and $\seqst(\lambda(\node{n})) \subseteq \seqst(\lambda(\node{m}))$ 
            (in this case $\node{n}$ is called a \emph{$\sigma$-leaf}).%
        \end{enumerate}
        We may also refer to a $\sigma$-leaf as a $\mu$- / $\nu$-leaf if $\Delta(U) = \mu\ldots$ / $\Delta(U) = \nu\ldots$.  
        The deepest node $\node{m}$ making $\node{n}$ a $\sigma$-leaf is the \emph{companion node} of $\node{n}$, with $\node{n}$ then being a \emph{companion leaf} of $\node{m}$.
    \end{enumerate}
\end{definition}

A $\sigma$-leaf in a tableau has a unique companion node based on the Definition~\ref{def:tableau}(\ref{subdef:complete-tableau}), but a node may be a companion node for multiple (or no) companion leaves; all such companion leaves must be in the subtree rooted at the node, however.
The definition of terminal leaf in~\cite{BS1992,Bra1991} also includes an extra case, $\seqst(\lambda(\node{n})) = \emptyset$, in addition to the Conditions \ref{subdef:free-leaf}--\ref{subdef:companion-leaf} in our definition above.  It turns out that this case is unnecessary, as the completeness results in Section~\ref{sec:Completeness} show.

In what follows we adopt the following notational shorthands for proof nodes in partial tableaux.
\begin{notation}[Proof nodes]
Let $\node{n}$ be a proof node in partial tableau $\tableauTrl$.
\begin{itemize}
    \item $\node{n} = S \tnxTVD \Phi$ means $\lambda(\node{n}) = S \tnxTVD \Phi$.
    \item $\semop{\node{n}} = \semop{\lambda(\node{n})} = \semTV{\seqfm(\lambda(\node{n}))}$.
    \item $\node{n}$ is valid iff $\lambda(\node{n})$ is valid.
    \item $\seqst(\node{n}) = \seqst(\lambda(\node{n}))$.
    \item $\seqdl(\node{n}) = \seqdl(\lambda(\node{n}))$.
    \item $\seqfm(\node{n}) = \seqfm(\lambda(\node{n}))$.
\end{itemize}
\end{notation}

Companion nodes and leaves feature prominently later, so we introduce the following notation and remark on a semantic property that they satisfy.

\begin{notation}[Companion nodes and leaves]
    Let $\mathbb{T} = \tableauTrl$ be a partial tableau, with $\tree{T} = (\node{N}, \node{r}, p, cs)$.
    \begin{enumerate}
    \item 
        The set $\cnodesT \subseteq \node{N}$ of companion nodes of $\mathbb{T}$ is given by:
        $$
        \cnodes{\mathbb{T}} = \{ \node{n} \in \node{N} \mid \rho(\node{n}) = \textnormal{Un}\}.
        $$
    \item
        Let $\node{n} \in \cnodes{\mathbb{T}}$.  Then the set $\cleavesT(\node{n}) \subseteq D_s(\node{n})$ of companion leaves of $\node{n}$ is given as follows, where $\cnodes{\mathbb{T},\node{n}} = \cnodesT \cap D_s(\node{n})$ are the companion nodes of $\mathbb{T}$ that are strict descendants of $\node{n}$.
        \begin{align*}
        &\cleavesT(\node{n}) =
        \\
        &\{ \node{n}' \in D_s (\node{n})
           \mid
           c(\node{n}') = \emptyset
           \land
           \seqfm(\node{n}') = \seqfm(\node{n})
           \land
           \seqst(\node{n}') \subseteq \seqst(\node{n})
        \}
        \setminus
        \bigcup_{\node{n}' \in \cnodes{\mathbb{T},\node{n}}} \cleaves{\mathbb{T}}(\node{n}')
        \end{align*}
    \end{enumerate}
\end{notation}
Note that if $\node{n}$ is a companion node then it must be the case that $\seqfm(\node{n}) = U$ for some definitional constant $U$ defined in the definition list $\seqdl(\node{n})$ of $\node{n}$.  Given a companion node $\node{n}$ in $\mathbb{T}$ its associated companion leaves, $\cleavesT(\node{n})$, consist of nodes $\node{n}'$ that:
\begin{enumerate*}[label=(\roman*)]
    \item are leaves ($c(\node{n}') = \emptyset$);
    \item have the same definitional constant as $\node{n}$ for their formula
        ($\seqfm(\node{n}') = \seqfm(\node{n})$);
    \item have their state sets included in the state set of $\node{n}$ 
        ($\seqst(\node{n}') \subseteq \seqst(\node{n})$); and
    \item are not companion leaves of any companion node that is a strict descendant of $\node{n}$ 
        ($\bigcup_{\node{n}' \in \cnodes{\mathbb{T},\node{n}}} \cleavesT(\node{n})$).
\end{enumerate*}

\begin{lemma}[Semantics of companion nodes, leaves]\label{lem:semantic-invariance}
    Let $\mathbb{T} = \tableauTrl$ be a partial tableau.  Then for any proof nodes $\node{n}$ and $\node{n}'$ in $\mathbb{T}$ and definitional constant $U$ such that $\seqfm(\node{n}) = \seqfm(\node{n}') = U$, $\semop{\node{n}} = \semop{\node{n}'}$.
\end{lemma}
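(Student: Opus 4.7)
The plan is to reduce the claim to a shared-prefix property on definition lists. By the notational shorthand just introduced, $\semop{\node{n}} = \semT{U}{\V[\seqdl(\node{n})]}$ and $\semop{\node{n}'} = \semT{U}{\V[\seqdl(\node{n}')]}$, so everything reduces to showing $\V[\seqdl(\node{n})](U) = \V[\seqdl(\node{n}')](U)$. If no $\sigma Z$ application in $\mathbb{T}$ introduces $U$, then because $\sigma Z$ is the only rule that grows the definition list, $U \notin \dom(\seqdl(\node{m}))$ for any proof node $\node{m}$, and both values simply equal $\V(U)$.

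In the substantive case I would assume some node $\node{m}$ in $\mathbb{T}$ applies $\sigma Z$ to a conclusion of the form $S \tnxTV{\seqdl(\node{m})} \sigma Z.\Phi'$, so that its unique premise carries the definition list $\seqdl(\node{m}) \cdot (U = \sigma Z.\Phi')$. I would first establish the following invariant by tree induction on $\tree{T}$: $\node{m}$ is unique among the introducers of $U$; every proof node $\node{n}''$ in which $U$ occurs syntactically (either in $\seqfm(\node{n}'')$ or in $\dom(\seqdl(\node{n}''))$) lies in the subtree rooted at the child of $\node{m}$; and for every such $\node{n}''$, $\seqdl(\node{n}'')_{\preceq U} = \seqdl(\node{m}) \cdot (U = \sigma Z.\Phi')$. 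Uniqueness and subtree localization come from the freshness side condition on $\sigma Z$, which forbids $U$ from appearing in any other sequent of the finished tableau, while the prefix identity follows because every rule other than $\sigma Z$ leaves the definition list unchanged on its premises and $\sigma Z$ itself only appends.

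Applying the invariant to $\node{n}$ and $\node{n}'$, both of which contain $U$ as their entire formula, yields $\seqdl(\node{n})_{\preceq U} = \seqdl(\node{n}')_{\preceq U}$, and Corollary~\ref{cor:shared-prefix-sequent-semantics} then delivers $\semop{\node{n}} = \semop{\node{n}'}$. The main obstacle is stating the invariant cleanly: Definition~\ref{def:tableau} does not record the order in which fresh constants are allocated, so the global freshness side condition must be read as ``$U$ does not occur in any other sequent of the finished tableau.'' Once so interpreted, the induction on $\rho$ is routine from Figure~\ref{fig:proof-rules}, with the only non-trivial step being the $\sigma Z$ case where the invariant must be re-established for the newly appended definition.
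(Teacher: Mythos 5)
Your proposal is correct and follows essentially the same route as the paper: both arguments reduce the claim to $\seqdl(\node{n})_{\preceq U} = \seqdl(\node{n}')_{\preceq U}$ --- obtained from the freshness side condition (so $U$ is introduced at most once) together with the fact that every rule either preserves the definition list or appends to its end --- and then invoke the invariance of $\V[\Delta](U)$ under extensions past the definition of $U$ (Lemma~\ref{lem:invariance-definitional-constant}, equivalently Corollary~\ref{cor:shared-prefix-sequent-semantics}). The only caveat is your degenerate case: the lemma is stated for \emph{partial} tableaux, whose root definition list need not be empty, so ``no $\sigma Z$ application introduces $U$'' does not entail $U \notin \dom(\seqdl(\node{m}))$ for all $\node{m}$; that situation is nevertheless covered by the same shared-prefix argument, with the common prefix inherited from the root.
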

\begin{proof}
    Follows from the fact that the side condition of Rule Un, which is the only rule that modifies definition lists, introduces a given definitional constant at most once in a partial tableau, and that the associated definition is added at the end of the definition list in the premise of the rule.  Based on these observations, one can show that $\seqdl(\node{n})_{\preceq U} = \seqdl(\node{n}')_{\preceq U}$.  Lemma~\ref{lem:invariance-definitional-constant} and the definition of $\semop{\node{n}}$ then give the desired result.
    \qedhere
\end{proof}

A tableau is a candidate proof; while the structure of a tableau ensures the correct application of proof rules, this alone does not guarantee that a tableau is a valid proof.  The notion of \emph{successful tableau} fills this gap.

\begin{definition}[Successful leaf / tableau]\label{def:successful-tableau}
    Let $\mathbb{T} = \tableauTrl$ be a tableau.  Then leaf node $\node{n}$ in $\mathbb{T}$ is \emph{successful} iff one of the following hold.
    \begin{enumerate}
        \item\label{def:successful-leaf-Z}
              $\seqfm(\node{n}) = Z$, where $Z \in \Var \setminus \textnormal{dom}(\Delta)$, and $S \subseteq \V(Z)$; or
        \item\label{def:successful-leaf-notZ}
              $\seqfm(\node{n}) = \lnot Z$, where $Z \in \Var \setminus \textnormal{dom}(\Delta)$, and $S \cap \V(Z) = \emptyset$; or
        \item\label{def:successful-leaf-nu}
              $\node{n}$ is a $\nu$-leaf; or
        \item \label{def:successful-tableau-mu}
              $\node{n}$ is a $\mu$-leaf satisfying the condition given below in Definition~\ref{def:successful-mu-leaf}.
    \end{enumerate}
    A tableau is successful iff all its leaves are successful.
    A tableau is \emph{partially successful} iff all its non-$\mu$-leaves are successful.
\end{definition}
If a tableau is partially successful, then establishing that it is (fully) successful only requires showing that each of its $\mu$-leaves is successful, as defined below.  Similarly, a tableau that is not partially successful cannot be successful, regardless of the status of its $\mu$-leaves. In particular, no  diamond leaf is successful, so any tableau containing a  diamond leaf can only be unsuccessful.

The rest of this section is devoted to defining the success of $\mu$-leaves, which is more complicated than the other cases and spans several definitions.  Intuitively, the success of a $\mu$-leaf depends on the well-foundedness of an extended \emph{dependency ordering} involving the companion node of the leaf. We define the dependency ordering in three stages.
The definition of this (extended) dependency ordering essentially corresponds to Bradfield and Stirling's definition of (extended) paths from~\cite{BS1992,Bra1991}; our presentation is intended to simplify proofs. Details of the correspondence can be found in the appendix.

In what follows, fix partial tableau $\mathbb{T}= \tableauTrl$, with $\tree{T} = (\node{N}, r, p, cs)$.
The \emph{local dependency ordering} captures the one-step dependencies between a state in a node in $\mathbb{T}$ and states in the node's children.

\begin{definition}[Local dependency ordering]\label{def:local_dependency_ordering}
    Let $\node{n}, \node{n}' \in \node{N}$ be proof nodes in $\mathbb{T}$, with $\node{n}' \in c(\node{n})$ a child of $\node{n}$. Then $s' <_{\node{n}',\node{n}} s$ iff $s' \in \seqst(\node{n}')$, $s \in \seqst(\node{n})$, and one of the following hold:
    \begin{enumerate}
        \item
              $\rho(\node{n}) = [K]$ and $s \xrightarrow{K} s'$; or
        \item
              $\rho(\node{n}) = (\dia{K},f)$ and $s' = f(s)$; or
        \item
              $\rn(\rho(\node{n})) \not\in \{ [K], \dia{K} \}$ and $s = s'$.
    \end{enumerate}
    Note that since $\node{n}' \in c(\node{n})$, $\node{n}$ is internal and $\rho(\node{n})$ is defined.
\end{definition}

In the second part of the definition of our dependency ordering we extend the local dependency ordering to capture dependencies between states in a node and states in descendants of the node.

\begin{definition}[Dependency ordering]\label{def:dependency_ordering}
    Let $\node{n}, \node{n}'$ be proof nodes in $\mathbb{T}$. Then $s' \lessdot_{\node{n}',\node{n}} s$ iff $s' \in \seqst(\node{n}')$, $s \in \seqst(\node{n})$, and one of the following holds.
    \begin{enumerate}
        \item
        $\node{n} = \node{n}'$ and $s = s'$, or
        \item
        There exist proof node $\node{m}$ and $s'' \in \seqst(\node{m})$ with $s' \lessdot_{\node{n}',\node{m}} s''$ and $s'' <_{\node{m},\node{n}} s$.
    \end{enumerate}
\end{definition}

\noindent
The definition of $\lessdot_{\node{n}',\node{n}}$ is inductive, and may be seen as analogous to the transitive and reflexive closure of $<_{\node{n}',\node{n}}$, modulo the node indices $\node{n}'$ and $\node{n}$ decorating $\lessdot_{\node{n}', \node{n}}$.  It is easy to see that if $s' <_{\node{n}',\node{n}} s$ then $s' \lessdot_{\node{n}',\node{n}} s$, and that if $s' \lessdot_{\node{n}', \node{n}} s$ then $\node{n}' \in D(\node{n})$.\footnote{Recall that $D(\node{n})$ are descendants of $\node{n}$; see Section~\ref{subsec:trees}.}

In the third and final of our definitions, we allow \emph{cycling} through states in companion nodes that are descendants of a given node.

\begin{definition}[Extended dependency ordering]\label{def:extended_path_ordering}
    The \emph{extended dependency ordering}, $<:_{\node{n}',\node{n}}$, and the \emph{companion-node ordering} $<:_{\node{m}}$, are defined mutually recursively as follows.
    \begin{enumerate}
    \item
        Let $\node{m} \in \cnodesT$ be a companion node, with $s, s' \in \seqst(\node{m})$.  Then $s' <:_{\node{m}} s$ iff there is a companion leaf $\node{m}' \in \cleavesT(\node{m})$ with $s' \in \seqst(\node{m}')$ and $s' <:_{\node{m}',\node{m}} s$.
    \item
        Let $\node{n}, \node{n}'$ be proof nodes in $\mathbb{T}$.  Then $s' <:_{\node{n}',\node{n}} s$ iff $s' \in \seqst(\node{n}')$, $s \in \seqst(\node{n})$ and one of the following holds.
            \begin{enumerate}
            \item \label{def:extended_path_ordering-base}
                $s' \lessdot_{\node{n}',\node{n}} s$; or
            \item \label{def:extended_path_ordering-step}
                there exists $\node{m} \in \cnodesT$,
                with $\node{m} \neq \node{n}$ and $\node{m} \neq \node{n}'$,
                and $t, t' \in \seqst(\node{m})$, such that:
                    $s' <:_{\node{n}',\node{m}} t'$,
                    $t' <:_{\node{m}}^{+} t$\footnote{Recall that if $R$ is a binary relation then $R^+$ is the transitive closure of $R$.}, and
                    $t \lessdot_{\node{m},\node{n}} s$.
            \end{enumerate}
    \end{enumerate}
\end{definition}

Intuitively, $s' <:_{\node{n}',\node{n}} s$ is intended to reflect a semantic dependency between state $s'$ in $\node{n}'$ and state $s$ in $\node{n}$.
The first case indicates that this relation holds if applying a sequence of proof rules starting at $\node{n}$ leads to $\node{n}'$, with the rules $[K]$ and $\dia{K}$ inducing state changes in the dependency relation as the rules are applied.  (Of course, the intuition involves soundness assumptions about the proof rules; these are proved later.)  In the second case, the dependency chain can cycle through a companion node if there is a dependency chain from the companion node to one of its companion leaves (this is captured in the relation $<:_{\node{m}}$).  Recall that the states in a companion leaf are also in the leaf's companion node; consequently, if there is a dependency involving a state in the companion node and another state in one of the node's companion leaves, the dependency can be extended with a dependency involving the second state, but starting from the companion node.  This explains the appearance of the transitive-closure operation in this case.

The success criterion for $\mu$-leaves (which is really a condition on the companion nodes for these leaves) in a complete tableau can now be given as follows.

\begin{definition}[Successful $\mu$-leaf]\label{def:successful-mu-leaf}
    Let $\node{n}'$ be a $\mu$-leaf in tableau $\mathbb{T}$ and let $\node{n}$ be the companion node of $\node{n}'$.  Then $\node{n}'$ is successful if and only if $<:_{\node{n}}$ is well-founded.
\end{definition}

Note that this definition implies that either all $\mu$-leaves having the same companion node are successful, or none are.

The remainder of this section is devoted to proving \emph{pseudo-transitivity} properties of the dependency relations $\lessdot_{\node{n}', \node{n}}$ and $<:_{\node{n}', \node{n}}$.  Generally speaking, we would not expect these to be transitive, because e.g.\/ when $s' \lessdot_{\node{n}',\node{n}} s$ holds, $s'$ and $s$ may belong to different sets (namely, the sets of states in their respective proof nodes).  However, if we allow the node labels on the relations to align properly, we do have a property that resembles transitivity.

\begin{lemma}[Pseudo-transitivity of $\lessdot_{\node{n}',\node{n}}$]\label{lem:pseudo-transitivity-of-dependency-ordering}
    Let $\node{n}_1, \node{n}_2$ and $\node{n}_3$ be proof nodes in partial tableau $\mathbb{T}$, and assume $s_1, s_2$ and $s_3$ are such that $s_3 \lessdot_{\node{n}_3, \node{n}_2} s_2$ and $s_2 \lessdot_{\node{n}_2, \node{n}_1} s_1$.
    Then $s_3 \lessdot_{\node{n}_3,\node{n}_1} s_1$.
\end{lemma}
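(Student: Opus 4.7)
The plan is to prove this by induction on the derivation of $s_2 \lessdot_{\node{n}_2, \node{n}_1} s_1$ from Definition~\ref{def:dependency_ordering}. The structure of $\lessdot$ closely mirrors the reflexive-transitive closure of the local dependency ordering $<_{\node{m}, \node{n}}$, with the base case supplying reflexivity and the second clause chaining one additional $<$-step onto an existing $\lessdot$-derivation, so pseudo-transitivity should fall out by unfolding the inductive definition from the right.

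More concretely, I would fix $\node{n}_1, \node{n}_3, s_1, s_3$ and the hypothesis $s_3 \lessdot_{\node{n}_3, \node{n}_2} s_2$, and induct on the derivation of $s_2 \lessdot_{\node{n}_2, \node{n}_1} s_1$. In the base case, we have $\node{n}_1 = \node{n}_2$ and $s_1 = s_2$, so the conclusion $s_3 \lessdot_{\node{n}_3, \node{n}_1} s_1$ is literally the hypothesis $s_3 \lessdot_{\node{n}_3, \node{n}_2} s_2$. In the inductive case, the derivation of $s_2 \lessdot_{\node{n}_2, \node{n}_1} s_1$ ends with some intermediate proof node $\node{m}$ and state $s'' \in \seqst(\node{m})$ such that $s_2 \lessdot_{\node{n}_2, \node{m}} s''$ and $s'' <_{\node{m}, \node{n}_1} s_1$. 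Applying the induction hypothesis to the pair $s_3 \lessdot_{\node{n}_3, \node{n}_2} s_2$ and $s_2 \lessdot_{\node{n}_2, \node{m}} s''$ yields $s_3 \lessdot_{\node{n}_3, \node{m}} s''$. Combining this with $s'' <_{\node{m}, \node{n}_1} s_1$ via the second clause of Definition~\ref{def:dependency_ordering} produces $s_3 \lessdot_{\node{n}_3, \node{n}_1} s_1$, as desired.

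The only subtlety is being careful about which hypothesis one inducts on: inducting on $s_3 \lessdot_{\node{n}_3, \node{n}_2} s_2$ would be awkward because the second clause of the definition extends a derivation on the \emph{right} (with a final $<$-step ending at the ``larger'' state $s$), so the appropriate induction is on the derivation of $s_2 \lessdot_{\node{n}_2, \node{n}_1} s_1$, which naturally peels off the final $<$-step landing at $s_1$. Beyond picking this orientation, the argument is entirely mechanical, so I do not expect any real obstacle; the state-membership side conditions $s_i \in \seqst(\node{n}_i)$ required by the definition are already furnished by the two hypotheses and preserved through the derivation.
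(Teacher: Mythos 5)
Your proof is correct and follows exactly the same route as the paper's: fix the hypothesis $s_3 \lessdot_{\node{n}_3,\node{n}_2} s_2$ and induct on the derivation of $s_2 \lessdot_{\node{n}_2,\node{n}_1} s_1$, with the base case being immediate and the inductive case appealing to the induction hypothesis followed by the second clause of Definition~\ref{def:dependency_ordering}. Your remark about choosing the correct hypothesis to induct on is also the right observation.
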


\remove{
\begin{proofsketch}
    Assume $s_3 \lessdot_{\node{n}_3, \node{n}_2} s_2$. The result then follows by induction on the definition of $s_2 \lessdot_{\node{n}_2,\node{n}_1} s_1$. The detailed proof is included in the appendix. \qedhere
\end{proofsketch}
}
\begin{proof}
    Assume that $s_3 \lessdot_{\node{n}_3, \node{n}_2} s_2$.
    The proof proceeds by induction on the definition of $s_2 \lessdot_{\node{n}_2,\node{n}_1} s_1$.  There are two cases consider.
    \begin{itemize}
        \item
        $\node{n}_1 = \node{n}_2$ and $s_1 = s_2$.
        In this case it immediately follows that $s_3 \lessdot_{\node{n}_3, \node{n}_1} s_1$.
        \item
        There exists $\node{m}$ and $s' \in \seqst(\node{m})$ such that $s_2 \lessdot_{\node{n}_2, \node{m}} s'$ and $s' <_{\node{m},\node{n}_1} s_1$.
        In this case the induction hypothesis guarantees that $s_3 \lessdot_{\node{n}_3,\node{m}} s'$.  Since $s' <_{\node{m},\node{n}_1} s_1$, Definition~\ref{def:dependency_ordering} guarantees that $s_3 \lessdot_{\node{n}_3, \node{n}_1} s_1$.
        \qedhere
    \end{itemize}
\end{proof}

We establish a similar pseudo-transitivity property for $<:_{\node{n}', \node{n}}$.
First, the following lemma considers a restricted case of the pseudo-transitivity property for $<:_{\node{n}', \node{n}}$ first, in which $s_2$ and $s_1$ are related by $\lessdot_{\node{n}_2, \node{n}_1}$ rather than $<:_{\node{n}_2, \node{n}_1}$.

\begin{lemma}\label{lem:front_extend_extended_path_ordering}
    Let $\node{n}_1, \node{n}_2$ and $\node{n}_3$ be proof nodes in partial tableau $\mathbb{T}$, such that $s_3 <:_{\node{n}_3, \node{n}_2} s_2$ and $s_2 \lessdot_{\node{n}_2, \node{n}_1} s_1$.
    Then $s_3 <:_{\node{n}_3,\node{n}_1} s_1$.
\end{lemma}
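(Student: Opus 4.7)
The plan is to prove the lemma by induction on the derivation of $s_3 <:_{\node{n}_3, \node{n}_2} s_2$ according to the two-clause Definition~\ref{def:extended_path_ordering}. The lemma is saying that an extended-dependency chain ending at $s_2$ in $\node{n}_2$ can be extended on the right by the $\lessdot$-step $s_2 \lessdot_{\node{n}_2, \node{n}_1} s_1$, so induction on the structure of that chain is the natural route.

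In the base case, the derivation of $s_3 <:_{\node{n}_3, \node{n}_2} s_2$ uses clause~(\ref{def:extended_path_ordering-base}), giving $s_3 \lessdot_{\node{n}_3, \node{n}_2} s_2$. Combining this with the hypothesis via Lemma~\ref{lem:pseudo-transitivity-of-dependency-ordering} (pseudo-transitivity of $\lessdot$) yields $s_3 \lessdot_{\node{n}_3, \node{n}_1} s_1$, whence clause~(\ref{def:extended_path_ordering-base}) of Definition~\ref{def:extended_path_ordering} immediately delivers $s_3 <:_{\node{n}_3, \node{n}_1} s_1$.

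For the step case, the derivation uses clause~(\ref{def:extended_path_ordering-step}), so there exist a companion node $\node{m} \in \cnodesT$ with $\node{m} \neq \node{n}_2$ and $\node{m} \neq \node{n}_3$, along with $t, t' \in \seqst(\node{m})$ satisfying $s_3 <:_{\node{n}_3, \node{m}} t'$, $t' <:_{\node{m}}^{+} t$, and $t \lessdot_{\node{m}, \node{n}_2} s_2$. Applying Lemma~\ref{lem:pseudo-transitivity-of-dependency-ordering} to the last conjunct and the hypothesis gives $t \lessdot_{\node{m}, \node{n}_1} s_1$. If $\node{m} \neq \node{n}_1$, then the witnesses $\node{m}, t, t'$ still meet all side conditions of clause~(\ref{def:extended_path_ordering-step}) aimed at $\node{n}_1$, so $s_3 <:_{\node{n}_3, \node{n}_1} s_1$ follows at once.

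The main obstacle is the sub-case $\node{m} = \node{n}_1$, which the hypotheses do not exclude: clause~(\ref{def:extended_path_ordering-step}) then becomes inapplicable with $\node{m}$ as the intermediate companion node because the definition forbids $\node{m} = \node{n}_1$. In that sub-case $t \lessdot_{\node{n}_1, \node{n}_1} s_1$ collapses to $t = s_1$ by the base clause of Definition~\ref{def:dependency_ordering}, leaving us with $s_3 <:_{\node{n}_3, \node{n}_1} t'$ and $t' <:_{\node{n}_1}^{+} s_1$ to be fused into $s_3 <:_{\node{n}_3, \node{n}_1} s_1$. I would close this gap by a secondary induction on the length of the $<:_{\node{n}_1}^{+}$ chain, paired with case analysis on how $s_3 <:_{\node{n}_3, \node{n}_1} t'$ was derived: a clause-(\ref{def:extended_path_ordering-step}) derivation already supplies an intermediate companion node distinct from $\node{n}_1$, into whose $<:_{\cdot}^{+}$ component the extra cycle steps can be absorbed, while a clause-(\ref{def:extended_path_ordering-base}) derivation is promoted to clause~(\ref{def:extended_path_ordering-step}) using a witness drawn from the first $<:_{\node{n}_1}$ edge of the chain. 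Verifying that this absorption continues to respect the $\node{m} \neq \node{n}, \node{n}'$ side condition is where the real technical work of the proof will lie.
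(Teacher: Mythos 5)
Your induction on the derivation of $s_3 <:_{\node{n}_3,\node{n}_2} s_2$, with pseudo-transitivity of $\lessdot$ doing the work in both cases, is exactly the paper's proof. The only place you diverge is the sub-case $\node{m} = \node{n}_1$ in the step case, which you rightly note is not excluded by the hypotheses of the induction case, and for which you sketch---but do not carry out---a secondary induction that absorbs the cycle at $\node{n}_1$ into another companion node. That machinery is unnecessary, because the sub-case is vacuous: $t \lessdot_{\node{m},\node{n}_2} s_2$ forces $\node{m} \in D(\node{n}_2)$, and $s_2 \lessdot_{\node{n}_2,\node{n}_1} s_1$ forces $\node{n}_2 \in D(\node{n}_1)$, so if $\node{m} = \node{n}_1$ then $\node{n}_1$ and $\node{n}_2$ are descendants of each other and hence equal (the descendant relation in a finite tree is antisymmetric), contradicting the condition $\node{m} \neq \node{n}_2$ supplied by the very derivation you are unpacking. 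Hence the witnesses $\node{m}, t, t'$ automatically satisfy $\node{m} \neq \node{n}_1$ and $\node{m} \neq \node{n}_3$, and clause~(\ref{def:extended_path_ordering-step}) applies directly, as the paper silently assumes. Your instinct to check the side condition is good---the paper does not make this argument explicit---but the fix is a one-line structural observation, not a nested induction, and as written your proposal leaves that sub-case unproved.
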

\begin{proof}
    Assume that $s_2 \lessdot_{\node{n}_2, \node{n}_1} s_1$.  The proof proceeds by induction on the definition of $s_3 <:_{\node{n}_3, \node{n}_2} s_2$.  There are two cases to consider.
    \begin{itemize}
        \item
        $s_3 \lessdot_{\node{n}_3,\node{n}_2} s_2$.
        From the pseudo-transitivity of $\lessdot_{\node{n}',\node{n}}$ (Lemma~\ref{lem:pseudo-transitivity-of-dependency-ordering}) it follows that $s_3 \lessdot_{\node{n}_3, \node{n}_1} s_1$,
        and thus
        $s_3 <:_{\node{n}_3,\node{n}_1} s_1$.
        \item
        There exists companion node $\node{m}$,
        with
        $\node{m} \neq \node{n}_2$ and $\node{m} \neq \node{n}_3$,
        and $t, t' \in \seqst(\node{m})$
        such that
        $s_3 <:_{\node{n}_3, \node{m}}t'$,
        $t' <:^+_{\node{m}} t$ and
        $t \lessdot_{\node{m}, \node{n}_2} s_2$.
        Lemma~\ref{lem:pseudo-transitivity-of-dependency-ordering} ensures that $t \lessdot_{\node{m}, \node{n}_1} s_1$,
        and the definition of $<:_{\node{n}_3, \node{n}_1}$ confirms $s_3 <:_{\node{n}_3, \node{n}_1} s_1$.
        \qedhere
    \end{itemize}
\end{proof}
\noindent
We can now prove the pseudo-transitivity of $<:_{\node{n}, \node{n}}$.

\begin{lemma}[Pseudo-transitivity of $<:_{\node{n}', \node{n}}$]\label{lem:join_extended_path_ordering}
Let $\node{n}_1, \node{n}_2$ and $\node{n}_3$ be proof nodes in partial tableau $\mathbb{T}$, and assume $s_3 <:_{\node{n}_3, \node{n}_2} s_2$ and $s_2 <:_{\node{n}_2, \node{n}_1} s_1$.
Then $s_3 <:_{\node{n}_3,\node{n}_1} s_1$.
\end{lemma}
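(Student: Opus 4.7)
The plan is to induct on the derivation of $s_2 <:_{\node{n}_2,\node{n}_1} s_1$, paralleling the proof of Lemma~\ref{lem:front_extend_extended_path_ordering} but with the induction on the second relation rather than the first. In the base case the derivation uses clause~\ref{def:extended_path_ordering-base} of Definition~\ref{def:extended_path_ordering}, i.e.\ $s_2 \lessdot_{\node{n}_2,\node{n}_1} s_1$; combined with the hypothesis $s_3 <:_{\node{n}_3,\node{n}_2} s_2$, Lemma~\ref{lem:front_extend_extended_path_ordering} immediately yields $s_3 <:_{\node{n}_3,\node{n}_1} s_1$.

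In the inductive step, $s_2 <:_{\node{n}_2,\node{n}_1} s_1$ is obtained via clause~\ref{def:extended_path_ordering-step}, providing a companion node $\node{m} \in \cnodesT$ with $\node{m} \neq \node{n}_2$, $\node{m} \neq \node{n}_1$, and states $t, t' \in \seqst(\node{m})$ satisfying $s_2 <:_{\node{n}_2,\node{m}} t'$, $t' <:^+_{\node{m}} t$, and $t \lessdot_{\node{m},\node{n}_1} s_1$. Since $s_2 <:_{\node{n}_2,\node{m}} t'$ is a strict subderivation of the inducted relation, the induction hypothesis combined with $s_3 <:_{\node{n}_3,\node{n}_2} s_2$ yields $s_3 <:_{\node{n}_3,\node{m}} t'$. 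When $\node{m} \neq \node{n}_3$, all side conditions of clause~\ref{def:extended_path_ordering-step} are satisfied with the same $\node{m}$, $t$, and $t'$, and reapplying that clause delivers $s_3 <:_{\node{n}_3,\node{n}_1} s_1$.

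The main obstacle is the remaining subcase $\node{m} = \node{n}_3$, where $\node{n}_3$ itself is the bridging companion node and clause~\ref{def:extended_path_ordering-step} cannot be reused directly with $\node{m}$ as witness (because it requires $\node{m} \neq \node{n}'$). I plan to handle this by a further case analysis of how the derivation $s_3 <:_{\node{n}_3,\node{n}_3} t'$ (produced by the induction hypothesis) was obtained. If it came from clause~\ref{def:extended_path_ordering-base}, then $s_3 \lessdot_{\node{n}_3,\node{n}_3} t'$ forces $s_3 = t'$, so the chain $t' <:^+_{\node{n}_3} t$ is already a chain $s_3 <:^+_{\node{n}_3} t$; one then inspects the derivation of the original $s_3 <:_{\node{n}_3,\node{n}_2} s_2$ to extract an interior companion node $\node{m}' \neq \node{n}_3$ that can serve as the bridging companion for clause~\ref{def:extended_path_ordering-step}. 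If instead $s_3 <:_{\node{n}_3,\node{n}_3} t'$ itself arose via clause~\ref{def:extended_path_ordering-step} with some companion $\node{m}' \neq \node{n}_3$, the $<:^+_{\node{m}'}$-chain provided there can be extended by splicing in $t' <:^+_{\node{n}_3} t$ and $t \lessdot_{\node{n}_3,\node{n}_1} s_1$, after using pseudo-transitivity of $\lessdot_{\node{n}',\node{n}}$ (Lemma~\ref{lem:pseudo-transitivity-of-dependency-ordering}) to bridge the bottom $\lessdot$-steps.

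The bulk of the bookkeeping lies in this last subcase, and the hardest part is verifying that the companion node $\node{m}'$ selected to replace $\node{m} = \node{n}_3$ still satisfies both side conditions $\node{m}' \neq \node{n}_3$ and $\node{m}' \neq \node{n}_1$, and that the spliced chains remain valid $<:^+_{\node{m}'}$-chains rather than accidentally forming degenerate loops. I expect this verification will rely on the descent relationship $\node{n}_3 \in D(\node{n}_2) \subseteq D(\node{m})$ inherited from the given derivations, together with the observation (used implicitly in Lemma~\ref{lem:front_extend_extended_path_ordering}) that $\lessdot$ propagates such structural constraints through the tree.
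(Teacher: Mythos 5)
Your main line — induction on the derivation of $s_2 <:_{\node{n}_2,\node{n}_1} s_1$, discharging the base case with Lemma~\ref{lem:front_extend_extended_path_ordering} and the inductive case by applying the induction hypothesis to $s_2 <:_{\node{n}_2,\node{m}} t'$ and re-instantiating clause~\ref{def:extended_path_ordering-step} with the same witnesses $\node{m}, t, t'$ — is exactly the paper's proof, and it is correct. You are also right that one must check the side condition $\node{m} \neq \node{n}_3$ before reusing clause~\ref{def:extended_path_ordering-step}; the paper's own proof passes over this silently.

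However, the subcase $\node{m} = \node{n}_3$ to which you devote the bulk of your effort is vacuous, and the chain-splicing machinery you propose for it is both unnecessary and, as sketched, would not go through. Every instance of $s' <:_{\node{n}',\node{n}} s$ forces $\node{n}' \in D(\node{n})$: this holds in clause~\ref{def:extended_path_ordering-base} because $\lessdot_{\node{n}',\node{n}}$ implies $\node{n}' \in D(\node{n})$, and it propagates through clause~\ref{def:extended_path_ordering-step} via $t \lessdot_{\node{m},\node{n}} s$ and the recursive occurrence of $<:_{\node{n}',\node{m}}$. Hence $\node{n}_3 \in D(\node{n}_2)$ (from the first hypothesis) and $\node{n}_2 \in D(\node{m})$ (from $s_2 <:_{\node{n}_2,\node{m}} t'$). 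If $\node{m} = \node{n}_3$, then $\node{n}_2 \in D(\node{n}_3)$ and $\node{n}_3 \in D(\node{n}_2)$, so antisymmetry of descent in a tree gives $\node{n}_2 = \node{n}_3 = \node{m}$, contradicting the side condition $\node{m} \neq \node{n}_2$ supplied by the clause-\ref{def:extended_path_ordering-step} decomposition of $s_2 <:_{\node{n}_2,\node{n}_1} s_1$. So $\node{m} \neq \node{n}_3$ always holds and the witness can be reused unconditionally. Note also that your proposed repair would founder even on its own terms: the same descent argument shows that $<:_{\node{n},\node{n}}$ can only be established by clause~\ref{def:extended_path_ordering-base} and hence is the identity on $\seqst(\node{n})$, so in the hypothetical subcase the derivation of $s_3 <:_{\node{n}_3,\node{n}_2} s_2$ would be the trivial identity and would contain no ``interior companion node $\node{m}'$'' to extract.
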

\remove{
\begin{proofsketch}
    Assume that $s_3 <:_{\node{n}_3,\node{n}_2} s_2$.  The result then follows by induction on the definition of $s_2 <:_{\node{n}_2,\node{n}_1} s_1$. The full proof is included in the appendix.\qedhere
\end{proofsketch}
}
\begin{proof}
    Assume that $s_3 <:_{\node{n}_3,\node{n}_2} s_2$.  The proof proceeds by induction on the definition of $s_2 <:_{\node{n}_2,\node{n}_1} s_1$.  There are two cases to consider.
    \begin{itemize}
        \item
        $s_2 \lessdot_{\node{n}_2, \node{n}_1} s_1$.
        The result follows immediately from the pseudo-transitivity of $\lessdot_{\node{n}',\node{n}}$ (Lemma~\ref{lem:front_extend_extended_path_ordering}).
        \item
        There exists companion node $\node{m}$ in $\mathbb{T}$,
        with
        $\node{m} \neq \node{n}_1$ and $\node{m} \neq \node{n}_2$,
        and $t, t' \in \seqst(\node{m})$,
        such that
        $s_2 <:_{\node{n}_2, \node{m}} t'$,
        $t' <:^+_{\node{m}} t$ and
        $t \lessdot_{\node{m}, \node{n}_1} s_1$.
        Since $s_2 <:_{\node{n}_2, \node{m}} t'$, the induction hypothesis guarantees that $s_3 <:_{\node{n}_3, \node{m}} t'$, and
        the definition of $<:_{\node{n}_3, \node{n}_1}$
        then establishes that $s_3 <:_{\node{n}_3, \node{n}_1} s_1$.
        \qedhere
    \end{itemize}
\end{proof}

We now formalize a semantic property, which we call \emph{semantic sufficiency}, enjoyed by the local dependency relation $<_{\node{n}', \node{n}}$ for internal node $\node{n}$.  This property asserts that if for every $s \in \seqst(\node{n})$, and every $s'$ such that that $s' <_{\node{n}', \node{n}} s$, $s'$ belongs to the semantics of $\node{n}'$, then this is sufficient to conclude that $s$ belongs to the semantics of $\node{n}$.  

\begin{lemma}[Semantic sufficiency of $<_{\node{n}', \node{n}}$] \label{lem:semantic-sufficiency-of-<}
    Let $\node{n}$ be an internal proof node in partial tableau $\mathbb{T}$, and let $s \in \seqst(\node{n})$ be such that for all $s'$ and $\node{n}'$ with $s' <_{\node{n}', \node{n}} s$, $s' \in \semop{\node{n}'}$.  Then $s \in \semop{\node{n}}$.
\end{lemma}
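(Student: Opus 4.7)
The plan is to proceed by case analysis on the rule application $\rho(\node{n})$, which is defined because $\node{n}$ is internal. In each case, the shape of the conclusion and premises dictated by Definition~\ref{def:proof-rule} and Figure~\ref{fig:proof-rules} determines the children of $\node{n}$, and the definition of $<_{\node{n}',\node{n}}$ (Definition~\ref{def:local_dependency_ordering}) determines which states $s'$ in which children $\node{n}'$ are related to $s$. In every case the strategy is to identify from the hypothesis which $s'\in\semop{\node{n}'}$ are available, and then close the argument using the semantic equations of Definition~\ref{def:mu-calculus-semantics}.

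The propositional and modal cases are straightforward. For $\rho(\node{n})=\land$ both children carry the same state $s$, giving $s\in\semop{\node{n}_1}\cap\semop{\node{n}_2}=\semop{\node{n}}$; for $\rho(\node{n})=\lor$ the state $s$ lies in at least one of the two state sets of the children, which suffices. For $\rho(\node{n})=[K]$ the child's state set is exactly $\{s'\mid\exists s\in S.\,s\xrightarrow{K}s'\}$, and $<_{\node{n}',\node{n}}$ relates $s$ to each $s'$ with $s\xrightarrow{K}s'$; the hypothesis delivers $s'\in\semop{\node{n}'}=\semTV{\Phi}$ for all such $s'$, yielding $s\in\pre_{[K]}(\semTV{\Phi})=\semTV{[K]\Phi}$. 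For $\rho(\node{n})=(\dia{K},f)$ we have $s\xrightarrow{K}f(s)$ by the side condition and $f(s)<_{\node{n}',\node{n}}s$, so the hypothesis gives $f(s)\in\semTV{\Phi}$, hence $s\in\pre_{\dia{K}}(\semTV{\Phi})=\semTV{\dia{K}\Phi}$. The Thin case is immediate since $s\in S\subseteq S'=\seqst(\node{n}')$ and $\semop{\node{n}'}=\semop{\node{n}}$.

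The fixpoint-related cases are where some care is needed, and form the main technical point. For $\rho(\node{n})=\sigma Z$ with conclusion $S\tnxTVD\sigma Z.\Phi$ and premise $S\tnxTV{\Delta\cdot(U=\sigma Z.\Phi)}U$ (with $U$ fresh), we have $s<_{\node{n}',\node{n}}s$, so the hypothesis gives $s\in\semop{\node{n}'}=\semT{U}{\V[\Delta\cdot(U=\sigma Z.\Phi)]}$. I then invoke Lemma~\ref{lem:U-semantic-correspondence} to rewrite this as $\semT{\sigma Z.\Phi}{\V[\Delta]}=\semop{\node{n}}$. Symmetrically, for $\rho(\node{n})=\textnormal{Un}$ with $\Delta(U)=\sigma Z.\Phi$, the hypothesis yields $s\in\semT{\Phi[Z:=U]}{\V[\Delta]}$, and Lemma~\ref{lem:constant-unfolding} identifies this with $\semT{U}{\V[\Delta]}=\semop{\node{n}}$.

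The main obstacle, modest as it is, lies in these last two cases: one must be careful that the definition-list manipulations performed by the rules $\sigma Z$ and Un are matched exactly by the corresponding semantic equalities, which is precisely what Lemmas~\ref{lem:U-semantic-correspondence} and~\ref{lem:constant-unfolding} guarantee. Once the appropriate lemma is applied in each case, the remaining argument is bookkeeping on the form of $\lambda(\node{n})$ and its children's labels, and the proof is complete.
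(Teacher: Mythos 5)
Your proposal is correct and follows essentially the same route as the paper: a case analysis on $\rho(\node{n})$, with each case closed by the corresponding clause of the semantics. The only cosmetic difference is in the $\sigma Z$ case, where you cite Lemma~\ref{lem:U-semantic-correspondence} while the paper unfolds the definition list directly via Lemma~\ref{lem:definition-list-correspondence}; both yield $\semT{U}{\V[\Delta\cdot(U=\sigma Z.\Phi)]} = \semT{\sigma Z.\Phi}{\V[\Delta]}$, so this is an equivalent shortcut rather than a different argument.
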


\remove{
\begin{proofsketch}
Let $\node{n}$ be an internal node in $\mathbb{T} = \tableauTrl$.  Since $\node{n}$ is internal $\rho(\node{n})$ is defined.  Now assume $s \in \seqst(\node{n})$; the proof proceeds by a case analysis on $\rho(\node{n})$.  A full proof can be found in the appendix. \qedhere
\end{proofsketch}
}
\begin{proof}
    Let $\node{n} = S \tnxTVD \Phi$ be an internal node in $\mathbb{T} = \tableauTrl$.  Since $\node{n}$ is internal, $\rho(\node{n})$ is defined.  Now fix $s \in S$; we proceed by a case analysis on $\rho(\node{n})$.
    \begin{description}
        \item[$\rho(\node{n}) = \land$.]

              In this case $\Phi = \Phi_1 \land \Phi_2$, and $cs(\node{n}) = \node{n}_1\node{n}_2$, where $\node{n}_1 = S \tnxTVD \Phi_1$ and $\node{n}_2 = S \tnxTVD \Phi_2$.  By definition $s' <_{\node{n}', \node{n}} s$ iff $s' = s$ and $\node{n}' = \node{n}_1$ or $\node{n}' = \node{n}_2$.  We reason as follows.
              \begin{flalign*}
                  &\text{For all $s', \node{n}'$ such that $s' <_{\node{n}', \node{n}} s$, $s' \in \semop{\node{n}'}$}\span\span
                  \\
                  &\text{iff}\;\;\; s \in \semop{\node{n}_1} \;\text{and}\; s \in \semop{\node{n}_2}
                  && \text{Definition of $<_{\node{n}', \node{n}}$ when $\rho(\node{n}) = \land$}
                  \\
                  &\text{iff}\;\;\; s \in \semT{\Phi_1}{\V[\Delta]} \;\text{and}\; s \in \semT{\Phi_2}{\V[\Delta]}
                  && \text{Definition of $\semTV{\node{n}_i}$, $i=1,2$}
                  \\
                  & \text{iff}\;\;\; s \in \semT{\Phi_1 \land \Phi_2}{\V[\Delta]}
                  && \text{Definition of $\semT{\Phi_1 \land \Phi_2}{\V[\Delta]}$}
                  \\
                  & \text{iff}\;\;\; s \in \semT{\Phi}{\V[\Delta]}
                  && \text{$\Phi = \Phi_1 \land \Phi_2$}
                  \\
                  & \text{iff}\;\;\; s \in \semop{\node{n}}
                  && \text{Definition of $\semop{\node{n}}$}
              \end{flalign*}

        \item[$\rho(\node{n}) = \lor$.]
              In this case $\Phi = \Phi_1 \lor \Phi_2$, and $cs(\node{n}) = \node{n}_1\node{n}_2$, where $\node{n}_1 = S_1 \tnxTVD \Phi_1$, $\node{n}_2 = S_2 \tnxTVD \Phi_2$ and $S = S_1 \cup S_2$.  By definition $s' <_{\node{n}', \node{n}} s$ iff $s' = s$ and either $\node{n}' = \node{n}_1$, provided  $s \in S_1$, or $\node{n}' = \node{n}_2$, provided $s \in S_2$.  Since either $s \in S_1$ or $s \in S_2$, it follows that  $s <_{\node{n}_1, \node{n}} s$ or $s <_{\node{n}_2, \node{n}} s$ (or both, if $s \in S_1 \cap S_2$).  We reason as follows.
              \begin{flalign*}
                  &\text{For all $s', \node{n}'$ such that $s' <_{\node{n}', \node{n}} s$, $s' \in \semop{\node{n}'}$}\span\span
                  \\
                  &\text{implies}\;\;\; s \in \semop{\node{n}_1} \;\text{or}\; s \in \semop{\node{n}_2}
                  && \text{Definition of $<_{\node{n}', \node{n}}$ when $\rho(\node{n}) = \lor$}
                  \\
                  &\text{iff}\;\;\; s \in \semT{\Phi_1}{\V[\Delta]} \;\text{or}\; s \in \semT{\Phi_2}{\V[\Delta]}
                  && \text{Definition of $\semop{\node{n}_i}$, $i=1,2$}
                  \\
                  & \text{iff}\;\;\; s \in \semT{\Phi_1 \lor \Phi_2}{\V[\Delta]}
                  && \text{Definition of $\semT{\Phi_1 \lor \Phi_2}{\V[\Delta]}$}
                  \\
                  & \text{iff}\;\;\; s \in \semT{\Phi}{\V[\Delta]}
                  && \text{$\Phi = \Phi_1 \lor \Phi_2$}
                  \\
                  & \text{iff}\;\;\; s \in \semop{\node{n}}
                  && \text{Definition of $\semop{\node{n}}$}
              \end{flalign*}

        \item[$\rho(\node{n}) = {[K]}$.]
              In this case $\Phi = [K] \Phi'$, and $cs(\node{n}) = \node{n}''$, where $\node{n}'' = S'' \tnxTVD \Phi'$ and $S'' = \{s'' \in S \mid \exists s \in S. s \xrightarrow{K} s'' \}$.  By definition $s' <_{\node{n}', \node{n}} s$ iff $\node{n}' = \node{n}''$ and $s \xrightarrow{K} s'$.  We reason as follows.
              \begin{flalign*}
                  &\text{For all $s', \node{n}'$ such that $s' <_{\node{n}', \node{n}} s$, $s' \in \semop{\node{n}'}$}\span\span
                  \\
                  &\text{iff}\;\;\; \forall s'. s \xrightarrow{K} s' \implies s' \in \semop{\node{n}''}
                  && \text{Definition of $<_{\node{n}', \node{n}}$ when $\rho(\node{n}) = [K]$}
                  \\
                  & \text{iff}\;\;\; \forall s'. s \xrightarrow{K} s' \implies s' \in \semT{\Phi'}{\V[\Delta]}
                  && \text{Definition of $\semTV{\node{n}''}$}
                  \\
                  & \text{iff}\;\;\; s \in \semT{[K] \Phi'}{\V[\Delta]}
                  && \text{Definition of $\semT{[K] \Phi'}{\V[\Delta]}$}
                  \\
                  & \text{iff}\;\;\; s \in \semT{\Phi}{\V[\Delta]}
                  && \text{$\Phi = [K]\Phi'$}
                  \\
                  & \text{iff}\;\;\; s \in \semop{\node{n}}
                  && \text{Definition of $\semop{\node{n}}$}
              \end{flalign*}

        \item[$\rho(\node{n}) = (\dia{K},f)$.]
              In this case $\Phi = \dia{K} \Phi'$, and $cs(\node{n}) = \node{n}''$, where $\node{n}'' = S'' \tnxTVD \Phi'$ and witness function $f \in S \rightarrow \states{S}$ is such that $S'' = f(S)$ and for all $s \in S, s \xrightarrow{K} f(s)$.  By definition $s' <_{\node{n}', \node{n}} s$ iff $\node{n}' = \node{n}''$ and $s' = f(s)$.  We reason as follows.
              \begin{flalign*}
                  &\text{For all $s', \node{n}'$ such that $s' <_{\node{n}', \node{n}} s$, $s' \in \semop{\node{n}'}$}\span\span
                  \\
                  &\text{iff}\;\;\; f(s) \in \semop{\node{n}''}
                  && \text{Definition of $<_{\node{n}', \node{n}}$ when $\rho(\node{n}) = \dia{K}$}
                  \\
                  & \text{iff}\;\;\; s \xrightarrow{K} f(s) \;\text{and}\; f(s) \in \semT{\Phi'}{\V[\Delta]}
                  && \text{Property of $f$, definition of $\semop{\node{n}''}$}
                  \\
                  & \text{implies}\;\;\; s \in \semT{\dia{K}\Phi'}{\V[\Delta]}
                  && \text{Definition of $\semT{\dia{K}\Phi'}{\V[\Delta]}$}
                  \\
                  & \text{iff}\;\;\; s \in \semT{\Phi}{\V[\Delta]}
                  && \text{$\Phi = \dia{K}\Phi'$}
                  \\
                  & \text{iff}\;\;\; s \in \semop{\node{n}}
                  && \text{Definition of $\semop{\node{n}}$}
              \end{flalign*}

        \item[$\rho(\node{n}) = \sigma Z$.]
              In this case $\Phi = \sigma Z. \Phi'$, and $cs(\node{n}) = \node{n}''$, where $\node{n}'' = S \tnxTV{\Delta'} U$, $U$ is fresh, and $\Delta' = \Delta \cdot (U = \sigma Z. \Phi')$.  By definition $s' <_{\node{n}', \node{n}} s$ iff $\node{n}' = \node{n}''$ and $s = s'$.  We reason as follows.
              \begin{flalign*}
                  &\text{For all $s', \node{n}'$ such that $s' <_{\node{n}', \node{n}} s$, $s' \in \semop{\node{n}'}$}\span\span
                  \\
                  & \text{iff}\;\;\; s \in \semop{\node{n}''}
                  && \text{Definition of $<_{\node{n}', \node{n}}$ when $\rho(\node{n}) = \sigma Z$}
                  \\
                  & \text{iff}\;\;\; s \in \semT{U}{V[\Delta']}
                  && \text{Definition of $\semop{\node{n}''}$}
                  \\
                  & \text{iff}\;\;\; s \in \semTV{ U [ \Delta'] }
                  && \text{Lemma~\ref{lem:definition-list-correspondence}}
                  \\
                  & \text{iff}\;\;\; s \in \semTV{ U [ \Delta \cdot (U := \sigma Z . \Phi') ] }
                  && \text{$\Delta' = \Delta \cdot (U = \sigma Z . \Phi') $}
                  \\
                  & \text{iff}\;\;\; s \in \semTV{ \left(U [U := \sigma Z . \Phi'] \right) [ \Delta ] }
                  && \text{Definition of $U [ \Delta \cdot (U := \sigma Z . \Phi') ]$}
                  \\
                  & \text{iff}\;\;\; s \in \semTV{\left( \sigma Z . \Phi'\right) [\Delta]}
                  \\
                  \multispan4{\hfil Definition of substitution, application of  $\rho(\node{n}) = \sigma Z.$ ensures $U$ fresh}
                  \\
                  & \text{iff}\;\;\; s \in \semT{\sigma Z . \Phi'}{\V[\Delta]}
                  && \text{Lemma~\ref{lem:definition-list-correspondence}}
                  \\
                  & \text{iff}\;\;\; s \in \semTV{\Phi}
                  && \text{$\Phi =\sigma Z . \Phi'$}
                  \\
                  & \text{iff}\;\;\; s \in \semop{\node{n}}
                  && \text{Definition of $\semop{\node{n}}$}
              \end{flalign*}

        \item[$\rho(\node{n}) = \textnormal{Un}$.]
              In this case $\Phi = U$, with $\Delta(U) = \sigma Z. \Phi'$, and $cs(\node{n}) = \node{n}''$, where $\node{n}'' = S \tnxTVD \Phi'[ Z:=U ]$.  By definition $s' <_{\node{n}', \node{n}} s$ iff $\node{n}' = \node{n}''$ and $s = s'$.  We reason as follows.
              \begin{flalign*}
                  &\text{For all $s', \node{n}'$ such that $s' <_{\node{n}', \node{n}} s$, $s' \in \semop{\node{n}'}$}\span\span
                  \\
                  & \text{iff}\;\;\; s \in \semop{\node{n}''}
                  && \text{Definition of $<_{\node{n}', \node{n}}$ when $\rho(\node{n}) = \text{Un}$}
                  \\
                  & \text{iff}\;\;\; s \in \semT{ \Phi' [Z:=U]}{\V[\Delta]}
                  && \text{Definition of $\semop{\node{n}''}$}
                  \\
                  & \text{iff}\;\;\; s \in \semT{U}{\V[\Delta]}
                  && \text{Lemma~\ref{lem:constant-unfolding}}
                  \\
                  & \text{iff}\;\;\; s \in \semT{\Phi}{\V[\Delta]}
                  && \text{$\Phi =U$}
                  \\
                  & \text{iff}\;\;\; s \in \semop{\node{n}}
                  && \text{Definition of $\semop{\node{n}}$}
              \end{flalign*}

        \item[$\rho(\node{n}) = \textnormal{Thin}$.]
              In this case $cs(\node{n}) = \node{n}''$, where $\node{n}'' = S' \tnxTVD \Phi$ and $S \subseteq S'$.  By definition $s' <_{\node{n}', \node{n}} s$ iff $\node{n}' = \node{n}''$ and $s = s'$.  We prove the following bi-implication, which implies the desired result.
              \begin{flalign*}
                  &\text{For all $s', \node{n}'$ such that $s' <_{\node{n}', \node{n}} s$, $s' \in \semop{\node{n}'}$}\span\span
                  \\
                  & \text{iff}\;\;\; s \in \semop{\node{n}''}
                  && \text{Definition of $<_{\node{n}', \node{n}}$ when $\rho(\node{n}) = \text{Thin}$}
                  \\
                  & \text{iff}\;\;\; s \in \semT{ \Phi }{\V[\Delta]}
                  && \text{Definition of $\semop{\node{n}''}$}
                  \\
                  & \text{iff}\;\;\; s \in \semTV{\node{n}}
                  && \text{Definition of $\semop{\node{n}}$}
              \end{flalign*}%
              \qedhere
    \end{description}
\end{proof}

\section{Soundness using support orderings}\label{sec:Soundness-via-support-orderings}

In this section we prove soundness of the proof system in the previous section by showing that for any successful tableau whose root is labeled by sequent $\seq{s} = S \tnxTV{\emptyL} \Phi$, $\seq{s}$ must be valid.  
Our proof relies on establishing that the transitive closure, $<:_{\node{m}}^+$, of the companion-node dependency relation (Definition~\ref{def:extended_path_ordering}) used to define the success of $\mu$-leaves, is a $\sigma$-compatible support ordering for the semantic functions associated with fixpoint nodes $\node{m}$.
Our reliance on support orderings for soundness stands in contrast to Bradfield's and Stirling's soundness proof for essentially this proof system~\cite{BS1992,Bra1991}, which relies on infinitary logic and the introduction of (infinite) ordinal-unfoldings of fixpoint formulas in particular.
Since our ultimate goal in this paper is to reason about timed extensions of the modal mu-calculus, we have opted for a different proof strategy that is based more on semantic rather than syntactic reasoning.  We also note that our use of support orderings is likely to enable the study of other success criteria besides $<:_{\node{m}}$, which is especially interesting for adaptations of this proof system to other settings, such as ones in which formulas are defined equationally or in which less aggressive use is made of formula unfolding than is the case here.

Our proof of soundness proceeds in four steps.
\begin{enumerate}
    \item 
    We show (Section~\ref{subsec:local-soundness}) that tableau rules are \emph{locally sound}, i.e., that when the child nodes of a proof node are valid, then the node itself is also valid.
    \item 
    We wish to be able to reason using tree induction about the meanings of proof nodes and the formulas in those nodes.  This reasoning is sometimes impeded because, due to unfolding, many nodes have the same formulas in them.  To address this problem, we show how syntactically distinct \emph{node formulas} can be constructed in a semantics-preserving fashion for nodes in a proof tree based on the structure of the proof tree. This material is in Section~\ref{subsec:node-formulas}.
    \item 
    We then prove that for companion nodes $\node{m}$ in a tableau, the ordering $<:_{\node{m}}^+$ is a support ordering for the semantic function associated with its node formula. This is done in Section~\ref{subsec:support-ordering-proof}.
    \item 
    Finally, in Section~\ref{subsec:soundness} we combine the previous three results to obtain soundness for the proof system:  if there is a successful tableau whose root $\seq{s}$ is such that $\seqdl(\seq{s}) = \emptyL$ then $\seq{s}$ is valid.
\end{enumerate}


\subsection{Local soundness}\label{subsec:local-soundness}

We call a proof system like ours \emph{locally sound} if for every internal node $\node{n}$ in any partial tableau, the validity of all the children of $\node{n}$ implies the validity of $\node{n}$.  This may be proven as follows.

\begin{lemma}[Local soundness]\label{lem:local-soundness}
    Let $\node{n}$ be an internal proof node in partial tableau $\mathbb{T}$.  Then $\node{n}$ is valid if all its children are valid.
\end{lemma}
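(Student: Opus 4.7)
The plan is to obtain local soundness as an essentially immediate corollary of Lemma~\ref{lem:semantic-sufficiency-of-<} (semantic sufficiency of $<_{\node{n}', \node{n}}$), which has already carried out the rule-by-rule semantic verification. That lemma says that a state $s$ in an internal node $\node{n}$ belongs to $\semop{\node{n}}$ whenever, for every $\node{n}'$ and $s'$ with $s' <_{\node{n}', \node{n}} s$, we have $s' \in \semop{\node{n}'}$. The present lemma is exactly the ``child validity implies parent validity'' reading of this statement.

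Concretely, I would fix an internal proof node $\node{n}$ in $\mathbb{T}$ and assume that every $\node{n}' \in c(\node{n})$ is valid, i.e., $\seqst(\node{n}') \subseteq \semop{\node{n}'}$. To show $\node{n}$ is valid it suffices to prove $s \in \semop{\node{n}}$ for arbitrary $s \in \seqst(\node{n})$. I would verify the hypothesis of Lemma~\ref{lem:semantic-sufficiency-of-<}: suppose $s' <_{\node{n}', \node{n}} s$. By Definition~\ref{def:local_dependency_ordering}, this is only possible when $\node{n}' \in c(\node{n})$ and $s' \in \seqst(\node{n}')$. Validity of the child $\node{n}'$ then gives $s' \in \semop{\node{n}'}$. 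Lemma~\ref{lem:semantic-sufficiency-of-<} therefore yields $s \in \semop{\node{n}}$, completing the argument.

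I do not anticipate any real obstacle, since all of the per-rule semantic bookkeeping (correspondence between substitution and valuation updates, unfolding of definitional constants, the predecessor semantics of $[K]$ and $\dia{K}$, monotonicity, etc.) has been done inside the proof of Lemma~\ref{lem:semantic-sufficiency-of-<}. The one subtlety worth flagging is the $\lor$ rule: in that case the equivalence in Lemma~\ref{lem:semantic-sufficiency-of-<} degenerates to a one-way implication, because a state $s \in S = S_1 \cup S_2$ is only guaranteed to satisfy $s <_{\node{n}', \node{n}} s$ for one of the two disjunct-children. This is precisely the direction needed here, so the argument still goes through uniformly across all rules, including Thin, where $S \subseteq S'$ immediately gives $S \subseteq S' \subseteq \semop{\node{n}'} = \semop{\node{n}}$.
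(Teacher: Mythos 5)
Your proof is correct and follows exactly the paper's own argument: fix $s \in \seqst(\node{n})$, use validity of the children to discharge the hypothesis of Lemma~\ref{lem:semantic-sufficiency-of-<}, and conclude $s \in \semop{\node{n}}$. The remarks about the $\lor$ and Thin cases are accurate but unnecessary, since all that casework already lives inside Lemma~\ref{lem:semantic-sufficiency-of-<}.
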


\begin{proof}
    Let $\mathbb{T} = \tableauTrl$, with $\tree{T} = (\node{N}, \node{r}, p, cs)$, be a partial tableau with internal node $\node{n}$, and assume that for each $\node{n}' \in c(\node{n})$, node $\node{n}'$ is valid.
    To establish that $\node{n}$ is valid, we must show that $\seqst(\node{n}) \subseteq \semop{\node{n}}$. To do so, we fix $s \in \seqst(\node{n})$ and show that $s \in \semop{\node{n}}$.  In support of this, consider arbitrary $s', \node{n}'$ such that $s' <_{\node{n}',\node{n}} s$ in $\mathbb{T}$.  By definition of $<_{\node{n}',\node{n}}$ it follows that $\node{n}' \in c(\node{n})$ and that $s' \in \seqst(\node{n}')$.  Moreover, since $\node{n}'$ is valid it follows that $s' \in \semop{\node{n}'}$; since this holds for all such $s'$ and $\node{n}'$, Lemma~\ref{lem:semantic-sufficiency-of-<} ensures that $s \in \semop{\node{n}}$.
    \qedhere
\end{proof}

\subsection{Node formulas}\label{subsec:node-formulas}

We now show how to associate a formula $P(\node{n})$
with every node $\node{n}$ in a tableau so that the structure of $P(\node{n})$ is based on the structure of the subtableau rooted at $\node{n}$ and $P(\node{n})$ is disentangled from the definition list of $\node{n}$. We then show that these formulas are semantically equivalent to the formulas embedded in the nodes' sequents in a precise sense.
Since our definition is inductive on the structure of the tableau rooted at $\node{n}$, this facilitates proofs over the semantics of formulas using tree induction.

In the remainder of this section we fix sort $\Sigma$, labeled transition system $\T = \lts{S}$ over $\Sigma$, valuation $\V \in \Var \to 2^{\states{S}}$, and tableau $\mathbb{T} = \tableauTrl$, with $\tree{T} = (\node{N}, \node{r}, p, cs)$.
We also recall the definition of $\cnodesT$ --- the companion nodes of $\mathbb{T}$ --- and fix the definitions of the following sets.
\begin{align*}
\mathbb{U}  &= \bigcup_{\node{n} \in \node{N}} \dom(\seqdl(\node{n}))
\\
\cnodes{\tree{T}'}    &= \node{N}' \cap \cnodesT \text{ for subtree $\tree{T}' = (\node{N}', \ldots)$ of $\tree{T}$}
\\
\cnodes{\tree{T}'}(U) &= \{ \node{n} \in \cnodes{\tree{T}'} \mid \seqfm(\node{n}) = U \} \text{ for subtree $\tree{T}' = (\node{N}', \ldots)$ of $\tree{T}$ and $U \in \mathbb{U}$}
\end{align*}
Set $\mathbb{U}$ contains all the definitional constants appearing in $\mathbb{T}$, while
$\cnodes{\tree{T}'}$ contains the companion nodes of $\mathbb{T}$ in subtree $\tree{T}'$ of $\tree{T}$.  Set $\cnodes{\tree{T}'}(U)$ consists of the companion nodes of $\mathbb{T}$ in subtree $\tree{T}'$ whose formula is $U \in \mathbb{U}$. For subtree $\tree{T}_{\node{n}}$ rooted at node $\node{n}$, note that $\companions{\tree{T}_{\node{n}}} \subseteq \cnodes{\tree{T}} = \cnodesT$. Also, if $\rho(\node{n}) = \text{Un}$ with $c(\node{n}) = \node{n}'$, then $\companions{\tree{T}_{\node{n}}} = \companions{\tree{T}_{\node{n}'}} \cup \{ \node{n} \}$.  
Note that Corollary~\ref{cor:shared-prefix-sequent-semantics} and the definition of $\semop{\node{n}}$ guarantee that for all $U \in \mathbb{U}$ and $\node{n}, \node{n}' \in \cnodesT(U)$, $\semop{\node{n}} = \semop{\node{n}'}$; we write $\sem{U}{}{\mathbb{T}}$ for this common value associated with $U$.
We now define $P(-)$ as follows.

\begin{definition}[Node formulas]~\label{def:node-formulas}
For each companion node $\node{m} \in \cnodesT$ let $Z_\node{m}$ be a unique fresh variable, with $\Var_{\mathbb{T}} = \{ Z_{\node{m}} \mid \node{m} \in \cnodesT \}$ the set of all such variables.  Then for $\node{n} \in \node{N}$ formula $P(\node{n})$ is defined inductively as follows.
\begin{enumerate}
\item\label{subdef:node-formulas-free-leaf}
    If $\node{n}$ is a free leaf (cf.\/ Definition~\ref{def:tableau}(\ref{subdef:free-leaf})) 
    then $P(\node{n}) = \seqfm(\node{n})$.
\item
    If $\node{n}$ is a $\dia{K}$-leaf 
    then $P(\node{n}) = (\seqfm(\node{n}))[\Delta]$.
\item
    If $\node{n}$ is a $\sigma$-leaf with companion node $\node{m}$
    then $P(\node{n}) = Z_\node{m}$.
\item
    If $\rho(\node{n}) = \land$ and $cs(\node{n}) = \node{n}_1\node{n}_2$ 
    then $P(\node{n}) = P(\node{n}_1) \land P(\node{n}_2)$.
\item
    If $\rho(\node{n}) = \lor$ and $cs(\node{n}) = \node{n}_1\node{n}_2$
    then $P(\node{n}) = P(\node{n}_1) \lor P(\node{n}_2)$.
\item
    If $\rho(\node{n}) = [K]$ and $cs(\node{n}) = \node{n}'$ 
    then $P(\node{n}) = [K] (P(\node{n}'))$.
\item
    If $\rho(\node{n}) = (\dia{K},f)$ and $cs(\node{n}) = \node{n}'$
    then $P(\node{n}) = \dia{K} (P(\node{n}'))$.
\item
    If $\rho(\node{n}) = \sigma Z$ and $cs(\node{n}) = \node{n}'$
    then $P(\node{n}) = P(\node{n}')$.
\item
    If $\rho(\node{n}) = \textnormal{Thin}$ and $cs(\node{n}) = \node{n}'$
    then $P(\node{n}) = P(\node{n}')$.
\item\label{subdef:node-formulas-un}
    If $\rho(\node{n}) = \textnormal{Un}$, $\node{n} = S \tnxTVD U$, $\Delta(U) = \sigma Z.\Phi$ and $cs(\node{n}) = \node{n}'$
    then $P(\node{n}) = \sigma Z_\node{n}. \left( P(\node{n}') \right)$.
\end{enumerate}
\end{definition}
When $\node{n}$ is a free leaf (case~\ref{subdef:node-formulas-free-leaf}) $\seqfm(\node{n})$ contains no definitional constants, and thus $(\seqfm(\node{n}))[\Delta] = \seqfm(\node{n}) = P(\node{n})$. Also, when $\rho(\node{n}) = \text{Un}$ (case~\ref{subdef:node-formulas-un}), $\node{n} \in \cnodesT$ is a companion node in $\mathbb{T}$. Thus $P$ associates a syntactically distinct formula to each companion node in $\mathbb{T}$.

Intuitively, $P(\node{n})$ can be seen as the formula whose ``parse tree'' is the sub-tableau of $\mathbb{T}$ rooted at $\node{n}$.  The construction works bottom-up from the leaves that are descendants of $\node{n}$, using the proof rule labeling each internal node to recursively construct formulas from those associated with the node's children.  Each companion node is converted into a $\sigma$-formula, with a freshly generated bound variable that $P(-)$ ensures is assigned to each companion leaf of the companion node.
It is easy to see that $P(-)$ contains no instances of any $U \in \mathbb{U}$.  Moreover, $P(-)$ is an \emph{inductively generated} node function over $\tree{T}$, in the sense of Definition~\ref{def:node-function}(\ref{def:node-function-inductively-generated}).  In particular, the function $g \in \node{N} \times (\muformsSV)^* \to \muformsSV$ used to generate $P(-)$ is defined based on the inductive case associated with $\node{n}$ in Definition~\ref{def:node-formulas}.  For example, $g(\node{n}, \Phi_1\Phi_2) = \Phi_1 \lor \Phi_2$ if $\rho(\node{n}) = \lor$.

We now turn to establishing a semantic equivalence between $\semT{P(\node{n})}{\V'}$ for certain $\V'$ and $\semop{\node{n}}$ in $\mathbb{T}$ by first defining the following notion of \emph{valuation consistency with $\mathbb{T}$}.

\begin{definition}[Valuation consistency]\label{def:consistency}
    Let $\Var_{\mathbb{T}}$ be as given in Definition~\ref{def:node-formulas}, and let $\V$ be the valuation in tableau $\mathbb{T}$.
    Then
    valuation $\V'$ is \emph{consistent with tableau $\mathbb{T}$} iff
    \begin{itemize}
    \item 
        for every $U \in \mathbb{U}$ and $\node{m} \in \companions{\tree{T}}(U)$, $\V'(Z_{\node{m}}) = \sem{U}{}{\mathbb{T}}$, and
    \item 
        for every variable $X \in \Var \setminus \Var_{\mathbb{T}}$, $\V'(X) = \V(X)$.
    \end{itemize}
\end{definition}

\noindent
Intuitively, $\V'$ is consistent with $\mathbb{T}$ iff it assigns the semantics of the associated definitional constant to every fresh variable used in the definition of $P(-)$, and to all other variables it assigns the same value as valuation $\V$ in $\mathbb{T}$.
The following result is immediate from the definitions.

\begin{lemma}\label{lem:consistency-property}
Let $\V'$ be a valuation consistent with $\mathbb{T}$.
Then for all $\node{n} \in \node{N}$ such that $\seqfm(\node{n}) = \Phi$ and $\seqdl(\node{n}) = \Delta$,
$
\semop{\node{n}} = \semT{\Phi[\Delta]}{\V'}.
$
\end{lemma}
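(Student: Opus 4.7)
The plan is to chain together the basic definition of sequent semantics with Lemma~\ref{lem:definition-list-correspondence}, and then argue that the valuations $\V$ and $\V'$ agree on all variables that actually matter for $\Phi[\Delta]$. Concretely, I would begin by writing
\[
\semop{\node{n}} = \semT{\Phi}{\V[\Delta]}
\]
by Definition~\ref{def:sequent-semantics}, and then apply Lemma~\ref{lem:definition-list-correspondence} to rewrite this as $\semTV{\Phi[\Delta]}$. The remaining task is to show
\[
\semTV{\Phi[\Delta]} \;=\; \semT{\Phi[\Delta]}{\V'}.
\]

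For this last equality I would appeal to the fact that the semantic value of a formula depends only on the valuation's action on its free variables (this is exactly the content of Lemma~\ref{lem:substitution-of-bound-variables}, iterated over the free variables of $\Phi[\Delta]$). Consistency of $\V'$ with $\mathbb{T}$ gives $\V'(X) = \V(X)$ for every $X \in \Var \setminus \Var_{\mathbb{T}}$, so it suffices to observe that $\Phi[\Delta]$ contains no free variables from $\Var_{\mathbb{T}}$.

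That observation is the only place where a small bookkeeping argument is needed. The variables in $\Var_{\mathbb{T}} = \{ Z_{\node{m}} \mid \node{m} \in \cnodesT\}$ are declared fresh in Definition~\ref{def:node-formulas}: they are introduced purely to construct the node formulas $P(-)$ and do not appear anywhere inside the sequents of $\mathbb{T}$. Every $\Phi$ labelling a node of $\mathbb{T}$, and every right-hand side stored in a definition list of $\mathbb{T}$, is built from the mu-calculus formula at the root by applications of the proof rules of Figure~\ref{fig:proof-rules}, none of which introduces any $Z_\node{m}$. Hence, by a routine induction on the construction of $\Phi[\Delta]$ from $\Phi$ and the entries of $\Delta$, no $Z_\node{m}$ occurs (free or bound) in $\Phi[\Delta]$, and the equality above follows. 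I do not anticipate any real obstacle here; the whole argument is essentially a one-step chain of already proved identities combined with a freshness bookkeeping remark.
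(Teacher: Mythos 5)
Your proposal is correct and follows essentially the same route as the paper's proof: unfold the sequent semantics, apply Lemma~\ref{lem:definition-list-correspondence} to get $\semTV{\Phi[\Delta]}$, and then observe that $\V$ and $\V'$ agree on every variable occurring in $\Phi[\Delta]$ because they differ only on the fresh variables in $\Var_{\mathbb{T}}$, which cannot appear there. The extra freshness bookkeeping you spell out is exactly the remark the paper makes more tersely.
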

\begin{proof}
Suppose $\node{n} = S \tnxTVD \Phi$; we must show that $\semop{\node{n}} = \semT{\Phi[\Delta]}{\V'}$.  By definition, $\semop{\node{n}} = \semT{\Phi}{\V[\Delta]}$.  Lemma~\ref{lem:definition-list-correspondence} then guarantees that $\semT{\Phi}{\V[\Delta]} = \semTV{\Phi[\Delta]}$, and as $\V$ and $\V'$ only disagree on definitional constants in $\mathbb{T}$, and thus are not free in $\Phi[\Delta]$, we have that $\semop{\node{n}} = \semT{\Phi[\Delta]}{\V'}$. \qedhere
\end{proof}

We now prove that $\semop{\node{n}} = \semT{P(\node{n})}{\V'}$ for proof node $\node{n}$ in $\mathbb{T}$ and $\V'$ consistent with $\mathbb{T}$.  This fact establishes that $P(\node{n})$ summarizes all relevant information about the semantics of $\node{n}$, modulo the connection made by $\V'$ between definitional constants in $\node{n}$ and the associated free variables introduced by $P(-)$.  The proof is split across two lemmas; we first consider the special case when $\node{n} \in \cnodesT$ is a companion node, and then use this result to prove the general case.  

\begin{lemma}[Companion-node formulas and semantics]\label{lem:companion-node-formulas-and-semantics}
Let $\V'$ be a consistent valuation for tableau $\mathbb{T}$.  Then for every $\node{m} \in \cnodesT$, $\semT{P(\node{m})}{\V'} = \semop{\node{m}}$.
\end{lemma}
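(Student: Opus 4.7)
The plan is to prove the lemma by strong induction on the number of companion nodes of $\mathbb{T}$ that are strict descendants of $\node{m}$. Fix such a companion node $\node{m} = S \tnxTVD U$ with $\Delta(U) = \sigma Z.\Phi$ and unique child $\node{n}'$; by the Un-rule $\seqfm(\node{n}') = \Phi[Z := U]$, and by Definition~\ref{def:node-formulas} we have $P(\node{m}) = \sigma Z_{\node{m}}.\,P(\node{n}')$.

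The first step is to apply Lemma~\ref{lem:fixpoint-characterizations} to $\semT{P(\node{m})}{\V'}$, and Lemmas~\ref{lem:U-semantic-correspondence} and~\ref{lem:fixpoint-characterizations} to $\semop{\node{m}}$, rewriting both sides as Tarski--Knaster extremal $\sigma$-fixpoints of monotonic functions on the subset lattice of $\states{S}$:
\[
\semT{P(\node{m})}{\V'} \;=\; \sigma\, F, \qquad \semop{\node{m}} \;=\; \sigma\, G,
\]
where $F(T) = \semT{P(\node{n}')}{\V'[Z_{\node{m}} := T]}$ and $G(T) = \semT{\Phi}{\V[\Delta_{\prec U}][Z := T]}$. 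Since equal monotonic functions share extremal fixpoints, it suffices to prove $F(T) = G(T)$ for every $T \subseteq \states{S}$.

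The core of the proof is then an auxiliary claim established by tree induction on the subtree $\tree{T}_{\node{n}'}$: for every $\node{n}$ in $\tree{T}_{\node{n}'}$ that is not a strict descendant of any companion node lying strictly between $\node{m}$ and $\node{n}$, and for every $T$,
\[
\semT{P(\node{n})}{\V'[Z_{\node{m}} := T]} \;=\; \semT{\seqfm(\node{n})}{\V[\seqdl(\node{n})][U := T]}.
\]
The structural cases ($\land$, $\lor$, $[K]$, $\dia{K}$, $\sigma Z$, Thin, and the Un step at $\node{n}'$ itself) propagate routinely from the inner induction hypothesis using Lemmas~\ref{lem:substitution}, \ref{lem:definition-list-correspondence}, and \ref{lem:constant-unfolding}; the companion-leaf case for $\node{m}$ is immediate since $P(\node{n}) = Z_{\node{m}}$, $\seqfm(\node{n}) = U$, and both sides evaluate to $T$. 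Instantiating at $\node{n}'$ and simplifying via Lemma~\ref{lem:substitution} (using that neither $U$ nor any constant in $\dom(\Delta_{\succ U})$ occurs free in $\Phi$, by the definition-list constraints) then yields $F(T) = G(T)$.

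The main obstacle is the nested-companion case: when the inner induction reaches a companion node $\node{m}'$ strictly below $\node{m}$, the outer induction hypothesis gives $\semT{P(\node{m}')}{\V'} = \semop{\node{m}'}$, but the auxiliary claim requires the analogous equality under the modified valuation $\V'[Z_{\node{m}} := T]$. I would resolve this by strengthening the auxiliary claim to permit simultaneous modifications to $Z_{\node{m}_0}$ for every companion-node ancestor $\node{m}_0$ of $\node{n}$, matched by the corresponding $U_0$-updates in $\V[\seqdl(\node{n})]$; the outer induction hypothesis, applied in this strengthened form to $\node{m}'$, then closes the case, since on each such $\node{m}'$ the strengthened equality reduces to a pointwise equality of the semantic functions underlying $\semop{\node{m}'}$ and its $P$-image, exactly as in the base reduction above.
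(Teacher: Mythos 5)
Your strategy is genuinely different from the paper's, but as formulated it has a real gap: the auxiliary claim is false, so the reduction to pointwise equality $F(T)=G(T)$ cannot be carried through. The defect is in the right-hand side $\semT{\seqfm(\node{n})}{\V[\seqdl(\node{n})][U:=T]}$ for any node whose definition list properly extends $\Delta$. Take an intermediate companion node $\node{m}'$ below $\node{m}$ with formula $U'$ and $\seqdl(\node{m}')=\Delta'$. Since $U'\neq U$, your RHS is $\left(\V[\Delta'][U:=T]\right)(U')=\left(\V[\Delta']\right)(U')$, a set that does not depend on $T$ at all: the post-hoc override $[U:=T]$ is applied \emph{after} the definition of $U'$ has been evaluated using the true fixpoint value of $U$, and nothing re-threads $T$ into that evaluation. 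The left-hand side $\semT{P(\node{m}')}{\V'[Z_{\node{m}}:=T]}$, by contrast, genuinely varies with $T$ whenever $Z_{\node{m}}$ occurs free in $P(\node{m}')$ --- e.g.\ for $\nu Z.\mu Z'.(\dia{a}Z'\lor Z)$ one has $P(\node{m}')=\mu Z_{\node{m}'}.(\dia{a}Z_{\node{m}'}\lor Z_{\node{m}})$. Your proposed strengthening (overriding the ancestor companion variables as well) does not repair this, because the problem is not \emph{which} constants are overridden but that the override must be interleaved into the evaluation of the suffix $\Delta_{\succ U}$ rather than applied afterwards. A second, independent failure occurs at diamond leaves: there $P(\node{n})=(\seqfm(\node{n}))[\Delta]$ expands $U$ away, so the LHS is constant in $T$ while the RHS is not; for $\nu Z.\dia{a}Z$ with a state lacking an $a$-transition one gets $F(T)=\pre_{\dia{a}}(\semTV{\nu Z.\dia{a}Z})$ but $G(T)=\pre_{\dia{a}}(T)$, so $F\neq G$ pointwise even though $\sigma F=\sigma G$. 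Your intermediate target is thus strictly stronger than the lemma and actually false for some tableaux.

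The paper avoids all of this by never parametrizing over $T$: it proves, by strong induction on the number of companion nodes in $\tree{T}_{\node{m}}$, that $P(\node{m})$ can be rewritten --- semantics-preservingly under the single fixed consistent valuation $\V'$ --- into the closed formula $(\sigma Z.\Phi)[\Delta]$, using inductive updates of $P$ on the tree prefix cut at the nearest fixpoint and companion descendants, and invoking the outer induction hypothesis only at $\V'$. If you wish to keep the fixpoint-function route you would need to (i) handle or exclude diamond leaves, and (ii) replace $\V[\seqdl(\node{n})][U:=T]$ by a valuation in which the binding $U:=T$ is inserted at its position in the definition list \emph{before} the suffix is evaluated; at that point you are essentially re-deriving the paper's substitution machinery in parametrized form.
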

\remove{
\begin{proofsketch} 
For any valuation $\V'$ we call a syntactic transformation of $\Phi$ to $\Gamma$ \emph{semantics-preserving} for $\V'$ iff $\semT{\Phi}{\V'} = \semT{\Gamma}{\V'}$.  
Now let $\V'$ be consistent with $\mathbb{T}$.
We actually prove the following stronger result:
\begin{quote}
    for any $\node{m} \in \cnodesT$ with $\node{m} = S \tnxTVD U$ and $\Delta(U) = \sigma Z.\Phi$, there is a semantics-preserving transformation of $P(\node{m})$ to $(\sigma Z.\Phi)[\Delta]$ for $\V'$.
\end{quote}
The following reasoning then gives the desired result.
\begin{flalign*}
&\semT{P(\node{m})}{\V'}
\\
&{=}\; \semT{(\sigma Z.\Phi)[\Delta]}{\V'}
&& \text{Semantics-preserving transformation for $\V'$}
\\
&{=}\; \semT{(\sigma Z.\Phi)[\Delta]}{\V}
&& \text{$\V'$ consistent for $\mathbb{T}$, no $Z' \in \Var_{\mathbb{T}}$ free in $(\sigma Z.\Phi)[\Delta]$}
\\
&{=}\; \semT{\sigma Z.\Phi}{\V[\Delta]}
&& \text{Lemma~\ref{lem:definition-list-correspondence}}
\\
&{=}\; \semT{U}{\V[\Delta]}
&& \text{$\Delta(U) = \sigma Z.\Phi$, Lemma~\ref{lem:U-semantic-correspondence}}
\\
&{=}\; \semop{\node{m}}
&& \text{Definition of $\semop{\node{m}}$}
\end{flalign*}
The detailed proof, containing the precise semantics-preserving transformation, is included in the appendix.
\qedhere
\end{proofsketch}
}

\begin{proof}
For any valuation $\V'$ we call a syntactic transformation of $\Phi$ to $\Gamma$ \emph{semantics-preserving} for $\V'$ iff $\semT{\Phi}{\V'} = \semT{\Gamma}{\V'}$.  
Now let $\V'$ be consistent with $\mathbb{T}$.
We actually prove the following stronger result:
\begin{quote}
    for any $\node{m} \in \cnodesT$ with $\node{m} = S \tnxTVD U$ and $\Delta(U) = \sigma Z.\Phi$, there is a semantics-preserving transformation of $P(\node{m})$ to $(\sigma Z.\Phi)[\Delta]$ for $\V'$.
\end{quote}
The following reasoning then gives the desired result.
\begin{flalign*}
&\semT{P(\node{m})}{\V'}
\\
&{=}\; \semT{(\sigma Z.\Phi)[\Delta]}{\V'}
&& \text{Semantics-preserving transformation for $\V'$}
\\
&{=}\; \semT{(\sigma Z.\Phi)[\Delta]}{\V}
&& \text{$\V'$ consistent for $\mathbb{T}$, no $Z' \in \Var_{\mathbb{T}}$ free in $(\sigma Z.\Phi)[\Delta]$}
\\
&{=}\; \semT{\sigma Z.\Phi}{\V[\Delta]}
&& \text{Lemma~\ref{lem:definition-list-correspondence}}
\\
&{=}\; \semT{U}{\V[\Delta]}
&& \text{$\Delta(U) = \sigma Z.\Phi$, Lemma~\ref{lem:U-semantic-correspondence}}
\\
&{=}\; \semop{\node{m}}
&& \text{Definition of $\semop{\node{m}}$}
\end{flalign*}

\noindent
The proof therefore reduces to showing how to transform $P(\node{m})$, where companion node $\node{m} \in \cnodesT$ is such that $\node{m} = S \tnxTVD U$ and $\Delta(U) = \sigma Z.\Phi$, to $(\sigma Z.\Phi)[\Delta]$ in a way that is semantics-preserving for $\V'$.  So fix $\node{m} \in \cnodesT$.
The proof proceeds by strong induction on  $|\companions{\tree{T}_{\node{m}}}| \geq 1$.
The induction hypothesis guarantees that for any companion node $\node{m}' = S' \tnxTV{\Delta'} U' \in \companions{\tree{T}_{\node{m}}} \setminus \{ \node{m} \}$, there is a semantics-preserving transformation of $P(\node{m}')$ to $(\Delta'(U'))[\Delta']$ for $\V'$, since in this case $|\companions{\tree{T}_{\node{m}'}}| < |\companions{\tree{T}_{\node{m}}}|$.  We also remark on the following.
\begin{enumerate}
    \item
    As $\node{m} \in \cnodes{T}$, $cs(\node{m}) = \node{n}$ is such that $\node{n} = S \tnxTVD \Phi[Z := U]$.
    \item
    By definition, $P(\node{m}) = \sigma Z_{\node{m}} . P(\node{n})$.
\end{enumerate}

Our transformation from $P(\node{m})$ to $(\sigma Z.\Phi)[\Delta]$ uses \emph{inductive updates} (cf.\/ Definition~\ref{def:node-function-inductive-update}) of $P$. That is, we will transform $P(\node{m})$ by using inductive updates to change the values returned by $P$ for some of the descendants of $\node{m}$.  This will have the effect of rewriting $P(\node{m})$ to $(\sigma Z.\Phi)[\Delta]$ to $\node{m}$.
To define these inductive updates we will use a tree prefix, $\tree{T}'$, of the subtree $\tree{T}_{\node{n}}$ of $\tree{T}$ rooted at $\node{n}$, the sole child of $\node{m}$ (cf.\/ Definition~\ref{def:subtree}).  Let $\node{F} = \{ \node{f} \in D(\node{n}) \mid \rho(\node{f}) \in \{ \sigma Z, \text{Un} \}\}$ be the companion and fixpoint nodes in $\tree{T}_{\node{n}}$. Then \[
\tree{T}' 
= \tpre{\tree{T}_{\node{n}}}{\node{F}} 
= (\node{N}', \node{n}, p', cs')
\]
(cf.\/ Definition~\ref{def:tree-prefix-generation}) is the tree prefix of the subtree $\tree{T}_{\node{n}}$ obtained by converting the nearest descendants of $\node{n}$ that are companion or fixpoint nodes of $\mathbb{T}$ into leaves.  We note that $\tree{T}'$ contains no internal nodes that are companion nodes or fixpoint nodes of $\mathbb{T}$.  It is also straightforward to show that for each $\node{n}' \in \node{N}'$, $\seqfm(\node{n}')$ is a subformula of $\Phi[Z := U]$ and $\seqdl(\node{n}') = \Delta$.  Finally, we remark on a property involving inductive updates of $P$ on nodes in $\node{N}'$; this property allows us to replace $P(\node{n}')$, where $\node{n}' \in \node{N}'$, by a semantically equivalent formula $\Gamma$ in an inductive update of $P$ without changing the semantics of any formulas generated by the updated function.
\begin{quote}
    \textbf{(IU)}
    Let $\node{N}'' \subseteq \node{N}'$, with $\vec{\node{n}}'' = \node{n}''_1 \cdots \node{n}''_j$ an ordering of $\node{N}''$.  Also assume $\vec{\Gamma} = \Gamma_1 \cdots \Gamma_j \in (\muformsSV)^*$ satisfies $\semT{P(\node{n}''_i)}{\V'} = \semT{\Gamma_i}{\V'}$ all $i$.  Then for all $\node{n}' \in \node{N}'$,
    \[
        \semT{P(\node{n}')}{\V'} 
        = \semT{P \iupd{\vec{\node{n}}''}{\vec{\Gamma}}(\node{n}')}{\V'}.
    \]
\end{quote}
Property (IU) follows from the definition of $P \iupd{\vec{\node{n}}''}{\vec{\Gamma}}$ via a simple tree induction on $\tree{T}'$.

We now show how to transform $P(\node{m})$ to $(\sigma Z.\Phi)[\Delta]$ in a semantics-preserving fashion for $\V'$.  The key transformation relies on inductively updating $P$ via the leaves, $\node{L} \subseteq \node{N}'$, of $\tree{T}'$.  Each such leaf falls into one of three disjoint sets.
\begin{align*}
\node{L}_\div  &= \{\node{n}' \in \node{N}' \mid \rho(\node{n}')\div \}
\\
\node{L}_{\textnormal{Un}}  &= \{\node{n}' \in \node{N}' \mid \rho(\node{n}') = \text{Un} \}
\\
\node{L}_\sigma  &= \{\node{n}' \in \node{N}' \mid \rho(\node{n}') = \sigma Z \}
\end{align*}
$\node{L}_\div$ consists of leaves in $\tree{T}'$ that are also leaves in $\tree{T}_{\node{n}}$, and hence in $\tree{T}$, while $\node{L}_{\textnormal{Un}}$ contains the leaves of $\tree{T}'$ that are companion nodes in $\node{T}_{\node{n}}$.
$\node{L}_\sigma$ is the set of the leaves in $\tree{T}'$ whose formulas involve fixpoint formulas.  
Note $\node{L} = \node{L}_\div \cup \node{L}_{\textnormal{Un}} \cup \node{L}_\sigma$.

Now let $\vec{\node{n}}' = \node{n}'_1 \cdots \node{n}'_j$ be an ordering of $\node{L}$, and define $\vec{\Phi}' = \Phi'_1 \cdots \Phi'_j$ by:
\[
\Phi'_i =
\begin{cases}
Z_{\node{m}}
    & \text{if $\seqfm(\node{n}'_i) = U$}
    \\
(\seqfm(\node{n}'_i))[\Delta]
    & \text{otherwise.}
\end{cases}
\]
That is, $\Phi'_i$ is defined to be $Z_{\node{m}}$, the fresh variable associated with $\node{m}$, if the formula in $\node{n}_i$ is $U$, and thus is either a companion leaf of $\node{m}$ in $\mathbb{T}$ (and thus in $\node{L}_\div$) or a companion node for $U$ that is a strict descendant of $\node{m}$ (and thus in $\node{L}_{\textnormal{Un}})$.  Otherwise, $\Phi'_i$ is the formula in $\node{n}'_i$, instantiated by $\Delta$.
Now define
\[
P' = P\iupd{\vec{\node{n}}'}{\vec{\Phi}'}.
\]
We will show the transformations of $P(\node{m})$ to $P'(\node{m})$, and of $P(\node{n})$ to $P'(\node{n})$, are semantics-preserving for $\V'$ by showing that for each $\node{n}'_i \in \node{L}$, $\semT{P(\node{n}'_i)}{\V'} = \semT{\Phi'_i}{\V'}$; Property (IU) then gives the desired result, namely, $\semT{P(\node{n})}{\V'} = \semT{P'(\node{n})}{\V'}$ and thus $\semT{P(\node{m})}{\V'} = \semT{P'(\node{m})}{\V'}$.  The argument proceeds via a case analysis on $\node{n}'_i$.
\begin{description}
\item[$\seqfm(\node{n}'_i) = U$.]
    In this case $\Phi'_i = Z_\node{m}$, and either $\node{n}'_i \in \node{L}_\div$ or $\node{n}'_i \in \node{L}_{\textnormal{Un}}$.  If $\node{n}'_i \in \node{L}_\div$ then $\node{n}'_i$ is a companion leaf of $\node{m}$ in $\mathbb{T}$, and by definition $P(\node{n}'_i) = Z_\node{m} = \Phi'_i$.  Now assume $\node{n}'_i \in \node{L}_{\textnormal{Un}}$ is a companion node in $\mathbb{T}$, meaning $\node{n}'_i \in \companions{\tree{T}_{\node{m}}} \setminus \{ \node{m} \}$. As $\seqdl(\node{n}'_i) = \Delta$ the induction hypothesis guarantees that $\semT{P(\node{n}'_i)}{\V'} = \semT{(\Delta(U))[\Delta]}{\V'} = \semT{(\sigma Z.\Phi)[\Delta]}{\V'}$.  We now reason as follows.
    \begin{flalign*}
        &\semT{P(\node{n}'_i)}{\V'}
    \\
        &{=}\; \semT{(\sigma Z.\Phi)[\Delta]}{\V'}
        && \text{Induction hypothesis}
    \\
        &{=}\; \semT{(\sigma Z.\Phi)[\Delta]}{\V}
        && \text{$\V'$ consistent with $\mathbb{T}$, no $Z' \in \Var_{\mathbb{T}}$ free in $(\sigma Z.\Phi)[\Delta]$}
    \\
        &{=}\; \semT{\sigma Z.\Phi}{\V[\Delta]}
        && \text{Lemma~\ref{lem:definition-list-correspondence}}
    \\
        &{=}\; \semT{U}{\V[\Delta]}
        && \text{$\Delta(U) = \sigma Z.\Phi$, Lemma~\ref{lem:U-semantic-correspondence}}
    \\
        &{=}\; \sem{U}{}{\mathbb{T}}
        && \text{Definition of $\sem{U}{}{\mathbb{T}}$}
    \\
        &{=}\; \V'(Z_{\node{m}})
        && \text{$\V'$ consistent with $\mathbb{T}$}
    \\
        &{=}\; \semT{\Phi'_i}{\V'}
        && \text{$\Phi'_i = Z_{\node{m}}$, so $\semT{\Phi'_i}{\V'} = \semT{Z_{\node{m}}}{\V'} = \V'(Z_{\node{m}})$}
    \end{flalign*}
\item[$\node{n}'_i \in \node{L}_\div, \seqfm(\node{n}'_i) \neq U$.]
    There are two subcases to consider. In the first $\seqfm(\node{n}'_i) \not\in \mathbb{U}$, meaning either $\node{n}'_i$ is a free leaf, in which case we have argued above that $P(\node{n}'_i) = (\seqfm(\node{n}'_i))[\Delta]$,
    or $\node{n}'_i$  is a $\dia{K}$-leaf in $\tree{T}$, in which case by definition $P(\node{n}'_i) = (\seqfm(\node{n}'_i))[\Delta]$.  Regardless, $P(\node{n}'_{i}) = (\seqfm(\node{n}'_{i}))[\Delta] = \Phi'_i$, and the result is immediate.  
    In the second subcase $\seqfm(\node{n}'_{i}) = U' \in \mathbb{U}$ for some $U' \neq U$.  In this case $\node{n}'_{i}$ is a companion leaf of some ancestor companion node $\node{m}'$ of $\node{m}$, and 
    $P(\node{n}'_i) = Z_{\node{m}'}$.  We reason as follows.
        \begin{align*}
        \semT{P(\node{n}'_{i})}{\V'}
        &= \V'( Z_{\node{m}'} )
        && \text{$\semT{P(\node{n}'_{i})}{\V'} = \semT{Z_{\node{m}'}}{\V'} = \V'(Z_{\node{m}'_i})$}
        \\
        &= \sem{U'}{}{\mathbb{T}}
        && \text{Consistency of $\V'$ for $\mathbb{T}$}
        \\
        &= \semop{\node{n}'_{i}}
        && \text{Definition of $\sem{U'}{}{\mathbb{T}}$}
        \\
        &= \semT{U'[\Delta]}{\V'}
        && \text{Lemma~\ref{lem:consistency-property}, $\V'$ consistent with $\mathbb{T}$}
        \\
        &= \semT{\Phi'_{i}}{\V'}
        && \text{$\Phi'_{i} = (\seqfm(\node{n}'_{i}))[\Delta] = U'[\Delta]$}
        \end{align*}

\item[$\node{n}'_i \in \node{L}_{\textnormal{Un}}, \seqfm(\node{n}'_i) \neq U$.]
    In this case $\seqfm(\node{n}'_i) = U' \in \mathbb{U}$ 
    for some $U' \neq U$, 
    and $\node{n}'_i \in \companions{\tree{T}_{\node{m}}} \setminus \{ \node{m} \}$.  
    Hence $\semT{P(\node{n}'_i)}{\V'} = \semT{(\Delta(U')[\Delta]}{\V'}$ according to the induction hypothesis.
    Since $\semT{(\Delta(U')[\Delta]}{\V'} = \semT{U'[\Delta]}{\V'} = \semT{\Phi'_i}{\V'}$, $\semT{P(\node{n}'_i)}{\V'} = \semT{\Phi'_i}{\V'}$.
\item[$\node{n}'_i \in \node{L}_\sigma$.]
    In this case $\seqfm(\node{n}'_{i}) = \sigma' Z'.\Phi'$ for some $\sigma', Z'$ and $\Phi'$.  Also,
    $cs(\node{n}'_{i}) = \node{n}''_{i}$ in $\tree{T}$ is such that $\node{n}''_{i} \in \companions{\tree{T}}$, with 
    \[
    \seqdl(\node{n}''_{i})
    = \Delta' 
    = \Delta \cdot (U' = \sigma'Z'.\Phi')
    \]
    for some $U' \not\in \dom(\Delta)$ and $\seqfm(\node{n}''_{i}) = U'$.
    Since $\node{n}''_{i} \in \companions{\tree{T}_{\node{m}}} \setminus \{ \node{m} \}$, the induction hypothesis implies
    $\semT{P(\node{n}''_{i})}{\V'} 
    = \semT{(\Delta'(U'))[\Delta']}{\V'}
    = \semT{(\sigma'Z'.\Phi')[\Delta']}{\V'}$.
    We prove that $\semT{P(\node{n}'_i)}{\V'} = \semT{\Phi'_i}{\V'}$ as follows.
        %
    \begin{align*}
    &\semT{P(\node{n}'_{i})}{\V'}
    \\
    &= \semT{P(\node{n}''_{i})}{\V'}
    && \text{Definition of $P(\node{n}'_{i})$ when $\seqfm(\node{n'_i}) = \sigma' Z'.\Phi'$}
    \\
    &= \semT{(\sigma'Z'.\Phi')[\Delta']}{\V'}
    && \text{Induction hypothesis}
    \\
    &= \semT{(\sigma'Z'.\Phi')[\Delta]}{\V'}
    && \text{$U'$ not free in $\sigma'Z'.\Phi'$}
    \\
    &= \semT{(\seqfm(\node{n}'_i))[\Delta]}{\V'}
    && \text{$\seqfm(\node{n}'_i) = \sigma'Z'.\Phi'$}
    \\
    &= \semT{\Phi'_i}{\V'}
    && \text{$\Phi'_i = (\seqfm(\node{n}'_i))[\Delta]$}
    \end{align*}
\end{description}

\noindent
The final part of the semantics-preserving transformation of $P(\node{m})$ to $(\sigma Z.\Phi)[\Delta]$ for $\V'$ involves the following steps.
\begin{itemize}
\item 
    We show that $\Gamma = \sigma Z_{\node{m}}.(\Phi[Z := Z_{\node{m}}])$ is such that $P'(\node{m}) = \Gamma[\Delta]$,
    and thus $\semT{P'(\node{m})}{\V'} = \semT{\Gamma[\Delta]}{\V'}$.
\item
    Since $Z_{\node{m}}$ is not free in $\Phi$, we also know that for any valuation $\V''$, $\semT{\Gamma}{\V''} = \semT{\sigma Z.\Phi}{\V''}$.
    In particular, $\semT{\Gamma}{\V'[\Delta]} = \semT{(\sigma Z.\Phi)}{\V'[\Delta]}$, whence $\semT{\Gamma[\Delta]}{\V'} = \semT{(\sigma Z.\Phi)[\Delta]}{\V'}$ and we have the following,
    \[
    \semT{P(\node{m})}{\V'} 
    = \semT{P'(\node{m})}{\V'}
    = \semT{\Gamma[\Delta]}{\V'}
    = \semT{(\sigma Z.\Phi)[\Delta]}{\V'},
    \]
    which is what is to be proved.
\end{itemize}

To finish the proof, we first show that $P'(\node{n}) = \Gamma'[\Delta]$, where $\Gamma' = \Phi[Z := Z_{\node{m}}]$.  In support of this, consider.
\[
    \Phi''_i =
    \begin{cases}
        Z_{\node{m}}
        & \text{if $\seqfm(\node{n}'_i) = U$}
    \\
        \seqfm(\node{n}'_i)
        & \text{otherwise.}
    \end{cases}
\]
It is immediate that $\Phi'_i = \Phi''_i[\Delta]$ for all $1 \leq i \leq j$.  Now consider
\[
P'' = P\iupd{\vec{\node{n}}'}{\vec{\Phi}''}.
\]
If $\node{n}' \in \node{L}$ then either $\seqfm(\node{n}') = U$ and $P''(\node{n}') = Z_{\node{m}}$, or $P''(\node{n}') = \seqfm(\node{n}')$.
Based on this observation and the definition of $P''$, it follows that $P''(\node{n}') = (\seqfm(\node{n}'))[U := Z_{\node{m}}]$ for any $\node{n}' \in \node{N}'$.  In particular, $P''(\node{n}) = (\seqfm(\node{n}))[U := Z_{\node{m}}] = (\Phi[Z:=U])[U := Z_{\node{m}}] = \Phi[Z := Z_{\node{m}}] = \Gamma'$.
It also follows from the properties of $P''$ and $\Phi''_i$ that for any $\node{n}' \in \node{N}'$, $P'(\node{n}') = (P''(\node{n}'))[\Delta]$.  Thus $\Gamma'[\Delta] = (P''(\node{n}))[\Delta] = P'(\node{n}) = \Gamma_1'$.

The final step of the proof is the observation that $\Gamma = P''(\node{m})$ and that $\Gamma_2[\Delta] = (P''(\node{m}))[\Delta] = P'(\node{m})$.
\qedhere

\end{proof}

\noindent
The next lemma extends the previous one, which focused only on companion nodes, to all nodes.

\begin{lemma}[Node formulas and node semantics]\label{lem:node-formulas-and-node-semantics}
Let $\V'$ be a consistent valuation for tableau $\mathbb{T}$.  Then for every $\node{n} \in \node{N}$, $\semT{P(\node{n})}{\V'} = \semop{\node{n}}$.
\end{lemma}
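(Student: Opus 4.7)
The plan is to proceed by tree induction on $\node{n}$, with induction hypothesis $\semT{P(\node{n}')}{\V'} = \semop{\node{n}'}$ for every $\node{n}' \in D_s(\node{n})$. Essentially all of the real work has already been done in the preceding companion-node lemma, so what remains is a straightforward case analysis on whether $\node{n}$ is a companion node, a terminal leaf, or an internal non-companion node.

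First, if $\rho(\node{n}) = \textnormal{Un}$ then $\node{n} \in \cnodesT$, and the previous lemma (companion-node formulas and semantics) gives the equality directly. This is the only case where a fixpoint is being ``closed'' by $P$, and it is exactly where the technical heavy lifting was already spent.

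If $\node{n}$ is a leaf, there are three subcases, corresponding to the three kinds of terminal leaf. For a free leaf, $P(\node{n}) = \seqfm(\node{n})$ and this formula contains no $U \in \mathbb{U}$ and no $Z_{\node{m}} \in \Var_{\mathbb{T}}$, so expansion by $\seqdl(\node{n})$ is trivial and $\V$ and $\V'$ agree on its free variables, giving $\semT{P(\node{n})}{\V'} = \semTV{\seqfm(\node{n})} = \semop{\node{n}}$. For a $\dia{K}$-leaf, $P(\node{n}) = \seqfm(\node{n})[\seqdl(\node{n})]$ by definition, and the consistency-property lemma yields the equality immediately. For a $\sigma$-leaf with companion node $\node{m}$ and $\seqfm(\node{n}) = U$, $P(\node{n}) = Z_{\node{m}}$, so by consistency of $\V'$ together with the semantic-invariance lemma for companion nodes and leaves,
\[
\semT{P(\node{n})}{\V'} \;=\; \V'(Z_{\node{m}}) \;=\; \sem{U}{}{\mathbb{T}} \;=\; \semop{\node{n}}.
\]

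For the remaining case, $\node{n}$ is internal with $\rho(\node{n}) \in \{\land, \lor, [K], \dia{K}, \sigma Z, \textnormal{Thin}\}$. Here $P(\node{n})$ is formed either by applying the matching outer connective/modality to the $P$-values of the children (for $\land$, $\lor$, $[K]$, $\dia{K}$), or simply by forwarding the child's $P$-value (for $\sigma Z$ and Thin). In each case, the plan is to unfold $\semT{P(\node{n})}{\V'}$ using the semantic clause for its top-level operator, replace each $\semT{P(\node{n}_i)}{\V'}$ by $\semop{\node{n}_i}$ via the induction hypothesis, and re-fold using the semantic clause for the top-level connective of $\seqfm(\node{n})$ to arrive at $\semop{\node{n}}$. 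The only subtlety is the $\sigma Z$ case, where the child $\node{n}'$ carries the freshly introduced definitional constant $U$ and a strictly longer definition list; here the $U$ semantic-correspondence lemma identifies $\semop{\node{n}'}$ with $\semop{\node{n}}$ across the newly added definition, closing the case. No step presents a genuine obstacle since the recursive/fixpoint content of $P$ was entirely absorbed into the preceding companion-node lemma.
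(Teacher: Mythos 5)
Your proposal is correct and follows essentially the same route as the paper: tree induction with a case split on $\rho(\node{n})$, the three terminal-leaf subcases handled via the definition of $P$ and consistency of $\V'$, the boolean/modal/Thin cases by unfolding the semantic clauses and applying the induction hypothesis, the $\sigma Z$ case closed by the definition-list/semantic-correspondence machinery, and the Un case delegated entirely to the companion-node lemma. No gaps.
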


\remove{
\begin{proofsketch}
    The proof proceeds by induction on $\tree{T}$, the tree embedded in $\mathbb{T}$. The case where $\rho(\node{n}) = \text{Un}$ follows from Lemma~\ref{lem:companion-node-formulas-and-semantics}.
    \seeappendix
\end{proofsketch}
}
\begin{proof}
Let valuation $\V'$ be consistent with $\mathbb{T}$.  The proof is by induction on $\tree{T}$.  So fix node $\node{n}$ in $\tree{T}$.  The induction hypothesis asserts that for all $\node{n}' \in c(\node{n})$, $\semop{\node{n}'} = \semT{P(\node{n}')}{\V'}$.  The proof now proceeds by an analysis of $\rho(\node{n})$.
\begin{description}
\item[$\rho(\node{n}) \div$.]
    In this case $\node{n}$ is a leaf.  We proceed by an analysis on the form of $\node{n}$.
    \begin{itemize}
    \item
        $\node{n}$ is a free leaf or $\dia{K}$-leaf.  Let $\Phi = \seqfm(\node{n})$; we reason as follows.
        \begin{align*}
        \semT{P(\node{n})}{\V'}
        &= \semT{\Phi[\Delta]}{\V'}
        && \text{Definition of $P(\node{n})$}
        \\
        &= \semT{\Phi[\Delta]}{\V}
        && \text{Consistency of $\V'$, no $Z_\node{m}$ can appear in $\Phi[\Delta]$}
        \\
        &= \semT{\Phi}{\V[\Delta]}
        && \text{Lemma~\ref{lem:definition-list-correspondence}}
        \\
        &= \semop{\node{n}}
        && \text{Definition of $\semop{\node{n}}$}
        \end{align*}
    \item
    $\node{n} $ is a $\sigma$-leaf.  Let $\seqfm(\node{n}) = U$, and let $\node{m}$ be the companion node of $\node{n}$.  Then $P(\node{n}) = Z_\node{m}$, where $Z_\node{m} \in \Var_{\mathbb{T}}$ is the fresh variable associated $\node{m}$.  We reason as follows.
        \begin{align*}
        \semT{P(\node{n})}{\V'}
        &= \semT{Z_\node{m}}{\V'}
        && \text{Definition of $P(\node{n})$}
        \\
        &= \V'( Z_\node{m})
        && \text{Definition of $\semTV{Z_\node{m}}$}
        \\
        &= \sem{U}{}{\mathbb{T}}
        && \text{Consistency of $\V'$ for $\mathbb{T}$}
        \\
        &= \semop{\node{n}}
        && \text{Definition of $\sem{U}{}{\mathbb{T}}$}
        \end{align*}
    \end{itemize}

\item[$\rho(\node{n}) = \land$.]
    In this case we know that $\node{n} = S \tnxTVD \Phi_1 \land \Phi_2$ and that $cs(\node{n}) = \node{n}_1\node{n}_2$, where each $\node{n}_i = S \tnxTVD \Phi_i$.  The induction hypothesis guarantees that $\semT{P(\node{n}_i)}{\V'} = \semop{\node{n}_i} = \semT{\Phi_i}{\V[\Delta]}$.  We reason as follows.
    \begin{align*}
    \semT{P(\node{n})}{\V'}
    &= \semT{P(\node{n}_1) \land P(\node{n}_2)}{\V'}
    && \text{Definition of $P(\node{n})$}
    \\
    &= \semT{P(\node{n}_1)}{\V'} \cap \semT{P(\node{n}_2)}{\V'}
    && \text{Semantics of $\land$}
    \\
    &= \semop{\node{n}_1} \cap \semop{\node{n}_2}
    && \text{Induction hypothesis (twice)}
    \\
    &= \semT{\Phi_1}{\V[\Delta]} \cap \semT{\Phi_2}{\V[\Delta]}
    && \text{Definition of $\semop{\node{n}_i}$}
    \\
    &= \semT{\Phi_1 \land \Phi_2}{\V[\Delta]}
    && \text{Semantics of $\land$}
    \\
    &= \semop{\node{n}}
    && \text{Definition of $\semop{\node{n}}$}
    \end{align*}
\item[$\rho(\node{n}) = \lor$.]
    In this case we know that $\node{n} = S \tnxTVD \Phi_1 \lor \Phi_2$ and that $cs(\node{n}) = \node{n}_1\node{n}_2$, where each $\node{n}_i = S_i \tnxTVD \Phi_i$ and $S = S_1 \cup S_2$.  The induction hypothesis guarantees that $\semT{P(\node{n}_i)}{\V'} = \semop{\node{n}_i} = \semT{\Phi_i}{\V[\Delta]}$.  We reason as follows.
    \begin{align*}
    \semT{P(\node{n})}{\V'}
    &= \semT{P(\node{n}_1) \lor P(\node{n}_2)}{\V'}
    && \text{Definition of $P(\node{n})$}
    \\
    &= \semT{P(\node{n}_1)}{\V'} \cup \semT{P(\node{n}_2)}{\V'}
    && \text{Semantics of $\lor$}
    \\
    &= \semop{\node{n}_1} \cup \semop{\node{n}_2}
    && \text{Induction hypothesis (twice)}
    \\
    &= \semT{\Phi_1}{\V[\Delta]} \cup \semT{\Phi_2}{\V[\Delta]}
    && \text{Definition of $\semop{\node{n}_i}$}
    \\
    &= \semT{\Phi_1 \lor \Phi_2}{\V[\Delta]}
    && \text{Semantics of $\lor$}
    \\
    &= \semop{\node{n}}
    && \text{Definition of $\semop{\node{n}}$}
    \end{align*}
\item[$\rho(\node{n}) = {[K]}$.]
    In this case we know that $\node{n} = S \tnxTVD [K] \Phi'$ and that $cs(\node{n}) = \node{n}'$, where $\node{n}' = S' \tnxTVD \Phi'$ and $S' = \{ s' \mid \exists s \in S \colon s \xrightarrow{K} s' \}$.  The induction hypothesis guarantees that $\semT{P(\node{n}')}{\V'} = \semTV{\node{n}'} = \semT{\Phi'}{\V[\Delta]}$.  We reason as follows.
    \begin{align*}
    \semT{P(\node{n})}{\V'}
    &= \semT{[K] (P(\node{n}'))}{\V'}
    && \text{Definition of $P(\node{n})$}
    \\
    &= \pre_{[K]} (\semT{P(\node{n}')}{\V'})
    && \text{Semantics of $[K]$}
    \\
    &= \pre_{[K]} (\semop{\node{n}'})
    && \text{Induction hypothesis}
    \\
    &= \pre_{[K]} (\semT{\Phi'}{\V[\Delta]})
    && \text{Definition of $\semop{\node{n}'}$}
    \\
    &= \semT{[K] \Phi'}{\V[\Delta]}
    && \text{Semantics of $[K]$}
    \\
    &= \semop{\node{n}}
    && \text{Definition of $\semop{\node{n}}$}
    \end{align*}
\item[$\rho(\node{n}) = (\dia{K},f)$.]
    In this case we know that $\node{n} = S \tnxTVD \dia{K} \Phi'$ and that $cs(\node{n}) = \node{n}'$, where $\node{n}' = f(S) \tnxTVD \Phi'$.  The induction hypothesis guarantees that $\semT{P(\node{n}')}{\V'} = \semop{\node{n}'} = \semT{\Phi'}{\V[\Delta]}$.  We reason as follows.
    \begin{align*}
    \semT{P(\node{n})}{\V'}
    &= \semT{\dia{K} P(\node{n}')}{\V'}
    && \text{Definition of $P(-)$}
    \\
    &= \pre_{\dia{K}} (\semT{P(\node{n}')}{\V'})
    && \text{Semantics of $\langle K \rangle$}
    \\
    &= \pre_{\dia{K}} (\semop{\node{n}'})
    && \text{Induction hypothesis}
    \\
    &= \pre_{\dia{K}} (\semT{\Phi'}{\V[\Delta]})
    && \text{Definition of $\semT{\node{n}'}{\V[\Delta]}$}
    \\
    &= \semT{\dia{K} \Phi'}{\V[\Delta]}
    && \text{Semantics of $\dia{K}$}
    \\
    &= \semop{\node{n}}
    && \text{Definition of $\semop{\node{n}}$}
    \end{align*}
\item[$\rho(\node{n}) = \sigma Z$.]
    In this case we know that $\node{n} = S \tnxTVD \sigma Z.\Phi'$ and that $cs(\node{n}) = \node{n}'$, where $\node{n}' = S \tnxTV{\Delta'} U$ for some fresh definitional constant $U$ and $\Delta' = \Delta \cdot (U = \sigma Z.\Phi')$.  The induction hypothesis guarantees that $\semT{P(\node{n}')}{\V'} = \semop{\node{n}'} = \semT{U}{\V[\Delta']}$.  We reason as follows.
    \begin{align*}
    \semT{P(\node{n})}{\V'}
    &= \semT{P(\node{n}')}{\V'}
    && \text{Definition of $P(\node{n})$}
    \\
    &= \semop{\node{n}'}
    && \text{Induction hypothesis}
    \\
    &= \semT{U}{\V[\Delta']}
    && \text{Definition of $\semop{\node{n}'}$}
    \\
    &= \semTV{U[\Delta']}
    && \text{Lemma~\ref{lem:definition-list-correspondence}}
    \\
    &= \semTV{(\sigma Z.\Phi')[\Delta]}
    && \text{Definition of $U[\Delta']$, $\Delta' = \Delta \cdot (U = \sigma Z.\Phi')$}
    \\
    &= \semT{\sigma Z. \Phi'}{\V[\Delta]}
    && \text{Lemma~\ref{lem:definition-list-correspondence}}
    \\
    &= \semop{\node{n}}
    && \text{Definition of $\semop{\node{n}}$}
    \end{align*}
\item[$\rho(\node{n}) = \textnormal{Un}$.]
    Follows immediately from Lemma~\ref{lem:companion-node-formulas-and-semantics}.
\item[$\rho(\node{n}) = \textnormal{Thin}$.]
    In this case we know that $\node{n} = S \tnxTVD \Phi$ and that $cs(\node{n}) = \node{n}' $, where $\node{n}' = S' \tnxTVD \Phi$ for some $S \subseteq S'$.  The induction hypothesis guarantees that $\semT{P(\node{n}')}{\V'} = \semop{\node{n}'} = \semT{\Phi}{\V[\Delta]}$.  We reason as follows.
    \begin{align*}
    \semT{P(\node{n})}{\V'}
    &= \semT{P(\node{n}')}{\V'}
    && \text{Definition of $P(-)$}
    \\
    &= \semop{\node{n}'}
    && \text{Induction hypothesis}
    \\
    &= \semT{\Phi}{\V[\Delta]}
    && \text{Definition of $\semop{\node{n}'}$}
    \\
    &= \semop{\node{n}}
    && \text{Definition of $\semop{\node{n}}$}
    \end{align*}
\end{description}\qedhere
\end{proof}

\noindent
The final corollary asserts that when the definition list in a proof node is empty, there is no need to make special provision for consistent valuations.

\begin{corollary}\label{cor:node-formulas-vs-node-semantics}
Let $\node{n} \in \node{N}$ be such that $\seqdl(\node{n}) = \emptyL$.  Then $\semop{\node{n}} = \semTV{P(\node{n})}$.
\end{corollary}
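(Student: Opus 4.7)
The plan is to reduce the statement to Lemma~\ref{lem:node-formulas-and-node-semantics} by exhibiting a valuation $\V'$ consistent with $\mathbb{T}$ whose interpretation of $P(\node{n})$ agrees with $\semTV{P(\node{n})}$. Such a $\V'$ certainly exists: it suffices to take $\V'(X) = \V(X)$ for every $X \in \Var \setminus \Var_\mathbb{T}$ and $\V'(Z_\node{m}) = \sem{U}{}{\mathbb{T}}$ for every $\node{m} \in \cnodesT(U)$, which satisfies Definition~\ref{def:consistency}. Lemma~\ref{lem:node-formulas-and-node-semantics} then yields $\semT{P(\node{n})}{\V'} = \semop{\node{n}}$. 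The task therefore reduces to proving that $\semT{P(\node{n})}{\V'} = \semTV{P(\node{n})}$, for which, by (iterated application of) Lemma~\ref{lem:substitution-of-bound-variables}, it is enough to establish that no variable $Z_\node{m} \in \Var_\mathbb{T}$ occurs free in $P(\node{n})$.

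The key structural observation is that the definition list only grows along any root-to-node path: inspection of Figure~\ref{fig:proof-rules} shows that every rule either leaves $\Delta$ unchanged or (in the case of $\sigma Z$) extends it at the right end. Hence $\seqdl(\node{n}) = \emptyL$ forces $\seqdl(\node{n}'') = \emptyL$ for every $\node{n}'' \in A(\node{n})$. Since a companion node $\node{m}$ satisfies $\rho(\node{m}) = \textnormal{Un}$, whose conclusion has form $S \tnxTV{\Delta_\node{m}} U$ with $U \in \dom(\Delta_\node{m})$, we conclude that neither $\node{n}$ nor any strict ancestor of $\node{n}$ is a companion node of $\mathbb{T}$.

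It remains to trace, through Definition~\ref{def:node-formulas}, where a free occurrence of some $Z_\node{m}$ could appear in $P(\node{n})$. Such an occurrence originates (inductively along the subtree $\tree{T}_\node{n}$) from a $\sigma$-leaf $\node{n}' \in D(\node{n})$ whose companion is $\node{m}$, contributing the atom $P(\node{n}') = Z_\node{m}$; all enclosing cases merely reassemble subformulas, with the sole opportunity to bind $Z_\node{m}$ occurring at the Un-case for $\node{m}$ itself, which wraps the enclosed formula in $\sigma Z_\node{m}. \,(\cdot)$. By Definition~\ref{def:tableau}(\ref{subdef:companion-leaf}), $\node{m} \in A_s(\node{n}')$; combined with the previous paragraph, $\node{m}$ cannot be an ancestor of $\node{n}$ (nor $\node{n}$ itself), so $\node{m}$ is a strict descendant of $\node{n}$ lying on the path from $\node{n}$ to $\node{n}'$. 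Thus the binder $\sigma Z_\node{m}$ is present between $\node{n}$ and the atom $Z_\node{m}$ in the construction of $P(\node{n})$, and the occurrence is bound.

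I do not expect a substantial obstacle: the argument is essentially a bookkeeping exercise on the placement of companion binders relative to $\node{n}$, leveraging the monotonicity of definition lists along paths. The most delicate point to phrase carefully is the inductive claim that free occurrences of $Z_\node{m}$ in $P(\node{n})$ correspond bijectively to $\sigma$-leaf descendants of $\node{n}$ with companion $\node{m}$ lying outside $D(\node{n})$; once stated precisely, it is verified by a routine induction on $\tree{T}_\node{n}$ following the clauses of Definition~\ref{def:node-formulas}.
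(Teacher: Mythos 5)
Your proposal is correct and follows essentially the same route as the paper's proof: invoke Lemma~\ref{lem:node-formulas-and-node-semantics} with a consistent valuation, then observe that no $Z_\node{m} \in \Var_{\mathbb{T}}$ is free in $P(\node{n})$ so the consistent valuation and $\V$ agree on $P(\node{n})$. The only difference is that you spell out the "no free $Z_\node{m}$" claim (via monotonicity of definition lists along paths and the placement of the $\sigma Z_\node{m}$ binders), which the paper asserts with "it may also be seen that"; your elaboration is sound.
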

\begin{proof}
Fix $\node{n} \in \node{N}$ such that $\seqdl(\node{n}) = \emptyL$.  Based on Lemma~\ref{lem:node-formulas-and-node-semantics} we know that for any valuation $\V'$ that is consistent with $\mathbb{T}$, $\semT{P(\node{n})}{\V'} = \semop{\node{n}}$.  It may also be seen that no $Z' \in \Var_{\mathbb{T}}$ can be free in $P(\node{n})$, and since $\V'$ is consistent with $\mathbb{T}$ we have that $\semT{P(\node{n})}{\V'} = \semTV{P(\node{n})}$.  Consequently, $\semop{\node{n}} = \semTV{P(\node{n})}$.  \qedhere
\end{proof}

\subsection{Support Orderings for Companion Nodes}\label{subsec:support-ordering-proof}

As the next step in our soundness proof, we establish that for all companion nodes $\node{n}$ in the tableau, $(\seqst(\node{n}),<:_{\node{n}}^+)$, where $<:_{\node{n}}^+$ is the transitive closure of the extended dependency ordering on $\node{n}$, is a support ordering for a semantic function derived from $P(\node{n})$.  This fact is central in the proof of soundness, as it establishes a key linkage between the tableau-based ordering $<:_{\node{n}}^+$ and the semantic notion of support ordering.

In order to prove this result about $<:_\node{n}^+$ we first introduce a derived dependency relation, which we call the \emph{support dependency ordering} (notation $\leq:_{\node{m},\node{n}}$).
This ordering is based on the extended dependency ordering $<:_{\node{m},\node{n}}$, but it allows dependencies based on cycling through node $\node{n}$ first, in case $\node{n}$ is a companion node.
Specifically, the support dependency ordering captures exactly the dependencies guaranteeing that $s$ is in the semantics of $\node{n}$ if for every $s'$, $\node{m}$ with $s' \leq :_{\node{m},\node{n}} s$, state $s'$ is in the semantics of $\node{m}$.
If $\node{n}$ is a node in which the unfolding rule has been applied, to show that $s$ is in the semantics of $\node{n}$, we may require to first show that $s'$ is in the semantics of $\node{n}$.
This is not captured by the relation $<:_{\node{m},\node{n}}$, which does not take the dependencies within node $\node{n}$ into account.

\begin{definition}[Support dependency ordering]\label{def:support-dependency-ordering}
Let $\node{m},\node{n} \in \node{N}$ be proof nodes in $\mathbb{T}$.
The \emph{support dependency ordering}, $\leq :_{\node{m},\node{n}}$ is defined as follows:
\[
\leq :_{\node{m},\node{n}} = 
\begin{cases}
  <:_{\node{m},\node{n}} \rcomp <:_\node{n}^* & \text{if $\rho(\node{n}) = \textnormal{Un}$} \\
  <:_{\node{m},\node{n}} & \text{otherwise}
\end{cases}
\]
\end{definition}

We now remark on some properties of $\leq:_{\node{m},\node{n}}$ that will be used below.
We first note that ${\leq:_{\node{m},\node{n}}}$ extends ${<:_{\node{m},\node{n}}}$ (as well as $<_{\node{m},\node{n}}$ and $\lessdot_{\node{m},\node{n}}$, since $<:_{\node{m},\node{n}}$ extends both of these relations):  for all $s, s'$, if $s' <:_{\node{m},\node{n}} s$ then $s' \leq:_{\node{m},\node{n}} s$.
Also, if $\node{n}$ is a companion node (i.e.\/ $\rho(\node{n}) = \text{Un}$) then the transitivity of $<:_{\node{n}}^*$ guarantees that ${\leq:_{\node{m},\node{n}}} = (\leq:_{\node{m},\node{n}} \rcomp <:_{\node{n}}^*)$, as in this case
$$
{\leq:_{\node{m},\node{n}}}
=
(<:_{\node{m},\node{n}} \rcomp <:_\node{n}^*)
=
(<:_{\node{m},\node{n}} \rcomp <:_\node{n}^* \rcomp <:_\node{n}^*)
=
(\leq:_{\node{m},\node{n}} \rcomp <:_\node{n}^*).
$$
From the definition of $<:_{\node{m},\node{n}}$ (cf.\/ Definition~\ref{def:extended_path_ordering}) we have that if $\node{m} = \node{n}$ then $<:_{\node{m},\node{n}} = I_{\seqst(\node{n})}$ is the identity relation over $\seqst(\node{n})$.  From this fact we can make the following observations.  First, if $\node{m} = \node{n}$ and $\node{n}$ is not a companion node, then ${\leq:_{\node{m},\node{n}}} = {<:_{\node{m},\node{n}}}$ is the identity relation over $\seqst(\node{n})$.
Second, if $\node{m} = \node{n}$ and $\node{n}$ is a companion node then $\leq:_{\node{m},\node{n}}$ is ${<:_{\node{n}}^*}$, the reflexive and transitive
closure of the companion node ordering for $\node{n}$.
.
The next lemma establishes a technical property, derived from the definition of $<:_{\node{m},\node{n}}$, that is satisfied by $\leq:_{\node{m},\node{n}}$. It is used later to show that $\leq:_{\node{m},\node{n}}$ obeys a pseudo-transitivity law.

\begin{lemma}[Characterization of $\leq:_{\node{m},\node{n}}$]
\label{lem:support-dependency-ordering-characterization}
Let $\node{n}_1, \node{n}_2$ and $\node{n}_3$ be proof nodes in $\mathbb{T}$, with $\node{n}_2$ a companion node and $\node{n}_3 \neq \node{n}_2$, and let $s_1, s_2, s_2'$ and $s_3$ satisfy the following.
\begin{enumerate}
    \item $s_2 \leq:_{\node{n}_2,\node{n}_1} s_1$
    \item $s_2' <:_{\node{n}_2}^+ s_2$
    \item $s_3 \leq:_{\node{n}_3,\node{n}_2} s_2'$
\end{enumerate}
Then $s_3 \leq:_{\node{n}_3,\node{n}_1} s_1$.
\end{lemma}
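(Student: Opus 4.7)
The plan is to prove the lemma as a generalized transitivity of $\leq:$ through the middle companion node $\node{n}_2$. First, I would unfold $\leq:_{\node{n}_3, \node{n}_2}$ using Definition~\ref{def:support-dependency-ordering} and the fact that $\node{n}_2$ is a companion: this yields some $s^\dagger \in \seqst(\node{n}_2)$ with $s_3 <:_{\node{n}_3, \node{n}_2} s^\dagger$ and $s^\dagger <:_{\node{n}_2}^* s_2'$. Combined with hypothesis~2 this gives $s^\dagger <:_{\node{n}_2}^+ s_2$ (using ${<:_{\node{n}_2}^*} \rcomp {<:_{\node{n}_2}^+} \subseteq {<:_{\node{n}_2}^+}$). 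I would similarly unfold $\leq:_{\node{n}_2, \node{n}_1}$ to obtain $t$ with $s_2 <:_{\node{n}_2, \node{n}_1} t$ and $t <:_{\node{n}_1}^* s_1$ (taking $t = s_1$ and the closure step trivial when $\node{n}_1$ is not a companion). Since ${\leq:_{\node{n}_3, \node{n}_1}} \rcomp {<:_{\node{n}_1}^*} = {\leq:_{\node{n}_3, \node{n}_1}}$ directly from the definition, it suffices to establish $s_3 <:_{\node{n}_3, \node{n}_1} t$.

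The core of the argument is structural induction on the derivation of $s_2 <:_{\node{n}_2, \node{n}_1} t$ via Definition~\ref{def:extended_path_ordering}. In the base case, clause~2(a), where $s_2 \lessdot_{\node{n}_2, \node{n}_1} t$, I would apply clause~2(b) of Definition~\ref{def:extended_path_ordering} with intermediate companion $\node{m} := \node{n}_2$ and inner witnesses $s^\dagger, s_2$: the side conditions $\node{m} \neq \node{n}_3$ and $\node{m} \neq \node{n}_1$ reduce to $\node{n}_2 \neq \node{n}_3$ (a given hypothesis) and $\node{n}_2 \neq \node{n}_1$ (a degenerate sub-case, discussed below); the three substantive facts required by clause~2(b), namely $s_3 <:_{\node{n}_3, \node{n}_2} s^\dagger$, $s^\dagger <:_{\node{n}_2}^+ s_2$, and $s_2 \lessdot_{\node{n}_2, \node{n}_1} t$, are exactly what we have. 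In the inductive case, clause~2(b) supplies a companion $\node{m}^* \in \cnodesT$ with $\node{m}^* \neq \node{n}_2, \node{n}_1$ and states $v', v \in \seqst(\node{m}^*)$ satisfying $s_2 <:_{\node{n}_2, \node{m}^*} v'$, $v' <:_{\node{m}^*}^+ v$, and $v \lessdot_{\node{m}^*, \node{n}_1} t$. Applying the induction hypothesis to the strictly smaller derivation $s_2 <:_{\node{n}_2, \node{m}^*} v'$ (with $\node{m}^*$ playing the role of $\node{n}_1$ and $v'$ the role of $s_1$) yields $s_3 \leq:_{\node{n}_3, \node{m}^*} v'$; unfolding gives $v^*$ with $s_3 <:_{\node{n}_3, \node{m}^*} v^*$ and $v^* <:_{\node{m}^*}^* v'$, whence $v^* <:_{\node{m}^*}^+ v$. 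A single further application of clause~2(b) of Definition~\ref{def:extended_path_ordering}, using $\node{m} := \node{m}^*$ with inner witnesses $v^*, v$, closes the derivation, provided $\node{m}^* \neq \node{n}_3$.

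The principal obstacle is the degenerate sub-cases $\node{n}_1 = \node{n}_2$ in the base case and $\node{m}^* = \node{n}_3$ in the inductive case, where the selected intermediate companion violates a side condition of clause~2(b). In each of these, the offending node is itself a companion, so the trailing $<:^*$ component of the corresponding $\leq:$-relation becomes available, and the problematic $<:_{\node{n}_2}^+$ or $<:_{\node{m}^*}^+$ cycling can be absorbed directly into this tail instead of forced into a pure $<:$-derivation, with Lemma~\ref{lem:join_extended_path_ordering} serving as the pseudo-transitivity workhorse that glues the pieces together. The absorption establishes $s_3 \leq:_{\node{n}_3, \node{n}_1} t$ directly, which together with the earlier reduction (closure of $\leq:_{\node{n}_3, \node{n}_1}$ under $<:_{\node{n}_1}^*$) yields the desired conclusion $s_3 \leq:_{\node{n}_3, \node{n}_1} s_1$.
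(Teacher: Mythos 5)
Your proposal is correct and follows essentially the same route as the paper's proof: dispose of the case $\node{n}_1 = \node{n}_2$ by absorbing everything into the $<:_{\node{n}_2}^*$ tail, otherwise unfold both $\leq:$-hypotheses, reduce the goal to $s_3 <:_{\node{n}_3,\node{n}_1} t$, and induct on the derivation of $s_2 <:_{\node{n}_2,\node{n}_1} t$, closing the base case with clause 2(b) of Definition~\ref{def:extended_path_ordering} at $\node{m} = \node{n}_2$ and the inductive step with clause 2(b) at the given intermediate companion. One remark on the sub-case $\node{m}^* = \node{n}_3$ that you single out: it cannot occur, because $s_3 <:_{\node{n}_3,\node{n}_2} s^\dagger$ forces $\node{n}_3 \in D(\node{n}_2)$ while $s_2 <:_{\node{n}_2,\node{m}^*} v'$ forces $\node{n}_2 \in D(\node{m}^*)$, so $\node{m}^* = \node{n}_3$ would give $\node{n}_2 = \node{n}_3$, contradicting the hypothesis. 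This vacuity is fortunate, since the absorption you sketch for that sub-case would not actually go through: the trailing $<:^*$ component of $\leq:_{\node{n}_3,\node{n}_1}$ permits cycling at the \emph{target} node $\node{n}_1$, not at the source $\node{n}_3$, so $<:_{\node{n}_3}^+$-cycling cannot be pushed into that tail.
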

\remove{
\begin{proofsketch}
Follows from the definitions of $\leq:_{\node{m},\node{n}}$ and $<:_{\node{m},\node{n}}$ and the pseudo-transitivity (Lemma~\ref{lem:join_extended_path_ordering}) of $<:_{\node{m},\node{n}}$.
The detailed proof is included in the appendix \qedhere
\end{proofsketch}
}
\begin{proof}
Fix nodes $\node{n}_1, \node{n}_2$ and $\node{n}_3$ and states $s_1, s_2, s_2'$ and $s_3$ satisfying the conditions in the statement of the lemma.  We must show that $s_3 \leq:_{\node{n}_3,\node{n}_1} s_1$.  There are two cases to consider.
\begin{description}
\item[$\node{n}_2 = \node{n}_1$.]
    In this case $\node{n}_1 = \node{n}_2$ is a companion node, and from the observations above we know that ${\leq:_{\node{n}_2,\node{n}_1}} = {<:_{\node{n}_1}^*} = {<:_{\node{n}_2}^*}$.  Thus 
    $s_2' <:_{\node{n}_2}^+ s_2 <:_{\node{n}_2}^* s_1$, meaning $s_2' <:_{\node{n}_2}^* s_1$.  As $s_3 \leq:_{\node{n}_3,\node{n}_2} s_2' <:_{\node{n}_2}^* s_1$ and ${\leq:_{\node{n}_3,\node{n}_2}} = {\leq:_{\node{n}_3,\node{n}_2}} \rcomp {<:_{\node{n}_2}^*}$, we can conclude that $s_3 \leq:_{\node{n}_3,\node{n}_2} s_1$ and thus, since $\node{n}_2 = \node{n}_1$, $s_3 \leq:_{\node{n}_3,\node{n}_1} s_1$.
\item[$\node{n}_2 \neq \node{n}_1$.]
    In this case there exists $s_1'$ such that $s_2 <:_{\node{n}_2,\node{n}_1} s_1'$ and $s_1' <:_{\node{n}_1}^* s_1$.  It suffices to establish that $s_3 <:_{\node{n}_3,\node{n}_1} s_1'$, as the definition of $\leq:_{\node{n}_3,\node{n}_1}$ then guarantees that $s_3 \leq:_{\node{n}_3,\node{n}_1} s_1$.
    The proof proceeds by induction on the definition of $<:_{\node{n}_2,\node{n}_1}$.
    In the base case, $s_2 \lessdot_{\node{n}_2,\node{n}_1} s_1'$.  Since $s_3 \leq:_{\node{n}_3,\node{n}_2} s_2'$ there must exist $s_2''$ such that $s_3 <:_{\node{n}_3,\node{n}_2} s_2''$ and $s_2'' <:_{\node{n}_2}^* s_2'$.  Since $s_2' <:_{\node{n}_2}^+ s_2$ we have that $s_2'' <:_{\node{n}_2}^+ s_2$, and the definition of $<:_{\node{n}_3,\node{n}_1}$ gives $s_3 <:_{\node{n}_3,\node{n}_1} s_1'$.

    In the induction step, there exist node $\node{n}_1'$ such that $\node{n}_1' \neq \node{n}_1$, $\node{n}_1' \neq \node{n}_2$, and states $t_1, t_1'$ such that $t_1 \lessdot_{\node{n}_1', \node{n}_1} s_1'$, $t_1' <:_{\node{n}_1'} t_1$, and $s_2 <:_{\node{n}_2,\node{n}_1'} t_1'$.  The induction hypothesis guarantees that $s_3 <:_{\node{n}_3,\node{n}_1'} t_1'$.  The definition of $<:_{\node{n}_3,\node{n}_1}$ now guarantees that $s_3 <:_{\node{n}_3,\node{n}_1} s_1'$.
    \qedhere
\end{description}
\end{proof}

\noindent
Relation $\leq:_{\node{m},\node{n}}$ also enjoys a pseudo-transitivity property.
\begin{lemma}[Pseudo-transitivity of $\leq:_{\node{m},\node{n}}$]
\label{lem:pseudo-transitivity-of-support-dependency-ordering}
    Let $\node{n}_1, \node{n}_2$ and $\node{n}_3$ be proof nodes in partial tableau $\mathbb{T}$, and assume $s_1, s_2$ and $s_3$ are such that $s_3 \leq:_{\node{n}_3, \node{n}_2} s_2$ and $s_2 \leq:_{\node{n}_2, \node{n}_1} s_1$.
    Then $s_3 \leq:_{\node{n}_3,\node{n}_1} s_1$.
\end{lemma}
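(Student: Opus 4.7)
The plan is to do a case analysis on whether $\node{n}_2$ is a companion node, reducing each branch to Lemmas~\ref{lem:join_extended_path_ordering} and~\ref{lem:support-dependency-ordering-characterization}. First I would handle the case where $\rho(\node{n}_2) \neq \textnormal{Un}$. Here $\leq :_{\node{n}_3,\node{n}_2}$ equals ${<:_{\node{n}_3,\node{n}_2}}$ directly, so $s_3 <:_{\node{n}_3,\node{n}_2} s_2$. Unfolding $s_2 \leq :_{\node{n}_2,\node{n}_1} s_1$ by Definition~\ref{def:support-dependency-ordering} produces an intermediate state $t$ with $s_2 <:_{\node{n}_2,\node{n}_1} t$, together with either $t = s_1$ (when $\node{n}_1$ is not a companion node) or $t <:_{\node{n}_1}^{*} s_1$ (when it is). Pseudo-transitivity of $<:$ (Lemma~\ref{lem:join_extended_path_ordering}) then gives $s_3 <:_{\node{n}_3,\node{n}_1} t$, and recomposing with the trailing $<:_{\node{n}_1}^{*}$ step (if present) reassembles $s_3 \leq :_{\node{n}_3,\node{n}_1} s_1$.

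For the case $\rho(\node{n}_2) = \textnormal{Un}$ I would unfold $s_3 \leq :_{\node{n}_3,\node{n}_2} s_2$ as $s_3 <:_{\node{n}_3,\node{n}_2} s_2'$ together with $s_2' <:_{\node{n}_2}^{*} s_2$. If $s_2' = s_2$ the argument of the previous paragraph adapts verbatim. Otherwise $s_2' <:_{\node{n}_2}^{+} s_2$, and provided additionally that $\node{n}_3 \neq \node{n}_2$ the three facts $s_3 \leq :_{\node{n}_3,\node{n}_2} s_2'$ (immediate from $<:_{\node{n}_3,\node{n}_2}\subseteq{\leq:_{\node{n}_3,\node{n}_2}}$), $s_2' <:_{\node{n}_2}^{+} s_2$, and $s_2 \leq :_{\node{n}_2,\node{n}_1} s_1$ exactly match the hypotheses of the characterization Lemma~\ref{lem:support-dependency-ordering-characterization}, which delivers the goal at once.

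I expect the hardest sub-case to be $\node{n}_3 = \node{n}_2$ (both companion) with $s_2' <:_{\node{n}_2}^{+} s_2$. Here the key observation is that the extended clause of $<:_{\node{n}',\node{n}}$ requires an intermediate companion node that is simultaneously a strict ancestor and a strict descendant of $\node{n}$, which is impossible in a tree; hence $<:_{\node{n}_2,\node{n}_2}$ collapses to the identity on $\seqst(\node{n}_2)$, forcing $s_3 = s_2'$ and reducing the goal to deriving $s_3 \leq :_{\node{n}_2,\node{n}_1} s_1$ from $s_3 <:_{\node{n}_2}^{+} s_2$ and $s_2 \leq :_{\node{n}_2,\node{n}_1} s_1$. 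A further split on whether $\node{n}_1 = \node{n}_2$ finishes the argument: the first branch is immediate since $\leq :_{\node{n}_2,\node{n}_2} = {<:_{\node{n}_2}^{*}}$ is transitive, while in the second I would peel off the first link of the $<:_{\node{n}_2}^{+}$ chain to expose a companion leaf $\node{m}' \in \cleavesT(\node{n}_2)$ with $\node{m}' \neq \node{n}_2$, and then apply Lemma~\ref{lem:support-dependency-ordering-characterization} with $\node{m}'$ playing the role of $\node{n}_3$ to close out the residual goal.
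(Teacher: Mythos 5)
Your overall strategy coincides with the paper's: peel the trailing $<:_{\node{n}_1}^{*}$ off $\leq:_{\node{n}_2,\node{n}_1}$ and the trailing $<:_{\node{n}_2}^{*}$ off $\leq:_{\node{n}_3,\node{n}_2}$, then reduce everything to the pseudo-transitivity of $<:_{\node{n}',\node{n}}$ (Lemma~\ref{lem:join_extended_path_ordering}) and to Lemma~\ref{lem:support-dependency-ordering-characterization}. Your first paragraph, the $s_2'=s_2$ sub-case, the $\node{n}_3\neq\node{n}_2$ sub-case, and the $\node{n}_1=\node{n}_2$ branch of the final sub-case are all correct and match the paper's argument, modulo organizing the top-level split around $\rho(\node{n}_2)$ rather than around whether $s_2 <:_{\node{n}_2,\node{n}_1} s_1$ holds outright.

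The gap is in your last sub-case ($\node{n}_3=\node{n}_2$ a companion node, $\node{n}_1\neq\node{n}_2$, $s_3 <:_{\node{n}_2}^{+} s_2$). Instantiating Lemma~\ref{lem:support-dependency-ordering-characterization} with the companion leaf $\node{m}'$ in the role of $\node{n}_3$ yields a conclusion of the form $y \leq:_{\node{m}',\node{n}_1} s_1$ --- a statement about the relation indexed by the \emph{leaf} $\node{m}'$ --- whereas the residual goal is $s_3 \leq:_{\node{n}_2,\node{n}_1} s_1$, indexed by $\node{n}_2$. These are different relations, and you provide no bridge from one to the other. The underlying difficulty is genuine: $\leq:_{\node{n}_2,\node{n}_1}$ only builds in cycling at the \emph{target} node $\node{n}_1$ (via the trailing $<:_{\node{n}_1}^{*}$), and clause 2(b) of Definition~\ref{def:extended_path_ordering} forbids choosing the intermediate companion node $\node{m}$ equal to the \emph{source} node; so the prefix $s_3 <:_{\node{n}_2}^{+} s_2$ cannot be absorbed into $<:_{\node{n}_2,\node{n}_1}$ by the same move that drives the proof of Lemma~\ref{lem:support-dependency-ordering-characterization}, and peeling one link off the cycle does not change that. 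You have correctly isolated the delicate spot --- the paper itself disposes of exactly this sub-case with the unargued assertion ``We again have that $s_3 \leq:_{\node{n}_3,\node{n}_1} s_1$'' --- but your proposed closing step does not produce the required relation, so the case remains open in your write-up.
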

\remove{
\begin{proofsketch}
Follows from the pseudo-transitivity of $<:_{\node{m},\node{n}}$ (Lemma~\ref{lem:join_extended_path_ordering}) and the preceding observations.
The detailed proof is included in the appendix.
\qedhere
\end{proofsketch}
}
\begin{proof}
Suppose that $\node{n}_1, \node{n}_2$ and $\node{n}_3$, and $s_1, s_2$ and $s_3$, are such that $s_3 \leq:_{\node{n}_3,\node{n}_2} s_2$ and $s_2 \leq:_{\node{n}_2,\node{n}_1} s_1$.  We must show that $s_3 \leq:_{\node{n}_3,\node{n}_1} s_1$.  There are two cases to consider.
\begin{description}
\item[$s_2 <:_{\node{n}_2,\node{n}_1} s_1$.]
    We consider two sub-cases.  
    In the first, $s_3 <:_{\node{n}_3,\node{n}_2} s_2$; the pseudo-transitivity of $<:_{\node{m},\node{n}}$ immediately implies that $s_3 <:_{\node{n}_3,\node{n}_1} s_1$, so $s_3 \leq:_{\node{n}_3,\node{n}_1} s_1$.
    In the second sub-case, $s_3 \centernot{<:}_{\node{n}_3,\node{n}_2} s_2$.  
    As $s_3 \leq:_{\node{n}_3,\node{n}_2} s_2$ it therefore must be the case that $\node{n}_2$ is a companion node and that
    $$s_3 \,(<:_{\node{n}_3, \node{n}_2} \rcomp <:_{\node{n}_2}^*))\, s_2,$$
    meaning
    that there exists $s_2'$ such that 
    \begin{align*}
        s_3 &<:_{\node{n}_3, \node{n}_2} s_2' \\
        s_2' &<:_{\node{n}_2}^* s_2.
    \end{align*}
    If $\node{n}_2 = \node{n}_1$ then $<:_{\node{n}_2,\node{n}_1}$ is the identity relation, and thus $s_2 = s_1$.  This fact, and the fact that $\node{n}_2 = \node{n}_1$, ensures that $s_3 \leq_{\node{n}_3,\node{n}_1} s_1$.
    If $\node{n}_2 \neq \node{n}_1$ and $\node{n}_2 = \node{n}_3$, it also follows that ${<:_{\node{n}_3,\node{n}_2}}$ is the identity relation, and thus $s_3 = s_2'$.  We again have that $s_3 \leq:_{\node{n}_3,\node{n}_1} s_1$.
    Finally, if $\node{n}_2 \neq \node{n}_1$ and $\node{n}_2 \neq \node{n}_3$ then Lemma~\ref{lem:support-dependency-ordering-characterization} gives the desired result.
\item[$s_2 \centernot{<:}_{\node{n}_2,\node{n}_1} s_1$.]
    In this case, it must be that $\node{n}_1$ is a companion node and that
    there exists $s_1'$ such that $s_2 <:_{\node{n_2},\node{n_1}} s_1'$ and $s_1' <:_{\node{n}_1}^* s_1$.  From the previous case we know that $s_3 \leq:_{\node{n}_3,\node{n}_1} s_1'$; the the fact that $\leq:_{\node{n}_3,\node{n}_1} = (\leq:_{\node{n}_3,\node{n}_1} \rcomp <:_{\node{n}_1}^*)$ immediately guarantees that $s_3 \leq:_{\node{n}_3,\node{n}_1} s_1$.
    \qedhere
\end{description}
\end{proof}

In the remainder of this section we wish to establish that for any companion node $\node{n}$ in successful tableau $\mathbb{T}$, $<:_{\node{n}}^+$ is a support ordering for a function derived from $P(\node{n})$.  In order to define this function we must deal with the free variables embedded in $P(\node{n})$.  In particular, if $\node{m}$ is a companion node that is a strict ancestor of $\node{n}$ then variable $Z_{\node{m}}$ may appear free in $P(\node{n})$; this would be the case if any of the companion leaves of $\node{m}$ are also descendants of $\node{n}$. To accommodate these free variables in $P(\node{n})$ we will define a modification of valuation $\V$ that assigns sets of states to these variables based on the $<:_{\node{m}', \node{n}}$ relation, where $\node{m}'$ is a companion leaf of $\node{m}$ that is also a descendant of $\node{n}$. 

\begin{definition}[Influence extensions of valuations]\label{def:support-extension-of-valuation}
Let $\mathbb{T} = \tableauTrl$ be a tableau, with $\node{m}_1 \cdots \node{m}_k$ an ordering on the companion nodes $\cnodes{\mathbb{T}}$ of $\mathbb{T}$.
Also let $\node{n}$ be a node in $\mathbb{T}$, with $S = \seqst(\node{n})$ the states in $\node{n}$.  We define the following.
\begin{enumerate}
\item 
    $\cleaves{\node{m}_i,\node{n}} = \cleaves{\mathbb{T}}(\node{m}_i) \cap D(\node{n})$ is the set of companion leaves of $\node{m}_i$ that are also descendants of $\node{n}$.
\item
    The set of states in companion leaves of $\node{m}_i$ that influence state $s$ in $\node{n}$ is given as follows.
    \[
    S_{\node{n},s,\node{m}_i} = \bigcup_{\node{m}' \in \cleaves{\node{m}_i, \node{n}}} \preimg{(\leq:_{\node{m}',\node{n}})}{s}
    \]
    We also define $$S_{\node{n},\node{m}_i} = \bigcup_{s \in S} S_{\node{n},s,\node{m}_i}$$ to be the set of states in companion leaves of $\node{m}_i$ that influence $\node{n}$.
\item
    The \emph{influence extension} of $\V$ for state $s$ in node $\node{n}$ is defined as
    \[
    \V_{\node{n},s} = \V[Z_{\node{m}_1} \cdots Z_{\node{m}_k} = S_{\node{n},s,\node{m}_1} \cdots S_{\node{n},s,\node{m}_k}].
    \]
    Similarly
    \[
    \V_\node{n} = \V[Z_{\node{m}_1} \cdots Z_{\node{m}_k} = S_{\node{n},\node{m}_1} \cdots S_{\node{n},\node{m}_k}]
    \]
    is the influence extension of $\V$ for node $\node{n}$.
\end{enumerate}
\end{definition}


Intuitively, $S_{\node{n},s,\node{m}_i}$ contains all the states in the companion leaves of $\node{m}_i$ at or below node $\node{n}$ that influence the determination that state $s$ belongs in node $\node{n}$.
Note these definitions also set $\V_{\node{n},s}(Z_{\node{m}_i}) = \emptyset$ in case node $\node{m}_i$ has no companion leaves that are descendants of $\node{n}$; when this happens $Z_{\node{m}_i}$ cannot appear free in $P(\node{n})$.  Also note that $Z_{\node{m}_i}$ does not appear free in $P(\node{n})$ if $\node{m}_i$ is a descendant of $\node{n}$, as $P(\node{m}_i) = \sigma Z_{\node{m}_i}.\Phi'$ for some $\Phi'$ is a subformula of $P(\node{n})$ and contains all occurrences of $Z_{\node{m}_i}$ in $P(\node{n})$. In both cases the value assigned to $Z_{\node{m}_i}$ by $\V_{\node{n},s}$ does not affect the semantics of $P(\node{n})$.

We now state a technical but useful lemma about dependency extensions.
\begin{lemma}[Monotonicity of extensions]
\label{lem:monotonicity-of-dependency-extensions}
Let $\mathbb{T} = \tableauTrl$ be a tableau with nodes $\node{n}$ and $\node{n}'$ and states $s$ and $s'$ such that $s' <_{\node{n}',\node{n}} s$.  Then:
\begin{enumerate}
\item 
    for all $Z \in \Var_{\mathbb{T}}, \V_{\node{n}',s'}(Z) \subseteq \V_{\node{n},s}(Z)$, and 
\item
    for all $Z \in \Var \setminus \Var_{\mathbb{T}}, \V_{\node{n}',s'}(Z) = \V_{\node{n},s}(Z)$.
\end{enumerate}
\end{lemma}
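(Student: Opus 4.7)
The plan is to reduce both claims to direct set-theoretic arguments about the components of the influence extension. Claim~2 is immediate from the definition of $\V_{\node{n}',s'}$ and $\V_{\node{n},s}$: these valuations only modify $\V$ on variables in $\Var_{\mathbb{T}}$, so on $Z \in \Var \setminus \Var_{\mathbb{T}}$ both agree with $\V(Z)$. All the work therefore lies in Claim~1, where by unfolding the definition $\V_{\node{n},s}(Z_{\node{m}_i}) = S_{\node{n},s,\node{m}_i}$ (and similarly for $\V_{\node{n}',s'}$), it suffices to show $S_{\node{n}',s',\node{m}_i} \subseteq S_{\node{n},s,\node{m}_i}$ for every companion node $\node{m}_i \in \cnodesT$.

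Fix such an $\node{m}_i$ and a state $t \in S_{\node{n}',s',\node{m}_i}$. By definition of $S_{\node{n}',s',\node{m}_i}$, there exists a companion leaf $\node{m}' \in \cleaves{\node{m}_i,\node{n}'} = \cleavesT(\node{m}_i) \cap D(\node{n}')$ with $t \leq:_{\node{m}',\node{n}'} s'$. To place $t$ inside $S_{\node{n},s,\node{m}_i}$ it is enough to exhibit a companion leaf $\node{m}'' \in \cleaves{\node{m}_i,\node{n}}$ with $t \leq:_{\node{m}'',\node{n}} s$. The key structural observation is that the hypothesis $s' <_{\node{n}',\node{n}} s$ forces $\node{n}' \in c(\node{n})$ (by Definition~\ref{def:local_dependency_ordering}), so $D(\node{n}') \subseteq D(\node{n})$, and hence $\node{m}' \in \cleavesT(\node{m}_i) \cap D(\node{n}) = \cleaves{\node{m}_i,\node{n}}$. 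Thus we may take $\node{m}'' = \node{m}'$ and all that remains is to prove $t \leq:_{\node{m}',\node{n}} s$.

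For this final step the bridge is pseudo-transitivity of $\leq:_{\cdot,\cdot}$. Since $<_{\node{n}',\node{n}}$ is contained in $\lessdot_{\node{n}',\node{n}} \subseteq {<:_{\node{n}',\node{n}}} \subseteq {\leq:_{\node{n}',\node{n}}}$, the hypothesis gives $s' \leq:_{\node{n}',\node{n}} s$. Combining this with $t \leq:_{\node{m}',\node{n}'} s'$ via Lemma~\ref{lem:pseudo-transitivity-of-support-dependency-ordering} yields the required $t \leq:_{\node{m}',\node{n}} s$, completing the inclusion.

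The only nontrivial part of this plan is keeping track of the layered relations $<_{\node{n}',\node{n}}$, $\lessdot_{\node{n}',\node{n}}$, $<:_{\node{n}',\node{n}}$ and $\leq:_{\node{n}',\node{n}}$, and verifying that the specific inclusions needed (that a local step embeds in $\leq:_{\node{n}',\node{n}}$, and that the support-dependency ordering is pseudo-transitive even across companion nodes) are indeed available from the preceding development. Once those lemmas are invoked the argument is essentially a one-line chain, so I expect no substantive obstacle beyond this bookkeeping.
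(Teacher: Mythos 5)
Your proposal is correct and follows essentially the same route as the paper's (much terser) proof: reduce everything to $\cleaves{\node{m}_i,\node{n}'} \subseteq \cleaves{\node{m}_i,\node{n}}$ via $\node{n}' \in c(\node{n})$ and then to $S_{\node{n}',s',\node{m}_i} \subseteq S_{\node{n},s,\node{m}_i}$. The paper leaves the final containment as an assertion, whereas you correctly fill it in by embedding $<_{\node{n}',\node{n}}$ into $\leq:_{\node{n}',\node{n}}$ and applying the pseudo-transitivity Lemma~\ref{lem:pseudo-transitivity-of-support-dependency-ordering}, which is exactly the justification intended.
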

\begin{proof}
Follows from the definition of $\V_{\node{n},s}$ and the fact that the definition of $<_{\node{n}',\node{n}}$ ensures that $\node{n}' \in c(\node{n})$ and thus $D(\node{n}') \subseteq D(\node{n})$.  Consequently $\cleaves{\node{m}_i,\node{n}'} \subseteq \cleaves{\node{m}_i,\node{n}}$ for all $\node{m}_i \in \cnodes{\mathbb{T}}$, and $s' <_{\node{n}',\node{n}} s$ guarantees that $S_{\node{n}',s',\node{m}_i} \subseteq S_{\node{n},s,\node{m}_i}$. 
\end{proof}

The next corollary is an immediate consequence of this lemma.
\begin{corollary}
\label{cor:monotonicity-of-dependency-extensions}
Let $\mathbb{T} = \tableauTrl$ be a tableau, with $\node{n}$ and $\node{n}'$ and states $s$ and $s'$ such that $s' <_{\node{n}',\node{n}} s$.  Then $\semT{P(\node{n})}{\V_{\node{n}',s'}} \subseteq \semT{P(\node{n})}{\V_{\node{n}',s}}$.
\end{corollary}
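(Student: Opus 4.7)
The plan is to derive this corollary as an immediate consequence of Lemma~\ref{lem:monotonicity-of-dependency-extensions} combined with standard monotonicity of the mu-calculus semantics in its valuation argument. (I flag in passing that the right-hand side should almost certainly read $\V_{\node{n},s}$ rather than $\V_{\node{n}',s}$: Definition~\ref{def:support-extension-of-valuation} requires the state subscript to belong to the state set of the node subscript, so $\V_{\node{n}',s}$ is not well-defined in general since $s \in \seqst(\node{n})$ need not lie in $\seqst(\node{n}')$; the shape of Lemma~\ref{lem:monotonicity-of-dependency-extensions} dictates $\V_{\node{n},s}$ on the right.)

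The key intermediate claim I would establish is that every variable $Z \in \Var_{\mathbb{T}}$ occurs only positively in $P(\node{n})$. This I would verify by tree induction on the subtree of $\tree{T}$ rooted at $\node{n}$, using Definition~\ref{def:node-formulas}: the inductive clauses assemble $P(\node{n})$ from its children using only the positive connectives $\land$, $\lor$, $[K]$, $\dia{K}$, the positive binder $\sigma Z_{\node{n}}.(-)$, and pass-through cases for the $\sigma Z$ and Thin rules; the $\sigma$-leaf clause contributes a bare $Z_{\node{m}}$, and the free-leaf and $\dia{K}$-leaf clauses contribute $\seqfm(\node{n}')$ or $(\seqfm(\node{n}'))[\seqdl(\node{n}')]$, both of which are obtained from sequent formulas that are in positive normal form by Definition~\ref{def:sequent}. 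Since every $Z_{\node{m}} \in \Var_{\mathbb{T}}$ is a fresh variable (not appearing in any sequent formula nor in any formula inserted by a definition-list expansion), no occurrence of $Z_{\node{m}}$ can lie beneath a negation.

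With positivity in hand, I would invoke the standard monotonicity of the mu-calculus semantics with respect to valuations---a direct generalization of Lemma~\ref{lem:mu-calculus-semantic-monotonicity} from one positive variable to the set of variables positive in a formula, proved by the same induction on formula structure. This yields: whenever $\V_1(Z) \subseteq \V_2(Z)$ for every variable $Z$ that is free and positive in $\Phi$, and $\V_1(Z) = \V_2(Z)$ for all other free variables of $\Phi$, then $\semT{\Phi}{\V_1} \subseteq \semT{\Phi}{\V_2}$. Feeding in the valuation bounds supplied by Lemma~\ref{lem:monotonicity-of-dependency-extensions} (inclusion on $\Var_{\mathbb{T}}$, equality elsewhere) and the positivity claim for $P(\node{n})$, I obtain the desired inclusion $\semT{P(\node{n})}{\V_{\node{n}',s'}} \subseteq \semT{P(\node{n})}{\V_{\node{n},s}}$.

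There is essentially no conceptual obstacle here: Lemma~\ref{lem:monotonicity-of-dependency-extensions} already does the semantic work of comparing the two valuations, and the generalized monotonicity of $\semT{-}{-}$ in its valuation is a textbook induction. The only minor care required is the positivity check on $P(\node{n})$, and even this reduces to noting that the construction in Definition~\ref{def:node-formulas} never introduces a negation and that the fresh $\Var_{\mathbb{T}}$-variables are, by freshness, insulated from any pre-existing negations in sequent formulas.
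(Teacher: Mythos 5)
Your proof is correct and follows essentially the same route as the paper's (very terse) proof, which likewise just cites Lemma~\ref{lem:monotonicity-of-dependency-extensions} together with the positivity of every occurrence of the $\Var_{\mathbb{T}}$-variables in the node formula; you merely spell out the positivity check and the valuation-monotonicity induction that the paper leaves implicit. Your observation about the subscript on the right-hand valuation is also consistent with how the paper actually uses the corollary (and with its own proof text, which speaks of $P(\node{n}')$ and the bound supplied by the lemma).
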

\begin{proof}
Follows from Lemma~\ref{lem:monotonicity-of-dependency-extensions} and the fact that every occurrence of any $Z_{\node{m}} \in \Var_{\mathbb{T}}$ in $P(\node{n}')$ must be positive.
\end{proof}

We now state and prove the main lemma of this section, which is that for any companion node $\node{n}$ in a successful tableau, $<:_{\node{n}}^+$ is a support ordering for a semantic function derived from $P(\node{n})$.

\begin{lemma}[$<:_{\node{n}}^+$ is a support ordering]
\label{lem:support-ordering-for-companion-nodes}
Let $\mathbb{T} = \tableauTrl$ be a successful tableau, with $\node{n} \in \cnodes{\mathbb{T}}$ a companion node of $\mathbb{T}$ and $\node{n}'$ the child of $\node{n}$ in $\tree{T}$.  
Also let $S = \seqst(\node{n})$.  
Then $(S, <:_{\node{n}}^+)$ is a support ordering for
$\semfT{Z_\node{n}}{P(\node{n}')}{\V_{\node{n}}}$.
\end{lemma}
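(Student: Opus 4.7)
The plan is to verify the two conditions of a support ordering (Definition~\ref{def:support-ordering}) for $g = \semfT{Z_{\node{n}}}{P(\node{n}')}{\V_{\node{n}}}$. The typing conditions---$S \subseteq \states{S}$ and $<:_{\node{n}}^+ \subseteq S \times S$---are immediate from the definitions of sequents and of $<:_{\node{m}}$. The substantive content is showing, for each $s \in S$, that
\[
   s \in g(\preimg{<:_{\node{n}}^+}{s}) = \semT{P(\node{n}')}{\V_{\node{n}}[Z_{\node{n}} := \preimg{<:_{\node{n}}^+}{s}]}.
\]

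I would establish this by proving a stronger claim via tree induction on the subtree $\tree{T}_{\node{n}'}$: for every descendant $\node{n}''$ of $\node{n}'$ and every $s \in S$, $s'' \in \seqst(\node{n}'')$ with $s'' \leq:_{\node{n}'',\node{n}} s$,
\[
   s'' \in \semT{P(\node{n}'')}{\V_{\node{n}'',s''}[Z_{\node{n}} := \preimg{<:_{\node{n}}^+}{s}]}.
\]
Specializing to $\node{n}'' = \node{n}'$, using $s <_{\node{n}',\node{n}} s$ (hence $s \leq:_{\node{n}',\node{n}} s$), and applying monotonicity of $P(\node{n}')$ in the positive variables $Z_{\node{m}}$ together with the componentwise inclusion $\V_{\node{n}',s} \subseteq \V_{\node{n}}$ (Lemma~\ref{lem:monotonicity-of-dependency-extensions}) would recover the desired conclusion.

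The proof of the stronger claim proceeds by case analysis on $\rho(\node{n}'')$. For each non-Un case, I would apply the induction hypothesis to the children of $\node{n}''$, lifting the resulting dependency chains with pseudo-transitivity of $\leq:_{\node{m},\node{n}}$ (Lemma~\ref{lem:pseudo-transitivity-of-support-dependency-ordering}) combined with the local dependencies from Definition~\ref{def:local_dependency_ordering}, and then reconstructing $\semT{P(\node{n}'')}{\cdots}$ from the semantics of the corresponding top-level connective. For a $\sigma$-leaf with companion $\node{n}$, $P(\node{n}'') = Z_{\node{n}}$ and $s'' \leq:_{\node{n}'',\node{n}} s$ directly yields $s'' <:_{\node{n}} s$ (by the defining condition of $<:_{\node{n}}$ via companion leaves), hence $s'' \in \preimg{<:_{\node{n}}^+}{s}$; the other leaf cases are routine.

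The main obstacle is the case $\rho(\node{n}'') = \textnormal{Un}$, where $\node{n}''$ is itself a companion node with child $\node{n}'''$ and $P(\node{n}'') = \sigma Z_{\node{n}''}. P(\node{n}''')$. Writing $\V^* = \V_{\node{n}'',s''}[Z_{\node{n}} := \preimg{<:_{\node{n}}^+}{s}]$ and $f(T) = \semT{P(\node{n}''')}{\V^*[Z_{\node{n}''} := T]}$, I must show $s'' \in \sigma f$. Here I appeal to our fixpoint characterizations: for $\sigma = \nu$, Corollary~\ref{cor:greatest-fixpoint} reduces this to exhibiting a supported set for $f$ containing $s''$, and I take $X = \seqst(\node{n}'')$ with $\prec = <:_{\node{n}''}$; for $\sigma = \mu$, Corollary~\ref{cor:least-fixpoint} reduces this to a well-supported set, and successfulness of $\mathbb{T}$ provides well-foundedness of $<:_{\node{n}''}$ via Definition~\ref{def:successful-mu-leaf}. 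Verifying that $(X, \prec)$ is a support ordering for $f$---i.e., each $t \in X$ lies in $f(\preimg{<:_{\node{n}''}}{t})$---then amounts to applying the stronger claim at $\node{n}'''$ for each pair $(t, t)$ with $t \leq:_{\node{n}''',\node{n}''} t$, and using monotonicity of $P(\node{n}''')$ in $Z_{\node{n}''}$ to align the valuations. The delicate bookkeeping in this case---reconciling the different valuations $\V_{\node{n}''',t}$ arising for different $t \in X$ with the fixed $\V^*[Z_{\node{n}''}:=\preimg{<:_{\node{n}''}}{t}]$---is the main technical hurdle, and is handled via Lemma~\ref{lem:monotonicity-of-dependency-extensions} together with positivity of the fresh variables in $P(-)$.
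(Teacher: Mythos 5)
Your overall architecture matches the paper's: a strengthened statement proved by tree induction over the subtree, with pseudo-transitivity of $\leq:$ driving the non-fixpoint cases and the support-ordering characterizations of $\mu$ and $\nu$ (plus well-foundedness of $<:_{\node{n}''}$ from success) closing the Un case. The gap is in your choice of witness set in the Un case. You take $X = \seqst(\node{n}'')$, the \emph{entire} state set of the inner companion node, and claim $(X, <:_{\node{n}''})$ is a support ordering for $f$. This fails for two related reasons. First, your induction hypothesis only covers states $t$ of $\node{n}'''$ satisfying $t \leq:_{\node{n}''',\node{n}} s$ for the fixed outer $s \in S$; an arbitrary $t \in \seqst(\node{n}'')$ need not be $\leq:$-related back to $s$ at all, so the hypothesis does not apply to it (your phrase ``each pair $(t,t)$ with $t \leq:_{\node{n}''',\node{n}''} t$'' silently re-parameterizes the claim to the inner companion node $\node{n}''$, which is not the statement you are inducting on). Second, the valuation $\V^*$ is built from $\V_{\node{n}'',s''}$, which assigns to each free $Z_{\node{m}_i}$ only the influence set of $s''$ (and of states $<:_{\node{n}''}^*$-below it); for a $t$ unrelated to $s''$, the companion-leaf states that $t$ depends on need not lie in $\V^*(Z_{\node{m}_i})$, so $t \in f(\preimg{(<:_{\node{n}''})}{t})$ is not true in general. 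The repair is to shrink the witness to $X = \preimg{(<:_{\node{n}''}^*)}{s''}$ with $\prec = \restrict{(<:_{\node{n}''}^+)}{X}$, which is exactly the paper's auxiliary statement: for $t <:_{\node{n}''}^* s''$ one gets $t \leq:_{\node{n}''',\node{n}''} s''$ and hence, by pseudo-transitivity, $t \leq:_{\node{n}''',\node{n}} s$, which is what the induction hypothesis needs; and since $\sigma f$ is the union of all (well-)supported sets, this smaller witness still yields $s'' \in \sigma f$.

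A secondary under-specification in the same case: the valuation alignment cannot be discharged by Lemma~\ref{lem:monotonicity-of-dependency-extensions} alone, since that lemma compares $\V_{\node{n}''',t}$ with $\V_{\node{n}'',t}$ for the \emph{same} state $t$, whereas you must compare $\V_{\node{n}''',t}$ (for varying $t$) against the fixed $\V^*[Z_{\node{n}''} := \preimg{(<:_{\node{n}''}^+)}{t}]$. In particular you need $\V_{\node{n}''',t}(Z_{\node{n}''}) \subseteq \preimg{(<:_{\node{n}''}^+)}{t}$, which requires unwinding the definition of $<:_{\node{n}''}$ through its companion leaves, and $\V_{\node{n}''',t}(Z_{\node{m}_i}) \subseteq \V_{\node{n}'',s''}(Z_{\node{m}_i})$ for the remaining fresh variables, which again uses $t <:_{\node{n}''}^* s''$ and pseudo-transitivity --- both available only on the restricted set.
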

\remove{
\begin{proofsketch}
We sketch the proof. Details can be found in the appendix.

Fix successful tableau $\mathbb{T} = \tableauTrl$, with $\tree{T} = (\node{N},\node{r},p,cs)$, and let $\node{n} \in \cnodes{\mathbb{T}}$ be a companion node of $\mathbb{T}$ with $S = \seqst(\node{n})$.  We prove the following stronger result.
\begin{quote}
For every $\node{m} \in D(\node{n})$ and $s \in S$ statements \ref{stmt:necessity-sketch} and \ref{stmt:support-sketch} hold.
\begin{enumerate}[left=\parindent, label=S\arabic*., ref=S\arabic*]
\item\label{stmt:necessity-sketch}
    For all $x$ such that $x \leq:_{\node{m},\node{n}} s$, 
    $x \in \semT{P(\node{m})}{\V_{\node{m},x}}$.
\item\label{stmt:support-sketch}
    If $\node{m} \in \cnodes{\mathbb{T}}$,
    $\node{m}' = cs(\node{m})$
    and
    $x$ satisfies $x \leq:_{\node{m},\node{n}} s$
    then 
    $(S_x, <:_{\node{m},x})$ is a support ordering for $\semfT{Z_{\node{m}}}{P(\node{m}')}{\V_{\node{m},x}}$, where
    $S_x = \preimg{(<:_{\node{m}}^*)}{x}$
    and
    ${<:_{\node{m},x}} = \restrict{(<:_{\node{m}}^+)}{S_x}$.
\end{enumerate}
\end{quote}
From this stronger result, the lemma follows.

We prove the stronger result by tree induction on $\tree{T}_{\node{n}}$, the subtree rooted at $\node{n}$ in $\tree{T}$.
We must show that for all $s \in S$, \ref{stmt:necessity-sketch} and~\ref{stmt:support-sketch} hold for $\node{m}$.  The proof proceeds by a case analysis on $\rho(\node{m})$.
If $\rho(\node{m}) \neq \text{Un}$, $\node{m} \not\in \cnodes{\mathbb{T}}$, and \ref{stmt:support-sketch} vacuously holds for all $s \in S$, so all that needs to be proved is \ref{stmt:necessity-sketch} for all $s \in S$.  So fix $s \in S$; we show the case for $\rho(\node{n}) = \dia{K}$, the other cases where $\rho(\node{n}) \neq \text{Un}$ follow a similar line of reasoning.

If $\rho(\node{n}) = \dia{K}$, $\node{m} = S' \tnxTVD \dia{K} \Phi$ for some $\Phi$, $cs(\node{m}) = \node{m}'$, and $\node{m}' = f(S') \tnxTVD \Phi$, where $f \in S' \to \states{S}$ has the property that $s'' \xrightarrow{K} f(s'')$ for all $s'' \in S'$.
The induction hypothesis ensures that for all $s' \in S$, \ref{stmt:necessity} holds for $\node{m}'$; we must show that \ref{stmt:necessity} holds for $\node{m}$ and $s$.
To this end, let $x$ be such that $x \leq:_{\node{m},\node{n}} s$; we must show that $x \in \semT{P(\node{m})}{\V_{\node{m},x}}$.  Note that $f(x) <_{\node{m}', \node{m}} x$; the pseudo-transitivity of $\leq:_{\node{m},\node{n}}$ guarantees that $f(x)$ satisfies $f(x) \leq:_{\node{m}',\node{n}} s$, and the induction hypothesis then ensures that $f(x) \in \semT{P(\node{m}')}{\V_{\node{m}', f(x)}}$. Corollary~\ref{cor:monotonicity-of-dependency-extensions} guarantees that $f(x) \in \semT{P(\node{m}')}{\V_{\node{m},f(x)}}$, and the semantics of $\dia{K}$ then ensures that $x \in \semT{\dia{K} P(\node{m}')}{\V_{\node{m},x}} = \semT{P(\node{m})}{\V_{\node{m},x}}$, thereby establishing \ref{stmt:necessity}.

In case $\rho(\node{m}) = \text{Un}$, $\node{m} \in \cnodes{\mathbb{T}}$, meaning
$\node{m} = X \tnxTVD U$ where
$U \in \dom(\Delta)$,
$\Delta(U) = \sigma Z.\Phi$,
$cs(\node{m}) = \node{m}'$ and
$\node{m}' = X \tnxTVD \Phi[Z := U]$.  
We must show \ref{stmt:necessity} and \ref{stmt:support} hold for all $s \in S$ for $\node{m}$.
So fix $s \in S$.
We consider \ref{stmt:support} first.  
Let $x \leq:_{\node{m},\node{n}} s$, and define $f_{\node{m},x} = \semfT{Z_\node{m}}{P(\node{m}')}{\V_{\node{m},x}}$.  
We must show that $(S_x, <:_{\node{m},x})$ is a support ordering for $f_{\node{m},x}$.  
Following Definition~\ref{def:support-ordering} it suffices to prove that for every $x' \in S_x, x' \in f_{\node{m},x}(\preimg{(<:_{\node{m},x})}{x'})$.  
So fix $x' \in S_x$.  
By definition of $S_x$ this means that $x' <:_{\node{m}}^* x$.
Since $x' <_{\node{m}',\node{m}} x'$, it follows that $x' \leq:_{\node{m}',\node{m}} x'$ and, due to the pseudo-transitivity Lemma~\ref{lem:pseudo-transitivity-of-support-dependency-ordering}, that $x' \leq:_{\node{m}',\node{n}} s$.  
From the induction hypothesis, we know that \ref{stmt:necessity} holds for $\node{m}'$ and $x'$, meaning $x' \in \semT{P(\node{m}')}{\V_{\node{m}',x'}}$.
To complete this part of the proof it suffices to establish that
$\semT{P(\node{m}')}{\V_{\node{m}',x'}} \subseteq f_{\node{m},x}(\preimg{({<:_{\node{m},x}})}{x'})$.
We begin by noting that since $x' <_{\node{m}',\node{m}} x'$ and all occurrences of any $Z \in \Var_{\mathbb{T}}$ in $P(\node{m}')$ are positive, Lemma~\ref{lem:monotonicity-of-dependency-extensions}
ensures that $\semT{P(\node{m}')}{\V_{\node{m}',x'}} \subseteq \semT{P(\node{m}')}{\V_{\node{m},x'}}$.
It therefore suffices to show that $\semT{P(\node{m}')}{\V_{\node{m},x'}} \subseteq f_{\node{m},x}(\preimg{({<:_{\node{m},x}})}{x'})$.
From the definition of $f_{\node{m},x}$ we have that
\[
f_{\node{m},x}(\preimg{({<:_{\node{m},x}})}{x'})
=
\semT{P(\node{m}')}{\V_{\node{m},x}[Z_\node{m} := \preimg{(<:_{\node{m},x})}{x'}]}.
\]
Because every $Z \in \Var_{\mathbb{T}}$ appearing in $P(\node{m}')$ appears only positively, the fact that $\semT{P(\node{m}')}{\V_{\node{m},x'}} \subseteq f_{\node{m},x}(\preimg{({<:_{\node{m},x}})}{x'})$
follows from the following two observations.
\begin{enumerate}
\item
    For all $Z \in \Var \setminus \Var_{\mathbb{T}}$, 
    $\V_{\node{m},x'}(Z) = \left(\V_{\node{m},x}[Z_\node{m} := \preimg{(<:_{\node{m},x})}{x'}] \right) (Z)$.
\item
    For all $Z \in \Var_{\mathbb{T}}$,
    $\V_{\node{m},x'}(Z) \subseteq \left(\V_{\node{m},x}[Z_\node{m} := \preimg{(<:_{\node{m},x})}{x'}]\right) (Z)$.
\end{enumerate}

To finish the proof we now need to show that statement~\ref{stmt:necessity} holds for companion node $\node{m}$ and $s \in S$.  So fix $x$ such that $x \leq:_{\node{m},\node{n}} s$; we must show that $x \in \semT{P(\node{m})}{\V_{\node{m},x}}$.  We know from the definition of $P$ that $P(\node{m}) = \sigma Z_\node{m}.P(\node{m}')$.  
If $\sigma = \nu$ then as statement~\ref{stmt:support} holds we know that $(S_x, <:_{\node{m},x})$ is a support ordering for $\semfT{Z_{\node{m}}}{P(\node{m}')}{\V_{\node{m},x}}$.  Therefore, $S_x$ is supported for 
$\semfT{Z_{\node{m}}}{P(\node{m}')}{\V_{\node{m},x}}$, and Corollary~\ref{cor:greatest-fixpoint} ensures that $S_x \subseteq \nu \left(\semfT{Z_{\node{m}}}{P(\node{m}')}{\V_{\node{m},x}}\right)$.  As $x \in S_x$ and 
\[\nu \left(\semfT{Z_{\node{m}}}{P(\node{m}')}{\V_{\node{m},x}}\right) = \semT{P(\node{m})}{\V_{\node{m},x}},
\]
the result follows.  The case where $\sigma = \mu$ follows a similar line of reasoning, but uses the fact that $<:_{\node{m},x}$ is well-founded.

 \qedhere
\end{proofsketch}
}
\begin{proof}
Fix successful tableau $\mathbb{T} = \tableauTrl$, with $\tree{T} = (\node{N},\node{r},p,cs)$, and let $\node{n} \in \cnodes{\mathbb{T}}$ be a companion node of $\mathbb{T}$ with $S = \seqst(\node{n})$.  We prove the following stronger result.
\begin{quote}
For every $\node{m} \in D(\node{n})$ and $s \in S$ statements \ref{stmt:necessity} and \ref{stmt:support} hold.
\begin{enumerate}[left=\parindent, label=S\arabic*., ref=S\arabic*]
\item\label{stmt:necessity}
    For all $x$ such that $x \leq:_{\node{m},\node{n}} s$, 
    $x \in \semT{P(\node{m})}{\V_{\node{m},x}}$.
\item\label{stmt:support}
    If $\node{m} \in \cnodes{\mathbb{T}}$,
    $\node{m}' = cs(\node{m})$
    and
    $x$ satisfies $x \leq:_{\node{m},\node{n}} s$
    then 
    $(S_x, <:_{\node{m},x})$ is a support ordering for $\semfT{Z_{\node{m}}}{P(\node{m}')}{\V_{\node{m},x}}$, where
    $S_x = \preimg{(<:_{\node{m}}^*)}{x}$
    and
    ${<:_{\node{m},x}} = \restrict{(<:_{\node{m}}^+)}{S_x}$.
\end{enumerate}
\end{quote}
To see that proving \ref{stmt:necessity} and~\ref{stmt:support} for all $\node{m} \in D(\node{n})$ and $s \in S$ establishes the lemma, first note that as $\node{n} \in \cnodes{\mathbb{T}}$ and $\node{n} \in D(\node{n})$, such a proof would imply that for all $x \in S$, $(S_x, <:_{\node{n},x})$ is a support ordering for $\semfT{Z_{\node{n}}}{P(\node{n}')}{\V_{\node{n},x}}$. 
We also note that for all $x \in S$, 
$\V_{\node{n},x}(Z) \subseteq \V_{\node{n}}(Z)$ if $Z \in \Var_{\mathbb{T}}$ and
$\V_{\node{n},x}(Z) = \V_{\node{n}}(Z)$ otherwise;
these imply that for all $x \in S$ and $S' \subseteq \states{S}$, $\semfT{Z_{\node{n}}}{P(\node{n}')}{\V_{\node{n},x}}(S') \subseteq \semfT{Z_{\node{n}}}{P(\node{n}')}{\V_{\node{n}}}(S')$.  
Therefore, for all $x \in S$ $(S_x, <:_{\node{n}.x})$ is also a support ordering for $\semfT{Z_{\node{n}}}{P(\node{n}')}{\V_{\node{n}}}$.  
It is easy to see that $\bigcup_{x \in S} S_x = S$ and also that $\bigcup_{x \in S} {<:_{\node{n},x}} = {<:_{\node{n}}^+}$; 
Lemma~\ref{lem:unions-of-support-orderings} then guarantees that $(S, <:_{\node{n}}^+)$ is a support ordering for $\semfT{Z_{\node{n}}}{P(\node{n}')}{\V_{\node{n}}}$, which is what the lemma asserts.

We prove the stronger result by tree induction on $\tree{T}_{\node{n}}$, the subtree rooted at $\node{n}$ in $\tree{T}$.
(Recall that, per Definition~\ref{def:subtree}, the set of nodes in $\tree{T}_{\node{n}}$ is $D(\node{n})$.)
So pick $\node{m} \in D(\node{n})$; the induction hypothesis states that for all $s \in S$, \ref{stmt:necessity} and \ref{stmt:support} hold for all $\node{m}' \in c(\node{m})$.  We must show that for all $s \in S$, \ref{stmt:necessity} and ~\ref{stmt:support} hold for $\node{m}$.  The proof proceeds by a case analysis on $\rho(\node{m})$.  We first consider the cases in which $\rho(\node{m}) \neq \text{Un}$, meaning $\node{m} \not\in \cnodes{\mathbb{T}}$.  In each of these cases \ref{stmt:support} vacuously holds for all $s \in S$, so all that needs to be proved is \ref{stmt:necessity} for all $s \in S$.  So fix $s \in S$; the case analysis for $\rho(\node{n}) \neq \text{Un}$ is as follows.
\begin{description}
\item[$\rho(\node{m}) \div$.]
    In this case $\rho(\node{m})$ is undefined, and $\node{m}$ must be a leaf.  To prove \ref{stmt:necessity} holds for $s$ we first note that since $\mathbb{T}$ is successful it follows that $\node{m}$ is a successful leaf, meaning that either $\node{m}$ is a free leaf or a $\sigma$-leaf (successful tableaux cannot contain any diamond leaves, so this leaf type need not be considered).
    There are two subcases to consider.
    In the first, $\node{m}$ is a free leaf.
    In this case, by definition $P(\node{m}) = \seqfm(\node{m})$, where $\seqfm(\node{m})$ is either $Z$ or $\lnot Z$ for some $Z \in \Var \setminus \left(\Var_{\mathbb{T}} \cup \dom(\seqdl(\node{m}))\right)$.  
    For any such $Z$ and $x \in \seqst(\node{m})$ Lemma~\ref{lem:monotonicity-of-dependency-extensions} ensures that $\V_{\node{m},x}(Z) = \V(Z)$, whence
    \[
    \semT{P(\node{m})}{\V_{\node{m},x}} 
    = \semTV{\seqfm(\node{m})} 
    = \sem{\node{m}}{}{},
    \]
    and since $\node{m}$ is successful, we have that $x \in \semT{P(\node{m})}{\V_{\node{m},x}}$ for all $x \in \seqst(\node{m})$, and thus for all $x$ such that $x \leq:_{\node{m},\node{n}} s$.
    
    In the second subcase, $\node{m}$ is a $\sigma$-leaf, meaning it is a companion leaf for some $\node{m}_i \in \cnodes{\mathbb{T}}$.  Now fix $x$ such that $x \leq:_{\node{m},\node{n}} s$; we must show that $x \in \semT{P(\node{m})}{\V_{\node{m},x}}$. 
    In this case $P(\node{m}) = Z_{\node{m}_i}$, and $\semT{P(\node{m})}{\V_{\node{m},x}} = \V_{\node{m},x}(Z_{\node{m}_i}) = S_{\node{m},x,\node{m}_i}$. Since $\cleaves{\node{m}_i,\node{m}} = \{\node{m}\}$ in this case, $S_{\node{m},x,\node{m}_i} = \preimg{(\leq:_{\node{m},\node{m}})}{x} = \{x\}$.  As $x \in \{x\}$ the desired result holds.
\item[$\rho(\node{m}) = \land$.]
    In this case $\node{m} = S' \tnxTVD \Phi_1 \land \Phi_2$ for some $\Phi_1$ and $\Phi_2$, $cs(\node{m}) = \node{m}'_1 \node{m}'_2$, and $\node{m}'_i = S' \tnxTVD \Phi_i$ for $i = 1,2$. The induction hypothesis ensures that for all $s' \in S$, \ref{stmt:necessity} holds for each $\node{m}'_i$; we must show that \ref{stmt:necessity} holds for $\node{m}$ and $s$.
    To this end, let $x$ be such that $x \leq:_{\node{m},\node{n}} s$; we must show that $x \in \semT{P(\node{m})}{\V_{\node{m},x}}$.  Note that $x <_{\node{m}'_i, \node{m}} x$ for $i = 1,2$; the pseudo-transitivity of $\leq:_{\node{m},\node{n}}$ guarantees that $x \leq:_{\node{m}'_i,\node{n}} s$, and the induction hypothesis then ensures that $x \in \semT{P(\node{m}'_i)}{\V_{\node{m}'_i, x}}$ for $i = 1,2$.  Corollary~\ref{cor:monotonicity-of-dependency-extensions} guarantees that $x \in \semT{P(\node{m}'_i)}{\V_{\node{m},x}}$ for $i = 1,2$, and the semantics of $\land$ then ensures that $x \in \semT{P(\node{m}_1) \land P(\node{m}_2)}{\V_{\node{m},x}} = \semT{P(\node{m})}{\V_{\node{m},x}}$, thereby establishing \ref{stmt:necessity}.
\item[$\rho(\node{m}) = \lor$.]
    In this case $\node{m} = S' \tnxTVD \Phi_1 \lor \Phi_2$ for some $\Phi_1$ and $\Phi_2$, $cs(\node{m}) = \node{m}'_1 \node{m}'_2$, and $\node{m}'_i = S_i' \tnxTVD \Phi_i$ for $i = 1,2$, with $S' = S'_1 \cup S'_2$. The induction hypothesis ensures that for all $s' \in S$, \ref{stmt:necessity} holds for each $\node{m}'_i$; we must show that \ref{stmt:necessity} holds for $\node{m}$ and $s$.
    To this end, let $x$ be such that $x \leq:_{\node{m},\node{n}} s$; we must show that $x \in \semT{P(\node{m})}{\V_{\node{m},x}}$.  Note that $x <_{\node{m}'_i, \node{m}} x$ for at least one of $i = 1$ or $i = 2$; the pseudo-transitivity of $\leq:_{\node{m},\node{n}}$ guarantees that $x \leq:_{\node{m}'_i,\node{n}} s$, and the induction hypothesis then ensures that $x \in \semT{P(\node{m}'_i)}{\V_{\node{m}'_i, x}}$ for at least one of $i = 1$ or $i = 2$.  Corollary~\ref{cor:monotonicity-of-dependency-extensions} guarantees that $x \in \semT{P(\node{m}'_i)}{\V_{\node{m},x}}$ for at least one of $i = 1$ or $i = 2$, and the semantics of $\lor$ then ensures that $x \in \semT{P(\node{m}_1) \lor P(\node{m}_2)}{\V_{\node{m},x}} = \semT{P(\node{m})}{\V_{\node{m},x}}$, thereby establishing \ref{stmt:necessity}.
\item[$\rho(\node{m}) = [K{]}$.]
    In this case $\node{m} = S' \tnxTVD [K] \Phi$ for some $\Phi$, $cs(\node{m}) = \node{m}'$, and $\node{m}' = S'' \tnxTVD \Phi$, where $S'' = \{s'' \in \states{S} \mid \exists s' \in S'. s' \xrightarrow{K} s'' \}$. The induction hypothesis ensures that for all $s' \in S$, \ref{stmt:necessity} holds for $\node{m}'$; we must show that \ref{stmt:necessity} holds for $\node{m}$ and $s$.
    To this end, let $x$ be such that $x \leq:_{\node{m},\node{n}} s$; we must show that $x \in \semT{P(\node{m})}{\V_{\node{m},x}}$.  Note that $x' <_{\node{m}', \node{m}} x$ for each $x'$ such that $x \xrightarrow{K} x'$; the pseudo-transitivity of $\leq:_{\node{m},\node{n}}$ guarantees that each such $x'$ satisfies $x' \leq:_{\node{m}',\node{n}} s$, and the induction hypothesis then ensures that each such $x' \in \semT{P(\node{m}')}{\V_{\node{m}', x'}}$. Corollary~\ref{cor:monotonicity-of-dependency-extensions} guarantees that each such $x' \in \semT{P(\node{m}')}{\V_{\node{m},x}}$, and the semantics of $[K]$ then ensures that $x \in \semT{[K] P(\node{m}')}{\V_{\node{m},x}} = \semT{P(\node{m})}{\V_{\node{m},x}}$, thereby establishing \ref{stmt:necessity}.
\item[$\rho(\node{m}) = (\dia{K},f)$.]
    In this case $\node{m} = S' \tnxTVD \dia{K} \Phi$ for some $\Phi$, $cs(\node{m}) = \node{m}'$, $\node{m}' = f(S') \tnxTVD \Phi$, and $f \in S' \to \states{S}$ has the property that $s'' \xrightarrow{K} f(s'')$ for all $s'' \in S'$.
    The induction hypothesis ensures that for all $s' \in S$, \ref{stmt:necessity} holds for $\node{m}'$; we must show that \ref{stmt:necessity} holds for $\node{m}$ and $s$.
    To this end, let $x$ be such that $x \leq:_{\node{m},\node{n}} s$; we must show that $x \in \semT{P(\node{m})}{\V_{\node{m},x}}$.  Note that $f(x) <_{\node{m}', \node{m}} x$; the pseudo-transitivity of $\leq:_{\node{m},\node{n}}$ guarantees that $f(x)$ satisfies $f(x) \leq:_{\node{m}',\node{n}} s$, and the induction hypothesis then ensures that $f(x) \in \semT{P(\node{m}')}{\V_{\node{m}', f(x)}}$. Corollary~\ref{cor:monotonicity-of-dependency-extensions} guarantees that $f(x) \in \semT{P(\node{m}')}{\V_{\node{m},f(x)}}$, and the semantics of $\dia{K}$ then ensures that $x \in \semT{\dia{K} P(\node{m}')}{\V_{\node{m},x}} = \semT{P(\node{m})}{\V_{\node{m},x}}$, thereby establishing \ref{stmt:necessity}.
\item[$\rho(\node{m}) = \sigma Z$.]
    In this case $\node{m} = S' \tnxTVD \sigma Z.\Phi$ for some $\Phi$, $cs(\node{m}) = \node{m}'$, and $\node{m}' = S' \tnxTV{\Delta'} U$, where $\Delta' = \Delta \cdot (U = \sigma Z.\Phi)$.
    The induction hypothesis ensures that for all $s' \in S$, \ref{stmt:necessity} holds for $\node{m}'$; we must show that \ref{stmt:necessity} holds for $\node{m}$ and $s$.
    To this end, let $x$ be such that $x \leq:_{\node{m},\node{n}} s$; we must show that $x \in \semT{P(\node{m})}{\V_{\node{m},x}}$.  Note that $x <_{\node{m}', \node{m}} x$; the pseudo-transitivity of $\leq:_{\node{m},\node{n}}$ guarantees that $x$ satisfies $x \leq:_{\node{m}',\node{n}} s$, and the induction hypothesis then ensures that $x \in \semT{P(\node{m}')}{\V_{\node{m}', x}}$. Corollary~\ref{cor:monotonicity-of-dependency-extensions} guarantees that $x \in \semT{P(\node{m}')}{\V_{\node{m},x}}$, and the semantics of $U$ and $\sigma Z.\Phi$, and Lemma~\ref{lem:companion-node-formulas-and-semantics}, then ensure that $x \in \semT{P(\node{m})}{\V_{\node{m},x}}$, thereby establishing \ref{stmt:necessity}.
\item[$\rho(\node{m}) = \text{Thin}$.]
    In this case $\node{m} = S' \tnxTVD \Phi$ for some $\sigma, Z$ and $\Phi$, $cs(\node{m}) = \node{m}'$, and $\node{m}' = S'' \tnxTVD \Phi$, where $S' \subseteq S''$.
    The induction hypothesis ensures that for all $s' \in S$, \ref{stmt:necessity} holds for $\node{m}'$; we must show that \ref{stmt:necessity} holds for $\node{m}$ and $s$.
    To this end, let $x$ be such that $x \leq:_{\node{m},\node{n}} s$; we must show that $x \in \semT{P(\node{m})}{\V_{\node{m},x}}$.  Note that $x <_{\node{m}', \node{m}} x$; the pseudo-transitivity of $\leq:_{\node{m},\node{n}}$ guarantees that $x$ satisfies $x \leq:_{\node{m}',\node{n}} s$, and the induction hypothesis then ensures that $x \in \semT{P(\node{m}')}{\V_{\node{m}', x}}$. Corollary~\ref{cor:monotonicity-of-dependency-extensions} guarantees that $x \in \semT{P(\node{m}')}{\V_{\node{m},x}}$, and as $P(\node{m}) = P(\node{m}')$ in this case, \ref{stmt:necessity} holds.
\end{description}
The final case to be considered is $\rho(\node{m}) = \text{Un}$; in this case, $\node{m} \in \cnodes{\mathbb{T}}$, meaning
$\node{m} = X \tnxTVD U$ where
$U \in \dom(\Delta)$,
$\Delta(U) = \sigma Z.\Phi$,
$cs(\node{m}) = \node{m}'$ and
$\node{m}' = X \tnxTVD \Phi[Z := U]$.  
The induction hypothesis guarantees that \ref{stmt:necessity} and \ref{stmt:support} hold for all $s \in S$ for $\node{m}'$; we must show \ref{stmt:necessity} and \ref{stmt:support} hold for all $s \in S$ for $\node{m}$.
So fix $s \in S$.
We consider \ref{stmt:support} first.  
Let $x \leq:_{\node{m},\node{n}} s$, and define $f_{\node{m},x} = \semfT{Z_\node{m}}{P(\node{m}')}{\V_{\node{m},x}}$.  
We must show that $(S_x, <:_{\node{m},x})$ is a support ordering for $f_{\node{m},x}$.  
Following Definition~\ref{def:support-ordering} it suffices to prove that for every $x' \in S_x, x' \in f_{\node{m},x}(\preimg{(<:_{\node{m},x})}{x'})$.  
So fix $x' \in S_x$.  
By definition of $S_x$ this means that $x' <:_{\node{m}}^* x$.
Since $x' <_{\node{m}',\node{m}} x'$, it follows that $x' \leq:_{\node{m}',\node{m}} x'$ and, due to the pseudo-transitivity Lemma~\ref{lem:pseudo-transitivity-of-support-dependency-ordering}, that $x' \leq:_{\node{m}',\node{n}} s$.  
From the induction hypothesis, we know that \ref{stmt:necessity} holds for $\node{m}'$ and $x'$, meaning $x' \in \semT{P(\node{m}')}{\V_{\node{m}',x'}}$.
To complete this part of the proof it suffices to establish that
$\semT{P(\node{m}')}{\V_{\node{m}',x'}} \subseteq f_{\node{m},x}(\preimg{({<:_{\node{m},x}})}{x'})$.
We begin by noting that since $x' <_{\node{m}',\node{m}} x'$ and all occurrences of any $Z \in \Var_{\mathbb{T}}$ in $P(\node{m}')$ are positive, Lemma~\ref{lem:monotonicity-of-dependency-extensions}
ensures that $\semT{P(\node{m}')}{\V_{\node{m}',x'}} \subseteq \semT{P(\node{m}')}{\V_{\node{m},x'}}$.
It therefore suffices to show that $\semT{P(\node{m}')}{\V_{\node{m},x'}} \subseteq f_{\node{m},x}(\preimg{({<:_{\node{m},x}})}{x'})$.
From the definition of $f_{\node{m},x}$ we have that
\[
f_{\node{m},x}(\preimg{({<:_{\node{m},x}})}{x'})
=
\semT{P(\node{m}')}{\V_{\node{m},x}[Z_\node{m} := \preimg{(<:_{\node{m},x})}{x'}]}.
\]
Because every $Z \in \Var_{\mathbb{T}}$ appearing in $P(\node{m}')$ appears only positively, to establish that $\semT{P(\node{m}')}{\V_{\node{m},x'}} \subseteq f_{\node{m},x}(\preimg{({<:_{\node{m},x}})}{x'})$
it suffices to show the following.
\begin{enumerate}
\item\label{prop:eq}
    For all $Z \in \Var \setminus \Var_{\mathbb{T}}$, 
    $\V_{\node{m},x'}(Z) = \left(\V_{\node{m},x}[Z_\node{m} := \preimg{(<:_{\node{m},x})}{x'}] \right) (Z)$.
\item\label{prop:subset}
    For all $Z \in \Var_{\mathbb{T}}$,
    $\V_{\node{m},x'}(Z) \subseteq \left(\V_{\node{m},x}[Z_\node{m} := \preimg{(<:_{\node{m},x})}{x'}]\right) (Z)$.
\end{enumerate}
Property~\ref{prop:eq} follows immediately from the fact that for all $Z \in \Var \setminus \Var_{\mathbb{T}}$,
\[
\V(Z)
= \V_{\node{m},x'}(Z)
= \V_{\node{m},x}(Z)
= \left(\V_{\node{m},x}[Z_\node{m} := \preimg{(<:_{\node{m},x})}{x'}]\right) (Z).
\]
To establish Property~\ref{prop:subset}, fix $Z \in \Var_{\mathbb{T}}$.  There are two sub-cases to consider.  In the first, $Z = Z_{\node{m}}$.  In this case 
\[
\left(\V_{\node{m},x}[Z_\node{m} := \preimg{(<:_{\node{m},x})}{x'}]\right) (Z)
= \preimg{(<:_{\node{m},x})}{x'}
= \preimg{(<:_\node{m}^+)}{x'};
\]
we must show that $\V_{\node{m},x'} (Z) \subseteq \preimg{(<:_\node{m}^+)}{x'}$, i.e.\/ that $x'' <:_\node{m}^+ x'$ for all $x'' \in \V_{\node{m},x'} (Z)$.  So fix $x'' \in \V_{\node{m},x'} (Z)$.
From Definition~\ref{def:support-extension-of-valuation}, 
\[
\V_{\node{m},x'} (Z) = \bigcup_{\node{m}'' \in \cleaves{\mathbb{T}}(\node{m})} \preimg{(\leq:_{\node{m}'',\node{m}})}{x'}.
\]
Thus, there must exist leaf $\node{m}'' \in \cleaves{\mathbb{T}}(\node{m})$ such that $x'' \leq:_{\node{m}'',\node{m}} x'$. 
From the definition of $\leq:_{\node{m}'',\node{m}}$ it follows that there is $y$ such that $x'' <:_{\node{m}'',\node{m}} y$ and $y <:_{\node{m}}^* x'$; since $\node{m}''$ is a companion leaf of $\node{m}$ we also have that $x'' <:_{\node{m}} y$, whence $x'' <:_{\node{m}}^+ x'$.

In the second subcase, $Z \in \Var_{\mathbb{T}}$ but $Z \neq Z_{\node{m}}$.  In this case, we know that $\left(\V_{\node{m},x}[Z_\node{m} := \preimg{(<:_{\node{m},x})}{x'}]\right) (Z) = \V_{\node{m},x} (Z)$; we must therefore show that $\V_{\node{m},x'}(Z) \subseteq \V_{\node{m},x}(Z)$.  Since $Z \in \Var_{\mathbb{T}}$ it follows that there is a companion node $\node{k} \in \cnodes{\mathbb{T}}$ such that $Z = Z_{\node{k}}$ and the following hold.
\begin{align*}
\V_{\node{m},x'} (Z)
&= \bigcup_{\node{k}' \in \cleaves{\node{k},\node{m}}} \preimg{(\leq:_{\node{k}',\node{m}})}{x'}
\\
\V_{\node{m},x} (Z)
&= \bigcup_{\node{k}' \in \cleaves{\node{k},\node{m}}} \preimg{(\leq:_{\node{k}',\node{m}})}{x}
\end{align*}
Now assume $x'' \in \V_{\node{m},x'} (Z)$; we must show $x'' \in \V_{\node{m},x} (Z)$.  Since $x'' \in \V_{\node{m},x'} (Z)$ there must exist leaf node $\node{k}' \in \cleaves{\node{k},\node{m}}$ such that $x'' \leq:_{\node{k}', \node{m}} x'$; if we can show that $x'' \leq:_{\node{k}',\node{m}} x$ then we will have established that $x'' \in \V_{\node{m},x} (Z)$.  Recall that $x' <:_{\node{m}}^* x$; this implies that $x' \leq:_{\node{m},\node{m}} x$, and the pseudo-transitivity of $\leq:_{\node{k}',\node{m}}$ then ensures that $x'' \leq:_{\node{k}',\node{m}} x$.

To finish the proof we now need to show that statement~\ref{stmt:necessity} holds for companion node $\node{m}$ and $s \in S$.  So fix $x$ such that $x \leq:_{\node{m},\node{n}} s$; we must show that $x \in \semT{P(\node{m})}{\V_{\node{m},x}}$.  We know from the definition of $P$ that $P(\node{m}) = \sigma Z_\node{m}.P(\node{m}')$.  
If $\sigma = \nu$ then as statement~\ref{stmt:support} holds we know that $(S_x, <:_{\node{m},x})$ is a support ordering for $\semfT{Z_{\node{m}}}{P(\node{m}')}{\V_{\node{m},x}}$.  Therefore, $S_x$ is supported for 
$\semfT{Z_{\node{m}}}{P(\node{m}')}{\V_{\node{m},x}}$, and Corollary~\ref{cor:greatest-fixpoint} ensures that $S_x \subseteq \nu \left(\semfT{Z_{\node{m}}}{P(\node{m}')}{\V_{\node{m},x}}\right)$.  As $x \in S_x$ and 
\[\nu \left(\semfT{Z_{\node{m}}}{P(\node{m}')}{\V_{\node{m},x}}\right) = \semT{P(\node{m})}{\V_{\node{m},x}},
\]
the result follows.  
Now assume that $\sigma = \mu$.  In this case, since $<:_{\node{m}}$ is well-founded it follows from Lemma~\ref{lem:transitive-closure-well-founded} that $<:_{\node{m}}^+$ is as well, as is $<:_{\node{m},x}$.  Therefore $S_x$ is well-supported for $\semfT{Z_{\node{m}}}{P(\node{m}')}{\V_{\node{m},x}}$, and Corollary~\ref{cor:least-fixpoint} guarantees that $S_x \subseteq \mu \left(\semfT{Z_{\node{m}}}{P(\node{m}')}{\V_{\node{m},x}}\right)$.  As $x \in S_x$ and $\mu \left(\semfT{Z_{\node{m}}}{P(\node{m}')}{\V_{\node{m},x}}\right) = \semT{P(\node{m})}{\V_{\node{m},x}}$, the result follows.
\qedhere
\end{proof}

\noindent
The next corollary specializes the previous lemma to the case of \emph{top-level companion nodes} in successful tableau $\mathbb{T}$.  A companion node $\node{n} \in \cnodes{\mathbb{T}}$ is top-level iff $A_s(\node{n}) \cap \cnodes{\mathbb{T}} = \emptyset$; recalling that $A_s(\node{n})$ is the set of strict ancestors of $\node{n}$ in $\mathbb{T}$, a companion node is top-level iff it has no strict ancestors that are companion nodes.  It is straightforward to see that if $\node{n}$ is a top-level companion node, then $\node{n} = S \tnxTV{(U = \sigma Z.\Phi)} U$ for some $\sigma$, $Z$, $\Phi$ and $U$; note that the definition list of $\node{n}$ contains only the single element $(U = \sigma Z.\Phi)$.  We have the following.

\begin{corollary}[Support orderings for top-level companion nodes]
\label{cor:support-orderings-for-top-level-companion-nodes}
Let $\mathbb{T} = \tableauTrl$ be a successful tableau, with $\node{n} \in \cnodes{\mathbb{T}}$ a top-level companion of $\mathbb{T}$ and $\node{n}'$ the child of $\node{n}$ in $\tree{T}$.  Also let $S = \seqst(\node{n})$.  Then $(S, <:_{\node{n}}^+)$ is a support ordering for $\semfTV{Z_{\node{n}}}{P(\node{n}')}$.
\end{corollary}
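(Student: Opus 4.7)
The plan is to reduce the corollary to Lemma~\ref{lem:support-ordering-for-companion-nodes} by showing that the influence extension $\V_{\node{n}}$ appearing in that lemma's conclusion can be replaced by $\V$ without altering the semantic function in question. Concretely, Lemma~\ref{lem:support-ordering-for-companion-nodes} already gives that $(S, <:_{\node{n}}^+)$ is a support ordering for $\semfT{Z_{\node{n}}}{P(\node{n}')}{\V_{\node{n}}}$, so the task reduces to establishing the equality
\[
\semfT{Z_{\node{n}}}{P(\node{n}')}{\V_{\node{n}}} = \semfTV{Z_{\node{n}}}{P(\node{n}')},
\]
which, by Definition~\ref{def:formula-functions}, amounts to verifying that $\semT{P(\node{n}')}{\V_{\node{n}}[Z_{\node{n}} := S']} = \semT{P(\node{n}')}{\V[Z_{\node{n}} := S']}$ for every $S' \subseteq \states{S}$.

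To do this I would first observe that, by Definition~\ref{def:support-extension-of-valuation}, $\V$ and $\V_{\node{n}}$ coincide on $\Var \setminus \Var_{\mathbb{T}}$, so the only potential disagreement concerns variables $Z_{\node{m}}$ with $\node{m} \in \cnodesT \setminus \{\node{n}\}$. I would then argue that no such $Z_{\node{m}}$ occurs free in $P(\node{n}')$. Inspecting Definition~\ref{def:node-formulas}, a variable $Z_{\node{m}}$ is introduced into $P$ only (a) at companion leaves of $\node{m}$ or (b) via the binder $\sigma Z_{\node{m}}$ generated at $\node{m}$ itself. Hence a free occurrence of $Z_{\node{m}}$ in $P(\node{n}')$ requires some companion leaf of $\node{m}$ to lie in $D(\node{n}')$ while $\node{m}$ itself does not, i.e.\/ $\node{m}$ is a strict ancestor of $\node{n}'$ and therefore either equals $\node{n}$ or is a strict ancestor of $\node{n}$. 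Top-levelness of $\node{n}$ forbids the latter via $A_s(\node{n}) \cap \cnodesT = \emptyset$, and $\node{m} \neq \node{n}$ forbids the former; so no such $\node{m}$ exists. Consequently the two valuations differ only on variables that either do not appear free in $P(\node{n}')$ or are overridden to $S'$ by the update $[Z_{\node{n}} := S']$, which gives the desired semantic equality.

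I expect the main obstacle to be the occurrence analysis in the middle paragraph: one has to be careful, when tracing Definition~\ref{def:node-formulas} inductively, to confirm that the only way a given $Z_{\node{m}}$ can become free in $P(\node{n}')$ is through a companion-leaf descendant of $\node{n}'$, and that the top-level hypothesis on $\node{n}$ is precisely what excludes all companion nodes other than $\node{n}$ from contributing such occurrences. Once this is in place, the corollary follows immediately by combining the semantic equality with Lemma~\ref{lem:support-ordering-for-companion-nodes}.
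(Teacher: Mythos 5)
Your proposal is correct and follows essentially the same route as the paper: the paper's proof likewise observes that top-levelness of $\node{n}$ forces $Z_{\node{n}}$ to be the only variable of $\Var_{\mathbb{T}}$ that can occur free in $P(\node{n}')$, concludes that $\semfT{Z_{\node{n}}}{P(\node{n}')}{\V_{\node{n}}}$ and $\semfTV{Z_{\node{n}}}{P(\node{n}')}$ coincide, and then invokes Lemma~\ref{lem:support-ordering-for-companion-nodes}. Your occurrence analysis in the middle paragraph is just a more explicit spelling-out of the step the paper states in one sentence.
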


\begin{proof}
Follows from the fact that since $\node{n}$ is top-level, the only variable in $\Var_{\mathbb{T}}$ that can appear free in $P(\node{n}')$ is $Z_\node{n}$.  Lemma~\ref{cor:monotonicity-of-dependency-extensions} thus guarantees that for any $S'$, $\semfT{Z_\node{n}}{P(\node{n}')}{\V_{\node{n}}}(S') = \semfTV{Z_\node{n}}{P(\node{n}')}(S')$. Lemma~\ref{lem:support-ordering-for-companion-nodes} then establishes the corollary.
\end{proof}

\subsection{Soundness}\label{subsec:soundness}

We now prove that our proof system is sound by establishing that the root sequent of every successful tableau is valid.

\begin{theorem}[Soundness of mu-calculus proof system]\label{thm:soundness}
Fix LTS $(\states{S},\to)$ of sort $\Sigma$ and valuation $\V$, and let $\mathbb{T} = \tableauTrl$ be a successful tableau for sequent $\seq{s} \in \Seq{\T}{\Var}$, where $\seqdl(\seq{s}) = \emptyL$.  Then $\seq{s}$ is valid.
\end{theorem}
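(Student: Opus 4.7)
The plan is to prove the stronger claim that every node in $\mathbb{T}$ is valid, from which the theorem follows by specializing to the root. The argument proceeds in two phases: first, establishing validity of every companion node, and second, propagating validity through the rest of the tree via a standard tree induction.

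In the first phase, I will induct from the outside in on the nesting of companion nodes, showing that if every strict companion-node ancestor of $\node{n} \in \cnodesT$ is valid, then so is $\node{n}$. Lemma~\ref{lem:support-ordering-for-companion-nodes} supplies a support ordering $(\seqst(\node{n}), <:_\node{n}^+)$ for $\semfT{Z_\node{n}}{P(\node{n}')}{\V_\node{n}}$, where $\node{n}' = cs(\node{n})$. When $\Delta(\seqfm(\node{n}))$ is a $\nu$-formula, Corollary~\ref{cor:greatest-fixpoint} gives $\seqst(\node{n}) \subseteq \nu\, \semfT{Z_\node{n}}{P(\node{n}')}{\V_\node{n}} = \semT{P(\node{n})}{\V_\node{n}}$; when it is a $\mu$-formula, successfulness of $\mathbb{T}$ ensures $<:_\node{n}$ is well-founded, hence so is $<:_\node{n}^+$ by Lemma~\ref{lem:transitive-closure-well-founded}, and Corollary~\ref{cor:least-fixpoint} yields the analogous inclusion. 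To conclude $\seqst(\node{n}) \subseteq \semop{\node{n}}$, I will use Lemma~\ref{lem:companion-node-formulas-and-semantics} to write $\semop{\node{n}} = \semT{P(\node{n})}{\V'}$ for a consistent $\V'$, and then invoke monotonicity of the mu-calculus in positively-occurring variables (Lemma~\ref{lem:mu-calculus-semantic-monotonicity}) to reduce $\semT{P(\node{n})}{\V_\node{n}} \subseteq \semT{P(\node{n})}{\V'}$ to a pointwise comparison of valuations. On variables outside $\Var_{\mathbb{T}}$ the two valuations agree with $\V$; on each fresh $Z_\node{m}$ that is free in $P(\node{n})$, which necessarily corresponds to a strict companion-node ancestor $\node{m}$ of $\node{n}$, the definition of $\V_\node{n}$ together with the companion-leaf condition in Definition~\ref{def:tableau} gives $\V_\node{n}(Z_\node{m}) \subseteq \seqst(\node{m})$, and the induction hypothesis yields $\seqst(\node{m}) \subseteq \semop{\node{m}} = \V'(Z_\node{m})$.

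With every companion node validated, the second phase is a routine tree induction. At free leaves, validity follows from the success conditions of Definition~\ref{def:successful-tableau}; a successful tableau contains no diamond leaves; at each $\sigma$-leaf $\node{n}'$ with companion node $\node{m}$, Lemma~\ref{lem:semantic-invariance} gives $\semop{\node{n}'} = \semop{\node{m}}$, and combined with the inclusion $\seqst(\node{n}') \subseteq \seqst(\node{m})$ from Definition~\ref{def:tableau} and the validity of $\node{m}$ established in the first phase, this yields $\seqst(\node{n}') \subseteq \semop{\node{n}'}$. At each internal node, local soundness (Lemma~\ref{lem:local-soundness}) propagates validity upward from the children, and in particular to the root.

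The main obstacle lies in the first phase, specifically in reconciling the influence-extension valuation $\V_\node{n}$ produced by Lemma~\ref{lem:support-ordering-for-companion-nodes} with a consistent valuation giving $\semop{\node{n}}$. This reconciliation is what forces the outside-in induction on companion-node nesting, and it hinges on the positive occurrence of each $Z_\node{m}$ in $P(\node{n})$, a property inherited from the well-formedness constraints on mu-calculus formulas and from the way $P$ mirrors the tableau structure.
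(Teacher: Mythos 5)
Your proposal is correct, but it is organized differently from the paper's proof, and the difference is worth noting. The paper does \emph{not} attempt to validate every node, nor does it ever invoke the support-ordering machinery at a non-top-level companion node. Instead it forms the tree prefix $\tpre{\tree{T}}{\node{L}}$ generated by the nodes with empty definition lists to which a $\sigma Z$ rule is applied; the leaves of this prefix are either free leaves of $\tree{T}$ (valid by success) or $\sigma Z$-nodes whose child is a \emph{top-level} companion node. At a top-level companion node $\node{n}'$, the formula $P(\node{n}'')$ of its child contains no free variable of $\Var_{\mathbb{T}}$ other than $Z_{\node{n}'}$, so Corollary~\ref{cor:support-orderings-for-top-level-companion-nodes} applies with the plain valuation $\V$ and the entire reconciliation between $\V_{\node{n}}$ and a consistent valuation that you carry out in your first phase simply never arises. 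Validity then reaches the root by Lemma~\ref{lem:local-soundness} applied inductively within the prefix. Your route instead proves the stronger statement that \emph{every} companion node is valid, via an outside-in induction on companion-node nesting; the key step --- $\V_{\node{n}}(Z_{\node{m}}) \subseteq \seqst(\node{m}) \subseteq \semop{\node{m}} = \V'(Z_{\node{m}})$ for strict companion-node ancestors $\node{m}$, followed by monotonicity in the positively occurring $Z_{\node{m}}$ --- is sound, and the subsequent tree induction is routine. What your approach buys is the stronger conclusion that all nodes of a successful tableau are valid; what the paper's truncation buys is economy, since the work you do for nested companion nodes is, in effect, already absorbed into the proof of Lemma~\ref{lem:support-ordering-for-companion-nodes} (its statement S1) and need not be repeated at the theorem level. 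One small point to make explicit if you write this up: when a $\mu$-companion node has no companion leaves, $<:_{\node{n}}$ is empty and hence trivially well-founded, so the $\mu$ case of your first phase still goes through.
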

\begin{proof}
Let $\tree{T} = (\node{N}, \node{r}, p, cs)$ be the tree component of $\mathbb{T}$, and define $\node{L} \subseteq \node{N}$ as follows.
\[
\node{L} = 
\{ \node{n} \in \node{N} \mid \seqdl(\node{n}) = \emptyL \land \rho(\node{n}) = \sigma Z\}
\]
Now consider the tree prefix $\tpre{\tree{T}}{\node{L}}$ of $\tree{T}$ (cf.\/ Definition~\ref{def:tree-prefix}).  It can be seen that $\tpre{\tree{T}}{\node{L}} = (\node{N'},\node{r},p',cs')$ is such that $\node{N}'$ contains precisely the nodes of $\tree{T}$ for which $\seqdl(\node{n}) = \emptyL$.  Moreover, each leaf $\node{n}$ of $\tpre{\tree{T}}{\node{L}}$ is either a leaf of $\tree{T}$ or has the property that $\node{n} = S \tnxTV{\emptyL} \sigma Z.\Phi$ for some $S$ and $\Phi$ and that the child $\node{n}'$ of $\node{n}$ in $\tree{T}$ is such that $\node{n}' = S \tnxTV{(U = \sigma Z.\Phi)} U$ is a top-level companion node.  We will show that each leaf $\node{n}$ of $\tpre{\tree{T}}{\node{L}}$ is valid; this fact, and Lemma~\ref{lem:local-soundness}, can be used as the basis for a simple inductive argument on $\tpre{\tree{T}}{\node{L}}$ to establish that every node in $\tpre{\tree{T}}{\node{L}}$ is valid, including root node $\node{r}$, whose sequent label is $\seq{s}$.  The result follows from the definition of node validity, which says that a node in a tableau is valid iff its sequent label is valid.

So fix leaf $\node{n}$ in $\tpre{\tree{T}}{\node{L}}$.  There are two cases to consider.
In the first, $\node{n}$ is also a leaf in $\tree{T}$.  In this case, since $\mathbb{T}$ is successful, $\node{n}$ is successful, and therefore valid.  
In the second case, $\node{n} = S \tnxTV{\emptyL} \sigma Z.\Phi$ and has a single child $\node{n}' = S \tnxTV{(U = \sigma Z.\Phi)} U$ that is a top-level companion node.
Let $\node{n}''$ be the child of $\node{n}'$.
Corollary~\ref{cor:support-orderings-for-top-level-companion-nodes} guarantees that $(S, <:_{\node{n}'}^+)$ is a support ordering for $\semfTV{Z_{\node{n}'}}{P(\node{n}'')}$.
We will now show that $S \subseteq \semTV{P(\node{n}')}$.
There are two sub-cases to consider.
In the first, $\sigma = \nu$.
It follows from the definitions that $S$ is supported for $\semfTV{Z_{\node{n}'}}{P(\node{n}'')}$ and thus by Corollary~\ref{cor:greatest-fixpoint}, $S \subseteq \nu (\semfTV{Z_{\node{n}'}}{P(\node{n}'')}) = \semTV{P(\node{n}')}$.
In the second, $\sigma = \mu$.  Since $\mathbb{T}$ is successful $<:_{\node{n}'}$ is well-founded, meaning that $<:_{\node{n}'}^+$ is also well-founded.  Thus $S$ is well-supported for $\semfTV{Z_{\node{n}'}}{P(\node{n}'')}$ and thus by Corollary~\ref{cor:least-fixpoint}, $S \subseteq \mu (\semfTV{Z_{\node{n}'}}{P(\node{n}'')}) = \semTV{P(\node{n}')}$.
Since
$P(\node{n}) = P(\node{n}')$, $S \subseteq \semTV{P(\node{n})}$.  Also note that since $\node{n}'$ is a top-level companion node, $P(\node{n})$ can contain no free occurrences of any $Z' \in \Var_{\mathbb{T}}$, meaning that for any $\V'$ consistent with $\mathbb{T}$, $\semTV{P(\node{n})} = \semT{P(\node{n})}{\V'}$.  Consequently, Lemma~\ref{lem:companion-node-formulas-and-semantics} implies that $\semTV{P(\node{n})} = \sem{\node{n}}{}{}$, and thus $S \subseteq \sem{\node{n}}{}{}$, whence $\node{n}$ is valid.
\qedhere
\end{proof}

\section{Completeness}\label{sec:Completeness}

This section now establishes the completeness of our proof system.  Call a tableau $\tableauTrl$, where $\tree{T} = (\node{N}, \node{r}, p, cs)$, \emph{successful for} sequent $S \tnxTVD \Phi$ iff it is successful and $\node{r} = S \tnxTVD \Phi$.  We show that for any $\T$, $\V$, $S$ and $\Phi$, if $S \tnxTV{\emptyL} \Phi$ is valid then there is a successful tableau for $S \tnxTV{\emptyL} \Phi$.

The completeness results in this section rely heavily on tableau manipulations; in particular, several proofs define constructions for merging multiple successful tableaux into a single successful tableau.  These  constructions in turn rely on variations of well-founded induction over support orderings for the semantic functions used to give meaning to fixpoint formulas, and become subtle in the setting of mutually recursive fixpoints.  To clarify and simplify these arguments, the first subsection below  introduces relevant notions from general fixpoint theory in the setting of mutual recursion.  These results are then used later in this section to define the tableau constructions we need to establish completeness.

\subsection{Mutual recursion and fixpoints}
\label{subsec:mutual-recursion}

Mu-calculus formulas of form $\sigma Z.(\cdots \sigma' Z'.( \cdots Z \cdots ) \cdots)$ are said to be \emph{mutually recursive}, because of the semantics of the outer fixpoint formula, $\sigma Z. \cdots$, depends on the semantics of the inner fixpoint formula, $\sigma' Z'. \cdots$, which in turn (because $Z$ is free in its body) depends on the semantics of the outer fixpoint.  If $\sigma \neq \sigma'$ then these mutually recursive fixpoints are also said to be \emph{alternating}.  Alternating fixpoints present challenges when reasoning about completeness; in support of the constructions to come, in this section we develop a theory of mutually recursive fixpoints in the general setting of recursive functions over complete lattices. In particular, we show how to define mutually recursive fixpoints in terms of binary functions, and we define a property of binary relations, which we call \emph{quotient well-foundedness}, that can be applied to support orderings for mutually recursive fixpoints in order to support a form of well-founded induction. 

\paragraph{Mutual recursion.}

Let $S$ be a set, with $(2^S, \subseteq, \bigcup, \bigcap)$ the subset lattice over $S$ (cf.\/ Definition~\ref{def:subset-lattice}).  To define mutually recursive fixpoints in this setting we use \emph{binary} monotonic functions over $2^S$ defined as follows.

\begin{definition}[Monotonic binary functions]
Binary function $f \in 2^S \times 2^S \rightarrow 2^S$ is \emph{monotonic} iff for all $X_1, X_2, Y_1, Y_2 \in 2^S$, if $X_1 \subseteq X_2$ and $Y_1 \subseteq Y_2$ then $f(X_1, Y_1) \subseteq f(X_2, Y_2)$.
\end{definition}

\noindent
Binary functions are monotonic when they are monotonic in each argument individually.

We now define the following operations on binary functions towards our goal of defining mutually recursive fixpoints.

\begin{definition}[Binary-function operations]\label{def:binary-function-operations}
Let $f \in 2^S \times 2^S \rightarrow 2^S$ be monotonic.
\begin{enumerate}
    \item\label{it:fix-parm}
    Let $X, Y \subseteq S$.  Then functions $f_{(X, \cdot)}, f_{(\cdot, Y)} \in 2^S \rightarrow 2^S$ are defined by:
    \[
    f_{(X, \cdot)}(Y) = f_{(\cdot, Y)}(X) = f(X,Y)
    \]
    Since $f_{(X, \cdot)}$ and $f_{(\cdot, Y)}$ are monotonic when $f$ is, we may further define $f_{(\cdot, \sigma)}, f_{(\sigma, \cdot)} \in 2^S \rightarrow 2^S$, where $\sigma \in \{\mu, \nu\}$, as follows.
    \begin{align*}
    f_{(\cdot,\sigma)}(X) &= \sigma f_{(X, \cdot)}\\
    f_{(\sigma,\cdot)}(Y) &= \sigma f_{(\cdot, Y)}
    \end{align*}
    \item\label{it:sigma-composition}
    Suppose further that $g \in 2^S \times 2^S \rightarrow 2^S$ is binary and monotonic over $2^S$ and that $\sigma \in \{\mu,\nu\}$.  Then function $(f [\sigma] g) \in 2^S \rightarrow 2^S$ is defined as follows.
    \[
    (f [\sigma] g)(X) = f(X, g_{(\cdot, \sigma)}(X))
    \]
\end{enumerate}
\end{definition}

To understand the above definitions, first note that if $f$ is binary and monotonic then $f_{(X,\cdot)} \in 2^S \to 2^S$ is the unary function obtained by holding the first argument of $f$ fixed at $X$, leaving only the second argument to vary. Similarly, $f_{(\cdot,Y)} \in 2^S \to 2^S$ is the unary function obtained by holding the second argument of $f$ fixed at $Y$. The monotonicity of $f$ guarantees the monotonicity of $f_{(X, \cdot)}$ and $f_{(\cdot, Y)}$, and thus fixpoints $\sigma f_{(X,\cdot)}$ and $\sigma f_{(\cdot,Y)}$ are well defined for all $X$ and $Y$ and $\sigma \in \{\mu,\nu\}$.  This fact ensures that unary function $f_{(\cdot, \sigma)}$ and $f_{(\sigma, \cdot)}$ are well-defined; in the first case, given argument $X$, $f_{(\cdot, \sigma)}(X)$ returns the result of computing the $\sigma$-fixpoint of $f$ when the first argument of $f$ is held at $X$.  The second case is similar.  It is straightforward to establish that $f_{(\cdot,\sigma)}$ and $f_{(\sigma, \cdot)}$ are also monotonic for all $\sigma \in \{\mu,\nu\}$.  Finally, for each $\sigma \in \{\mu,\nu\}$ the operation $[\sigma]$ defines a composition operation that converts binary monotonic functions $f$ and $g$ into a unary monotonic function with the following behavior.  Given input $X \subseteq S$ the composition function applies $f$ to $X$ and the result of computing the $\sigma$-fixpoint of $g$ with its first argument held at $X$.  To understand the importance of this function, consider the following notional pair of mutually recursive equations, where $f$ and $g$ are binary monotonic functions.
\begin{align*}
X &\stackrel{\sigma}{=} f(X,Y)\\
Y &\stackrel{\sigma'}{=} g(X,Y)
\end{align*}
Here $\sigma, \sigma' \in \{\mu,\nu\}$; the intention of these equations is to define $X$ and $Y$ as the mutually recursive $\sigma$ and $\sigma'$ fixpoints of $f$ and $g$, with the first equation dominating the second one.  In the usual definitions of such equation systems, $X$ is defined to be $\sigma (f [\sigma'] g)$, i.e. the $\sigma$-fixpoint of the function $f [\sigma'] g$, while $Y$ is taken to be $g_{(\cdot, \sigma')}(X)$; see, e.g.,~\cite{Mad1997}.

The next lemma highlights the role that the $f [\sigma] g$ construct plays in the semantics of the mu-calculus formulas that involve nested fixpoints.  The statement relies on the notion of a \emph{maximal fixpoint subformula}.  Briefly, if $\Phi$ is a formula then $\sigma Z.\Gamma$ is a maximal fixpoint subformula of $\Phi$ iff it is a subformula of $\Phi$ and is not a proper subformula of another fixpoint subformula of $\Phi$.

\begin{lemma}[Nested fixpoint semantics]\label{lem:nested-fixpoint-semantics}
Let $\sigma Z.\Phi \in \muformsSV$ be a formula, let $\sigma' Z'.\Gamma$ be a maximal fixpoint subformula of $\Phi$, and let $\Phi'$ and $W \in \Var$ be such that $W$ is fresh and $\Phi = \Phi'[W:=\sigma'Z'.\Gamma]$.  Then $\semfTV{Z}{\Phi} = f [\sigma'] g$, where $f(X,Y) = \semT{\Phi'}{\V[Z, W := X,Y]}$ and $g(X,Y) = \semT{\Gamma}{\V[Z, Z' := X, Y]}$.
\end{lemma}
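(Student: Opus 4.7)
The plan is to unfold the definitions on both sides and show they coincide via the substitution lemma (Lemma~\ref{lem:substitution}) and the fixpoint characterization (Lemma~\ref{lem:fixpoint-characterizations}). Fix $X \subseteq \states{S}$; the goal is to establish
\[
\semfTV{Z}{\Phi}(X) = (f[\sigma']g)(X),
\]
where by definition the left-hand side equals $\semT{\Phi}{\V[Z:=X]}$ and, unwinding Definition~\ref{def:binary-function-operations}, the right-hand side equals $f(X, \sigma' g_{(X, \cdot)}) = \semT{\Phi'}{\V[Z,W := X, \sigma' g_{(X,\cdot)}]}$.

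First I would reduce the left-hand side. Since $\Phi = \Phi'[W := \sigma' Z'.\Gamma]$ and $W$ is fresh (so in particular $W \neq Z$), Lemma~\ref{lem:substitution} gives
\[
\semT{\Phi}{\V[Z:=X]}
= \semT{\Phi'[W := \sigma' Z'.\Gamma]}{\V[Z:=X]}
= \semT{\Phi'}{(\V[Z:=X])\,[W := \semT{\sigma'Z'.\Gamma}{\V[Z:=X]}]}.
\]
Since $W$ is fresh and $W \neq Z$, the right-hand valuation coincides with $\V[Z,W := X, \semT{\sigma'Z'.\Gamma}{\V[Z:=X]}]$. Hence it suffices to prove $\semT{\sigma'Z'.\Gamma}{\V[Z:=X]} = \sigma' g_{(X,\cdot)}$.

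For this step I would apply Lemma~\ref{lem:fixpoint-characterizations}, which gives $\semT{\sigma'Z'.\Gamma}{\V[Z:=X]} = \sigma' \semfT{Z'}{\Gamma}{\V[Z:=X]}$. It then remains to verify that $\semfT{Z'}{\Gamma}{\V[Z:=X]} = g_{(X,\cdot)}$ as functions on $2^{\states{S}}$. Unfolding both sides on an arbitrary $Y$:
\[
\semfT{Z'}{\Gamma}{\V[Z:=X]}(Y)
= \semT{\Gamma}{(\V[Z:=X])\,[Z':=Y]}
= \semT{\Gamma}{\V[Z,Z':=X,Y]}
= g(X,Y)
= g_{(X,\cdot)}(Y),
\]
where the second equality uses that $Z' \neq Z$, which holds because $\sigma'Z'.\Gamma$ is a subformula of $\Phi$ and $\sigma Z.\Phi$ is well-formed (so $Z'$ cannot clash with the bound variable $Z$). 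Combining with the previous step yields the claim.

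The main subtlety is checking the variable-hygiene conditions: that $W$ can indeed be chosen fresh so the substitution $\Phi'[W := \sigma'Z'.\Gamma]$ recovers $\Phi$ exactly (this uses maximality of $\sigma'Z'.\Gamma$ as a fixpoint subformula, so no other occurrence of $\sigma'Z'.\Gamma$ can be disturbed), and that $Z \neq Z'$ and $W \notin \{Z,Z'\}$ so valuation updates commute as required. Once these hygiene conditions are in place, the chain of equalities above is essentially mechanical, relying only on the substitution and fixpoint-characterization lemmas.
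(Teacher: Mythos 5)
Your proof is correct and follows essentially the same route as the paper's: apply Lemma~\ref{lem:substitution} to peel off the substitution $\Phi = \Phi'[W:=\sigma'Z'.\Gamma]$, then identify $\semT{\sigma'Z'.\Gamma}{\V[Z:=X]}$ with $g_{(\cdot,\sigma')}(X)$. You merely make explicit (via Lemma~\ref{lem:fixpoint-characterizations} and the check that $\semfT{Z'}{\Gamma}{\V[Z:=X]} = g_{(X,\cdot)}$) a step the paper justifies in a single line, which is a harmless elaboration.
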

\remove{
\begin{proofsketch}
Follows from Lemma~\ref{lem:substitution} and the definition of $f [\sigma'] g$.
The detailed proof is included in the appendix.
\qedhere
\end{proofsketch}
}
\begin{proof}
Fix $\sigma Z.\Phi$, $\sigma'Z'.\Gamma$, $W$, $f$ and $g$ as stated.  We must prove that for all $X$, $\semfTV{Z}{\Phi}(X) = (f [\sigma'] g)(X)$. We reason as follows.
\begin{align*}
\semfTV{Z}{\Phi}(X)
    &= \semT{\Phi}{\V[Z := X]}
    && \text{Definition of $\semfTV{Z}{\Phi}$}
\\
    &= \semT{(\Phi'[W := \sigma' Z'.\Gamma])}{\V[Z := X]}
    && \text{$\Phi = \Phi'[W:=\sigma' Z'.\Gamma]$}
\\
    &= \semT{\Phi'}{\V[Z, W := X, \semT{\sigma'Z'.\Gamma}{\V[Z := X]}]}
    && \text{Lemma~\ref{lem:substitution}, $W$ fresh}
\\
    &= \semT{\Phi'}{\V[Z,W := X, g_{(\cdot, \sigma')}(X)]}
    && \text{$g_{(\cdot,\sigma')} = \semT{\sigma'Z'.\Gamma}{\V[Z := X]}$}
\\
    &= f(X, g_{(\cdot, \sigma')}(X))
    && \text{Definition of $f$}
\\
    &= (f [\sigma'] g)(X)
    && \text{Definition of $f [\sigma'] g$}
\end{align*}
\qedhere
\end{proof}

\noindent
Note this result implies that $\semTV{\sigma Z.\Phi} = \sigma (f [\sigma'] g)$, where $\sigma Z . \Phi$, $f$ and $g$ are defined as in the lemma.

\paragraph{Quotient well-foundedness and well-orderings.}
Our goal in this part of the paper is to characterize a support ordering for $g$ in terms of a given a support ordering for $f [\sigma] g$.  For unary functions, support orderings may be either well-founded or not, and this property is sufficient to characterize both the greatest and least fixponts of these functions.  For function $f [\sigma] g$, where $f$ and $g$ are binary and have mutually recursive fixpoints, an intermediate notion, which we call \emph{quotient well-foundedness}, is important, especially when the mutually recursive fixpoints are of different types (i.e.\/ one is least while the other is greatest). 

\label{subsec:qwf-and-wo}
\begin{definition}[Quotient well-founded (qwf) / well-ordering (qwo)]\label{def:qwf}
Let $S$ be a set and $R \subseteq S \times S$ a binary relation over $S$, and let $(Q_R, \sqsubseteq)$ be the quotient of $R$ (cf.\/ Definition~\ref{def:relation-quotient}), with ${\sqsubset} = {\sqsubseteq^-}$ the irreflexive core of $\sqsubseteq$.
\begin{enumerate}
\item\label{subdef:qwf}
$R$ is \emph{quotient well-founded} (qwf) iff $\sqsubset$ is well-founded over $Q_R$.
\item\label{subdef:qwo}
$R$ is a \emph{quotient well-ordering} (qwo) iff $\sqsubset$ is a well-ordering over $Q_R$.
\end{enumerate}
\end{definition}

That is, $R$ is quotient well-founded iff the irreflexive core of the partial order induced by $R$ over its equivalence classes is well-founded.  Note that $R$ can be qwf without being well-founded; when this is the case the non-well-foundedness of $R$ can be seen as due solely to non-well-foundedness within its equivalence classes.  It is also easy to see that if $R$ is well-founded then it is qwf as well; in this case each $Q \in Q_R$ has form $\{s\}$ for some $s \in S$.  Also note that the universal relation $U_S = S \times S$ over $S$ is trivially qwf, as its quotient has one equivalence class, namely, $S$.

It turns out that if a relation is total and quotient well-founded, then it is also a quotient well-ordering.

\begin{lemma}[Total qwf relations are qwos]\label{lem:total-qwo}
Suppose that $R \subseteq S \times S$ is total and qwf. Then $R$ is a qwo. 
\end{lemma}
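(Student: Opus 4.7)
The plan is to show directly that the irreflexive core $\sqsubset$ of the induced partial order $\sqsubseteq$ on $Q_R$ satisfies the two clauses of Definition~\ref{def: well-founded}(2): it is total (in the strict sense of Definition~\ref{def:irreflexive-total}(2)) and well-founded. Well-foundedness is for free: it is exactly the assumption that $R$ is qwf. So the work lies in verifying the three conditions that make up strict totality of $\sqsubset$, namely irreflexivity, transitivity, and comparability of distinct elements of $Q_R$.

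Irreflexivity of $\sqsubset$ is immediate from ${\sqsubset} = {\sqsubseteq^-} = {\sqsubseteq} \setminus \mathit{Id}_{Q_R}$ (Definition~\ref{def:relation-operations}(\ref{item:irreflexive-core})). For transitivity I would argue as follows: if $Q_1 \sqsubset Q_2$ and $Q_2 \sqsubset Q_3$ then $Q_1 \sqsubseteq Q_3$ by transitivity of the partial order $\sqsubseteq$ from Definition~\ref{def:relation-quotient}; moreover $Q_1 \neq Q_3$, since $Q_1 = Q_3$ together with $Q_1 \sqsubseteq Q_2 \sqsubseteq Q_1$ would force $Q_1 = Q_2$ by antisymmetry, contradicting $Q_1 \sqsubset Q_2$. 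Hence $Q_1 \sqsubset Q_3$.

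For comparability, fix distinct $Q_1, Q_2 \in Q_R$ and pick representatives $x_1 \in Q_1$ and $x_2 \in Q_2$. Since $Q_1 \neq Q_2$ we have $x_1 \not\sim_R x_2$, and in particular $x_1 \neq x_2$. Now I invoke totality of $R$ (Definition~\ref{def:irreflexive-total}(2)): either $x_1 \inr{R} x_2$ or $x_2 \inr{R} x_1$, and hence either $x_1 \inr{R^*} x_2$ or $x_2 \inr{R^*} x_1$. By the definition of $P(R)$ this gives $Q_1 \sqsubseteq Q_2$ or $Q_2 \sqsubseteq Q_1$, and since $Q_1 \neq Q_2$ it follows that $Q_1 \sqsubset Q_2$ or $Q_2 \sqsubset Q_1$, as required.

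The argument is essentially bookkeeping against the definitions; I do not anticipate any real obstacle. The only mild subtlety is making sure that the step from $x_1 \inr{R} x_2$ to $Q_1 \sqsubset Q_2$ uses the strict inequality $Q_1 \neq Q_2$ rather than just $x_1 \neq x_2$, so that the irreflexive core is actually reached, and that the choice of representatives is irrelevant (any other $y_1 \in Q_1, y_2 \in Q_2$ would yield the same pair of equivalence classes via the chain $y_1 \inr{R^*} x_1 \inr{R^*} x_2 \inr{R^*} y_2$).
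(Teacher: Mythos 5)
Your proof is correct and follows essentially the same route as the paper's: well-foundedness of $\sqsubset$ comes directly from the qwf hypothesis, irreflexivity and transitivity from $\sqsubset$ being the irreflexive core of the partial order $\sqsubseteq$, and comparability by choosing representatives and applying totality of $R$. Your handling of the comparability step (passing to $\sqsubseteq$ first and then using $Q_1 \neq Q_2$ to obtain strictness) is if anything slightly cleaner than the paper's phrasing, but it is the same argument.
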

\begin{proof}
Let $R \subseteq S \times S$ be total and qwf, with $(Q_R, \sqsubseteq)$ the quotient of $R$.  We must show that ${\sqsubset} = {\sqsubseteq^-}$ is a well-ordering, i.e.\/ is well-founded and total, over $Q_R$.  Well-foundedness of $\sqsubset$ is immediate from the fact that $R$ is qwf.  We now must show that $\sqsubset$ is total, i.e.\/ is irreflexive and transitive and has the property that for any $Q_1, Q_2 \in Q_R$ such that $Q_1 \neq Q_2$, either $Q_1 \sqsubset Q_2$ or $Q_2 \sqsubset Q_1$.  Irreflexivity and transitivity are immediate from the fact that $\sqsubset = \sqsubseteq^-$ is the irreflexive core of partial order $\sqsubseteq$.  Now suppose $Q_1, Q_2 \in Q_R$ are such that $Q_1 \neq Q_2$; we must show that either $Q_1 \sqsubset Q_2$ or $Q_2 \sqsubset Q_1$.  From the definition of $Q_R$ it follows that there are $s_1, s_2 \in S$ such that $Q_1 = [s_1]_R$ and $Q_2 = [s_2]_R$.  Moreover, since $Q_1 \neq Q_2$ it must be that $s_1 \not\sim_R s_2$, and since $R$ is total we have that either $s_1 \inr{R} s_2$ and $s_2 \not\inr{R} s_1$, whence $Q_1 = [s_1]_R \sqsubset [s_2]_R = Q_2$, or $s_2 \inr{R} s_1$ and $s_1 \not\inr{R} s_2$, whence $Q_2 \sqsubset Q_1$.  As $\sqsubset$ is a well-ordering on $Q_R$, $R$ is by definition a qwo.
\qedhere
\end{proof}

The next result establishes the existence of so-called \emph{pseudo-minimum} elements in subsets drawn from qwos.

\begin{definition}[Pseudo-minimum elements]\label{def:minimal}
Let $R \subseteq S \times S$ be a binary relation over $S$, and let $X \subseteq S$.
Then $x \in X$ is \emph{$R$-pseudo-minimum for $X$} iff $x$ is an $R$-lower bound for $X$.
\end{definition}

\noindent
An element $x \in X$ is an $R$-pseudo-minimum for $X$ iff $x$ is an $R$-lower bound for $X$.  This does not imply that $x$ is an $R$-minimum, or even $R$-minimal, since even though $x$ is a $R$-pseudo-minimum there may exist $x' \in X$ such that $x' \inr{R} x$. In this case it must hold that $x \sim_R x'$, however.

The next lemma states a pseudo-minimum result for qwo relations that are total.  (It should be noted that a relation can be a qwo and still not itself be total.)

\begin{lemma}[Pseudo-minimum elements and quotient well-orderings]\label{lem:qwo-pseudo-minimum}
Suppose qwo $R \subseteq S \times S$ is total.  Then every non-empty subset $X \subseteq S$ contains an $R$-pseudo-minimum element.
\end{lemma}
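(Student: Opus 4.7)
The plan is to lift the problem to the quotient $(Q_R, \sqsubset)$, which is a well-ordered set by the qwo hypothesis, locate a $\sqsubset$-minimum equivalence class intersecting $X$, and argue that any representative from that class is an $R$-pseudo-minimum of $X$.

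Concretely, I would form $Q_X = \{[x]_R \mid x \in X\} \subseteq Q_R$, which is non-empty because $X$ is.  Since $R$ is a qwo, $\sqsubset$ is a well-ordering on $Q_R$, so Lemma~\ref{lem:well-ordering-minimum} supplies a $\sqsubset$-minimum element $Q_0 \in Q_X$.  I would then pick some $x \in Q_0 \cap X$ (non-empty since $Q_0 \in Q_X$) and claim that $x$ is an $R$-pseudo-minimum of $X$.

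To verify the claim, I take an arbitrary $x' \in X$ with $x' \neq x$ and argue $x \inr{R} x'$ by cases on $[x']_R$.  If $[x']_R \neq Q_0$, $\sqsubset$-minimality of $Q_0$ in $Q_X$ gives $Q_0 \sqsubset [x']_R$, so $x \inr{R^*} x'$ and $\lnot(x' \inr{R^*} x)$; totality of $R$ combined with $x \neq x'$ then yields $x \inr{R} x'$, since $x' \inr{R} x$ would imply $x' \inr{R^*} x$ and contradict $Q_0 \sqsubset [x']_R$.  If $[x']_R = Q_0$, then $x \sim_R x'$, so $x \inr{R^*} x'$ and $x' \inr{R^*} x$; totality then forces $x \inr{R} x'$ or $x' \inr{R} x$, and we have to pin down the former.

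The main obstacle is this last case: upgrading symmetric $R^*$-reachability between distinct elements of a common $\sim_R$-class into the direct edge $x \inr{R} x'$, since totality alone is consistent with $x' \inr{R} x$ while $\lnot(x \inr{R} x')$.  My plan for handling it is to exploit the collapse of $Q_0$ in the quotient: the chains witnessing $x \inr{R^*} x'$ and $x' \inr{R^*} x$, together with pairwise totality of $R$ inside $Q_0$, should force $R$ to be the universal relation on $Q_0 \cap X$, so that $x \inr{R} x'$ holds for any distinct $x, x' \in Q_0$.  If that density argument does not close cleanly, I would fall back on refining the choice of $x$ within $Q_0 \cap X$ by applying the Well-Ordering Theorem (as in Lemma~\ref{lem:well-ordering-extension}) to obtain a well-ordering consistent with $R$ on $Q_0 \cap X$ and taking the minimum under that well-ordering, which together with totality of $R$ would certify $x \inr{R} x'$ for every other $x' \in Q_0 \cap X$.
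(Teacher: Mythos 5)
Your overall route is the same as the paper's: pass to the quotient, use the well-ordering of $\sqsubset$ to find a $\sqsubset$-minimum class $Q_0$ meeting $X$, and take a representative. Your cross-class case is handled exactly as in the paper's proof. The problem is the within-class case, which you correctly identify as the obstacle but do not actually close: your first idea is a hope rather than an argument, and your fallback is unsound.

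The missing observation is that ``total'' in this paper (Definition~\ref{def:irreflexive-total}) includes \emph{transitivity}, so $R^+ = R$ and hence $R^* = R^=$. Consequently $x \sim_R x'$ with $x \neq x'$ immediately yields $x \inr{R} x'$ (and also $x' \inr{R} x$): within a non-singleton equivalence class the relation already holds between every pair of distinct elements, so \emph{any} representative of $Q_0 \cap X$ is a pseudo-minimum and no ``density argument'' or refined choice of $x$ is needed. This is exactly how the paper closes the case. Transitivity is genuinely essential, not a technicality: with mere pairwise comparability, the 3-cycle $a \inr{R} b$, $b \inr{R} c$, $c \inr{R} a$ is a single $\sim_R$-class whose quotient is trivially well-ordered, yet it has no pseudo-minimum. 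Your fallback fails on two counts. First, Lemma~\ref{lem:well-ordering-extension} applies only to well-founded relations, and $\restrict{R}{Q_0}$ for a non-singleton class contains both $(a,b)$ and $(b,a)$, so it admits no irreflexive extension at all, let alone a well-ordering. Second, if you instead take an arbitrary well-ordering of $Q_0 \cap X$ unrelated to $R$, its minimum $x$ gives you from comparability only ``$x \inr{R} x'$ or $x' \inr{R} x$,'' which is exactly where you started.
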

\begin{proof}
Fix total qwo $R \subseteq S \times S$, and let $X \subseteq S$ be non-empty.  We must exhibit an $R$-pseudo-minimum element $x \in X$.  To this end, consider the quotient $(Q_R, \sqsubseteq)$ of $R$, and let ${\sqsubset} = {\sqsubseteq^-}$ be the irreflexive core of $\sqsubseteq$.  Note that $\sqsubset$ is a well-ordering over $Q_R$.  Now consider $Q_X \subseteq Q_R$ defined by $Q_X = \{\,[x]_R \mid x \in S\}$.  It follows that there is a $Q \in Q_X$ that is a $\sqsubset$-minimum for $Q_X$, and that $Q = [x]_R$ for some $x \in X$.  Since $R$ is total it is transitive, meaning $R^* = R^=$. It follows that $x \inr{R} x'$ for all $x' \in [x]_R$ such that $x \neq x'$; it is also the case that $x \inr{R} x'$ for all $x'$ such that $[x]_R \sqsubset [x']_R$.  These facts imply that $x \inr{R} x'$ for all $x' \in X$. \qedhere
\end{proof}

We now establish a relationship between support orderings for $f [\sigma] g$ and $g$.  We first define a notion of compatibility for such support orderings.

\begin{definition}[Local consistency of support orderings] \label{def:consistent-support-orderings}
Let $f, g \in 2^S \times 2^S \rightarrow 2^S$ be monotonic, and let $\sigma_1, \sigma_2 \in \{\mu,\nu\}$.  Further let $(X, \prec)$ be a $\sigma_1$-compatible support ordering for $f [\sigma_2] g$, with $Y_x = \sigma_2 g_{(\preimg{{\prec}}{x}, \cdot)}$ for $x \in X$ and $Y = \sigma_2\, g_{(\preimg{{\prec}}{X}, \cdot)}$.  Then $\sigma_2$-compatible support ordering $(Y,\prec')$ for $g_{(\preimg{{\prec}}{X}, \cdot)}$ is \emph{locally consistent} with $(X, \prec)$ iff for all $x \in X$, $(Y_x, \restrict{(\prec')}{Y_x})$ is a $\sigma_2$-compatible support ordering for $g_{(\preimg{\prec}{x}, \cdot)}$ and for all $y \in Y_x, \preimg{{\prec'}}{y} \subseteq Y_x$.
\end{definition}

Intuitively, $(Y, \prec')$ is locally consistent with $(X, \prec)$ if $\prec'$ not only supports the fact that $Y$ is the $\sigma_2$-fixpoint for $g_{(\preimg{{\prec}}{X})}$, but via the restriction of $\prec'$ to $Y_x$, it also provides localized support for the fact that $Y_x$ is the $\sigma_2$-fixpoint for $g_{(\preimg{{\prec}}{x}, \cdot)}$, for each $x \in X$.  In addition, for any $x \in X$ and $y \in Y_x$ every element in the support set $\preimg{{\prec'}}{y}$ with respect to $\prec'$ must also be an element of $Y_x$.  Note that this last aspect of the definition ensures that for any $x$ and $y \in Y_x$, 
$$
\preimg{{\prec'}}{y} = \preimg{(\restrict{(\prec')}{Y_x})}{y}.
$$

The next lemma establishes that, for given support orderings of a specific type for $f [\sigma_2] g$, consistent support orderings exist for $g$.

\begin{lemma}[From composite to local support orderings]\label{lem:fg-support}
Let $f, g \in 2^S \times 2^S \rightarrow 2^S$ be monotonic and $\sigma_1, \sigma_2 \in \{\mu,\nu\}$, with $X = \sigma_1 (f [\sigma_2] g)$.  Also let $(X,\prec)$ be a $\sigma_1$-compatible, total qwf{\,}
support ordering for $f[\sigma_2]g$ and $Y = \sigma_2 g_{(\preimg{{\prec}}{X}, \cdot)}$.  Then there is a $\sigma_2$-compatible, total qwf support ordering $(Y,\prec')$ for $g_{(\preimg{{\prec}}{X}, \cdot)}$ that is locally consistent with $(X,\prec)$.
\end{lemma}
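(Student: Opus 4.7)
The plan is to build $\prec'$ by transfinite recursion along $\prec$, which is in fact a well-ordering on $X$: totality of $\prec$ makes each $\sim_\prec$-class a singleton, so Lemma~\ref{lem:total-qwo} reduces total$+$qwf to well-foundedness of $\prec$ itself. Enumerating $X = \{x_\alpha : \alpha < \kappa\}$ in $\prec$-order, setting $X_{<\alpha} = \preimg{\prec}{x_\alpha}$ and $Y_\alpha = \sigma_2\,g_{(X_{<\alpha},\cdot)}$, monotonicity of $g$ gives $\alpha \le \beta \Rightarrow Y_\alpha \subseteq Y_\beta$, and the $Y_x$ appearing in the statement coincide with these $Y_\alpha$ via $x = x_\alpha$.

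The recursion constructs, stage by stage, total well-orderings $\prec'_\alpha$ on $Y_\alpha$ maintaining three invariants: (i) $(Y_\alpha, \prec'_\alpha)$ is a $\sigma_2$-compatible support ordering for $g_{(X_{<\alpha},\cdot)}$; (ii) $\restrict{\prec'_\alpha}{Y_\beta} = \prec'_\beta$ for every $\beta \le \alpha$; and (iii) every element of the increment $\Delta Y_\alpha = Y_\alpha \setminus \bigcup_{\beta<\alpha} Y_\beta$ lies $\prec'_\alpha$-above everything in $\bigcup_{\beta<\alpha} Y_\beta$. Limit stages go through by union, with Lemma~\ref{lem:unions-of-support-orderings} preserving (i). At a successor stage, the task reduces to well-ordering $\Delta Y_\alpha$ on top of $\bigcup_{\beta<\alpha} Y_\beta$ in a support-compatible way. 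For $\sigma_2 = \mu$ I would introduce the derived monotonic function $h : 2^{\Delta Y_\alpha} \to 2^{\Delta Y_\alpha}$ given by
\[ h(Z) = g\bigl(X_{<\alpha},\, (\textstyle\bigcup_{\beta<\alpha} Y_\beta) \cup Z\bigr) \cap \Delta Y_\alpha, \]
use the minimality of $Y_\alpha = \mu\, g_{(X_{<\alpha},\cdot)}$ to argue that $\mu h = \Delta Y_\alpha$, and extract a well-ordered support ordering on $\Delta Y_\alpha$ via Corollary~\ref{cor:support-fixpoints}; appending it after $\bigcup_{\beta<\alpha}\prec'_\beta$ yields $\prec'_\alpha$. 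The $\sigma_2 = \nu$ case is handled symmetrically using the dual characterization, with Lemma~\ref{lem:well-ordering-extension} supplying a totalization.

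Setting $\prec' = \bigcup_\alpha \prec'_\alpha$ then gives a total well-ordering on $Y$. Local consistency follows immediately from the invariants: for $x = x_\alpha$, invariant (ii) identifies $\restrict{\prec'}{Y_x}$ with $\prec'_\alpha$, which by (i) is a $\sigma_2$-compatible support ordering for $g_{(\preimg{\prec}{x},\cdot)}$, and (iii) forces $\preimg{\prec'}{y} \subseteq Y_x$ for every $y \in Y_x$. Support-ordering-hood of $(Y,\prec')$ for $g_{(\preimg{\prec}{X},\cdot)}$ follows from monotonicity of $g$ in its first argument combined with invariant (i) applied at arbitrarily late stages. I expect the main obstacle to be the successor step, particularly the identification $\mu h = \Delta Y_\alpha$, which must exploit that $\bigcup_{\beta<\alpha} Y_\beta$ already contains all pre-stage contributions inside $Y_\alpha$; the $\nu$ case requires analogous but more delicate control, since the naive $\nu$-maximal support ordering $U_{Y_\alpha}$ is reflexive and must be replaced by a genuinely well-ordered substitute that meshes with the outer well-founded recursion.
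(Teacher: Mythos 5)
There is a genuine gap, and it sits exactly where the lemma is delicate: the non-well-founded cases. Your opening reduction --- ``totality of $\prec$ makes each $\sim_\prec$-class a singleton, so total $+$ qwf collapses to well-foundedness of $\prec$ itself'' --- reads ``total'' as \emph{strict} (irreflexive) totality. That reading cannot be the intended one here: when $\sigma_1 = \nu$, the canonical (and sometimes the only) support ordering for $f[\sigma_2]g$ on $X$ is the universal relation $U_X$, which is reflexive --- if $f[\sigma_2]g$ is the identity, \emph{every} support ordering must satisfy $x \prec x$ for all $x$. Such reflexive orderings are exactly the ones the notion ``qwf'' was invented to accommodate (qwf without wf means the failure of well-foundedness lives inside equivalence classes, which are then not singletons). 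Under the correct reading, your enumeration $X = \{x_\alpha : \alpha < \kappa\}$ in $\prec$-order does not exist, and the outer recursion has nothing to recurse on. The repair is to recurse over the quotient $Q_\prec$, whose induced strict order $\sqsubset$ \emph{is} a well-ordering, define $Y_Q = \sigma_2\, g_{(\preimg{{\prec}}{Q},\cdot)}$ per equivalence class (well-defined since $\preimg{{\prec}}{x}$ is constant on a class), and layer the increments $Y_Q \setminus \bigcup_{Q' \sqsubset Q} Y_{Q'}$.

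The same misreading recurs on the $Y$ side: you insist that $\prec'$ be a total \emph{well-ordering} of $Y$. For $\sigma_2 = \nu$ this is impossible in general (take $g(X,Y) = Y$: any support ordering for $g_{(\cdot,\cdot)}$'s greatest fixpoint must be reflexive on all of $Y$), and it is also more than the lemma asks, since $\sigma_2$-compatibility imposes well-foundedness only when $\sigma_2 = \mu$. Your closing remark that the reflexive $U_{Y_\alpha}$ ``must be replaced by a genuinely well-ordered substitute'' is therefore chasing an object that does not exist; the correct move is to keep the universal relation restricted to each increment and place increments strictly above everything from earlier stages, which yields a relation that is total and qwf (each increment collapses into one equivalence class) without being well-founded. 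Two smaller points. First, your $\mu$-successor device --- the auxiliary $h(Z) = g(X_{<\alpha}, W \cup Z) \cap \Delta Y_\alpha$ with $\mu h = \Delta Y_\alpha$ --- is correct and is a legitimate alternative to simply restricting a $\mu$-maximal support ordering for $g_{(X_{<\alpha},\cdot)}$ on all of $Y_\alpha$ to the increment. Second, $\prec' = \bigcup_\alpha \prec'_\alpha$ only covers $\bigcup_x Y_x$, which can be a proper subset of $Y = \sigma_2\, g_{(\preimg{{\prec}}{X},\cdot)}$ (each $Y_x$ uses only $\preimg{{\prec}}{x}$, not $\preimg{{\prec}}{X}$); the leftover $Y \setminus \bigcup_x Y_x$ must be given its own top layer, ordered by a $\sigma_2$-maximal support ordering for $g_{(\preimg{{\prec}}{X},\cdot)}$, or $(Y,\prec')$ fails to be a support ordering at those points.
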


\begin{proof}
Fix monotonic $f, g \in 2^S \times 2^S \rightarrow 2^S$ and $\sigma_1, \sigma_2 \in \{\mu,\nu\}$, let $X = \sigma_1 (f [\sigma_2] g)$, and let ${\prec} \subseteq X \times X$ be such that $(X, \prec)$ is a $\sigma_1$-compatible, total qwf support ordering for $f [\sigma_2] g$.  Also fix $Y = \sigma_2 g_{(\preimg{{\prec}}{X}, \cdot)}$.  We must construct a $\sigma_2$-compatible, total qwf ${\prec'} \subseteq Y \times Y$ such that $(Y, \prec')$ is a support ordering for $g_{(\preimg{{\prec}}{X}, \cdot)}$ that is locally consistent with $(X, \prec)$.

Let $(Q_\prec, \sqsubseteq)$ be the quotient of $(S,\prec)$ as given in Definition~\ref{def:relation-quotient}, and let ${\sqsubset} = {\sqsubseteq^-}$ be the irreflexive core of $\sqsubseteq$.  For notational convenience, if $Z \subseteq S$ then we define
$$
g_Z = g_{(\preimg{{\prec}}{Z}, \cdot)}.
$$
Since $\prec$ is total and qwf it follows that $\sqsubset$ is a well-ordering on $Q_\prec$.  For any $Q \in Q_\prec$ define
$$
Y_Q = \sigma_2 \,g_Q,
$$
and let $(Y_Q, \prec'_Q)$
be a $\sigma_2$-maximal support ordering for $g_Q$.  Since ${\prec'_Q} \subseteq Y_Q \times Y_Q$ is $\sigma_2$-maximal, we have that $\prec'_Q$ is a well-ordering if $\sigma_2 = \mu$, and $Y_Q \times Y_Q$ if $\sigma_2 = \nu$. 
In either case it is easy to see that $\prec'_Q$ is total and qwf. 
Moreover, since $Q \subseteq X$ it follows that $\preimg{{\prec}}{Q} \subseteq \preimg{{\prec}}{X}$, and this means that for all $Z \subseteq S$, $g_Q(Z) \subseteq g_X(Z)$.
Consequently $(Y_Q, \prec'_Q)$ is also a $\sigma_2$-compatible support ordering for $g_X$, as for all $y \in Y_Q$, $y \in g_Q(\preimg{{\prec'_Q}}{y}) \subseteq g_X(\preimg{{\prec'_Q}}{y})$.
We now define the following using well-founded induction on ${\sqsubset} = P({\prec})^-$.
\begin{align*}
Y'_{\sqsubset Q} &= \bigcup_{Q' \sqsubset Q} Y'_{Q'}
\\
Y'_{Q}  &= Y_Q \cup Y'_{\sqsubset Q}
\\
Y''_{Q} &= Y_Q \setminus Y'_{\sqsubset Q}
\\
\prec''_{Q} &= 
\left( \bigcup_{Q' \sqsubset Q} \prec''_{Q'} \right)
\cup \left( Y'_{\sqsubset Q} \times Y''_Q \right)
\cup \left(\restrict{\prec'_Q}{Y''_Q}\right)
\end{align*}

\noindent
An inductive argument further establishes that for each $Q$, $(Y'_Q, \prec''_Q)$ is a $\sigma_2$-compatible support ordering for $g_Q$ and that $\prec''_Q$ is total and qwf.

Now consider $Y' = \bigcup_{Q \in Q_\prec} Y'_Q$ and ${\prec''} = \bigcup_{Q \in Q_\prec} \prec''_Q$.  It is straightforward to show that $(Y', \prec'')$ is a $\sigma_2$-compatible, total, qwf support ordering for $g_X$ since each $(Y_Q, \prec'_Q)$ is.  Also note that $Y' \subseteq Y$.  To finish the construction of $(Y, \prec')$, take $Y'' = Y \setminus Y'$, and let $(Y, \prec''')$ be a maximal $\sigma_2$-compatible support ordering for $g_X$.  Note that $\prec'''$ is qwf, and well-founded if $\sigma_2 = \mu$ and $Y \times Y$ if $\sigma_2 = \nu$.  Now define the following.
\begin{align*}
\prec'  &= 
{\prec''}
\cup \left(Y' \times Y''\right)
\cup \left(\restrict{\prec'''}{Y''}\right)
\end{align*}
From the reasoning above it follows that $(Y, \prec')$ is a $\sigma_2$-compatible support ordering for $g_X$, and that $\prec'$ is total and qwf.

To complete the proof we must show that $(Y, \prec')$ is locally consistent with $(X, \prec)$.  To this end, fix $x \in X$ and define ${\prec'_x} = \restrict{(\prec')}{Y_x}$.  We must show that $(Y_x, \prec'_x)$ is a $\sigma_2$-compatible support ordering for $g_x = g_{\{x\}}$.  Recall that $[x] \in Q_\prec$ is the equivalence class containing $x$.  We begin by noting that since $\prec$ is total and qwf,
\[
\preimg{{\prec}}{x} = [x] \cup \left(\bigcup_{Q \sqsubset [x]} Q\right).
\]
Also, $\sigma_2 (g_x) = \sigma_2 (g_{[x]}) = Y_{[x]}$.  The definition of $\prec'$ further guarantees that $\prec'_x = \prec''_{Q_x}$.  As we know that $(Y_Q, \prec''_Q)$ is a $\sigma_2$-compatible support ordering for $g_Q$ for all $Q \in Q_\prec$, the desired result holds.
\qedhere
\end{proof}

\subsection{Tableau normal form}

Later in this section we use constructions on tableaux to prove completeness of our proof system.  The tableaux we work with have a restricted form, which we call \emph{tableau normal form} (TNF).  TNF is defined as follows.

\begin{definition}[Tableau normal form (TNF)]\label{def:tableau-normal-forms}
Let $\mathbb{T} = \tableauTrl$ be a tableau, with $\tree{T} = (\node{N}, \node{r}, p, cs)$. 
\begin{enumerate}
    \item \label{def:thinning-restricted}
    $\mathbb{T}$ is \emph{thinning-restricted} iff $\rho(\node{r}) \neq \textnormal{Thin}$ and for all $\node{n} \neq \node{r}$, $\rho(\node{n}) = \sigma Z.$ iff $\rho(p(\node{n})) = \textnormal{Thin}$.
    \item \label{def:unfolding-limited}
    $\mathbb{T}$ is \emph{unfolding-limited} iff for each definitional constant $U$ appearing in $\mathbb{T}$ there is exactly one node $\node{n}_U$ such that $\seqfm(\node{n}_U) = U$ and $\rho(\node{n}_U) = \textnormal{Un}$.
    \item \label{def:irredundant}
    $\mathbb{T}$ is \emph{irredundant} iff for each node $\node{n}$ such that $\rho(\node{n}) = \lor$ and $cs(\node{n}) = \node{n}_1\node{n}_2$, $\seqst(\node{n}_1) \cap \seqst(\node{n}_2) = \emptyset$.
    \item \label{def:TNF}
    $\mathbb{T}$ is in \emph{tableau normal form} (TNF) iff it is thinning-restricted, unfolding-limited and irredundant.
\end{enumerate}
\end{definition}
Intuitively, $\mathbb{T}$ is 
thinning-restricted if Thin is not applied to the root node, every application of the $\sigma Z.$ rule to a non-root node is immediately preceded by a single instance of Thin, and there are otherwise no other applications of Thin.  It is unfolding-limited if each definitional constant is unfolded exactly once, and irredundant if for each $\lor$-node, the state sets of the node's children are disjoint (i.e. no state can appear in both children, meaning there can be no redundant reasoning about states in the $\lor$-node).  The tableau is in TNF if it is thinning-restricted, unfolding-limited and irredundant.  

In what follows we will on occasion build new (successful) TNF tableaux out of existing (successful) TNF tableaux that are \emph{structurally equivalent}.

\begin{definition}[Structural tableau equivalence]\label{def:structurally-equivalent-tableaux}
Let $\tree{T}_1 = (\node{N}_1, \node{r}_1, p_1, cs_1)$ and $\tree{T}_2 = (\node{N}_2, \node{r}_2, p_2, cs_2)$ be finite ordered trees.
\begin{enumerate}
    \item \label{subdef:tree-isomorphism}
      Bijection $\iota \in \node{N}_1 \to \node{N_2}$ is a \emph{tree isomorphism from $\tree{T}_1$ to $\tree{T}_2$} iff it satisfies:
      \begin{enumerate}
        \item
          $\iota(\node{r}_1) = \node{r}_2$;
        \item
          for all $\node{n}_1 \in \node{N}_1$ $\iota(p_1(\node{n}_1)) = p_2(\iota(\node{n}_1))$; and
        \item
          for all $\node{n}_1 \in \node{N}_1$ $\iota(cs_1(\node{n}_1)) = cs_2(\iota(\node{n}_1))$.
      \end{enumerate}
    \item \label{subdef:isomorphic-trees}
      $\tree{T}_1$ and $\tree{T}_2$ are \emph{isomorphic} iff there exists a tree isomorphism $\iota$ from $\tree{T}_1$ to $\tree{T}_2$.  In this case we call $\iota$ a \emph{witnessing tree isomorphism from $\tree{T}_1$ to $\tree{T}_2$}.
    \item\label{subdef:structurally-equivalent-tableaus}
      Fix LTS $\T$, and let 
      \begin{align*}
        \mathbb{T}_1 &= (\tree{T}_1, \rho_1, \T, \V_1, \lambda_1)\\
        \mathbb{T}_2 &= (\tree{T}_2, \rho_2, \T, \V_2, \lambda_2)
      \end{align*} 
      be tableaux.  Then $\mathbb{T}_1$ and $\mathbb{T}_2$ are \emph{structurally equivalent} iff $\tree{T}_1$ and $\tree{T}_2$ are isomorphic, with witnessing tree isomorphism $\iota$ from $\tree{T}_1$ to $\tree{T}_2$, and the following hold for all $\node{n}_1 \in \node{N}_1$.
      \begin{enumerate}
          \item
            $\rn(\rho_1(\node{n}_1)) = \rn(\rho_2(\iota(\node{n}_1)))$.
          \item
            $\seqfm(\lambda_1(\node{n}_1)) = \seqfm(\lambda_2(\iota(\node{n}_1)))$
          \item
            $\seqdl(\lambda_1(\node{n}_1)) = \seqdl(\lambda_2(\iota(\node{n}_1)))$
      \end{enumerate}
      In this case we refer to $\iota$ as a \emph{structural tableau morphism} from $\mathbb{T}_1$ to $\mathbb{T}_2$.
\end{enumerate}
\end{definition}

\noindent
The definitions of tree isomorphism and isomorphic trees are standard.  Two tableaux are structurally equivalent if they are ``almost isomorphic", in the standard sense.  Specifically, their trees must be isomorphic, and isomorphic nodes in the two trees must have the same proof rule applied to them, although they may have different witness functions if the rule involved is $\dia{K}$.  Sequents labeling isomorphic nodes may differ in their valuations, and the set of states mentioned in the sequents, although the formulas and definition lists must be the same.  Intuitively, structurally equivalent tableaux may be seen as employing the same reasoning, but on slightly different, albeit similar, sequents.

Call a tableau \emph{diamond-leaf-free} if it contains no diamond leaves; recall that that any successful tableau must be diamond-leaf-free.  The next lemma establishes that if two diamond-leaf-free TNF tableaux have root sequents $\seq{s}_1$ and $\seq{s}_2$ such that $\seqfm(\seq{s}_1) = \seqfm(\seq{s}_2)$ and $\seqdl(\seq{s}_1) = \seqdl(\seq{s}_2) = \emptyL$, then the tableaux must be structurally equivalent.  It relies on an assumption that we make throughout this section:  that definitional constants as introduced in the $\sigma Z.$ rule are generated uniformly.  That is, if the sequent labeling a node has form $S \tnxTV{\Delta} \sigma Z.\Phi$ and rule $\sigma Z.$ is applied, then a given definitional constant $U$ depending only on $\sigma Z.\Phi$ and $\Delta$ is introduced, with the child sequent $S \tnxTV{\Delta \cdot (U = \sigma Z.\Phi)} U$ being generated.

\begin{lemma}[Structural equivalence of TNF tableaux]\label{lem:structural-equivalence-of-TNF-tableaux}
Let $\seq{s}_1 = S_1 \tnxT{\V_1}{\emptyL} \Phi$ and $\seq{s}_2 = S_2 \tnxT{\V_2}{\emptyL} \Phi$ be sequents, with $\mathbb{T}_1$ and $\mathbb{T}_2$ diamond-leaf-free TNF tableaux for $\seq{s}_1$ and $\seq{s}_2$, respectively.  Then $\mathbb{T}_1$ and $\mathbb{T}_2$ are structurally equivalent.
\end{lemma}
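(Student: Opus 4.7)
The plan is to construct a structural tableau morphism $\iota$ from $\mathbb{T}_1$ to $\mathbb{T}_2$ by a level-by-level matching starting from $\iota(\node{r}_1) = \node{r}_2$, maintaining at each stage the invariant that every matched pair $(\node{n}_1, \iota(\node{n}_1))$ satisfies $\seqfm(\node{n}_1) = \seqfm(\iota(\node{n}_1))$ and $\seqdl(\node{n}_1) = \seqdl(\iota(\node{n}_1))$. The base case is immediate since both roots carry the formula $\Phi$ with empty definition list. The extension step has three obligations at each matched pair: (i) show the two nodes are simultaneously leaves or simultaneously internal; (ii) show that if internal, the rule names coincide; and (iii) match children in left-to-right order so that the invariant is preserved.

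Obligations (i) and (ii) reduce, for most top-level operators, to a case analysis on the common formula. For formulas $Z$ or $\lnot Z$ with $Z \notin \dom(\Delta)$ both nodes are necessarily free leaves; for $\Phi_1 \land \Phi_2$, $\Phi_1 \lor \Phi_2$, and $[K]\Phi_1$ the matching rule is forced syntactically; and for $\dia{K}\Phi_1$ the diamond-leaf-free hypothesis rules out the leaf alternative regardless of how the state sets differ, so rule name $\dia{K}$ is forced on both sides (witness functions may differ, which is why Definition~\ref{def:structurally-equivalent-tableaux} only demands equality of rule names). For $\sigma Z.\Psi$, the thinning-restricted clause of TNF makes the choice between $\sigma Z.$ and Thin a function solely of whether the node is the root and, if not, whether its parent is a Thin-node; the level-by-level construction ensures these data agree on both sides. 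Obligation (iii) then holds automatically because the child sequents have formulas and definition lists determined by the rule and the parent's data; the uniform-generation assumption on definitional constants ensures that the $\sigma Z.$-rule produces the same $U$ and hence the same extended $\Delta \cdot (U = \sigma Z.\Psi)$ on both sides.

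The main obstacle, and the heart of the argument, is the case $\seqfm = U$ for $U \in \dom(\Delta)$: a priori such a node may either be the unique Un-node for $U$ (internal, with $\rho = \textnormal{Un}$) or a $\sigma$-leaf of it, and the classifying feature cannot directly reference $\rho$, which is exactly what we are trying to match. The plan is to prove a structural characterisation: in any diamond-leaf-free TNF tableau, the Un-node for $U$ is precisely the unique node with $\seqfm = U$ that has no strict ancestor also carrying $\seqfm = U$. The forward direction uses leaf exclusion at the child of the $\sigma Z.$-application introducing $U$: it cannot be a free leaf (since $U \in \dom(\Delta)$), a diamond leaf (wrong top-level operator), nor a $\sigma$-leaf ($U$ was just introduced as fresh, so no earlier ancestor carries it), so Un is forced. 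The reverse direction follows from the unfolding-limited condition together with the fact that every $\sigma$-leaf for $U$ must lie strictly below its companion Un-node. Because the invariant already guarantees that ancestors of $\node{n}_1$ map under the partial $\iota$ constructed so far to ancestors of $\iota(\node{n}_1)$ with matching formulas, the ancestor-based criterion evaluates identically on both sides, resolving the Un-versus-leaf dichotomy and completing the extension step.
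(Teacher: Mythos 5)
Your proposal is correct and follows essentially the same route as the paper's proof: a top-down (coinductive) construction of the structural morphism $\iota$ maintaining matching formulas and definition lists at each matched pair, with the applied rule forced by the common formula, the TNF restrictions on Thin and $\sigma Z$, and (for $\dia{K}$) diamond-leaf-freeness. Your explicit ancestor-based characterization of the unique Un-node for a definitional constant $U$ is in fact more careful than the paper's own treatment, which omits the $\textnormal{Un}$ case from its case analysis and disposes of the $U$-leaf case by assertion.
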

\begin{proof}
Fix sequents $\seq{s}_1 = S_1 \tnxT{\V_1}{\emptyL} \Phi$ and $\seq{s}_2 = S_2 \tnxT{\V_2}{\emptyL} \Phi$, and let $\mathbb{T}_1$ and $\mathbb{T}_2$ be diamond-leaf-free TNF tableaux for $\seq{s}_1$ and $\seq{s}_2$, respectively, such that $\mathbb{T}_i = (\tree{T}_i,\rho_i, \T, \V_i, \lambda_i)$, and $\tree{T}_i = (\node{N}_i,\node{r}_i,p_i,cs_i)$, for $i = 1,2$.

It suffices to construct a structural tableau morphism from $\mathbb{T}_1$ to $\mathbb{T}_2$, so that $\iota(\node{r}_1) = \node{r}_2$, and the following properties are satisfied for all $\node{n}_1 \in \node{N}_1$.
\begin{enumerate}[left=\parindent, label=S\arabic*., ref=S\arabic*]
    \item \label{stm:related-parents} $\iota(p_1(\node{n}_1)) = p_2(\iota(\node{n}_1))$;
    \item \label{stm:related-children} $\iota(cs_1(\node{n}_1)) = cs_2(\iota(\node{n}_1))$; 
    \item \label{stm:equal-rule} $\rn(\rho_1(\node{n}_1)) = \rn(\rho_2(\iota(\node{n}_1)))$;
    \item \label{stm:equal-formula} $\seqfm(\lambda_1(\node{n}_1)) = \seqfm(\lambda_2(\iota(\node{n}_1)))$; and
    \item \label{stm:equal-definition-list} $\seqdl(\lambda_1(\node{n}_1)) = \seqdl(\lambda_2(\iota(\node{n}_1)))$.
\end{enumerate}

The definition of $\iota$ is given in a co-inductive fashion (i.e., ``from the root down''). Effectively, the definition is such that when two nodes are in bijective correspondence, they have the same formula and definition list, i.e., they satisfy statements~\ref{stm:equal-formula} and~\ref{stm:equal-definition-list}.
When two nodes are in bijective correspondence and have the same formula and definition list, we show that necessarily the same proof rule is applied to both, i.e., the nodes satisfy statement~\ref{stm:equal-rule}, and the bijective correspondence can be extended to their children in a way that statement~\ref{stm:related-children} is satisfied, and furthermore, the children satisfy statements~\ref{stm:related-parents},~\ref{stm:equal-formula} and~\ref{stm:equal-definition-list}.

The construction of $\iota$ is as follows.
We begin by taking $\iota(\node{r}_1) = \node{r}_2$. We observe that $\lambda_1(\node{r}_1) = \seq{s}_1 = S_1 \tnxT{\V_1}{\emptyL} \Phi$ and $\lambda_2(\node{r}_2) = \seq{s}_2 = S_2 \tnxT{\V_2}{\emptyL} \Phi$, hence $\seqfm(\lambda_1(\node{r}_1)) = \Phi = \seqfm(\lambda_2(\node{r}_2)) = \seqfm(\lambda_2(\iota(\node{r}_1)))$ and $\seqdl(\lambda_1(\node{r}_1)) = \emptyL = \seqdl(\lambda_2(\node{r}_2)) = \seqdl(\lambda_2(\iota(\node{r}_1)))$, so statements~\ref{stm:equal-formula} and~\ref{stm:equal-definition-list} clearly hold of $\node{r}_1$. Furthermore, $p_1(\node{r}_1)$ and $p_2(\node{r}_2)$ are both undefined, so statement~\ref{stm:related-parents} trivially holds.

Now, fix $\node{n}_1$ and $\node{n}_2$ such that $\iota(\node{n}_1) = \node{n}_2$, and assume they satisfy statements~\ref{stm:related-parents},~\ref{stm:equal-formula} and~\ref{stm:equal-definition-list}.
We distinguish cases based on $\rn(\rho_1(\node{n}_1))$, the name of the rule applied to $\node{n}_1$.
\begin{itemize}
    \item $\rn(\rho_1(\node{n}_1))\div$. So, $\node{n}_1$ is a leaf. As $\mathbb{T}_1$ is diamond leaf free, $\seqfm(\lambda_1(\node{n}_1)) \in \{ Z, \neg Z, U \}$ for $Z \in \Var \setminus \dom(\seqdl(\lambda_1(\node{n}_1)))$, $U \in \seqdl(\lambda_1(\node{n}_1))$; as  $\seqfm(\lambda_1(\node{n}_1)) =  \seqfm(\lambda_2(\node{n}_2))$ and $\seqdl(\lambda_1(\node{n}_1)) =  \seqdl(\lambda_2(\node{n}_2))$, also $\rho_2(\node{n}_2))\div$, hence statement~\ref{stm:equal-rule} is satisfied.
    Since $\node{n}_1$ and $\node{n}_2$ do not have a rule applied, neither node has children, and statement~\ref{stm:related-children} is satisfied trivially.
    
    \item $\rn(\rho_1(\node{n}_1)) \in \{ \wedge, \vee, [K], \dia{K} \}$. Then,
    it follows directly from the fact that $\mathbb{T}_1$ and $\mathbb{T}_2$ are in TNF, and Definition~\ref{def:tableau-normal-forms}(\ref{def:thinning-restricted}), that $\seqfm(\lambda_1(\node{n}_1))$ and $\seqfm(\lambda_2(\node{n}_2))$ are not of the form $\sigma Z . \Phi$, and therefore, $\rn(\rho_2(\node{n}_2)) \neq \textnormal{Thin}$. Therefore, $\rn(\rho_2(\node{n}_2))$ must be dictated by the shape of $\seqfm(\lambda_2(\node{n}_2))$, and as $\seqfm(\lambda_1(\node{n}_1)) = \seqfm(\lambda_2(\node{n}_2))$ it immediately follows that $\rn(\rho_2(\node{n}_2)) = \rn(\rho_1(\node{n}_1))$, and statement~\ref{stm:equal-rule} is satisfied.
    
    Since $\rn(\rho_2(\node{n}_2)) = \rn(\rho_1(\node{n}_1))$, they have the same number of children, and we define $\iota$ such that $\iota(cs(\node{n}_1)) = cs(\node{n}_2) = cs(\iota(\node{n}_1))$. So statement~\ref{stm:related-children} is satisfied immediately. It is furthermore easy to see that these children satisfy statements~\ref{stm:related-parents},~\ref{stm:equal-formula} and~\ref{stm:equal-definition-list}.
    
    \item $\rn(\rho_1(\node{n}_1)) = \textnormal{Thin}$. Then, as $\mathbb{T}_1$ is in TNF, it follows directly from Definition~\ref{def:tableau-normal-forms}(\ref{def:thinning-restricted}) that $\seqfm(\lambda_1(\node{n}_1)) = \seqfm(\lambda_2(\node{n}_2)) = \sigma Z . \Phi$, for some $Z$ and $\Phi$, and that $\rn(\rho_1(p_1(\node{n}_1))) \neq \textnormal{Thin}$.
    According to our coinductive hypothesis, $\iota(p_1(\node{n}_1)) = p_2(\node{n}_2)$, hence $\rn(\rho_2(p_2(\node{n}_2))) = \rn(\rho_1(p_1(\node{n}_1))) \neq \textnormal{Thin}$; as $\seqfm(\lambda_2(\node{n}_2)) = \sigma Z . \Phi$ it must therefore be the case that $\rn(\rho_2(\node{n}_2)) = \rn(\rho_1(\node{n}_1)) = \textnormal{Thin}$, as rule $\sigma Z$, the other rule that could potentially be applied, is not allowed due to the restriction to TNF.
    
    Since $\rn(\rho_2(\node{n}_2)) = \rn(\rho_1(\node{n}_1))$, they have the same number of children, and we define $\iota$ such that $\iota(cs(\node{n}_1)) = cs(\node{n}_2) = cs(\iota(\node{n}_1))$. So statement~\ref{stm:related-children} is satisfied immediately. It is furthermore easy to see that these children satisfy statements~\ref{stm:related-parents},~\ref{stm:equal-formula} and~\ref{stm:equal-definition-list}.
    
    \item $\rn(\rho_1(\node{n}_1)) = \sigma Z$. Then, as $\mathbb{T}_1$ is in TNF, it follows directly from Definition~\ref{def:tableau-normal-forms}(\ref{def:thinning-restricted}) that $\seqfm(\lambda_1(\node{n}_1)) = \seqfm(\lambda_2(\node{n}_2)) = \sigma Z . \Phi$, for some $Z$ and $\Phi$, and that $\rn(\rho_1(p_1(\node{n}_1))) = \textnormal{Thin}$. According to our coinductive hypothesis, $p_1(\node{n}_1) = p_2(\node{n}_2)$, hence $\rn(\rho_2(p_2(\node{n}_2))) = \rn(\rho_1(p_1(\node{n}_1))) = \textnormal{Thin}$; as $\seqfm(\lambda_2(\node{n}_2)) = \sigma Z . \Phi$ it must therefore be the case that $\rn(\rho_2(\node{n}_2)) = \rn(\rho_1(\node{n}_1)) = \sigma Z$.
    
    Since $\rn(\rho_2(\node{n}_2)) = \rn(\rho_1(\node{n}_1))$, they have the same number of children, and we define $\iota$ such that $\iota(cs(\node{n}_1)) = cs(\node{n}_2) = cs(\iota(\node{n}_1))$. So statement~\ref{stm:related-children} is satisfied immediately. It is furthermore easy to see that these children satisfy statements~\ref{stm:related-parents},~\ref{stm:equal-formula} and~\ref{stm:equal-definition-list}. Note that in particular statements~\ref{stm:equal-formula} and~\ref{stm:equal-definition-list} require that the definitional constants are generated uniformly, so the same definitional constant is introduced in the application of the $\sigma Z$ rule in both tableaux.\qedhere
\end{itemize}
\end{proof}

\remove{
\begin{proofsketch}
It suffices to give a structural tableau morphism from $\mathbb{T}_1$ to $\mathbb{T}_2$.  This can be done co-inductively using $\tree{T}_1$ and $\tree{T}_2$, the trees embedded in $\mathbb{T}_1$ and $\mathbb{T}_2$.  The limitations imposed by TNF on the use of the Thin and Un rules ensure the desired similarities in sequents labeling isomorphic tree nodes, while the diamond-leaf-free property ensures that all leaves must be free leaves, i.e.\/ of form $Z$ or $\lnot Z$ for some $Z$ free in $\Phi$, or $\sigma$-leaves, i.e.\/ of form $U$ for some $U$ defined in the definition list of the leaf.
\qedhere
\end{proofsketch}
}

\subsection{Completeness via tableau constructions}

We now turn to proving lemmas about the existence of successful TNF tableaux for different classes of sequents. The first establishes the existence of such tableaux for valid sequents whose formulas contain no fixpoint subformulas. 

\begin{lemma}[Fixpoint-free completeness]\label{lem:fixpoint-free-completeness}
Let $\T, \V, \Phi$ and $S$ be such that $\Phi$ is fixpoint-free and $S \subseteq \semTV{\Phi}$.  Then there is a successful TNF tableau for $S \tnxTV{\emptyL} \Phi$.
\end{lemma}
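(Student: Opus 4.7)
The plan is to prove this by structural induction on $\Phi$. Since $\Phi$ is assumed to be in positive normal form (this is built into Definition~\ref{def:sequent}) and fixpoint-free, the only formulas we need to consider are $Z$, $\lnot Z$, $\Phi_1 \land \Phi_2$, $\Phi_1 \lor \Phi_2$, $[K]\Phi'$, and $\dia{K}\Phi'$. In every case I will construct the tableau by applying the rule dictated by the top-level operator of $\Phi$ at the root and invoking the induction hypothesis on the premises. Because no fixpoint subformula appears, the resulting tableau contains no applications of $\sigma Z$, $\textnormal{Un}$, or $\textnormal{Thin}$, so thinning-restrictedness and unfolding-limitedness hold vacuously; only irredundancy will require attention.

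For the base cases $\Phi = Z$ or $\Phi = \lnot Z$ with $Z \in \Var$, the tableau is a single node that is a free leaf, and the hypothesis $S \subseteq \semTV{\Phi}$ directly yields success via Definition~\ref{def:successful-tableau}(\ref{def:successful-leaf-Z}) or~(\ref{def:successful-leaf-notZ}). For $\Phi = \Phi_1 \land \Phi_2$, the hypothesis gives $S \subseteq \semTV{\Phi_i}$ for $i=1,2$, so the induction hypothesis applied to each conjunct provides the two required subtableaux. For $\Phi = [K]\Phi'$, taking $S' = \{s' \mid \exists s \in S.\ s \xrightarrow{K} s'\}$, the semantics of $[K]$ ensures $S' \subseteq \semTV{\Phi'}$, and the induction hypothesis yields a successful tableau to attach below the $[K]$-step. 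For $\Phi = \dia{K}\Phi'$, the semantics of $\dia{K}$ tells me that each $s \in S$ has some $K$-successor in $\semTV{\Phi'}$; invoking the Axiom of Choice I pick a witness function $f \in S \rightarrow \states{S}$ satisfying the side condition of the $\dia{K}$ rule and with $f(S) \subseteq \semTV{\Phi'}$, and then the induction hypothesis supplies a successful tableau for $f(S) \tnxTV{\emptyL} \Phi'$. Crucially, since $S \subseteq \semTV{\dia{K}\Phi'}$, every $s \in S$ has \emph{some} $K$-successor, so no diamond leaf is created.

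The case requiring the most care is $\Phi = \Phi_1 \lor \Phi_2$, because I must split $S$ into $S_1, S_2$ with $S_1 \cup S_2 = S$ and, for irredundancy, $S_1 \cap S_2 = \emptyset$. The natural choice is $S_1 = S \cap \semTV{\Phi_1}$ and $S_2 = S \setminus S_1$: then $S_1 \subseteq \semTV{\Phi_1}$ by construction, and since $S \subseteq \semTV{\Phi_1 \lor \Phi_2} = \semTV{\Phi_1} \cup \semTV{\Phi_2}$, every $s \in S_2$ lies outside $\semTV{\Phi_1}$ and hence inside $\semTV{\Phi_2}$. Applying the induction hypothesis to both $S_i \tnxTV{\emptyL} \Phi_i$ (possibly with empty $S_i$, which poses no issue since the base cases and modal cases handle empty state sets uniformly) produces two successful TNF tableaux, and attaching them under an $\lor$-application gives a successful TNF tableau for $S \tnxTV{\emptyL} \Phi$ in which the $\lor$-children have disjoint state sets.

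There is no serious obstacle here; the main subtlety is simply to split state sets disjointly at every $\lor$-rule application to maintain irredundancy, and to appeal to the Axiom of Choice when selecting the witness function at $\dia{K}$-rule applications. The fact that all leaves of the constructed tableau are free leaves (no diamond leaves arise thanks to the validity assumption, and no $\sigma$-leaves arise because $\Phi$ is fixpoint-free) makes the successfulness check immediate in each base case.
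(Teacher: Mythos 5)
Your proof is correct and follows essentially the same route as the paper's: structural induction on $\Phi$, with the $\lor$-case handled by the disjoint split $S_1 = S \cap \semTV{\Phi_1}$, $S_2 = S \setminus S_1$ and the $\dia{K}$-case by choosing a witness function into $\semTV{\Phi'}$. You are in fact slightly more explicit than the paper about why the TNF conditions and leaf-success conditions hold, but the substance is identical.
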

\begin{proof}
Let $\T = \lts{S}$ be an LTS of sort $\Sigma$ and $\V$ be a valuation.  The proof proceeds by structural induction on $\Phi$; the induction hypothesis states that for any subformula $\Phi'$ of $\Phi$ and $S'$ such that $S' \subseteq \semTV{\Phi'}$, $S' \tnxTV{\emptyL} \Phi'$ has a successful TNF tableau.  The argument involves a case analysis on the form of $\Phi$.  Most cases are routine and left to the reader.  We consider here the cases for $\lor$ and $\dia{K}$.

Assume $\Phi = \Phi_1 \lor \Phi_2$; let $S_1 = S \cap \semTV{\Phi_1}$ and $S_2 = S \setminus S_1$.  It is easy to see that $S_1 \subseteq \semTV{\Phi_1}$ and $S_2 \subseteq \semTV{\Phi_2}$; hence the induction hypothesis guarantees that both $S_1 \tnxTV{\emptyL} \Phi_1$ and $S_2 \tnxTV{\emptyL} \Phi_2$ have successful TNF tableaux.  Without loss of generality, assume that these tableaux have disjoint sets of proof nodes.  We obtain a successful TNF tableau for $S \tnxTV{\emptyL} \Phi$ by creating a fresh tree node labeled by this sequent and having as its left child the root of the successful TNF tableau for $S_1 \tnxTV{\emptyL} \Phi_1$ and as its right child the root of the successful TNF tableau for $S_2 \tnxTV{\emptyL} \Phi_2$.  The proof rule applied to the new node is $\lor$.  It is easy to establish that this tableau is successful and TNF.

Now assume $\Phi = \dia{K} \Phi'$, and let $f \in S \to \states{S}$ be a function such that for every $s \in S$, $s \xrightarrow{K} f(s)$ and $f(s) \in \semTV{\Phi'}$.
Such an $f$ must exist, as $S \subseteq \semTV{\dia{K}\Phi'}$ and thus for every $s \in S$ there is an $s' \in \states{S}$ such that $s \xrightarrow{K} s'$ and $s' \in \semTV{\Phi'}$.
Since $f(S) \subseteq \semTV{\Phi'}$, the induction hypothesis guarantees the existence of a successful TNF tableau for $f(S) \tnxTV{\emptyL} \Phi'$.  We now construct a successful TNF tableau for $S \tnxT{\V'}{\emptyL} \Phi$ as follows.  Create a fresh tree node labeled by $S \tnxTV{\emptyL} \Phi$, and let its only child be the root node for the successful TNF tableau for $f(S) \tnxTV{\emptyL} \Phi'$.  The rule application associated with the new node is $(\dia{K},f)$. The new tableau is clearly successful and TNF. 
\qedhere
\end{proof}

We now prove the existence of successful TNF tableaux for different classes of sequents involving fixpoint formulas.  Before doing so, we first define the notion of \emph{compliance} between a tableau and a support ordering.

\begin{definition}[Tableau compliance with $\prec$]\label{def:tableau-compliance}
Let $\T, \V, Z, \Phi, \sigma$ and $S$ be such that $S = \semTV{\sigma Z.\Phi}$.  Also let $(S, \prec)$ be a $\sigma$-compatible support ordering for $\semfTV{Z}{\Phi}$.  Then
TNF
tableau $\tableauTrl$ with root node $\node{r} = S \tnxTV{\emptyL} \sigma Z.\Phi$ and $\node{r'} = cs(\node{r}) = S \tnxTV{(U = \sigma Z.\Phi)} U$ is \emph{compliant} with $\prec$ iff whenever $s' <:_{\node{r'}} s$, $s' \prec s$.
\end{definition}

Intuitively, tableau $\mathbb{T}$ for sequent $S \tnxTV{\emptyL} \sigma Z.\Phi$ is compliant with support ordering $\prec$ iff every extended dependency from $s \in S$ to a state in a companion leaf of $\node{r}'$ is also a semantic dependency as reflected in $\prec$.
Note that since the root node of $\mathbb{T}$ is a fixpoint node and $\mathbb{T}$ is in TNF, $\rho(\node{r}) = \sigma Z.$ and thus $cs(\node{r}) = \node{r'} = S \tnxTV{U = \sigma Z.\Phi} U$ for some definitional constant $U$. Also, since $\mathbb{T}$ is TNF $\rho(\node{r}') = \text{Un}$, and $\node{r'}$ is the only node in which unfolding is applied to definitional constant $U$.

The next lemma continues the sequence of results in this section on the existence of successful TNF tableaux for valid sequents.  In this case, formulas have form $\sigma Z.\Phi$, where $\Phi$ contains no fixpoint subformulas, and have specific $\sigma$-compatible support orderings, and the result shows how successful TNF tableaux that are compliant with the given support ordering can be constructed.

\begin{lemma}[Single-fixpoint completeness]\label{lem:single-fixpoint-completeness}
Fix $\T$, and let $\Phi, Z, \V, \sigma$ and $S$ be such that $\Phi$ is fixpoint-free and $S = \semTV{\sigma Z.\Phi}$.  Also let $(S, \prec)$ be a $\sigma$-compatible, total, qwf support ordering for $\semfTV{Z}{\Phi}$.  Then $S \tnxTV{\emptyL} \sigma Z.\Phi$ has a successful TNF tableau compliant with $(S, \prec)$.
\end{lemma}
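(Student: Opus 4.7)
The plan is to build the tableau in three layers. At the root, apply rule $\sigma Z.$ to produce the unique companion node $\node{r}' = S \tnxTV{\Delta} U$ with $\Delta = (U = \sigma Z.\Phi)$; at $\node{r}'$, apply Un to produce $\node{c} = S \tnxTV{\Delta} \Phi[Z:=U]$; and beneath $\node{c}$ build an irredundant TNF partial tableau whose non-$U$-leaves are successful and whose $U$-leaves are companion leaves of $\node{r}'$. Because $\Phi$ is fixpoint-free, $\node{r}'$ is the only companion node in the whole construction, so thinning-restrictedness and unfolding-limitedness hold automatically, and $<:_{\node{r}'}$ coincides with the restriction of $\lessdot$ to companion leaves of $\node{r}'$. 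Since the Un step contributes only the identity to $\lessdot$, the compliance condition $s' <:_{\node{r}'} s \Rightarrow s' \prec s$ reduces to a compliance condition on the sub-tableau beneath $\node{c}$.

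The sub-tableau beneath $\node{c}$ is built by proving the following auxiliary claim by structural induction on the fixpoint-free subformulas $\Phi'$ of $\Phi$ in which $Z$ is positive: for any $X \subseteq \states{S}$ and representative function $\mathit{rep} \colon X \to S$ satisfying $t \in \semT{\Phi'}{\V[Z := \preimg{\prec}{\mathit{rep}(t)}]}$ for every $t \in X$, there is an irredundant TNF partial tableau rooted at $X \tnxTV{\Delta} \Phi'[Z:=U]$ whose non-$U$-leaves are successful and for which, for every $U$-leaf $\node{l}$, every $t'' \in \seqst(\node{l})$, and every $t \in X$ with $t'' \lessdot_{\node{l},\text{root}} t$, we have $t'' \prec \mathit{rep}(t)$. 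The base case $\Phi' = Z$ reduces to the tautology $t \in \preimg{\prec}{\mathit{rep}(t)}$, i.e.\ $t \prec \mathit{rep}(t)$; the other leaf cases ($Z' \neq Z$ and $\neg Z'$) are free-leaf successful by hypothesis; the Boolean cases pass $(X,\mathit{rep})$ essentially unchanged to each child, with a canonical disjoint left/right partition in the $\lor$-case to preserve irredundancy; and the modal cases propagate $\mathit{rep}$ via a $\prec$-minimum representative. Specifically, for $[K]\Phi''$, given a successor $t'$ in the child's state set, I set $\mathit{rep}'(t') = \mathit{rep}(t_\ast)$ where $t_\ast$ is any source of $t'$ whose representative $\mathit{rep}(t_\ast)$ is $\prec$-minimum (such a minimum exists since $\prec$ is a well-ordering on $S$ by Lemma~\ref{lem:total-qwo}, and the Axiom of Choice picks one $t_\ast$ per $t'$); the inductive hypothesis $t' \in \semT{\Phi''}{\V[Z := \preimg{\prec}{\mathit{rep}(t_\ast)}]}$ then follows by unfolding the $[K]$-condition at $t_\ast$. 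The $\dia{K}$-case is analogous after first fixing a per-state witness function $f \colon X \to \states{S}$ using the semantic $\dia{K}$-witnesses.

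Given this auxiliary claim, applying it with $X = S$ and $\mathit{rep}(s) = s$ supplies the sub-tableau beneath $\node{c}$: the required premise is precisely the support-ordering condition $s \in \semfTV{Z}{\Phi}(\preimg{\prec}{s})$. Compliance then lifts through the Un step to $<:_{\node{r}'}$. It remains to verify successfulness: non-$U$-leaves are successful by the claim, and every $U$-leaf is a $\sigma$-leaf of $\node{r}'$. For $\sigma = \nu$ these are trivially successful; for $\sigma = \mu$, compliance embeds $<:_{\node{r}'}$ into the well-founded $\prec$, so $<:_{\node{r}'}$ is itself well-founded by Lemma~\ref{lem:well-foundedness-chains}, making all $\mu$-leaves successful.

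The hard part will be the modal cases of the auxiliary claim, in which a single state $t'$ in a child's state set may be reached from several distinct sources in $X$ whose representatives diverge. The compliance trace from every such source must terminate at the same $U$-leaf state $t''$ while the invariant simultaneously requires $t'' \prec \mathit{rep}(t)$ for every source $t$. The $\prec$-minimum-representative construction is what threads this needle: once $\mathit{rep}'(t') = \mathit{rep}(t_\ast)$ is chosen with $\mathit{rep}(t_\ast)$ $\prec$-minimum, transitivity of $\prec$ gives $t'' \prec \mathit{rep}(t_\ast) \preceq \mathit{rep}(t) \Rightarrow t'' \prec \mathit{rep}(t)$ for every source $t$. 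Ensuring that this choice is made coherently across the entire inductive construction, and coordinated consistently with the separately-made choices of $\dia{K}$-witness functions and $\lor$-partitions, is the main technical work.
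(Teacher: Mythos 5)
Your construction is correct in its essentials but organized quite differently from the paper's. The paper first invokes the fixpoint-free completeness lemma once per state to obtain a successful TNF tableau $\mathbb{T}_s$ for $\{s\} \tnxT{\V_s}{\emptyL} \Phi$ with $\V_s = \V[Z := \preimg{{\prec}}{s}]$, observes via the structural-equivalence lemma that all these tableaux share a common tree, and then \emph{merges} them co-inductively into one tableau for $S \tnxT{\V_S}{\emptyL} \Phi$, resolving the $\lor$- and $\dia{K}$-choices for a state $s$ by deferring to the tableau $\mathbb{T}_{s'}$ of a $\prec$-pseudo-minimum element $s'$ of $I_s = \{s' \mid s \in \seqst(\lambda_{s'}(\node{n}))\}$; finally it wraps the result with the $\sigma Z$ and Un rules exactly as you do. You instead run a single structural induction on the subformulas of $\Phi$, threading a representative function $\mathit{rep} \colon X \to S$ whose invariant ($t \in \semT{\Phi'}{\V[Z := \preimg{{\prec}}{\mathit{rep}(t)}]}$, and $t'' \prec \mathit{rep}(t)$ for every dependency into a $U$-leaf) plays precisely the role of the paper's per-state tableaux plus the merging invariants I1--I3. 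The underlying device is identical in both proofs --- at every point where a child state has several sources, select the source whose representative is $\prec$-minimal and exploit totality and transitivity of $\prec$ --- but your version dispenses with the structural-equivalence machinery and the two-phase construction, at the cost of having to state and maintain the representative-function invariant through every rule case. Both are legitimate; yours is arguably the more economical write-up, though the paper's factoring through the merging construction is reused almost verbatim in the multi-fixpoint completeness proof, which is why the paper sets it up that way.

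One point needs repair: you justify the existence of the minimal representative by asserting that ``$\prec$ is a well-ordering on $S$ by Lemma~\ref{lem:total-qwo}.'' That lemma only says a total qwf relation is a quotient well-ordering; $\prec$ itself is generally \emph{not} well-founded (for $\sigma = \nu$ the maximal support ordering is $U_S$). What you actually need --- and what your transitivity step $t'' \prec \mathit{rep}(t_\ast) \preceq \mathit{rep}(t)$ tacitly uses --- is that every non-empty subset of $S$ contains a $\prec$-\emph{pseudo-minimum} element, i.e.\ a $\prec$-lower bound of the set; this is exactly Lemma~\ref{lem:qwo-pseudo-minimum}, and it suffices because a lower bound is either equal to or $\prec$-below every other element. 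Replace the well-ordering claim with an appeal to that lemma and the argument closes.
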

\remove{
\begin{proofsketch}
Fix $\T= \lts{\states{S}}$ of sort $\Sigma$, and let $\Phi, Z, \V, \sigma$ and $S$ be such that $\Phi$ is fixpoint-free and $S = \semTV{\sigma Z.\Phi}$.  
Also let $(S, \prec)$ be a $\sigma$-compatible, total, qwf support ordering for $f = \semfTV{Z}{\Phi}$.  
Since $s \in f(\preimg{{\prec}}{s})$ for all $s \in S$ it follows from Lemma~\ref{lem:fixpoint-free-completeness} that for each such $s$ there is a successful TNF tableau $\mathbb{T}_s$ for sequent $\{ s \} \tnxT{\V_s}{\emptyL} \Phi$, where $\V_s = \V [Z := \preimg{{\prec}}{s}]$.  
The proof then shows how these tableaux may be merged into a single successful TNF tableau for $S \tnxT{\V_S}{\emptyL} \Phi$, where $\V_S = \V[Z := \preimg{{\prec}}{S}]$, with the property that extended dependencies from the root to leaves labeled by $Z$ are also semantic-support relationships according to $\prec$.  
It is then shown how to convert this tableau into a successful TNF tableau for $S \tnxTV{\emptyL} \sigma Z.\Phi$ that is compliant with $(S, \prec)$.  \seeappendix \qedhere
\end{proofsketch}
}
\begin{proof}
Fix $\T = \lts{\states{S}}$ of sort $\Sigma$, and let $\Phi, Z, \V, \sigma$ and $S$ be such that $\Phi$ is fixpoint-free and $S = \semTV{\sigma Z.\Phi}$.  Also let $(S, \prec)$ be a $\sigma$-compatible, total, qwf support ordering for $f = \semfTV{Z}{\Phi}$.
We must construct a successful TNF tableau for sequent $S \tnxTV{\emptyL} \sigma Z.\Phi$ that is compliant with $(S,\prec)$. The proof consists of the following steps.
\begin{enumerate}
    \item\label{it:step-single-state-tableau}
    For each $s \in S$ we use Lemma~\ref{lem:fixpoint-free-completeness} to establish the existence of a successful TNF tableau for sequent $\{s\} \tnxT{\V_s}{\emptyL} \Phi$, where $\V_s = \V[Z := \preimg{{\prec}}{s}]$.
    \item\label{it:step-full-set-tableau}
    We then construct a successful TNF tableau for sequent $S \tnxT{\V_S}{\emptyL} \Phi$, where $\V_S = \V[Z := \preimg{{\prec}}{S}]$, from the individual tableaux for the $s \in S$.
    \item\label{it:step-fixpoint-tableau}
    We convert the tableau for $S \tnxT{\V_s}{\emptyL} \Phi$ into a successful TNF tableau for $S \tnxTV{\emptyL} \sigma Z.\Phi$ that is compliant with $\prec$.
\end{enumerate}

We begin by noting that $S = \sigma f$, and that since $(S,\prec)$ is a support ordering for $f$, it is the case that $s \in f(\preimg{{\prec}}{s})$ for every $s \in S$.  From the definition of $f$ and $\V_s$ it therefore follows that for each $s \in S$, $s \in \semT{\Phi}{\V_s}$.
Now, let $(Q_{\prec}, \sqsubseteq)$ be the quotient of $(S, {\prec})$ as given in Definition~\ref{def:relation-quotient}, with ${\sqsubset} = {\sqsubseteq^-}$ the irreflexive core of $\sqsubseteq$.  (Recall that each equivalence class $Q \in Q_\prec$ is such that that $Q \subseteq S$.)   Since $\prec$ is total and qwf Lemma~\ref{lem:total-qwo} guarantees that it is a qwo, which means that $\sqsubset$ is a well-ordering (i.e.\/ is total and well-founded) over $Q_\prec$. If $s \in S$, then we write $[s]$ for the unique $Q \in Q_\prec$ such that $s \in Q$ (i.e.\/ $[s]$ is the equivalence class of $s$ with respect to the equivalence on $S$ induced by $\prec$).  It is easy to see the following.
\[
\preimg{{\prec}}{s} = 
\bigcup_{\{s' \mid s' \prec s\}} [s']
\]
Note that if $s \prec s$ then $[s] \subseteq {\preimg{\prec}{s}}$.

\paragraph{Step~\ref{it:step-single-state-tableau} of proof outline:  construct tableau for $\{s\} \tnxT{\V_s}{\emptyL} \Phi$, where $s \in S$.}
For any $s \in S$ we have that $s \in \semT{\Phi}{\V_s}$, meaning
$\{s\} \subseteq \semTV{\Phi}$ is valid. Since $\Phi$ is fixpoint-free Lemma~\ref{lem:fixpoint-free-completeness} guarantees the existence of a successful TNF tableau 
\begin{align*}
\mathbb{T}_s 
&= \tableau{\tree{T}_s}{\rho_s}{\T}{\V_s}{\lambda_s}, \text{\rm where}
\\
\tree{T}_s
&= (\node{N}_s,\node{r}_s,p_s,cs_s)
\end{align*}
for $\{s\} \tnxT{\V_s}{\emptyL} \Phi$.
We now remark on some properties of $\mathbb{T}_s$.
\begin{enumerate}
    \item\label{it:obs-dependencies}
    Suppose $s'$ and $\node{n}' \in \node{N}_s$ are such that $\seqfm(\node{n}') = Z$ (so $\node{n}'$ is a leaf whose formula is $Z$, the variable bound by $\sigma$ in $\sigma Z.\Phi$) and $s' <:_{\node{n}',\node{r}_s} s$.  Then $s' \prec s$, since $s' \in \V_s(Z) = \preimg{{\prec}}{s}$.
    \item\label{it:obs-isomorphism}
    Let $s' \in S$, and let $\mathbb{T}_{s'}$ be the successful TNF tableau for $\{s'\} \tnxT{\V_{s'}}{\emptyL} \Phi$.  Lemma~\ref{lem:structural-equivalence-of-TNF-tableaux} and the fact that successful tableaux must be diamond-leaf-free guarantees that $\mathbb{T}_s$ and $\mathbb{T}_{s'}$ are structurally equivalent.
\end{enumerate}
Observation \ref{it:obs-dependencies} establishes a relationship between dependencies involving the single state in the root node of $\mathbb{T}_s$ and states in the leaves involving $Z$.
Observation \ref{it:obs-isomorphism} guarantees that the successful TNF tableaux due to Lemma~\ref{lem:fixpoint-free-completeness} are structurally equivalent, and hence isomorphic as trees, and satisfying the property that isomorphic nodes in these trees share the same formulas, definition lists ($\emptyL$ in this case) and rule names.
In what follows we use $\tree{T} = (\node{N}, \node{r}, p, cs)$, $\seqfm(\node{n})$ and $\rn(\node{n})$ for these common structures and write $\mathbb{T}_s = (\tree{T}, \rho_s, \T, \V_s, \lambda_s)$ for $s \in S$, noting that for all $s, s' \in S$, $\rn(\rho_s(\node{n})) = \rn(\rho_{s'}(\node{n})) = \rn(\node{n})$.  

\paragraph{Step~\ref{it:step-full-set-tableau} of proof outline:  construct tableau for $S \tnxT{\V_S}{\emptyL} \Phi$.}
We now construct a successful TNF tableau for $S \tnxT{\V_S}{\emptyL} \Phi$ satisfying the following:  if $s, s'$ and $\node{n}'$ are such that $\seqfm(\node{n}') = Z$ and $s' <:_{\node{n}',\node{r}} s$, then $s' \prec s$.  There are two cases to consider.
In the first case, $S = \emptyset$.  In this case, ${\prec} = \preimg{{\prec}}{S} = \emptyset$, and $\emptyset \tnxT{\V_S}{\emptyL} \Phi$ is valid and therefore, by Lemma~\ref{lem:fixpoint-free-completeness}, has a successful TNF tableau.  Define $\mathbb{T}_S$ to be this tableau.  Note that since $S = \emptyset$ $T_S$ vacuously satisfies the property involving $<:$.

In the second case, $S \neq \emptyset$; we will construct $\mathbb{T}_S = (\tree{T}, \rho_S, \T, \V_S, \lambda_S)$ that is structurally equivalent to each $\mathbb{T}_s$ for $s \in S$.
The intuition behind the construction is to ``merge" the individual tableaux $\mathbb{T}_s$ for the $s \in S$ by assigning to each node $\node{n}$ in $\mathbb{T}_S$ the set of states obtained by appropriately combining all the sets of states each individual tableau $\mathbb{T}_s$ assigns to the node.  Care must be taken with nodes involving the $\lor$ and $\dia{K}$ proof rules.

Since $\tree{T}$ $\rho$ is already given, completing the construction of $\mathbb{T}_S$ only requires that we define $\rho_S$ and $\lambda_S$, which we do so that the following invariants hold for each $\node{n} \in \node{N}$.
\begin{invariants}
  \item\label{inv:rule} 
    If $\rho_S(\node{n})$ is defined, then $\rn(\rho_S(\node{n})) = \rn(\node{n})$ and the sequents assigned by $\lambda_S$ to $\node{n}$ and its children are consistent with $\rho_S(\node{n})$.
  \item\label{inv:formula}
    $\seqfm(\lambda_S(\node{n})) = \seqfm(\node{n})$
  \item\label{inv:state-set} 
    $\seqst(\lambda_S(\node{n})) \subseteq \bigcup_{s \in S} \seqst(\lambda_s(\node{n}))$
\end{invariants}
The definitions of $\rho_S$ and $\lambda_S$ are given in a co-inductive fashion (i.e.\/ ``from the root down").  We begin by taking $\lambda_S(\node{r}) = S \tnxT{\V_S}{\emptyL} \Phi$ for root node $\node{r}$; invariants~\ref{inv:formula} and \ref{inv:state-set} clearly hold of $\lambda_S(\node{r})$.  For the co-inductive step we assume that $\node{n}$ satisfies \ref{inv:formula} and \ref{inv:state-set} and define $\rho_S(\node{n}')$ and $\lambda(\node{n}')$ for each child $\node{n}'$ of $\node{n}$ so that \ref{inv:rule} holds of $\node{n}$ and \ref{inv:formula} and \ref{inv:state-set} hold of each of the $\node{n}'$.
This is done below based on $\rn(\node{n})$, the name of the rule applied to $\node{n}$.  Note that because each $\mathbb{T}_s$ is in TNF and $\Phi$ is fixpoint-free there can be no $\node{n} \in \node{N}$ such that $\rn(\node{n}) \in \{\textnormal{Thin}, \sigma Z, \text{Un}\}$. Thus the case analysis below need not consider these possibilities.  In what follows we let $S_{\node{n}} = \seqst(\lambda_S(\node{n}))$ be the set of states in the sequent labeling $\node{n}$.
\begin{description}
    \item[$\rn(\node{n}) \bottom$.]
    In this case \ref{inv:rule} holds vacuously.  Note also that $\node{n}$ must be a leaf and therefore has no children.
    
    \item[$\rn(\node{n}) = \land$.]
    In this case $cs(\node{n}) = \node{n}_1\node{n}_2$ and $\seqfm(\node{n}) = \seqfm(\node{n}_1) \land \seqfm(\node{n}_2)$.  
    Define 
    $\rho_S(\node{n}) = \land$
    and
    $\lambda_S(\node{n}_1) = S_\node{n} \tnxT{\V_S}{\emptyL} \seqfm(\node{n}_1)$ and $\lambda_S(\node{n}_2) = S_\node{n} \tnxT{\V_S}{\emptyL} \seqfm(\node{n}_2)$.
    Invariant \ref{inv:rule} clearly holds for $\node{n}$, while \ref{inv:formula} and \ref{inv:state-set} each hold for $\node{n}_1$ and $\node{n}_2$ since they do for $\node{n}$ and $\rho_S(\node{n}) = \land$.
    
    \item[$\rn(\node{n}) = \lor$.]
    In this case $cs(\node{n}) = \node{n}_1\node{n}_2$ and $\seqfm(\node{n}) = \seqfm(\node{n}_1) \lor \seqfm(\node{n}_2)$.
    Take $\rho_S(\node{n}) = \lor$.
    We now construct $S_1$ and $S_2$, the sets of states in $\node{n}_1$ and $\node{n}_2$, as follows.
    For $s \in S_\node{n}$ define 
    \[
    I_s = \{s' \in S \mid s \in \seqst(\lambda_{s'}(\node{n}))\}.
    \]
    Intuitively, $I_s$ consists of all states $s' \in S$ whose tableaux $\mathbb{T}_{s'}$ contain state $s$ in node $\node{n}$.  This set must be non-empty since $s \in S_\node{n}$ and Property~\ref{inv:state-set} holds of $\node{n}$.  Since $\prec$ is a qwo, Lemma~\ref{lem:qwo-pseudo-minimum} guarantees that $I_s$ has at least one pseudo-minimum element $s'$:  $s' \in I_s$ has the property that $s' \prec s''$ for all $s'' \in I_s$.
    Select $s'$ to be one of these pseudo-minimum elements.
    Since $\mathbb{T}_{s'}$ is successful and TNF it must be the case that either $s \in \seqst(\lambda_{s'}(\node{n}_1))$ or $s \in \seqst(\lambda_{s'}(\node{n}_2))$, but not both.  Now define the following.
    \begin{align*}
        S_{1,s}
        &=
        \begin{cases}
            \{s\} 
            & \text{if $s \in \seqst(\lambda_{s'}(\node{n}_1))$}
            \\
            \emptyset
            & \text{otherwise}
        \end{cases}
        \\
        S_{2,s}
        &= \{s\} \setminus S_{1,s}
        \\
        S_1
        &= \bigcup_{s \in S} S_{1,s}
        \\
        S_2
        &= \bigcup_{s \in S} S_{2,s}
    \end{align*}
    For any $s \in S_{\node{n}}$, since either $s \in S_{1,s}$ or $s \in S_{2,s}$, but not both, it follows that either $s \in S_1$ or $s \in S_2$, but not both.
    Therefore, $S_1 \cup S_2 = S_{\node{n}}$ and $S_1 \cap S_2 = \emptyset$.
    It can also be seen that if $s \in S_{i,s}$ then 
    $s \in \seqst(\lambda_{s'}(\node{n}_i))$ 
    and thus $s \in \bigcup_{s'' \in S} \seqst(\lambda_{s''}(\node{n}_i))$.
    Now define
    \begin{align*}
    \lambda_S(\node{n}_1) &= S_1 \tnxT{\V_S}{\emptyL} \seqfm(\node{n}_1)
    \\
    \lambda_S(\node{n}_2) &= S_2 \tnxT{\V_S}{\emptyL} \seqfm(\node{n}_2).
    \end{align*}
    Based on the definitions of $S_1$ and $S_2$ invariant~\ref{inv:rule} certainly holds for $\node{n}$, as do \ref{inv:formula} and \ref{inv:state-set} for each of $\node{n}_1$ and $\node{n}_2$.
    
    \item[$\rn(\node{n}) = [K{]}$.]
    In this case $cs(\node{n}) = \node{n}'$ and $\seqfm(\node{n}) = [K] \seqfm(\node{n}')$.  
    Set $\rho_S(\node{n}) = [K]$.
    Now let 
    \[
    S' = \{ s' \mid \exists s \in S_{\node{n}} \colon s \xrightarrow{K} s'\},
    \]
    and define $\lambda_S(\node{n}') = S' \tnxT{\V_S}{\emptyL} \seqfm(\node{n}')$.  Invariant~\ref{inv:rule} holds for $\node{n}$, while \ref{inv:formula} and \ref{inv:state-set} hold for $\node{n}'$ based on the fact that these hold by assumption for $\node{n}$.
    
    \item[$\rn(\node{n}) = \dia{K}$.]
    In this case $cs(\node{n}) = \node{n}'$ and $\seqfm(\node{n}) = \dia{K}\seqfm(\node{n}')$. 
    To define $\rho_S(\node{n})$ we first
    construct a witness function $f_\node{n} \in S_{\node{n}} \rightarrow \states{S}$ such that $s \xrightarrow{K} f_{\node{n}}(s)$ for all $s \in S_\node{n}$ and such that $f_\node{n}(S_\node{n}) \subseteq \bigcup_{s \in S}\seqst(\lambda_s(\node{n}'))$.
    This function will then be used to define the sequent labeling $\node{n}'$.
    So fix $s \in S_\node{n}$; we construct $f_{\node{n}}(s)$ based on the tableaux $\mathbb{T}_{s'}$ whose sequent for $\node{n}$ contains $s$.
    To this end, define
    \[
    I_s = \{s' \in S \mid s \in \seqst(\lambda_{s'}(\node{n})) \}.
    \]
    Intuitively, $I_s \subseteq S$ contains all states $s'$ whose tableau $\mathbb{T}_{s'}$ contains state $s$ in $\node{n}$.
    Clearly $I_s$ is non-empty and thus contains a pseudo-minimum element $s'$ (Lemma~\ref{lem:qwo-pseudo-minimum}).
    Now consider $\rho_{s'}(\node{n})$; it has form $(\dia{K},f_{\node{n},s'})$, where $f_{\node{n},s'}$ is the witness function for $\node{n}$ in tableau $\mathbb{T}_{s'}$.
    This means that $f_{\node{n}, s'} \in \seqst(\lambda_{s'}(\node{n})) \rightarrow \states{S}$ is such that $\seqst(\lambda_{s'}(\node{n}')) = f_{\node{n},s'}(\seqst(\lambda_{s'}(\node{n})))$.
    We now define $f_{\node{n}}(s) = f_{\node{n},s'}(s)$, $\rho_S(\node{n}) = (\dia{K}, f_{\node{n}})$, and $\lambda_S(\node{n}') = f_\node{n}(S_\node{n}) \tnxT{\V_S}{\emptyL} \seqfm(\node{n}')$.  It can be seen that invariant~\ref{inv:rule} holds of $\node{n}$ and that \ref{inv:formula} and \ref{inv:state-set} hold of $\node{n}'$.
\end{description}
This construction ensures that Properties~\ref{inv:rule}--\ref{inv:state-set} hold for all $\node{n}$.

To establish that $\mathbb{T}_S$ is successful we must show that every leaf in $\mathbb{T}_S$ is successful (cf.\/ Definition~\ref{def:successful-tableau}), 
which amounts to showing that for each such leaf $\node{n}$, 
$\seqst(\lambda_S(\node{n})) \subseteq \semT{\seqfm(\node{n})}{\V_S}$.  
Since $\Phi$ contains no fixpoint subformulas there are two cases to consider.
\begin{description}
    \item[$\seqfm(\node{n}) = Z$.]
        In this case the formula labeling $\node{n}$ is the bound variable $Z$ in $\sigma Z.\Phi$.  Since for each $s \in S$ $\mathbb{T}_{s}$ is successful and $\preimg{{\prec}}{s} \subseteq \preimg{{\prec}}{S}$, we have 
        $$
        \seqst(\lambda_{s}(\node{n})) 
        \subseteq \semT{\seqfm(\node{n})}{\V_s} 
        = \V_s(Z) 
        = \preimg{{\prec}}{s}.
        $$
        As invariant~\ref{inv:state-set} ensures that $\seqst(\lambda_S(\node{n})) \subseteq \bigcup_{s \in S} \seqst(\lambda_{s}(\node{n}))$, it follows that 
        $$
        \seqst(\lambda_S(\node{n})) 
        \subseteq \bigcup_{s \in S} \seqst(\lambda_{s}(\node{n}))
        = \bigcup_{s \in S} \preimg{{\prec}}{s}
        = \preimg{{\prec}}{S}
        = \V_S(Z) 
        = \semT{\seqfm(\node{n})}{\V_S}.
        $$
        Leaf $\node{n}$ is therefore successful.
    \item[$\seqfm(\node{n}) \in \{Y, \lnot Y\}$ for some $Y \neq Z$ free in $\sigma Z.\Phi$.]
        The argument is very similar to the previous case, the only difference being that for any $s \in S$, $\V_s(Y) = \V_S(Y)$ and thus $\semT{\seqfm(\node{n})}{\V_s} = \semT{\seqfm(\node{n})}{\V_S}$.
\end{description}

That $\mathbb{T}_S$ is in TNF follows from the fact that it is successful (and thus diamond-leaf-free) and that each $\mathbb{T}_s$ is successful and TNF, and from the definitions of $\rho_S$ and $\lambda_S$.

We now establish that for all leaves 
$\node{n}$ in $\mathbb{T}_S$ such that $\seqfm(\node{n}) = Z$, 
and $s_\node{n}, s_\node{r} \in S$ 
such that $s_\node{n} <:_{\node{n}, \node{r}} s_\node{r}$ in $\mathbb{T}_S$,  
$s_\node{n} \prec s_\node{r}$.  
We begin by noting that if $\node{n} = \node{r}$ then $\Phi = Z$ and $s_\node{n} <:_{\node{n}, \node{r}} s_\node{r}$ iff $s_\node{n} = s_\node{r}$.
In this case, if $\sigma = \nu$ then $S = \semTV{\nu Z.Z} = \states{S}$ and the result holds because $(S, \prec)$ is a support ordering for $f$, and thus must be reflexive since $f$ is the identity function.  If instead $\sigma = \mu$ then $S = \semTV{\mu Z.Z} = \emptyset$ and the result is vacuously true.  

Now assume that $\node{n} \neq \node{r}$. 
We start by remarking on a property that holds of all $\node{n}_1, \node{n}_2, s_1$ and $s_2$ such that $s_2 <_{\node{n}_2, \node{n}_1} s_1$ in $\mathbb{T}_S$:  for every $s \in S$ such that $s_1 \in \seqst(\lambda_s(\node{n}_1))$, either $s_2 \in \seqst(\lambda_s(\node{n}_2))$, or there exists $s' \prec s$ such  that $s_2 \in \seqst(\lambda_{s'}(\node{n}_2))$.  
That is, when $s_2 <_{\node{n}_2, \node{n}_1} s_1$ in $\mathbb{T}_S$,
meaning $\node{n}_2$ is a child of $\node{n}_1$ and $s_2$ is a direct dependent of $s_1$ in $\mathbb{T}_S$,
and $\mathbb{T}_s$ is a tableau containing $s_1$ in $\node{n}_1$,
then $s_2$ is also contained in $\node{n}_2$ of either $\mathbb{T}_s$ or $\mathbb{T}_{s'}$ for some $s' \prec s$.  This is easily observed based on the definition of $<_{\node{n}_1, \node{n}_2}$, as well as the construction of $\mathbb{T}_S$ above and its use of pseudo-minimal states in the $\lor$ and $\dia{K}$ cases.  
A simple inductive argument lifts this result to the case when $\node{n}_1 \neq \node{n}_2$ and $s_1, s_2$ are such that $s_2 <:_{\node{n}_2, \node{n}_1} s_1$ in $\mathbb{T}_S$:  for all $s \in S$ such that $s_1 \in \seqst(\lambda_s(\node{n}_1))$, either $s_2 \in \seqst(\lambda_s(\node{n}_2))$ or there exists $s' \prec s$ such that $s_2 \in \seqst(\lambda_{s'}(\node{n}_2))$.  
Now consider $\node{n}$ and $\node{r}$ as given above, and assume $s_\node{n} <:_{\node{n},\node{r}} s_\node{r}$.  
It follows from the definition of $\mathbb{T}_s$ that if $s_{\node{r}} \in \seqst(\lambda_s(\node{r}))$ then $s = s_\node{r}$, since $\seqst(\lambda_s(\node{r})) = \{s\}$.  
There are now two cases to consider.
In the first, $s_\node{n} \in \seqst(\lambda_{s}(\node{n}))$,
whence $s_\node{n} \in \V_{s}(Z)$ and $s_\node{n} \prec s = s_\node{r}$.
In the second, there is an $s' \in S$ such that $s' \prec s_\node{r}$ and $s_\node{n} \in \seqst(\lambda_{s'}(\node{n}))$.  
In this case $s_\node{n} \in \V_{s'}(Z) = \preimg{{\prec}}{s'}$, meaning that $s_\node{n} \prec s'$.  Since $\prec$ is total, and hence transitive, we have $s_\node{n} \prec s' \prec s_\node{r}$ and thus $s_\node{n} \prec s_\node{r}$.

\paragraph{Step~\ref{it:step-fixpoint-tableau} of proof outline:  construct tableau for $S \tnxT{\V}{\emptyL} \sigma Z.\Phi$.}
To complete the proof, we convert $\mathbb{T}_S$ into a tableau $\mathbb{T}_\sigma = (\tree{T}_\sigma, \rho_\sigma, \T, \V, \lambda_\sigma)$ for sequent $S \tnxTV{\emptyL} \sigma Z.\Phi$ as follows.  We create two fresh tree nodes $\node{r}_1, \node{r}_2 \not\in \node{N}$ and add these into $\tree{T}_\sigma$ along with all the nodes of $\tree{T}$.  The root of $\tree{T}_\sigma$ is taken to be $\node{r}_1$; the parent of $\node{r}_2$ is then $\node{r}_1$, while the parent of $\node{r}$, the original root of $\tree{T}$, is $\node{r}_2$.  The other nodes of $\tree{T}$ retain their parents and sibling structure from $\tree{T}$.  We also define $\rho_\sigma(\node{r}_1) = \sigma Z$ and $\rho_\sigma(\node{r}_2) = \text{Un}$; for all nodes $\node{n}$ in $\tree{T}$, $\rho_\sigma(\node{n}) = \rho_S(\node{n})$.  If $\seq{s} = S \tnxTVD \Phi'$ and $Z \in \Var$ then take
\[
    \seq{s}[Z := \Gamma] = S \tnxTV{\Delta} \Phi'[Z := \Gamma]l
\]
We now define $\lambda_\sigma$ as follows.
\[
\lambda_\sigma(\node{n}) =
\begin{cases}
    S \tnxTV{\emptyL} \sigma Z.\Phi     & \text{if $\node{n} = \node{r}_1$} \\[6pt]
    S \tnxTV{(U = \sigma Z.\Phi)} U         & \text{if $\node{n} = \node{r}_2, U$ fresh} \\[6pt]
    \lambda_S(\node{n})[Z:=U]
                                            & \text{otherwise}
\end{cases}
\]

To finish the proof we must establish that $\mathbb{T}_\sigma$ is a successful TNF tableau compliant with $(S,\prec)$.
That $\mathbb{T}_\sigma$ is a tableau follows from the fact that $\mathbb{T}_S$ is a successful tableau.
In particular, sequents $\lambda_\sigma(\node{r}_1), \lambda_\sigma(\node{r}_2)$ and $\lambda_\sigma(\node{r})$ represent valid applications of rules $\rho_\sigma(\node{r}_1)$ and $\rho_\sigma(\node{r}_2)$.
Moreover, since $\lambda_\sigma(\node{n}) = S_\node{n} \tnxTV{(U = \sigma Z.\Phi)} \Phi_n[Z:=U]$, where $\lambda_S(\node{n}) = S_\node{n} \tnxTV{\emptyL} \Phi_n$, it can be seen that for each non-leaf $\node{n}$, $\lambda_\sigma(\node{n})$ and $\lambda_\sigma(cs(\node{n}))$ are consistent with  rule application $\rho_\sigma(\node{n})$, based on the structure of $\mathbb{T}_S$.
Finally, consider $\sigma$-leaf $\node{n}$ in $\mathbb{T}_\sigma$; that is, $\lambda_\sigma(\node{n}) = S_\node{n} \tnxTV{(U = \sigma Z.\Phi)} U$.  Since $\mathbb{T}_S$ is successful we have that $\node{n}$ is successful and that $\lambda_S(\node{n}) = S_\node{n} \tnxTV{\emptyL} Z$, meaning that $S_\node{n} \subseteq \V_S(Z) = \preimg{{\prec}}{S} \subseteq S$.  Since $\node{r}_2$ is the only node with an application of rule Un, it must be the companion node of $\node{n}$. Since $\seqst(\lambda_\sigma(\node{r}_2)) = S$, 
$\node{n}$ is terminal, and $\mathbb{T}_\sigma$ is indeed a tableau.

The fact that $\mathbb{T}_S$ is TNF and that definitional constant $U$ is unfolded only once in $\mathbb{T}_\sigma$ guarantees that $\mathbb{T}_\sigma$ is TNF.

We now argue that $\mathbb{T}_\sigma$ is compliant with $(S, \prec)$.  To this end, suppose that $s_1, s_2 \in S$ are such that $s_2 <:_{\node{r}_2} s_1$ in $\mathbb{T}_\sigma$; we must show that $s_2 \prec s_1$.
This follows immediately from the fact that $s_2 <:_{\node{r}_2} s_1$ in $\mathbb{T}_\sigma$ iff there is a leaf $\node{n}$ such that $\seqfm(\lambda_S(\node{n})) = Z$, $s_2 \in \seqst(\lambda_S(\node{n}))$, and $s_2 <:_{\node{n},\node{r}} s_1$ in $\mathbb{T}_S$.  Previous arguments then establish that $s_2 \prec s_1$.

To establish that $\mathbb{T}_\sigma$ is successful we must show that each of its leaves is successful.  Suppose $\node{n}$ is a non-$\sigma$-leaf leaf; that is, $\seqfm(\lambda_\sigma(\node{n})) \neq U$. In this case $\lambda_\sigma(\node{n}) = \lambda_S(\node{n})$, and the success of $\mathbb{T}_S$ and all its leaves guarantees the success of this leaf.  Now suppose that $\node{n}$ is a $\sigma$-leaf, meaning that $\seqfm(\lambda_\sigma(\node{n})) = U$.  If $\sigma = \nu$ then this leaf is successful.  If $\sigma = \mu$ then we note that for $(S, \prec)$ to be a support ordering for $S = \mu f$, $\prec$ must be well-founded.  Compliance of $\mathbb{T}_\sigma$ with $\prec$ guarantees that $<:_{\node{r}_2}$ is also well-founded, and thus $\node{n}$ is successful in this case also.
This completes the proof.
\qedhere
\end{proof}

\noindent
As an immediate corollary, we have the following.

\begin{corollary}\label{cor:single-fixpoint-completeness}
Fix $\T$, and let $\Phi, Z, \V, \sigma$ and $S$ be such that $\Phi$ is fixpoint-free and $S = \semTV{\sigma Z.\Phi}$.  Then $S \tnxTV{\emptyL} \sigma Z.\Phi$ has a successful tableau.
\end{corollary}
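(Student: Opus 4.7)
The plan is to derive this corollary as an immediate consequence of Lemma~\ref{lem:single-fixpoint-completeness} by exhibiting, for each value of $\sigma \in \{\mu,\nu\}$, a support ordering of the type required by that lemma, namely a $\sigma$-compatible, total, qwf support ordering for $\semfTV{Z}{\Phi}$.

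First I would invoke Lemma~\ref{lem:mu-calculus-semantic-monotonicity}, which tells us that since $Z$ is positive in $\Phi$ (as $\sigma Z.\Phi$ is well-formed), the function $f = \semfTV{Z}{\Phi} \in 2^{\states{S}} \to 2^{\states{S}}$ is monotonic over the subset lattice. By Lemma~\ref{lem:fixpoint-characterizations} we have $S = \semTV{\sigma Z.\Phi} = \sigma f$, so $S$ is exactly the corresponding fixpoint of $f$. Corollary~\ref{cor:support-fixpoints} then supplies support orderings on $S$ for $f$ in both cases: when $\sigma = \nu$, $(S, U_S)$ is a support ordering with $U_S = S \times S$; when $\sigma = \mu$, there is a well-ordering ${\prec} \subseteq S \times S$ such that $(S, \prec)$ is a support ordering.

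It then remains to verify that each of these support orderings is $\sigma$-compatible, total, and quotient well-founded. The $\nu$ case is trivial: $U_S$ is reflexive, hence total; its quotient has a single equivalence class $S$, so the induced strict partial order on the quotient is empty and therefore well-founded, making $U_S$ qwf; and $\nu$-compatibility holds by definition for every support ordering when $\sigma = \nu$. The $\mu$ case is equally routine: a well-ordering is by definition total and well-founded, and any well-founded relation is qwf (since well-foundedness of the relation implies well-foundedness of the induced strict order on singleton equivalence classes); $\mu$-compatibility is precisely well-foundedness of $\prec$.

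With these properties in hand, Lemma~\ref{lem:single-fixpoint-completeness} applies directly and yields a successful TNF tableau for $S \tnxTV{\emptyL} \sigma Z.\Phi$, which is in particular a successful tableau, as required. I do not expect any serious obstacle: the entire argument is a packaging of earlier results, the only small point requiring care being the verification that $U_S$ and a well-ordering each satisfy the somewhat stringent conjunction of properties (total, qwf, $\sigma$-compatible) demanded by Lemma~\ref{lem:single-fixpoint-completeness}.
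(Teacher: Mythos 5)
Your proposal is correct and matches the paper's own (one-line) proof, which likewise cites Lemma~\ref{lem:single-fixpoint-completeness} together with the observation that the $\sigma$-maximal support orderings of Corollary~\ref{cor:support-fixpoints} are total and qwf. One tiny quibble: ``$U_S$ is reflexive, hence total'' is not a valid inference in general, but the conclusion is immediate anyway since $U_S$ relates every pair of elements.
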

\begin{proof}
Follows from Lemma~\ref{lem:single-fixpoint-completeness} and the fact that every $\sigma$-maximal support ordering $(S, \prec)$ for $\semfTV{Z}{\Phi}$ is total and qwf.
\qedhere
\end{proof}

We now state and prove a generalization of Lemma~\ref{lem:single-fixpoint-completeness} in which the body of the fixpoint formula is allowed also to have fixpoint subformulas.

\begin{lemma}[Fixpoint completeness]\label{lem:fixpoint-completeness}
Fix $\T$, and let $\Phi, Z, \V, \sigma$ and $S$ be such that $S = \semTV{\sigma Z.\Phi}$.  Also let $(S, \prec)$ be a $\sigma$-compatible, total, qwf support ordering for $\semfTV{Z}{\Phi}$.  Then $S \tnxTV{\emptyL} \sigma Z.\Phi$ has a successful TNF tableau compliant with $(S, \prec)$.
\end{lemma}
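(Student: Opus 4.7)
The plan is to proceed by strong induction on the number $n$ of fixpoint binders appearing in $\Phi$. The base case $n = 0$ is exactly Lemma~\ref{lem:single-fixpoint-completeness}, so the work lies entirely in the inductive step.

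For the inductive step, pick a maximal fixpoint subformula $\sigma' Z'.\Gamma$ of $\Phi$ and write $\Phi = \Phi'[W := \sigma' Z'.\Gamma]$ with $W$ fresh. By Lemma~\ref{lem:nested-fixpoint-semantics}, $\semfTV{Z}{\Phi} = f [\sigma'] g$, where $f(X,Y) = \semT{\Phi'}{\V[Z,W := X,Y]}$ and $g(X,Y) = \semT{\Gamma}{\V[Z,Z' := X,Y]}$. Since $(S,\prec)$ is a $\sigma$-compatible, total, qwf support ordering for $f [\sigma'] g$, Lemma~\ref{lem:fg-support} supplies a $\sigma'$-compatible, total, qwf support ordering $(Y^*, \prec')$ for $g_{(\preimg{{\prec}}{S}, \cdot)}$ that is locally consistent with $(S,\prec)$. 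Writing $\V_s = \V[Z := \preimg{{\prec}}{s}]$ and $Y_s = \semT{\sigma'Z'.\Gamma}{\V_s}$ for each $s \in S$, local consistency yields that $(Y_s, \restrict{\prec'}{Y_s})$ is itself a $\sigma'$-compatible, total, qwf support ordering for $\semfT{Z'}{\Gamma}{\V_s}$. Because $\Gamma$ has strictly fewer fixpoint binders than $\Phi$, the induction hypothesis applies to $\sigma'Z'.\Gamma$ under $\V_s$, yielding for each $s \in S$ a successful TNF tableau $\mathbb{T}_s^{\mathrm{in}}$ for $Y_s \tnxT{\V_s}{\emptyL} \sigma'Z'.\Gamma$ compliant with $(Y_s, \restrict{\prec'}{Y_s})$. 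If $\Phi$ has several maximal fixpoint subformulas, iterating this factor-and-recurse step produces analogous inner tableaux for each of them.

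Equipped with these inner tableaux, I would then follow the three-step outline of the proof of Lemma~\ref{lem:single-fixpoint-completeness}. First, for each $s \in S$, since $s \in f(\preimg{{\prec}}{s}, Y_s)$, one can build a per-state tableau $\mathbb{T}_s$ for $\{s\} \tnxT{\V_s}{\emptyL} \Phi$ by decomposing $\Phi$ structurally as in Lemma~\ref{lem:fixpoint-free-completeness}, treating each maximal fixpoint subformula as an atomic block and grafting in the appropriate inner tableau wherever one is encountered. The graft prepends a Thin application before the grafted $\sigma'Z'.$ root to preserve thinning-restriction, uses fresh definitional constants to preserve unfolding-limitedness, and extends the definition list of the grafted subtree so as to agree with the ambient one. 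Second, I would merge the $\mathbb{T}_s$ into a single tableau for $S \tnxT{\V_S}{\emptyL} \Phi$ (with $\V_S = \V[Z := \preimg{{\prec}}{S}]$) exactly as in Lemma~\ref{lem:single-fixpoint-completeness}: Lemma~\ref{lem:structural-equivalence-of-TNF-tableaux} ensures that the $\mathbb{T}_s$ share a common tree shape, and the pseudo-minimum-based merging at $\lor$- and $\dia{K}$-nodes transfers verbatim. Third, I would prepend $\sigma Z.$ and Un to obtain the desired successful TNF tableau for $S \tnxTV{\emptyL} \sigma Z.\Phi$ compliant with $(S,\prec)$.

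The hard part will be the grafting step and the subsequent compliance check. The TNF invariants are preserved by direct bookkeeping: fresh definitional constants keep the tableau unfolding-limited, the inserted Thin rule keeps it thinning-restricted, and irredundance is inherited from the outer merging construction and from the inner tableaux, which are TNF by induction. The genuinely delicate point is compliance: one must verify that every extended dependency $s' <:_{\node{r}'} s$ in the merged tableau still implies $s' \prec s$, where $\node{r}'$ is the child of the outer $\sigma Z.$ node. Dependencies internal to a grafted $\mathbb{T}_s^{\mathrm{in}}$ stay within $Y_s$ and never reach the outer companion leaves for $\sigma Z.\Phi$, whereas dependencies that cycle through the inner fixpoint and re-emerge at outer $Z$-leaves are mediated by $\prec'$; the local consistency of $(Y^*, \prec')$ with $(S,\prec)$ is precisely what converts these $\prec'$-dependencies into $\prec$-dependencies. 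A careful induction on the structure of $<:$ in the merged tableau, paralleling the compliance argument at the end of the proof of Lemma~\ref{lem:single-fixpoint-completeness}, then closes the argument.
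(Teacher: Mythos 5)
Your overall skeleton — induction on the number of fixpoint binders, the decomposition $\Phi = \Phi'[W := \sigma'Z'.\Gamma]$ via Lemma~\ref{lem:nested-fixpoint-semantics}, and the use of Lemma~\ref{lem:fg-support} to extract a locally consistent $\sigma'$-compatible, total, qwf support ordering for the inner fixpoint — matches the paper. But from that point on you diverge, and the divergence hides a genuine gap. The paper does \emph{not} run the per-state/merge construction over all of $\Phi$ with grafted inner tableaux. It applies the induction hypothesis to the \emph{outer} formula $\sigma Z.\Phi'$ under $\V' = \V[W := S']$ (which requires a separate, nontrivial argument that $S = \semT{\sigma Z.\Phi'}{\V'}$ and that $(S,\prec)$ supports $f_{\Phi'}$ — a contradiction argument building an auxiliary ordering $\prec''$), builds the inner tableau $\mathbb{T}_\Gamma$ for $S' \tnxT{\V_S}{\emptyL} \sigma'Z'.\Gamma$ separately, and grafts it exactly once, at the unique $W$-leaf, via Thin. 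The per-class merging with companion nodes is thereby confined to the inner construction.

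The gap in your route is the claim that the merging of Lemma~\ref{lem:single-fixpoint-completeness} ``transfers verbatim.'' That merging is stated and proved only for \emph{fixpoint-free} bodies: its case analysis explicitly excludes nodes with $\rn(\node{n}) \in \{\textnormal{Thin}, \sigma Z, \textnormal{Un}\}$, and — more importantly — its success argument never has to re-verify $\mu$-leaves, because there are none below the root. In your merged tableau for $S \tnxT{\V_S}{\emptyL} \Phi$ the grafted inner tableaux contribute companion nodes $\node{m}$ with $\seqst(\lambda_S(\node{m})) \supseteq \bigcup_{s} \seqst(\lambda_s(\node{m}))$, and well-foundedness of $<:_{\node{m}}$ in the merged tableau does \emph{not} follow from well-foundedness in each $\mathbb{T}_s$: a descending chain can hop between the contributions of different $s$. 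Ruling this out is precisely the delicate argument the paper carries out when merging the $\mathbb{T}_{\Gamma,Q}$ into $\mathbb{T}'_\Gamma$: one shows that a dependency leaving the contribution of a class $Q \in Q_\prec$ can only land in a class $Q' \sqsubset Q$ (this is what the pseudo-minimum choices at $\lor$- and $\dia{K}$-nodes buy), and then combines well-foundedness of each per-class $<:_{\node{m}}$ with well-foundedness of $\sqsubset$ on $Q_\prec$. Your proposal needs this argument but never states it; without it, success of the inner $\mu$-leaves of the merged tableau is unestablished. Your intuition for the compliance property (local consistency converting $\prec'$-dependencies into $\prec$-dependencies) is the right one for the paper's Property~G2, but it addresses only the outer $Z$-leaf dependencies, not the internal well-foundedness of the merged inner companion orderings.
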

\remove{
\begin{proofsketch}
Fix $\T = \lts{\states{S}}$ of sort $\Sigma$, and let $\Phi, Z, \V, \sigma$ and $S$ be such that $S = \semTV{\sigma Z.\Phi}$.  Also let $(S, \prec)$ be a $\sigma$-compatible, total, qwf support ordering for $\semfTV{Z}{\Phi}$.  The proof proceeds by strong induction on the number of fixpoint subformulas in $\Phi$.  If $\Phi$ contains no fixpoint subformulas then the result follows from Lemma~\ref{lem:single-fixpoint-completeness}.   
If $\Phi$ does contain fixpoint subformulas, then we select a maximal such formula of form $\sigma' Z'.\Gamma$ and use Lemma~\ref{lem:nested-fixpoint-semantics} to decompose $\sigma Z.\Phi$ into $\sigma Z.\Phi'$ and $\sigma'Z'.\Gamma$, with associated semantic functions $f$ and $g$, such that $\semTV{\sigma Z.\Phi} = \sigma (f [\sigma'] g)$.  
From this we use Lemma~\ref{lem:fg-support} to obtain appropriate $\sigma'$-compatible, total, qwf support orderings for $g_{(\cdot, \sigma')}(Q) $ from $(S,\prec)$ for equivalence classes $Q \subseteq S$ in the quotient of $(S, \prec)$. 
The induction hypothesis then guarantees successful TNF tableaux involving $\sigma Z.\Phi'$ and $\sigma'Z'.\Gamma$; we then give constructions for combining these tableaux into a successful TNF tableau for $S \tnxTV{\emptyL} \sigma Z.\Phi$ that is compliant with $(S,\prec)$.  The detailed proof is included in the appendix. \qedhere
\end{proofsketch}
}
\begin{proof}
Fix $\T = \lts{\states{S}}$ of sort $\Sigma$.  We prove the following: for all $\Phi$, and $Z, \V, \sigma$ and $S$ with
$S = \semTV{\sigma Z.\Phi}$,
and $\sigma$-compatible, total qwf support ordering $(S,\prec)$ for $\semfTV{Z}{\Phi}$, $S \tnxTV{\emptyL} \sigma Z.\Phi$ has a successful TNF tableau $\mathbb{T}_\Phi$ that is compliant with $(S,\prec)$.
To simplify notation we use the following abbreviations.
\begin{align*}
f_\Phi      &= \semfTV{Z}{\Phi}
\\
\V_X        &= \V[Z := \preimg{{\prec}}{X}]
\end{align*}
Note that $\V_S = \V[Z := \preimg{{\prec}}{S}]$.  When $s \in S$ we also write $\V_s$ in lieu of $V_{\{ s\}}$.

The proof proceeds by strong induction on the number of fixpoint subformulas of $\Phi$. There are two cases to consider.  In the first case, $\Phi$ contains no fixpoint formulas. Lemma~\ref{lem:single-fixpoint-completeness} immediately gives the desired result.

In the second case, $\Phi$ contains at least one fixpoint subformula.  The outline of the proof in this case is as follows.
\begin{enumerate}
    \item\label{it:step-decompose} 
    We decompose $\Phi$ into $\Phi'$, which uses a new free variable $W$, and $\sigma' Z'.\Gamma$ in such a way that $\Phi = \Phi'[W:=\sigma' Z'.
    \Gamma]$.
    \item\label{it:step-outer-tableau}
    We inductively construct a successful TNF tableau $\mathbb{T}_{\Phi'}$ for $S \tnxT{\V'}{\emptyL} \sigma Z.\Phi'$ that is compliant with $(S,\prec)$ where:
    \begin{align*}
        S'  &= \semT{\sigma'Z'.\Gamma}{\V_S} \\
        \V' &= \V[W:=S'].
    \end{align*}
    ($S'$ may be seen as the semantic content of $\sigma'Z'.\Gamma$ relevant for $\semTV{\sigma Z.\Phi}$.)
    \item\label{it:step-inner-tableau}
    We construct a successful TNF tableau $\mathbb{T}_\Gamma$ satisfying a compliance-related property for $S' \tnxT{\V_S}{\emptyL} \sigma'Z'.\Gamma$ by merging inductively constructed tableaux involving subsets of $S'$.
    \item\label{it:step-tableau-composition}
    We show how to compose $\mathbb{T}_\Phi$ and $\mathbb{T}_\Gamma$ to yield a successful TNF tableau for $S \tnxTV{\emptyL} \sigma Z.\Phi$ that is compliant with $(S,\prec)$.
\end{enumerate}
We now work through each of these proof steps.

\paragraph{Step~\ref{it:step-decompose} of proof outline:  decompose $\Phi$.}

Let $\sigma' Z'.\Gamma$ be a maximal fixpoint subformula in $\Phi$ as defined previously in this section.  Also let $W \in \Var$ be a fresh propositional variable, and define $\Phi'$ so that it contains exactly one instance of $W$ and so that
\[
\Phi = \Phi'[W := \sigma' Z'.\Gamma]
\]
($\Phi'$ is obtained by replacing one maximal instance of $\sigma' Z'.\Gamma$ in $\Phi$ by $W$.)
Note that $\Phi'$ and $\Gamma$ contain strictly fewer fixpoint subformulas than $\Phi$.  Lemma~\ref{lem:nested-fixpoint-semantics} may now be applied to conclude that $f_\Phi = f [\sigma'] g$, where
$f, g \in 2^\states{S} \times 2^\states{S} \rightarrow 2^S$ are defined as follows.
\begin{align*}
f(X,Y) &= \semT{\Phi'}{\V[Z, W := X, Y]}\\
g(X,Y)  &= \semT{\Gamma}{\V[Z, Z' := X, Y]}
\end{align*}
It is the case that $(S,\prec)$ is a $\sigma$-compatible, total qwf support ordering for $f [\sigma'] g$ since it is for $f_\Phi$ and $f_\Phi = f[\sigma'] g$.  

\paragraph{Step~\ref{it:step-outer-tableau} of proof outline: construct tableau for $S \tnxT{\V'}{\emptyL} \sigma Z.\Phi'$.}
Since $\Phi'$ has strictly fewer fixpoint subformulas than $\Phi$, we wish to apply the induction hypothesis to infer the existence of successful TNF tableau 
$\mathbb{T}_{\Phi'}$
for $S \tnxT{\V'}{\emptyL} \sigma Z.\Phi'$ that is compliant with $(S,\prec)$.  
To do this it suffices to confirm that $S = \semT{\sigma Z.\Phi'}{\V'} = \sigma f_{\Phi'}$, where $f_{\Phi'} = \semfT{Z}{\Phi'}{\V'}$,
and that $(S,\prec)$ is a support ordering for $f_{\Phi'}$ (it is already $\sigma$-compatible, total and qwf).

We begin by showing that $(S,\prec)$ is a support ordering for $f_{\Phi'}$; to do so 
we must establish that for every $s \in S$, $s \in f_{\Phi'}(\preimg{{\prec}}{s})$.  
It suffices to show that for every $s \in S$, 
$f_{\Phi}(\preimg{{\prec}}{s}) \subseteq f_{\Phi'}(\preimg{{\prec}}{s})$, as the fact that $(S,\prec)$ is a support ordering for $f_{\Phi}$ guarantees that $s \in f_{\Phi}(\preimg{{\prec}}{s}) \subseteq f_{\Phi'}(\preimg{{\prec}}{s})$. So fix $s \in S$; we reason as follows.
\begin{align*}
f_{\Phi} (\preimg{{\prec}}{s})
&= \semT{\Phi}{\V_s}
&& \text{Definition of $f_{\Phi}$}
\\
&= \semT{\Phi'[W:=\sigma'Z'.\Gamma']}{\V_s}
&& \Phi = \Phi'[W:=\sigma'Z'.\Gamma']
\\
&= \semT{\Phi'}{\V_s[W := \semT{\sigma'Z'.\Gamma}{\V_s}]}
&& \text{Lemma~\ref{lem:substitution}}
\\
&\subseteq \semT{\Phi'}{\V_s[W := \semT{\sigma'Z'.\Gamma}{\V_S}]}
&& \text{$\V_s(Y) \subseteq V_S(Y)$ all $Y$; see below}
\\
&= \semT{\Phi'}{\V_s[W := S']}
&& \text{Definition of $S'$} 
\\
&= \semT{\Phi'}{\V'_s}
&& \text{Definition of $\V'_s$}
\\
&= f_{\Phi'}(\preimg{{\prec}}{s})
&& \text{Definition of $f_{\Phi'}$}
\end{align*}
To see that $\V_s(Y) \subseteq V_S(Y)$ holds for all $Y$, note that $\V_s(Y) = \V_S(Y)$ when $Y \neq Z$ and
\[
\V_s(Z) = \preimg{{\prec}}{s} \subseteq \preimg{{\prec}}{S} = \V_S(Z).
\]
Monotonicity then guarantees that
$
\semT{\sigma'Z'.\Gamma}{\V_s} \subseteq \semT{\sigma'Z'.\Gamma}{\V_S},
$
justifying this step of argument. 

To prove that $S = \semT{\sigma Z.\Phi'}{\V'}$ note that since we just established that $(S,\prec)$ is a $\sigma$-compatible support ordering for $f_{\Phi'}$, Theorem~\ref{thm:well-supported} and Corollary~\ref{cor:support-fixpoints} guarantee that $S \subseteq S''$, where $S'' = \sigma f_{\Phi'} = \semT{\sigma Z.\Phi'}{\V'}$.
It remains to show that $S = S''$.
Suppose to the contrary that $S \neq S''$.  Since we already know that $S \subseteq S''$ it must be the case that $S \subsetneq S''$.  
Based on Theorem~\ref{thm:well-supported} and Corollary~\ref{cor:support-fixpoints} it follows that there must exist a relation ${\prec'} \subseteq S'' \times S''$ such that $(S'',\prec')$ is a $\sigma$-compatible support ordering for $f_{\Phi'}$. 
However, this yields a contradiction, because we can then construct a relation ${\prec''} \subseteq S'' \times S''$ such that $(S'', \prec'')$ is a $\sigma$-compatible support ordering for $f_{\Phi}$.  As $S = \sigma f_\Phi$ this would imply that $S'' \subseteq S$, and thus that $S = S''$.  The construction of $\prec''$ is as follows.
\[
{\prec''}
=
{\prec}
\;\cup\;
\left((\preimg{{\prec}}{S}) \times (S'' \setminus S)\right)
\;\cup\;
\{ (s_1, s_2) \mid s_2 \in S'' \setminus S \land s_1 \prec' s_2 \}
\]
Intuitively, $s_1 \prec'' s_2$ when one of the following hold.
\begin{itemize}
    \item $s_1 \prec s_2$ (in this case, $s_1, s_2 \in S$); or
    \item $s_1 \in \preimg{{\prec}}{S}$ (so $s_1 \in S$ also) and $s_2 \in S''$ but $s_2 \not\in S$; or
    \item $s_1 \prec' s_2$ and $s_1, s_2 \not\in S$
\end{itemize}
Based on this definition the following can easily be established.
\[
\preimg{{{\prec}''}}{s} =
\begin{cases}
    \preimg{{\prec}}{s} & \text{if $s \in S$} \\
    \preimg{{\prec}}{S} \cup \preimg{{\prec'}}{s}    & \text{if $s \in S''\setminus S$}
\end{cases}
\]
Note that if $\prec$ and $\prec'$ are well-founded then so is $\prec''$, so $(S, \prec'')$ must be $\sigma$-compatible.
We now show that $(S'', \prec'')$ is a support ordering for $f_{\Phi}$ by establishing that for every $s \in S''$, $s \in f_{\Phi}(\preimg{{\prec''}}{s})$.
There are two cases to consider.
When $s \in S$ this is immediate from the fact that $\preimg{{\prec''}}{s} = \preimg{{\prec}}{s}$; since $(S,\prec)$ is a support-ordering for $f_{\Phi}$ this means $s \in f_\Phi(\preimg{{\prec}}{s}) = f_\Phi(\preimg{{\prec''}}{s})$.
Now suppose $s \in S'' \setminus S$; then $\preimg{{\prec''}}{s} = \preimg{{\prec}}{S} \cup \preimg{{\prec'}}{s}$.  This implies that $\preimg{{\prec}}{S} \subseteq \preimg{{\prec''}}{s}$.  Now recall that $S' = \semT{\sigma' Z'.\Gamma}{\V_S}$, and that $\V_S = \V[Z := \preimg{{\prec}}{S}]$.  It follows that $S' \subseteq \semT{\sigma' Z'.\Gamma}{\V''}$, where $\V'' = \V[Z := \preimg{{\prec''}}{s}]$, and from this we may reason as follows.
\begin{align*}
f_{\Phi'} (\preimg{{\prec''}}{s}) 
&= \semT{\Phi'}{\V'[Z := \preimg{{\prec''}}{s}]}
\\
&= \semT{\Phi'}{\V[Z, W:= \preimg{{\prec''}}{s}], S']}
\\
&\subseteq \semT{\Phi'}{\V[Z, W := \preimg{{\prec''}}{s}, \semT{\sigma Z'.\Phi'}{\V[Z := \preimg{{\prec''}}{s}]}]}
\\
&= \semT{\Phi'[W := \sigma Z'.\Phi']}{\V[Z := \preimg{{\prec''}}{s}]}
\\
&= \semT{\Phi}{\V[Z := \preimg{{\prec''}}{s}]}
\\
&= f_\Phi(\preimg{{\prec''}}{s}).
\end{align*}
Based on this and the fact that $s \in f_{\Phi'} (\preimg{{\prec''}}{s})$, we have that
$s \in f_{\Phi} (\preimg{{\prec''}}{s})$ as well, and $(S'',\prec'')$ is a support ordering for $f_\Phi$ in addition to $f_{\Phi'}$.  We have arrived at the contradiction we set out to establish, and it must be the case that $S = S''$.  Thus $S = \semT{\sigma Z.\Phi'}{\V'}$ and $(S, \prec)$ is a $\sigma$-compatible support ordering for $f_{\Phi'} = \semfT{Z}{\Phi'}{\V'}$.
We may now apply the induction hypothesis to infer the existence of successful TNF tableau
\[
\mathbb{T}_{\Phi'} = (\tree{T}_{\Phi'}, \rho_{\Phi'}, \T, \V', \lambda_{\Phi'}),
\]
where $\tree{T}_{\Phi'} = (\node{N}_{\Phi'}, \node{r}_{\Phi'}, p_{\Phi'}, cs_{\Phi'})$, such that  $\lambda_{\Phi'}(\node{r}_{\Phi'}) = S \tnxT{\V'}{\emptyL} \sigma Z.\Phi'$ and $\mathbb{T}_{\Phi'}$ is compliant with $(S, \prec)$.
It is also easy to see that $\mathbb{T}_{\Phi'}$ contains exactly one successful leaf $\node{n}_W$ such that $\seqfm(\lambda_{\Phi'}(\node{n}_W)) = W$.

\paragraph{Step~\ref{it:step-inner-tableau} of proof outline:  construct tableau for $S' \tnxT{\V_S}{\emptyL} \sigma'Z'.\Gamma$.}

Since $\Gamma$ contains strictly fewer fixpoint subformulas than $\Phi$, the induction hypothesis also guarantees the existence of certain successful TNF tableaux involving $\sigma' Z'.\Gamma$.
We remark on these tableaux and show how to use them to construct a successful TNF tableau $\mathbb{T}_\Gamma$ for sequent $S' \tnxT{\V_S}{\emptyL} \sigma'Z'.\Gamma$ 
and also ensuring a key property, formalized below as Property~\ref{goal:dependencies}, involving extended dependencies between $s \in S$ and states in $\semT{\sigma'Z'.\Gamma}{\V_s}$.
In what follows, for any $X \subseteq S$ define
\begin{align*}
g_X &= g_{(\preimg{{\prec}}{X},\cdot)}
\\
S'_X &= \sigma' g_X
\end{align*}
That is, $g_X \in 2^{\states{S}} \to 2^{\states{S}}$ computes the semantics of $\Gamma$, which has both $Z$ and $Z'$ free, with the semantics of $Z$ fixed to be $\preimg{{\prec}}{X}$ and $Z'$ interpreted as the input provided to $g_X$.  $S'_X$ is then the $\sigma'$ fixpoint of $g_X$.  It can easily be seen that for any $X \subseteq S$, $\preimg{{\prec}}{X} \subseteq \preimg{{\prec}}{S}$ and thus
\[
S'_X = \sigma' g_X \subseteq \sigma' g_S = \semT{\sigma'Z'.\Gamma}{\V_S} = S'.
\]

In order to apply the induction hypothesis to generate successful TNF compliant tableaux for sequents involving $\sigma'Z'.\Gamma$ we also need $\sigma'$-compatible, total qwf support orderings for the values of $g_X$ we wish to consider.  
We obtain these from Lemma~\ref{lem:fg-support} as follows.  
Recall that $(S,\prec)$ is a $\sigma$-compatible total qwf support ordering for $f_\Phi$, and that $f_\Phi = f [\sigma'] g$.  
The lemma guarantees the existence of a $\sigma'$-compatible, total qwf support ordering $(S', \prec')$ for $g_S$
that is locally consistent\footnote{See Definition~\ref{def:consistent-support-orderings} for the meaning of local consistency.} with $(S,\prec)$.

Now let partial order $(Q_\prec, \sqsubseteq)$ be the quotient of $(S,\prec)$ (cf.\/ Definition~\ref{def:relation-quotient}), with $\sqsubseteq^-$ the irreflexive core of $\sqsubseteq$. 
Totality of $\prec$ guarantees that if $Q_1 \sqsubseteq Q_2$ then $\preimg{{\prec}}{Q_1} \subseteq \preimg{{\prec}}{Q_2}$.
If $x \in S$ then we write $[x] \in Q_\prec$ for the equivalence class of $x$.  Since $\prec$ is total it follows that for all $x, x'$, if $x' \in [x]$ then also $x \in [x']$ and $\preimg{{\prec}}{x} = \preimg{{\prec}}{x'}$. 

The construction we present below for $\mathbb{T}_\Gamma$ proceeds in three steps.
\begin{itemize}
    \item
    For each $Q \in Q_\prec$ we inductively construct a successful TNF tableau $\mathbb{T}_{\Gamma,Q}$ for sequent $S'_Q \tnxT{\V_Q}{\emptyL} \sigma' Z'.\Gamma$ that is compliant with a subrelation of $\prec'$.
    \item
    We then merge the individual $\mathbb{T}_{\Gamma,Q}$ to form a successful TNF tableau $\mathbb{T}'_\Gamma$ compliant with $\prec'$ whose root sequent contains as its state set the union of all the individual root-sequent state sets of the $\mathbb{T}_{\Gamma,Q}$.
    \item
    We perform a final operation to obtain $\mathbb{T}_\Gamma$.
\end{itemize}

\textit{Constructing $\mathbb{T}_{\Gamma,Q}$.}
We begin by noting that since $(S',\prec')$ is locally consistent with $(S,\prec)$ and $\preimg{{\prec}}{x} = \preimg{{\prec}}{x'}$ if $x \in [x']$ it follows that for any $Q \in Q_\prec$, $(S'_Q, \prec'_Q)$, where ${\prec'_Q} = \restrict{{\prec'}}{S'_Q}$, is a $\sigma'$-compatible, total qwf support ordering for $g_Q$.  Thus, based on the induction hypothesis there exists, for each $Q \in Q_\prec$, a successful TNF tableau 
\[
\mathbb{T}_{\Gamma,Q} = (\tree{T}_{\Gamma,Q}, \rho_{\Gamma,Q}, \T, \V_Q, \lambda_{\Gamma,Q})
\]
for $S'_Q \tnxT{\V_Q}{\emptyL} \sigma'Z'.\Gamma$,
where $\tree{T}_{\Gamma,Q} = (\node{N}_{\Gamma,Q}, \node{r}_{\Gamma,Q}, p_{\Gamma,Q}, cs_{\Gamma,Q})$, that is compliant with $\prec'_Q$.

We now note that Lemma~\ref{lem:structural-equivalence-of-TNF-tableaux} guarantees that for all $Q, Q' \in Q_\prec$, $\mathbb{T}_{\Gamma,Q}$ and $\mathbb{T}_{\Gamma,Q'}$ are structurally equivalent.
This means:
\begin{enumerate}
    \item $\tree{T}_{\Gamma,Q}$ and $\tree{T}_{\Gamma,Q'}$ are isomorphic; and
    \item For isomorphic nodes $\node{n}$ and $\node{n}'$ in 
        $\tree{T}_{\Gamma,Q}$ and $\tree{T}_{\Gamma,Q'}$, respectively, $\rn(\rho_{\Gamma,Q}(\node{n})) = \rn(\rho_{\Gamma,Q'}(\node{n}'))$,
        $\seqfm(\lambda_{\Gamma,Q}(\node{n})) = \seqfm(\lambda_{\Gamma,Q'}(\node{n}'))$, and
        $\seqdl(\lambda_{\Gamma,Q}(\node{n})) = \seqdl(\lambda_{\Gamma,Q}(\node{n}'))$.
\end{enumerate}
In other words, the only differences between these tableaux are the state sets appearing in the sequents at each tree node and the witness functions used in rule applications involving $\dia{K}$.
We consequently assume in what follows that there is a single common tree $\tree{T}_\Gamma = (\node{N}_\Gamma, \node{r}_\Gamma, p_\Gamma, cs_\Gamma)$.
We also introduce the following functions with respect to $\node{N}$ that return the common elements in the sequents and rule applications labeling the tree nodes.
\begin{align*}
    \seqfm_\Gamma(\node{n})    &- \text{the formula labeling $\node{n}$ in all the $\mathbb{T}_{\Gamma,Q}$} 
    \\
    \seqdl_\Gamma(\node{n})    &- \text{the definition list labeling $\node{n}$ in all the $\mathbb{T}_{\Gamma,Q}$}
    \\
    \rn_\Gamma(\node{n})       &- \text{the rule name in the rule application for $\node{n}$ in all the $\mathbb{T}_{\Gamma,Q}$}
\end{align*}

%
\textit{Constructing $\mathbb{T}'_\Gamma$.}
When $Q_\prec \neq \emptyset$ we can merge the non-empty set of tableaux $\{ \mathbb{T}_{\Gamma,Q} \mid Q \in Q_\prec\}$ into a single successful tableau $\mathbb{T}'_\Gamma = (\tree{T}_\Gamma, \rho'_\Gamma, \T, \V_{S}, \lambda'_\Gamma)$, sharing 
the same tree $\tree{T}_\Gamma$ and functions $\rho_\Gamma$, $\seqfm_\Gamma$ and $\seqdl_\Gamma$ as the $\mathbb{T}_{\Gamma,Q}$,
with the following properties.
\begin{enumerate}[left = \parindent, label = G\arabic*., ref = G\arabic*]
    \item\label{goal:root}
    $\lambda'_\Gamma(\node{r}_\Gamma) = \left(\bigcup_{Q \in Q_{\prec}} S'_Q \right) \tnxT{\V_S}{\emptyL} \sigma' Z'.\Gamma$.
    \item\label{goal:dependencies}
    For all $x \in S$, $y \in S'_{[x]}$, and $x' <:_{\node{n}',\node{r}_\Gamma} y$ in $\mathbb{T}'_\Gamma$ such that $\seqfm_\Gamma(\node{n}') = Z$, $x' \prec x$.
\end{enumerate}
Property~\ref{goal:dependencies} is important later in the proof, and we comment on it briefly here.
The property asserts that if $y$ is a state in the root of $\mathbb{T}'_{\Gamma,[x]}$ for some $x \in S$, and if there is an extended dependency between $y$ and some state $x'$ in a $Z$-leaf ($Z$ being the bound variable in $\sigma Z.\Phi$), then $x' \prec x$.  In other words, $y$ can only depend on states in $Z$-leaves that (semantically) support $x$.

As $\tree{T}_\Gamma, \T$ and $\V_S$ are already defined, completing the construction of $\mathbb{T}'_\Gamma$ only requires that we define $\rho'_\Gamma$ and $\lambda'_\Gamma$, which we do so that the following invariants hold for each $\node{n} \in \node{N}$.
\begin{invariants}
    \item
      If $\rho'_\Gamma(\node{n})$ is defined then the sequents assigned by $\lambda'_\Gamma$ to $\node{n}$ and its children are consistent with rule application $\rho'_\Gamma(\node{n})$.
    \item
      $\seqfm(\lambda'_\Gamma(\node{n})) = \seqfm(\node{n})$
    \item
      $\seqst(\lambda'_\Gamma(\node{n})) \subseteq \bigcup_{Q \in Q_\prec} \seqst(\lambda_{\Gamma,Q}(\node{n}))$
    \item\label{inv:definition-list}
      $\seqdl(\lambda'_\Gamma(\node{n})) = \seqdl_\Gamma(\node{n})$
\end{invariants}
These invariants are suitably updated versions of the invariants appearing in the proof of Lemma~\ref{lem:single-fixpoint-completeness}.

The definitions of $\rho'_\Gamma$ and $\lambda'_\Gamma$ are given in a co-inductive fashion (i.e.\/ ``from the root down" rather than ``the leaves up").
More specifically, the construction first assigns a sequent to $\node{r}_\Gamma$, the root of $\tree{T}_\Gamma$ that ensures that Property~\ref{goal:root} holds.  It immediately follows that Properties~\ref{inv:formula}, \ref{inv:state-set} and \ref{inv:definition-list} also hold for the root. Then for any non-leaf node $\node{n}$ whose sequent satisfies \ref{inv:formula}, \ref{inv:state-set} and \ref{inv:definition-list}, the co-induction step defines sequents for each child of $\node{n}$ so that these sequents each satisfy \ref{inv:formula}, \ref{inv:state-set} and \ref{inv:definition-list} and so that Property~\ref{inv:rule} holds for $\node{n}$.
Property~\ref{goal:dependencies} will be proved later.

We begin by defining $\lambda'_\Gamma(\node{r}_\Gamma) =  \left(\bigcup_{Q \in Q_{\prec}} S'_Q \right) \tnxT{\V_S}{\emptyL} \sigma' Z'.\Gamma$.  Property~\ref{goal:root} is immediate, as is the fact that \ref{inv:formula}, \ref{inv:state-set} and \ref{inv:definition-list} hold for $\node{r_\Gamma}$.

For the co-inductive step, suppose $\lambda'_\Gamma(\node{n}) = S_\node{n} \tnxT{\V_S}{\Delta_{\node{n}}} \Gamma_{\node{n}}$ and that invariants \ref{inv:formula}, \ref{inv:state-set} and \ref{inv:definition-list} all hold for $\lambda'_\Gamma(\node{n})$.  We define $\lambda'_\Gamma(\node{n}')$ for each child $\node{n}'$ of $\node{n}$, and $\rho'_{\Gamma}(\node{n})$, the rule application for $\node{n}$, so as to ensure that invariant \ref{inv:rule} holds for $\node{n}$ and that \ref{inv:formula}, \ref{inv:state-set} and \ref{inv:definition-list} are established for each child $\node{n}'$.  The constructions in many cases closely match those found in Lemma~\ref{lem:single-fixpoint-completeness}.
\begin{description}

    \item[$\rn_\Gamma(\node{n})\div$, or $\rn_\Gamma(\node{n}) \in \{\land,\lor, {[K]},\dia{K}\}$.]
    The constructions in this case mirror those in Lemma~\ref{lem:single-fixpoint-completeness} for $\lambda_S$; the only difference is that the definition lists in the child sequents are inherited from the parent, rather than always being $\emptyL$.  It is straightforward to see that invariant~\ref{inv:rule} holds for $\node{n}$ while \ref{inv:formula}, \ref{inv:state-set} and \ref{inv:definition-list} hold for the children of $\node{n}$.
    
    \item[$\rn_\Gamma(\node{n}) = \sigma Z''$.]
    In this case $\seqfm_\Gamma(\node{n}) = \sigma''Z''.\Gamma'$ for some $\sigma''$, $Z''$ and $\Gamma'$; $cs(\node{n}) = \node{n}'$;
    $\seqdl_\Gamma(\node{n}') = \Delta_{\node{n}'} = \Delta_{\node{n}} \cdot (U' = \seqfm_\Gamma(\node{n}))$ for some $U' \not\in \dom(\Delta_{\node{n}})$;
    and $\seqfm_\Gamma(\node{n}') = U'$.  Define $\rho'_{\Gamma}(\node{n}) = \sigma Z''$ and $\lambda'_\Gamma(\node{n}') = S_{\node{n}} \tnxT{\V_S}{\Delta_{\node{n}'}} U'$.  
    It is easy to establish that invariant~\ref{inv:rule} holds for $\node{n}$ and that \ref{inv:formula}, \ref{inv:state-set} and \ref{inv:definition-list} hold for $\node{n}'$.
    
    \item[$\rn_\Gamma(\node{n}) = \textnormal{Un}$.]
    In this case $cs(\node{n}) = \node{n}'$, 
    $\seqdl_\Gamma(\node{n}') = \Delta_\node{n}$
    and $\seqfm_\Gamma(\node{n}) = U'$ for some $U' \in \dom(\Delta_{\node{n}})$.  Let $\Delta_{\node{n}}(U') = \sigma''Z''.\Gamma'$; then $\seqfm_\Gamma(\node{n}') = \Gamma'[Z'' := U']$.  
    Define 
    $\rho'_\Gamma(\node{n}) = \text{Un}$ and
    $\lambda'_\Gamma(\node{n}') = S_{\node{n}} \tnxT{\V_S}{\Delta_{\node{n}}}
    \seqfm_\Gamma(\node{n}')$. 
    It is easy to establish that invariant~\ref{inv:rule} holds for $\node{n}$ and that \ref{inv:formula}, \ref{inv:state-set} and \ref{inv:definition-list} hold for $\node{n}'$.
    
    \item[$\rn_\Gamma(\node{n}) = \textnormal{Thin}.$]
    In this case $cs(\node{n}) = \node{n}'$,
    $\seqdl_\Gamma(\node{n}') = \Delta_{\node{n}}$ and
    $\seqfm_\Gamma(\node{n}) = \seqfm_\Gamma(\node{n}')$.
    Take $S_{\node{n}'} = \bigcup_{Q \in Q_\prec} \seqst(\lambda_{\Gamma,Q}(\node{n}'))$, and
    define 
    $\rho'_\Gamma(\node{n}) = \text{Thin}$ and
    $\lambda'_\Gamma(\node{n}') = S_{\node{n}'} \tnxT{\V_S}{\Delta_{\node{n}}} \seqfm_\Gamma(\node{n}')$. 
    Invariant~\ref{inv:rule} can be shown to hold for $\node{n}$, while \ref{inv:formula}, \ref{inv:state-set} and \ref{inv:definition-list} hold for $\node{n}'$.  
    (Indeed, a stronger version of \ref{inv:state-set} holds in this case, as $\seqst(\lambda'_\Gamma(\node{n}')) = \bigcup_{Q \in Q_\prec} \seqst_{\Gamma,Q}(\node{n}')$.)
\end{description}

We now argue that $\mathbb{T}'_\Gamma$ is a successful TNF tableau that is compliant with $\prec'$ and that Property~\ref{goal:dependencies} holds (Property~\ref{goal:root} has already been established.).
We first must establish that $\mathbb{T}'_\Gamma$ is indeed a tableau.  Because of invariant~\ref{inv:rule} it suffices to show that the sequent labeling any leaf node is terminal (cf.\/ Definition~\ref{def:tableau}(\ref{subdef:complete-tableau})).  Let $\node{n}$ be a leaf in $\tree{T}_\Gamma$; we note that $\seqfm(\node{n})$ has form either $Z''$ or $\lnot Z''$, where $Z''$ is free in $\sigma'Z'.\Gamma$, or $U'$ for some definitional constant $U' \in \dom(\seqdl(\node{n}))$.  
In the first two cases $\node{n}$ is clearly terminal.  
In the latter case we must argue that $\seqst(\lambda'_\Gamma(\node{n})) \subseteq \seqst(\lambda'_\Gamma(\node{m}))$, where $\node{m}$ is the intended companion node of $\node{n}$ (i.e.\/ the strict ancestor of $\node{n}$ such that $\seqfm_\Gamma(\node{m}) = \seqfm_\Gamma(\node{n}) = U'$).  From the definition of $\mathbb{T}'_\Gamma$ and that fact that each $\mathbb{T}_{\Gamma,Q}$ is a TNF tableau we observe the following.
\begin{enumerate}
    \item 
        $\rho'_\Gamma(\node{m}) = \text{Un}$, and $\node{m}$ is the only internal node whose formula is $U'$.
    \item
        Since $p_\Gamma(\node{m})$ is the parent of $\node{m}$, $\rho'_\Gamma(p_\Gamma(\node{m})) = \sigma Z''.$ for some $Z''$.
    \item
        Either $p_\Gamma(\node{m})$ is the root of $\tree{T}_\Gamma$ (i.e.\/ $p(\node{m}) = \node{r}_\Gamma$), or $p(p(\node{m}))$, the grandparent of $\node{m}$, is defined, and $\rho'_\Gamma(p(p(\node{m}))) = \text{Thin}$.  In either case, from the definition of the construction it can be shown that $\seqst(\lambda_\Gamma(\node{m})) = \bigcup_{Q \in Q_\prec} \seqst(\lambda_{\Gamma,Q}(\node{m}))$.
\end{enumerate}
Because each $\mathbb{T}_{\Gamma,Q}$ is a tableau  it follows that for each $Q \in Q_\prec$, $\node{n}$ is terminal in $\mathbb{T}_{\Gamma,Q}$ and thus $\seqst(\lambda_{\Gamma,Q}(\node{n})) \subseteq \seqst(\lambda_{\Gamma,Q}(\node{m}))$.
Also, since invariant~\ref{inv:state-set} holds of $\node{n}$ in $\mathbb{T}'_\Gamma$ we have that $\seqst(\lambda'_\Gamma(\node{n})) \subseteq \bigcup_{Q \in Q_\prec} \seqst(\lambda_{\Gamma,Q}(\node{n}))$.  We can now reason as follows
\[
\seqst(\lambda'_\Gamma(\node{n}))
\subseteq \bigcup_{Q \in Q_\prec} \seqst(\lambda_{\Gamma, Q}(\node{n}))
\subseteq \bigcup_{Q \in Q_\prec} \seqst(\lambda_{\Gamma,Q}(\node{m}))
=         \seqst(\lambda'_\Gamma(\node{m}))
\]
to see that $\node{n}$ is terminal in $\mathbb{T}'_\Gamma$ and thus $\mathbb{T}'_\Gamma$ is indeed a tableau.

We now need to show that $\mathbb{T}'_\Gamma$ is successful and compliant with $\prec'$, and that Property~\ref{goal:dependencies} holds of the tableau.  
Before doing that, however, we remark on properties of dependency relations in $\mathbb{T}'_\Gamma$ that will be used in the arguments to follow.  
To begin with, the following holds of all $\node{n}_1, \node{n}_2, s_1$ and $s_2$ such that $s_2 <_{\node{n}_2,\node{n}_1} s_1$ in $\mathbb{T}'_\Gamma$:
\begin{quote}
for every $Q \in Q_\prec$ such that $s_1 \in \seqst(\lambda_{\Gamma, Q}(\node{n}_1))$, either $s_2 \in \seqst(\lambda_{\Gamma,Q}(\node{n_2}))$, and thus $s_2 <_{\node{n}_2,\node{n}_1} s_1$ in $\mathbb{T}_{\Gamma,Q}$, or there exists $Q'$ such that $Q' \sqsubset Q$ such that $s_2 \in \seqst(\lambda_{\Gamma,Q'}(\node{n}_2))$.
\end{quote}
This is immediate from the definition of $<_{\node{n}_2, \node{n}_1}$ (Definition~\ref{def:local_dependency_ordering}) and $\mathbb{T}'_\Gamma$:  the need for the $Q'$ case comes from the construction used for rules $\lor$ and $\dia{K}$.%
\footnote{An analogous property is is used in the proof of Lemma~\ref{lem:single-fixpoint-completeness}.}  
An inductive argument based on the definition of $<:$ lifts this result to $\node{n}_1, \node{n}_2, s_1$ and $s_2$ such that $s_2 <:_{\node{n}_2,\node{n}_1} s_2$ in $\mathbb{T}'_\Gamma$:  for all $Q \in Q_\prec$ such that $s_1 \in \seqst(\lambda_{\Gamma,Q}(\node{n}_1))$, either $s_2 \in \seqst(\lambda_{\Gamma,Q}(\node{n}_2))$ and $s_2 <:_{\node{n}_2,\node{n}_1} s_1$ in $\mathbb{T}_{\Gamma,Q}$, or there exists $Q' \sqsubset Q$ such that $s_2 \in \seqst(\lambda_{\Gamma,Q'}(\node{n}_2))$.

We now establish that $\mathbb{T}'_\Gamma$ is successful by showing that every leaf in $\mathbb{T}'_\Gamma$ is successful (cf.\/ Definition~\ref{def:successful-tableau}), 
which amounts to showing that for each leaf $\node{n}$, 
$\seqst(\lambda'_\Gamma(\node{n})) \subseteq \semT{\seqfm_\Gamma(\node{n})}{\V_S}$.  
There are four cases to consider.
\begin{description}
    \item[$\seqfm_\Gamma(\node{n}) = Z$.]
        Analogous to the same case in the proof of Lemma~\ref{lem:single-fixpoint-completeness}.
    \item[$\seqfm_\Gamma(\node{n}) \in \{Z'', \lnot Z''\}$ for some $Z'' \neq Z$ free in $\sigma'Z'.\Gamma$.]
        Analogous to the same case in the proof of Lemma~\ref{lem:single-fixpoint-completeness}.
    \item[$\node{n}$ is a $\nu$-leaf.]
        In this case $\seqfm_\Gamma(\node{n})$ is successful by definition.
    \item[$\node{n}$ is a $\mu$-leaf.]
        Let $\node{m}$ be the companion node of $\node{n}$; we must show that $<:_\node{m}$ is well-founded in $\mathbb{T}'_\Gamma$. 
        Suppose to the contrary that this is not the case, i.e. that there is an infinite descending chain 
        $\cdots <:_\node{m} s_2 <:_\node{m} s_1$ 
        with each $s_i \in \seqst(\lambda'_\Gamma((\node{m}))$.
        From the definition of $<:$ (cf.\/ Definition~\ref{def:extended_path_ordering}) it follows that for all $j > 1$, $s_j <:_{\node{n_j},\node{m}} s_{j-1}$ in $\mathbb{T}'_\Gamma$ for some companion leaf $\node{n}_j$ of $\node{m}$ (note that $\node{n}$ is one of these $\node{n}_j$).
        Since each $\mathbb{T}_{\Gamma,Q}$ is successful we know that $<:_{\node{m}}$ in $\mathbb{T}_{\Gamma,Q}$ is well-founded for any $Q \in Q_\prec$.
        Recall also that $\sqsubset$ is a well-ordering on $Q_\prec$.
        Now consider $\cdots <:_{\node{m}} s_2 <:_{\node{m}} s_1$.  Since $s_1 \in \seqst(\lambda'_\Gamma(\node{m}))$ invariant~\ref{inv:state-set} guarantees that $s_1 \in \seqst(\lambda_{\Gamma,Q}(\node{n}_1))$ for some $Q \in Q_\prec$.
        Now consider the $s_j$, $j > 1$.
        Since each $s_j <:_{\node{n}_j,\node{m}} s_1$ in $\mathbb{T}'_\Gamma$ the preceding argument ensures that 
        either $s_j <:_{\node{n},\node{m}} s_1$ in $\mathbb{T}_{\Gamma,Q}$, and thus $s_j <:_{\node{m}} s_1$ in $\mathbb{T}_{\Gamma,Q}$, 
        or there exists $s' \in S$ such that $[s'] \sqsubset Q$ and $s_j \in \seqst(\lambda_{\Gamma,[s']}(\node{n}_j))$.
        However, $<:_{\node{m}}$ in $\mathbb{T}_{\Gamma,Q}$ is well-founded, so only finitely many of the $s_j$ can satisfy $s_j <:_{\node{m}} s_1$ in $\mathbb{T}_{\Gamma,Q}$; there must be some $j > 1$ such that $s_j \in \seqst(\lambda_{\Gamma,[s']}(\node{n}_2))$ some $[s'] \sqsubseteq Q$.  But then, for $\cdots <:_{\node{m}} s_2 <:_{\node{m}} s_1$ to be an infinite descending chain there must be an infinite descending chain in $\sqsubset$.  As $\sqsubset$ is well-founded, this is a contradiction, and $<:_\node{m}$ must be well-founded in $\mathbb{T}'_\Gamma$, meaning $\node{n}$ is a successful leaf.
\end{description}

To prove compliance of $\mathbb{T}'_\Gamma$ with $\prec'$, assume that $s_2 <:_{\node{r}_\Gamma} s_1$ in $\mathbb{T}'_\Gamma$; we must show that $s_2 \prec' s_1$.  Let $\node{n}$ be a companion leaf of $\node{r}_\Gamma$ such that $s_2 <:_{\node{n},\node{r}_\Gamma} s_1$.
From the arguments above we know that $s_1 \in \seqst(\lambda_{\Gamma,Q}(\node{r}_\Gamma))$ for some $Q \in Q_\prec$; assume further $Q$ is the minimum such element in $Q_\prec$ with respect to $\sqsubset$ (which is guaranteed to exist because $\sqsubset$ is a well-ordering on $Q_\prec$). Also, either $s_2 <:_{\node{n},\node{r}_\Gamma} s_1$ in $\mathbb{T}_{\Gamma,Q}$ or there is $Q' \sqsubset Q$ such that $s_2 \in \seqst(\lambda_{\Gamma,Q'}(\node{n}))$.  
In the former case $s_2 <:_{\node{r}_\Gamma} s_1$, and the fact that $\mathbb{T}_{\Gamma,Q}$ is successful and compliant with ${\prec'_Q} \subseteq {\prec'}$ guarantees that $s_2 \prec' s_1$.  
In the latter case, since $\node{n}$ is a companion leaf of $\node{r}_\Gamma$ we know that $s_2 \in \seqst(\lambda_{\Gamma,Q}(\node{r}_\Gamma))$ and $s_2 \in \seqst(\lambda_{\Gamma,Q'}(\node{r}_\Gamma))$.
Moreover, the fact that $Q$ is minimum ensures that $s_1 \not\in \seqst(\lambda_{\Gamma,Q'}) = S'_{Q'} = \semT{\sigma'Z'.\Gamma}{\V_{Q'}}$.
However, since $\prec'$ is locally consistent with $\prec$ means that $s_1 \not\prec' s_2$, and as $\prec'$ is total it must be that $s_2 \prec' s_1$.

We now prove Property~\ref{goal:dependencies} for $\mathbb{T}'_\Gamma$.  So fix $x \in S$, $y' \in S'_{[x]}$ and $x' <:_{\node{n}',\node{r}_\Gamma} y$ where $\seqfm_\Gamma(\node{n}') = Z$.  Note that $x' \in \preimg{{\prec}}{S}$.  We must show that $x' \prec x$. 
From facts established above we know that either $x' <:_{\node{n}',\node{r}_\Gamma} y$ in $\mathbb{T}_{\Gamma,[x]}$, or there exists a $Q \sqsubset [x]$ such that $x' \in \seqst(\lambda_{\Gamma,Q}(\node{n}'))$.  In the former case the success of $\mathbb{T}'_Q$ guarantees that $x' \in \V_{[x]}(Z) = \preimg{{\prec}}{[x]}$, meaning $x' \prec x$.  In the latter case $x' \in \V_{Q}(Z) = \preimg{{\prec}}{Q}$; since $\prec$ is total and $Q \sqsubset [x]$ it follows that $\preimg{{\prec}}{Q} \subseteq \preimg{{\prec}}{[x]}$, so $x' \in \preimg{{\prec}}{[x]}$ and $x' \prec x$.

\textit{Construction of $\mathbb{T}_\Gamma$.}
We now show how to construct successful TNF $\mathbb{T}_\Gamma$ for sequent $S' \tnxT{\V_S}{\emptyL} \sigma'Z'.\Gamma$ that is compliant with $\prec'$ and satisfies Property~\ref{goal:dependencies}.

We begin by noting that since $(S',\prec')$ is a $\sigma'$-compatible, total qwf support ordering for $g_S$, the induction hypothesis and Lemma~\ref{lem:structural-equivalence-of-TNF-tableaux} guarantee the existence of a successful TNF tableau 
$$
\mathbb{T}_{\Gamma,S} = (\tree{T}, \rho_{\Gamma,S}, \T, \V_S, \lambda_{\Gamma,S})
$$
for sequent $S' \tnxT{\V_S}{\emptyL} \sigma'Z'.\Gamma$ that is compliant with $\prec'$ and structurally equivalent to $\mathbb{T}_{\Gamma,Q}$ for any $Q \in Q_\prec$.  
There are two cases to consider.  In the first case, $S = \emptyset$.  In this case, $\mathbb{T}_{\Gamma,S}$ vacuously satisfies Property~\ref{goal:dependencies}, and we take $\mathbb{T}_\Gamma$ to be $\mathbb{T}_{\Gamma,S}$.

In the second case, $S \neq \emptyset$, and thus $Q_\prec \neq \emptyset$.  In this case it is not guaranteed that $\mathbb{T}_{\Gamma,S}$ satisfies \ref{goal:dependencies}, so we cannot take $\mathbb{T}$ to be $\mathbb{T}_{\Gamma,S}$.  Instead, we build $\mathbb{T}_\Gamma$ using a coinductive definition of $\lambda_\Gamma$ that merges $\mathbb{T}'_\Gamma$ and $\mathbb{T}_{\Gamma,S}$ in a node-by-node fashion, starting with $\node{r}_\Gamma$, the root, so that certain invariants are satisfied.  The invariants in this case are the same as \ref{inv:rule}--\ref{inv:definition-list} from construction of $\mathbb{T}'_\Gamma$, but adapted as follows.
\begin{invariants}
    \item
      If $\rho_\Gamma(\node{n})$ is defined then the sequents assigned by $\lambda_\Gamma$ to $\node{n}$ and its children are consistent with the rule application $\rho_\Gamma(\node{n})$.
    \item
      $\seqfm(\lambda_\Gamma(\node{n})) = \seqfm(\node{n})$
    \item
      $\seqst(\lambda_\Gamma(\node{n})) \subseteq \seqst(\lambda'_\Gamma(\node{n})) \cup \seqst(\lambda_{\Gamma,S}(\node{n}))$
    \item
      $\seqdl(\lambda_\Gamma(\node{n})) = \seqdl_\Gamma(\node{n})$
\end{invariants}
(That is, $\lambda'_\Gamma$ is replaced by $\lambda_\Gamma$, and $\bigcup_{Q \in Q_\prec} \seqst(\lambda_{\Gamma,Q}(\node{n}))$ is replaced by $\seqst(\lambda'_\Gamma(\node{n})) \cup \seqst(\lambda_{\Gamma,S}(\node{n}))$.)
As before, the definition of $\lambda_\Gamma$ begins by assigning a value to $\lambda_\Gamma(\node{r}_\Gamma)$ so that invariants~\ref{inv:formula}--\ref{inv:definition-list} are satisfied.  The coinductive step then assumes that $\node{n}$ satisfies these invariants and defines $\lambda_\Gamma$ for the children of $\node{n}$ and $\rho_\Gamma(\node{n})$ so that \ref{inv:rule} holds for $\node{n}$ and \ref{inv:formula}--\ref{inv:definition-list} hold for each child.

To start the construction, define $\lambda_\Gamma(\node{r}_\Gamma) = S' \tnxT{\V_S}{\emptyL} \sigma'Z'.\Gamma$.  Invariants~\ref{inv:formula}--\ref{inv:definition-list} clearly hold of $\node{r}_\Gamma$. In particular, it should be noted that 
$$
\seqst(\lambda_\Gamma(\node{r}_\Gamma)) = S' = \seqst(\lambda'_\Gamma(\node{r}_\Gamma)) \cup \seqst(\lambda_\Gamma(\node{r}_\Gamma)),
$$
since $\seqst(\lambda'_\Gamma(\node{r}_\Gamma)) = \left(\bigcup_{Q \in Q_{\prec}} S'_Q \right) \subseteq S' = \seqst(\lambda_\Gamma(\node{r}_\Gamma))$.

For the coinductive step, assume that $\node{n}$ is such that $\lambda_\Gamma(\node{n})$ satisfies \ref{inv:formula}--\ref{inv:definition-list}; we must define $\rho_\Gamma(\node{n})$ and $\lambda_\Gamma$ for the children of $\node{n}$ so that \ref{inv:rule} holds for $\lambda_\Gamma(\node{n})$ and \ref{inv:formula}--\ref{inv:definition-list} holds for each child.  As in the definition of $\mathbb{T}'_\Gamma$ the construction proceeds via a case analysis on $\rn_\Gamma(\node{n})$.  The only non-routine cases involve rules $\lor$, $\dia{K}$ and Thin.  We give these below; the other cases are left to the reader.
\begin{description}
    \item[$\rn_\Gamma(\node{n}) = \lor$.]
        In this case $cs(\node{n}) = \node{n}_1\node{n}_2$, 
        $\seqfm_\Gamma(\node{n}) = \seqfm_\Gamma(\node{n}_1) \lor \seqfm_\Gamma(\node{n}_2)$; let $\Delta = \seqdl_\Gamma(\node{n}) = \seqdl_\Gamma(\node{n}_1) = \seqdl_\Gamma(\node{n}_2)$.  
        For each $s \in \seq \seqst(\lambda_\Gamma(\node{n}))$ define the following.
        \begin{align*}
        S_{1,s} &=
            \begin{cases}
                \{s\}       & \text{if $s \in \seqst(\lambda'_\Gamma(\node{n}))$ and 
                              $s \in \seqst(\lambda'_\Gamma(\node{n}_1))$}\\
                \{s\}       & \text{if $s \not\in \seqst(\lambda'_\Gamma(\node{n}))$,
                              $s \in \seqst(\lambda_{\Gamma,S}(\node{n}))$ and         
                              $\seqst(\lambda_{\Gamma,S}(\node{n}_1))$}\\
                \emptyset   & \text{otherwise}
            \end{cases}
            \\
        S_{2,s} &= \{s\} \setminus S_{1,s}
        \end{align*}
        Define $\rho_\Gamma(\node{n}) = \lor$.
        Taking $S_1 = \bigcup_{s \in S} S_{1,s}$ and $S_2 = \bigcup_{s \in S} S_{2,s}$, we set
        \begin{align*}
        \lambda_\Gamma(\node{n}_1)  &= S_1 \tnxT{\V_S}{\Delta} \seqfm_\Gamma(\node{n}_1)\\
        \lambda_\Gamma(\node{n}_2)  &= S_2 \tnxT{\V_S}{\Delta} \seqfm_\Gamma(\node{n}_2).
        \end{align*}
        It is easy to see that invariant~\ref{inv:rule} holds for $\node{n}$, while \ref{inv:formula}--\ref{inv:definition-list} hold for $\node{n}_1$ and $\node{n}_2$.
    \item[$\rho_\Gamma(\node{n}) = \dia{K}$.]
        In this case $cs(\node{n}) = \node{n}'$, 
        $$
        \lambda_\Gamma(\node{n}) = S_\node{n} \tnxT{\V_S}{\Delta} \dia{K}\seqfm_\Gamma(\node{n}')
        $$
        for some $S_\node{n}$ and $\Delta$, and
        $\seqdl_\Gamma(\node{n}') = \Delta$.  
        We must construct a witness
        function $f_{\Gamma,\node{n}} \in S_\node{n} \to \states{S}$ such that $s \xrightarrow{K} f_{\Gamma,\node{n}}(s)$ for each $s \in S_\node{n}$ and such that $f(S_\node{n}) \subseteq \seqst(\lambda'_\Gamma(\node{n}')) \cup \seqst(\lambda_{\Gamma, S}(\node{n}'))$.  
        Since $\mathbb{T}'_\Gamma$ and $\mathbb{T}_{\Gamma,S}$ are successful it follows there are functions $f'_{\Gamma, \node{n}} \in \seqst(\lambda'_\Gamma(\node{n})) \to \states{S}$ and
        $f'_{\Gamma, S, \node{n}} \in \seqst(\lambda_{\Gamma,S}(\node{n})) \to \states{S}$
        such that:
        \begin{itemize}
            \item
                For all $s \in \seqst(\lambda'_\Gamma(\node{n}))$ $s \xrightarrow{K} f_{\Gamma,\node{n}}(s)$;
            \item 
                $\seqst(\lambda'_\Gamma(\node{n}')) = f_{\Gamma,\node{n}}(\seqst(\lambda'_\Gamma(\node{n})))$;
            \item
                For all $s \in \seqst(\lambda_{\Gamma, S}(\node{n}))$ $s \xrightarrow{K} f_{\Gamma,S,\node{n}}(s)$; and
            \item
                $\seqst(\lambda_{\Gamma,S}(\node{n}')) = f_{\Gamma,S,\node{n}}(\seqst(\lambda_{\Gamma,S}(\node{n})))$.
        \end{itemize}
        We define $f_{\Gamma,\node{n}}$ as follows.
        \[
        f_{\Gamma,\node{n}}(s)
        =
        \begin{cases}
            f'_{\Gamma,\node{n}}(s)
                & \text{if $s \in \seqst(\lambda'_\Gamma(\node{n}))$}
            \\
            f_{\Gamma,S, \node{n}}(s)
                & \text{if $s \in \seqst(\lambda_{\Gamma,S}(\node{n})) \setminus \seqst(\lambda'_\Gamma(\node{n}))$}
        \end{cases}
        \]
        Since $\node{n}$ satisfies \ref{inv:state-set} it follows that $S_\node{n} \subseteq \seqst(\lambda'_\Gamma(\node{n})) \cup \seqst(\lambda_{\Gamma,S}(\node{n}))$, and thus $f_{\Gamma,\node{n}}$ is well-defined.
        We now take $\rho_\Gamma(\node{n}) = (\dia{K},f_{\Gamma,\node{n}})$ and
        $$
        \lambda_\Gamma(\node{n}') = f_{\Gamma,\node{n}}(S_\node{n}) \tnxT{\V_S}{\Delta} \seqfm_\Gamma(\node{n}');
        $$
        It is clear that \ref{inv:rule} holds for $\node{n}$ and that \ref{inv:formula}--\ref{inv:definition-list} hold for $\node{n}'$. 
    \item[$\rho_\Gamma(\node{n}) = \text{Thin}$.]
        In this case, since both $\mathbb{T}'_\Gamma$ and $\mathbb{T}_{\Gamma,S}$ are TNF it follows that 
        $$
        \lambda_\Gamma(\node{n}) = S_\node{n} \tnxT{\V_S}{\Delta} \sigma'' Z''.\Gamma'
        $$
        for some $S, \sigma''Z''.\Gamma'$ and $\Delta$, that $\seqfm_\Gamma(\node{n}') = \sigma''Z''.\Gamma'$, and that $\seqdl_\Gamma(\node{n}') = \Delta$.
        Now define $\rho_\Gamma(\node{n}) = \text{Thin}$
        and $\lambda_\Gamma(\node{n}')$ as follows.
        $$
        \lambda_\Gamma(\node{n}') = 
        \seqst(\lambda'_\Gamma(\node{n}')) \cup \seqst(\lambda_{\Gamma,S}(\node{n}')
        \tnxT{\V_S}{\Delta}
        \sigma'' Z''.\Gamma'
        $$
        It is the case that \ref{inv:rule} holds of $\node{n}$ and that \ref{inv:formula}--\ref{inv:definition-list} hold of $\node{n}'$.
\end{description}
It can be shown that $\mathbb{T}_\Gamma$ is successful, TNF and compliant with $\prec'$ by adapting the arguments given above for $\mathbb{T}'_\Gamma$. We now establish that \ref{goal:dependencies} also holds of $\mathbb{T}_\Gamma$.  To this end, fix $x \in S$, $y \in S'_{[x]},$ and $x' <:_{\node{n}',\node{r}_\Gamma} y$, where $\seqfm(\lambda_\Gamma(\node{n}') = Z$.  From the construction of $\mathbb{T}_\Gamma$ it is easy to see in this case that $y \in \seqst(\lambda'_\Gamma(\node{n}'))$ and that $x' <:_{\node{n}',\node{r}_\Gamma} y$ in $\mathbb{T}_\Gamma$ iff $x' <:_{\node{n}',\node{r}_\Gamma} y$ in $\mathbb{T}'_\Gamma$.  Since we have established that \ref{goal:dependencies} holds for $\mathbb{T}'_\Gamma$, the desired result follows.

\paragraph{Step~\ref{it:step-tableau-composition} of proof outline:  construct tableau for $S \tnxTV{\emptyL} \sigma Z.\Phi$.}

To complete the proof we construct tableau $\mathbb{T}_\Phi = (\tree{T}_\Phi, \rho_\Phi, \T, \V, \lambda_\Phi)$, where $\tree{T}_\Phi = (\node{N}_\Phi, \node{r}_\Phi, p_\Phi, cs_\Phi)$, from $\mathbb{T}_{\Phi'}$ and $\mathbb{T}_\Gamma$ and establish that it is successful and compliant with $(S,\prec)$.  Without loss of generality we assume that $\node{N_\Phi'} \cap \node{N}_\Gamma = \emptyset$.  We begin by defining $\tree{T}_\Phi$ as follows, where $\node{n}_W \in \node{N}_{\Phi'}$ is the unique (leaf) node in $\tree{T}_{\Phi'}$ such that $\seqfm(\lambda_{\Phi'}(\node{n}_W)) = W$.
\begin{itemize}
    \item $\node{N}_\Phi = \node{N}_{\Phi'} \cup \node{N}_\Gamma$
    \item $\node{r}_\Phi = \node{r}_{\Phi'}$
    \item (Partial) parent function $p_\Phi \in \node{N}_\Phi \rightarrow \node{N}_\Phi$ is given as follows.
        \[
        p_\Phi(\node{n}) =
        \begin{cases}
            p_{\Phi'}(\node{n}) & \text{if $\node{n} \in \node{N}_{\Phi'}$}\\
            \node{n}_W          & \text{if $\node{n} = \node{r}_\Gamma$}\\
            p_{\Gamma}(\node{n}) & \text{otherwise}
        \end{cases}
        \]
    \item Child-ordering function $cs_\Phi \in \node{N}_\Phi \to (\node{N}_\Phi)^*$ is given as follows.
        \[
        cs_\Phi(\node{n}) =
        \begin{cases}
            \node{r}_\Gamma       & \text{if $\node{n} = \node{n}_W$} \\
            cs_{\Phi'}(\node{n})  & \text{if $\node{n} \in \node{N}_{\Phi'} \setminus \{\node{n}_W\}$}\\
            cs_\Gamma (\node{n})   & \text{otherwise}
        \end{cases}
        \]
\end{itemize}
This construction creates $\tree{T}_\Phi$ by in effect inserting $\tree{T}_\Gamma$ into $\tree{T}_{\Phi'}$ as the single subtree underneath $\node{n}_W$.

To finish the construction of $\mathbb{T}_\Phi$ we need to specify $\rho_\Phi$ and $\lambda_\Phi$.  The first of these is given as follows.
\[
\rho_\Phi(\node{n}) =
\begin{cases}
    \rho_{\Phi'}(\node{n})
    & \text{if $\node{n} \in \node{N}_{\Phi'} \setminus \{\node{n}_W\}$}
    \\
    \text{Thin}
    & \text{if $\node{n} = \node{n}_W$}
    \\
    \rho_\Gamma(\node{n})
    & \text{otherwise}
\end{cases}
\]
In this definition all nodes inherit their rule applications from $\mathbb{T}_{\Phi'}$ and $\mathbb{T}_\Gamma$, with the exception of $\node{n}_W$, which, as a leaf, has no rule in $\mathbb{T}_{\Phi'}$.  In $\mathbb{T}_\Phi$, $\node{n}_W$ is assigned the rule Thin.

To define $\lambda_\Phi$, we first introduce some notation.  Let $\seq{s} = S'' \tnxTVD \Phi''$ be a sequent.  Then $\seq{s}[Z := \Gamma']$ is the sequent obtained by substituting all free instances of $Z$ in both $\Phi''$ and $\Delta$ with $\Gamma'$.  Now suppose that $U \not\in \dom(\Delta)$; the sequent $(U = \Phi''')\cdot \seq{s}$ is defined to be $S'' \tnxTV{(U = \Phi''') \cdot \Delta} \Phi''$.  We may now define $\lambda_\Phi$ as follows.  Assume without loss of generality that $U$ is the definitional constant introduced by the application of Rule $\sigma Z$ to $\lambda_{\Phi'}(\node{r}_{\Phi'})$, and that $U$ does not appear in $\mathbb{T}_\Gamma$.
\[
\lambda_\Phi (\node{n}) =
\begin{cases}
    S'' \tnxTV{\emptyL} \Phi''
    &
    \\
    \multicolumn{2}{l}{
        \qquad\text{
            if $\node{n} \in \node{N}_{\Phi'}$ and
            $S'' \tnxT{\V'}{\emptyL} \Phi'' = \lambda_{\Phi'}(\node{n})[W := \sigma' Z'.\Gamma]$
        }
    }
    \\
    S'' \tnxTV{\Delta'} \Gamma'
    &
    \\
    \multicolumn{2}{l}{
        \qquad\text{
            if $\node{n} \in \node{N}_\Gamma$ and
            $S'' \tnxT{\V_S}{\Delta'} \Gamma' 
            = (U = \sigma Z.\Phi) \cdot ((\lambda_\Gamma(\node{n}) [Z := U])$
        }
    }
\end{cases}
\]
Intuitively, if $\node{n}$ is a node in $\mathbb{T}_{\Phi'}$ then $\lambda_\Phi(\node{n})$ modifies the sequent $\lambda_{\Phi'}(\node{n})$ by replacing all free occurrences of $W$ in the definition list and formula components of the sequent by $\sigma' Z'.\Gamma$, and the occurrence of $\V' = \V[W := S']$ on the turnstile by $\V$.  Likewise, if $\node{n}$ is a node $\node{N}_\Gamma$ then $\lambda_\Phi(\node{n})$ modifies the sequent $\lambda_\Gamma(\node{n})$ by replacing all occurrences of $Z$ by $U$, prepending the definition $(U = \sigma Z.\Phi)$ to the front of the definition list, and replacing the occurrence of $\V_S$ on the turnstile by $\V$.  
To complete the proof of the lemma we must show that $\mathbb{T}_\Phi$ is a successful TNF tableau that is compliant with $(S,\prec)$.
That $\mathbb{T}_\Phi$ is indeed a tableau is immediate from its construction and the fact that $\mathbb{T}_{\Phi'}$ and $\mathbb{T}_\Gamma$ are tableaux:  in particular, every leaf in $\mathbb{T}_\Phi$ corresponds either to a leaf in $\mathbb{T}_{\Phi'}$ or to a leaf in $\mathbb{T}_\Gamma$ and is therefore guaranteed to be terminal.
The TNF property follows similarly.
We now argue that $\mathbb{T}_\Phi$ is successful and compliant with $(S, \prec)$.
We begin by noting that since $\mathbb{T}_{\Phi'}$ and $\mathbb{T}_\Gamma$ are successful, every leaf $\node{n}$ such that $\seqfm(\lambda_\Phi(\node{n})) \neq U$, where $U$ is the definitional constant associated with $\sigma Z.\Phi$, is also successful in $\mathbb{T}_\Phi$.  Proving that $\mathbb{T}_\Phi$ is successful therefore reduces to proving the success of each $U$-leaf.  If we can show compliance of $\mathbb{T}_\Phi$ with $(S,\prec)$, then the success of each $U$-leaf also follows, for in the particular case when $\sigma = \mu$, the well-foundedness of $<:_{\node{r}'}$, where $\node{r}' = cs(\node{r}_\Phi)$ is the unique (due to TNF) $U$-companion node in $\mathbb{T}_\Phi$, follows immediately from the well-foundedness of $\prec$. To this end, suppose that $s, s' \in S$ are such that $s' <:_{\node{r}'} s$; we must show that $s' \prec s$.  Since $s' <:{\node{r}'} s$ holds there must exist a $U$-leaf $\node{n}$ such that $s' \in \seqst(\lambda_\Phi(\node{n}))$ and $s' <:_{\node{n},\node{r}'} s$ in $\mathbb{T}_\Phi$.  There are two cases to consider.
\begin{enumerate}
    \item 
    $\node{n}'$ is a leaf in $\mathbb{T}_{\Phi'}$.  In this case the compliance of $\mathbb{T}_{\Phi'}$ with respect to $(S, \prec)$ guarantees that $s' \prec s$.
    \item
    $\node{n}'$ is a leaf in $\mathbb{T}_\Gamma$.  In this case there must exist $y \in \semT{\sigma'Z'.\Gamma}{\V_{[s]}}$ such that $s' <:_{\node{n},\node{r}_\Gamma} y$ in $\mathbb{T}_\Phi$, and thus $\mathbb{T}_\Gamma$, and $y <:_{\node{r}_\Gamma, \node{r}'} s$.  Since $\mathbb{T}_\Gamma$ satisfies \ref{goal:dependencies} it must follow in this case that $s' \prec s$.
\end{enumerate}
This completes the proof.
\qedhere
\end{proof}

With these lemmas in hand we may now state and prove the completeness theorem.

\begin{theorem}[Completeness]\label{thm:completeness}
Let $\T = \lts{S}$ be an LTS and $\V$ a valuation, and let $S$ and $\Phi$ be such that $S \subseteq \semTV{\Phi}$.  Then $S \tnxTV{\emptyL} \Phi$ has a successful tableau.
\end{theorem}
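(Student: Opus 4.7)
The plan is to prove the theorem by structural induction on $\Phi$. For the propositional and modal cases, the argument essentially mirrors the reasoning already used in the proof of Lemma~\ref{lem:fixpoint-free-completeness}: we apply the appropriate proof rule at the root and then appeal to the induction hypothesis for the premises. Specifically, for $\Phi = Z$ or $\Phi = \lnot Z$ the single-node tree labeled $S \tnxTV{\emptyL} \Phi$ is a successful tableau because $S \subseteq \semTV{\Phi}$; for $\Phi = \Phi_1 \land \Phi_2$ we recurse on $S \tnxTV{\emptyL} \Phi_i$ using that $S \subseteq \semTV{\Phi_i}$ for $i = 1,2$; for $\Phi = \Phi_1 \lor \Phi_2$ we split $S$ into $S_1 = S \cap \semTV{\Phi_1}$ and $S_2 = S \setminus S_1$; for $\Phi = [K]\Phi'$ we take $S' = \{s' \mid \exists s \in S.\, s \xrightarrow{K} s'\} \subseteq \semTV{\Phi'}$; and for $\Phi = \dia{K}\Phi'$ we choose a witness function $f \in S \to \states{S}$ with $s \xrightarrow{K} f(s)$ and $f(s) \in \semTV{\Phi'}$ for each $s \in S$. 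In each case the induction hypothesis yields successful tableaux for the premise sequents, which combine into a successful tableau for $S \tnxTV{\emptyL} \Phi$.

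The fixpoint case $\Phi = \sigma Z.\Phi'$ is where the heavy lifting happens, but fortunately it has already been done. By Corollary~\ref{cor:support-fixpoints}, there exists a $\sigma$-maximal support ordering $(S^\ast,\prec^\ast)$ for $\semfTV{Z}{\Phi'}$, where $S^\ast = \semTV{\sigma Z.\Phi'}$; this ordering is $\sigma$-compatible, and either totally ordered by $U_{S^\ast}$ (when $\sigma = \nu$) or a well-ordering (when $\sigma = \mu$), so in either case it is total and qwf. Lemma~\ref{lem:fixpoint-completeness} then supplies a successful TNF tableau $\mathbb{T}^\ast$ with root sequent $S^\ast \tnxTV{\emptyL} \sigma Z.\Phi'$. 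Because $S \subseteq \semTV{\sigma Z.\Phi'} = S^\ast$, we construct the desired tableau by prepending a fresh root node labeled $S \tnxTV{\emptyL} \sigma Z.\Phi'$ whose rule application is $\textnormal{Thin}$ and whose unique child is the root of $\mathbb{T}^\ast$. The side condition of Thin is satisfied since $S \subseteq S^\ast$, the new tree is well-formed, and every leaf is successful because $\mathbb{T}^\ast$ is successful.

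The main obstacle — handling the interplay between support orderings, extended dependency orderings, and mutually recursive fixpoints to build a compliant tableau for a fixpoint formula — has already been overcome in Lemma~\ref{lem:fixpoint-completeness}. The only subtlety remaining is that the resulting tableau produced in the fixpoint case is no longer in TNF (because Thin is applied at the root, violating Definition~\ref{def:tableau-normal-forms}(\ref{def:thinning-restricted})); however, the theorem only asserts the existence of a successful tableau, not a TNF one, so this loss is immaterial. A routine verification that each constructed tree satisfies Definition~\ref{def:tableau} and Definition~\ref{def:successful-tableau} completes the proof.
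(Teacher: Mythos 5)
Your proposal is correct and follows essentially the same route as the paper: both reduce the problem to Lemma~\ref{lem:fixpoint-completeness} applied with a $\sigma$-maximal (hence total and qwf) support ordering for each maximal fixpoint subformula, and both use Thin to bridge the gap between the given state set and the full fixpoint. The only difference is organizational — the paper substitutes fresh variables for all top-level fixpoint subformulas at once and handles the fixpoint-free shell via Lemma~\ref{lem:fixpoint-free-completeness}, whereas you peel off the non-fixpoint operators by structural induction — and your observation that the Thin at the root costs TNF but not success is accurate and harmless.
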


\begin{proof}
The proof proceeds as follows.  Let $\sigma_1 Z_1 . \Phi_1, \ldots, \sigma_n Z_n . \Phi_n$ be the top-level fixpoint subformulas of $\Phi$, and let $W_1, \ldots, W_n$ be fresh variables.  Define $\Phi'$ to be the fixpoint-free formula containing exactly one occurrence of each $W_i$ and such that
\[
\Phi = \Phi'[W_1, \ldots, W_n := \sigma_1 Z_1 . \Phi_1, \ldots \sigma_n Z_n . \Phi_2].
\]
Also define $S_i = \semTV{\sigma_i Z_i . \Phi_i}$ for each $i = 1, \ldots n$, and let $\V'$ be defined by
\[
\V' = \V[W_1, \ldots W_n := S_1, \ldots S_N]
\]
Lemma~\ref{lem:fixpoint-completeness} guarantees a successful tableau $\mathbb{T}'$ for $S \tnxT{\V'}{\emptyL} \Phi'$.

Also, for each $i = 1, \ldots, n$ define $(S_i, \prec_i)$ to be a $\sigma_i$-maximal support ordering for $\semfTV{Z_i}{\Phi_i}$.  Lemma~\ref{lem:fixpoint-completeness} guarantees a successful tableau $\mathbb{T}_i$ for each sequent $S_i \tnxTV{\emptyL} \sigma_i Z_i . \Phi_i$.  We may now construct a successful tableau $\mathbb{T}$ by making each leaf in $\mathbb{T}'$ whose formula is $W_i$ the parent of the root of tableau $\mathbb{T}_i$, setting the rule for this former leaf node to be Thin and replacing all occurrences of $W_i$ by $\sigma_i Z_i.\Phi_i$.  The resulting tableau is guaranteed to be successful.
\qedhere
\end{proof}

\section{Proof search}\label{sec:proof-search}

From a reasoning perspective, the most difficult aspect of the proof system in this paper is the well-foundedness of the support orderings for nodes labeled with a least fixed-point formula.
In the abstract setting of infinite-state labeled transition systems considered in this paper, unfortunately there is really no alternative.
In many situations, however, this well-foundedness can be inferred from the additional semantic information present in the definition of the transition relations.

One interesting such case involves modifying the terminal criterion for $\mu$-leaves, so that they are only terminal if their set of states is empty.
Intuitively, this means we simply keep unfolding nodes labeled with definitional constants that are associated with a least fixpoint formula until we reach a node with an empty state set.
Of course, with this modification, the unfolding procedure does not necessarily terminate, and hence the proof system is not complete for infinite-state labeled transition systems in general.
However, for instance for the class of infinite-state transition systems induced by timed automata, the resulting tableaux method is complete~\cite{DC2005}.
Furthermore, even in case the modification is not complete, the result is a sound semi-decision procedure in the sense that, if a (finite) proof tree is obtained, the proof tree is successful.

In this section we show how this modified termination condition can also be proven sound using our results.
We first modify Definition~\ref{def:tableau} to change the termination condition.
\begin{definition}[$\nu$-complete tableau]\label{def:weak-tableau}
    Partial tableau $\tableauTrl$ is a \emph{$\nu$-complete tableau},
    if $\seqdl(\node{r}) = \emptyL$, and
    if all leaves $\node{n}$ in $\tree{T}$ are \emph{$\nu$-terminal}, i.e.\/ satisfy at least one of the following.
    \begin{enumerate}[label=(\alph*)]
        \item \label{def:weak-tableau-terminal-proposition}
            $\seqfm(\node{n}) = Z$ or $\seqfm(\node{n}) = \lnot Z$ for $Z \in \Var \setminus \dom(\seqdl(\node{n}))$; or
        \item  $\seqfm(\node{n}) = \dia{K} \ldots$ and there is $s \in \seqst(\node{n})$ such that $s \centernot{\xrightarrow{K}}$; or
        
        \item $\seqfm(\node{n}) = U$ for some $U \in \dom(\seqdl(\node{n}))$, and there is $\node{m} \in A_s(\node{n})$ such that $\seqfm(\node{m}) = U$, and either
           \begin{enumerate}[label=\roman*.]
           \item \label{def:mu-companion-leaf}
                $\seqdl(\node{n})(U) = \mu Z. \Phi$, and $\seqst(\node{n}) = \emptyset$; or
            \item \label{def:nu-companion-leaf}
              $\seqdl(\node{n})(U) = \nu Z. \Phi$, $\seqst(\node{n}) \subseteq \seqst(\node{m})$.
      \end{enumerate}
    \end{enumerate}
\end{definition}
Note that this definition only splits clause~\ref{subdef:companion-leaf} of Definition~\ref{def:tableau} such that the original definition only applies to definitional constants whose right-hand sides are greatest fixed point formulas.
Definitional constants whose right-hand sides are least fixed points are only $\nu$-terminal in case the set of states is empty.
Note that, as the set of states in a $\mu$-leaf is empty in case of a $\nu$-complete tableau, and it has a corresponding companion node in the tableaux, a $\nu$-complete tableaux is a complete tableaux, as stated in the following lemma.

\begin{lemma}\label{lem:weak-tableau-is-tableau}
    Let $\tableauTrl$ be a $\nu$-complete tableau.  Then $\tableauTrl$ is a complete tableau.
\end{lemma}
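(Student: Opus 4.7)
The plan is straightforward: show that every $\nu$-terminal leaf (per Definition~\ref{def:weak-tableau}) is in fact a terminal leaf (per Definition~\ref{def:tableau}(\ref{subdef:complete-tableau})). Since both definitions already agree on the root condition $\seqdl(\node{r}) = \emptyL$, and both definitions share the free-leaf clause \ref{subdef:free-leaf} and the diamond-leaf clause \ref{subdef:diamond-leaf} verbatim, the only work is to check that the two sub-cases of the $\nu$-terminal companion-leaf clause both imply the original companion-leaf clause \ref{subdef:companion-leaf} of Definition~\ref{def:tableau}.

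Concretely, let $\mathbb{T} = \tableauTrl$ be $\nu$-complete and let $\node{n}$ be a leaf of $\tree{T}$. I would do a case analysis on which clause of Definition~\ref{def:weak-tableau} is used to witness that $\node{n}$ is $\nu$-terminal. The clauses~\ref{def:weak-tableau-terminal-proposition} and the diamond-leaf clause give terminality directly, since they coincide syntactically with Definition~\ref{def:tableau}(\ref{subdef:free-leaf}) and Definition~\ref{def:tableau}(\ref{subdef:diamond-leaf}). For the $\nu$-companion sub-case~\ref{def:nu-companion-leaf}, the hypotheses $\seqfm(\node{n}) = U \in \dom(\seqdl(\node{n}))$, $\node{m} \in A_s(\node{n})$ with $\seqfm(\node{m}) = U$, and $\seqst(\node{n}) \subseteq \seqst(\node{m})$ are exactly those of clause \ref{subdef:companion-leaf}, so $\node{n}$ is terminal.

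The remaining, and only nontrivial-looking, case is the $\mu$-companion sub-case~\ref{def:mu-companion-leaf}, where we only have $\seqst(\node{n}) = \emptyset$ in place of a subset inclusion into some ancestor's state set. However, clause~\ref{def:mu-companion-leaf} still supplies an ancestor $\node{m} \in A_s(\node{n})$ with $\seqfm(\node{m}) = U$, and the inclusion $\seqst(\node{n}) = \emptyset \subseteq \seqst(\node{m})$ holds vacuously. Hence the original clause \ref{subdef:companion-leaf} is satisfied and $\node{n}$ is terminal.

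Since every leaf of $\mathbb{T}$ is terminal and $\seqdl(\node{r}) = \emptyL$, $\mathbb{T}$ meets every requirement of Definition~\ref{def:tableau}(\ref{subdef:complete-tableau}). There is no real obstacle; the lemma amounts to observing that Definition~\ref{def:weak-tableau} refines Definition~\ref{def:tableau} by strengthening only the $\mu$-companion case (to require empty state sets), which is strictly a special case of the original clause.
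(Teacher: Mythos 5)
Your proof is correct and follows exactly the reasoning the paper leaves implicit (its proof is just ``Immediate from the definitions,'' with the key observation --- that the empty state set of a $\mu$-leaf vacuously satisfies the subset condition of the original companion-leaf clause --- stated in the surrounding text). Nothing to add.
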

\begin{proof}
    Immediate from the definitions. \qedhere
\end{proof}

This immediately means that $\nu$-complete tableaux are sound.

\begin{theorem}\label{thm:soundness-weak-tableau}
Fix LTS $(\states{S},\to)$ of sort $\Sigma$ and valuation $\V$. Let $\mathbb{T} = \tableauTrl$ be a successful $\nu$-complete tableau for sequent $\seq{s} \in \Seq{\T}{\Var}$, where $\seqdl(\seq{s}) = \emptyL$.  Then $\seq{s}$ is valid.
\end{theorem}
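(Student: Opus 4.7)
The plan is to derive this soundness result as an immediate corollary of Theorem~\ref{thm:soundness}, by showing that every successful $\nu$-complete tableau is also a successful (complete) tableau in the sense of Definition~\ref{def:successful-tableau}. Lemma~\ref{lem:weak-tableau-is-tableau} already gives us that the underlying object is a complete tableau, so the real work is verifying the success condition for $\mu$-leaves (Definition~\ref{def:successful-mu-leaf}); for non-$\mu$-leaves, success is the same concept in both notions, so nothing needs to be done for them.

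First I would fix a successful $\nu$-complete tableau $\mathbb{T} = \tableauTrl$ and consider any $\mu$-leaf $\node{n}$ in $\mathbb{T}$, with companion node $\node{m}$. By the $\nu$-terminal condition (Definition~\ref{def:weak-tableau}, clause~\ref{def:mu-companion-leaf}), because $\seqdl(\node{n})(U) = \mu Z.\Phi$ for the relevant definitional constant $U = \seqfm(\node{n}) = \seqfm(\node{m})$, we have $\seqst(\node{n}) = \emptyset$. The same argument applies to every other $\mu$-leaf sharing companion $\node{m}$: each is $\nu$-terminal for the same $\mu$-constant $U$, hence each has empty state set. Thus every $\node{n}' \in \cleavesT(\node{m})$ satisfies $\seqst(\node{n}') = \emptyset$.

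Next I would use this to conclude that $<:_{\node{m}}$ is well-founded. Recall from Definition~\ref{def:extended_path_ordering} that $s' <:_{\node{m}} s$ requires the existence of a companion leaf $\node{n}' \in \cleavesT(\node{m})$ with $s' \in \seqst(\node{n}')$. Since $\seqst(\node{n}') = \emptyset$ for every such $\node{n}'$, no such $s'$ exists, so $<:_{\node{m}} = \emptyset$, which is trivially well-founded. By Definition~\ref{def:successful-mu-leaf}, $\node{n}$ is therefore a successful $\mu$-leaf in the sense of Definition~\ref{def:successful-tableau}.

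Since all non-$\mu$-leaves are already successful by hypothesis, $\mathbb{T}$ is a successful complete tableau for $\seq{s}$ in the sense of Definition~\ref{def:successful-tableau}, and Theorem~\ref{thm:soundness} yields the validity of $\seq{s}$. I do not expect any real obstacle here: the only subtlety is recognizing that the modification in Definition~\ref{def:weak-tableau} collapses the nontrivial well-foundedness obligation for $\mu$-companions into the trivial case of an empty relation, making the reduction to Theorem~\ref{thm:soundness} essentially automatic.
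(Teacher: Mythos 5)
Your proof is correct and follows the paper's own route: the paper derives this result immediately from Lemma~\ref{lem:weak-tableau-is-tableau} (every $\nu$-complete tableau is a complete tableau) together with Theorem~\ref{thm:soundness}. Your additional observation that every $\mu$-leaf is automatically successful---because its state set is empty by clause~(c)(i) of Definition~\ref{def:weak-tableau}, so $<:_{\node{m}}$ is the empty and hence well-founded relation---is correct and usefully makes explicit something the paper leaves implicit in the remark preceding Lemma~\ref{lem:weak-tableau-is-tableau}.
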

\begin{proof}
    Follows immediately from Lemma~\ref{lem:weak-tableau-is-tableau} and Theorem~\ref{thm:soundness}.\qedhere
\end{proof}

Hence the proof system remains sound if we modify the termination criterion in such a way that we always keep unfolding $\mu$-nodes until the set of states in the node is empty.

We would like to point out here that the preceding theorem holds in part due to the definition of complete tableaux. Even if a node in the tableaux labeled by a definitional constant satisfies the conditions of a terminal node, the definition allows further unfolding of that node.
Bradfield defines a tableau as ``a proof-tree that is built from applications of proof rules, starting at the root of the tableau, until all leaves of the tree are terminal''~\cite[Definition 3.4]{Bra1991}.
This definition does not allow the continued unfolding that we use in our $\nu$-complete tableaux; as a consequence, the original soundness proof does not carry over immediately to this case.

\section{Timed modal mu-calculus}\label{sec:timed-mu-calculus}

Timed transition systems are infinite-state transition systems used to give semantics to, for example, timed and hybrid automata.  Besides introducting transitions labeled by time, timed transition systems also capture continuous as well as discrete system behavior.
In this section, we extend the proof system described in Section~\ref{sec:base-proof-system} to timed transition systems and properties expressed in a timed extension of the mu-calculus, which enriches the mu-calculus with two modalities for expressing properties of continuous timed behavior.
These operators correspond to timed versions of the well-known until and release operators from Linear-Time Temporal Logic (LTL).
This section represents another illustration of the extensibility of the proof system in this paper: the proofs of soundness and completeness given earlier only need to be extended with cases for the new modal operators to cover the timed setting.

We begin the section by defining timed transition systems and the extension of the mu-calculus to be considered.

\begin{definition}[Timed sort]
Sort $\Sigma$ is a \emph{timed sort} iff $\Rnneg \subseteq \Sigma$, i.e., every non-negative real number is an element of $\Sigma$.  If $\Sigma$ is a timed sort we write $\act(\Sigma) = \Sigma \setminus \Rnneg$ for the set of non-numeric elements in $\Sigma$.  We sometimes refer to elements of $\act(\Sigma)$ as \emph{actions} and $\tdelay \in \Rnneg$ as \emph{time delays}.
\end{definition}

\begin{definition}[Timed transition system~\cite{BCL2011}]\label{def:timed-transition-system}
    LTS $(\states{S}, \to)$ of timed sort $\Sigma$ is a \emph{timed transition system (TTS)} if it satisfies the following conditions.
    \begin{enumerate}
        \item For all states $s \in \states{S}$, $s \xrightarrow{0} s$, i.e., the transition system is \emph{time-reflexive}.
        \item For all states $s,s',s'' \in \states{S}$ and $\tdelay \in \Rnneg$ if $s \xrightarrow{\tdelay} s'$ and $s \xrightarrow{\tdelay} s''$ then $s' = s''$, i.e., the transition system is \emph{time-deterministic}.
        \item For all states $s, s', s'' \in \states{S}$ and $\tdelay, \tdelay' \in \Rnneg$, if $s \xrightarrow{\tdelay} s'$ and $s' \xrightarrow{\tdelay'} s''$ then $s \xrightarrow{\tdelay + \tdelay'} s''$, i.e., the transition system is \emph{time-additive}.
        \item For all states $s, s' \in \states{S}$ and $\tdelay \in \Rnneg$, if $s \xrightarrow{\tdelay} s'$ then for all $\tdelay', \tdelay'' \in \Rnneg$ with $\tdelay = \tdelay' + \tdelay''$, there exists $s'' \in \states{S}$ such that $s \xrightarrow{\tdelay'} s''$ and $s'' \xrightarrow{\tdelay''} s'$, i.e., the transition system is \emph{time-continuous}.
   \end{enumerate}
\end{definition}

In a TTS a transition can either be labeled by an action, i.e., those labels in $\act(\Sigma)$, or a time delay $\tdelay \geq 0$.  Intuitively, if a system is in state $s$ and $s \xrightarrow{\tdelay} s'$ then the system is in state $s'$ after $\tdelay$ units of time have elapsed, assuming no action has occurred in the interim.

The timed mu-calculus of~\cite{FC2014} extends the mu-calculus from Definition~\ref{def:mu-calculus-syntax} with a timed modal operator $\tR{\Phi_1}{\Phi_2}$.  This operator is analogous to the release operator of LTL, in the sense that for $\tR{\Phi_1}{\Phi_2}$ to hold of a state, either $\Phi_2$ must hold along every time instant of the time trajectory emanating from $s$, or there is a point along the trajectory in which $\Phi_1$ holds, which releases the system from having to maintain $\Phi_2$.
These intuitions are formalized below.

\begin{definition}[Timed mu-calculus syntax~\cite{FC2014}]\label{def:timed-mu-calculus-syntax}
Let $\Sigma$ be a timed sort and $\Var$ a countably infinite set of propositional variables. Then formulas of the timed modal mu-calculus over $\Sigma$ and $\Var$ are given by the following grammar
    $$
        \Phi ::= Z
        \mid \lnot\Phi'
        \mid \Phi_1 \land \Phi_2
        \mid [K] \Phi'
        \mid \tR{\Phi_1}{\Phi_2}
        \mid \nu Z. \Phi'
    $$
    where $K \subseteq \Sigma$, $Z \in \Var$, and formulas of the form $\nu Z . \Phi'$, $Z$ are such that $Z$ must be positive in $\Phi'$.
\end{definition}
Besides the standard dualities introduced in Section~\ref{subsec:propositional-modal-mu-calculus}, we also have the following dual for $\forall_{\Phi_1} (\Phi_2)$:
\[
  \tU{\Phi_1}{\Phi_2} = \lnot \tR{\lnot \Phi_1}{\lnot \Phi_2}.
\]
Also in analogy with LTL, $\tU{\Phi_1}{\Phi_2}$ can be seen as an until operator:  specifically, a state satisfies $\tU{\Phi_1}{\Phi_2}$ if $\Phi_2$ is true after some time delay from $s$, and until that time instant $\Phi_1$ must be true.

Before we introduce the semantics of the timed mu-calculus, we fix some notation to deal with time delays and time intervals.
\begin{notation}
    Let $\T = (\states{S}, \to)$ be a TTS of timed sort $\Sigma$, and let $\tdelay \in \Rnneg$.
    \begin{itemize}
        \item
        We write 
        \[
            \tsucc(s,\tdelay) = s' \text{ iff } s \xrightarrow{\tdelay} s'
        \]
        This definition is well-formed because of the time-determinacy of $\xrightarrow{}$.  Note that $\tsucc(s,\tdelay) \div$ iff $s \centernot{\xrightarrow{\tdelay}}$, and that if $\tdelay' \geq \tdelay$ and $\tsucc(s,\tdelay)\div$ then $\tsucc(s,\tdelay') \div$. Furthermore, because of time continuity, if $\tsucc(s,\tdelay) \in \states{S}$, $\tsucc(s,\tdelay') \in \states{S}$ for all $\tdelay' \leq \tdelay$. 
        
        \item
        If $s \in \states{S}$ then we write
        \begin{align*}
            \tdelays(s)
            &= \{ \tdelay \in \Rnneg \mid s \xrightarrow{\tdelay} \}
        \\
            \tsucc(s)
            &= \{ s' \in \states{S} \mid \exists \tdelay \in \Rnneg \colon s \xrightarrow{\tdelay} s' \}
        \\ 
            \tsucc_{<}(s, \tdelay) 
            &= \{s' \in \states{S} \mid \exists \tdelay' \in \Rnneg \colon \tdelay' < \tdelay \land s \xrightarrow{\tdelay'} s' \}
        \\ 
            \tsucc_{\leq}(s, \tdelay) 
            &= \{s' \in \states{S} \mid \exists \tdelay' \in \Rnneg \colon \tdelay' \leq \tdelay \land s \xrightarrow{\tdelay'} s' \}
        \end{align*}
        for all possible delays, the states reachable by any delay, the states reachable by delays less than $\tdelay$, and states reachable by delays less than or equal to $\tdelay$, respectively, from $s$.
    \end{itemize}
\end{notation}

\noindent
For the semantics of the timed mu-calculus, we follow the definition in~\cite{Fon2014}.
\begin{definition}[Timed mu-calculus semantics]\label{def:timed-mu-calculus-semantics}
    Let $\T = (\states{S}, \to)$ be a TTS of sort $\Sigma$ and $\V \in \Var \to 2^{\states{S}}$ a valuation. Then the semantic function $\semTV{\Phi} \subseteq \states{S}$, where $\Phi$ is a timed mu-formula, is defined as in Definition~\ref{def:mu-calculus-semantics}, extended with the following clause.
    \begin{align*}
       & \semTV{\tR{\Phi_1}{\Phi_2}} \\
        &{=}\; \{ s \in \states{S} \mid \forall \tdelay \in \tdelays(s) \colon  \tsucc_<(s,\tdelay) \cap \semTV{\Phi_1} = \emptyset
        \implies \tsucc(s,\tdelay) \in \semTV{\Phi_2} \}
    \end{align*}
\end{definition}

\noindent
Intuitively, $s$ satisfies $\tR{\Phi_1}{\Phi_2}$ if for every possible delay transition from $s$, either the target state satisfies $\Phi_2$, or there is a delay transition of smaller duration whose target state satisfies $\Phi_1$, thereby releasing $s$ of the responsibility of keeping $\Phi_2$ true beyond that delay.
For the dual operator one may derive the following semantic equivalence:
\begin{align*}
    & \semTV{\tU{\Phi_1}{\Phi_2}} \\
    & {=}\; \{ s \in \states{S} \mid \exists \tdelay \in \tdelays(s) \colon
        \tsucc_<(s,\delta) \subseteq \semTV{\Phi_1} \land \tsucc(s,\delta) \in \semTV{\Phi_2} \}.
\end{align*}

\noindent
Based on this characterization, one can see that $\tU{\Phi_1}{\Phi_2}$ captures a notion of until.  Specifically, $s$ satisfies $\tU{\Phi_1}{\Phi_2}$ if there is a delay transition from $s$ leading to a state satisfying $\Phi_2$, and all delay transitions of strictly shorter duration from $s$ lead to states satisfying $\Phi_1$.
%

The notions of \emph{timed definition list} and \emph{timed sequent} are the obvious generalizations of the same notions given in Definitions~\ref{def:definition-list} and~\ref{def:sequent}, as is the notion of the semantics $\sem{\seq{s}}{}{}$ of timed sequent $\seq{s}$, which generalizes Definition~\ref{def:sequent-semantics}.  The sequent extractor functions $\seqst$, $\seqdl$ and $\seqfm$ also carry over to the timed setting.

To obtain proof rules for the timed mu-calculus, we extend those from Figure~\ref{fig:proof-rules} with the two rules named $\forall$ and $\exists$ as shown in Figure~\ref{fig:proof-rules-timed}.

\begin{figure}
    \[
        \begin{array}{c}

            \proofrule[\exists\;\;]
            {S \tnD \tU{\Phi_1}{\Phi_2}}
            {f_<(S) \tnD \Phi_1 \quad f_=(S) \tnD \Phi_2}
            \;f \in S \to \Rnneg
            \\[18pt]

            \proofrule[\forall\;\;]
            {S \tnD \tR{\Phi_1}{\Phi_2}}
            {g_<(S) \tnD \Phi_1 \quad g_=(S) \tnD \Phi_2}
        \end{array}
    \]
    where
    \begin{itemize}
        \item $f$ is such that for all $s \in S$, $f(s) \in \tdelays(s)$, and $f_<(S), f_=(S) \subseteq \states{S}$ are defined as
        \begin{align*}
            f_<(S) &= \bigcup_{s \in S} \tsucc_{<}(s,f(s)) 
            \\
            f_=(S) &= \{ \tsucc(s,f(s)) \mid s \in S \}.
        \end{align*}
        \item $g$ is such that for all $s \in S$ and $\delta \in \tdelays(s)$, 
        $g(s,\tdelay) \leq \tdelay$, and $g_<(S), g_=(S) \subseteq \states{S}$ are defined as
        \begin{align*}
        g_<(S) &= \{\tsucc(s,g(s,\delta)) \mid s \in S \land \delta \in \tdelays(s) \land g(s,\delta) < \delta \}
        \\
        g_=(S) &= \{\tsucc(s,g(s,\delta)) \mid s \in S \land \delta \in \tdelays(s) \land g(s,\delta) = \delta \}
        \end{align*}
    \end{itemize}
    \caption{Proof rules for timed modal operators (extends Figure~\ref{fig:proof-rules}).}
    \label{fig:proof-rules-timed}
\end{figure}

Rule $\exists$ is similar to the rule $\dia{K}$ in that it uses a witness function $f$.  
In this case $f$ is intended to identify, for every $s \in S$,  a \emph{witness delay} $f(s) \in \tdelays(s)$ allowed from $s$ such that the state $\tsucc(s,f(s))$ reached by delaying $f(s)$ time units from $s$ satisfies $\Phi_2$ and also such that all states reached from $s$ using smaller delays, i.e., those states in $\tsucc_{<}(s,f(s))$, must satisfy $\Phi_1$.

Proof rule $\forall$ also uses a witness function $g$ in its side condition, but its functionality is a bit more complicated than the witness function used in rule $\exists$.
In particular, $g$ takes two arguments, a state and a delay, and while  $g$ may be partial, it is required that for any $s \in S$, $g(s,\delta)$ is defined for every $\delta \in \tdelays(s)$. If $\delta \in \tdelays(s)$ and $g(s,\delta) < \delta$ is intended to be a delay shorter than $\delta$ such that the state reached from $s$ via that delay satisfies the ``release'' formula $\Phi_1$.  If no such shorter delay exists, then $\delta(s,\delta) =\delta$, reflecting the fact that the state reached from $s$ after $\delta$ has not been released from the obligation to keep $\Phi_2$ true.  

We sometimes refer to the function $f$ mentioned in the side condition of the $\exists$ rule an \emph{$\exists$-function} and the function $g$ referred to in the side condition of the $\forall$ rule as a \emph{$\forall$-function}.  In what follows we use
\[
\RuleApplT = \RuleAppl 
\cup \{(\exists, f) \mid f \ \text{is an $\exists$-function}\}
\cup \{(\forall, g) \mid g \ \text{is a $\forall$-function}\}
\]
for the set of rule applications for the timed mu-calculus.
The definitions of partial and complete tableaux (Definition~\ref{def:tableau}) carry over immediately to partial and complete \emph{timed} tableaux, with $\RuleApplT$ replacing $\RuleAppl$.
Observe that, in particular, no new terminal nodes need to be added: due to time reflexivity, $s \in \tsucc(s)$ for any state $s$, and thus type-correct $\exists$- and $\forall$-functions can always be given when applying the $\exists$ and $\forall$ rules. (Of course the chosen functions may not necessarily lead to \emph{successful} tableaux.) Likewise, the definitions of successful terminals and successful timed tableaux are the obvious adaptations of the notions given in Definition~\ref{def:successful-tableau}.  Also, if $\node{n}$ is a node in a timed partial tableau then the semantics $\sem{\node{n}}{}{}$ of $\node{n}$ carries over in a straightforward manner.

We now extend the local dependency ordering (Definition~\ref{def:local_dependency_ordering}) to timed tableaux as follows.
\begin{definition}[Timed local dependency ordering]\label{def:local-dependency-ordering-timed}
Let $\node{n}, \node{n}'$ be proof nodes in timed tableau $\mathbb{T}$, with $\node{n}' \in c(\node{n})$ a child of $\node{n}$. Then $s' <_{\node{n}',\node{n}} s$ iff $s' \in \seqst(\node{n}')$, $s \in \seqst(\node{n})$, and one of the following hold:
    \begin{enumerate}
        \item $\rho(\node{n}) = [K]$ and $s \xrightarrow{K} s'$; or
        \item $\rho(\node{n}) = (\dia{K},f)$ and $s' = f(s)$; or
        \item $\rn(\rho(\node{n})) \not \in \{ [K], \langle K \rangle, \forall, \exists \}$ and $s = s'$; or
        \item $\rho(\node{n}) = (\exists, f)$, $cs(\node{n}) = \node{n}_1\node{n}_2$, and either
              \begin{itemize}
                  \item $\node{n}' = \node{n}_1$ and $s' \in \tsucc_{<}(s,f(s))$, or
                  \item $\node{n}' = \node{n}_2$ and $s' = \tsucc(s,f(s))$; or
              \end{itemize}
        \item $\rho(\node{n}) = (\forall,g)$, $cs(\node{n}) = \node{n}_1\node{n}_2$, and either
              \begin{itemize}
                  \item $\node{n}' = \node{n}_1$ and $s' = \tsucc(s, g(s,\tdelay))$ for some $\tdelay \in \tdelays(s)$ with $g(s,\tdelay) < \tdelay$, or
                  \item $\node{n}' = \node{n}_2$ and $s' = \tsucc(s, g(s,\tdelay))$ for some $\tdelay \in \tdelays(s)$ with $g(s,\tdelay) = \tdelay$
              \end{itemize}
    \end{enumerate}
\end{definition}
The dependency ordering $\lessdot_{\node{n}',\node{n}}$ and extended dependency ordering $<:_{\node{n}',\node{n}}$ are adapted to timed tableaux in the obvious way, using the timed local dependency ordering as a basis.

We next establish that the timed local dependency ordering also satisfies the semantic sufficiency property.

\begin{lemma}[Semantic sufficiency of timed $<_{\node{n}', \node{n}}$] \label{lem:semantic-sufficiency-of-timed-<}
    Let $\node{n}$ be an internal proof node in partial timed tableau $\mathbb{T}$, and let $s \in \seqst(\node{n})$ be such that for all $s'$ and $\node{n}'$ with $s' <_{\node{n}', \node{n}} s$, $s' \in \semop{\node{n}'}$.  Then $s \in \semop{\node{n}}$.
\end{lemma}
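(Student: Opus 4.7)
The plan is to extend the proof of Lemma~\ref{lem:semantic-sufficiency-of-<} by adding cases for the two new rule applications, $\rho(\node{n}) = (\exists, f)$ and $\rho(\node{n}) = (\forall, g)$. Because Definition~\ref{def:local-dependency-ordering-timed} agrees with Definition~\ref{def:local_dependency_ordering} on all non-timed rule applications, the original case analysis (for $\land$, $\lor$, $[K]$, $\dia{K}$, $\sigma Z$, Un, Thin, and the leaf case) carries over verbatim. So fix an internal $\node{n} = S \tnxTVD \Phi$ with $s \in S$ such that every $<_{\node{n}',\node{n}}$-predecessor lies in the semantics of its proof node; it suffices to treat only the two new cases.

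For $\rho(\node{n}) = (\exists, f)$, we have $\Phi = \tU{\Phi_1}{\Phi_2}$ and $cs(\node{n}) = \node{n}_1 \node{n}_2$ where $\node{n}_1 = f_<(S) \tnxTVD \Phi_1$ and $\node{n}_2 = f_=(S) \tnxTVD \Phi_2$, and the side condition ensures $f(s) \in \tdelays(s)$. By Definition~\ref{def:local-dependency-ordering-timed}, the predecessors of $s$ are exactly the elements of $\tsucc_<(s, f(s))$ (via $\node{n}_1$) together with $\tsucc(s, f(s))$ (via $\node{n}_2$). The hypothesis therefore gives $\tsucc_<(s, f(s)) \subseteq \semT{\Phi_1}{\V[\Delta]}$ and $\tsucc(s, f(s)) \in \semT{\Phi_2}{\V[\Delta]}$; instantiating the derived semantics of $\tU{\Phi_1}{\Phi_2}$ with the witness delay $\tdelay = f(s)$ then yields $s \in \semTV{\tU{\Phi_1}{\Phi_2}} = \semop{\node{n}}$.

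For $\rho(\node{n}) = (\forall, g)$, we have $\Phi = \tR{\Phi_1}{\Phi_2}$ with $\node{n}_1 = g_<(S) \tnxTVD \Phi_1$ and $\node{n}_2 = g_=(S) \tnxTVD \Phi_2$, and the side condition ensures that $g(s, \tdelay)$ is defined with $g(s, \tdelay) \leq \tdelay$ for every $\tdelay \in \tdelays(s)$. To show $s \in \semTV{\tR{\Phi_1}{\Phi_2}}$, fix $\tdelay \in \tdelays(s)$; we must verify that if $\tsucc_<(s,\tdelay) \cap \semTV{\Phi_1} = \emptyset$ then $\tsucc(s,\tdelay) \in \semTV{\Phi_2}$. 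Split on whether $g(s,\tdelay) < \tdelay$ or $g(s,\tdelay) = \tdelay$. In the first subcase, Definition~\ref{def:local-dependency-ordering-timed} gives $\tsucc(s, g(s,\tdelay)) <_{\node{n}_1,\node{n}} s$, so the hypothesis places $\tsucc(s, g(s,\tdelay))$ in $\semT{\Phi_1}{\V[\Delta]}$; since $g(s,\tdelay) < \tdelay$ this state lies in $\tsucc_<(s,\tdelay)$, making the antecedent of the implication false and the implication vacuously true. In the second subcase, $\tsucc(s,\tdelay) = \tsucc(s, g(s,\tdelay)) <_{\node{n}_2,\node{n}} s$, so the hypothesis gives $\tsucc(s,\tdelay) \in \semT{\Phi_2}{\V[\Delta]}$ directly. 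As $\tdelay$ was arbitrary, $s \in \semop{\node{n}}$.

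The only mildly delicate step is the $\forall$ case, where one has to notice that the side condition on $g$ supplies a witness for every $\tdelay \in \tdelays(s)$ and that the two subcases of $g(s,\tdelay) \lessgtr \tdelay$ line up precisely with the two ways the release semantics can be discharged (either an earlier state satisfies the release formula $\Phi_1$, or the endpoint satisfies the invariant $\Phi_2$). Once that correspondence is seen, the argument is mechanical, and no new machinery beyond what the untimed proof uses is required.
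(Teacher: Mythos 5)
Your proof is correct and follows essentially the same route as the paper's: both treat only the two new cases $(\exists,f)$ and $(\forall,g)$, read off the relevant predecessors from the timed local dependency ordering, and discharge the release semantics by the same case split on $g(s,\tdelay) < \tdelay$ versus $g(s,\tdelay) = \tdelay$. No gaps.
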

\remove{
\begin{proofsketch}
    The proof of the cases where $\rho(\node{n}) \in \{\exists, \forall\}$ follows the exact same line of reasoning as the the cases in the proof of Lemma~\ref{lem:semantic-sufficiency-of-<}. \seeappendix
\end{proofsketch}
}
\begin{proof}
    Let $\node{n} = S \tnxTVD \Phi$ be an internal node in $\mathbb{T} = \tableauTrl$.  Since $\node{n}$ is internal, $\rho(\node{n})$ is defined.  Now fix $s \in S$.  The proof proceeds by case analysis on $\rho(\node{n})$.
    All cases other than $(\exists,f)$ and $(\forall,g)$ are identical to the proof of Lemma~\ref{lem:semantic-sufficiency-of-<}. We therefore only show the remaining two cases.

    \begin{itemize}
        \item $\rho(\node{n}) =  (\exists,f)$.
              In this case $\Phi = \tU{\Phi_1}{\Phi_2}$, and $cs(\node{n}) = \node{n}_1\node{n}_2$ where $\node{n}_1 = f_<(S) \tnxTVD \Phi_1$ and $\node{n}_2 = f_=(S) \tnxTVD \Phi_2$.
              By definition, $s' <_{\node{n}',\node{n}} s$ iff $\node{n}' \in \{ \node{n}_1, \node{n}_2 \}$, and $s' = \tsucc(s,f(s))$ if $\node{n}' = \node{n}_2$, and $s' \in \tsucc_{<}(s,f(s))$ otherwise. We reason as follows.
              \begin{flalign*}
                  & \text{For all $s'$, $\node{n}'$ such that $s' <_{\node{n}',\node{n}} s$, $s' \in \semop{\node{n}'}$}\span\span
                  \\
                  & \text{iff}\;\;\; \tsucc(s,f(s)) \in \semop{\node{n}_2}\ \text{and}\ \tsucc_{<}(s,f(s)) \subseteq \semop{\node{n}_1}\span\span
                  \\
                  &
                  && \text{Definitions of $\node{n}_1, \node{n}_2, <_{\node{n}', \node{n}}$ when $\rho(\node{n}) = (\exists,f)$}
                  \\
                  & \text{iff}\;\;\; f(s) \in \tdelays(s), \tsucc(s,f(s)) \in \semT{\Phi_2}{\V[\Delta]},\ \text{and}\ \tsucc_{<}(s,f(s)) \subseteq \semT{\Phi_1}{\V[\Delta]} \span\span
                  \\
                  &
                  && \text{Property of $f$, definition of $\semop{\node{n}_1}$, $\semop{\node{n}_2}$}
                  \\
                  & \text{implies}\;\;\; s \in \semT{\tU{\Phi_1}{\Phi_2}}{\V[\Delta]}
                  && \text{Definition of $\semT{\tU{\Phi_1}{\Phi_2}}{\V[\Delta]}$} \\
                  & \text{iff}\;\;\; s \in \semT{\Phi}{\V[\Delta]}
                  && \text{$\Phi = \tU{\Phi_1}{\Phi_2}$} \\
                  & \text{iff}\;\;\; s \in \semop{\node{n}}
                  && \text{Definition of $\semop{\node{n}}$}
              \end{flalign*}

        \item $\rho(\node{n}) = (\forall, g)$.
              In this case $\Phi = \tR{\Phi_1}{\Phi_2}$, and $cs(\node{n}) = \node{n}_1\node{n}_2$ where $\node{n}_1 = g_<(S) \tnxTVD \Phi_1$ and $\node{n}_2 = g_=(S) \tnxTVD \Phi_2$,.
              By definition, $s' <_{\node{n}',\node{n}} s$ iff $\node{n}' \in \{ \node{n}_1, \node{n}_2 \}$, and $\node{n}' = \node{n}_1$ if $s' = \tsucc(s, g(s,\delta))$ for some $\tdelay \in \tdelays(s)$ with $g(s,\tdelay) < \tdelay$, and $s' = \tsucc(s, g(s,\delta))$ for some $\tdelay \in \tdelays(s)$ with $g(s,\tdelay) = \tdelay$ otherwise.
              We reason as follows.
              \begin{flalign*}
                  & \text{For all $s'$, $\node{n}'$ such that $s' <_{\node{n}',\node{n}} s$, $s' \in \semop{\node{n}'}$}\span\span
                  \\
                  & \text{iff}\;\;\; \text{for all $\tdelay \in \tdelays(s)$, $g(s,\tdelay) = \tdelay$ and $\tsucc(s,g(s,\tdelay)) \in \semop{\node{n}_2}$}\span\span
                  \\
                  & \qquad \text{or $g(s,\tdelay)) < \tdelay$ and $\tsucc(s,g(s,\tdelay)) \in \semop{\node{n}_1}$}\span
                  \\
                  \multispan4{\hfil \text{Definitions of $\node{n}_1, \node{n}_2, <_{\node{n}',\node{n}}$ when $\rho(\node{n}) = (\forall,g)$}}
                  \\
                  & \text{iff}\;\;\; \text{for all $\tdelay \in \tdelays(s)$, $g(s,\tdelay) = \tdelay$ and $\tsucc(s,g(s,\tdelay)) \in \semT{\Phi_2}{\V[\Delta]}$}\span\span \\
                  & \qquad \text{or $g(s,\tdelay)) < \tdelay$ and $\tsucc(s,g(s,\tdelay)) \in \semT{\Phi_1}{\V[\Delta]}$}\span\span
                  \\
                  &
                  && \text{Definition of $\sem{\node{n}_1}{}{}, \sem{\node{n}_2}{}{}$}
                  \\
                  & \text{implies } \text{for all $\tdelay \in \tdelays(s)$, $\tsucc_{<}(s,\tdelay) \cap \semT{\Phi_1}{\V[\Delta]} \neq \emptyset$ or $\tsucc(s, \tdelay) \in \semT{\Phi_2}{\V[\Delta]}$} \span\span
                  \\
                  \multispan4{\hfil \text{Definition of $g$: if $g(s,\tdelay) < \tdelay$, then $\tsucc(s,g(s,\tdelay)) \in \tsucc_{<}(s,g(s,\tdelay))$}}
                  \\
                  & \text{iff}\;\;\; s \in \semT{\tR{\Phi_1}{\Phi_2}}{\V[\Delta]}
                  && \text{Definition of $\semT{\tR{\Phi_1}{\Phi_2}}{\V[\Delta]}$}
                  \\
                  & \text{iff}\;\;\; s \in \semT{\Phi}{\V[\Delta]}
                  && \text{$\Phi = \tR{\Phi_1}{\Phi_2}$}
                  \\
                  & \text{iff}\;\;\; s \in \semop{\node{n}}
                  && \text{Definition of $\semop{\node{n}}$}
              \end{flalign*}
              \qedhere
    \end{itemize}

\end{proof}

\subsection{Soundness}
We now use the semantic sufficiency result from the previous section to generalize the soundness results from Section~\ref{sec:Soundness-via-support-orderings} to timed tableaux.

First, observe that the local soundness result from Lemma~\ref{lem:local-soundness} carries over to the timed setting immediately, since it solely relies on the semantic sufficiency result for $<_{\node{n},\node{n}'}$.

We next show how the node formulas from Definition~\ref{def:node-formulas} generalize to nodes in a timed tableau.

\begin{definition}[Timed node formulas]~\label{def:node-formulas-timed}
    For each companion node $\node{m} \in \companions{\mathbb{T}}$ let $Z_\node{m}$ be a unique fresh variable, with $\Var_{\mathbb{T}} = \{ Z_\node{m} \mid\node{m} \in \companions{\mathbb{T}} \}$ the set of all such variables. Then for node $\node{n} \in \node{N}$ formula $P(\node{n})$ is defined inductively as follows.
    Cases 1-10 are as in Definition~\ref{def:node-formulas}. The cases for the relativized modal operators are as follows.
    \begin{enumerate}
        \setcounter{enumi}{10}
        \item If $\rho(\node{n}) = (\exists,f)$ and $cs(\node{n}) = \node{n}_1\node{n}_2$ then $P(\node{n}) = \exists_{P(\node{n}_1)}(P(\node{n}_2))$.
        \item If $\rho(\node{n}) = (\forall,g)$ and $cs(\node{n}) = \node{n}_1\node{n}_2$ then $P(\node{n}) = \forall_{P(\node{n}_1)}(P(\node{n}_2))$.
    \end{enumerate}
\end{definition}
Valuation consistency (Definition~\ref{def:consistency}) and the associated results in Lemmas~\ref{lem:consistency-property} and~\ref{lem:companion-node-formulas-and-semantics} carry over immediately to the timed setting. We next use these results to link the node formulas and the semantics in timed tableaux.

\begin{lemma}[Timed node formulas and semantics]\label{lem:node-formulas-and-node-semantics-timed}
    Let $\V'$ be a consistent valuation for timed tableau $\mathbb{T}$.  Then for every $\node{n} \in \node{N}$, $\semT{P(\node{n})}{\V'} = \semop{\node{n}}$.
\end{lemma}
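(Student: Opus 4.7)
The plan is to mirror the proof of Lemma~\ref{lem:node-formulas-and-node-semantics} by tree induction on $\tree{T}$, fixing a consistent valuation $\V'$ for the timed tableau $\mathbb{T}$ and analyzing $\node{n}$ by a case split on $\rho(\node{n})$. Every case whose rule application lies in $\RuleAppl$ is identical to the corresponding case already proved, since the timed $P(\cdot)$ coincides with the untimed one on those clauses and since Definition~\ref{def:node-formulas-timed} only adds new inductive clauses for $(\exists,f)$ and $(\forall,g)$. In particular, the $\textnormal{Un}$ case still reduces immediately to Lemma~\ref{lem:companion-node-formulas-and-semantics}, which is unchanged in the timed setting. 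So the only real work is to handle the two new cases.

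For $\rho(\node{n}) = (\exists,f)$, I would unfold the definitions. Here $\node{n} = S \tnxTVD \tU{\Phi_1}{\Phi_2}$ with $cs(\node{n}) = \node{n}_1\node{n}_2$, $\node{n}_i = S_i \tnxTVD \Phi_i$, and by the induction hypothesis $\semT{P(\node{n}_i)}{\V'} = \semop{\node{n}_i} = \semT{\Phi_i}{\V[\Delta]}$. Since $P(\node{n}) = \tU{P(\node{n}_1)}{P(\node{n}_2)}$, applying the semantic equivalence for $\tU{\cdot}{\cdot}$ derived in Section~\ref{sec:timed-mu-calculus} gives
\[
\semT{P(\node{n})}{\V'} = \{s \mid \exists \tdelay \in \tdelays(s) \colon \tsucc_<(s,\tdelay) \subseteq \semT{P(\node{n}_1)}{\V'} \land \tsucc(s,\tdelay) \in \semT{P(\node{n}_2)}{\V'}\},
\]
which by the induction hypothesis equals
\[
\{s \mid \exists \tdelay \in \tdelays(s) \colon \tsucc_<(s,\tdelay) \subseteq \semT{\Phi_1}{\V[\Delta]} \land \tsucc(s,\tdelay) \in \semT{\Phi_2}{\V[\Delta]}\} = \semT{\tU{\Phi_1}{\Phi_2}}{\V[\Delta]} = \semop{\node{n}}.
\]

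The $\rho(\node{n}) = (\forall,g)$ case is the exact dual: $\node{n} = S \tnxTVD \tR{\Phi_1}{\Phi_2}$, $P(\node{n}) = \tR{P(\node{n}_1)}{P(\node{n}_2)}$, and the calculation proceeds identically but using Definition~\ref{def:timed-mu-calculus-semantics} for $\tR{\cdot}{\cdot}$ in place of the semantics of $\tU{\cdot}{\cdot}$, and again invoking the induction hypothesis for the two children. In neither case do the side-condition functions $f$ or $g$ enter the semantic calculation, since $P(\node{n})$ is built only from the formulas of the children, not from the witness functions. I do not anticipate any real obstacle: the main novelty relative to Lemma~\ref{lem:node-formulas-and-node-semantics} is checking that Definition~\ref{def:node-formulas-timed} for the new operators is semantically transparent in the same way that the $\land$, $\lor$, $[K]$, $\dia{K}$ clauses are, and this reduces to reading off the semantics of $\tU{\cdot}{\cdot}$ and $\tR{\cdot}{\cdot}$. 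The only place where I would be careful is ensuring that $\V'$ plays no untoward role in the new clauses, but since $P(\node{n}_1)$ and $P(\node{n}_2)$ contain only the free variables from $\Var \setminus \Var_{\mathbb{T}}$ plus the $Z_\node{m}$, and the consistency of $\V'$ with $\mathbb{T}$ is inherited when restricting to subtrees, this is automatic.
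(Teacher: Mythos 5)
Your proposal is correct and follows essentially the same route as the paper: tree induction with a case split on $\rho(\node{n})$, reusing the untimed cases verbatim and handling $(\exists,f)$ and $(\forall,g)$ by unfolding the semantics of $\tU{\cdot}{\cdot}$ and $\tR{\cdot}{\cdot}$ and applying the induction hypothesis to the two children. Your observation that the witness functions play no role in the semantic calculation matches the paper's treatment.
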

\remove{
\begin{proofsketch}
    Analogous to the proof of Lemma~\ref{lem:node-formulas-and-node-semantics}.
\end{proofsketch}
}
\begin{proof}
    Let valuation $\V'$ be consistent with $\mathbb{T}$.
    The proof is by induction on $\tree{T}$.  
    So fix node $\node{n}$ in $\tree{T}$.  
    The induction hypothesis asserts that for all $\node{n}' \in c(\node{n})$, $\semT{P(\node{n}')}{\V'} = \semop{\node{n}'} $.
    The proof now proceeds by an analysis of $\rho(\node{n})$.
    All cases except $\rn(\node{n}) \in \{ \forall, \exists \}$ are identical to those in Lemma~\ref{lem:node-formulas-and-node-semantics}. For the remaining cases, the proofs are as follows.

    \begin{itemize}

        \item $\rho(\node{n}) = (\exists,f)$.
              In this case we know that $\node{n} = S \tnxTVD \tU{\Phi_1}{\Phi_2}$ and that $cs(\node{n}) = \node{n}_1\node{n}_2$, where each $\node{n}_i = S_i \tnxTVD \Phi_i$,
              $S_1 = f_<(S)$, and $S_2 = f_=(S)$.
              The induction hypothesis guarantees that $\semT{P(\node{n}_i)}{\V'} = \semop{\node{n}_i} = \semT{\Phi_i}{\V[\Delta]}$.
              We reason as follows.
              \begin{align*}
                  & \semT{P(\node{n})}{\V'}
                  \\
                   & = \semT{\exists_{P(\node{n}_1)}(P(\node{n}_2))}{\V'}
                   &                                                                                                                                                                   & \text{Definition of $P(-)$}
                  \\
                   & = \{ s \in \states{S} \mid \exists \tdelay \in \tdelays(s) \colon \tsucc_{<}(s,\tdelay) \subseteq \semT{P(\node{n}_1)}{\V'} \land \tsucc(s,\tdelay) \in \semT{P(\node{n}_2)}{\V'} \} \span\span
                  \\
                   &                                                                                                                                                                   &                                         & \text{Semantics of $\exists$}
                  \\
                   & = \{ s \in \states{S} \mid \exists \tdelay \in \tdelays(s) \colon \tsucc_{<}(s,\tdelay) \subseteq \semop{\node{n}_1} \land \tsucc(s,\tdelay) \in \semop{\node{n}_2} \} \span\span
                  \\
                   &                                                                                                                                                                   &                                         & \text{Induction hypothesis (twice)}
                  \\
                   & = \{ s \in \states{S} \mid \exists \tdelay \in \tdelays(s) \colon \tsucc_{<}(s,\tdelay) \subseteq \semT{\Phi_1}{\V[\Delta]} \land \tsucc(s,\tdelay) \in \semT{\Phi_2}{\V[\Delta]} \} \span\span
                  \\
                   &                                                                                                                                                                   &                                         & \text{Definition of $\semop{\node{n}_i}$}
                  \\
                   & = \semT{\tU{\Phi_1}{\Phi_2}}{\V[\Delta]}
                   &                                                                                                                                                                   & \text{Semantics of $\exists$}
                  \\
                   & = \semop{\node{n}}
                   &                                                                                                                                                                   & \text{Definition of $\semop{\node{n}}$}
              \end{align*}

        \item $\rho(\node{n}) = (\forall,g)$.
              In this case we know that $\node{n} = S \tnxTVD \tR{\Phi_1}{\Phi_2}$ and that $cs(\node{n}) = \node{n}_1\node{n}_2$, where each $\node{n}_i = S_i \tnxTVD \Phi_i$, $S_1 = g_<(S)$ and $S_2 = g_=(S)$.
              The induction hypothesis guarantees that $\semT{P(\node{n}_i)}{\V'} = \semop{\node{n}_i} = \semT{\Phi_i}{\V[\Delta]}$.
              We reason as follows.
              \begin{align*}
                  & \semT{P(\node{n})}{\V'} \\
                   &{=}\; \semT{\tR{P(\node{n}_1)}{P(\node{n}_2)}}{\V'}
                   &                                                                                                                                                                             & \text{Definition of $P(-)$}
                  \\
                   &{=}\; \{ s \in \states{S} \mid \forall \tdelay \in \tdelays(s) \colon \tsucc_{<}(s,\tdelay) \cap \semT{P(\node{n}_1)}{\V'} = \emptyset \span\span\\
                   & \hspace{120pt} \implies \tsucc(s,\tdelay) \in \semT{P(\node{n}_2)}{\V'} \} \span\span
                  \\
                   &                                                                                                                                                                             &                                         & \text{Semantics of $\forall$}
                  \\
                   &{=}\; \{ s \in \states{S} \mid \forall \tdelay \in \tdelays(s) \colon \tsucc_{<}(s,\tdelay) \cap \semop{\node{n}_1} = \emptyset \implies \tsucc(s,\tdelay) \in \semop{\node{n}_2} \} \span\span
                  \\
                   &                                                                                                                                                                             &                                         & \text{Induction hypothesis (twice)}
                  \\
                   &{=}\; \{ s \in \states{S} \mid \forall \tdelay \in \tdelays(s) \colon \tsucc_{<}(s,\tdelay) \cap \semT{\Phi_1}{\V[\Delta]} = \emptyset \span\span \\
                   & \hspace{120pt}\implies \tsucc(s,\tdelay) \in \semT{\Phi_2}{\V[\Delta]} \} \span\span
                  \\
                   &                                                                                                                                                                             &                                         & \text{Definition of $\semop{\node{n}_i}$}
                  \\
                   &{=}\; \semT{\tR{\Phi_1}{\Phi_2}}{\V[\Delta]}
                   &                                                                                                                                                                             & \text{Semantics of $\forall$}
                  \\
                   &{=}\; \semop{\node{n}}
                   &                                                                                                                                                                             & \text{Definition of $\semop{\node{n}}$}
              \end{align*}
              \qedhere
    \end{itemize}
\end{proof}
Corollary~\ref{cor:node-formulas-vs-node-semantics}, as well as the definitions of support dependency ordering and influence extensions of valuations (Definitions~\ref{def:support-dependency-ordering} and~\ref{def:support-extension-of-valuation}) and the associated Lemmas~\ref{lem:support-dependency-ordering-characterization}, \ref{lem:pseudo-transitivity-of-support-dependency-ordering} and~\ref{lem:monotonicity-of-dependency-extensions} and Corollary~\ref{cor:monotonicity-of-dependency-extensions} generalize to timed tableaux in the obvious way.

These results now allow us to prove that $<:_\node{n}^+$ is also a support ordering for companion nodes in timed tableaux, generalizing Lemma~\ref{lem:support-ordering-for-companion-nodes}.
\begin{lemma}\label{lem:support-ordering-for-companion-nodes-timed}
    Let $\mathbb{T} = \tableauTrl$ be a successful timed tableau with $\node{n} \in \companions{\mathbb{T}}$ a companion node of $\mathbb{T}$ and $\node{n}'$ the child of $\node{n}$ in $\tree{T}$. Also let $S = \seqst(\node{n})$. Then $(S, <:_{\node{n}}^+)$ is a support ordering for $\semfT{Z_{\node{n}}}{P(\node{n}')}{\V_{\node{n}}}$.
\end{lemma}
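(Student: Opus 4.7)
The plan is to adapt the proof of Lemma~\ref{lem:support-ordering-for-companion-nodes} to the timed setting, which amounts to extending the case analysis on $\rho(\node{m})$ to cover the two new rule applications $(\exists, f)$ and $(\forall, g)$. As in the untimed case, I would establish a stronger statement by tree induction on $\tree{T}_{\node{n}}$: namely, that for every $\node{m} \in D(\node{n})$ and $s \in S$, the properties \ref{stmt:necessity} and \ref{stmt:support} (suitably reinterpreted with the timed definitions of $<:_{\node{m},\node{n}}$ and $\leq:_{\node{m},\node{n}}$) both hold. The overall argument then immediately gives the lemma via Lemma~\ref{lem:unions-of-support-orderings}, exactly as before.

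All cases for $\rho(\node{m})$ other than $(\exists, f)$ and $(\forall, g)$ can be carried over verbatim, since they rely only on the pseudo-transitivity of $\leq:_{\node{m},\node{n}}$ (Lemma~\ref{lem:pseudo-transitivity-of-support-dependency-ordering}), Corollary~\ref{cor:monotonicity-of-dependency-extensions}, and the semantic sufficiency of $<_{\node{n}',\node{n}}$ --- which now holds for timed tableaux via Lemma~\ref{lem:semantic-sufficiency-of-timed-<} --- together with Lemma~\ref{lem:companion-node-formulas-and-semantics}. For the companion-node case $\rho(\node{m}) = \text{Un}$ the argument is unchanged; it relies only on the existence of the local support orderings for $g_{(\preimg{\prec}{x},\cdot)}$ (Lemma~\ref{lem:fg-support}) and on Corollaries~\ref{cor:greatest-fixpoint} and~\ref{cor:least-fixpoint}, none of which are affected by the addition of the timed modalities.

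For the new case $\rho(\node{m}) = (\exists, f)$, the node has form $\node{m} = S' \tnxTVD \tU{\Phi_1}{\Phi_2}$ with $cs(\node{m}) = \node{m}_1 \node{m}_2$, $\node{m}_1 = f_<(S') \tnxTVD \Phi_1$, $\node{m}_2 = f_=(S') \tnxTVD \Phi_2$, and $P(\node{m}) = \exists_{P(\node{m}_1)}(P(\node{m}_2))$. Since $\node{m}$ is not a companion node, only \ref{stmt:necessity} needs to be proved. Fix $x$ with $x \leq:_{\node{m},\node{n}} s$. By Definition~\ref{def:local-dependency-ordering-timed}, for every $x' \in \tsucc_<(x, f(x))$ we have $x' <_{\node{m}_1, \node{m}} x$, and $\tsucc(x, f(x)) <_{\node{m}_2, \node{m}} x$; pseudo-transitivity then yields $x' \leq:_{\node{m}_1,\node{n}} s$ and $\tsucc(x,f(x)) \leq:_{\node{m}_2,\node{n}} s$, and the induction hypothesis together with Corollary~\ref{cor:monotonicity-of-dependency-extensions} places each such state in $\semT{P(\node{m}_i)}{\V_{\node{m},x}}$. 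The semantics of $\exists$ (as given in Definition~\ref{def:timed-mu-calculus-semantics} and reflected in Lemma~\ref{lem:node-formulas-and-node-semantics-timed}) then shows $x \in \semT{\exists_{P(\node{m}_1)}(P(\node{m}_2))}{\V_{\node{m},x}} = \semT{P(\node{m})}{\V_{\node{m},x}}$.

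The case $\rho(\node{m}) = (\forall, g)$ is handled symmetrically, using the second clause in Definition~\ref{def:local-dependency-ordering-timed} for $(\forall, g)$ and the semantics of $\tR{\Phi_1}{\Phi_2}$: for each $\tdelay \in \tdelays(x)$ either $g(x,\tdelay) = \tdelay$ and $\tsucc(x,g(x,\tdelay)) <_{\node{m}_2,\node{m}} x$, or $g(x,\tdelay) < \tdelay$ and $\tsucc(x,g(x,\tdelay)) <_{\node{m}_1,\node{m}} x$; pseudo-transitivity, the induction hypothesis, Corollary~\ref{cor:monotonicity-of-dependency-extensions}, and the semantics of $\forall$ then place $x$ in $\semT{P(\node{m})}{\V_{\node{m},x}}$. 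I do not expect any serious obstacle here: the only subtle point is checking that the influence extensions defined in Definition~\ref{def:support-extension-of-valuation} remain monotonic under the new local dependency relations, but this follows immediately from the fact that children introduced by $(\exists,f)$ and $(\forall,g)$ are still children in the tree, so Lemma~\ref{lem:monotonicity-of-dependency-extensions} and its corollary apply verbatim.
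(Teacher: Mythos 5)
Your proposal is correct and follows essentially the same route as the paper: the same strengthened statement proved by tree induction on $\tree{T}_{\node{n}}$, with all untimed cases carried over and the two new cases $(\exists,f)$ and $(\forall,g)$ handled exactly as you describe, via the timed local dependency relation, pseudo-transitivity, the induction hypothesis, Corollary~\ref{cor:monotonicity-of-dependency-extensions}, and the semantics of the timed modalities. No gaps.
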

\begin{proof}
The proof is analogous to the proof of Lemma~\ref{lem:support-ordering-for-companion-nodes}, inductively proving the following stronger result.
Fix successful timed tableau $\mathbb{T} = \tableauTrl$ with $\tree{T} = (\node{N},\node{r},p,cs)$ and let $\node{n} \in \companions{\mathbb{T}}$ a companion node of $\mathbb{T}$ with $S = \seqst(\node{n})$. We prove that for every $\node{m} \in D(\node{n})$ and $s \in S$ statements 
 \ref{stmt:necessity-timed} and \ref{stmt:support-timed} hold.
\begin{enumerate}[left=\parindent, label=S\arabic*., ref=S\arabic*]
\item\label{stmt:necessity-timed}
    For all $x$ such that $x \leq:_{\node{m},\node{n}} s$, 
    $x \in \semT{P(\node{m})}{\V_{\node{m},x}}$.
\item\label{stmt:support-timed}
    If $\node{m} \in \cnodes{\mathbb{T}}$,
    $\node{m}' = cs(\node{m})$
    and
    $x$ satisfies $x \leq:_{\node{m},\node{n}} s$
    then 
    $(S_x, <:_{\node{m},x})$ is a support ordering for $\semfT{Z_{\node{m}}}{P(\node{m}')}{\V_{\node{m},x}}$, where
    $S_x = \preimg{(<:_{\node{m}}^*)}{x}$
    and
    ${<:_{\node{m},x}} = \restrict{(<:_{\node{m}}^+)}{S_x}$.
\end{enumerate}
The proof proceeds by case analysis on the form of $\rho(\node{m})$; all cases are completely analogous to those in the proof of Lemma~\ref{lem:support-ordering-for-companion-nodes}, except the proofs of statement~\ref{stmt:necessity-timed} in case $\rho(\node{m}) \in \{ (\forall, g), (\exists,f) \}$. The proof for these two cases is as follows.

If $\rho(\node{m}) = (\exists,f)$, $\node{m} = S' \tnxTVD \tU{\Phi_1}{\Phi_2}$ for some $\Phi_1$ and $\Phi_2$, $cs(\node{m}) = \node{m}'_1\node{m}'_2$, $\node{m}'_i = S'_i \tnxTVD \Phi_i$ for $i = 1,2$, $S'_1 = f_<(S')$ and $S'_2 = f_=(S')$. 
The induction hypothesis ensures that \ref{stmt:necessity-timed} holds for each $\node{m}'_i$; we must show that \ref{stmt:necessity-timed} holds for $\node{m}$ and $s$.
To this end, let $x$ be such that $x \leq:_{\node{m},\node{n}} s$; we must show that $x \in \semT{P(\node{m})}{\V_{\node{m},x}}$.
Let $x'' = \tsucc(x,f(x))$, and 
note that $x'' <_{\node{m}_2',\node{m}} x$; the pseudo-transitivity of $\leq:_{\node{m},\node{n}}$ guarantees that $x''$ satisfies $x'' \leq:_{\node{m}'_2,\node{n}} s$, and the induction hypothesis then ensures that $x'' \in \semT{P(\node{m}'_2)}{\V_{\node{m}'_2,x''}}$. Corollary~\ref{cor:monotonicity-of-dependency-extensions} guarantees that $x'' \in \semT{P(\node{m}'_2)}{\V_{\node{m},x}}$.
Next, fix arbitrary $x' \in \tsucc_{<}(x, f(x))$, and note that $x' <_{\node{m}'_1,\node{m}} x$.
Using the same line of argument as before, we find that $x' \in \semT{P(\node{m}'_1)}{\V_{\node{m},x}}$.
So, $\tsucc_{<}(x, f(x)) \subseteq \semT{P(\node{m}'_1)}{\V_{\node{m},x}}$.
Now, the semantics of $\exists$ ensures that $x \in \semT{P(\node{m})}{\V_{\node{m},x}}$.

If $\rho(\node{m}) = (\forall,g)$, $\node{m} = S' \tnxTVD \tR{\Phi_1}{\Phi_2}$ for some $\Phi_1$ and $\Phi_2$, $cs(\node{m}) = \node{m}'_1\node{m}'_2$, $\node{m}'_i = S'_i \tnxTVD \Phi_i$ for $i = 1,2$, and $S_1 = g_<(S)$, $S_2 = g_=(S)$.
The induction hypothesis ensures that \ref{stmt:necessity-timed} holds for each $\node{m}'_i$; we must show that \ref{stmt:necessity-timed} holds for $\node{m}$ and $s$.
To this end, let $x$ be such that $x \leq:_{\node{m},\node{n}} s$; we must show that $x \in \semT{P(\node{m})}{\V_{\node{m},x}}$.
The result follows from the semantics of $\forall$ if we prove for all $\tdelay \in \tdelays(x)$ that either $\tsucc_{<}(x,\tdelay) \cap \semT{P(\node{m}'_1)}{\V_{\node{m},x}} \neq \emptyset$ or $\tsucc(x,\tdelay) \in \semT{P(\node{m}'_2)}{\V_{\node{m},x}}$.
So, fix arbitrary $\tdelay \in \tdelays(x)$.
Suppose $g(x,\tdelay) = \tdelay$, and let $x' = \tsucc(x,g(x,\tdelay))$; then $x' <_{\node{m}'_2,\node{m}} x$. The pseudo-transitivity of $\leq:_{\node{m},\node{n}}$ guarantees that $x' \leq:_{\node{m}'_2,\node{n}} s$, and the induction hypothesis then ensures that $x' \in \semT{P(\node{m}'_2)}{\V_{\node{m}'_2,x'}}$.
Corollary~\ref{cor:monotonicity-of-dependency-extensions} guarantees that $x' \in \semT{P(\node{m}'_2)}{\V_{\node{m},x}}$.
Next, suppose $g(x,\tdelay) = \tdelay' < \tdelay$, and let $x'' = \tsucc(x,g(x,\tdelay'))$. Observe $x'' \in \tsucc_{<}(x,g(x,\tdelay))$, and $x'' <_{\node{m}'_1,\node{m}} x$. The pseudo-transitivity of $\leq:_{\node{m},\node{n}}$ guarantees that $x'' \leq:_{\node{m}'_1,\node{n}} s$, and the induction hypothesis then ensures that $x'' \in \semT{P(\node{m}'_1)}{\V_{\node{m}'_1,x''}}$. Corollary~\ref{cor:monotonicity-of-dependency-extensions} guarantees that $x'' \in \semT{P(\node{m}'_1)}{\V_{\node{m},x}}$. As $x'' \in \tsucc_{<}(x,g(x,\tdelay))$, $\tsucc_{<}(x,g(x,\tdelay)) \cap \semT{P(\node{m}'_1)}{\V_{\node{m},x}} \neq \emptyset$.
Hence it follows from the semantics of $\forall$ that $x \in \semT{P(\node{m})}{\V_{\node{m},x}}$.\qedhere
\end{proof}

Corollary~\ref{cor:support-orderings-for-top-level-companion-nodes} again immediately generalizes to timed tableaux based on the previous lemma. We now conclude our soundness proof.

\begin{theorem}[Soundness of timed mu-calculus proof system\label{thm:soundness-timed-tableau}]
    Fix TTS $(\states{S},\to)$ of timed sort $\Sigma$ and valuation $\V$, and let $\mathbb{T} = \tableauTrl$ be a successful timed tableau for sequent $\seq{s}$, where $\seqdl(\seq{s}) = \emptyL$. Then $\seq{s}$ is valid.
\end{theorem}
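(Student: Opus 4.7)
The plan is to mimic the proof of Theorem~\ref{thm:soundness} essentially verbatim, relying on the fact that every lemma it uses has been lifted to the timed setting in this section. First I would introduce the same tree prefix construction: letting $\tree{T} = (\node{N}, \node{r}, p, cs)$ be the tree component of $\mathbb{T}$, I would form $\node{L} = \{\node{n} \in \node{N} \mid \seqdl(\node{n}) = \emptyL \land \rho(\node{n}) = \sigma Z\}$ and consider $\tpre{\tree{T}}{\node{L}}$. Its nodes are exactly those $\node{n} \in \node{N}$ with $\seqdl(\node{n}) = \emptyL$, and each of its leaves is either a leaf of $\tree{T}$ or a node of the form $S \tnxTV{\emptyL} \sigma Z.\Phi$ whose child in $\tree{T}$ is a top-level companion node $\node{n}' = S \tnxTV{(U = \sigma Z.\Phi)} U$.

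Next I would establish that every leaf of $\tpre{\tree{T}}{\node{L}}$ is valid, splitting on the two cases. A genuine leaf of $\mathbb{T}$ is successful by hypothesis, hence valid. A leaf that was cut at a $\sigma Z$ node can be handled precisely as in Theorem~\ref{thm:soundness}: by the timed version of Corollary~\ref{cor:support-orderings-for-top-level-companion-nodes} (which follows immediately from Lemma~\ref{lem:support-ordering-for-companion-nodes-timed}), the pair $(S, <:_{\node{n}'}^+)$ is a support ordering for $\semfTV{Z_{\node{n}'}}{P(\node{n}'')}$, where $\node{n}''$ is the child of $\node{n}'$. In the $\nu$ case, Corollary~\ref{cor:greatest-fixpoint} yields $S \subseteq \nu(\semfTV{Z_{\node{n}'}}{P(\node{n}'')}) = \semTV{P(\node{n}')}$; in the $\mu$ case, success of $\mathbb{T}$ gives well-foundedness of $<:_{\node{n}'}$, hence of its transitive closure (Lemma~\ref{lem:transitive-closure-well-founded}), and Corollary~\ref{cor:least-fixpoint} delivers the same inclusion. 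Since $P(\node{n}) = P(\node{n}')$ and $\node{n}'$ is top-level, no $Z' \in \Var_{\mathbb{T}}$ occurs free in $P(\node{n})$, so using the timed Lemma~\ref{lem:companion-node-formulas-and-semantics} we conclude $S \subseteq \semTV{P(\node{n})} = \semop{\node{n}}$, i.e.\/ $\node{n}$ is valid.

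Finally, I would apply local soundness (Lemma~\ref{lem:local-soundness}, which goes through in the timed setting because it depends only on the semantic sufficiency of $<_{\node{n}',\node{n}}$, now supplied by Lemma~\ref{lem:semantic-sufficiency-of-timed-<}) in a straightforward tree induction on $\tpre{\tree{T}}{\node{L}}$: if every child of an internal node is valid, so is the node itself. Since all leaves are valid, the induction propagates validity all the way up to the root $\node{r}$, whose label is $\seq{s}$, giving validity of $\seq{s}$.

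I do not expect a serious obstacle, because all of the heavy lifting has already been done: semantic sufficiency of the local dependency ordering for the new proof rules $\exists$ and $\forall$ (Lemma~\ref{lem:semantic-sufficiency-of-timed-<}), the node-formula/semantics correspondence for timed tableaux (Lemma~\ref{lem:node-formulas-and-node-semantics-timed}), and the key support-ordering result for companion nodes (Lemma~\ref{lem:support-ordering-for-companion-nodes-timed}) have all been generalized explicitly in the preceding subsection. The only minor care needed is to make sure that the tree-prefix argument does not inadvertently depend on properties particular to the untimed rules; since the prefix is cut at $\sigma Z$-nodes, which are common to both calculi, and the inductive step only invokes Lemma~\ref{lem:local-soundness} uniformly, this is a routine check rather than an obstacle.
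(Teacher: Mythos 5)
Your proposal is correct and follows exactly the route the paper takes: the paper's proof of this theorem is simply "follows the same line of reasoning as Theorem~\ref{thm:soundness}," and you have faithfully unpacked that argument — the tree prefix cut at $\sigma Z$-nodes, validity of the leaves via the timed Corollary~\ref{cor:support-orderings-for-top-level-companion-nodes} (backed by Lemma~\ref{lem:support-ordering-for-companion-nodes-timed}), and propagation to the root via local soundness resting on Lemma~\ref{lem:semantic-sufficiency-of-timed-<}. No gaps.
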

\begin{proof}
    Follows the exact same line of reasoning as the proof of Theorem~\ref{thm:soundness}.\qedhere
\end{proof}

Observe that the results in this section only ever involve adding cases for the new operators to definitions, as well as to the proofs that use case distinction on the operators. In all of the cases we considered, the results that need to be added for these new operators are straightforward, and follow the same line of reasoning as the other operators in the mu-calculus.
This illustrates the extensibility of the proof methodology we developed, at least for proving soundness of tableaux when adding new operators to the mu-calculus.

\subsection{Completeness}\label{sec:timed-completeness}
We finally turn our attention to proving completeness of the tableaux construction for the timed mu-calculus.
In particular, we show that the construction used to establish completeness in Section~\ref{sec:Completeness} can be straightforwardly adapted to account for the new modalities introduced by the timed mu-calculus. This, again, illustrates the extensibility of the proofs given in this paper.

We first note that the notion of tableau normal form (TNF) introduced in Definition~\ref{def:tableau-normal-forms} carries over to timed tableaux as well; the proof of the corresponding Lemma~\ref{lem:structural-equivalence-of-TNF-tableaux} only needs to be adapted by including $\forall$ and $\exists$ in case that $\rn(\rho(\node{n})) \in \{\land,\lor,[K],\dia{K}\}$.  The details of that adaptation are routine and left to the reader.
We now establish completeness for timed mu-calculus formulas without fixed points, extending the result in Lemma~\ref{lem:fixpoint-free-completeness}.

\begin{lemma}[Timed fixpoint-free completeness]\label{lem:timed-fixpoint-free-completeness}
Let $\T, \V, \Phi$ and $S$ be such that $\Phi$ is a fixpoint-free timed mu-calculus formula and $S \subseteq \semTV{\Phi}$.  Then there is a successful TNF timed tableau for $S \tnxTV{\emptyL} \Phi$.
\end{lemma}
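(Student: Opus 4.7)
The plan is to extend the structural induction on $\Phi$ used in the proof of Lemma~\ref{lem:fixpoint-free-completeness}. The cases already handled there (variables, negation, $\land$, $\lor$, $[K]$, $\dia{K}$) carry over verbatim since the timed transition system is still an LTS of the enriched timed sort and the rules for these operators are unchanged. What I need to add is a treatment of the two new timed primitive $\tR{\Phi_1}{\Phi_2}$, together with its derived dual $\tU{\Phi_1}{\Phi_2}$ (which in positive normal form appears as its own connective with proof rule $\exists$). In each case I will exhibit a witness function realizing the side condition of the corresponding proof rule, argue that the two resulting child sequents are valid, and then apply the induction hypothesis to obtain successful TNF tableaux for the children, which I combine by creating a fresh root labeled with the goal sequent.

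For $\Phi = \tU{\Phi_1}{\Phi_2}$, the semantics in Definition~\ref{def:timed-mu-calculus-semantics} says each $s \in S \subseteq \semTV{\tU{\Phi_1}{\Phi_2}}$ admits some $\tdelay_s \in \tdelays(s)$ with $\tsucc_<(s,\tdelay_s) \subseteq \semTV{\Phi_1}$ and $\tsucc(s,\tdelay_s) \in \semTV{\Phi_2}$. Invoking the Axiom of Choice I set $f(s) = \tdelay_s$, obtaining an $\exists$-function. By construction $f_<(S) \subseteq \semTV{\Phi_1}$ and $f_=(S) \subseteq \semTV{\Phi_2}$, so both premises of rule $\exists$ are valid; the induction hypothesis yields successful TNF tableaux for them. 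A new root node with rule application $(\exists,f)$ and these two subtableaux as children gives the desired successful TNF tableau.

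For $\Phi = \tR{\Phi_1}{\Phi_2}$, the semantics says that for each $s \in S$ and each $\tdelay \in \tdelays(s)$, either $\tsucc(s,\tdelay) \in \semTV{\Phi_2}$, or there exists some $\tdelay' < \tdelay$ with $\tsucc(s,\tdelay') \in \semTV{\Phi_1}$. Using choice, I define $g(s,\tdelay) = \tdelay$ in the former case and pick $g(s,\tdelay)$ to be such a witnessing $\tdelay'$ in the latter; this produces a $\forall$-function satisfying $g(s,\tdelay) \le \tdelay$. By construction $g_=(S) \subseteq \semTV{\Phi_2}$ and $g_<(S) \subseteq \semTV{\Phi_1}$, the induction hypothesis supplies successful TNF tableaux for the two child sequents, and I assemble them below a fresh root with rule application $(\forall,g)$.

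The only non-routine aspect is verifying that the witness functions truly can be chosen to meet the side conditions of the $\exists$ and $\forall$ rules exactly as written — in particular that in the $\forall$ case a single function $g$ of both $s$ and $\tdelay$ can simultaneously encode the two possible justifications at each $\tdelay$. This is not really an obstacle: the side conditions in Figure~\ref{fig:proof-rules-timed} were designed precisely to admit such pointwise choices, and time-reflexivity ensures $\tdelays(s)$ is always non-empty so the domains of $f$ and $g$ are inhabited. Irredundance (the TNF requirement that the state sets of the two children of a $\lor$-node be disjoint) is not relevant here since $\exists$ and $\forall$ are not $\lor$-rules, and thinning-restriction and unfolding-limitation are trivially preserved because no fixpoint rule and no Thin is introduced. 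Hence the extended induction goes through and yields the required successful TNF tableau.
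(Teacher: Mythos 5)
Your proposal is correct and follows essentially the same route as the paper's proof: structural induction reusing the untimed cases, then for the two timed modalities a pointwise (choice-based) construction of the witness functions $f$ and $g$ from the semantic clauses, followed by an appeal to the induction hypothesis on the two valid child sequents and assembly under a fresh root. The only cosmetic difference is the order in which you resolve the disjunction when defining $g$ (you prioritize the $\Phi_2$ branch, the paper prioritizes the $\Phi_1$ branch), which is immaterial.
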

\begin{proof}
Let $\T = (\states{S}, \to)$ be a TTS of timed sort $\Sigma$, and $\V$ be a valuation.
The proof proceeds by structural induction on $\Phi$; the induction hypothesis states that for any subformula $\Phi'$ of $\Phi$ and $S'$ such that $S' \subseteq \semTV{\Phi'}$, $S' \tnxTV{\emptyL} \Phi'$ has a successful TNF timed tableau.
The proof is completely analogous to that of Lemma~\ref{lem:fixpoint-free-completeness}, and involves a case analysis on the form of $\Phi$.
We here only show the cases for the new operators, $\forall$ and $\exists$.

Assume $\Phi = \exists_{\Phi_1} \Phi_2$;
we first establish that there exists a function $f \in S \to \Rnneg$ such that for all $s \in S$, $f(s) \in \tdelays(s)$,  $\tsucc_{<}(s,f(s)) \subseteq \semTV{\Phi_1}$, and $\tsucc(s,f(s)) \in \semTV{\Phi_2}$.
Fix arbitrary $s \in S$. Then $s \in \semTV{\exists_{\Phi_1} \Phi_2}$, and the definition of $\semTV{\exists_{\Phi_1} \Phi_2}$ guarantees the existence of $\tdelay \in \tdelays(s)$ such that $\tsucc_{<}(s,\tdelay) \subseteq \semTV{\Phi_1}$ and $\tsucc(s,\tdelay) \in \semTV{\Phi_2}$. Let $\tdelay$ be such and set $f(s) = \tdelay$. This immediately satisfies the required conditions.
It follows from this definition of $f$ that $f_<(S) \subseteq \semTV{\Phi_1}$. Furthermore, as for every $s \in S$, $\tsucc(s,f(s)) \in \semTV{\Phi_2}$, $f_=(S) \subseteq \semTV{\Phi_2}$.
Hence, the induction hypothesis guarantees that successful TNF timed tableaux exist for $f_<(S) \tnxTV{\emptyL} \Phi_1$ and $f_=(S) \tnxTV{\emptyL} \Phi_2$.
Without loss of generality, assume that these tableaux have disjoint sets of proof nodes.
We now construct a successful TNF timed tableau for $S \tnxTV{\emptyL} \exists_{\Phi_1} \Phi_2$ as follows.
Create a fresh tree node labeled by $S \tnxTV{\emptyL} \exists_{\Phi_1} \Phi_2$, having as its left child the root node of the successful TNF  timed tableau for $f_<(S) \tnxTV{\emptyL} \Phi_1$ and as its right child the root of the successful TNF timed tableau for $f_=(S) \tnxTV{\emptyL} \Phi_2$. The rule application associated with the new node is $(\exists,f)$.
The new tableau is clearly successful and TNF. 

Now assume $\Phi = \forall_{\Phi_1} \Phi_2$;
we first establish that there exists a function $g \in S \times \Rnneg \to \Rnneg$ such that, for all $s \in S$, $\tdelay \in \tdelays(s)$, $g(s,\tdelay) \leq \tdelay$, and such that if $g(s,\tdelay) = \tdelay$, $\tsucc(s,g(s,\tdelay)) \in \semTV{\Phi_2}$, and $\tsucc(s,g(s,\tdelay)) \in \semTV{\Phi_1}$ otherwise.
Fix arbitrary $s \in S$. Since $s \in \semTV{\forall_{\Phi_1} \Phi_2}$,
the definition of $\semTV{\forall_{\Phi_1} \Phi_2}$ guarantees that, for all $\tdelay \in \tdelays(s)$, either $\tsucc_{<}(s,\tdelay) \cap \semTV{\Phi_1} \neq \emptyset$ or $\tsucc(s,\tdelay) \in \semTV{\Phi_2}$. So, if $\tsucc_{<}(s,\tdelay) \cap \semTV{\Phi_1} \neq \emptyset$ there is some $\tdelay' < \tdelay$ such that $\tsucc(s,\tdelay') \in \semTV{\Phi_1}$, and we choose $g(s,\tdelay) = \tdelay'$. 
Otherwise, $\tsucc(s,\tdelay) \in \semTV{\Phi_2}$ and we can choose $g(s,\tdelay) = \tdelay$.
Given such a $g$, 
it is easy to see that $g_<(S) \subseteq \semTV{\Phi_1}$ and $g_=(S) \subseteq \semTV{\Phi_2}$; hence the induction hypothesis guarantees that successful TNF timed tableaux exist for $g_<(S) \tnxTV{\emptyL} \Phi_1$ and $g_=(S) \tnxTV{\emptyL} \Phi_2$.
Without loss of generality, assume that these tableaux have disjoint sets of proof nodes.
We now construct a successful TNF timed tableau for $S \tnxTV{\emptyL} \forall_{\Phi_1} \Phi_2$ as follows.
Create a fresh tree node labeled by $S \tnxTV{\emptyL} \forall_{\Phi_1} \Phi_2$, having as its left child the root node of the successful TNF  timed tableau for $g_<(S) \tnxTV{\emptyL} \Phi_1$ and as its right child the root of the successful TNF timed tableau for $g_=(S) \tnxTV{\emptyL} \Phi_2$. The rule application associated with the new node is $(\forall,g)$.
The new tableau is clearly successful and TNF.\qedhere
\end{proof}

The notion of tableau compliance (Definition~\ref{def:tableau-compliance}) generalizes to the timed setting in the obvious way.
We next establish the existence of successful TNF timed tableaux for valid sequents in the case where formulas have the form $\sigma Z . \Phi$, where $\Phi$ does not contain fixpoint subformulas, and have specific $\sigma$-compatible fixpoint orderings, analogous to Lemma~\ref{lem:single-fixpoint-completeness}.

\begin{lemma}[Timed single-fixpoint completeness]\label{lem:timed-single-fixpoint-completeness}
Let $\T$ be a TTS, and let $\Phi$, $Z$, $\V$, $\sigma$ and $S$ be such that $\Phi$ is a fixpoint-free timed mu-calculus formula and $S = \semTV{\sigma Z . \Phi}$.
Also let $(S, \prec)$ be a $\sigma$-compatible, total, qwf support ordering for $\semfZTV{\Phi}$. 
Then $S \tnxTV{\emptyL} \sigma Z . \Phi$ has a successful TNF timed tableau compliant with $(S, \prec)$.
\end{lemma}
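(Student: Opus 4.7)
The plan is to mirror the three-step proof of Lemma~\ref{lem:single-fixpoint-completeness}, replacing the appeals to Lemma~\ref{lem:fixpoint-free-completeness} with appeals to Lemma~\ref{lem:timed-fixpoint-free-completeness} and extending the coinductive merging construction to cover the two new rule applications $(\exists,f)$ and $(\forall,g)$.  First, for each $s \in S$, the support-ordering property gives $s \in \semfZTV{\Phi}(\preimg{{\prec}}{s}) = \semT{\Phi}{\V_s}$ with $\V_s = \V[Z:=\preimg{{\prec}}{s}]$.  Since $\Phi$ is fixpoint-free, Lemma~\ref{lem:timed-fixpoint-free-completeness} yields a successful TNF timed tableau $\mathbb{T}_s = (\tree{T}_s, \rho_s, \T, \V_s, \lambda_s)$ for $\{s\} \tnxT{\V_s}{\emptyL} \Phi$.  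The timed analog of Lemma~\ref{lem:structural-equivalence-of-TNF-tableaux} (which goes through by simply enlarging the case analysis on $\rn(\rho_1(\node{n}_1))$ to include $\exists$ and $\forall$, both of which are dictated by the top-level connective of $\seqfm(\lambda_1(\node{n}_1))$) guarantees that the $\mathbb{T}_s$ share a common tree $\tree{T}$, common formula labels $\seqfm_s(\node{n})$, common definition lists, and common rule names $\rn(\rho_s(\node{n}))$; only state sets and the witnesses in the $\dia{K}$, $\exists$, $\forall$ applications vary with $s$.

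Next, I would coinductively define a merged tableau $\mathbb{T}_S$ for $S \tnxT{\V_S}{\emptyL} \Phi$, with $\V_S = \V[Z:=\preimg{{\prec}}{S}]$, satisfying the invariants $\seqfm(\lambda_S(\node{n})) = \seqfm(\node{n})$ and $\seqst(\lambda_S(\node{n})) \subseteq \bigcup_{s\in S}\seqst(\lambda_s(\node{n}))$.  The leaf cases and the cases $\land$, $\lor$, $[K]$, $\dia{K}$ are verbatim from Lemma~\ref{lem:single-fixpoint-completeness}, using Lemma~\ref{lem:qwo-pseudo-minimum} to select, for each $s \in \seqst(\lambda_S(\node{n}))$, a $\prec$-pseudo-minimum representative of $I_s = \{s' \in S \mid s \in \seqst(\lambda_{s'}(\node{n}))\}$.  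The two new cases are treated analogously.  For $\rho_S(\node{n}) = (\exists, f_\node{n})$, I would set $f_\node{n}(s) = f_{\node{n},s'}(s)$, where $s'$ is a pseudo-minimum of $I_s$ and $f_{\node{n},s'}$ is the $\exists$-function used at $\node{n}$ in $\mathbb{T}_{s'}$; the children's state sets are then $(f_\node{n})_<(\seqst(\lambda_S(\node{n})))$ and $(f_\node{n})_=(\seqst(\lambda_S(\node{n})))$.  For $\rho_S(\node{n}) = (\forall, g_\node{n})$, I would set $g_\node{n}(s, \tdelay) = g_{\node{n},s'}(s, \tdelay)$ where $s'$ is again a pseudo-minimum of $I_s$; because each $g_{\node{n},s'}(s,\tdelay) \leq \tdelay$, $g_\node{n}$ remains a valid $\forall$-function, and the children's state sets are $(g_\node{n})_<(\seqst(\lambda_S(\node{n})))$ and $(g_\node{n})_=(\seqst(\lambda_S(\node{n})))$.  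In both new cases the required state-set inclusion invariant is immediate because every successor generated by the merged witness at $s$ already appears in $\mathbb{T}_{s'}$.

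Successfulness of the leaves of $\mathbb{T}_S$ is then checked exactly as in the untimed proof (only $Z$-, $Y$-, and $\lnot Y$-leaves arise, since $\Phi$ is fixpoint-free), and the TNF property is inherited from the $\mathbb{T}_s$.  The key dependency-tracking step of Lemma~\ref{lem:single-fixpoint-completeness} --- whenever $s_2 <_{\node{n}_2,\node{n}_1} s_1$ in $\mathbb{T}_S$, for every $s \in I_{s_1}$ either $s_2 \in \seqst(\lambda_s(\node{n}_2))$ or there exists $s' \prec s$ with $s_2 \in \seqst(\lambda_{s'}(\node{n}_2))$ --- extends cleanly to the two new cases: by construction, the merged witness at $\node{n}$ coincides pointwise with the witness in the tableau for the chosen pseudo-minimum $s'$, so the local time-successor relations defined in Definition~\ref{def:local-dependency-ordering-timed} land in the appropriate child state set of $\mathbb{T}_{s'}$.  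A simple induction on $<:$ then propagates the property to extended dependencies from the root to $Z$-leaves, giving $s_\node{n} \prec s_\node{r}$ by totality/transitivity of $\prec$.  Finally, the third step --- prepending a $\sigma Z$ application and an Un application introducing a fresh $U$, and substituting $U$ for $Z$ throughout $\mathbb{T}_S$ --- transfers verbatim from Lemma~\ref{lem:single-fixpoint-completeness}, producing a successful TNF timed tableau for $S \tnxTV{\emptyL} \sigma Z.\Phi$ compliant with $(S,\prec)$.

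The main obstacle will be the bookkeeping for the $\forall$ case, where the witness $g_{\node{n},s'}$ is a binary function of state and delay.  Choosing a single pseudo-minimum $s'$ per $s$ (rather than one per pair $(s,\tdelay)$) is necessary so that the resulting $g_\node{n}$ yields consistent child state sets and so that each locally-dependent successor can be traced back to a well-defined tableau $\mathbb{T}_{s'}$ in the dependency-tracking argument; verifying that this uniform choice is compatible with the state-set invariant at both children --- and with time-continuity, so that smaller-delay successors $\tsucc(s,g_\node{n}(s,\tdelay))$ with $g_\node{n}(s,\tdelay) < \tdelay$ do land in $\seqst(\lambda_s(\node{n}_1))$ for some $s \in S$ with $s_1 \in \seqst(\lambda_s(\node{n}))$ --- is where the most care is required.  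The $\exists$ case, by contrast, is straightforward since the witness depends only on the state.
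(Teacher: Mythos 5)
Your proposal is correct and follows essentially the same route as the paper's proof: the same three-step structure (per-state tableaux via the timed fixpoint-free lemma, coinductive merging with pseudo-minimum representatives of $I_s$ for the $\exists$ and $\forall$ witness functions, then prepending the $\sigma Z$ and Un applications). In particular, your choice of a single pseudo-minimum per state (rather than per state--delay pair) in the $\forall$ case is exactly what the paper does.
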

\begin{proof}
Fix TTS $\T= \lts{\states{S}}$ of sort $\Sigma$, and let $\Phi, Z, \V, \sigma$ and $S$ be such that $\Phi$ is a fixpoint-free timed mu-calculus formula, and $S = \semTV{\sigma Z.\Phi}$.  Also let $(S, \prec)$ be a $\sigma$-compatible, total, qwf support ordering for $f_\Phi = \semfZTV{\Phi}$.
We must construct a successful TNF timed tableau for sequent $S \tnxTV{\emptyL} \sigma Z.\Phi$ that is compliant with $(S,\prec)$. The proof mirrors the one given for Lemma~\ref{lem:single-fixpoint-completeness} and consists of the following steps.
\begin{enumerate}
    \item\label{it:step-single-state-timed-tableau}
    For each $s \in S$ we use Lemma~\ref{lem:timed-fixpoint-free-completeness} to establish the existence of a successful TNF timed tableau for sequent $\{s\} \tnxT{\V_s}{\emptyL} \Phi$, where $\V_s = \V[Z := \preimg{{\prec}}{s}]$.
    \item\label{it:step-full-set-timed-tableau}
    We then construct a successful TNF timed tableau for sequent $S \tnxT{\V_S}{\emptyL} \Phi$, where $\V_S = \V[Z := \preimg{{\prec}}{S}]$, from the individual tableaux for the $s \in S$.
    \item\label{it:step-fixpoint-timed-tableau}
    We convert the tableau for $S \tnxT{\V_s}{\emptyL} \Phi$ into a successful TNF timed tableau for $S \tnxTV{\emptyL} \sigma Z.\Phi$ that is compliant with $\prec$.
\end{enumerate}


\paragraph{Step~\ref{it:step-single-state-timed-tableau} of proof outline:  construct tableau for $\{s\} \tnxT{\V_s}{\emptyL} \Phi$, where $s \in S$.}
This construction is completely analogous to the one in the proof of Lemma~\ref{lem:single-fixpoint-completeness}, using Lemma~\ref{lem:timed-fixpoint-free-completeness} instead of Lemma~\ref{lem:fixpoint-free-completeness}.  Let $\mathbb{T}_s = (\T,\V_s,\tree{T},\rho_s,\lambda_s)$ be the successful TNF timed tableau whose root is $\{s\} \tnxT{\V_s}{\emptyL} \Phi$, with $\tree{T} = (\node{N},\node{r},p,cs)$ the common tree shared by all these structurally equivalent tableaux, with $\seqfm(\node{n})$ and $\rn(\node{n})$ for $\node{n} \in \node{N}$ the common formulas and rule names each tableau includes in $\node{n}$.

\paragraph{Step~\ref{it:step-full-set-timed-tableau} of proof outline:  construct tableau for $S \tnxT{\V_S}{\emptyL} \Phi$.}
We now adapt the construction in the proof of Lemma~\ref{lem:single-fixpoint-completeness} to build a successful TNF timed tableau for $S \tnxT{\V_S}{\emptyL} \Phi$ satisfying the following:  if $s, s'$ and $\node{n}'$ are such that $\seqfm(\node{n}') = Z$ and $s' <:_{\node{n}',\node{r}} s$, then $s' \prec s$.  There are two cases to consider.
In the first case, $S = \emptyset$.  In this case, ${\prec} = \preimg{{\prec}}{S} = \emptyset$, and $\emptyset \tnxT{\V_S}{\emptyL} \Phi$ is valid and therefore, by Lemma~\ref{lem:timed-fixpoint-free-completeness}, has a successful TNF tableau.  Define $\mathbb{T}_S$ to be this tableau.  Note that since $S = \emptyset$ $T_S$ vacuously satisfies the property involving $<:$.

In the second case, $S \neq \emptyset$; we will construct $\mathbb{T}_S = (\T, \V_S, \tree{T}, \rho_S, \lambda_S)$ that is structurally equivalent to each $\mathbb{T}_s$ for $s \in S$.
As was the case in the proof of Lemma~\ref{lem:single-fixpoint-completeness} 
the idea is to appropriately ``merge" the individual tableaux $\mathbb{T}_s$ for the $s \in S$.
The construction uses a co-inductive strategy to define $\rho_S$ and $\lambda_S$ so that invariants \ref{inv:rule}--\ref{inv:state-set} in Lemma~\ref{lem:single-fixpoint-completeness} hold for $\node{n} \in \node{N}$.  Specifically, $\lambda_S(\node{r})$ is set to be sequent $S \tnxT{\emptyL}{\V_S} \Phi$; this ensures that invariants $\ref{inv:formula}$ and $\ref{inv:state-set}$ hold of $\node{r}$.  Then, for every internal node $\node{n}$ for which $\lambda_S(\node{n})$ has been defined and for which invariants~\ref{inv:formula} and~\ref{inv:state-set} hold, $\rho_S(\node{n})$ and $\lambda_S(\node{n}')$ for each child $\node{n}'$ of $\node{n}$ are defined so that invariant~\ref{inv:rule} holds of $\node{n}$ and invariants~\ref{inv:formula} and~\ref{inv:state-set} hold of each $\node{n}'$.  This processing of internal nodes is done using a case analysis on $\rn(\node{n})$.  The details are as follows, where we let $S_{\node{n}} = \seqst(\lambda_S(\node{n}))$ be the set of states in the sequent labeling $\node{n}$.
\begin{description}
    \item[$\rn(\node{n})\div$, or $\rn(\node{n}) \in \{ \land, \lor, [K{]}, \dia{K} \}$.] The constructions are the same as the ones in the proof of Lemma~\ref{lem:single-fixpoint-completeness}.
    
    \item[$\rn(\node{n}) = \exists$.] 
     In this case, $cs(\node{n}) = \node{n}_1\node{n}_2$ and $\seqfm(\node{n}) = \exists_{\seqfm(\node{n}_1)}\seqfm(\node{n}_2)$.
     We begin by constructing a function $f_{\node{n}} \in S_{\node{n}} \to \Rnneg$ such that for all $s \in S_{\node{n}}$, $f_{\node{n}}(s) \in \tdelays(s)$, and also such that $(f_\node{n})_<(S_\node{n}) \subseteq \bigcup_{t \in S} \seqst(\lambda_t(\node{n}_1))$ and $(f_\node{n})_=(S_\node{n}) \subseteq \bigcup_{t \in S} \seqst(\lambda_t(\node{n}_2))$.
     This function will then be used to define $\rho_S(\node{n}$, $\lambda_S(\node{n}_1)$ and $\lambda_S(\node{n}_2)$.
     So fix $s \in S_{\node{n}}$; we construct $f_{\node{n}}(s)$ based on the tableaux $\mathbb{T}_t$ whose sequence for $\node{n}$ contains $s$. To this end, define
     \[
     I_s = \{t \in S \mid s \in \seqst(\lambda_{t}(\node{n})) \}.
     \]
     Intuitively, $I_s \subseteq S$ contains all states $t$ whose tableau $\mathbb{T}_{t}$ contains state $s$ in $\node{n}$.
    Clearly $I_s$ is non-empty and thus contains a pseudo-minimum element $t$ (Lemma~\ref{lem:qwo-pseudo-minimum}).
    Let $f_{\node{n},t} \in \seqst(\lambda_t(\node{n})) \to \Rnneg$ be such that $\rho_t(\node{n}) = (\exists, f_{\node{n},t})$.
    We know that
    for all $s \in  \seqst(\lambda_t(\node{n}))$, $f_{\node{n},t}(s) \in \tdelays(s)$,  $\tsucc_{<}(s,f_{\node{n},t}(s)) \subseteq \seqst(\lambda_t(\node{n}_1))$, and $\{ \tsucc(s, f_{\node{n},t}(s)) \mid s \in \seqst(\lambda_t(\node{n})) \} \subseteq \seqst(\lambda_t(\node{n}_2))$.
    We now define $f_{\node{n}}(s) = f_{\node{n},t}(s)$
    Finally, we define $\rho_S(\node{n}) = (\exists, f_{\node{n}})$,
    $\lambda_S(\node{n}_1) = (f_\node{n})_<(S_\node{n}) \tnxT{\V_S}{\emptyL} \seqfm(\node{n}_1)$, and $\lambda_S(\node{n}_2) = (f_\node{n})_=(S_\node{n}) \tnxT{\V_S}{\emptyL} \seqfm(\node{n}_2)$.
    It can be seen that invariant~\ref{inv:rule} holds of $\node{n}$ and that \ref{inv:formula} and \ref{inv:state-set} hold of $\node{n}_1$ and $\node{n}_2$.

    \item[$\rn(\node{n}) = \forall$.]
    In this case, $cs(\node{n}) = \node{n}_1\node{n}_2$ and $\seqfm(\node{n}) = \forall_{\seqfm(\node{n}_1)}\seqfm(\node{n}_2)$.
    We begin by constructing a function $g_\node{n} \in S_{\node{n}} \times \Rnneg \to \Rnneg$ such that $(g_\node{n})_<(S_\node{n}) \subseteq \bigcup_{t \in S} \seqst(\lambda_t(\node{n}_1))$ and $(g_\node{n})_=(S_\node{n}) \subseteq \bigcup_{t \in S} \seqst(\lambda_t(\node{n}_2))$.
    This function will then be used to define $\rho_S(\node{n})$, $\lambda_S(\node{n}_1)$ and $\lambda_S(\node{n}_2)$ so that the desired invariants hold.
    So fix $s \in S_\node{n}$ and  $\tdelay \in \tdelays(s)$; we construct $g_{\node{n}}(s,\tdelays)$ based on the tableaux $\mathbb{T}_{t}$ whose sequent for $\node{n}$ contains $s$. To this end, define
    \[
    I_s = \{t \in S \mid s \in \seqst(\lambda_{t}(\node{n})) \}.
    \]
    Intuitively, $I_s \subseteq S$ contains all states $t$ whose tableau $\mathbb{T}_{t}$ contains state $s$ in $\node{n}$.
    Clearly $I_s$ is non-empty and thus contains a pseudo-minimum element $t$ with respect to $\prec$ (Lemma~\ref{lem:qwo-pseudo-minimum}).
    Let $g_{\node{n}, t} \in \seqst(\lambda_{t}(\node{n})) \times \states{S} \to \states{S}$ be such that $\rho_t(\node{n}) = (\forall,g_{\node{n},t})$.  We know that for all $\tdelay \in \tsucc(s)$, if $g_{\node{n},t}(s,\tdelay) < \tdelay$ then $\tsucc(s,g_{\node{n},t}(s,\tdelay)) \in \seqst(\lambda_t(\node{n}_1))$, and if $g_{\node{n},t}(s,\tdelay) = \tdelay$ then $\tsucc(s,g_{\node{n},t}(s,\tdelay)) \in \seqst(\lambda_t(\node{n}_2))$.
    We take $g_{\node{n}}(s,\tdelay) = g_{\node{n},t}(s,\tdelay)$.  
    Finally, we define $\rho(\node{n}) = (\forall, g_{\node{n}})$, $\lambda_S(\node{n}_1) = (g_\node{n})_<(S_\node{n}) \tnxT{\V_S}{\emptyL} \seqfm(\node{n}_1)$ and $\lambda_S(\node{n}_2) = (g_\node{n})_=(S_\node{n}) \tnxT{\V_S}{\emptyL} \seqfm(\node{n}_2)$.
     It can be seen that invariant~\ref{inv:rule} holds of $\node{n}$ and that \ref{inv:formula} and \ref{inv:state-set} hold of $\node{n}_1$ and $\node{n}_2$.
     
\end{description}
This construction ensures that Properties~\ref{inv:rule}--\ref{inv:state-set} hold for all $\node{n}$.

To establish that $\mathbb{T}_S$ is successful we must show that every leaf in $\mathbb{T}_S$ is successful.  The argument is identical to the proof in Lemma~\ref{lem:single-fixpoint-completeness}, Step~\ref{it:step-full-set-tableau}.

\paragraph{Step~\ref{it:step-fixpoint-timed-tableau} of proof outline:  construct tableau for $S \tnxT{\V}{\emptyL} \sigma Z.\Phi$.}
The construction is identical to the one in Lemma~\ref{lem:single-fixpoint-completeness}, Step~\ref{it:step-fixpoint-tableau}.
\qedhere
\end{proof}

\noindent
As an immediate corollary, we have the following.

\begin{corollary}\label{cor:timed-single-fixpoint-completeness}
Fix $\T$, and let $\Phi, Z, \V, \sigma$ and $S$ be such that $\Phi$ is a timed fixpoint-free formula and $S = \semTV{\sigma Z.\Phi}$.  Then $S \tnxTV{\emptyL} \sigma Z.\Phi$ has a successful tableau.
\end{corollary}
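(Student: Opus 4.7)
The plan is to reduce the corollary directly to Lemma~\ref{lem:timed-single-fixpoint-completeness} by exhibiting a suitable support ordering on $S$, mirroring the proof of the untimed analog in Corollary~\ref{cor:single-fixpoint-completeness}. Since $\sigma Z.\Phi$ is well-formed, $Z$ is positive in $\Phi$, so Lemma~\ref{lem:mu-calculus-semantic-monotonicity} gives monotonicity of $\semfZTV{\Phi}$ over the subset lattice for $\states{S}$, and Lemma~\ref{lem:fixpoint-characterizations} confirms $S = \semTV{\sigma Z.\Phi} = \sigma(\semfZTV{\Phi})$. Note that the fact that $\Phi$ is fixpoint-free plays no role in this step; it is only used later, through the invocation of Lemma~\ref{lem:timed-single-fixpoint-completeness}.

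Next, I would invoke Corollary~\ref{cor:support-fixpoints} to obtain a $\sigma$-maximal support ordering $(S,\prec)$ for $\semfZTV{\Phi}$. By definition of $\sigma$-maximality, either $\sigma = \nu$ and $\prec$ is the universal relation $U_S$, or $\sigma = \mu$ and $\prec$ is a well-ordering on $S$. In either case I need to check that $\prec$ is $\sigma$-compatible, total, and qwf. Compatibility is immediate from Definition of $\sigma$-compatibility. For totality: $U_S$ is trivially total, and a well-ordering is total by definition. For quotient well-foundedness: when $\prec = U_S$, the quotient has a single equivalence class and the induced irreflexive partial order is empty, hence trivially well-founded; when $\prec$ is a well-ordering it is antisymmetric, so every equivalence class under $\sim_\prec$ is a singleton, and the irreflexive part of the induced order on $Q_\prec$ is order-isomorphic to $\prec$ itself and therefore well-founded.

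With $(S,\prec)$ in hand, Lemma~\ref{lem:timed-single-fixpoint-completeness} produces a successful TNF timed tableau for $S \tnxTV{\emptyL} \sigma Z.\Phi$ that is compliant with $(S,\prec)$. Since every TNF tableau is in particular a tableau, this yields a successful tableau for the given sequent, as required.

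I do not anticipate any substantial obstacle here; the corollary is a direct packaging of Lemma~\ref{lem:timed-single-fixpoint-completeness} together with the existence of a canonical support ordering, exactly parallel to how Corollary~\ref{cor:single-fixpoint-completeness} is derived from Lemma~\ref{lem:single-fixpoint-completeness}. The only mildly delicate point is confirming qwf-ness for the $\mu$ case, but this is immediate once one observes that a well-ordering's quotient structure is trivial.
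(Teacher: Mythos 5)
Your proof is correct and takes essentially the same route as the paper, which derives the corollary in one line from Lemma~\ref{lem:timed-single-fixpoint-completeness} plus the observation that every $\sigma$-maximal support ordering is total and qwf. Your added verification of totality and quotient well-foundedness for the two cases ($U_S$ when $\sigma=\nu$, a well-ordering when $\sigma=\mu$) just makes explicit what the paper leaves implicit.
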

\begin{proof}
Follows from Lemma~\ref{lem:timed-single-fixpoint-completeness} and the fact that every $\sigma$-maximal support ordering $(S, \prec)$ for $\semfTV{Z}{\Phi}$ is total and qwf.
\qedhere
\end{proof}

We now generalize Lemma~\ref{lem:timed-single-fixpoint-completeness} to the multiple fixpoint case.  This lemma mirrors the analogous result (Lemma~\ref{lem:fixpoint-completeness}) proved earlier for the non-timed mu-calculus, and the proof is a straightforward extension of the earlier proof.

\begin{lemma}[Timed fixpoint completeness]\label{lem:timed-fixpoint-completeness}
Fix TTS $\T$, let $\Phi$ be a timed mu-calculus formula, and let $Z, \V, \sigma$ and $S$ be such that $S = \semTV{\sigma Z.\Phi}$.  Also let $(S, \prec)$ be a $\sigma$-compatible, total, qwf support ordering for $\semfTV{Z}{\Phi}$.  Then $S \tnxTV{\emptyL} \sigma Z.\Phi$ has a successful TNF timed tableau compliant with $(S, \prec)$.
\end{lemma}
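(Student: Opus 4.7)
The plan is to mimic the structure of the untimed fixpoint-completeness argument (Lemma~\ref{lem:fixpoint-completeness}), proceeding by strong induction on the number of fixpoint subformulas of $\Phi$, and replacing each appeal to an untimed lemma with its timed analogue. All the lattice-theoretic ingredients (Lemma~\ref{lem:nested-fixpoint-semantics}, Lemma~\ref{lem:fg-support}, and the results on qwf/qwo support orderings from Section~\ref{subsec:qwf-and-wo}) are phrased purely in terms of monotonic binary functions over subset lattices and therefore apply verbatim; only the tableau-construction steps require new cases for the modal operators $\forall$ and $\exists$.

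In the base case, $\Phi$ is fixpoint-free and Lemma~\ref{lem:timed-single-fixpoint-completeness} directly yields a successful TNF timed tableau for $S \tnxTV{\emptyL} \sigma Z.\Phi$ compliant with $(S,\prec)$. In the inductive step, first select a maximal fixpoint subformula $\sigma' Z'.\Gamma$ of $\Phi$ and a fresh variable $W$ so that $\Phi = \Phi'[W := \sigma' Z'.\Gamma]$, with $\Phi'$ and $\Gamma$ each containing strictly fewer fixpoint subformulas than $\Phi$. Lemma~\ref{lem:nested-fixpoint-semantics} gives $f_\Phi = f \,[\sigma']\, g$ with the usual $f$ and $g$, and $(S,\prec)$ is a $\sigma$-compatible, total, qwf support ordering for $f \,[\sigma']\, g$.

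Next, repeat the three-step pattern of Lemma~\ref{lem:fixpoint-completeness}. Set $S' = \semT{\sigma' Z'.\Gamma}{\V_S}$ and $\V' = \V[W := S']$; exactly as before, verify that $(S,\prec)$ is a support ordering for $\semfT{Z}{\Phi'}{\V'}$ and that $S = \semT{\sigma Z.\Phi'}{\V'}$ (the argument that $S$ and this fixpoint coincide is unchanged, because it only uses monotonicity and the support-ordering machinery). By the induction hypothesis, obtain a successful TNF timed tableau $\mathbb{T}_{\Phi'}$ for $S \tnxT{\V'}{\emptyL} \sigma Z.\Phi'$ compliant with $(S,\prec)$, with a unique $W$-leaf $\node{n}_W$. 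Then use Lemma~\ref{lem:fg-support} to extract a $\sigma'$-compatible, total, qwf support ordering $(S',\prec')$ for $g_S$ that is locally consistent with $(S,\prec)$; for each equivalence class $Q$ of $\prec$, the restricted ordering $(S'_Q, \prec'_Q)$ is a $\sigma'$-compatible, total, qwf support ordering for $g_Q$, and the induction hypothesis supplies a successful TNF timed tableau $\mathbb{T}_{\Gamma,Q}$ for $S'_Q \tnxT{\V_Q}{\emptyL} \sigma'Z'.\Gamma$. Structural equivalence (the timed analogue of Lemma~\ref{lem:structural-equivalence-of-TNF-tableaux}) lets us share a common tree; merge the $\mathbb{T}_{\Gamma,Q}$ into $\mathbb{T}'_\Gamma$ and then combine with $\mathbb{T}_{\Gamma,S}$ to build $\mathbb{T}_\Gamma$ satisfying the dependency property that states in a $Z$-leaf beneath the root lie in $\preimg{{\prec}}{x}$ whenever the root state is in $S'_{[x]}$.

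Finally, splice $\mathbb{T}_\Gamma$ into $\mathbb{T}_{\Phi'}$ below $\node{n}_W$ (inserting a Thin step, renaming $Z \mapsto U$, and prepending $(U = \sigma Z.\Phi)$ to the definition list) exactly as in the untimed construction to obtain the desired successful TNF timed tableau $\mathbb{T}_\Phi$ for $S \tnxTV{\emptyL} \sigma Z.\Phi$; compliance with $(S,\prec)$ follows from compliance of $\mathbb{T}_{\Phi'}$ together with the dependency property of $\mathbb{T}_\Gamma$, and well-foundedness of $<:_{\node{r}'}$ in the $\mu$ case is inherited from well-foundedness of $\prec$.

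The only genuinely new content is in the merging step, specifically in the rules $\forall$ and $\exists$, where one must synthesize witness functions $f_\node{n}$ and $g_\node{n}$ from the witness functions of the component tableaux using pseudo-minimal elements from the qwo $\prec$; this is the technical obstacle, but it is a direct extension of the treatment of $\dia{K}$ in the proof of Lemma~\ref{lem:timed-single-fixpoint-completeness} (just performed twice, once for the ``strictly less'' child and once for the ``equal'' child), so no essentially new idea is needed beyond those already developed.
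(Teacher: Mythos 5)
Your proposal is correct and follows essentially the same route as the paper's proof: strong induction on the number of fixpoint subformulas, the same decomposition via a maximal fixpoint subformula and Lemma~\ref{lem:nested-fixpoint-semantics}, the same four-step tableau construction reusing the untimed arguments verbatim where they are purely lattice-theoretic, and the same identification of the $\exists$/$\forall$ witness-function synthesis in the merging steps as the only genuinely new content. No gaps.
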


\begin{proof}

Fix $\T = \lts{\states{S}}$ of timed sort $\Sigma$.  We prove the following: for all timed $\Phi$, and $Z, \V, \sigma$ and $S$ with
$S = \semTV{\sigma Z.\Phi}$,
and $\sigma$-compatible, total qwf support ordering $(S,\prec)$ for $\semfTV{Z}{\Phi}$, $S \tnxTV{\emptyL} \sigma Z.\Phi$ has a successful TNF timed tableau $\mathbb{T}_\Phi$ that is compliant with $(S,\prec)$.
To simplify notation we use the following abbreviations.
\begin{align*}
f_\Phi      &= \semfTV{Z}{\Phi}
\\
\V_X        &= \V[Z := \preimg{{\prec}}{X}]
\end{align*}
Note that $\V_S = \V[Z := \preimg{{\prec}}{S}]$.  When $s \in S$ we also write $\V_s$ in lieu of $V_{\{ s\}}$.

The proof proceeds by strong induction on the number of fixpoint subformulas of $\Phi$. There are two cases to consider.  In the first case, $\Phi$ contains no fixpoint formulas. Lemma~\ref{lem:timed-single-fixpoint-completeness} immediately gives the desired result.

In the second case, $\Phi$ contains at least one fixpoint subformula.  The outline of the proof follows the same lines as the one for the same case in Lemma~\ref{lem:fixpoint-completeness}).
\begin{enumerate}
    \item\label{it:step-decompose-timed} 
    We decompose $\Phi$ into $\Phi'$, which uses a new free variable $W$, and $\sigma' Z'.\Gamma$ in such a way that $\Phi = \Phi'[W:=\sigma' Z'.
    \Gamma]$.
    \item\label{it:step-outer-timed-tableau}
    We inductively construct a successful TNF timed tableau $\mathbb{T}_{\Phi'}$ for $S \tnxT{\V'}{\emptyL} \sigma Z.\Phi'$ that is compliant with $(S,\prec)$ where:
    \begin{align*}
        S'  &= \semT{\sigma'Z'.\Gamma}{\V_S} \\
        \V' &= \V[W:=S'].
    \end{align*}
    ($S'$ may be seen as the semantic content of $\sigma'Z'.\Gamma$ relevant for $\semTV{\sigma Z.\Phi}$.)
    \item\label{it:step-inner-timed-tableau}
    We construct a successful TNF timed tableau $\mathbb{T}_\Gamma$ satisfying a compliance-related property for $S' \tnxT{\V_S}{\emptyL} \sigma'Z'.\Gamma$ by merging inductively constructed tableaux involving subsets of $S'$.
    \item\label{it:step-timed-tableau-composition}
    We show how to compose $\mathbb{T}_\Phi$ and $\mathbb{T}_\Gamma$ to yield a successful TNF timed tableau for $S \tnxTV \sigma Z.\Phi$ that is compliant with $(S,\prec)$.
\end{enumerate}
We now work through each of these proof steps.

\paragraph{Step~\ref{it:step-decompose-timed} of proof outline:  decompose $\Phi$.}
Completely analogous to Step~\ref{it:step-decompose} in Lemma~\ref{lem:fixpoint-completeness}.  We recall the following function definitions from the proof of that lemma,
\begin{align*}
f(X,Y) &= \semT{\Phi'}{\V[Z, W := X, Y]}\\
g(X,Y)  &= \semT{\Gamma}{\V[Z, Z' := X, Y]},
\end{align*}
and also note that $f_\Phi = f[\sigma']g$.

\paragraph{Step~\ref{it:step-outer-timed-tableau} of proof outline: construct tableau for $S \tnxT{\V'}{\emptyL} \sigma Z.\Phi'$ that is compliant with $(S,\prec)$.}
Completely analogous to Step~\ref{it:step-outer-tableau} in Lemma~\ref{lem:fixpoint-completeness}.  Recall that $\V' = \V[W := S']$, where $S' = \semT{\sigma' Z'.\Gamma}{\V_S}$.

\paragraph{Step~\ref{it:step-inner-tableau} of proof outline:  construct tableau for $S' \tnxT{\V_S}{\emptyL} \sigma'Z'.\Gamma$.}

Since $\Gamma$ contains strictly fewer fixpoint subformulas than $\Phi$, the induction hypothesis guarantees the existence of certain successful TNF timed tableaux involving $\sigma' Z'.\Gamma$.
Following the proof of Lemma~\ref{lem:fixpoint-completeness} we use this fact to construct a successful tableau, $\mathbb{T}_\Gamma$ for $S' \tnxT{\V_S}{\emptyL} \sigma'Z'.\Gamma$ satisfying Property~\ref{goal:dependencies} defined in that earlier proof.
%
%
We begin by recalling the following definitions from that earlier proof, where $X \subseteq S$.
\begin{align*}
g_X &= g_{(\preimg{{\prec}}{X},\cdot)}
\\
S'_X &= \sigma' g_X
\end{align*}
It was also noted there that, for any $X \subseteq S$,
$
S'_X \subseteq S'.
$
We again take $(Q_\prec, \sqsubseteq)$ to be the quotient of $(S,\prec)$, with $[x] \in Q_\prec$ the equivalence class of $x \in S$.
We also let $(S',\prec')$ be the $\sigma'$-compatible, total qwf support ordering $g_S$
that is locally consistent with $(S,\prec)$, as guaranteed by Lemma~\ref{lem:fg-support}.
The construction we present below for $\mathbb{T}_\Gamma$ follows the same approach as the one in the proof of Lemma~\ref{lem:fixpoint-completeness}.
\begin{itemize}
    \item
    For each $Q \in Q_\prec$ we inductively construct a successful TNF timed tableau $\mathbb{T}_{\Gamma,Q}$ for sequent $S'_Q \tnxT{\V_Q}{\emptyL} \sigma' Z'.\Gamma$ that is compliant with a subrelation of $\prec'$.
    \item
    We then merge the individual $\mathbb{T}_{\Gamma,Q}$ to form a successful TNF timed tableau $\mathbb{T}'_\Gamma$ compliant with $\prec'$ whose root sequent contains as its state set the union of all the individual root-sequent state sets of the $\mathbb{T}_{\Gamma,Q}$.
    \item
    We perform a final operation to obtain $\mathbb{T}_\Gamma$.
\end{itemize}

The constructions of $\mathbb{T}_{\Gamma,Q}$ and $\mathbb{T}'_\Gamma = (\T,\V_S,\tree{T}_\Gamma,\rho'_\Gamma,\lambda'_\Gamma)$ are completely analogous to the corresponding constructions in the proof of Lemma~\ref{lem:fixpoint-completeness}.
In what follows we focus on the final step:  constructing $\mathbb{T}_\Gamma$ that is compliant with $\prec'$ and satisfies Property~\ref{goal:dependencies}.
We begin by noting that since $(S',\prec')$ is a $\sigma'$-compatible, total qwf support ordering for $g_S$, the induction hypothesis and Lemma~\ref{lem:structural-equivalence-of-TNF-tableaux} guarantee the existence of a successful TNF tableau 
$$
\mathbb{T}_{\Gamma,S} = (\tree{T}, \rho_{\Gamma,S}, \T, \V_S, \lambda_{\Gamma,S})
$$
for sequent $S' \tnxT{\V_S}{\emptyL} \sigma'Z'.\Gamma$ that is compliant with $\prec'$ and structurally equivalent to $\mathbb{T}_{\Gamma,Q}$ for any $Q \in Q_\prec$.  
There are two cases to consider.  In the first case, $S = \emptyset$.  In this case, $\mathbb{T}_{\Gamma,S}$ vacuously satisfies Property~\ref{goal:dependencies}, and we take $\mathbb{T}_\Gamma$ to be $\mathbb{T}_{\Gamma,S}$.

In the second case, $S \neq \emptyset$, and thus $Q_\prec \neq \emptyset$.  In this case, since it is not guaranteed that $\mathbb{T}_{\Gamma,S}$ satisfies \ref{goal:dependencies}, we follow the approach in the proof of Lemma~\ref{lem:fixpoint-completeness} by building $\mathbb{T}_\Gamma$ using a coinductive definition of $\rho_\Gamma$ and $\lambda_\Gamma$ that merges $\mathbb{T}'_\Gamma$ and $\mathbb{T}_{\Gamma,S}$ so that the following invariants are satisfied.
\begin{invariants}
    \item
      If $\rho_\Gamma(\node{n})$ is defined then the sequents assigned by $\lambda_\Gamma$ to $\node{n}$ and its children constitute a valid application of the rule $\rho_\Gamma(\node{n})$.
    \item
      $\seqfm(\lambda_\Gamma(\node{n})) = \seqfm(\node{n})$
    \item
      $\seqst(\lambda_\Gamma(\node{n})) \subseteq \seqst(\lambda'_\Gamma(\node{n})) \cup \seqst(\lambda_{\Gamma,S}(\node{n}))$
    \item
      $\seqdl(\lambda_\Gamma(\node{n})) = \seqdl_\Gamma(\node{n})$
\end{invariants}
The definitions  begin by assigning a value to $\lambda_\Gamma(\node{r}_\Gamma)$ so that invariants~\ref{inv:formula}--\ref{inv:definition-list} are satisfied.  The coinductive step then assumes that $\node{n}$ satisfies these invariants and defines $\rho_\Gamma(\node{n})$ and $\lambda_\Gamma$ for the children of $\node{n}$ so that \ref{inv:rule} holds for $\node{n}$ and \ref{inv:formula}--\ref{inv:definition-list} hold for each child.

To start the construction, define $\lambda_\Gamma(\node{r}_\Gamma) = S' \tnxT{\V_S}{\emptyL} \sigma'Z'.\Gamma$.  Invariants~\ref{inv:formula}--\ref{inv:definition-list} clearly hold of $\node{r}_\Gamma$, since $\seqst(\lambda'_\Gamma(\node{r}_\Gamma)) \subseteq S' = \seqst(\lambda_\Gamma(\node{r}_\Gamma))$.

For the coinductive step, assume that $\node{n}$ is such that $\lambda_\Gamma(\node{n})$ satisfies \ref{inv:formula}--\ref{inv:definition-list}; we must define $\rho_\Gamma(\node{n})$, and $\lambda_\Gamma$ for the children of $\node{n}$, so that \ref{inv:rule} holds for $\lambda_\Gamma(\node{n})$ and \ref{inv:formula}--\ref{inv:definition-list} holds for each child. All cases are analogous to the cases in the proof of Lemma~\ref{lem:fixpoint-completeness}, except those that involve rules $\forall, \exists$, which are given below.
\begin{description}
    \item[$\rn_\Gamma(\node{n}) = \exists$.] In this case $cs(\node{n}) = \node{n}_1\node{n}_2$, $\lambda_{\Gamma}(\node{n}) = S_{\node{n}} \tnxT{\V_S}{\Delta} \exists_{\seqfm_\Gamma(\node{n}_1)} \seqfm_\Gamma(\node{n}_2)$, with $\Delta = \seqdl_\Gamma(\node{n}) = \seqdl_\Gamma(\node{n}_1) = \seqdl_\Gamma(\node{n}_2)$. 
    We construct a function $f_{\Gamma,\node{n}} \in S_{\node{n}} \to \Rnneg$ such that for all $s \in S_{\node{n}}$, $f_{\Gamma,\node{n}}(s) \in \tdelays(s)$ and such that
    $(f_{\Gamma,\node{n}})_<(S_\node{n}) \subseteq \seqst(\lambda'_\Gamma(\node{n}_1)) \cup \seqst(\lambda_{\Gamma,S}(\node{n}_1))$ and
    $(f_{\Gamma,\node{n}})_=(S_\node{n}) \subseteq \seqst(\lambda'_\Gamma(\node{n}_2)) \cup \seqst(\lambda_{\Gamma,S}(\node{n}_2))$.
    Since $\mathbb{T}'_{\Gamma}$ and $\mathbb{T}_{\Gamma,S}$ are successful we know that $\exists$-functions $f'_{\Gamma,\node{n}} \in \seqst(\lambda'_{\Gamma}(\node{n})) \to \Rnneg$ and $f_{\Gamma,S,\node{n}} \in \seqst(\lambda_{\Gamma,S}(\node{n})) \to \Rnneg$, where $\rho'_\Gamma(\node{n}) = (\exists, f'_{\Gamma,\node{n}})$ and $\rho_{\Gamma,S}(\node{n}) = (\exists, f_{\Gamma,S,\node{n}})$, satisfy the following.
    \begin{itemize}
        \item 
        $(f'_{\Gamma,\node{n}})_<(\seqst(\lambda'_\Gamma(\node{n}))) = \seqst(\lambda'_{\Gamma}(\node{n}_1))$
        \item
        $(f'_{\Gamma,\node{n}})_=(\seqst(\lambda'_\Gamma(\node{n}))) = \seqst(\lambda'_{\Gamma}(\node{n}_2))$
        \item
        $(f_{\Gamma,S,\node{n}})_<(\seqst(\lambda_{\Gamma,S}(\node{n}))) \subseteq \seqst(\lambda_{\Gamma,S}(\node{n}_1))$
        \item
        $(f_{\Gamma,S,\node{n}})_=(\seqst(\lambda_{\Gamma,S}(\node{n}))) \subseteq \seqst(\lambda_{\Gamma,S}(\node{n}_2))$
    \end{itemize}
    We now define $f_{\Gamma,\node{n}}$ as follows.
    \[
      f_{\Gamma,\node{n}}(s) = \begin{cases}
          f'_{\Gamma,\node{n}}(s) & \text{if $s \in \seqst(\lambda'_{\Gamma}(\node{n}))$} \\
          f_{\Gamma,S,\node{n}}(s) & \text{if $s \in \seqst(\lambda_{\Gamma,S}(\node{n})) \setminus \seqst(\lambda'_{\Gamma}(\node{n}))$}
      \end{cases}
    \]
    Since $\node{n}$ satisfies \ref{inv:state-set} if follows that $S_\node{n} \subseteq \seqst(\lambda'_\Gamma(\node{n})) \cup \seqst(\lambda_{\Gamma,S}(\node{n}))$ and thus $f_{\Gamma,\node{n}}$ is well-defined.
    Finally, we set $\rho(\node{n}) = (\exists, f_{\Gamma,\node{n}})$ and
    \begin{align*}
    \lambda_\Gamma(\node{n}_1) &= (f_{\Gamma,\node{n}})_<(S_\node{n})
    \tnxT{\V_{S}}{\Delta} \seqfm_{\Gamma}(\node{n}_1) \\
    \lambda_\Gamma(\node{n}_2) &= (f_{\Gamma,\node{n}})_=(S_\node{n}) \tnxT{\V_{S}}{\Delta} \seqfm_{\Gamma}(\node{n}_2)
    \end{align*}
    It is clear that invariant~\ref{inv:rule} holds for $\node{n}$, while \ref{inv:formula}--\ref{inv:definition-list} hold for $\node{n}_1$ and $\node{n}_2$.
    
    \item[$\rn_\Gamma(\node{n}) = \forall$.] 
    In this case $cs(\node{n}) = \node{n}_1\node{n}_2$, $\lambda_{\Gamma}(\node{n}) = S_{\node{n}} \tnxT{\V_S}{\Delta} \forall_{\seqfm_\Gamma(\node{n}_1)} \seqfm_\Gamma(\node{n}_2)$, with $\Delta = \seqdl_\Gamma(\node{n}) = \seqdl_\Gamma(\node{n}_1) = \seqdl_\Gamma(\node{n}_2)$. 
    We construct a function $g_{\Gamma,\node{n}} \in S_{\node{n}} \times \Rnneg \to \Rnneg$ such that $g(s,\delta) \leq \delta$ for all $s \in S_\node{n}$ and $\delta \in \tdelays(s)$ and such that $(g_{\Gamma,\node{n}})_<(S_\node{n}) \subseteq \seqst(\lambda'_\Gamma(\node{n}_1)) \cup \seqst(\lambda_{\Gamma,S}(\node{n}_1))$  and $(g_{\Gamma,\node{n}})_=(S_\node{n}) \subseteq \seqst(\lambda'_\Gamma(\node{n}_2)) \cup \seqst(\lambda_{\Gamma,S}(\node{n}_2))$.
    Since $\mathbb{T}'_\Gamma$ and $\mathbb{T}_{\Gamma, S}$ are successful we know that $\forall$-functions $g'_{\Gamma, \node{n}} \in \seqst(\lambda'_{\Gamma}(\node{n})) \times \Rnneg \to \Rnneg$ and $g_{\Gamma, S, \node{n}} \in \seqst(\lambda_{\Gamma,S}(\node{n})) \times \Rnneg \to \Rnneg$, where $\rho'_\Gamma(\node{n}) = (\forall, g'_{\Gamma,\node{n}})$ and
    $\rho_{\Gamma,S}(\node{n}) = (\forall, g_{\Gamma, S, \node{n}})$, satisfy the following.
    \begin{itemize}
        \item 
        $(g'_{\Gamma,\node{n}})_<(\seqst(\lambda'_\Gamma(\node{n}))) = \seqst(\lambda'_\Gamma(\node{n}_1))$
        \item 
        $(g'_{\Gamma,\node{n}})_=(\seqst(\lambda'_\Gamma(\node{n}))) = \seqst(\lambda'_\Gamma(\node{n}_2))$
        \item 
        $(g_{\Gamma,S,\node{n}})_<(\seqst(\lambda_{\Gamma,S}(\node{n}))) =
        \seqst(\lambda_{\Gamma,S}(\node{n}_1))$
        \item 
        $(g_{\Gamma,S,\node{n}})_=(\seqst(\lambda_{\Gamma,S}(\node{n}))) =
        \seqst(\lambda_{\Gamma,S}(\node{n}_2))$
        
    \end{itemize}
    We define $g_{\Gamma,\node{n}}$ as follows.
    \[
    g_{\Gamma,\node{n}}(s,\tdelay) = \begin{cases}
        g'_{\Gamma,\node{n}}(s,\tdelay) & \text{if $s \in \seqst(\lambda'_{\Gamma}(\node{n}))$}\\
        g_{\Gamma,S,\node{n}}(s,\tdelay) & \text{if $s \in \seqst(\lambda_{\Gamma,S}(\node{n})) \setminus \seqst(\lambda'_{\Gamma}(\node{n}))$}
    \end{cases}
    \]
    Since $\node{n}$ satisfies \ref{inv:state-set} if follows that $S_\node{n} \subseteq \seqst(\lambda'_\Gamma(\node{n})) \cup \seqst(\lambda_{\Gamma,S}(\node{n}))$ and thus $g_{\Gamma,\node{n}}$ is well-defined.
    Finally, we set $\rho(\node{n}) = (\forall, g_{\Gamma,\node{n}})$, and
    \begin{align*}
        \lambda_\Gamma(\node{n}_1) &= (g_{\Gamma,\node{n}})_<(S_\node{n}) \tnxT{\V_{S}}{\Delta} \seqfm_{\Gamma}(\node{n}_1) \\
        \lambda_\Gamma(\node{n}_2) &= (g_{\Gamma,\node{n}})_=(S_\node{n}) \tnxT{\V_{S}}{\Delta} \seqfm_{\Gamma}(\node{n}_2) 
    \end{align*}
    It is clear that invariant~\ref{inv:rule} holds for $\node{n}$, while \ref{inv:formula}--\ref{inv:definition-list} hold for $\node{n}_1$ and $\node{n}_2$.

\end{description}
The same arguments in the proof of Lemma~\ref{lem:fixpoint-completeness} can be used to show  that $\mathbb{T}_\Gamma$ is successful, TNF and compliant with $\prec'$ and that it satisfies Property~\ref{goal:dependencies}.

\paragraph{Step~\ref{it:step-timed-tableau-composition} of proof outline:  construct tableau for $S \tnxTV \sigma Z.\Phi$.}
Analogous to Step~\ref{it:step-timed-tableau-composition} in the proof of Lemma~\ref{lem:fixpoint-completeness}.
\qedhere
\end{proof}

With these lemmas in hand we may now state and prove the completeness theorem.

\begin{theorem}[Completeness]\label{thm:timed-completeness}
Let $\T = \lts{S}$ be a TTS and $\V$ a valuation, and let $S$ be a set of states and $\Phi$ a timed mu-calculus formula such that $S \subseteq \semTV{\Phi}$.  Then $S \tnxTV{\emptyL} \Phi$ has a successful tableau.
\end{theorem}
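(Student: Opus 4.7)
The plan is to follow the proof of Theorem~\ref{thm:completeness} essentially line-for-line, since the structural surgery used there to assemble a tableau out of smaller tableaux is orthogonal to which modalities appear in the formulas, and the timed analogues of the key intermediate lemmas are already in place. Concretely, I would begin by letting $\sigma_1 Z_1.\Phi_1, \ldots, \sigma_n Z_n.\Phi_n$ enumerate the top-level fixpoint subformulas of $\Phi$ and introducing fresh variables $W_1, \ldots, W_n$, so that $\Phi = \Phi'[W_1,\ldots,W_n := \sigma_1 Z_1.\Phi_1,\ldots,\sigma_n Z_n.\Phi_n]$ for a fixpoint-free timed mu-calculus formula $\Phi'$ containing exactly one free occurrence of each $W_i$. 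Setting $S_i = \semTV{\sigma_i Z_i.\Phi_i}$ and $\V' = \V[W_1,\ldots,W_n := S_1,\ldots,S_n]$, a routine application of Lemma~\ref{lem:substitution} (which is insensitive to the particular modal operators in $\Phi'$) yields $\semTV{\Phi} = \semT{\Phi'}{\V'}$, and hence $S \subseteq \semT{\Phi'}{\V'}$.

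Next, I would invoke Lemma~\ref{lem:timed-fixpoint-free-completeness} on $\Phi'$ to obtain a successful TNF timed tableau $\mathbb{T}'$ for $S \tnxT{\V'}{\emptyL} \Phi'$. For each $i$, I would choose a $\sigma_i$-maximal support ordering $(S_i, \prec_i)$ for $\semfTV{Z_i}{\Phi_i}$ (which exists by Corollary~\ref{cor:support-fixpoints}, and is automatically total and qwf), and then apply Lemma~\ref{lem:timed-fixpoint-completeness} to obtain a successful TNF timed tableau $\mathbb{T}_i$ for $S_i \tnxTV{\emptyL} \sigma_i Z_i.\Phi_i$.

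The final step is the same tree surgery used in the non-timed case: each of the finitely many leaves of $\mathbb{T}'$ whose formula is $W_i$ becomes the parent of the root of a fresh copy of $\mathbb{T}_i$ via an application of the Thin rule (permissible because successfulness of $\mathbb{T}'$ gives $\seqst(\node{n}) \subseteq \V'(W_i) = S_i$ at each such leaf), and every occurrence of $W_i$ throughout the resulting tableau is replaced by $\sigma_i Z_i.\Phi_i$. All internal-node rule applications remain valid under this substitution (since $W_i$ never appears under a modal operator in any premise-defining side condition), every leaf of the new tableau is either a non-$W_i$ leaf of $\mathbb{T}'$ or a leaf of some $\mathbb{T}_i$, and all such leaves are successful by construction.

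I do not expect any real obstacle here: the construction is almost entirely syntactic, and the only point meriting a moment's care is that the Thin rule is legally applicable at each $W_i$-leaf, which follows immediately from the fact that $\mathbb{T}'$ is successful. All the semantic heavy lifting specific to the timed modalities has already been discharged inside Lemmas~\ref{lem:timed-fixpoint-free-completeness} and~\ref{lem:timed-fixpoint-completeness}, which is precisely the extensibility property the paper set out to showcase.
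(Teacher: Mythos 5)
Your proposal is correct and follows essentially the same route as the paper, which simply instantiates the proof of Theorem~\ref{thm:completeness} with the timed lemmas: decompose $\Phi$ via fresh variables $W_i$ for the top-level fixpoint subformulas, build the outer tableau from timed fixpoint-free completeness and the inner tableaux from Lemma~\ref{lem:timed-fixpoint-completeness} under $\sigma_i$-maximal support orderings, and graft them together at the $W_i$-leaves via Thin. Your explicit appeals to Lemma~\ref{lem:substitution} and to the admissibility of Thin at each $W_i$-leaf merely fill in details the paper leaves implicit.
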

\begin{proof}
Analogous to the proof of Theorem~\ref{thm:completeness}, but using Lemma~\ref{lem:timed-fixpoint-completeness} instead of Lemma~\ref{lem:fixpoint-completeness}. \qedhere
\end{proof}

\section{Conclusions and Future Work}\label{sec:Conclusions}

The work in this paper was motivated by a desire to give a sound and complete proof system for the timed mu-calculus for infinite-state systems.  We intended to do so by modifying an existing proof system for infinite-state systems and the untimed mu-calculus due to Bradfield and Stirling~\cite{Bra1991,BS1992}, but this proved difficult because of the delicacy of their soundness and completeness arguments.  Instead, we gave an alternative approach, based on explicit tableau constructions, for the untimed mu-calculus.  A hallmark of these constructions is the extensibility and their lack of dependence on infinitary logic.  We then showed how these constructions admitted modifications to the core proof system, including new proof-search strategies and new logical modalities, such as those in the timed mu-calculus.

Our proof techniques are based on a fundamental, lattice-theoretic result giving a new characterization in terms of a notion we call \emph{support orderings} for least and greatest fixpoints of monotonic functions over complete lattices whose carrier sets have form $2^S$ for some set $S$.
Using this approach, we are able to present a proof that, contrary to Bradfield and Stirling's original proof, does not require reasoning about ordinal unfoldings directly, and that is extensible to other termination conditions and modalities.  Our completeness results also rely on direct constructions of proof tableaux for valid sequents; this also facilitates extensibility.

We have illustrated extensibility of our approach by showing that the soundness proof straightforwardly carries over to the proof system where $\mu$-nodes with non-empty sets of states are always unfolded.
Additionally, we have presented a proof system for an extension of the mu-calculus with two timed modalities, and shown our soundness and completeness proof are extended straightforwardly to the latter setting.

\paragraph{Future work.}
The proof approach presented in this paper enables us to prove soundness of extensions and modifications of the proof system for infinite state systems.
In particular, soundness of the proof system for the timed mu-calculus opens up the way to model check such systems.
We plan to implement this proof system, dealing with sets of states symbolically, and thus extending model checkers for alternation free timed mu-calculi \cite{FC2014} to the full mu-calculus.

Furthermore, the proof of soundness and completeness for the timed mu-calculus only involves adding cases for the newly added operators to the proofs of results about the base proof system. 
Each of these cases follows the same line of reasoning as the results for other operators. 
We therefore expect that the soundness and completeness results can be generalized to only refer to properties about the local dependencies of the operators in the mu-calculus, and our support ordering results. 
It would be interesting to investigate this direction, and simplify the proof obligations for soundness and completeness when adding operators to the mu-calculus even further.

Another direction to be explored is how the soundness and completeness results in this paper can be adapted to the equational mu-calculus~\cite{CS1993}, and other equational theories such as Boolean equation systems~\cite{Mad1997} and parameterized Boolean equation systems~\cite{GM1999,GW2005}. In particular, the definition lists are, in essence, already part of the formalism, so the mechanism of unfolding in the proof system needs to be adapted to deal with this difference.

Finally, another common extension of the mu-calculus used in the timed setting is to add freeze quantification, see e.g.,~\cite{BCL2011}. 
The extension of our proof rules to this setting should be similarly straightforward as the extension we presented in this paper, although the definitions of the underlying transition systems would need extension to accommodate explicity clock variables..

\bibliographystyle{splncs04}
\bibliography{bibliography}

\begin{thebibliography}{10}
\providecommand{\url}[1]{\texttt{#1}}
\providecommand{\urlprefix}{URL }
\providecommand{\doi}[1]{https://doi.org/#1}

\bibitem{AD1994}
Alur, R., Dill, D.L.: A theory of timed automata. Theoretical Computer Science
  \textbf{126}(2),  183--235 (Apr 1994). \doi{10.1016/0304-3975(94)90010-8}

\bibitem{And1993}
Andersen, H.R.: Verification of {{Temporal Properties}} of {{Concurrent
  Systems}}. Ph.D. thesis, Aarhus University, {Denmark} (Jun 1993).
  \doi{10.7146/dpb.v22i445.6762}

\bibitem{bertot2013interactive}
Bertot, Y., Cast{\'e}ran, P.: Interactive theorem proving and program
  development: Coq’Art: the calculus of inductive constructions. Springer
  Science \& Business Media (2013)

\bibitem{bhat1995efficient}
Bhat, G., Cleaveland, R., Grumberg, O.: Efficient on-the-fly model checking for
  {CTL}$^*$. In: Proceedings of Tenth Annual IEEE Symposium on Logic in
  Computer Science. pp. 388--397. IEEE (1995). \doi{10.1109/LICS.1995.523273}

\bibitem{BCL2011}
Bouyer, P., Cassez, F., Laroussinie, F.: Timed modal logics for real-time
  systems. Journal of Logic, Language and Information  \textbf{20}(2),
  169--203 (Apr 2011). \doi{10.1007/s10849-010-9127-4}

\bibitem{Bra1993}
Bradfield, J.C.: A proof assistant for symbolic model-checking. In: {von
  Bochmann}, G., Probst, D.K. (eds.) Computer {{Aided Verification}}. pp.
  316--329. Lecture {{Notes}} in {{Computer Science}}, {Springer}, {Berlin,
  Heidelberg} (1993). \doi{10.1007/3-540-56496-9\_25}

\bibitem{BS1992}
Bradfield, J., Stirling, C.: Local model checking for infinite state spaces.
  Theoretical Computer Science  \textbf{96}(1),  157--174 (Apr 1992).
  \doi{10.1016/0304-3975(92)90183-G}

\bibitem{BS2001}
Bradfield, J., Stirling, C.: Modal logics and mu-calculi: An introduction. In:
  Bergstra, J.A., Ponse, A., Smolka, S.A. (eds.) Handbook of {{Process
  Algebra}}, pp. 293--330. {Elsevier Science}, {Amsterdam} (Jan 2001).
  \doi{10.1016/B978-044482830-9/50022-9}

\bibitem{Bra1991}
Bradfield, J.C.: Verifying Temporal Properties of Systems with Applications to
  Petri Nets. Ph.D. thesis, University of Edinburgh, Edinburgh (1991),
  \url{http://hdl.handle.net/1842/6565}

\bibitem{ciesielski1997set}
Ciesielski, K.: Set theory for the working mathematician, London Mathematical
  Society Student Texts, vol.~39. Cambridge University Press (1997)

\bibitem{clarke2018model}
Clarke~Jr, E.M., Grumberg, O., Kroening, D., Peled, D., Veith, H.: Model
  checking. MIT press (2018)

\bibitem{CS1993}
Cleaveland, R., Steffen, B.: A linear-time model-checking algorithm for the
  alternation-free modal mu-calculus. Formal Methods in System Design
  \textbf{2}(2),  121--147 (Apr 1993). \doi{10.1007/BF01383878}

\bibitem{constable1986implementing}
Constable, R., Allen, S., Bromley, H., Cleaveland, W., Cremer, J., Harper, R.,
  Howe, D., Knoblock, T., Mendler, N., Panangaden, P., et~al.: Implementing
  mathematics. Prentice-Hall (1986)

\bibitem{FC2014}
Fontana, P., Cleaveland, R.: The power of proofs: New algorithms for timed
  automata model checking. In: Formal {{Modeling}} and {{Analysis}} of {{Timed
  Systems}}. pp. 115--129. {Springer, Cham} (Sep 2014).
  \doi{10.1007/978-3-319-10512-3\_9}

\bibitem{Fon2014}
Fontana, P.C.: Towards a Unified Theory of Timed Automata. Ph.D. thesis,
  University of Maryland, College Park (2014),
  \url{http://drum.lib.umd.edu/handle/1903/15232}

\bibitem{GM1999}
Groote, J.F., Mateescu, R.: Verification of temporal properties of processes in
  a setting with data. In: Haeberer, A.M. (ed.) Algebraic {{Methodology}} and
  {{Software Technology}}. pp. 74--90. Lecture {{Notes}} in {{Computer
  Science}}, {Springer}, {Berlin, Heidelberg} (1999).
  \doi{10.1007/3-540-49253-4\_8}

\bibitem{GW2005}
Groote, J.F., Willemse, T.A.C.: Parameterised boolean equation systems.
  Theoretical Computer Science  \textbf{343}(3),  332--369 (Oct 2005).
  \doi{10.1016/j.tcs.2005.06.016}

\bibitem{Haz01}
Hazewinkel, M. (ed.): Encyclopaedia of Mathematics, Supplement {III}. Springer
  (2001)

\bibitem{jacobs2011introduction}
Jacobs, B., Rutten, J.: An introduction to (co)algebra and (co)induction. In:
  Advanced Topics in Bisimulation and Coinduction, Cambridge Tracts in
  Theoretical Computer Science, vol.~52. Cambridge: Cambridge University Press
  (2011)

\bibitem{jech:1978}
Jech, T.J.: Set Theory. Academic Press (1978)

\bibitem{jech2008axiom}
Jech, T.J.: The axiom of choice. Courier Corporation (2008)

\bibitem{Koz1983}
Kozen, D.: Results on the propositional {$\mu$}-calculus. Theoretical Computer
  Science  \textbf{27}(3),  333--354 (Jan 1983).
  \doi{10.1016/0304-3975(82)90125-6}

\bibitem{kunen:1980}
Kunen, K.: Set theory : an introduction to independence proofs. Studies in
  logic and the foundations of mathematics, Elsevier Science, Amsterdam,
  Lausanne, New York (1980), \url{http://opac.inria.fr/record=b1117475}

\bibitem{Mad1997}
Mader, A.H.: Verification of Modal Properties Using {{Boolean}} Equation
  Systems. Ph.D. thesis, Technische Universit\"at M\"unchen, {M\"unchen} (1997)

\bibitem{mateescu2003efficient}
Mateescu, R., Sighireanu, M.: Efficient on-the-fly model-checking for regular
  alternation-free mu-calculus. Science of Computer Programming
  \textbf{46}(3),  255--281 (2003)

\bibitem{SW1991}
Stirling, C., Walker, D.: Local model checking in the modal mu-calculus.
  Theoretical Computer Science  \textbf{89}(1),  161--177 (Oct 1991).
  \doi{10.1016/0304-3975(90)90110-4}

\bibitem{SE1989}
Streett, R.S., Emerson, E.A.: An automata theoretic decision procedure for the
  propositional mu-calculus. Information and Computation  \textbf{81}(3),
  249--264 (Jun 1989). \doi{10.1016/0890-5401(89)90031-X}

\bibitem{Tar1955}
Tarski, A.: A lattice-theoretical fixpoint theorem and its applications.
  Pacific Journal of Mathematics  \textbf{5}(2),  285--309 (1955),
  \url{https://projecteuclid.org/euclid.pjm/1103044538}

\bibitem{DC2005}
Zhang, D., Cleaveland, R.: Fast on-the-fly parametric real-time model checking.
  In: 26th {{IEEE International Real}}-{{Time Systems Symposium}}
  ({{RTSS}}'05). pp. 157--166 (Dec 2005). \doi{10.1109/RTSS.2005.22}

\end{thebibliography}

\appendix
\section{Relation between (extended) dependency orderings and Bradfield and Stirlings (extended) paths}

We recall Bradfield and Stirling's definition of (extended) paths~\cite{BS1992}, in the version as presented by Bradfield~\cite{Bra1991}, and show the correspondence with our (extended) dependency ordering.

Bradfield and Stirling define the notion of \emph{path} from one state in a node in a proof tree to another state in a descendant node in the proof tree. 
This is done on the basis of the proof rules used to generate the child nodes. The formal definition is as follows.
\begin{definition}[Path~{\cite[Definition 3.8]{Bra1991}}]\label{def:path}
    There is a \emph{path} from state $s$ at node $\node{n}$ to state $s'$ at node $\node{n}'$ in a tableau iff there is a finite sequence $(s, \node{n}) = (s_0, \node{n}_0), \ldots, (s_k,\node{n}_k) = (s', \node{n}')$ such that the following hold.
    \begin{enumerate}
        \item $\node{n}_{i+1}$ is a child of $\node{n}_i$ for all $0 \leq i < k$.
        \item If $\node{n}_i = S_i \tnxTV{\Delta_i} \Phi_i$ then $s_i \in S_i$ for all $0 \leq i \leq k$.
        \item If the rule applied to $\node{n}_i$ is $[K]$ then $s_i \xrightarrow{K} s_{i+1}$; if the rule applied to $\node{n}_i$ is $\langle K \rangle$ then $s_{i+1} = f(s_i)$; in all other cases $s_{i+1} = s_i$.
    \end{enumerate}
\end{definition}
\noindent
Bradfield writes $s @ \node{n} \pathto s' @ \node{n}'$ if there is a path from $s$ at $\node{n}$ to $s'$ at $\node{n}'$.  Note that for any $s$ and $\node{n} = S \tnxTVD \Phi$ such that $s \in S$, $s @ \node{n} \pathto s @ \node{n}$.

In the same definition Bradfield also defines the notion of an \emph{extended} path from $s$ at $\node{n}$ to $s'$ at $\node{n}'$.
\begin{definition}[Extended path~{\cite[Definition 3.8]{Bra1991}}]\label{def:epath}
    There is an \emph{extended path} from state $s$ at node $\node{n}$ to state $s'$ at node $\node{n}'$ in a tableau, denoted $s @ \node{n} \epathto s' @ \node{n}'$, as follows.
    \begin{enumerate}
        \item \label{def:epath-base}
              Either $s @ \node{n} \pathto s' @ \node{n}'$; or
        \item \label{def:epath-step}
              there is a node $\node{n}'' = S'' \tnxTV{\Delta''} U$ with $\node{n}'' \neq \node{n}$ and $\node{n}''\neq \node{n}'$,\footnote{Bradfield's original definition does not require $\node{n}'' \neq \node{n}$, however, he does assert that the extended paths for $\node{n}$ are defined in terms of extended paths from strict descendants, and this assertion does not hold when the requirement is omitted.}
              and a finite sequence of states $s_0, \ldots, s_k$, and a finite sequence of nodes $\node{n}_1, \ldots \node{n}_k$ such that the following hold.
              \begin{enumerate}
                  \item $s @ \node{n} \pathto s_0 @ \node{n}''$.
                  \item Each $\node{n}_i$ is a $\sigma$-leaf with companion node $\node{n}''$.
                  \item For $0 \leq i < k$, $s_i @ \node{n}'' \epathto s_{i+1} @ \node{n}_{i+1}$.
                  \item $s_k @ \node{n}'' \epathto s' @ \node{n}'$
              \end{enumerate}
    \end{enumerate}
\end{definition}
This definition extends the notion of path by allowing ``looping'' through companion nodes and their associated leaves within the subtree rooted at $\node{n}$.

Bradfield's original definition~\cite[Definition 3.9]{Bra1991} of the ordering of states in a node, $\sqsupset_{\node{n}}$ is the following.
\begin{definition}[{\cite[Definition 3.9]{Bra1991}}]\label{def:bs-order}
    Let $\node{n}$ be a node in a tableau of the form $S \tnxTVD U$ where $\Delta(U) = \mu Z. \Phi$; let $\{ \node{n}_0, \ldots \node{n}_k\}$ be the leaves for which $\node{n}$ is the companion. Define ordering $\sqsupset_{\node{n}} \;\subseteq S \times S$ as follows:  $s \sqsupset_{\node{n}} s'$ iff there exists $\node{n}_i$ such that $s @ \node{n} \epathto s' @ \node{n}_i$.
\end{definition}

We prove that our (extended) dependency orderings correspond to Bradfield's (extended) paths. Hence, our dependency orderings are simply an alternative characterization of the dependencies in a tableau.

It follows immediately from the definitions that our dependency ordering (Definition~\ref{def:dependency_ordering}) coincides with Bradfield's paths (Definition~\ref{def:path}).
\begin{lemma}\label{lem:child_order_path}
    Let $\node{n} = S\tnxTVD \Phi$, $\node{n}' = S' \tnxTV{\Delta'} \Phi'$ be nodes in a tableau with $s \in S$ and $s' \in S'$. We have $s@\node{n} \pathto s'@\node{n}'$ iff there is a sequence $(s, \node{n}) = (s_0, \node{n}_0), \ldots, (s_k,\node{n}_k) = (s', \node{n}')$ such that for all $0 \leq i < k$, $s_{i+1} <_{\node{n}_{i+1},\node{n}_{i}} s_i$.
\end{lemma}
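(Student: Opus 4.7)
The plan is to prove both directions of the biconditional by direct unpacking of the two definitions, relying on the observation that a single step of Bradfield's path definition (Definition~\ref{def:path}) is, by construction, exactly an instance of the local dependency ordering (Definition~\ref{def:local_dependency_ordering}). In particular, each step in a path requires (i)~$\node{n}_{i+1}$ to be a child of $\node{n}_i$, (ii)~$s_i \in \seqst(\node{n}_i)$ and $s_{i+1} \in \seqst(\node{n}_{i+1})$, and (iii)~a rule-dependent relationship between $s_i$ and $s_{i+1}$; each of these is mirrored in the definition of $s_{i+1} <_{\node{n}_{i+1}, \node{n}_i} s_i$.

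For the forward direction ($\Rightarrow$), I would assume $s@\node{n} \pathto s'@\node{n}'$ and obtain the witnessing sequence $(s_0, \node{n}_0), \ldots, (s_k, \node{n}_k)$ from Definition~\ref{def:path}. Fix an index $i$ with $0 \leq i < k$. By clauses (1) and (2) of Definition~\ref{def:path}, $\node{n}_{i+1} \in c(\node{n}_i)$, $s_i \in \seqst(\node{n}_i)$, and $s_{i+1} \in \seqst(\node{n}_{i+1})$. I would then split cases on $\rn(\rho(\node{n}_i))$: if it is $[K]$, clause (3) gives $s_i \xrightarrow{K} s_{i+1}$, matching case (1) of Definition~\ref{def:local_dependency_ordering}; if it is $\dia{K}$, so $\rho(\node{n}_i) = (\dia{K}, f)$, clause (3) gives $s_{i+1} = f(s_i)$, matching case (2); otherwise clause (3) gives $s_{i+1} = s_i$, matching case (3). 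In all three subcases, $s_{i+1} <_{\node{n}_{i+1}, \node{n}_i} s_i$.

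For the backward direction ($\Leftarrow$), I would start from a sequence $(s_0, \node{n}_0), \ldots, (s_k, \node{n}_k)$ with $s_{i+1} <_{\node{n}_{i+1}, \node{n}_i} s_i$ for each $i$ and verify the three clauses of Definition~\ref{def:path} step by step. Clauses (1) and (2) follow immediately because $s_{i+1} <_{\node{n}_{i+1}, \node{n}_i} s_i$ requires $\node{n}_{i+1} \in c(\node{n}_i)$, $s_i \in \seqst(\node{n}_i)$, and $s_{i+1} \in \seqst(\node{n}_{i+1})$; a one-step induction pins down $s_0 \in \seqst(\node{n}_0)$ and $s_k \in \seqst(\node{n}_k)$ as needed for the endpoints. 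For clause (3), I again split on $\rn(\rho(\node{n}_i))$ and read off the required transition relation from the matching case of Definition~\ref{def:local_dependency_ordering}; the cases involving rules other than $[K]$ and $\dia{K}$ all collapse to $s_{i+1} = s_i$, consistent with Definition~\ref{def:path}.

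There is no substantive obstacle here: the lemma is essentially a restatement result asserting that the two formulations describe the same one-step-per-edge structure. The only care required is in the case analysis over proof-rule names, and in correctly handling the endpoint memberships $s \in S$ and $s' \in S'$ provided in the hypothesis so that the sequence genuinely starts at $(s, \node{n})$ and ends at $(s', \node{n}')$.
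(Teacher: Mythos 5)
Your proof is correct and follows the same route as the paper, which simply declares the equivalence immediate from the direct clause-by-clause correspondence between Definition~\ref{def:path} and Definition~\ref{def:local_dependency_ordering}; you have merely written out the case analysis that the paper leaves implicit.
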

\begin{proof}
    Immediate since the definition of $<_{\node{n}',\node{n}}$ corresponds directly with the three clauses in Definition~\ref{def:path}.\qedhere
\end{proof}

\begin{lemma}\label{lem:dependency_ordering_is_path}
    For nodes $\node{m} = S_{\node{m}} \tnxTV{\Delta_{\node{m}}} \Phi_{\node{m}}$ and $\node{n} = S_{\node{n}} \tnxTV{\Delta_{\node{n}}} \Phi_{\node{n}}$ in a tableau, we have that $s_{\node{n}'} \lessdot_{\node{n}',\node{n}} s_{\node{n}}$ if and only if $s_{\node{n}}@\node{n} \pathto s_{\node{n}'}@\node{n}'$.
\end{lemma}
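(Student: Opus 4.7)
The plan is to establish the biconditional by two separate inductions that exploit the close structural similarity between Definition~\ref{def:dependency_ordering} and Definition~\ref{def:path}. The key observation, already codified in Lemma~\ref{lem:child_order_path}, is that a path $s@\node{n}\pathto s'@\node{n}'$ is nothing more than a sequence $(s,\node{n}) = (s_0,\node{n}_0),\ldots,(s_k,\node{n}_k) = (s',\node{n}')$ in which each consecutive pair satisfies $s_{i+1} <_{\node{n}_{i+1},\node{n}_i} s_i$. Since $\lessdot_{\node{n}',\node{n}}$ is defined by precisely the reflexive/prepend-a-$<$-step recursion on these local steps, the equivalence should fall out by direct induction.

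For the forward direction ($\Rightarrow$), assume $s' \lessdot_{\node{n}',\node{n}} s$ and proceed by induction on the definition of $\lessdot_{\node{n}',\node{n}}$. In the base case $\node{n}' = \node{n}$ and $s' = s$, so the one-element sequence $(s,\node{n})$ witnesses $s@\node{n}\pathto s@\node{n}$ via Lemma~\ref{lem:child_order_path}. In the inductive case, there exist $\node{m}$ and $s'' \in \seqst(\node{m})$ with $s' \lessdot_{\node{n}',\node{m}} s''$ and $s'' <_{\node{m},\node{n}} s$; the induction hypothesis yields a sequence witnessing $s''@\node{m}\pathto s'@\node{n}'$, and prepending $(s,\node{n})$ produces a sequence that witnesses $s@\node{n}\pathto s'@\node{n}'$, again using Lemma~\ref{lem:child_order_path} together with the fact that $s'' <_{\node{m},\node{n}} s$ already forces $\node{m}$ to be a child of $\node{n}$.

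For the backward direction ($\Leftarrow$), assume $s@\node{n}\pathto s'@\node{n}'$ and let $(s_0,\node{n}_0),\ldots,(s_k,\node{n}_k)$ be the witnessing sequence from Lemma~\ref{lem:child_order_path}. Proceed by induction on $k$. If $k = 0$, then $(s,\node{n}) = (s',\node{n}')$, and clause~1 of Definition~\ref{def:dependency_ordering} immediately gives $s' \lessdot_{\node{n}',\node{n}} s$. If $k \geq 1$, the tail $(s_1,\node{n}_1),\ldots,(s_k,\node{n}_k)$ is a witnessing sequence of length $k-1$ for $s_1@\node{n}_1\pathto s'@\node{n}'$, so the induction hypothesis supplies $s' \lessdot_{\node{n}',\node{n}_1} s_1$; combined with $s_1 <_{\node{n}_1,\node{n}} s$ (taking $\node{m} = \node{n}_1$ and $s'' = s_1$ in clause~2 of Definition~\ref{def:dependency_ordering}), this yields $s' \lessdot_{\node{n}',\node{n}} s$.

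I do not anticipate a genuine obstacle here: both directions are straightforward inductions, and the only small care required is making sure that the implicit side-condition $\node{m}\in c(\node{n})$ baked into $<_{\node{m},\node{n}}$ is preserved when splicing sequences together. This is automatic from Definition~\ref{def:local_dependency_ordering}, since $<_{\node{m},\node{n}}$ is only nonempty when $\node{m}$ is a child of $\node{n}$, which is precisely the first condition on a path sequence in Definition~\ref{def:path}. Thus the two notions are in fact literally two different presentations of the same relation.
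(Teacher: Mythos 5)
Your proof is correct and the forward direction is essentially identical to the paper's: an induction on the definition of $\lessdot_{\node{n}',\node{n}}$ that peels off the first local step and prepends $(s,\node{n})$ to the path supplied by the induction hypothesis. The only (minor) divergence is in the backward direction: you induct directly on the length of the witnessing sequence and apply clause~2 of Definition~\ref{def:dependency_ordering} at each step, whereas the paper first notes that each local step $s_{i+1} <_{\node{n}_{i+1},\node{n}_i} s_i$ gives $s_{i+1} \lessdot_{\node{n}_{i+1},\node{n}_i} s_i$ and then chains these together by repeated appeal to the pseudo-transitivity lemma (Lemma~\ref{lem:pseudo-transitivity-of-dependency-ordering}); your version is marginally more self-contained since it does not invoke that lemma, but both arguments are sound and of the same difficulty.
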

\begin{proof}
    We prove both directions separately.
    \begin{itemize}
        \item[$\Rightarrow$]
            We prove that if $s_{\node{n}'} \lessdot_{\node{n}',\node{n}} s_{\node{n}}$, then $s_{\node{n}}@\node{n} \pathto s_{\node{n}'}@\node{n}'$ by induction on the definition of $\lessdot_{\node{n}',\node{n}}$.
            If $\node{n} = \node{n}'$ and $s = s'$ (first case of Definition~\ref{def:dependency_ordering}), the empty sequence $(s_{\node{n}},\node{n}) = (s_{\node{n}'},\node{n}')$ witnesses $s_{\node{n}}@\node{n} \pathto s_{\node{n}'}@\node{n}'$.
            
            Now, suppose there exists $\node{n}''$ and $s_{\node{n}''}$ such that $s_{\node{n}''} <_{\node{n}'',\node{n}} s_{\node{n}}$ and $s_{\node{n}'} \lessdot_{\node{n}',\node{n}''} s_{\node{n}''}$. 
            By the induction hypothesis, we have $s_{\node{n}''}@\node{n}'' \pathto s_{\node{n}'}@\node{n}'$, so there is a sequence $(s_{\node{n}''},\node{n}'') = (s_0, \node{n}_0), \ldots, (s_k, \node{n}_k) = (s_{\node{n}'}, \node{n}')$ with $s_{i+1} <_{\node{n}_{i+1},\node{n}_i} s_i$ for all $i < k$. The sequence $(s_{\node{n}},\node{n}), (s_0, \node{n}_0), \ldots, (s_k,\node{n}_k)$ now witnesses $s_{\node{n}}@\node{n} \pathto s_{\node{n}'}@\node{n}'$.

        \item[$\Leftarrow$] Suppose $s_{\node{n}}@\node{n} \pathto s_{\node{n}'}@\node{n}'$.
              We show that $s_{\node{n}'} \lessdot_{\node{n}',\node{n}} s_{\node{n}}$.
              Since $s_{\node{n}}@\node{n} \pathto s_{\node{n}'}@\node{n}'$, there must be a sequence
              $(s_{\node{n}},\node{n}) = (s_0,\node{n}_0), \ldots, (s_k,\node{n}_k) = (s_{\node{n}'},\node{n}')$
              such that for all $0 \leq i < k$, $s_{i+1} <_{\node{n}_{i+1},\node{n}_i} s_i$.
              By definition, then also $s_{i+1} \lessdot_{\node{n}_{i+1},\node{n}_i} s_i$,
              so it follows immediately from repeated application of Lemma~\ref{lem:pseudo-transitivity-of-dependency-ordering} that $s_{k} \lessdot_{\node{n}_{k},\node{n}_0} s_0$, thus $s_{\node{n}'} \lessdot_{\node{n}',\node{n}} s_{\node{n}}$. \qedhere
    \end{itemize}
\end{proof}

Furthermore, our extended dependency ordering (Definition~\ref{def:extended_path_ordering}) corresponds to Bradfield's extended paths (Definition~\ref{def:epath}).

\begin{lemma}\label{lem:extended_path_implies_extended_order}\label{lem:extended_order_implies_extended_path}
    Given two nodes $\node{n}' = S' \tnxTV{\Delta'} \Phi'$ and $\node{n} = S \tnxTVD \Phi$ in a tableau, and states $s \in S$, $s' \in S'$. Then $s' <:_{\node{n}',\node{n}} s$ if and only if $s@\node{n} \epathto s'@\node{n}'$.
\end{lemma}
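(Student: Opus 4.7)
The plan is to prove the two directions separately. The base-case correspondence is already available: Lemma~\ref{lem:dependency_ordering_is_path} establishes that $s' \lessdot_{\node{n}',\node{n}} s$ iff $s@\node{n} \pathto s'@\node{n}'$. So each direction reduces to transporting the ``cycling through a companion node'' clause in Definition~\ref{def:extended_path_ordering} to and from the ``loop through companion leaves'' clause in Definition~\ref{def:epath}.

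For the direction $\Rightarrow$, I would induct on the definition of $<:_{\node{n}',\node{n}}$. The base case $s' \lessdot_{\node{n}',\node{n}} s$ is immediate from Lemma~\ref{lem:dependency_ordering_is_path} combined with clause~\ref{def:epath-base} of Definition~\ref{def:epath}. In the inductive step, assume there exist a companion node $\node{m} \neq \node{n}, \node{n}'$ and states $t, t'$ with $s' <:_{\node{n}',\node{m}} t'$, $t' <:^+_{\node{m}} t$, and $t \lessdot_{\node{m},\node{n}} s$. The plan is to unpack $t' <:^+_{\node{m}} t$ as a finite chain $t = u_0 <:_{\node{m}} u_1 <:_{\node{m}} \cdots <:_{\node{m}} u_k = t'$, each step of which (by definition of $<:_{\node{m}}$) is witnessed by a companion leaf $\node{n}_i$ of $\node{m}$ together with a derivation $u_{i+1} <:_{\node{n}_i,\node{m}} u_i$. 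Applying the induction hypothesis to each $u_{i+1} <:_{\node{n}_i,\node{m}} u_i$ and to $s' <:_{\node{n}',\node{m}} t'$ yields $u_i@\node{m} \epathto u_{i+1}@\node{n}_i$ for each $i$ and $t'@\node{m}\epathto s'@\node{n}'$; Lemma~\ref{lem:dependency_ordering_is_path} applied to $t \lessdot_{\node{m},\node{n}} s$ gives $s@\node{n} \pathto t@\node{m}$. Assembling these via clause~\ref{def:epath-step} of Definition~\ref{def:epath} (with the role of $\node{n}''$ played by $\node{m}$) produces $s@\node{n}\epathto s'@\node{n}'$.

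For the direction $\Leftarrow$, I would induct on the definition of $s@\node{n}\epathto s'@\node{n}'$. The base case is again immediate from Lemma~\ref{lem:dependency_ordering_is_path} and clause~\ref{def:extended_path_ordering-base} of Definition~\ref{def:extended_path_ordering}. For the inductive step, suppose there are $\node{n}'' \neq \node{n},\node{n}'$ (a companion node, since it carries a definitional constant $U$) and sequences $s_0,\ldots,s_k$, $\node{n}_1,\ldots,\node{n}_k$ as in Definition~\ref{def:epath-step}. Lemma~\ref{lem:dependency_ordering_is_path} converts $s@\node{n}\pathto s_0@\node{n}''$ into $s_0 \lessdot_{\node{n}'',\node{n}} s$. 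The induction hypothesis applied to each $s_i@\node{n}''\epathto s_{i+1}@\node{n}_{i+1}$ yields $s_{i+1} <:_{\node{n}_{i+1},\node{n}''} s_i$, and since each $\node{n}_{i+1}$ is a companion leaf of $\node{n}''$, this gives $s_{i+1} <:_{\node{n}''} s_i$; chaining these gives $s_k <:^+_{\node{n}''} s_0$. The induction hypothesis also yields $s' <:_{\node{n}',\node{n}''} s_k$. Feeding $(s', \node{n}', \node{n}'', s_k, s_0, \node{n}, s)$ into clause~\ref{def:extended_path_ordering-step} of Definition~\ref{def:extended_path_ordering} delivers $s' <:_{\node{n}',\node{n}} s$.

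The main obstacle I expect is bookkeeping around the transitive closure $<:^+_{\node{m}}$ versus the ``chain of length $k$'' presentation in Bradfield's extended-path definition, and in particular making sure that the intermediate companion leaves $\node{n}_i$ used in Definition~\ref{def:epath-step} match up with the witnesses of $<:_{\node{m}}$ obtained by unpacking its defining clause. The constraint that $\node{n}'' \neq \node{n},\node{n}'$ in both definitions is crucial here, and I will need to check that the degenerate cases (where the chain has length zero, or where $\node{n}' = \node{n}''$ or $\node{n} = \node{n}''$) are handled by the base clauses rather than by the cycling clause on either side.
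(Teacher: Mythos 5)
Your proposal is correct and follows essentially the same route as the paper's own proof: both directions by induction on the respective inductive definitions, with Lemma~\ref{lem:dependency_ordering_is_path} discharging the base cases and the $\pathto$-prefix, and the cycling clauses matched by unpacking $<:^+_{\node{m}}$ into a chain of companion-leaf witnesses. The only blemish is a notational slip in how you display the chain for $t' <:^+_{\node{m}} t$ (the inequalities as written read in the wrong direction, though your subsequent per-step witnesses $u_{i+1} <:_{\node{n}_i,\node{m}} u_i$ are the correct ones), and your worry about length-zero chains is moot since the transitive closure requires at least one step.
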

\begin{proof}
We prove both directions separately.

\begin{itemize}
\item[$\Rightarrow$]
We proceed by induction on the definition of $<:_{\node{n}',\node{n}}$.
        
\begin{itemize}
    \item If $s' \lessdot_{\node{n}',\node{n}} s$, according to Lemma~\ref{lem:dependency_ordering_is_path}, we have $s@\node{n} \pathto s'@\node{n}'$, and according to Definition~\ref{def:epath}~(\ref{def:epath-base}) we have $s@\node{n} \epathto s'@\node{n}'$.
    \item Otherwise, there is a node $\node{n}'' = S'' \tnxTV{\Delta''} U$ with $\node{n}'' \neq \node{n}$, $\node{n}'' \neq \node{n}'$, and $s'' \in S$ such that for some $\overline{s} \in S''$, we have (a) $s'' \lessdot_{\node{n}'',\node{n}} s$, (b) for some $m \geq 1$, $\overline{s} <:_{\node{n}''}^{+} s''$, and (c) $s' <:_{\node{n}',\node{n}''} \overline{s}$.

    From (a) and Lemma~\ref{lem:dependency_ordering_is_path}, we get that $s@\node{n} \pathto s''@\node{n}''$.

    From (b) we find that $\overline{s} <:_{\node{n}''}^{+} s''$. Therefore, for some $m \geq 1$, $\overline{s} <:^m_{\node{n}''} s''$. Consider the smallest such $m$. Then there exist states $s_0, \ldots, s_m$ such that $\overline{s} = s_0$, $s_i <:_{\node{n}''} s_{i+1}$ for $0 \leq i < m$, and $s_m = s''$.

    By definition of $<:_{\node{n}''}$, there exist nodes $\node{n}_0, \ldots, \node{n}_{m-1}$ such that $s_i <:_{\node{n}_{i}, \node{n}'',} s_{i+1}$ for $0 \leq i < m$. From the induction hypothesis, it follows that $s_{i+1}@\node{n}'' \epathto s_i@\node{n}_i$ for $0 \leq i < m$.

    From (c) and the induction hypothesis we find that $s''@\node{n}'' \epathto s'@\node{n}'$.

    Hence, according to Definition~\ref{def:epath}~(\ref{def:epath-step}), we have $s@\node{n} \epathto s'@\node{n}'$.

\end{itemize}
    
\item[$\Leftarrow$]
We proceed by induction on the definition of $\epathto$.
\begin{itemize}
    \item If $s@\mathbf{n} \pathto s'@\node{n}'$, according to Lemma~\ref{lem:dependency_ordering_is_path}, we have $s' \lessdot_{\node{n}',\node{n}} s$, and according to Definition~\ref{def:extended_path_ordering}~(\ref{def:extended_path_ordering-base}), we have $s' <:_{\node{n}',\node{n}} s$.
    
    \item There is a node $\node{n}'' = S'' \tnxTV{\Delta''} U$ with $\node{n}'' \neq \node{n}$, $\node{n}'' \neq \node{n}'$, and a finite sequence of states $s_0, \ldots s_k$, and nodes $\node{n}_1, \ldots \node{n}_k$ such that (a) $s@\node{n} \pathto s_0@\node{n}''$, (b) all $\node{n}_i$ are companion leaves of $\node{n}''$, (c) for $0 \leq i < k$, $s_i@\node{n}'' \epathto s_{i+1}@\node{n}_{i+1}$, and (d) $s_k@\node{n}'' \epathto s'@\node{n}'$.

    From (a) and Lemma~\ref{lem:dependency_ordering_is_path} we get that $s_0 \lessdot_{\node{n}'',\node{n}} s$.

    Using (b), (c) and the induction hypothesis, we get for $0 \leq i < k$, $s_{i+1} <:_{\node{n}_{i+1},\node{n}''} s_i$. Thus, $s_{i+1} <:_{\node{n}''} s_i$ for each such $i$, and by definition, $s_k <:^{+}_{\node{n}''} s_0$.

    Using (d) and the induction hypothesis, we get $s' <:_{\node{n}',\node{n}''} s_k$.

    Now, according to Definition~\ref{def:extended_path_ordering}~(\ref{def:extended_path_ordering-step}), we have $s' <:_{\node{n}',\node{n}} s$. \qedhere
    \end{itemize}
\end{itemize}
\end{proof}

The following now follows immediately from Lemma~\ref{lem:extended_path_implies_extended_order}, Definition~\ref{def:bs-order} and Definition~\ref{def:extended_path_ordering}
\begin{corollary}\label{lem:bradfield_order_vs_extended_ordering}
    Let $\node{n} = S \tnxTV U$ be a companion node in a tableau, and let $\node{n}_0, \ldots, \node{n}_k$ be the companion leaves of $\node{n}$. Then, for $s, s' \in S$, we we have $s' <:_{\node{n}} s$ if and only if $s \sqsupset_{\node{n}} s'$.
\end{corollary}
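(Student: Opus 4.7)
The plan is to derive this corollary directly by chaining together the definitions of $<:_{\node{n}}$ (from Definition~\ref{def:extended_path_ordering}, part 1) and $\sqsupset_{\node{n}}$ (Definition~\ref{def:bs-order}), using Lemma~\ref{lem:extended_path_implies_extended_order} as the bridge that translates between the extended dependency ordering $<:_{\node{m}', \node{n}}$ and Bradfield's extended paths $\epathto$. There is essentially no new inductive content to discover; the work is purely unfolding definitions on both sides and observing that the resulting existential statements coincide.

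First, I would fix a companion node $\node{n} = S \tnxTV{\Delta} U$ with companion leaves $\node{n}_0, \ldots, \node{n}_k$ and let $s, s' \in S$. Then I would unfold $s' <:_{\node{n}} s$ using Definition~\ref{def:extended_path_ordering}(1): this holds iff there exists a companion leaf $\node{n}_i \in \cleavesT(\node{n})$ with $s' \in \seqst(\node{n}_i)$ such that $s' <:_{\node{n}_i, \node{n}} s$. By Lemma~\ref{lem:extended_path_implies_extended_order} (the just-established correspondence), the relation $s' <:_{\node{n}_i, \node{n}} s$ is equivalent to $s @ \node{n} \epathto s' @ \node{n}_i$. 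Substituting, $s' <:_{\node{n}} s$ holds iff there exists a companion leaf $\node{n}_i$ with $s' \in \seqst(\node{n}_i)$ and $s @ \node{n} \epathto s' @ \node{n}_i$.

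Next, I would unfold $s \sqsupset_{\node{n}} s'$ using Definition~\ref{def:bs-order}: this holds iff there exists a companion leaf $\node{n}_i$ with $s @ \node{n} \epathto s' @ \node{n}_i$. The only subtle point is the apparent asymmetry of state-membership requirements: $<:_{\node{n}}$ explicitly demands $s' \in \seqst(\node{n}_i)$ while Bradfield's definition does not. However, inspection of Definition~\ref{def:epath} shows that $s @ \node{n} \epathto s' @ \node{n}_i$ can only hold when $s' \in \seqst(\node{n}_i)$, since the base case requires a $\pathto$-sequence whose final pair $(s_k, \node{n}_k) = (s', \node{n}_i)$ has $s_k$ in the state set of $\node{n}_k$, and the inductive case ultimately reduces to such $\pathto$-sequences. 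Therefore the two characterizations yield identical conditions, and the biconditional holds.

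I do not expect any real obstacle here, since Lemma~\ref{lem:extended_path_implies_extended_order} does all the heavy lifting; the remaining work is a short observation that Definitions~\ref{def:bs-order} and~\ref{def:extended_path_ordering}(1) quantify over the same collection of witnesses. The proof can reasonably be given in two or three lines, citing Lemma~\ref{lem:extended_path_implies_extended_order} together with the two definitions.
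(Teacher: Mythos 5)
Your proof is correct and matches the paper's approach: the paper simply declares the corollary immediate from Lemma~\ref{lem:extended_path_implies_extended_order} together with Definitions~\ref{def:bs-order} and~\ref{def:extended_path_ordering}, which is exactly the unfolding you carry out. Your extra remark that the state-membership condition $s' \in \seqst(\node{n}_i)$ is implicit in $\epathto$ is a sensible (and harmless) addition the paper leaves unstated.
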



\end{document}